\newcommand{\defterm}[1]{\textbf{\emph{#1}}}
\tikzstyle{s16} = [rectangle, rounded corners, text centered, draw=black,fill=red!30]
\tikzstyle{s16chiral} = [s16, dashed]
\tikzstyle{s8} = [rectangle, rounded corners, text centered, draw=black,fill=orange!30]
\tikzstyle{s4} = [rectangle, rounded corners, text centered, draw=black,fill=yellow!30]
\tikzstyle{s2chiral} = [rectangle, dashed, rounded corners, text centered, draw=black,fill=green!30]
\tikzstyle{dimension} = [circle, text centered, text width=0.7cm, minimum height=0.7cm, draw=black]
\tikzstyle{arrow} = [thick,->,>=stealth]
\title{A Taxonomy of Twists of Supersymmetric Yang--Mills Theory}
\author{Chris Elliott\and Pavel Safronov \and Brian R. Williams}
\date{\today}
\begin{document}

\maketitle

\begin{abstract}
We give a complete classification of twists of supersymmetric Yang--Mills theories in dimensions~$2\leq n \leq 10$.
We formulate supersymmetric Yang--Mills theory classically using the BV formalism, and then we construct an action of the supersymmetry algebra using the language of $L_\infty$ algebras.  For each orbit in the space of square-zero supercharges in the supersymmetry algebra, under the action of the spin group and the group of R-symmetries, we give a description of the corresponding twisted theory.  These twists can be described in terms of mixed holomorphic-topological versions of Chern--Simons and BF theory. 
\end{abstract}

\pagestyle{intro}
\setcounter{tocdepth}{2}
\tableofcontents

\section*{Introduction} 
\label{sect:intro}
\addcontentsline{toc}{section}{Introduction}

In this paper we calculate supersymmetric twists of super Yang--Mills theories in dimension 2 through 10. Our main tools are the classical Batalin--Vilkovisky formalism, which eliminates the need for auxiliary fields to close the on-shell supersymmetry action, and a consistent use of dimensional reduction which allows us to deduce lower-dimensional statements from higher-dimensional statements.

\subsection*{Classical Field Theories}

Let us begin with an informal discussion of classical field theories. A classical field theory is usually defined in terms of the data of the space of fields $\cF$ equipped with an action functional. To incorporate gauge symmetries, one may either work with $\cF$ as a stack or, as in the BRST formalism, with $\cF$ as a $Q$-manifold, i.e. a graded manifold equipped with a square-zero vector field of cohomological degree $1$ (the BRST differential). In the Batalin--Vilkovisky \cite{BatalinVilkovisky} approach one considers instead the space of BV fields $\cE$, which is equipped with a $(-1)$-shifted symplectic structure; this may be modeled by a $QP$-manifold \cite{Schwarz}. Moreover, we assume that the $Q$-structure is Hamiltonian, i.e. that it is given by a Poisson bracket $\{S, -\}$ with respect to the BV action functional. Here $\cE$ is interpreted as modelling the derived critical locus of the action functional on $\cF$.

In this paper we follow the approach developed in the works of Costello and Gwilliam \cite{CostelloBook,Book1}. As the space of BV fields $\cE$ is an infinite-dimensional manifold, it is difficult to work with it directly (for instance, to make sense of a $(-1)$-shifted symplectic structure). Instead, we zoom in on the neighborhood of a point where $Q$ vanishes (i.e. we consider a given classical solution). We may then consider $\cE$ as the space of sections of a graded vector bundle $E\rightarrow M$ over the spacetime manifold $M$. This allows us to work with finite-dimensional objects throughout. Namely, a $(-1)$-shifted symplectic structure on $\cE$ boils down to a $(-1)$-shifted symplectic pairing $E\cong E^![-1]$, where $E^!=E^*\otimes \Dens_M$. We refer to Definition \ref{def:classicalfieldtheory} for the precise definition of a classical field theory in the BV formalism that we use. Moreover, with this definition we may talk about weak equivalences of classical field theories (a notion inaccessible with $QP$-manifolds) which are simply maps of classical field theories inducing a quasi-isomorphism on $\cE$. We call these \emph{perturbative equivalences} (see Definition \ref{def:perturbativeequivalence}) to emphasize that we are working in a formal neighborhood of a given classical solution. For simplicity, throughout the paper we ignore issues of unitarity: in other words, we always consider complexified bundles of fields.

\subsection*{Classical Supersymmetric Field Theories}

Now consider a classical field theory where the spacetime manifold is $M=\RR^n$, and where the theory is translation-invariant. Given the data of a spinorial representation $\Sigma$ equipped with a symmetric pairing $\Gamma\colon\Sym^2(\Sigma)\rightarrow V=\CC^n$, we may construct a super Lie algebra of supertranslations $\fA=\Pi\Sigma\oplus V$, where $\Pi$ indicates that $\Sigma$ is placed in odd $\ZZ/2\ZZ$-degree, with the only nonvanishing Lie bracket given by $\Gamma$. A supersymmetric classical field theory is then a translation-invariant classical field theory on $\RR^n$ where the translation action on the fields is extended to an action of the super Lie algebra $\fA$. In addition, we may consider an $R$-symmetry group $G_R$ that acts on $\Sigma$ preserving $\Gamma$ and the $\so(n)$-action and also compatibly on the classical field theory.

In most literature on supersymmetry one simply tries to build an action of $\fA$ on the space of ordinary fields $\cF$. However, one often runs into a problem that the supersymmetry action is only \emph{on-shell}: the map from $\fA$ to vector fields $\mathrm{Vect}(\cF)$ preserves Lie brackets only on the critical locus of the action functional. The usual solution is to enlarge the space of fields by adding auxiliary fields with no kinetic terms on which there is an honest (\emph{off-shell}) action of $\fA$. However, this choice may be not canonical. For instance, in 10d $\cN=(1, 0)$ super Yang--Mills one needs to break the Lorentz group $\SO(10)$ to $\Spin(7)\times\SO(2)$ to have an off-shell action of a subalgebra of $\fA$ where the odd part is 9-dimensional (instead of 16-dimensional) \cite{BaulieuBerkovitsBossardMartin}.

We instead take another approach pioneered by Baulieu, Bellon, Ouvry and Wallet \cite{BaulieuBV}. Namely, one may canonically extend the supersymmetry action from the space of ordinary fields $\cF$ to the space of BV fields $\cE$. The property of the action being on-shell now means that the map $\fA\rightarrow \mathrm{Vect}(\cE)$ preserves Lie brackets, but only up to homotopy. One may then try to incorporate these homotopies: to extend the Lie action to an $L_\infty$ action.  This contrasts with the auxiliary field approach of the previous paragraph, where one instead builds a resolution of the space of BV fields on which the supersymmetry Lie algebra acts strictly.

In this paper we consider supersymmetric Yang--Mills theories in dimensions 2 through 10. In dimensions 3 through 10 these may be obtained by dimensional reduction of the following theories: 10d $\cN=(1, 0)$ super Yang--Mills, 6d $\cN=(1, 0)$ super Yang--Mills, 4d $\cN=1$ super Yang--Mills and 3d $\cN=1$ super Yang--Mills. These theories depend on a choice of a Lie algebra $\fg$ equipped with a symmetric nondegenerate bilinear pairing. In addition, in dimensions 6, 4 and 3 we may add matter multiplets: in dimension 6 these depend on a choice of a symplectic $\fg$-representation (a hypermultiplet), in dimension 4 these depend on a choice of a $\fg$-representation (a chiral multiplet) and in dimension 3 these depend on a choice of an orthogonal $\fg$-representation. We do not consider superpotential, mass, or Fayet--Iliopoulos terms in this paper. Moreover, as we are working perturbatively, we ignore all topological terms ($\theta$-terms).

The on-shell supersymmetry of pure super Yang--Mills theories in these dimensions can be proven by using a well-known relationship between composition algebras (e.g. division algebras) and supersymmetry (see Section \ref{sect:compositionalgebras}) which goes back to the works \cite{Evans,KugoTownsend}. For instance, we may construct the 10d $\cN=(1, 0)$ supersymmetry from the algebra of octonions $\mathbb{O}$, 6d $\cN=(1, 0)$ supersymmetry from the algebra of quaternions $\mathbb{H}$, 4d $\cN=1$ supersymmetry from the complex numbers $\CC$ and 3d $\cN=1$ supersymmetry from the real numbers $\RR$. Our treatment follows the work of Baez and Huerta \cite{BaezHuerta} and we show how to extend the on-shell $\fA$-action to an $L_\infty$-action using these ideas. As a new result, we also construct an $L_\infty$-action on matter multiplets where the language of composition algebras turns out to be indispensable (see Section \ref{sect:mattermultipletSUSY}). Namely, for any real associative composition algebra $A_\RR$ we simply need a complex $\fg$-representation $P$ equipped with an $A_\RR$-module structure and a symmetric bilinear pairing. We have the following three cases:
\begin{itemize}
\item (\textbf{6d $\cN=(1, 0)$ supersymmetry}) For $A_\RR=\mathbb{H}$ $P$ is forced to take the form $U\otimes W_+$, where $U$ is a symplectic $\fg$-representation and $W_+$ is a 2d complex symplectic vector space (so that $\mathbb{H}\otimes_\RR\CC\cong \eend(W_+)$).
\item (\textbf{4d $\cN=1$ supersymmetry}) For $A_\RR=\mathbb{C}$ $P$ is forced to take the form $R\oplus R^*$, where $R$ is a $\fg$-representation.
\item (\textbf{3d $\cN=1$ supersymmetry}) For $A_\RR=\mathbb{R}$ we simply have an orthogonal $\fg$-representation $P$.
\end{itemize}

In addition to the dimensional reduction of these super Yang--Mills theories, there are also certain special super Yang--Mills theories with chiral supersymmetry in dimension 2: namely, 2d $\cN=(1, 0)$, $\cN=(2, 0)$ and $\cN=(4, 0)$ with matter as well as pure $\cN=(\cN_+, 0)$ theories for any $\cN_+$. We treat these separately (see Section \ref{sect:2dchiral}), but again the language of composition algebras turns out to be convenient.

\subsection*{Supersymmetric Twists}

The notion of supersymmetric twisting for a supersymmetric field theory was introduced by Witten \cite{WittenTQFT}, and further developed mathematically by Costello \cite{CostelloSUSY}. 
The definitions we use in this paper will follow our previous work \cite{ElliottSafronov}, so let us briefly recall the important terminology.

Suppose $Q$ is a square-zero supercharge, i.e. an odd element $Q\in\fA$ such that $[Q, Q]=0$. Then it gives rise to a square-zero odd symplectic vector field on the space of BV fields $\cE$. In particular, we may modify the differential on $\cE$ by the replacement $\mathrm{d}\mapsto \mathrm{d}+Q$. Working up to perturbative equivalence, this turns out to drastically simplify the theory as we will shortly see.

The original classical field theory carried a $\ZZ\times\ZZ/2\ZZ$-grading, where $\ZZ$ is the cohomological (in the physics literature: ghost number) grading and $\ZZ/2\ZZ$ is the fermionic grading. We see that $\mathrm{d}$ has bidegree $(1, 0)$ while $Q$ has bidegree $(0, 1)$. So, in general the twisted theory is only $\ZZ/2\ZZ$-graded (with respect to the total grading). To improve that, we may additionally consider a homomorphism $\alpha\colon \U(1)\rightarrow G_R$ into the R-symmetry group under which $Q$ has weight $1$ and such that the $\alpha$-grading modulo 2 coincides with the fermionic grading. Then the $\alpha$-grading gives rise to a $\ZZ$-grading on the twisted theory.

Finally, let us observe that the original classical field theory carried an action of $\Spin(n)$ by rotations of $\RR^n$. But since $Q$ is not preserved under $\Spin(n)$, this action does not survive in the twisted theory. To improve that, we may consider a group $G$ with a \emph{twisting homomorphism} $G\rightarrow \Spin(n)\times G_R$ under which $Q$ is a scalar. Given such a twisting homomorphism, the twisted theory carries a $G$-action.

To summarize, supersymmetric twisting consists of the following three steps:

\begin{enumerate}
\item Choose a square-zero supercharge $Q\in\Sigma$ and modify the differential of the theory as $\mathrm{d}\mapsto\mathrm{d}+Q$.

\item Choose a group $G$ together with a twisting homomorphism $G\rightarrow \Spin(n)\times G_R$ under which $Q$ is scalar. To remove redundancy, we will assume $G\rightarrow \Spin(n)$ is an embedding.

\item Choose a homomorphism $\alpha\colon \U(1)\rightarrow G_R$ under which $Q$ has weight $1$ and such that the $\alpha$-grading modulo 2 is the fermionic grading. This step may not be possible in general.
\end{enumerate}

A classification of possible square-zero supercharges $Q$ was previously done in \cite{ElliottSafronov} and in this paper we use that classification to calculate the twist of super Yang--Mills theories on $\RR^n$ in all dimensions.

\subsection*{Supersymmetric Twists and Supergravity}

In this paper we only consider the case of global supersymmetry for super Yang--Mills theories on $\RR^n$. 
In certain cases one may consider coupling of super Yang--Mills to supergravity in which case there is an interpretation of the twisting procedure as performing perturbation theory in a nontrivial supergravity background. 
Let us briefly explain this perspective.

A classical solution of supergravity consists, in particular, of the following data: a spacetime manifold $M$, a $\Spin(n)$-bundle $P_\Spin\rightarrow M$ equipped with a connection (spin connection), a $G_R$-bundle $P_R\rightarrow M$ equipped with a connection and a ghost for supertranslations $\eta\in \Gamma(M, (P_\Spin\times P_R)\times^{\Spin(n)\times G_R}\Sigma)$.  
The ghost $\eta$ is bosonic: it lives in bidegree $(-1,1)$ for the $\ZZ \times \ZZ/2\ZZ$-grading, so it makes sense to give it a non-zero value.  
If we couple super Yang--Mills to supergravity, then the super Yang--Mills fields become sections of the associated bundles to $P_\Spin\times P_R$.

We have the following supergravity analogs of the data $(Q, \phi, \alpha)$ for supersymmetric twisting:
\begin{itemize}
\item The supergravity analog of the choice of a square-zero supercharge $Q$ is the value of the ghost $\eta$.

\item The supergravity analog of the twisting homomorphism is a choice of $G$-bundle $P_G\rightarrow M$ with connection so that $P_\Spin\times P_R$ is induced via the homomorphism $G\rightarrow \Spin(n)\times G_R$.

\item The supergravity analog of $\alpha\colon \U(1)\rightarrow G_R$ is a choice of trivial $\U(1)$-subbundle in $G_R$ on which the connection restricts to zero.
\end{itemize}

\subsection*{Applications to Quantization}

The quantization of gauge theories is notoriously subtle and requires a rich theory of renormalization.  One attractive application of the descriptions of the twists of supersymmetric gauge theories that we provide is to study quantization in a setting where the machinery required for renormalization is much more rigid.

To rigorously study the quantization of supersymmetric Yang--Mills theory we can work with the mathematical theory of renormalization developed by Costello in \cite{CostelloBook}. This theory of renormalization can been used to study field theories with and without supersymmetry: for example in \cite{CostelloWittengenus, LiLi, BCOV1, ChanLeungLi, GradyLiLi}. In the context of (non-supersymmetric) Yang--Mills theory for instance, it is shown that this theory recovers asymptotic freedom by an explicit analysis of the local counterterms present in the four-dimensional gauge theory \cite{EWY}.  

In principle, the existence of local counterterms can be used to analyze the full untwisted supersymmetric gauge theories in a mathematically rigorous way.  In practice, however, our approach to renormalization does not provide any significant advantage over traditional approaches used in QFT.  However, a significant simplification happens at the level of the {\em twisted} supersymmetric Yang-Mills theories that we study in this work. To start with, for some examples (but not all), the twisted theory turns out to be a {\em topological field theory}.  This occurs whenever the bracket $[Q,-]$ with the twisting supercharge surjects onto the space of translations. The theory of renormalization for topological theories can be handled using configuration spaces \cite{Kontsevich, AxelrodSinger}. 

In the general setting of this paper, while not every twist results in a topological field theory, it does result in a theory in which some directions of spacetime behave topologically and the remaining directions behave holomorphically. For a mixed holomorphic-topological translation invariant field theory of this type on $\RR^n \times \CC^d$, this means that at least half of the linearly independent translation invariant vector fields act on the field theory in a BRST exact way.

Inspired by the work of Costello and Li in \cite{BCOV1} and Li in \cite{LiFeynman, LiVertex}, the foundations of renormalization for mixed holomorphic-topological field theories on Euclidean space has been developed in \cite{BWhol}. The key result is that the renormalization for mixed holomorphic-topological theories is extremely well-behaved from an analytic perspective. It is shown in the cited work that, to first order in $\hbar$, the renormalization of mixed holomorphic/topological theory is {\em finite}. Furthermore, in \cite{LiVertex}, it is shown that in real dimension two this holds to all orders in $\hbar$. 

These results yield a practical approach to the problem of mathematically characterizing the one-loop quantization of every twist of supersymmetric Yang--Mills theory. Furthermore, in all examples of theories obtained via twisting occurring in dimensions 8 and lower, not much is lost when asking for the one-loop quantization.  The twisted gauge theories here are all either equivalent to BF-type theories (see Section \ref{gen_BF_section}) or deformations of such theories by a holomorphic differential operator.  Such theories admit prequantizations (that is, they define families of effective field theories compatible under renormalization group flow), which are exact at one loop, meaning all higher order corrections vanish identically.  

From this starting point, the first natural problem would be to verify whether these one-loop exact prequantizations define actual quantizations of the classical twisted field theory.  That is, for each such theory, to compute the one-loop anomalies to the solution of the quantum master equation.  This problem comes in two parts: first, to compute the one-loop anomaly to quantization of the theory on flat space $\RR^n \times \CC^d$, and second -- in the case where we can use a twisting homomorphism to define the twisted theory on certain structured $(n+2d)$-manifolds, to calculate the corresponding one-loop anomaly on curved space (in other words, incorporating the computation of a gravitational anomaly). We plan to return to this question in future work.

\subsection*{The Relationship to Factorization Algebras}
In our previous paper \cite{ElliottSafronov}, we discussed supersymmetric twisting with an emphasis not on the classical fields of a supersymmetric field theory, but instead on their classical or quantum \emph{observables}.  The factorization algebra formalism of Costello and Gwilliam \cite{Book1, Book2} provides a model for the local structure of the observables in a general quantum field theory.  In brief, for every open subset $U \sub M$ of the spacetime manifold, one associates a (possibly $\ZZ \times \ZZ/2\ZZ$-graded) vector space $\obs(U)$ of local observables on $U$.  For any pair $U_1, U_2 \sub V$ of disjoint open subsets of an open set $V$, one associates a morphism
\[m_{U_1, U_2}^V \colon \obs(U_1) \otimes \obs(U_2)\to \obs(V),\]
thought of as an \emph{operator product} for local observables.  These products should vary smoothly as one varies the open subsets $U_1, U_2$ and $V$.  Starting with a classical field theory on $M$, defined using the BV formalism, one can build a factorization algebra $\obs^{\mr{cl}}$ modelling the classical observables of the field theory.  If the classical field theory carries the action of a group $G$, so does the associated factorization algebra.  Furthermore, Costello and Gwilliam develop techniques for the quantization of such algebras of classical observables, using the theory of renormalization as discussed in the previous section.

In \cite{ElliottSafronov} we studied the supersymmetric twisting procedure as applied to factorization algebras on $\RR^n$ with an action of a supersymmetry algebra.  If $Q$ is a topological supercharge, then the $Q$-twist $\obs^Q$ of a supersymmetric factorization algebra automatically satisfies a strong translation invariance condition: all translations must act homotopically trivially.  In good circumstances, we can say even more.  An \emph{$\bb E_n$-algebra} is an algebra over the operad of little $n$-disks; in the language above, this can be obtained from a factorization algebra for which homotopy equivalent configurations $U_1, U_2 \sub V$ induce homotopy equivalent products.
\begin{nonum}[{\cite[Theorem 3]{ElliottSafronov}}]
If $Q$ is a topological supercharge, and the operator $\obs^Q(B_r(0)) \to \obs^Q(B_R(0))$ associated to the inclusion of concentric balls is an equivalence, then the factorization algebra $\obs^Q$ has the canonical structure of an $\bb E_n$-algebra.
\end{nonum}
The hypothesis of the theorem is automatically satisfied, for example, for superconformal theories, and should be concretely checkable in examples.

In the present work we classify twists of classical field theories, to which one can associate twisted factorization algebras of classical -- and, if the appropriate anomalies vanish, quantum -- observables in the sense of our previous work.  In some (topological) examples, these define $\bb E_n$-algebras.  In other examples, where the twist is not fully topological, the twisted local observables will define higher analogues of vertex algebras (as in, for instance, \cite{GwilliamWilliamsKM}).

\subsection*{Summary of Twisted Super Yang--Mills Theories}

In this section we will summarize the main results of the paper presented in Part \ref{classification_part}, where we calculate twists of super Yang--Mills theories in dimensions 2 through 10.

Let us begin by explaining what we mean by ``calculation''. Recall that for a Lie algebra $\fg$ there is a $d$-dimensional topological BF theory defined on a $d$-dimensional spacetime manifold $M$ with the space of BV fields $\Omega^\bullet(M; \fg)[1]\oplus \Omega^\bullet(M; \fg^*)[d-2]$, where $\Omega^\bullet$ denotes the space of differential forms equipped with the de Rham differential $\mathrm{d}$. If $M$ is replaced by a complex manifold $X$, we may also consider its version with the space of fields $\Omega^{\bullet, \bullet}(X; \fg)[1]\oplus \Omega^{\bullet, \bullet}(X; \fg^*)[2\dim(X)-2]$, where $\Omega^{\bullet, \bullet}$ is the space of differential forms equipped with the Dolbeault differential. Finally, we have yet another version, a \emph{holomorphic BF theory}, with the space of BV fields $\Omega^{0, \bullet}(X; \fg)[1]\oplus \Omega^{\dim(X), \bullet}(X; \fg^*)[\dim(X)-2]$, again equipped with the Dolbeault differential. We will denote the space of fields in these three examples as $T^*[-1]\map(M_{\mathrm{dR}}, B\fg)$, $T^*[-1]\map(X_{\mathrm{Dol}}, B\fg)$ and $T^*[-1]\map(X, B\fg)$ respectively (the notation is explained in Section \ref{sect:FMPs}). We may also combine these three examples into what we call a \emph{generalized BF theory} with the spaces of fields $T^*[-1]\map(X\times Y_{\mathrm{Dol}}\times M_{\mathrm{dR}}, B\fg)$ (see Definition \ref{def:generalizedBF} for more details).

Let us also recall that if $\fg$ is equipped with a symmetric bilinear nondegenerate pairing, we also have a 3-dimensional topological Chern--Simons theory. If we forgo $\ZZ$-gradings and work with $\ZZ/2\ZZ$-gradings, we may also consider a topological Chern--Simons theory in any odd dimension (see \cite{AlekseevMnev} for a 1-dimensional version and \cite{BakGustavsson2} for a 5-dimensional version). Just like for the BF theory, we also have two other versions which may be combined into a generalized Chern--Simons theory. Another direction we can generalize in is to replace the Lie algebra $\fg$ by a dg Lie algebra, in which case the BF theory itself becomes a particular example of the Chern--Simons theory.

Our goal will then be to show that a particular twist of super Yang--Mills is equivalent to a given generalized Chern--Simons theory. We summarize our results in two forms. In Tables \ref{table_of_twists_16}, \ref{table_of_twists_8} and \ref{table_of_twists_4} we summarize all the possible twists of dimensional reductions of the 10d $\cN=(1, 0)$, 6d $\cN=(1, 0)$ and 4d $\cN=1$ super Yang--Mills theories respectively.  In Table \ref{table_of_twists_2d} we summarize the twists of 2d supersymmetric Yang-Mills theory. Before these tables, we will give a short description of each twisted theory in a more physical language, with references to where in the literature it was previously considered.

\begin{remarknonum}
Let us briefly discuss a twisting construction that our classification will not include.  Given a supercharge $Q$ whose square is not zero, but instead generates an $S^1$-action on spacetime, one can construct a $Q$-twisted theory on the locus of $S^1$-fixed points.  This is Nekrasov's $\Omega$-background construction \cite{Nekrasovthesis,NekrasovSW}, see also the discussion in \cite{CostelloOmega}.  In \cite{Nekrasovthesis}, 3d and 5d topological Chern--Simons theories, as well as a 4d mixed holomorphic-topological Chern--Simons theory are discussed as arriving in this context.  In \cite{CostelloOmega} the 5-dimensional theory, and a non-commutative version thereof, is shown to arise from this construction.
\end{remarknonum}

\textbf{Dimension 10}
\begin{itemize}
 \item \emph{$\mc N=(1,0)$ holomorphic twist.} The unique twisted super Yang--Mills theory in 10 dimensions is equivalent to the 5d holomorphic Chern--Simons theory defined on a Calabi--Yau 5-fold. Note that this theory is only $\ZZ/2\ZZ$-graded. This twist was first studied by Baulieu \cite{Baulieu}. As is well-known \cite{GSanomaly,CostelloLiAnomaly}, the theory has a one-loop anomaly and does not admit a quantization.
\end{itemize}

\textbf{Dimension 9}
\begin{itemize}
 \item \emph{$\mc N=1$ minimal twist.} The unique twisted super Yang--Mills theory in 9 dimensions is equivalent to a generalized version of the Chern--Simons theory defined on a product of a Calabi--Yau 4-fold and a real 1-manifold. Note that this theory is only $\ZZ/2\ZZ$-graded. Its classical solutions are $G$-bundles holomorphic along the Calabi--Yau manifold and flat along the 1-manifold.
\end{itemize}

\textbf{Dimension 8}
\begin{itemize}
 \item \emph{$\mc N=1$ holomorphic twist.} Super Yang--Mills theory in 8 dimensions admits three classes of twists.  The minimal twist, by a holomorphic (or, equivalently, pure) supercharge, is equivalent to a holomorphic version of the BF theory defined on a complex 4-fold.
 \item \emph{$\mc N=1$ intermediate twist.} The holomorphic twist admits a deformation to a twist by a rank 1 impure spinor. This theory is equivalent to a generalized version of the Chern--Simons theory defined on a product of a Calabi--Yau 3-fold and an oriented surface. Note that this theory is only $\ZZ/2\ZZ$-graded.
 \item \emph{$\mc N=1$ topological twist.} The holomorphic twist also admits a deformation to a topological twist defined on $\spin(7)$-manifolds. This theory is perturbatively trivial, in the sense that the classical BV complex is contractible. It was studied in \cite{AcharyaOLoughlinSpence,BaulieuKannoSinger}. The partition function of this theory counts $\spin(7)$-instantons modulo gauge \cite{Lewis,DonaldsonThomas,ReyesCarrion}. If we denote by $\Omega$ the Cayley 4-form on a $\spin(7)$-manifold $M$, then the classical solutions are given by principal $G_\RR$-bundles on $M$ (where $G_\RR$ is a compact Lie group) together with a connection $A$, such that its curvature $F_A$ satisfies the equation
\begin{equation} 
 F = \ast (\Omega\wedge F).
 \end{equation}
\end{itemize}

\textbf{Dimension 7}
\begin{itemize}
 \item \emph{$\mc N=1$ minimal twist.} The twists of super Yang--Mills theory in 7 dimensions arise by dimensionally reducing the twists in 8 dimensions.  The minimal twist, by a pure spinor, is equivalent to a generalized version of the BF theory defined on a product of a complex 3-fold and a real 1-manifold.
 \item \emph{$\mc N=1$ intermediate twist.} The minimal twist admits a deformation to a twist by a rank 1 impure spinor. This theory is equivalent to a generalized version of the Chern--Simons theory defined on a product of a Calabi--Yau surface and a real oriented 3-manifold. Note that this theory is only $\ZZ/2\ZZ$-graded.
 \item \emph{$\mc N=1$ topological twist.} The minimal twist also admits a deformation to a topological twist defined on $G_2$-manifolds. This theory is again perturbatively trivial. It was studied in \cite{AcharyaOLoughlinSpence, BaulieuKannoSinger}. The partition function of this theory counts $G_2$-monopoles modulo gauge \cite{DonaldsonSegal}. If we denote by $\psi$ the calibration 4-form on a $G_2$-manifold $M$, then the classical solutions are given by principal $G_\RR$-bundles $P\rightarrow M$ together with a connection $A$ and a section $\sigma\in\Gamma(M, \mathrm{ad} P)$ satisfying
\begin{equation}
\d_A\sigma = \ast(F\wedge \psi).
\end{equation}
\end{itemize}

\textbf{Dimension 6}
\begin{itemize}
 \item \emph{$\mc N=(1,0)$ and $\mc N=(1,1)$ holomorphic twist.} The holomorphic twist of the 6d $\cN=(1, 0)$ super Yang--Mills theory with matter valued in a symplectic $G$-representation $U$ is equivalent to the theory whose classical solutions are holomorphic maps from a Calabi--Yau 3-fold $X$ to the Hamiltonian reduction of $U$ (a holomorphic version of the gauged Rozansky--Witten model). In general, this theory is only $\ZZ/2\ZZ$-graded. If $U=T^*R$, the theory is $\ZZ$-graded and is the cotangent theory to the space of holomorphic maps from a complex 3-fold to $R/G$. 6d $\cN=(1, 1)$ super Yang--Mills corresponds to the special case $R=\fg$. This twist is studied in \cite{CostelloYagi,Butson}.
 \item \emph{$\mc N=(1,1)$ rank $(2, 2)$ twist.} In the $\mc N=(1,1)$ case there are two intermediate twists. The one by a supercharge of rank $(2, 2)$ is equivalent to a generalized version of the Chern--Simons theory defined on a product of a Calabi--Yau curve and a real oriented 4-manifold. Note that this theory is only $\ZZ/2\ZZ$-graded.
 \item \emph{$\mc N=(1,1)$ special rank $(1,1)$ twist.} The other intermediate twist, by a rank $(1, 1)$ supercharge, is equivalent to a generalized form of the BF theory defined on a product of a complex surface and a real surface.
 \item \emph{$\mc N=(1,1)$ topological twist.} The special rank $(1, 1)$ twist admits a deformation to a topological twist defined on Calabi--Yau 3-folds. This theory is perturbatively trivial. It was studied in \cite{AcharyaOLoughlinSpence, BaulieuKannoSinger}. The partition function of this theory counts solutions to the Donaldson--Thomas equations \cite{Thomas}. If we denote by $\omega$ the K\"ahler form on a Calabi--Yau 3-fold $M$, then the classical solutions are given by principal $G_\RR$-bundles $P\rightarrow M$ together with a connection $A$ and a 3-form $u\in\Omega^{0, 3}(M, \mathrm{ad} P\otimes_\RR \CC)$ satisfying
\begin{align}
F_{0, 2} + \ol\dd^*_A u &= 0 \\
F_{1, 1}\wedge \omega^2 + [u, \bar{u}] &= 0.
\end{align}
\end{itemize}

\textbf{Dimension 5}
\begin{itemize}
 \item \emph{$\mc N=1$ and $\mc N=2$ minimal twist.} The minimal twist of the 5d $\cN=1$ super Yang--Mills theory with matter valued in a symplectic $G$-representation $U$ is equivalent to a theory defined on a product of a Calabi--Yau surface $X$ and a real 1-manifold $M$. The classical solutions are given by maps from $X\times M$ to the symplectic reduction of $U$ holomorphic along $X$ and locally-constant along $M$. In general, this theory is only $\ZZ/2\ZZ$-graded. It was studied by K\"all\'en and Zabzine \cite{KallenZabzine}. If $U=T^* R$, the theory is $\ZZ$-graded and is the cotangent theory to the space of maps from $X\times M$ to $R/G$ holomorphic along $X$ and locally-constant along $M$. 5d $\cN=2$ super Yang--Mills corresponds to the special case $R=\fg$.
 \item \emph{$\mc N=2$ intermediate twist.} In the $\mc N=2$ case the minimal twist admits a deformation to an intermediate twist. The corresponding theory is equivalent to a generalized BF theory defined on a product of a complex curve $C$ and a real 3-manifold $M$. The twist was considered in \cite{ElliottPestun} in the case $M=S^1\times \Sigma$ for a Riemann surface $\Sigma$, where the moduli space can be viewed as a multiplicative version of the Hitchin system.
 \item \emph{$\mc N=2$ topological A twist.} The intermediate twist admits a deformation to two topological twists. One, which we refer to as the A-twist, arises by dimensionally reducing the topological twist of 6d $\mc N=(1, 1)$ super Yang--Mills theory. This theory is perturbatively trivial. It was studied by Qiu and Zabzine, \cite{QiuZabzine} (see also \cite{Anderson} for a discussion of the twisting homomorphism). The partition function of this theory counts solutions to the Haydys--Witten equations \cite{Haydys,WittenFivebranes}. Let $M$ be a $K$-contact manifold and denote by $R$ the Reeb vector field. The classical solutions are given by principal $G_\RR$-bundles $P\rightarrow M$ together with a connection $A$ and a section $B\in\Omega^2(M; \mathrm{ad} P)$ satisfying a self-duality equation $\iota_R \ast B = B$ which together satisfy the following equations (see \cite[equations (4) and (5)]{QiuZabzine}; we refer there for the explanation of the notation):
\begin{align}
\iota_R F - (d_A^* B)^H &= 0 \\
F^+_H - \frac{1}{4}B\times B - \frac{1}{2}\iota_R d_A B &= 0.
\end{align} 
 \item \emph{$\mc N=2$ topological B twist.} Finally, the other topological twist, associated to a rank 4 supercharge, can be identified with a 5d Chern--Simons theory defined on an oriented 5-manifold. Note that this theory is only $\ZZ/2\ZZ$-graded. This twist was identified in work of Geyer--M\"ulsch and of Bak--Gustavsson \cite{GeyerMuelsch, BakGustavsson1,BakGustavsson2}.
\end{itemize}

\textbf{Dimension 4}
\begin{itemize}
 \item \emph{$\mc N=1$ holomorphic twist.} The holomorphic twist of the 4d $\cN=1$ super Yang--Mills theory with matter valued in a $G$-representation $R$ is equivalent to the cotangent theory of the theory of holomorphic maps from a complex surface $X$ to $R/G$. This twist was studied by Johansen \cite{Johansen} (see also \cite{BaulieuTanzini,CostelloSUSY}).
 \item \emph{$\cN=2$ and $\cN=4$ holomorphic twist.} We may also consider holomorphic twists of 4d $\cN=2$ and 4d $\cN=4$ super Yang--Mills theories. The holomorphic twist of 4d $\cN=2$ super Yang--Mills with matter valued in a symplectic $G$-representation $U$ is equivalent to the cotangent theory of the theory of holomorphic maps from a Calabi--Yau surface $X$ to the Hamiltonian reduction of $U$. The d $\cN=4$ super Yang--Mills theory corresponds to the case $U=T^*\fg$ in which case the space of classical solutions is a $(-1)$-shifted cotangent bundle to the moduli stack of $G$-Higgs bundles on a complex surface $X$.
 \item \emph{$\mc N=2$ and $\mc N=4$ intermediate twist.} There is a deformation of the $\cN=2$ holomorphic twist which is equivalent to a theory of maps from a product of a Calabi--Yau curve $C$ and an oriented surface $\Sigma$ into the Hamiltonian reduction of $U$ which are holomorphic along $C$ and locally-constant along $\Sigma$. This twist was previously studied by Kapustin \cite{KapustinHolo}.
 \item \emph{$\mc N=2$ topological rank (2, 0) twist.} The $\cN=2$ holomorphic twist admits a deformation to a topological twist, the \emph{Donaldson twist}. This theory is perturbatively trivial. This theory was first considered in \cite{WittenTQFT}, and the coupling to matter was studied in \cite{AnselmiFre, AlvarezLabastida, HyunParkPark}. The partition function counts solutions to nonabelian Seiberg--Witten equations \cite{Pidstrigach}. Let $G_\RR$ be a compact Lie group and $U$ a quaternionic-unitary $G_\RR$-representation. In particular, $U$ carries a commuting $\SU(2)$-action given by unit quaternions. Suppose $M$ is a spin 4-manifold and let $P_{\Spin}\rightarrow M$ be the corresponding $\Spin(4)$-principal bundle. The classical solutions in this theory are given by principal $G_\RR$-bundles $P\rightarrow M$ together with a connection $A$ and a section $u\in\Gamma(M, (P\times P_{\Spin})\times^{G_\RR\times \Spin(4)} U)$ which together satisfy
\begin{align}
\sd \d_A u &= 0 \label{eq:SW1} \\
F^+ + \Phi(u) &= 0, \label{eq:SW2}
\end{align}
where $\Phi$ is the moment map and $\sd\d_A$ is the Dirac operator (we refer to \cite{Pidstrigach,HaydysDirac} for more details).

\item \emph{$\cN=4$ topological rank $(2, 0)$ twist.} The same twist may be considered for the 4d $\cN=4$ super Yang--Mills theory, in which case it has three compatible twisting homomorphisms \cite{Yamron}, i.e. there are three ways of interpreting the differential equations on arbitrary oriented 4-manifolds. First, considering the 4d $\cN=4$ super Yang--Mills theory as a 4d $\cN=2$ super Yang--Mills theory with $U=\fg\otimes\mathbb{H}$, we obtain a theory which counts solutions to the nonabelian Seiberg--Witten equations \eqref{eq:SW1}, \eqref{eq:SW2}. Another twisting homomorphism was studied by Vafa and Witten \cite{VafaWitten}. The corresponding theory counts solutions of the Vafa--Witten equations on an oriented 4-manifold $M$. The classical solutions are given by principal $G_\RR$-bundles $P\rightarrow M$ together with a connection $A$, a self-dual two-form $B\in\Omega^{2, +}(M, \mathrm{ad} P)$ and a section $C\in\Gamma(M, \mathrm{ad} P)$ which satisfy
\begin{align}
-\d_A C + \d_A^* B &= 0 \\
F^+ - \frac{1}{4}B\times B - \frac{1}{2}[C, B] &= 0.
\end{align}
Finally, the third twisting homomorphism was studied by Marcus \cite{Marcus} and Kapustin and Witten \cite{KapustinWitten}. The classical solutions are given by principal $G_\RR$-bundles $P\rightarrow M$ together with a connection $A$ and a one-form $\phi\in\Omega^1(M, \mathrm{ad} P)$ which satisfy
\begin{align}
(F-\phi\wedge \phi)^+ &= 0\\
(\d_A \phi)^- &= 0 \\
\d^*_A \phi &= 0.
\end{align}

 \item \emph{$\mc N=4$ topological B twist.} For the 4d $\cN=4$ super Yang--Mills theory there is a single topological twist which is not perturbatively trivial. It is equivalent to a topological BF theory defined on a 4-manifold (this theory corresponds to the value $t=\pm i$ of the family considered in \cite{KapustinWitten}).
 
 \item \emph{$\cN=4$ topological rank $(2, 2)$ twist.} The $\cN=4$ topological B twist admits a deformation to a perturbatively trivial theory. The corresponding deformation is parameterized by $s\in\CC^\times$, where $s=1$ is the topological B twist. Choosing a parameter $t\in\CC^\times$ satisfying $s=-t^2$, the theory counts solutions of the Kapustin--Witten equations on an oriented 4-manifold $M$. The classical solutions are given by principal $G_\RR$-bundles $P\rightarrow M$ together with a connection $A$ and a one-form $\phi\in\Omega^1(M, \mathrm{ad} P)$ which satisfy
\begin{align}
(F-\phi\wedge\phi +t \d_A \phi)^+ &= 0 \\
(F-\phi\wedge\phi -t^{-1}\d_A \phi)^- &= 0 \\
\d^*_A \phi &= 0.
\end{align}

\item \emph{$\cN=4$ topological rank $(2, 1)$ twist.} The $\cN=4$ intermediate twist also admits a deformation to a perturbatively trivial theory defined on K\"ahler surface $M$ given by twisting by a rank $(2, 1)$ supercharge. The corresponding equation is a deformation of the Kapustin--Witten equations using the K\"ahler form.

An analysis of topological twists of the 4d $\cN=4$ super Yang--Mills theory using similar techniques to this paper, but with the aim of obtaining the full derived stack of solutions to the equations of motion, rather than only the perturbative classical field theory, was carried out in \cite{ElliottYoo1}.
\end{itemize}

\textbf{Dimension 3}
\begin{itemize}
 \item \emph{$\cN=2$ minimal twist.} The minimal twist of the 3d $\cN=2$ super Yang--Mills theory with matter valued in a $G$-representation $R$ is defined on a product $C\times L$ of a complex curve $C$ and a real 1-manifold $L$. It is equivalent to the cotangent theory of the theory of maps from $C\times L$ to $R/G$ which are holomorphic along $C$ and locally-constant along $L$. This twist was studied in \cite{ACMV}.
 \item \emph{$\cN=4$ and $\cN=8$ minimal twist.} We may also consider the minimal twist of $\cN=4$ and $\cN=8$ super Yang--Mills theories. The minimal twist of the $\cN=4$ super Yang--Mills theory with matter valued in a symplectic $G$-representation $U$ is equivalent to the cotangent theory of the theory of maps from $C\times L$ to the Hamiltonian reduction of $U$ which are holomorphic along $C$ and locally-constant along $L$. The 3d $\cN=8$ theory corresponds to the case $U=T^*\fg$.

 \item \emph{$\mc N=4$ topological A twist.} In 3d $\cN=4$ super Yang--Mills theory we may consider a deformation of the minimal twist which gives rise to a perturbatively trivial topological theory defined on spin 3-manifolds. This twist was studied in \cite{BaulieuGrossman,BlauThompson1,Ohta}. From a mathematical point of view the space of states on a two-sphere is studied in \cite{BFN}. The partition function counts solutions to a 3-dimensional version of the nonabelian Seiberg--Witten equations \eqref{eq:SW1}, \eqref{eq:SW2}. Let $G_\RR$ be a compact Lie group and $U$ a quaternionic-unitary $G_\RR$-representation. Let $M$ be a spin 3-manifold and let $P_{\Spin}\rightarrow M$ be the corresponding $\Spin(3)$-principal bundle. The classical solutions in this theory are given by principal $G_\RR$-bundles $P\rightarrow M$ together with a connection $A$, a section $\sigma\in\Gamma(M, \mathrm{ad} P)$ and a section $u\in\Gamma(M, (P\times P_{\Spin})\times^{G_\RR\times \Spin(3)} U)$ which together satisfy
\begin{align}
\sd \d_A u + [\sigma, u]  &= 0 \label{eq:3dSW1} \\
F + \ast\d_A \sigma + \Phi(u) &= 0. \label{eq:3dSW2}
\end{align}

\item \emph{$\cN=8$ topological A twist.} We may regard the 3d $\cN=8$ super Yang--Mills theory as a 3d $\cN=4$ super Yang--Mills theory with matter valued in $U=\fg\otimes\mathbb{H}$. In particular, the partition function in the twisted theory counts solutions to the equations \eqref{eq:3dSW1}, \eqref{eq:3dSW2}. We may also consider a different twisting homomorphism obtained by dimensionally reducing the Vafa--Witten or Kapustin--Witten twisting homomorphism. The classical solutions in this theory are given by principal $G$-bundles $P\rightarrow M$ ($G$ is the complexification of the compact Lie group $G_\RR$) together with a connection $A$ and a section $\sigma\in\Gamma(M, \mathrm{ad} P)$ satisfying a complexified version of the Bogomolny equation:
\[
F + \ast \d_A \sigma = 0.
\]
The corresponding field theory in the formalism of extended topological field theories is studied in \cite{BZGN}.

 \item \emph{$\mc N=4$ and $\mc N=8$ topological B twist.} The minimal twist of the 3d $\cN=4$ super Yang--Mills theory also admits another deformation to a gauged version of the Rozansky--Witten model valued in the Hamiltonian reduction $U\ham G$ \cite{RozanskyWitten,BlauThompson2}.
 \end{itemize}

\textbf{Dimension 2}
\begin{itemize}
 \item \emph{$\mc N=(2,2), (4,4), (8,8)$ holomorphic twist.} There is a holomorphic twist in dimension 2 which is defined on complex curves $C$. The twist of 2d $\cN=(2, 2)$ super Yang--Mills theory with matter valued in a $G$-representation $R$ is equivalent to the cotangent theory to the theory of holomorphic maps from $C$ to $R/G$. The twist of 2d $\cN=(4, 4)$ super Yang--Mills theory with matter valued in a symplectic $G$-representation $U$ is equivalent to the cotangent theory to the theory of holomorphic maps from $C$ to the Hamiltonian reduction $U\ham G$. Finally, the case of 2d $\cN=(8, 8)$ super Yang--Mills theory corresponds to choosing $U=T^*\fg$.
 \item \emph{$\mc N=(2,2), (4,4), (8,8)$ topological A twist.} In each of these cases, the minimal twist can again be deformed to a topological theory in two inequivalent ways.  The first is a perturbatively trivial theory, the gauged A-model. We begin with a description of the twist of the 2d $\cN=(2, 2)$ super Yang--Mills theory with gauge group $G_\RR$ (a compact Lie group) and matter valued in a unitary $G_\RR$-representation $R$ equipped with a moment map $\Phi$. The partition function counts solutions to symplectic vortex equations \cite{CGMRS}. Let $\Sigma$ be an oriented surface equipped with an almost complex structure and a square root $S$ of the line bundle of densities $\Dens_\Sigma$. The classical solutions are given by principal $G_\RR$-bundles $P\rightarrow \Sigma$ equipped with a connection $A$ and a section $u\in\Gamma(M, ((P\times^{G_\RR} R)\otimes_\RR S)$ which satisfy
\begin{align}
\ol\dd_A u &= 0 \label{eq:symplecticvortex1} \\
F + \Phi(u) &= 0. \label{eq:symplecticvortex2}
\end{align}
Next, consider the twist of the 2d $\cN=(4, 4)$ super Yang--Mills theory with a complexified gauge group $G$ and matter valued in a complex symplectic $G$-representation $U$ equipped with a moment map $\Phi$. In this case the classical solutions are given by principal $G$-bundles $P\rightarrow \Sigma$ equipped with a connection $A$ and a section $u\in\Gamma(\Sigma, (P\times^G U)\otimes_\RR S)$ which satisfy a complexified version of \eqref{eq:symplecticvortex1}, \eqref{eq:symplecticvortex2}:
\begin{align}
\ol\dd_A u &= 0 \\
F + \Phi(u) &= 0.
\end{align}
Finally, consider the twist of the 2d $\cN=(8, 8)$ super Yang--Mills theory. The classical solutions are given by principal $G$-bundles $P\rightarrow \Sigma$ equipped with a connection $A$ and sections $u_1\in\Gamma(\Sigma, \mathrm{ad} P)$, $u_2\in\Omega^{1, 1}(\Sigma, \mathrm{coad} P)$ which satisfy
\begin{align}
\ol\dd_A u_1 &= 0 \\
\ol\dd_A u_2 &= 0 \\
F + 2(u_1, u_2) &= 0.
\end{align}

 \item \emph{$\mc N=(2,2), (4,4), (8,8)$ topological B twist.} The other topological twist gives rise to a gauged B-model. The twist of the 2d $\cN=(2, 2)$ super Yang--Mills theory with complexified gauge group $G$ and matter valued in a $G$-representation $R$ is equivalent to the cotangent theory to the theory of locally-constant maps from a surface $\Sigma$ to $R/G$. The twist of the 2d $\cN=(4, 4)$ super Yang--Mills theory with complexified gauge group $G$ and matter valued in a symplectic $G$-representation $U$ is equivalent to the cotangent theory of the theory of locally-constant maps from $\Sigma$ to the Hamiltonian reduction $U\ham G$. Finally, the case of the 2d $\cN=(8, 8)$ super Yang--Mills theory corresponds to choosing $U=T^*\fg$. The study of the topological twists of 2d $\mc N=(2,2)$ supersymmetric field theories goes back to the works of Eguchi and Yang \cite{EguchiYang} and Witten \cite{Wittenmirror}.
 \item \emph{$(\cN, 0)$ holomorphic twist.} Theories with chiral supersymmetry in 2 dimensions (i.e. with 2d $(\cN, 0)$ supersymmetry) only admit a holomorphic twist. The corresponding twisted theory is equivalent to a cotangent theory to the theory of holomorphic maps from a complex curve $C$ to $\fg^{\cN-2} / G$. Twisted 2-dimensional $(2,0)$ $\sigma$-models were first studied by Witten in \cite{Wittenmirror}, and can be used to obtain the chiral algebra of chiral differential operators \cite{WittenCDO}. 
\end{itemize}

\begin{table}[htbp]
 \centering
 \begin{tabular}{c|c|c|c|c}
 $d$ & $\mc N$ & Twist & Description & Invariant Directions \\
 \hline
 \multirow{2}*{10} & \multirow{2}*{$(1,0)$} & \multirow{2}*{\hyperref[sect:10dholomorphictwist]{Rank $(1,0)$}} & Holomorphic Chern--Simons Theory & \multirow{2}*{5 (holomorphic)} \\
 &&&$\mr{Map}(\CC^5, B\gg)$ &\\ \hline
 \multirow{2}*{9} & \multirow{2}*{1} & \multirow{2}*{\hyperref[sect:9dminimaltwist]{Rank 1}} & Generalized Chern--Simons Theory & \multirow{2}*{5 (minimal)} \\
 &&&$\mr{Map}(\CC^4 \times \RR_{\mr{dR}}, B\gg)$ &\\ \hline
 \multirow{6}*{8} & \multirow{6}*{1} & \multirow{2}*{\hyperref[sect:8dholomorphictwist] {Rank $(1,0)$ pure}} & Holomorphic BF Theory & \multirow{2}*{4 (holomorphic)} \\
 &&& $T^*[-1]\mr{Map}(\CC^4, B\gg)$ & \\ \cline{3-5}
 && \multirow{2}*{\hyperref[sect:8dpartiallytopologicaltwist]{Rank $(1,1)$}} & Generalized Chern--Simons Theory& \multirow{2}*{5} \\
 &&& $\mr{Map}(\CC^3 \times \RR^2_{\mr{dR}}, B\gg)$ & \\ \cline{3-5}
  && \multirow{2}*{\hyperref[sect:8dtopologicaltwist]{Rank $(1,0)$ impure}} & Perturbatively trivial ($\Spin(7)$ Instanton)& \multirow{2}*{8 (topological)} \\
 &&& $\mr{Map}(\CC^4, B\gg)_\mr{dR}$ & \\ \hline
  \multirow{6}*{7} & \multirow{6}*{1} & \multirow{2}*{\hyperref[sect:7dminimaltwist] {Rank $1$ pure}} & Generalized BF Theory & \multirow{2}*{4 (minimal)} \\
 &&& $T^*[-1]\mr{Map}(\CC^3 \times \RR_{\mr{dR}}, B\gg)$ & \\ \cline{3-5}
 && \multirow{2}*{\hyperref[sect:7dpartialtwist] {Rank $2$}} & Generalized Chern--Simons Theory& \multirow{2}*{5} \\
 &&& $\mr{Map}(\CC^2 \times \RR^3_{\mr{dR}}, B\gg)$ & \\ \cline{3-5}
  && \multirow{2}*{\hyperref[sect:7dtopologicaltwist]{Rank $1$ impure}} & Perturbatively trivial ($G_2$ Monopole)& \multirow{2}*{7 (topological)} \\
 &&& $\mr{Map}(\CC^3 \times \RR_{\mr{dR}}, B\gg)_\mr{dR}$ & \\ \hline
 \multirow{8}*{6} & \multirow{8}*{$(1,1)$} & \multirow{2}*{\hyperref[sect:6d11holomorphictwist]{Rank $(1,0)$}} & {Holomorphic BF Theory} & \multirow{2}*{3 (holomorphic)} \\
 &&& $T^*[-1]\mr{Map}(\CC^3, \gg/\gg)$ & \\ \cline{3-5}
 && \multirow{2}*{\hyperref[sect:6d11partialtwist]{Rank $(1,1)$ special}} & {Generalized BF Theory} & \multirow{2}*{4} \\
 &&& $T^*[-1]\mr{Map}(\CC^2 \times \RR^2_{\mr{dR}}, B\gg)$ & \\ \cline{3-5}
 && \multirow{2}*{\hyperref[sect:6drank22twist]{Rank $(2,2)$}} & {Generalized Chern--Simons Theory} & \multirow{2}*{5} \\
 &&& $\mr{Map}(\CC \times \RR^4_{\mr{dR}}, B\gg)$ & \\ \cline{3-5}
 && \multirow{2}*{\hyperref[sect:6d11topologicaltwist]{Rank $(1,1)$ generic}} & {Perturbatively trivial} & \multirow{2}*{6 (topological)} \\
 &&& $\mr{Map}(\CC^2 \times \RR^2_{\mr{dR}}, B\gg)_{\mr{dR}}$ & \\ \hline
 \multirow{8}*{5} & \multirow{8}*{$2$} & \multirow{2}*{\hyperref[sect:5dminimaltwist]{Rank $1$}} & {Generalized BF Theory} & \multirow{2}*{3 (minimal)} \\
 &&& $T^*[-1]\mr{Map}(\CC^2 \times \RR_{\mr{dR}}, \gg/\gg)$ & \\ \cline{3-5}
 && \multirow{2}*{\hyperref[sect:5dpartialtwist]{Rank $2$ special}} & {Generalized BF Theory} & \multirow{2}*{4} \\
 &&& $T^*[-1]\mr{Map}(\CC \times \RR^3_{\mr{dR}}, B\gg)$ & \\ \cline{3-5}
 && \multirow{2}*{\hyperref[sect:5drank4twist]{Rank $4$}} & {5d Chern--Simons Theory} & \multirow{2}*{5 (topological)} \\
 &&& $\map(\RR^5_{\mr{dR}}, B\fg)$ & \\ \cline{3-5}
 && \multirow{2}*{\hyperref[sect:5drank2topologicaltwist] {Rank $2$ generic}} & {Perturbatively trivial} & \multirow{2}*{5 (topological)} \\
 &&& $\mr{Map}(\CC \times \RR^3_{\mr{dR}}, B\gg)_{\mr{dR}}$ & \\ \hline
 \multirow{12}*{4} & \multirow{12}*{$4$} & \multirow{2}*{\hyperref[sect:4d4holomorphictwist] {Rank $(1,0)$}} & {Holomorphic BF Theory} & \multirow{2}*{2 (holomorphic)} \\
 &&& $T^*[-1]\mr{Map}(\CC^2_{\mr{Dol}}, B\gg)$ & \\ \cline{3-5}
 && \multirow{2}*{\hyperref[sect:4d4partialtwist] {Rank $(1,1)$}} & Generalized BF Theory & \multirow{2}*{3} \\
 &&& $T^*[-1]\mr{Map}(\CC_{\mr{Dol}} \times \RR^2_{\mr{dR}}, B\gg)$ & \\ \cline{3-5}
 && \multirow{2}*{\hyperref[sect:4dqgltwist] {Rank $(2,2)$ special}} & BF Theory  & \multirow{2}*{4 (topological)} \\
 &&& $T^*[-1]\mr{Map}(\RR^4_{\mr{dR}}, B\gg)$ & \\ \cline{3-5}
 && \multirow{2}*{\hyperref[sect:4d4partialtwist] {Rank $(2,1)$}} & {Perturbatively trivial} & \multirow{2}*{4 (topological)} \\
 &&& $\mr{Map}(\CC_{\mr{Dol}} \times \RR^2_{\mr{dR}}, B\gg)_{\mr{dR}}$ & \\ \cline{3-5}
 && \multirow{2}*{\hyperref[sect:4d4Atwist] {Rank $(2,0)$}} & {Perturbatively trivial } & \multirow{2}*{4 (topological)} \\
 &&& $\mr{Map}(\CC^2_{\mr{Dol}}, B\gg)_{\mr{dR}}$ & \\ \cline{3-5}
  && \multirow{2}*{\hyperref[sect:4dqgltwist] {Rank $(2,2)$ generic}} & {Perturbatively trivial} & \multirow{2}*{4 (topological)} \\
 &&& $\mr{Map}(\RR^4_{\mr{dR}}, B\gg)_{\mr{dR}}$ & \\ \hline
  \multirow{8}*{3} & \multirow{8}*{$8$} & \multirow{2}*{\hyperref[sect:3d8minimal_twist] {Rank $1$}} & {Generalized BF Theory} & \multirow{2}*{2 (minimal)} \\
 &&& $T^*[-1]\mr{Map}(\CC_{\rm Dol} \times \RR_{\mr{dR}}, \gg/\gg)$
 & \\ \cline{3-5}
 && \multirow{2}*{\hyperref[cor:3dN8Btwist] {Rank $2$ (B)}} & {BF Theory } & \multirow{2}*{3 (topological)} \\
 &&& $T^*[-1]\mr{Map}(\RR^3_{\mr{dR}}, \gg/\gg)$ & \\ \cline{3-5}
 && \multirow{2}*{\hyperref[cor:3dN8Atwist] {Rank $2$ (A)}} & {Perturbatively trivial} & \multirow{2}*{3 (topological)} \\
 &&& $\mr{Map}(\RR^3_{\mr{dR}}, \gg/\gg)_{\mr{dR}}$ & \\ \hline
 \end{tabular}
 \caption{Twists of Maximally Supersymmetric Pure Yang--Mills Theories with Lie algebra $\fg$ (16 supercharges).}
 \label{table_of_twists_16}
\end{table}

\begin{table}[!ht]
 \centering
 \begin{tabular}{c|c|c|c|c}
 $d$ & $\mc N$ & Twist & Description & Invariant Directions \\
 \hline
 \multirow{2}*{6} & \multirow{2}*{$(1,0)$} & \multirow{2}*{\hyperref[sect:6dholomorphictwist]{Rank $(1,0)$}} & {Holomorphic BF Theory coupled to a holomorphic symplectic boson} & \multirow{2}*{3 (holomorphic)} \\
 &&& $\Sect(\CC^3, (U\otimes K_{\CC^3}^{1/2}) \ham \fg)$ & \\ \hline
 \multirow{2}*{5} & \multirow{2}*{$1$} & \multirow{2}*{\hyperref[sect:5d1minimaltwist] {Rank $1$}} & {Generalized BF Theory coupled to a generalized symplectic boson} & \multirow{2}*{3 (minimal)} \\
 &&& $\Sect(\CC^2 \times \RR_{\mr{dR}}, (U\otimes K_{\CC^2}^{1/2}) \ham \fg)$ & \\ \hline
 \multirow{6}*{4} & \multirow{6}*{$2$} & \multirow{2}*{\hyperref[sect:4d_2_holomorphictwist] {Rank $(1,0)$}} & {Holomorphic BF Theory} & \multirow{2}*{2 (holomorphic)} \\
 &&& $T^*[-1]\mr{Sect}(\CC^2, (U \otimes K_{\CC^2}^{1/2}) \ham \fg)$ & \\ \cline{3-5}
 && \multirow{2}*{\hyperref[sect:4d_2_11] {Rank $(1,1)$}} & {Generalized BF Theory coupled to a generalized symplectic boson} & \multirow{2}*{3} \\
 &&& $\Sect(\CC \times \RR^2_{\mr{dR}}, (U\otimes K_\CC^{1/2}) \ham \fg)$  & \\ \cline{3-5}
 && \multirow{2}*{\hyperref[sect:4d2Donaldson] {Rank $(2,0)$}} & {Perturbatively trivial } & \multirow{2}*{4 (topological)} \\
 &&& $\mr{Sect}(\CC^2, (U\otimes K_{\CC^2}^{1/2}) \ham \fg)_{\mr{dR}}$ & \\ \hline
 \multirow{6}*{3} & \multirow{6}*{$4$} & \multirow{2}*{\hyperref[sect:3d_4_minimal_twist] {Rank $1$}} & {Generalized BF Theory coupled to a generalized symplectic boson} & \multirow{2}*{2 (minimal)} \\
 &&& $T^*[-1]\mr{Sect}(\CC \times \RR_{\mr{dR}}, (U\otimes K_\CC^{1/2}) \ham \fg)$  & \\ \cline{3-5}
 && \multirow{2}*{\hyperref[sect:3d_4_B_twist] {Rank $2$ (B)}} & {BF Theory coupled to a symplectic boson} & \multirow{2}*{3 (topological)} \\
 &&& $\map(\RR^3_{\mr{dR}}, U \ham \fg)$ & \\ \cline{3-5}
 && \multirow{2}*{\hyperref[sect:3d_4_A_twist] {Rank $2$ (A)}} & {Perturbatively trivial } & \multirow{2}*{3 (topological)} \\
 &&& $\mr{Sect}(\CC \times \RR_{\mr{dR}}, (U\otimes K_\CC^{1/2}) \ham \fg)_{\mr{dR}}$ & \\ \hline
  \end{tabular}
 \caption{Twists of Supersymmetric Yang--Mills Theories with gauge Lie algebra $\fg$ with a hypermultiplet valued in a symplectic representation $U$ (8 supercharges).}
 \label{table_of_twists_8}
\end{table}

\begin{table}[hbp]
 \centering
 \begin{tabular}{c|>{\centering}m{5ex}|c|>{\centering}m{65ex}|c}
 $d$ & $\mc N$ & Twist & Description & Invariant Directions \\
 \hline
 \multirow{2}*{4} & \multirow{2}*{$1$} & \multirow{2}{*}{\hyperref[sect:4d1holomorphictwist] {Rank $(1,0)$}} & Holomorphic BF Theory coupled to $R$-matter & \multirow{2}*{2 (holomorphic)} \\
 &&& $T^*[-1]\mr{Map}(\CC^2, R/\fg)$ & \\ \hline
 \multirow{2}*{3} & \multirow{2}*{$2$} & \multirow{2}*{\hyperref[sect:3dminimaltwist] {Rank $1$}} & {Generalized BF Theory coupled to $R$-matter} & \multirow{2}*{2 (minimal)} \\
 &&&  $T^*[-1]\mr{Map}(\CC \times \RR_{\mr{dR}}, R/\fg)$ & \\ \hline
  \end{tabular}
 \caption{Twists of Supersymmetric Yang--Mills Theories with gauge Lie algebra $\fg$ with a chiral multiplet valued in a representation $R$ (4 supercharges).}
 \label{table_of_twists_4}
\end{table}

\begin{table}[!ht]
 \centering
 \begin{tabular}{c|c|c|c}
 $\mc N$ & Twist & Description & Invariant Directions \\
 \hline
 \multirow{6}*{$(4,4)$} & \multirow{2}*{\hyperref[sect:2d44minimaltwist] {Rank $(1,0)$}} & {Holomorphic BF theory coupled to a holomorphic symplectic boson} & \multirow{2}*{1 (holomorphic)} \\
 && {$T^*[-1]\Sect(\CC, T[1]((U\otimes K_\CC^{1/2}) \ham \gg))$} & \\ \cline{2-4}
 & \multirow{2}*{\hyperref[sect:2d44Btwist] {Rank $(1,1)$ (B)}} & {Topological BF theory coupled to a holomorphic symplectic boson} & \multirow{2}*{2 (topological)} \\
 && {$T^*[-1]\mr{Map}(\RR^2_{\mr{dR}}, U \ham \gg)$} & \\  \cline{2-4}
 & \multirow{2}*{\hyperref[sect:2d44Atwist] {Rank $(1,1)$ (A)}} & {Perturbatively trivial (A-model)} & \multirow{2}*{2 (topological)} \\
 && {$\mr{Map}(\RR^2_{\mathrm{dR}}, (U \ham \gg)_{\mr{dR}})$} & \\  \hline
 \multirow{6}*{$(2,2)$} & \multirow{2}*{\hyperref[sect:2d22minimaltwist] {Rank $(1,0)$}} & {Holomorphic BF theory coupled to $R$ matter} & \multirow{2}*{1 (holomorphic)} \\
 && {$T^*[-1]\mr{Map}(\CC, T[1](R/\gg))$} & \\  \cline{2-4}
  & \multirow{2}*{\hyperref[sect:2d22Btwist] {Rank $(1,1)$ (B)}} & {Topological BF theory coupled to $R$ matter} & \multirow{2}*{2 (topological)} \\
 && {$T^*[-1]\mr{Map}(\RR^2_{\mr{dR}}, R/\gg)$} & \\  \cline{2-4}
 & \multirow{2}*{\hyperref[sect:2d22Atwist] {Rank $(1,1)$ (A)}} & {Perturbatively trivial (A-model)} & \multirow{2}*{2 (topological)} \\
 && {$T^*[-1]\mr{Map}(\CC, (R/\gg)_{\mr{dR}})$} & \\  \hline
 \multirow{2}*{$(\mc N_+, 0)$} & \multirow{2}*{\hyperref[sect:2dN0minimaltwist] {Rank $(1,0)$}} & {Holomorphic BF theory coupled to $\mc N_+ - 2$ free fermions} & \multirow{2}*{1 (holomorphic)} \\
 && {$T^*[-1]\mr{Sect}(\CC, (\fg^{\mc N_+ -2} \otimes K_{\CC}^{1/2}) / \gg)$} & \\  \hline
 \multirow{2}*{$(4,0)$} & \multirow{2}*{\hyperref[sect:2d40minimaltwist] {Rank $(1,0)$}} & {Holomorphic BF theory coupled to a holomorphic symplectic boson} & \multirow{2}*{1 (holomorphic)} \\
 && {$T^*[-1]\mr{Sect}(\CC, (U \otimes K_\CC^{1/2}) \ham \gg)$} & \\  \hline
 \multirow{2}*{$(2,0)$} & \multirow{2}*{\hyperref[sect:2d20minimaltwist] {Rank $(1,0)$}} & {Holomorphic BF theory coupled to $R$ matter} & \multirow{2}*{1 (holomorphic)} \\
 && {$T^*[-1]\mr{Map}(\CC, R/\gg)$} & \\ \hline
 \end{tabular}
 \caption{Twists of Supersymmetric Yang--Mills Theories in two dimensions with gauge group $G$.  When $\mc N=(0,2)$ and $(2,2)$ the theory includes a chiral multiplet valued in a representation $R$. When $\mc N=(0,4)$ and $(4,4)$ the theory includes a hypermultiplet valued in a symplectic representation $U$.  We can promote the supersymmetry to $\mc N=(8,8)$ when $U = T^*\gg$, but no new twists occur.}
 \label{table_of_twists_2d}
\end{table}

\clearpage
\begin{figure}[hbp]
\begin{tikzpicture}
  \matrix (mat) [nodes in empty cells, minimum width=3.5ex, minimum height=3.5ex, column sep=1.8ex,row sep=6ex]{
   \node (corner) {}; & \node (d10) {10}; & \node (d9) {9}; & \node (d8) {8}; & \node (d7) {7}; & \node (d6) {6}; & \node {$\qquad \quad \ 5$}; && \node (d5) {};& \node {$\qquad \ 4$}; && \node (d4) {}; & &\node (d3) {3}; & \node (d2) {}; &  \\
  \node {8}; &    &   & \node[s16] (88) {$(1,0)_A$};  &   &   &   &&   &&&&   &&&   \\
  \node {7}; &    &   &   & \node[s16] (77) {$1_A$};  &   &   &&   &&&&   &&&   \\
  \node {6}; &    &   &   &   & \node[s16] (66) {$(1,1)_A$};  &   &&   &&&&   &&&   \\
  \node {5}; & \node[s16] (105) {$(1,0)$};  & \node[s16] (95) {$1$};  & \node[s16] (85) {$(1,1)$};  & \node[s16] (75) {$2$}; & \node[s16]  (65){$(2,2)$};  & \node[s16] (55B) {$4$}; & \node[s16] (55A) {$2_A$};  &   &&&&   &&&   \\
  \node {4}; &    &   & \node[s16] (84) {$(1,0)_B$};  & \node[s16] (74) {$1_B$};  & \node[s16] (64) {$(1,1)_B$};  & \node[s16] (54) {$2_B$};  && \node[s16] (44B) {$(2,2)_B$}; & \node[s16] (44K) {$(2,1)$}; & \node[s8] (44A) {$(2,0)$}; & \node[s16] (44g) {$(2,2)_A$};  &   &&&   \\
  \node {3}; &    &   &   &   & \node[s8] (63) {$(1,0)$};  & \node[s8] (53) {$1$};  && \node[s8] (43) {$(1,1)$};  &&&& \node[s8] (33B) {$2_B$}; & \node[s8] (33A) {$2_A$}; &  &   \\
  \node {2}; &    &   &   &   &   &   && \node[s4](42) {$(1,0)$};  &&&& \node[s4] (32) {$1$};  &&   \\
  \node (i1) {1}; &   &  &  &  &  &   & &   &&& &   && &   \\};
  
  \draw[thick] (corner.south west) -- (d2.south west);
  \draw[thick] (corner.north east) -- (i1.south east);
  
  \draw[dotted] (0.5,-7) -- (0.5,7);
  \draw[dotted] (6.4,-7) -- (6.4,7);
  \draw[dotted] (-2,-7) -- (-2,7);
  \draw[dotted] (-3.8,-7) -- (-3.8,7);
  \draw[dotted] (-4.8,-7) -- (-4.8,7);
  \draw[dotted] (-6.3,-7) -- (-6.3,7);
  \draw[dotted] (-7.2,-7) -- (-7.2,7);
  
  \draw[arrow] (105) -- (95);
  \draw[arrow] (95) -- (85);
  \draw[arrow] (88) -- (77);
  \draw[arrow] (95) -- (84);
  \draw[arrow] (85) -- (75);
  \draw[arrow] (85) -- (74);
  \draw[arrow] (84) -- (74);
  \draw[arrow] (77) -- (66);
  \draw[arrow] (75) -- (65);
  \draw[arrow] (75) -- (64);
  \draw[arrow] (74) -- (64);
  \draw[arrow] (74) -- (63);
  \draw[arrow] (66) -- (55A);
  \draw[arrow] (65) -- (55B);
  \draw[arrow] (65) -- (54);
  \draw[arrow] (64) -- (54);
  \draw[arrow] (64) -- (53);
  \draw[arrow] (63) -- (53);
  \draw[arrow] (55A) -- (44g);
  \draw[arrow] (55B) -- (44B);
  \draw[arrow] (54) -- (44B);
  \draw[arrow] (54) -- (43);
  \draw[arrow] (53) -- (43);
  \draw[arrow] (53) -- (42);
  \draw[arrow] (44B) -- (33B);
  \draw[arrow] (44A) -- (33A);
  \draw[arrow] (44K) -- (33A);
  \draw[arrow] (43) -- (33B);
  \draw[arrow] (43) -- (32);
  \draw[arrow] (42) -- (32);
  \draw[arrow] (44g) -- (33A);
  
\end{tikzpicture}
\caption{This figure shows the orbits of square-zero supercharges in each dimension, and how they relate to one another under dimensional reduction.  The labels indicate each orbit: the number refers to the rank, and the subscript indicates the situations where the supercharges of a given rank split into multiple orbits.  Each column is labelled by a dimension, and each row by the number of invariant directions of the supercharge.  Colours indicate the maximal supersymmetry algebra where the given supercharges live, so red indicates supercharges defined in algebras with 16 supercharges, orange those with 8 supercharges, and yellow those with 4 supercharges.  There is an arrow whenever one twist dimensionally reduces to another twist one dimension lower.}
\label{fig:superchargeorbits}
\end{figure}
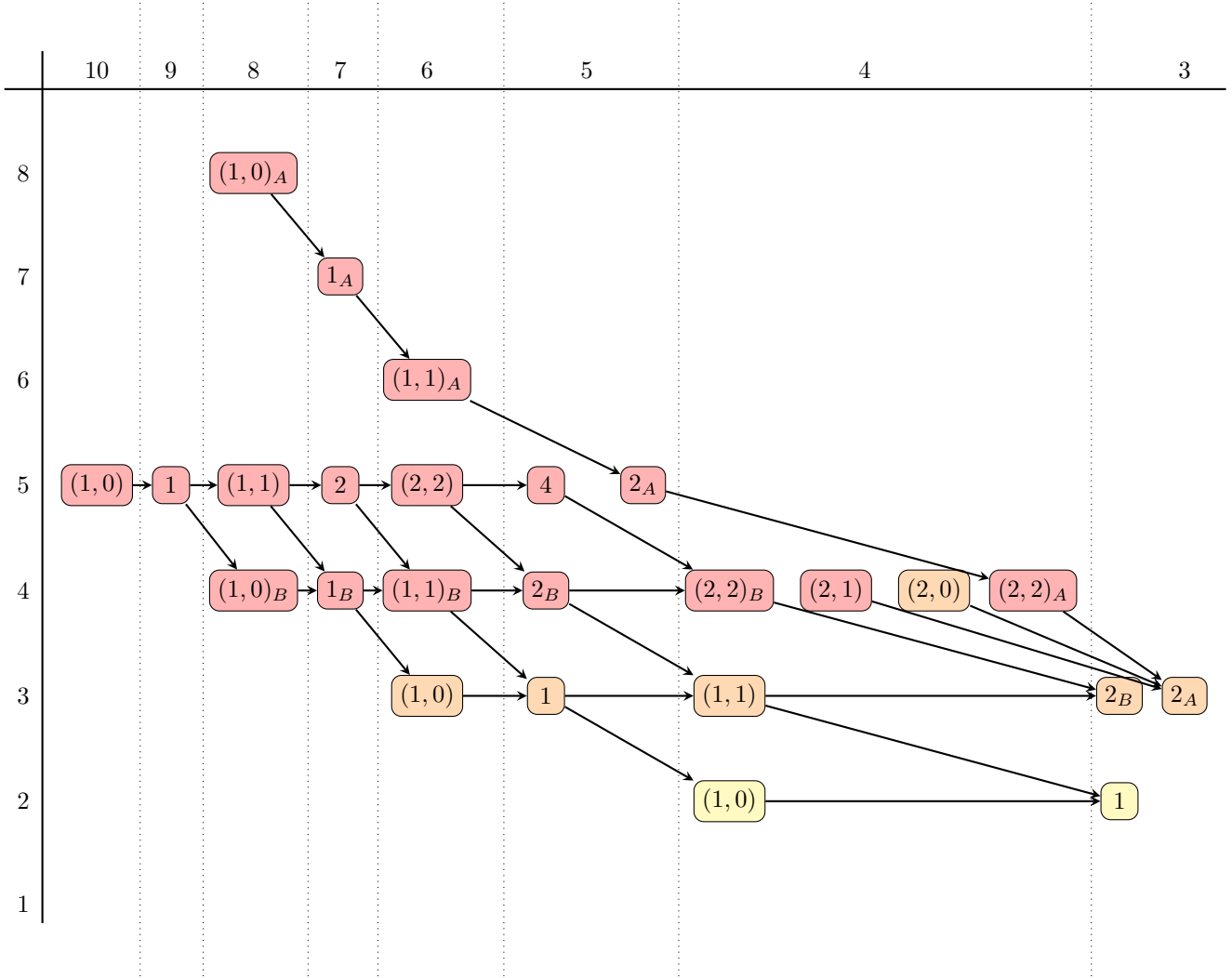

\subsection*{Outline of the Paper}
The remainder of the paper is divided into two parts.  In Part \ref{formalism_part} we set up the formalism that we will use when we study supersymmetric gauge theories and their twists.  The first main ingredient is the Batalin-Vilkovisky formalism for classical field theory (Section \ref{BV_section}).  The other main ingredient is the systematic study of supersymmetry algebras and supersymmetric action functionals using normed division algebras (Section \ref{sect:susy}).  We use this formalism to prove in Section \ref{sect:SYM} that super Yang--Mills theories with matter in dimensions 10, 6, 4 and 3 are in fact supersymmetric, meaning that there is a well-defined $L_\infty$ action of the supersymmetry algebra on the classical BV theory in question.  We introduce the idea of dimensional reduction (Section \ref{dim_red_section}) for classical field theories to show that supersymmetry action are well-defined in lower dimensions.

In Part \ref{classification_part} of the paper, we produce the classification of supersymmetric Yang--Mills theories in dimensions 2 to 10 systematically.  We start with dimension 10 and work down by dimensional reduction.  Each subsection is divided by the number of supersymmetries, and the orbits of square-zero supercharges by which we can twist.  Twisted theories are characterized up to perturbative equivalence, including the residual Lorentz symmetry acting on each twisted theory.

\subsection*{Acknowledgements}
We would like to thank Kevin Costello, Owen Gwilliam, Justin Hilburn and Philsang Yoo for helpful discussions during the preparation of this paper.  
We are also grateful to Dylan Butson for kind discussion of his related forthcoming work, specifically in reference to supersymmetric theories in dimension $6$.  
The research of C.E. on this project has received funding from the European Research Council (ERC) under the European Union's Horizon 2020 research and innovation programme (QUASIFT grant agreement 677368).
The research of P.S. was supported by the NCCR SwissMAP grant of the Swiss National Science Foundation.
The research of B.R.W. was supported by Northeastern University and National Science Foundation Award DMS-1645877.

\pagestyle{standard}
\part{Supersymmetric Gauge Theory} \label{formalism_part}

\section{The BV-BRST Formalism} \label{BV_section}

In this section we will set up the homological formalism in which we study classical field theory: the BV-BRST formalism.  Much of the material in this section is not original.  We refer the reader to \cite{CostelloBook, Book2} for more details on this perspective.  We will conclude the section by describing a number of fundamental examples of classical field theories that are highly structured: mixed holomorphic-topological theories.  We will also discuss the concept of \emph{dimensional reduction} of a classical field theory on $M$ along a fibration $M \to N$.  We will use the idea of dimensional reduction to construct many of the supersymmetric field theories which we will consider in the next section.  

\subsection{Conventions}
Throughout the paper we will frequently study objects, for instance vector bundles, equipped with a $\ZZ\times\ZZ/2\ZZ$-grading. \emph{Degree} will refer to the first (cohomological) grading and \emph{odd} or \emph{even} to the second (fermionic) grading.  We will write $\Pi E$ to denote $E$ placed in odd $\ZZ/2\ZZ$ degree.  For an element $x$ we denote by $|x|\in\ZZ/2\ZZ$ the total degree.

Given a vector bundle $E\rightarrow M$ we denote by $\cE$ the topological vector space of smooth sections of $E$ and by $\cE_c$ the topological vector space of smooth compactly supported sections.
We denote by $\cO(\cE)$ (respectively $\cO(\cE_c)$) the completed algebra of symmetric functions on $\cE$ (respectively $\cE_c)$. 
We denote by $\oloc(\cE)$ the space of local functionals on $\cE$ (see \cite[Definition 4.5.1.1]{Book2}). An element of $\oloc(\cE)$ will be denoted symbolically by an expression of the form
\[\int_M f (\phi, \phi', \dots),\]
where $f$ is a density on $M$ depending on infinite jets of sections of $E$. Note, however, that the integral here is a formal symbol. 
The space of local functionals can be viewed as a subspace
\[
\oloc(\cE) \subset \cO(\cE_c)
\]
where the integral symbol makes sense in earnest when applied to sections which are compactly supported.
We denote by $\oloc^+(\cE)\subset \oloc(\cE)$ the subspace of local functionals which are at least cubic.

Given two vector bundles $E, F$ on $M$ we can also make sense of the space of local functionals from $E$ to $F$.
By definition, this is 
\[
{\rm Fun}_{\rm loc}(\cE, \cF) = \prod_{n \geq 0} {\rm PolyDiff}(\cE^{\times n}, \cF)_{S_n}
\]
where ${\rm PolyDiff}(\cE^{\times n}, \cF)$ denotes the space of polydifferential operators, and we take coinvariants for the obvious symmetric group action.
When $\cF = \cE$, we refer to ${\rm Fun}_{\rm loc}(\cE, \cE)$ as the space of local vector fields on $E$. 
There is a natural Lie bracket on ${\rm Fun}_{\rm loc}(\cE, \cE)$ and a canonical action of this Lie algebra on local functionals. 

\subsection{Formal Moduli Problems and Classical Field Theories}
\label{sect:FMPs}

The classical BV (Batalin-Vilkovisky) formalism \cite{BatalinVilkovisky} is a model for classical field theory from the Lagrangian perspective.  In brief, the classical BV formalism produces a local model for the critical locus of an action functional, but considered in the derived sense.  That is, given a space $\mc F$ of fields and an action functional with derivative $\d S$, one considers not just the usual locus in $\mc F$ of fields with $\d S(\phi) = 0$, but the derived intersection $\mr{dCrit}(S) = \mc F \cap^h_{T^*\mc F} \Gamma_{\d S}$ of the zero section in $T^*\mc F$ with the graph of $\d S$.  The formalism we describe below can be interpreted as an abstract formalism for modelling the tangent complex at a point to a derived critical locus $\mr{dCrit}(S)$ as a formal moduli problem.

Recall that a formal moduli problem is a functor from connective dg Artinian algebras $(R, \mathfrak{m})$ to simplicial sets which satisfies a derived version of Schlessinger's condition. We refer to \cite{DAGX,PridhamFMP,Toen} for more details.

For instance, if $\fg$ is an $L_\infty$ algebra, we have a formal moduli problem $B\fg$ defined by
\[(B\fg)(R, \mathfrak{m})=\mathrm{MC}(\fg\otimes \mathfrak{m}),\]
where $\mathrm{MC}(\fg\otimes \mathfrak{m})$ is the simplicial set of Maurer--Cartan elements. The main result of \cite{DAGX,PridhamFMP} is that the functor $B$ defines an equivalence of $\infty$-categories between $L_\infty$ algebras and formal moduli problems. The inverse functor sends a formal moduli problem $\cM$ to the $L_\infty$-algebra $T_{\cM, \ast}[-1]$, the shifted tangent complex of $\cM$ at the basepoint. This important result will serve as motivation for the main definition in this section (Definition \ref{def:classicalfieldtheory}).

Let $V$ be a $\fg$-representation. Then we may construct an $L_\infty$ algebra
\[L_{V, \fg}=\fg\oplus V[-1]\]
with the only nontrivial brackets coming from the $L_\infty$ brackets on $\fg$ and the action map of $\fg$ on $V$. We introduce the notation
\[V/\fg:= B L_{V, \fg}.\]

\begin{example}
If $\fg$ is an $L_\infty$ algebra, it has an adjoint representation $\fg$. We define the \defterm{$n$-shifted tangent bundle} of $B\fg$ to be
\[T[n] B\fg = \fg[n+1] / \fg.\]
\label{ex:tangentBg}
\end{example}

\begin{example}
Suppose $\fg$ is an $L_\infty$ algebra which is bounded as a complex and has finite-dimensional graded pieces. Then $\fg^*$ is a coadjoint representation of $\fg$. We define the \defterm{$n$-shifted cotangent bundle} of $B\fg$ to be
\[T^*[n] B\fg = \fg^*[n-1] / \fg.\]
\label{ex:cotangentBg}
\end{example}

\begin{definition}
Let $\fg$ be an $L_\infty$ algebra. A \defterm{$\Gm$-action} on a formal moduli problem $B\fg$ is a weight grading $\fg = \bigoplus_m \fg(m)$ compatible with the $L_\infty$ structure.
\end{definition}

\begin{example}
Suppose $\fg$ is an $L_\infty$ algebra and $V$ is a $\fg$-representation. Then $V/\fg$ carries a $\Gm$-action: the underlying $L_\infty$ algebra $\fg\oplus V[-1]$ carries a grading where $\fg$ has weight $0$ and $V[-1]$ has weight $1$. For instance, $T[n] B\fg$ and $T^*[n] B\fg$ carry $\Gm$-actions.
\end{example}

\begin{example}
Suppose $\fg$ is a dg Lie algebra and $U$ a $\fg$-representation equipped with an $n$-shifted symplectic pairing $U\otimes U\rightarrow \CC[n]$. Consider the dg Lie algebra
\[\fh = \fg\oplus U[-1]\oplus \fg^*[n-2]\]
with the brackets $\fg\otimes \fg\rightarrow \fg$ given by the Lie bracket on $\fg$, $\fg\otimes U\rightarrow U$ given by the $\fg$-action on $U$, $\fg\otimes \fg^*\rightarrow \fg^*$ given by the coadjoint action and $\mu\colon U\otimes U\rightarrow \fg^*[d-1]$ defined by $(\mu(v, w), x)_\fg = ([x, v], w)_U$. The dg Lie algebra $\fh$ carries nondegenerate invariant symmetric pairing of cohomological degree $n-2$ given by pairing $\fg$ and $\fg^*$ and pairing $U$ with itself. We denote
\[U\ham \fg := B\fh.\]
This formal moduli problem is equipped with a $\Gm$-action where $\fg$ has weight 0, $U[-1]$ has weight $1$ and $\fg^*[n-2]$ has weight $2$. This may be thought of as an infinitesimal version of the Hamiltonian reduction of $U$ by the $\fg$-action.
\end{example}

Now suppose $L$ is a local $L_\infty$ algebra on a manifold $M$. For every open subset $U\subset M$ we have a formal moduli problem
\[(B L)(U) = B L(U),\]
i.e. $L$ defines a presheaf $B L$ of formal moduli problems on $M$. The following definition was introduced in \cite[Definition 4.1.3.3]{Book2}.

\begin{definition}
A \defterm{local formal moduli problem on $M$} is a presheaf of formal moduli problems on $M$ represented by a local $L_\infty$ algebra.
\end{definition}

\begin{remark}
In \cite{Book2} an extra assumption of ellipticity is required for the local $L_\infty$ algebras considered. It is only relevant for quantization, which we do not consider in this paper, so for simplicity we will not require ellipticity (though in fact all examples we consider will end up being elliptic).
\end{remark}

Given a local formal moduli problem $\cM=B L$, we may consider the space of local functionals which is defined as
\[\oloc(\cM) := \oloc(\cL[1]).\]
The local $L_\infty$ structure on $\cL$ induces a Chevalley--Eilenberg differential on $\oloc(\cM)$.

\begin{example}
Let $X,Y$ be complex manifolds, let $M$ be a smooth manifold and let $B\fg$ be a formal moduli problem represented by an $L_\infty$ algebra $\fg$. Then we may define the following local formal moduli problem on $X\times Y\times M$. Let $\Omega^{0, \bullet}_X$ be the graded vector bundle of $(0, n)$-forms on $X$, $\Omega^{\bullet, \bullet}_Y$ be the graded vector bundle of $(p, q)$-forms on $Y$ and $\Omega^\bullet_M$ a graded vector bundle of differential forms on $M$. We may then consider a graded vector bundle
\[L = \Omega^{0,\bullet}_X\otimes \Omega^{\bullet, \bullet}_Y\otimes \Omega^\bullet_M\otimes \fg\]
on $X\times Y\times M$. It carries a differential given by the sum $\ol\dd_X + \ol\dd_Y + \d_M + \d_\fg$. It also carries a local $L_\infty$ structure which uses the $L_\infty$ structure on $\fg$ and the wedge product of differential forms. We then define
\[\map(X\times Y_{\mathrm{Dol}}\times M_{\mathrm{dR}}, B\fg) := B L.\]
\label{ex:mappingspace}
\end{example}

\begin{remark}
A smooth complex algebraic variety $X$ gives rise to derived stacks $X_{\mathrm{Dol}}$ and $X_{\mathrm{dR}}$ defined by Simpson \cite{Simpson,PTVV}. So, given smooth complex algebraic varieties $X,Y,M$ and a derived stack $F$ we may consider the mapping stack
\[\map(X\times Y_{\mathrm{Dol}}\times M_{\mathrm{dR}}, F).\]
Example \ref{ex:mappingspace} is an analogous construction in the world of formal moduli problems.
\end{remark}

\begin{example}
Consider $X, Y, M, \fg$ as in Example \ref{ex:mappingspace} and suppose $E$ is a line bundle on $X\times Y\times M$, equipped with a holomorphic structure along $X\times Y$ and a flat connection along $M$. Moreover, assume $B\fg$ carries a $\Gm$-action. We then have a local $L_\infty$ algebra
\[L = \bigoplus_m \Omega^{0,\bullet}_X\otimes \Omega^{\bullet, \bullet}_Y\otimes \Omega^\bullet_M\otimes \fg(m)\otimes E^{\otimes m}\]
on $X\times Y\times M$. We define
\[\Sect(X\times Y_{\mathrm{Dol}}\times M_{\mathrm{dR}}, B\fg\times_{\Gm} L) := B L.\]
\end{example}

As in Examples \ref{ex:tangentBg} and \ref{ex:cotangentBg}, we may define shifted tangent and cotangent bundles of a local formal moduli problem which give more examples.

\begin{prop}
Consider $X,Y,M, \fg$ as in Example \ref{ex:mappingspace}. The local formal moduli problem $T^*[n]\map(X\times Y_{\mathrm{Dol}}\times M_{\mathrm{dR}}, B\fg)$ is isomorphic to the local formal moduli problem
\[\Sect\left(X\times Y_{\mathrm{Dol}}\times M_{\mathrm{dR}}, T^*[n+\dim(X)+2\dim(Y)+\dim(M)] B\fg\times_{\Gm} (K_X\otimes \Dens_M)\right),\]
where $K_X$ is the canonical bundle of $X$ and $\Dens_M$ is the line bundle of densities on $M$.
\label{prop:mapintocotangent}
\end{prop}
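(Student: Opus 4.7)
The plan is to compute the underlying local $L_\infty$ algebras on both sides of the claimed equivalence and exhibit a natural isomorphism between them. Let $L_0 = \Omega^{0,\bullet}_X \otimes \Omega^{\bullet,\bullet}_Y \otimes \Omega^\bullet_M$ be the mixed Dolbeault--de Rham bundle on $X \times Y \times M$ and set $D = \dim X + 2\dim Y + \dim M$, so that $D$ is the total real dimension of the spacetime. By Example~\ref{ex:mappingspace}, the local formal moduli problem $\map(X \times Y_{\mathrm{Dol}} \times M_{\mathrm{dR}}, B\fg)$ is represented by the local $L_\infty$ algebra $L = L_0 \otimes \fg$, and its $n$-shifted cotangent $T^*[n] BL$ is represented by a local $L_\infty$ algebra of the form $L \oplus L^!$ with an appropriate cohomological shift, where $L^! = L^\vee \otimes \Dens_{X \times Y \times M}$ is the $!$-dual bundle (the fibrewise linear dual of $L$ tensored with the density line bundle of the product).

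The main technical step is to identify $L^!$ explicitly using Serre and Poincaré dualities on each factor. For the complex factor $X$ we use
\[(\Omega^{0,p}_X)^\vee \otimes \Dens_X \cong K_X \otimes \Omega^{0, \dim X - p}_X,\]
obtained by writing $\Dens_X \cong K_X \otimes \overline{K_X}$ and using the interior contraction isomorphism $\Lambda^p T^{0,1}_X \otimes \overline{K_X} \cong \Omega^{0, \dim X - p}_X$. For the complex factor $Y$ one analogously has $(\Omega^{p_1, p_2}_Y)^\vee \otimes \Dens_Y \cong \Omega^{\dim Y - p_1, \dim Y - p_2}_Y$; here no canonical bundle enters since $\Dens_Y \cong K_Y \otimes \overline{K_Y}$ already absorbs both top forms. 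For the real factor $M$, Poincaré duality combined with the orientation gives $(\Omega^c_M)^\vee \otimes \Dens_M \cong \Omega^{\dim M - c}_M$.

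Combining the three identifications and reindexing the form degrees via $a \mapsto \dim X - a$, $b_i \mapsto \dim Y - b_i$, $c \mapsto \dim M - c$, one computes that $L^!$ is isomorphic to $K_X \otimes L_0 \otimes \fg^*$ shifted in cohomological degree by $D$, accounting for one shift per independent form direction on $X$, $Y$, and $M$. Incorporating the further shift that the cotangent construction imposes on the dual sector, and then using the Poincaré-dual rewriting $\Omega^{\dim M - c}_M \cong (\Omega^c_M)^\vee \otimes \Dens_M$ to re-express the resulting bundle, the antifield sector takes precisely the form
\[L_0 \otimes K_X \otimes \Dens_M \otimes \fg^*[n + D - 2].\]
This matches exactly the underlying bundle obtained from the $\Sect$ construction on the right: the weight-$1$ piece $\fg^*[n+D-2]$ of the $L_\infty$ algebra $\fg \oplus \fg^*[n+D-2]$ underlying $T^*[n+D] B\fg$ is $\Gm$-twisted by the line bundle $K_X \otimes \Dens_M$ before being tensored with $L_0$.

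Finally, one checks that the $L_\infty$ structures are intertwined: the differential $\bar\partial_X + \bar\partial_Y + \partial_Y + d_M$ and the brackets induced by $\fg$ and the wedge product commute with the Serre/Poincaré contractions by naturality, so the isomorphism of underlying bundles lifts to an isomorphism of local $L_\infty$ algebras. The main technical obstacle is the careful bookkeeping of the many cohomological shifts and the dualizing data: one must correctly identify $K_X$ and $\Dens_M$ as the ``canonical twist'' measuring the failure of $X$ and $M$ respectively to be Calabi--Yau, while recognizing that $Y$ in the Dolbeault setting is inherently Calabi--Yau (its top form $\Omega^{\dim Y, \dim Y}_Y$ being a volume form) and therefore contributes no twist to the line bundle appearing in the Sect construction.
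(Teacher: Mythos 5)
Your argument is correct and is essentially the paper's own proof: both reduce the claim to the factor-wise identifications $(\Omega^{0,\bullet}_X)^!\cong \Omega^{0,\bullet}_X\otimes K_X[\dim(X)]$, $(\Omega^{\bullet,\bullet}_Y)^!\cong \Omega^{\bullet,\bullet}_Y[2\dim(Y)]$ and $(\Omega^\bullet_M)^!\cong \Omega^\bullet_M\otimes\Dens_M[\dim(M)]$, i.e.\ Serre duality on $X$, self-duality of the Dolbeault factor $Y$, and Poincar\'e duality on $M$. The only wobble is your handling of the $M$-factor, where you first invoke an orientation to drop $\Dens_M$ and then reinstate it by ``re-expressing''; it is cleaner to state the duality directly as $(\Omega^c_M)^\vee\otimes\Dens_M\cong\Omega^{\dim(M)-c}_M\otimes\Dens_M\otimes(\Omega^{\dim(M)}_M)^{-1}$ without assuming orientability, which is the level of precision the stated proposition requires.
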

\begin{proof}
The claim follows from the following isomorphisms of graded vector bundles:
\begin{align*}
\left(\Omega^{0, \bullet}_X\right)^!&\cong \Omega^{0, \bullet}_X\otimes K_X[\dim(X)] \\
\left(\Omega^{\bullet, \bullet}_Y\right)^!&\cong \Omega^{\bullet, \bullet}_Y[2\dim(Y)] \\
\left(\Omega^\bullet_M\right)^!&\cong \Omega^\bullet_M\otimes \Dens_M[\dim(M)].
\end{align*}
\end{proof}

\begin{corollary}
Suppose $X,Y,M,\fg$ are as in Example \ref{ex:mappingspace} and, moreover, that $X$ is equipped with a holomorphic volume form and $M$ is oriented. Then
\[T^*[n]\map(X\times Y_{\mathrm{Dol}}\times M_{\mathrm{dR}}, B\fg)\cong \map(X\times Y_{\mathrm{Dol}}\times M_{\mathrm{dR}}, T^*[n+\dim(X)+2\dim(Y)+\dim(M)] B\fg).\]
\label{cor:mapintocotangent}
\end{corollary}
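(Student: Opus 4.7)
The plan is to combine Proposition \ref{prop:mapintocotangent} with trivializations of $K_X$ and $\Dens_M$ afforded by the extra hypotheses, so that the $\Sect$ construction reduces to the $\map$ construction.

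First, I would apply Proposition \ref{prop:mapintocotangent} directly to rewrite the left-hand side as
\[\Sect\!\left(X\times Y_{\mathrm{Dol}}\times M_{\mathrm{dR}},\ T^*[n+\dim(X)+2\dim(Y)+\dim(M)]B\fg\times_{\Gm}(K_X\otimes \Dens_M)\right).\]
It then suffices to exhibit an isomorphism $K_X\otimes \Dens_M\cong \cO$ of line bundles on $X\times Y\times M$ compatible with the holomorphic structure along $X\times Y$ and the flat connection along $M$ that appear in the definition of the $\Sect$ construction.

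For this, I would use the two hypotheses separately. The holomorphic volume form on $X$ is by definition a nowhere-vanishing holomorphic section of $K_X$, and thus furnishes an isomorphism $\cO_X\cong K_X$ of holomorphic line bundles. An orientation on $M$ canonically identifies the density line bundle $\Dens_M$ with the trivial line bundle equipped with the trivial flat connection (equivalently, one uses the orientation to identify $\Dens_M$ with $\Lambda^{\mathrm{top}}T^*M$, which is a module for $\Omega^\bullet_M$ in the appropriate way). Pulling both trivializations back along the projections to $X$ and $M$ gives the required trivialization of $K_X\otimes \Dens_M$.

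Finally, I would observe that for any $\Gm$-equivariant $L_\infty$ algebra $\fh=\bigoplus_m \fh(m)$ and the trivial line bundle $\cO$ (with its trivial structures), the local $L_\infty$ algebra
\[\bigoplus_m \Omega^{0,\bullet}_X\otimes \Omega^{\bullet,\bullet}_Y\otimes \Omega^\bullet_M\otimes \fh(m)\otimes \cO^{\otimes m}\]
coincides, as a differential graded local $L_\infty$ algebra, with $\Omega^{0,\bullet}_X\otimes \Omega^{\bullet,\bullet}_Y\otimes \Omega^\bullet_M\otimes \fh$; hence the $\Sect$ construction with trivial twist coincides with the $\map$ construction of Example \ref{ex:mappingspace}. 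Applying this to $\fh = T^*[n+\dim(X)+2\dim(Y)+\dim(M)]B\fg$ yields the claimed isomorphism. The only step requiring care is the compatibility of the trivialization of $\Dens_M$ with the flat connection used in the $\Sect$ construction, but this is precisely what the orientation provides, so no genuine obstacle appears.
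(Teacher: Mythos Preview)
Your proposal is correct and matches the paper's (implicit) argument: the corollary is stated without proof immediately after Proposition~\ref{prop:mapintocotangent}, the intended deduction being exactly what you wrote---apply the proposition and then use the holomorphic volume form and the orientation to remove the twisting line bundle $K_X\otimes\Dens_M$. One small imprecision: an orientation does not literally trivialize $\Dens_M$, but rather identifies it with $\Omega^{\dim M}_M$, which is what yields $(\Omega^\bullet_M)^!\cong\Omega^\bullet_M[\dim M]$; your parenthetical already points in this direction, so the argument goes through.
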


Given a local formal moduli problem, we may talk about shifted symplectic structures \cite{PTVV} on it. In this paper we will only be interested in a strict notion as follows.

\begin{definition}
Let $\cM$ be a local formal moduli problem on $M$ represented by a local $L_\infty$ algebra $L$. A \defterm{strict $n$-shifted symplectic structure} on $\cM$ is a pairing $\omega\colon L\otimes L\rightarrow \Dens_M[n-2]$ satisfying the following conditions:
\begin{enumerate}
\item It is fiberwise nondegenerate.

\item It is graded skew symmetric.

\item The pairing $\cL_c\otimes \cL_c \rightarrow \CC$ defined by
\[\alpha\otimes \beta \mapsto \int_M \omega( \alpha, \beta)\]
is an invariant pairing on the $L_\infty$ algebra $\cL_c$.
\end{enumerate}
\end{definition}

We can now state a concise definition of a classical field theory in the BV formalism.

\begin{definition}
A \defterm{classical BV field theory} (or, simply, classical field theory) is a local formal moduli problem on the spacetime manifold $M$ equipped with a strict $(-1)$-shifted symplectic structure.
\label{def:classicalfieldtheory}
\end{definition}

Given a local formal moduli problem $\cM = BL$ equipped with a strict $n$-shifted symplectic structure, the space of local functionals $\oloc(\cM)$ is equipped with a Poisson bracket (see \cite[Chapter 5.3]{CostelloBook}) 
\[
\{-,-\}\colon \oloc(\cM) \times \oloc(\cM) \to \oloc(\cM)[-n]
\]
This bracket is a graded version of the so-called Soloviev bracket \cite{Soloviev} defined on the $\infty$-jets, as described in \cite[Section 4]{GetzlerBracket}. 

We explain how to define the Poisson bracket in our context. Write $E = L[1]$ for convenience. First note that there is a linear map
\[
\d_{\mr{dR}} \colon \oloc(\cE) \to {\rm Fun}_{\rm loc}(\cE, \cE^!) 
\]
defined as follows. 
A local functional $F \in  \oloc(\cE)$ can be written as an equivalence class of a sum of densities of the form
\[
D_1(-) \cdots D_n(-) \Omega
\]
where $D_i$ is a differential operator $D_i \colon \cE \to C^\infty_M$ and $\Omega$ is a density on $M$. 
Without loss of generality, suppose $F$ is of this form. 
Then, we can view $F$ as a functional in $\cO(\cE_c)$ by the assignment
\[
\phi \mapsto \int_M D_1(\phi) \cdots D_n(\phi) \Omega
\]
where $\phi$ denotes a compactly supported section. 
Define the symmetric multilinear map
\begin{align*}
\d_{\mr{dR}} F  \colon  \cE_c^{\times (n-1)} & \to  \cE^\vee \\
 (\phi_1, \ldots, \phi_{n-1}) & \mapsto  D_1(\phi_1) \cdots D_{n-1}(\phi_{n-1}) D_{n} (-) + \{{\rm symmetric\;terms}\} .
\end{align*}
Integrating by parts, we see that for any $(n-1)$-tuple $(\phi_1, \ldots, \phi_{n-1}) \in \cE_c^{n-1}$ the linear functional $(\d_{\mr{dR}} F) (\phi_1,\ldots, \phi_{n-1})$ is an element of $\cE^!$. 
This implies that $\d_{\mr{dR}} F \in {\rm Fun}_{\rm loc}(\cE, \cE^!)$.

The non-degenerate pairing $\omega$ determines a bundle isomorphism $\omega \colon E \cong E^! [n]$ and hence an isomorphism of local functions
\[
\omega \colon {\rm Fun}_{\rm loc}(\cE, \cE^!) \cong {\rm Fun}_{\rm loc}(\cE, \cE[-n]) .
\]
We recognize the right hand side as the space of local vector fields placed in a shifted cohomological degree.
In total, we see that a local functional $F$ determines a local vector field by applying this isomorphism to $\d_{\mr{dR}}F$:
\[
X_F := \omega \circ \d_{\mr{dR}} (F) \in  {\rm Fun}_{\rm loc}(\cE, \cE[-n])  .
\]
This is the Hamiltonian vector field corresponding to $F$.  We can now define the Poisson bracket.

\begin{definition}
The Poisson bracket between local functionals $F, G$ is defined by
\[\{F, G\} = X_F (G).\]
\end{definition}

The Poisson bracket enjoys the graded skew symmetry property
\[\{F, G\} = (-1)^{|F| |G|} \{G, F\}\]
as well as the graded Jacobi identity.

The differential on a local $L_\infty$ algebra $L$ is given by a differential operator $Q_{\mathrm{BV}}\colon L\rightarrow L$. The structure of the $L_\infty$ brackets can be encoded into its potential. In the same way, the structure of a local $L_\infty$ algebra $L$ together with an $n$-shifted symplectic structure on $BL$ may be encoded into the \defterm{action functional} $S\in\oloc(BL)$ of cohomological degree $1-n$ such that
\[S = \frac{1}{2}\int_M\omega(e, Q_{BV} e) + I,\]
where $e\in L$ and $I\in\oloc(BL)$ is at least cubic. Moreover, the action functional satisfies the \defterm{classical master equation}
\[\{S, S\} = 0.\]
We refer to \cite[Proposition 5.4.0.2]{Book2} for this construction.

Given a classical field theory represented by a local $L_\infty$ algebra $L$ on $M$, as in Definition \ref{def:classicalfieldtheory}, we call $E=L[1]$ the \defterm{bundle of BV fields}, and we call the complex $(E, Q_{\mr{BV}})$ the \defterm{classical BV complex}. We call the Poisson bracket on $\oloc(\cM)$ the \defterm{BV bracket}. It will sometimes be convenient to think of a classical field theory as a quadruple $(E, \omega, Q_{\mr{BV}}, I)$ consisting of the bundle of BV fields equipped with a $(-1)$-shifted symplectic pairing $\omega$, a classical BV differential $Q_{\mr{BV}}$ and an interaction functional $I$.  We characterize such data in the following way.

\begin{definition}
A \defterm{free BV theory} on a manifold $M$ is the data of:
\begin{itemize}
\item a finite rank graded vector bundle $E \to M$ equipped with an even differential operator of cohomological degree $+1$
\[
Q_{\mr{BV}} \colon \cE \to \cE [1] 
\]
such that $(1)$: $Q_{\mr{BV}}^2 = 0$ and $(2)$: the pair $(\cE , Q_{\mr{BV}})$ is an elliptic complex;
\item a map of bundles
\[
\omega\colon E \otimes E \to \Dens_M [-1]
\]
that is
\begin{enumerate}
\item[$(1)$] fiberwise nondegenerate,
\item[$(2)$] graded skew symmetric, and
\item[$(3)$] satisfies $\int_M \omega(e_0, Q_{\mr{BV}} e_1) = (-1)^{|e_0|} \int_M \omega(Q_{\mr{BV}} e_0, e_1)$ where $e_i$ are compactly supported sections of $E$ .
\end{enumerate}
\end{itemize}
\end{definition}

The datum of a classical BV field theory as in Definition \ref{def:classicalfieldtheory} is equivalent to the datum of a free BV theory $(E, Q, \omega)$ equipped with an even functional
\[I \in \oloc^+(\cE)\]
of cohomological degree zero satisfying the Maurer-Cartan equation
\[Q_{\mr{BV}} I + \frac{1}{2} \{I,I\} = 0,\]
under the identification of the BV action as
\[S = \frac{1}{2} \int_M \omega(e, Q_{\mr{BV}} e) + I\in \oloc(E).\]

\begin{example} \label{def:cotangent_type}
Let $\cM=BL$ be a local formal moduli problem. Then the $(-1)$-shifted cotangent bundle $T^*[-1]\cM$ carries a natural $(-1)$-shifted symplectic structure. Indeed, $T^*[-1]\cM = B(L\oplus L^![-3])$ and we simply pair $L$ and $L^!$. Classical field theories arising via this construction are called \defterm{cotangent type} theories.
\end{example}

We will also consider $\CC[t]$-families of classical field theories.
\begin{definition} \label{family_of_BV_theories_def}
A \defterm{$\CC[t]$-family of classical field theories} is a graded bundle of locally-free $\CC[t]$-modules $L$ on $M$ equipped with a structure of a $\CC[t]$-linear local $L_\infty$ algebra and a $\CC[t]$-linear $(-1)$-shifted symplectic structure $\omega\colon L\otimes_{\CC[t]} L\rightarrow \CC[t]\otimes \Dens_M[-3]$.
\end{definition}

We will consider $\CC[t]$-families of classical field theories where $L = \CC[t]\otimes L_0$ and the pairing
\[\omega\colon L\otimes_{\CC[t]} L\rightarrow \CC[t]\otimes \Dens_M[-3]\]
comes from a pairing
\[\omega_0\colon L_0\otimes L_0\rightarrow \Dens_M[-3].\]
In this case the local $L_\infty$ structure is encoded in a $t$-dependent action functional $S$.

\begin{remark}
Besides the $\ZZ$-graded classical field theories defined above, we may consider the following variants of the above definition:
\begin{itemize}
\item A $\ZZ\times \ZZ/2\ZZ$-graded local $L_\infty$ algebra is a $\ZZ$-graded local $L_\infty$ algebra $L$ equipped with an extra $\ZZ/2\ZZ$-grading (the \defterm{fermionic grading}) with respect to which all operations are even. An $n$-shifted symplectic structure on a $\ZZ\times\ZZ/2\ZZ$-graded local formal moduli problem $BL$ is a pairing $L\otimes L\rightarrow \Dens_M[n-2]$, which is even with respect to the fermionic grading.

\item A $\ZZ/2\ZZ$-graded classical field theory is defined in the same way as a $\ZZ$-graded classical field theory where we only consider the cohomological grading modulo 2.
\end{itemize}
\end{remark}

\subsection{Perturbative Equivalence of Classical Field Theories}

Next, we formulate the notion of a morphism, and an equivalence, of classical BV theories. 

\begin{definition}
A \defterm{morphism} $\Phi\colon (E, \omega, S)\rightsquigarrow (E', \omega', S')$ of classical field theories over the same manifold $M$ is a collection $\Phi =\sum_{n\geq 1}^\infty \Phi_n$ of poly-differential operators $\Phi_n\colon \Sym^n(E)\rightarrow E'$ that intertwine the pairings $\omega, \omega'$ and define an $L_\infty$ map $\cE[-1]\rightarrow \cE'[-1]$. A morphism is a \defterm{perturbative equivalence} if the map $\Phi_1\colon (\cE, Q_{BV})\rightarrow (\cE', Q'_{BV})$ is a quasi-isomorphism. A classical field theory is \defterm{perturbatively trivial} if it is perturbatively equivalent to the zero theory ($E = 0$).
\label{def:perturbativeequivalence}
\end{definition}

The interpretation of this definition is that $\Phi$ is a non-linear map between the bundles of BV fields, and $\Phi_n$ is its $n^{\text{th}}$ Taylor coefficient.

We will now describe two primitive examples of equivalences of classical field theories that will be useful in simplifying twisted theories. 
First, we consider the process of eliminating an auxiliary field.

\begin{prop}\label{prop:integrateoutfield}
Fix a volume form $\dvol_M$ on $M$. Suppose $(E, \omega, S)$ is a classical field theory, where $E\cong E_0\oplus (\cO_M\oplus \Dens_M[-1])$ with the symplectic pairing $\omega$ given by a sum of a symplectic pairing $\omega_0$ on $E_0$ and the standard symplectic pairing on the second summand. Denote by $\phi$ a section of $\cO_M$ and by $\phi^*$ a section of $\Dens_M[-1]$. Suppose the BV action is
\[S = S_0 + \frac{1}{2} \int \dvol_M (\phi^2 - 2\phi S_1),\]
where $S_0$ is a local functional independent of $\phi,\phi^*$ and $S_1$ is a $\cO_M$-valued polydifferential operator which is independent of $\phi$. 
Then the theory $(E, \omega, S)$ is perturbatively equivalent to the theory $(E_0, \omega_0, S')$ with the BV action $S' = S_0 - S_1^2/2$, where we set $\phi = S_1$ and $\phi^* = 0$.
\end{prop}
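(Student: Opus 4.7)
The natural candidate for an equivalence is the nonlinear field redefinition suggested by the statement:
\[\Phi\colon \cE_0 \rightsquigarrow \cE, \qquad e_0 \longmapsto \bigl(e_0,\; S_1(e_0),\; 0\bigr) \in E_0 \oplus \cO_M \oplus \Dens_M[-1].\]
Its Taylor expansion around the zero section gives polydifferential components $\Phi_n\colon \Sym^n \cE_0 \to \cE$: the linear part $\Phi_1$ is the inclusion $e_0 \mapsto (e_0, 0, 0)$ plus the linear part of $S_1$ in the $\cO_M$-slot, and each higher $\Phi_n$ places the $n$th Taylor coefficient of $S_1$ in the $\cO_M$-slot with zero in the $E_0$- and antifield-slots. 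Polydifferentiability of the $\Phi_n$ is immediate from polydifferentiability of $S_1$.

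\textbf{Verification.} Two direct calculations encode compatibility of $\Phi$ with the classical BV data. A pointwise substitution yields
\[S\bigl(\Phi(e_0)\bigr) \;=\; S_0(e_0) + \tfrac12 \int \dvol_M\bigl(S_1(e_0)^2 - 2\,S_1(e_0)\cdot S_1(e_0)\bigr) \;=\; S_0(e_0) - \tfrac12\int \dvol_M\, S_1(e_0)^2 \;=\; S'(e_0),\]
and because $\Phi$ has vanishing antifield component while the only non-zero cross-pairing in $\cO_M \oplus \Dens_M[-1]$ is between $\phi$ and $\phi^*$, we have $\Phi^*\omega = \omega_0$. For perturbative equivalence it remains to observe that $\Phi_1$ is a quasi-isomorphism. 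This follows from contractibility of the auxiliary sector: the free part $\tfrac12\int\dvol_M\,\phi^2$ induces a linearized BV differential $\phi^*\mapsto \phi\,\dvol_M$ on $\cO_M \oplus \Dens_M[-1]$, which is an isomorphism of rank-one graded bundles via $\dvol_M$, and this acyclicity is stable under the interaction $-\int\dvol_M\,\phi\,S_1(e_0)$ since the latter contributes only in strictly higher filtration degree when we filter by total polynomial order in the auxiliary coordinates.

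\textbf{Main obstacle.} The only step with content beyond the formal calculations above is translating the two scalar identities $S\circ\Phi = S'$ and $\Phi^*\omega = \omega_0$ into the infinite tower of $L_\infty$-morphism equations required by Definition \ref{def:perturbativeequivalence}. This translation is the standard dictionary between classical BV theories viewed as formal moduli problems with shifted symplectic structure and the same data encoded by an action functional satisfying the classical master equation (the dictionary used around \cite[Proposition 5.4.0.2]{Book2} in the setup): Taylor expanding the identity of Hamiltonian vector fields $\Phi_*\{S',-\} = \{S,-\}\Phi_*$ produces the $L_\infty$-morphism relations one polynomial order at a time, and the actual check is combinatorial Koszul-sign bookkeeping with no new conceptual ingredient.
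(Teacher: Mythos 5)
Your proposal is correct and takes essentially the same route as the paper: your map $e_0\mapsto (e_0, S_1(e_0), 0)$ is precisely the quasi-inverse $\Psi$ that the paper writes down (with the linear and higher Taylor coefficients of $S_1$ feeding the $\cO_M$-slot), and you verify the same two identities — $S\circ\Psi = S'$ and compatibility of the pairings via isotropy of the auxiliary slot. Your extra remarks on why $\Phi_1$ is a quasi-isomorphism (acyclicity of the two-term complex $\cO_M\xrightarrow{\dvol_M}\Dens_M$, stable under the interaction by a filtration argument) make explicit what the paper relegates to the remark following the proposition.
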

\begin{proof}
Concretely, suppose that the linear part of $S_1$ is given by an operator $Q_1$, and that the interacting part of $S_1$ is given by a functional $I_1 = \sum_{n=1}^\infty I_1^{(n)}$.  
The desired equivalence $\Phi \colon (E, \omega, S) \to (E_0, \omega_0, S')$ is given by the natural projection $\Phi = \Phi_1 \colon E \to E_0$. 
The quasi-inverse $\Psi \colon (E_0, \omega_0, S') \to (E, \omega, S)$ is defined as follows.
First $\Psi_1(e) = (e, Q_1(e), 0) \in E$.
For $n > 1$, define
\begin{align*}
\Psi_n \colon \sym^n(E_0) &\to E \\
e_1 \otimes \cdots \otimes e_n &\mapsto (0, I_1^{(n)}(e_1, \ldots, e_n), 0).
\end{align*}
These $\Psi_n$ manifestly intertwine the pairings $\omega$ and $\omega'$. To see that they intertwine the action functionals, we observe that 
\begin{align*}
S(\Psi(e)) &= S(e, S_1(e), 0) \\
&= S_0(e) + \frac 12 \dvol_M \int (S_1(e)^2 - 2 S_1(e)^2) \\
&= S_0(e) - \frac 12 S_1(e)^2 \\
&= S'(e).
\end{align*}
\end{proof}

\begin{remark}
In terms of the classical BV complex, this proposition has the following interpretation.  We consider theories where the classical BV complex is of the form
\[\xymatrix{
\cdots & \ul{0} & \ul{1} & \cdots \\
\cdots \ar[r] & E_0^0 \ar[r]^{Q_0} \ar@{.>}[dr] & E_0^1 \ar[r] &\cdots \\
&\cO_M \ar^-{{\rm dvol}}[r] \ar@{.>}[ur] &\dens_M. &
}\]
The bottom map multiplies a function by the volume element.
The dotted arrows are induced from $S_1$.
The proposition implies that we can replace this with a quasi-isomorphic cochain complex consisting of only the first line, provided we make a suitable modification of the classical action functional. 
\end{remark}

We may also eliminate pairs of fields as follows.

\begin{prop} 
\label{prop:BRSTdoublet}
Let $(E_0, \omega_0, S_0)$ be a classical BV theory and let $F \to M$ be a graded vector bundle.
Consider the theory $(E, \omega, S)$ with underlying graded vector bundle
\[
E = E_0 \oplus \left(F \oplus F^! [-1]\right) \oplus \left(F^! \oplus F[-1] \right)
\]
whose sections we denote by $e_0 + \phi + \phi^* + \psi + \psi^*$ according to the above decomposition. 
The shifted symplectic form $\omega$ is given by the sum of $\omega_0$ and the standard degree $+1$ pairings between $F, F^! [-1]$ and $F^!, F[-1]$. 
Suppose further that the local functional
\[
S = S_0 + \int \phi \psi^* - \int \phi I_\phi - \int \psi^* I_{\psi^*} - \int \phi^* I_{\phi^*} - \int \psi I_{\psi}
\]
satisfies the classical master equation, where $I_{\phi}, I_{\psi^*}, I_{\phi^*}, I_{\psi}$ are polydifferential operators on fields valued in $F^!$, $F$, $F$, $F^!$ respectively, and which are independent of $\phi$ and $\psi^*$.
Then the classical BV theory $(E, \omega, S)$ is perturbatively equivalent to the BV theory $(E_0, \omega_0, S')$ where $S'$ is given by setting $\phi = I_{\psi^*}, \phi^* = 0$ and $\psi^* = I_{\phi}, \psi = 0$ in the original action functional $S$. 
\end{prop}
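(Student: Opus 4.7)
The plan is to generalize the proof of Proposition \ref{prop:integrateoutfield} from a single scalar auxiliary field to a BRST doublet, eliminating the two pairs $(\phi, \psi^*)$ and $(\phi^*, \psi)$ whose kinetic coupling is the off-diagonal term $\int \phi \psi^*$. As before, I would construct an explicit $L_\infty$ quasi-isomorphism by specifying a linear projection in one direction and a nonlinear section in the other.

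The projection $\Phi \colon (E, \omega, S) \rightsquigarrow (E_0, \omega_0, S')$ would be defined by $\Phi_1 \colon E \to E_0$ killing all four auxiliary components, with $\Phi_n = 0$ for $n \geq 2$. The quasi-inverse $\Psi$ sends $e \in E_0$ to the element of $E$ with $E_0$-component equal to $e$, with $\phi^*$- and $\psi$-components equal to zero, and with $\phi$- and $\psi^*$-components determined by the coupled fixed-point system
\[
\phi \;=\; I_{\psi^*}(e, \phi, 0, 0), \qquad \psi^* \;=\; I_\phi(e, 0, 0, \psi^*).
\]
One reads off the Taylor coefficients $\Psi_n \colon \Sym^n(E_0) \to E$ order-by-order; because $I_\phi$ and $I_{\psi^*}$ are polydifferential operators that strictly raise polynomial degree upon iteration, every $\Psi_n$ is a well-defined polydifferential operator.

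That $\Psi$ intertwines the shifted symplectic pairings is immediate, since the image of $\Psi$ lies on the Lagrangian $\phi^* = 0$, $\psi = 0$, on which the degree $+1$ pairings between $F$ and $F^![-1]$ and between $F^!$ and $F[-1]$ restrict to zero. That $\Psi$ intertwines the action functionals follows by direct substitution: the terms $\int \phi^* I_{\phi^*}$ and $\int \psi I_\psi$ vanish because of their explicit $\phi^*$ and $\psi$ prefactors, while the remaining three terms $\int \phi \psi^* - \int \phi I_\phi - \int \psi^* I_{\psi^*}$ collapse under the substitution $\phi = I_{\psi^*}$, $\psi^* = I_\phi$ to exactly the functional $S'$ described in the statement. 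The underlying quasi-isomorphism on classical BV complexes holds because the free BV differential on the four auxiliary components, induced by the kinetic term $\int \phi \psi^*$, realizes two contractible BRST doublets.

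The main obstacle I anticipate is verifying that $\Psi$ is a genuine $L_\infty$ morphism, and hence that $S'$ satisfies the classical master equation. This should follow from evaluating $\{S, S\} = 0$ along the image of $\Psi$: the defining conditions of the image are precisely the Euler--Lagrange equations for the auxiliary fields, so the Hamiltonian vector fields $X_S$ and $X_{S'}$ must be $\Psi$-related on this locus. Alternatively, one could proceed iteratively by first eliminating the BRST doublet $(\phi, \psi^*)$ via a mild generalization of Proposition \ref{prop:integrateoutfield} that accommodates off-diagonal quadratic kinetic terms, and then eliminating the remaining doublet $(\phi^*, \psi)$ in the same way.
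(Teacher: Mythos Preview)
Your approach is essentially the same as the paper's: both define $\Phi$ as the projection onto $E_0$ and build the quasi-inverse $\Psi$ by setting $\phi^*=0$, $\psi=0$ and solving for $\phi,\psi^*$, then verify compatibility with the pairing (isotropic image) and with the action (direct substitution). One simplification you are missing: the hypothesis states that $I_\phi$ and $I_{\psi^*}$ are \emph{independent of $\phi$ and $\psi^*$}, so once you impose $\phi^*=0$, $\psi=0$ they depend on $e_0$ alone. There is no ``coupled fixed-point system'' to solve iteratively; the substitution $\phi=I_{\psi^*}(e)$, $\psi^*=I_\phi(e)$ is explicit, and the Taylor components $\Psi_n$ are just $I^{(n)}_{\psi^*}$ and $I^{(n)}_\phi$ evaluated on $E_0$-inputs. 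With this observation your proof matches the paper's exactly, and the residual worry you flag about verifying the $L_\infty$ morphism condition is handled at the same level of rigor in the paper.
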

\begin{proof}
Concretely, we will write $\sum_{n \ge 1} I^{(n)}_\phi$ and $\sum_{n \ge 1} I^{(n)}_{\psi^*}$ for the Taylor expansions of $I_\phi$ and $I_{\psi^*}$ respectively.  
The desired equivalence $\Phi \colon (E, \omega, S) \to (E_0, \omega_0, S')$ is given by the natural projection $\Phi = \Phi_1 \colon E \to E_0$. 
The quasi-inverse $\Psi \colon (E_0, \omega_0, S') \to (E, \omega, S)$ is defined as follows.
The linear term is $\Psi_1(e) = (e, I^{(1)}_{\phi}(e),0,0, I^{(1)}_{\psi^*}(e))$, and for $n > 1$ we have 
\[\Psi_n(e_1\otimes \cdots \otimes e_n) = (0, I^{(n)}_{\phi}(e_1, \ldots, e_n), 0, I^{(n)}_{\psi^*}(e_1, \ldots, e_n)).\]

The maps $\Psi_n$ manifestly intertwine the pairings on $E_0$ and $E$, since the image of $\Psi_n$ lands in an isotropic summand of the $E_1\oplus E_1^![-1]\oplus E_1^!\oplus E_1[-1]$ part of $E$.  
Also, by construction, the $\Psi_n$ intertwine the action functionals, since
\begin{align*}
S(F(e)) &= S_0(e) + \frac{1}{2} \int_M \omega(I_{\psi^*}(e), I_\phi(e) - I_\phi(e)) + \omega(I_\phi (e), I_{\psi^*}(e) - I_{\psi^*}(e)) \\
&= S'(e).
\end{align*}
\end{proof}

\begin{remark}
For the classical BV theory $(E, \omega, S)$ as in the proposition, the linearized BV differential defines the following cochain complex of fields:
\[\xymatrix{
\cdots & \ul{-1} & \ul{0} & \ul{1} & \ul{2} & \cdots \\
\cdots \ar[r] & E_0^{-1}\ar[r] \ar@{.>}[dr] \ar@{.>}[ddr] &E_0^0 \ar@{.>}[ddr] \ar@{.>}[dr]  \ar[r]  &E_0^1 \ar[r] &E_0^2 \ar[r] &\cdots \\
&&F_\phi \ar^{~}[dr]  \ar@{.>}[ur] &F^!_{\phi*}  \ar@{.>}[ur] && \\
&&F^!_{\psi^*} \ar^{~}[ur]  \ar@{.>}[uur]  &F_\psi  \ar@{.>}[uur] && \\
}\]
where the subscripts match the notation for the fields in the statement above \footnote{Note that we are writing $F$ as if it is concentrated in a single cohomological degree, but the proposition applies for any graded vector bundle as in the statement of the proposition.}.  The top line is the underlying cochain complex of the theory with fields $E_0$. The arrows $F_\phi \to F_\psi$ and $F^!_{\psi^*} \to F^!_{\phi^*}$ are given by the identity. The dotted arrows represent terms in the differential arising from $I_\phi, I_{\psi^*}, I_{\phi^*}, I_{\psi}$. The above proposition implies we can replace this cochain complex of fields with a quasi-isomorphic complex consisting of only the first line, provided we make a suitable modification of the classical action functional. 
\end{remark}
\begin{remark}
We will call the pair $(\phi, \psi)$ satisfying the conditions of the previous proposition a \defterm{trivial BRST doublet}.
\end{remark}

\subsection{Symmetries in the Classical BV Formalism} \label{symmetry_section}
In this section we define what it means for a (super) Lie algebra to act on a classical field theory  (see also \cite[Chapter 11]{Book2} for a related discussion). Let $(E, \omega, S)$ be a classical field theory and $\fg$ a super Lie algebra.  We will define $\gg$-equivariant local observables in the classical field theory by introducing $\gg$-valued background fields into our classical field theory and extending the action functional to a functional that involves these background fields, but still satisfies the classical master equation.  We begin by defining an appropriate version of the Chevalley--Eilenberg cochain complex.

\begin{definition}
The \defterm{Chevalley--Eilenberg complex} for the Lie algebra $\gg$, with coefficients in $\oloc(\cE)$, is defined as follows.  Consider the graded vector space
\[C^\bullet(\fg, \oloc(\cE)) = \bigoplus_n \hom(\wedge^n \fg, \oloc(\cE))[-n]\]
parameterizing multilinear maps $f\colon \fg^{\otimes n}\rightarrow \oloc(\cE)$ that satisfy the antisymmetry property
\[f(x_1, \dots, x_i, x_{i+1}, \dots, x_n) = (-1)^{|x_1||x_2|+1} f(x_1, \dots, x_{i+1}, x_i, \dots, x_n)\]
where $x_j \in \fg$.  The Chevalley-Eilenberg differential is given, following the sign conventions of \cite{SafronovCoisoInt}, by the formula
\[(\d_{\mr{CE}} f)(x_1, \dots, x_n) = \sum_{i < j}(-1)^{|x_i| \sum_{p=1}^{i-1} |x_p| + |x_j| \sum_{p=1,p\neq i}^{j-1} |x_p| +i+j+|f|} f([x_i, x_j], x_1, \dots, \widehat{x}_i, \dots, \widehat{x}_j, \dots, x_n).\]
The complex is additionally equipped with a degree $+1$ BV bracket via the formula
\[\{f, g\}(x_1, \dots, x_{k+l}) = \sum_{\sigma\in S_{k, l}} \mathrm{sgn}(\sigma) (-1)^{\epsilon+\epsilon_1} \{f(x_{\sigma(1)}, \dots, x_{\sigma(k)}), g(x_{\sigma(k+1)}, \dots, x_{\sigma(k+l)})\},\]
where $S_{k, l}$ is the set of $(k, l)$-shuffles, $\epsilon$ is the usual Koszul sign and
\[\epsilon_1 = |g|k + \sum_{i=1}^k |x_{\sigma(i)}|(l+|g|).\]
\end{definition}

The operator $Q_{\mr{BV}}$ on $\oloc(\cE)$ extends $C^\bu(\fg)$-linearly to an operator on $C^\bullet(\fg, \oloc(\cE))$ by the rule
\[
(Q_{\mr{BV}} f)(x_1,\ldots, x_n) = Q_{\mr{BV}} f(x_1,\ldots, x_n) 
\]
where $f \colon \fg^{\otimes n} \to \oloc(\cE)$. 
The differentials $\d_{\mr{CE}}$ and $Q_{\mr{BV}}$ are compatible in the sense that $(\d_{\mr{CE}} + Q_{\mr{BV}})^2 = 0$ making $C^\bu(\fg, \oloc(\cE))$ into a cochain complex with total differential $\d_{\mr{CE}} + Q_{\mr{BV}}$. 
Via the BV bracket, the shift of this cochain complex $C^\bu(\fg, \oloc(\cE))[-1]$ is a dg Lie algebra.  This shifted cotangent complex will model equivariant local observables in our classical field theory, but to finish defining the $\gg$ action we must define the equivariant version of the classical interaction.  This is defined as follows.

\begin{definition}
\label{infinitesimal_action_def}
Let $(E, \omega, S)$ be a classical field theory. An \defterm{action} of a super Lie algebra $\fg$ on $(E, \omega, S)$ is an element 
\[S_\fg = \sum_{k\geq 0} S_{\fg}^{(k)}\in C^\bullet(\fg, \oloc(\cE))\]
of cohomological degree zero, where $S_\fg^{(k)}$ is a multilinear map $\fg^{\otimes k} \to \oloc(\cE)$, that satisfies the following three conditions:
\begin{itemize}
\item[(a)] $S_\fg^{(0)} = S$.
\item[(b)] For each $k \geq 1$ and $x_1, \ldots, x_k \in \fg$ the local functional $S^{(k)}_\fg (x_1,\ldots, x_k)$ is at least quadratic in the fields.
\item[(c)] $S_\fg$ satisfies the Maurer--Cartan equation:
\[\d_{\mr{CE}} S_\fg + \frac{1}{2} \{S_\fg, S_\fg\} = 0.\]
\end{itemize}
\end{definition}

\begin{remark}
We have defined an action of a Lie algebra on a classical field theory in terms of a Noether current $S_\fg$.
Such data gives rise to an $L_\infty$ action of $\fg$ on the space of fields $\cE$ in the following way.
By the Maurer-Cartan equation, the operator $\d_{\mr{CE}} + \{S_\fg, -\}$ 
defines a differential on the graded vector space $\cO(\fg[1] \oplus \cE)$. 
By assumption that the Noether current is at least quadratic in the fields, we see that this differential defines a family of maps
\[
\fg^{\otimes k} \otimes \cE^{\otimes \ell} \to \cE
\]
combining to give $\cE$ the structure of an $L_\infty$-module for $\fg$.
\end{remark}

\begin{remark}
We have seen that a classical BV theory can also be presented in terms of a BV differential $Q_{\mr{BV}}$ and an interaction $I$ satisfying the Maurer-Cartan equation
\[Q_{\mr{BV}} I + \frac{1}{2} \{I,I\} = 0 .\]
One can also formulate actions of a Lie algebra on a classical theory in these terms. 
The data of an action of a Lie algebra $\fg$ on a classical field theory $(E, \omega, S)$ is equivalent to the choice of a local interaction functional 
\[I_\fg = I + \sum_{k\geq 1} S_{\fg}^{(k)} \text{ in } C^\bullet(\fg, \oloc(\cE)),\]
satisfying the Maurer-Cartan equation
\[(\d_{\mr{CE}} + Q_{\mr{BV}}) I_\fg + \frac{1}{2} \{I_\fg, I_\fg\} = 0.\]
\end{remark}

We may also define actions of supergroups on classical field theories.  The action of a supergroup $G$ is more data than the action of a super Lie algebra $\gg$: it includes the infinitesimal action of the Lie algebra $\gg$, along with an action of $G$ on the fields exponentiating this infinitesimal action.  That is, we make the following definition.

\begin{definition}
\label{group_action_def}
Let $(E, \omega, S)$ be a classical field theory, and let $G$ be a supergroup acting on spacetime $M$. An \defterm{action} of $G$ on $(E, \omega, S)$ is given by the following data:
\begin{itemize}
\item An action of $G$ on $\cE$ compatible with the $G$-action on $M$.

\item An action $S_\fg$ of its super Lie algebra $\fg$ with $S^{(k)}_\fg = 0$ for $k\geq 2$ 
\end{itemize}
These are required to satisfy the following conditions:
\begin{itemize}
\item The $G$-action on $\cE$ preserves the symplectic pairing $\omega$ and the action functional $S$.

\item For every $x\in\fg$, the vector field $X_{{S^{(1)}_\fg}(x)}$ on $\cE$ coincides with the infinitesimal action of $\gg$ on $\cE$.
\end{itemize}
\end{definition}

\begin{remark}
While we allow for $L_\infty$ actions of Lie algebras, we only consider strict actions of Lie groups.
\end{remark}

\subsection{From BRST to BV}
We will now explain how to build classical BV theories from more traditional data: that of the \emph{usual} fields of a classical field theory, together with the usual action functional and the action of gauge transformations.  These data can be packaged into what is known as a BRST theory, where fermionic fields (referred to as ghosts) are introduced to generate the infinitesimal gauge transformations, in the following way.

\begin{definition}
A \defterm{classical BRST theory} on a manifold $M$ consists of the following data:
\begin{itemize}
\item A local formal moduli problem $\cM$ represented by a local $L_\infty$ algebra $L$.
\item A local functional $S_{\mr{BRST}} \in \oloc(\cM)$ of polynomial degree $\geq 2$.
\end{itemize}
Together, these data must satisfy the equation
\[Q_{\mr{BRST}} S_{\mr{BRST}} = 0,\]
where $Q_{\mr{BRST}}$ is the Chevalley--Eilenberg differential defined by the local $L_\infty$ structure on $L$.
\end{definition}

We call $L[1]$ the \defterm{space of BRST fields}.

\begin{remark}
In the most typical examples, the bundle $L[1]$ is concentrated in $\ZZ$-degrees $-1$ and 0.  In this case, sections in degree 0 are thought of as physical fields, and ghosts -- sections in degree $-1$ -- are thought of as generators of the infinitesimal gauge symmetry.  The action of gauge transformations on fields is then encoded by the Lie structure.
\end{remark}

Given a classical BRST theory $\cM$, we may consider the cotangent theory $T^*[-1]\cM$ whose action functional we denote by $S_{\rm anti}$.
Let us denote the pullback of $S_{\mathrm{BRST}}$ along the obvious projection $T^*[-1]\cM\rightarrow \cM$ by the same letter. Then the equation $Q_{\mathrm{BRST}} S_{\mathrm{BRST}} = 0$ in $\oloc(\cM)$ implies that $\{S_{\rm anti}, S_{\mathrm{BRST}}\} = 0$ in $\oloc(T^*[-1]\cM)$.

\begin{definition}
Let $(\cM, S_{\mr{BRST}})$ be a classical BRST theory. The \defterm{associated classical BV theory} is the $(-1)$-shifted cotangent bundle $T^*[-1]\cM$ equipped with its natural $(-1)$-shifted symplectic structure and the action functional $S_{\mathrm{BRST}} + S_{\rm anti}$.
\end{definition}

In the case where $S_{\mr{BRST}} = 0$, the associated BV theory is of cotangent type, as in Example \ref{def:cotangent_type}.

\begin{remark}
In general, multiple BRST theories can give rise to the same BV theory.  A BV theory $(E, \omega, S)$ is of cotangent type as long as there is \emph{some} $F$ with $S_{\mr{BRST}} = 0$ producing the given theory using the construction above.  Theories of cotangent type can still have interesting, non-trivial action functionals, encoded by the $L_\infty$ structure on $F$.
\end{remark}

\begin{remark}
It is common to refer to coordinates on $\cM$ as \defterm{fields} and coordinates along the cotangent direction of $T^*[-1]\cM$ as \defterm{antifields}. Given a field $\phi$ the corresponding antifield will be denoted by $\phi^*$.
\end{remark}

\subsection{Examples of Classical Field Theories}

In this section we give some examples of classical field theories we will use in our classification of twisted supersymmetric field theories. All theories we consider in this section are $\ZZ$-graded.

\subsubsection{Generalized BF Theory} \label{gen_BF_section}

Our first example will generalize the fundamental example of \emph{BF theory} to a not entirely topological context.  Ordinarily, BF theory describes the classical BRST theory on a $d$-manifold $M$ with fields given by a $G$-gauge field $A$ and a $\gg$-valued $(d-2)$-form $B$, with action functional
\[S(A,B) = \int_M \langle B \wedge F_A \rangle.\]
This theory is, in fact, of cotangent type, where the base of the cotangent includes $A$ and its antifield, and the fiber includes $B$ and its antifield.  This basic setup can be generalized to a setting where $M$ need not be entirely topological, and where $\gg$ may be a more general $L_\infty$ algebra, in the following way.

\begin{definition}
Let $X$ and $Y$ be complex manifolds and $M$ a smooth manifold. Fix an $L_\infty$ algebra $\fg$. The \defterm{generalized BF theory} is the classical field theory
\[T^*[-1]\map(X\times Y_{\mathrm{Dol}}\times M_{\mathrm{dR}}, B\fg).\]
\label{def:generalizedBF}
\end{definition}

Let us unpack the definition. Let $d = \dim_\CC(X) + 2\dim_\CC(Y) + \dim(M)$. Then the bundle of BV fields is
\[E = \Omega^{0,\bu}_X \otimes \Omega^{\bu,\bu}_Y \otimes \Omega^\bu_M \otimes \fg[1]\oplus \Omega^{\dim(X),\bu}_X \otimes \Omega^{\bu,\bu}_Y \otimes \Omega^\bu_M\otimes \fg^*[d-2],\]
where we denote the two fields by $A$ and $B$. The BV action is
\[S = \int_{X\times Y\times M} \langle B\wedge (\dbar_X + \dbar_Y + \d_{\dR, M}) A\rangle + \sum_{k\geq 1}\frac{1}{k!} \int_{X\times Y\times M} \langle B\wedge \ell_k(A, \dots, A)\rangle,\]
where $\langle -, -\rangle$ is the natural pairing between $\fg^*$ and $\fg$ and where $\ell_k$ denotes the $k^{\text{th}}$ component of the $L_\infty$ structure on $\fg$.

\begin{example}
For $X=Y=\pt$ and $\fg$ an ordinary Lie algebra we recover the usual topological BF theory with BV action functional
\[S = \int_M \left\langle B\wedge \left(\d_\dR A+ \frac{1}{2}[A\wedge A]\right)\right\rangle.\]
\end{example}

We will see many BF theories as the output when we twist supersymmetric gauge theories.
In fact, a special case of the definition above also arises when twisting theories of matter.  We will refer to as this as a generalized $\beta\gamma$ system, where the definition will extend that of the usual 2d $\beta \gamma$ system. 

\begin{definition}
Let $X$ and $Y$ be complex manifolds and let $M$ be a smooth manifold. Fix a complex vector space $V$. The \defterm{generalized $\beta\gamma$ system} is the classical field theory
\[T^*[-1]\map(X\times Y_{\mathrm{Dol}}\times M_{\mathrm{dR}}, V).\]
\end{definition}

The following is obvious from the definition.

\begin{prop}
Let $\fg$ be a dg Lie algebra, and consider the generalized BF theory on $\RR^{2n_1+2n_2+n_3}$ for Lie algebra $\fg$ with the space of BV fields
\[\cE = T^*[-1]\map(\CC^{n_1}\times (\CC^{n_2})_{\mr{Dol}}\times (\RR^{n_3})_{\mr{dR}}, B\fg).\]
Then it carries an action of $\U(n_1)\times \U(n_2)\times \mathrm{O}(n_3)$ given by the pullback action on differential forms on $\CC^{n_1}\times \CC^{n_2}\times \RR^{n_3}$.
\label{prop:BFrotationaction}
\end{prop}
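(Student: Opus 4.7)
The strategy is to unpack Definition \ref{def:generalizedBF} and verify that each piece of structure is preserved by the stated group action. Let $H=\U(n_1)\times \U(n_2)\times \mathrm{O}(n_3)$ act on $\RR^{2n_1+2n_2+n_3}=\CC^{n_1}\times\CC^{n_2}\times \RR^{n_3}$ via the standard linear embedding into $\GL(2n_1+2n_2+n_3,\RR)$. Since the first two factors act by complex-linear transformations and the third by a real-linear transformation, this $H$-action preserves the product decomposition as well as the complex structures on $\CC^{n_1}$ and $\CC^{n_2}$.

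First I would describe the induced action on the graded bundle of BV fields. Pullback of differential forms is functorial and compatible with the bidegree decomposition on a complex manifold, so $H$ acts on each of $\Omega^{0,\bu}_{\CC^{n_1}}$, $\Omega^{n_1,\bu}_{\CC^{n_1}}$, $\Omega^{\bu,\bu}_{\CC^{n_2}}$, and $\Omega^\bu_{\RR^{n_3}}$. Extending trivially on the $L_\infty$ algebra factors $\fg$ and $\fg^\ast$, this gives an $H$-action on the graded vector bundle
\[
E = \Omega^{0,\bu}_{\CC^{n_1}}\otimes \Omega^{\bu,\bu}_{\CC^{n_2}}\otimes \Omega^\bu_{\RR^{n_3}}\otimes \fg[1] \;\oplus\; \Omega^{n_1,\bu}_{\CC^{n_1}}\otimes \Omega^{\bu,\bu}_{\CC^{n_2}}\otimes \Omega^\bu_{\RR^{n_3}}\otimes \fg^\ast[d-2].
\]

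Next I would verify that this action preserves the BV differential $Q_{\mr{BV}}$. The differential is a sum of the Dolbeault operators on $\CC^{n_1}$ and $\CC^{n_2}$, the de Rham operator on $\RR^{n_3}$, and the internal differential on $\fg$; all four commute with pullback because the action is by holomorphic maps in the complex directions and by smooth maps in the real direction, and the action on $\fg$ is trivial. Similarly, the interaction
\[
I = \sum_{k\geq 1}\tfrac{1}{k!}\int \langle B\wedge \ell_k(A,\dots,A)\rangle
\]
is manifestly $H$-invariant: pullback is a map of graded-commutative algebras for the wedge product, the pairing $\langle -,-\rangle$ and the brackets $\ell_k$ are independent of the spacetime point, and the integral is taken against the canonical $H$-invariant density on $\RR^{2n_1+2n_2+n_3}$ (this is where it matters that the pairing $\omega$ lands in $\Dens_M$ rather than $\Omega^{\mathrm{top}}_M$, so that $\mathrm{O}(n_3)$-invariance holds even for orientation-reversing elements).

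The only point that deserves a remark is the compatibility of pullback with the $(-1)$-shifted symplectic pairing, which is essentially the same computation: the pairing has the form $(A,B)\mapsto \int \langle B\wedge A\rangle$, with the integrand naturally valued in densities, and pullback along a linear isometry of $\CC^{n_1}\times\CC^{n_2}\times \RR^{n_3}$ commutes with wedging and preserves the density. This is the only step with any subtlety, but it presents no real obstacle, and the remaining $L_\infty$-compatibility conditions of Definition \ref{group_action_def} follow immediately from the functoriality of pullback on smooth sections.
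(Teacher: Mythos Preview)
Your proof is correct and follows exactly the approach the paper has in mind: the paper simply states that the proposition ``is obvious from the definition'' and gives no further argument, so your unpacking of Definition~\ref{def:generalizedBF} and verification that the pullback action preserves the bundle, differential, interaction, and $(-1)$-shifted symplectic pairing is precisely the content that the paper leaves implicit.
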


\begin{remark}
In fact, the $\mathrm{O}(n_3)$-action given by the previous proposition extends to a homotopically trivial action in the sense of \cite[Section 2.4]{ElliottSafronov}.
\end{remark}

\subsubsection{Generalized Chern--Simons Theory} \label{gen_CS_section}
The next class of examples of classical BV theories will be generalizations of Chern--Simons theory. Unlike the example of the generalized BF theory, these theories are not generally of cotangent type.

\begin{definition}
Let $X$ and $Y$ be complex manifolds and $M$ a smooth oriented manifold. Fix an $L_\infty$ algebra $\fg$. We assume $X$ is equipped with a holomorphic volume form $\Omega_X \in\Omega^{\dim(X), 0}(X)$ and $\fg$ is equipped with a nondegenerate invariant symmetric pairing $\langle-, -\rangle\colon \fg\otimes\fg\rightarrow \CC[\dim_\CC(X) + 2\dim_\CC(Y) + \dim(M) - 3]$. The \defterm{generalized Chern--Simons theory} is the classical field theory
\[\map(X\times Y_{\mathrm{Dol}}\times M_{\mathrm{dR}}, B\fg),\]
whose underlying local $L_\infty$ algebra
\[\Omega^{0,\bu}_X \otimes \Omega^{\bu,\bu}_Y \otimes \Omega^\bu_M \otimes \fg\]
is equipped with a shifted symplectic structure coming from the nondegenerate pairing on $\fg$ and the wedge product of forms on $X\times Y\times M$.
\label{def:generalizedCS}
\end{definition}

We may also consider a $\ZZ/2\ZZ$-graded version of the above theory where $\fg$ is merely $\ZZ/2\ZZ$-graded. If we assume that $B\fg$ carries a $\Gm$-action, we may define the generalized Chern--Simons theory without choosing a holomorphic volume form on $X$ (see also \cite{GinzburgRozenblyum}).

\begin{definition}
Let $X,Y$ be complex manifolds and $M$ a smooth oriented manifold. Fix an integer $m$ and suppose $X$ is equipped with an $m$-th root of the canonical bundle $K_X^{1/m}$. Let $\fg$ be an $L_\infty$ algebra equipped with a $\ZZ$-grading $\fg=\bigoplus_n \fg(n)$ and equipped with a symmetric pairing $\langle -, -\rangle$ as before, which has weight $m$ with respect to the grading. Then the generalized Chern--Simons theory is the classical field theory
\[\Sect(X\times Y_{\mathrm{Dol}}\times M_{\mathrm{dR}}, B\fg\times_{\Gm} K_X^{1/m}).\]
\label{def:sectChernSimons}
\end{definition}

\begin{example}
For $X=Y=\pt$, $M$ a 3-manifold and $\fg$ an ordinary Lie algebra we recover the usual 3-dimensional Chern--Simons theory with the BV action
\[S = \int_M \left(\frac{1}{2}\langle A\wedge \d_{\dR} A\rangle + \frac{1}{6}\langle A\wedge [A\wedge A]\rangle\right).\]
More generally, if $X=Y=\pt$ and $M$ is any $d$-dimensional manifold where $d$ is odd, we recover $d$-dimensional Chern--Simons theory.  This has the same BV action, where now $A$ is a (not necessarily homogeneous) differential form on $M$.  If $d$ is not 3 this theory is only $\ZZ/2\ZZ$-graded.
\end{example}

\begin{example}
For $Y=M=\pt$, $X$ a Calabi-Yau 3-fold and $\fg$ an ordinary Lie algebra we recover the holomorphic Chern--Simons theory with the BV action
\[S = \int_X \Omega_X\wedge \left(\frac{1}{2}\langle A\wedge \dbar A\rangle + \frac{1}{6}\langle A\wedge [A\wedge A]\rangle\right).\]
As in the previous example, this still makes sense if $X$ is a Calabi-Yau $d$-fold with $d$ odd, as a $\ZZ/2\ZZ$-graded theory.
\end{example}

\begin{example}
Using Corollary \ref{cor:mapintocotangent} we have an isomorphism
\[T^*[-1]\map(X\times Y_{\mathrm{Dol}}\times M_{\mathrm{dR}}, B\fg)\cong \map(X\times Y_{\mathrm{Dol}}\times M_{\mathrm{dR}}, T^*[d-1]B\fg),\]
so the generalized BF theory may be considered as a particular example of a generalized Chern--Simons theory.
\label{ex:CSBF}
\end{example}

\begin{example}
Let $X,Y,M$ be as before and denote $d = \dim_\CC(X) + 2\dim_\CC(Y) + \dim(M)$. Suppose $\fg$ is a dg Lie algebra and $U$ is a $\fg$-representation equipped with a $(d-1)$-shifted symplectic structure $U\otimes U\rightarrow  \CC[d-1]$. Then we may consider the generalized Chern--Simons theory
\[\map(X\times Y_{\mathrm{Dol}}\times M_{\mathrm{dR}}, U \ham \fg).\]
\label{ex:CSHamiltonianreduction}
\end{example}

\begin{example}
Consider the setting of Example \ref{ex:CSHamiltonianreduction}. The formal moduli problem $U\ham \fg$ carries a natural $\Gm$-action which acts on $U$ with weight $1$. With respect to this weight grading, the $(d-1)$-shifted symplectic structure on $U\ham \fg$ has weight $2$, so we may define the corresponding generalized Chern--Simons theory with the space of fields
\[\Sect(X\times Y_{\mr{Dol}}\times M_{\mr{dR}}, U\ham \fg\times_{\Gm} K_X^{1/2}).\]
We will also use the notation
\[\Sect(X\times Y_{\mr{Dol}}\times M_{\mr{dR}}, (U\otimes K_X^{1/2})\ham \fg)\]
for the same theory.
\end{example}

\begin{example}
There is a special case of this, which one might label the ``holomorphic symplectic boson'' \cite[Definition 4.8]{SWSuperconformal} which is the theory
\[\Sect(X, (U\otimes K_X^{1/2})\ham\fg) . \]
\end{example}

As with generalized BF theory, generalized Chern--Simons theory carries a natural rotation action by linear automorphism groups of spacetime.

\begin{prop}
Suppose $\fg$ is a dg Lie algebra equipped with a nondegenerate invariant symmetric pairing of degree $n_1+2n_2+n_3-3$. Consider the generalized Chern--Simons theory
\[\cE = \map(\CC^{n_1}\times (\CC^{n_2})_{\mr{Dol}}\times (\RR^{n_3})_{\mr{dR}}, B\fg).\]
Then it carries an action of $\SU(n_1)\times \U(n_2)\times \SO(n_3)$ given by the pullback action on differential forms on $\CC^{n_1}\times \CC^{n_2}\times \RR^{n_3}$.
\end{prop}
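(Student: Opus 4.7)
The plan is to mimic the proof of Proposition \ref{prop:BFrotationaction} and check that the pullback action of the three groups on differential forms preserves the data defining the generalized Chern--Simons theory, namely the local $L_\infty$ structure and the strict $(-1)$-shifted symplectic pairing built in Definition \ref{def:generalizedCS}. Recall that the underlying local $L_\infty$ algebra is
\[L = \Omega^{0,\bullet}_{\CC^{n_1}}\otimes \Omega^{\bullet,\bullet}_{\CC^{n_2}}\otimes \Omega^\bullet_{\RR^{n_3}}\otimes \fg,\]
with the differential given by $\dbar_{\CC^{n_1}}+\dbar_{\CC^{n_2}}+\d_{\mathrm{dR}}+\d_\fg$ and the higher brackets induced by the wedge product and the $L_\infty$ brackets on $\fg$. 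The symplectic pairing is assembled from the invariant pairing on $\fg$, the wedge product of forms, the holomorphic volume form $\Omega = \d z_1\wedge\cdots\wedge \d z_{n_1}$ on $\CC^{n_1}$, and the standard orientation on $\RR^{n_3}$.

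First I would observe that pullback along any $\CC$-linear automorphism of $\CC^{n_1}$, any $\CC$-linear automorphism of $\CC^{n_2}$, and any $\RR$-linear automorphism of $\RR^{n_3}$ commutes with the Dolbeault and de Rham differentials and is compatible with wedge products; hence $\U(n_1)\times \U(n_2)\times \mathrm{O}(n_3)$ acts on $L$ by local $L_\infty$ automorphisms, exactly as in Proposition \ref{prop:BFrotationaction}. The point where the available symmetry group shrinks is the compatibility with the symplectic pairing: pullback along $g\in \mathrm{GL}(n_1,\CC)$ scales the holomorphic volume form by $\det(g)$, and pullback along $g\in \mathrm{GL}(n_3,\RR)$ scales the orientation by $\mathrm{sgn}(\det(g))$, whereas the pairing on $\Omega^{\bullet,\bullet}_{\CC^{n_2}}$ uses only wedge product and hence is invariant under all of $\U(n_2)$ (in fact $\mathrm{GL}(n_2,\CC)$).

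Hence the pairing is preserved exactly by the subgroup of $\mathrm{GL}(n_1,\CC)\times \mathrm{GL}(n_2,\CC)\times \mathrm{GL}(n_3,\RR)$ whose first factor lies in $\mathrm{SL}(n_1,\CC)$ and whose third factor lies in $\mathrm{GL}^+(n_3,\RR)$. Intersecting with the unitary/orthogonal conditions needed to make the action real and preserve the Hermitian structures used to define $\Omega^{0,\bullet}$ and $\Omega^{\bullet,\bullet}$ as smooth bundles picks out the compact group $\SU(n_1)\times \U(n_2)\times \SO(n_3)$. Thus the action on $L$ preserves the strict $(-1)$-shifted symplectic pairing in the sense of Definition \ref{def:generalizedCS}, and the proposition follows.

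The only conceptual subtlety — hardly an obstacle — is the degree bookkeeping: one must verify that the pairing indeed lands in $\Dens_{\RR^{n_1+n_2+n_3}}[n_1+2n_2+n_3-3-2]$ after combining the pairing on $\fg$ of degree $n_1+2n_2+n_3-3$ with the integration pairings $(\Omega^{0,\bullet}_{\CC^{n_1}})^!\cong \Omega^{0,\bullet}_{\CC^{n_1}}\otimes K_{\CC^{n_1}}[n_1]$ etc.\ used in Proposition \ref{prop:mapintocotangent}; but this is just arithmetic and uses no further input. No part of the argument is genuinely hard; the content of the proposition is the identification of the stabilizer subgroup of the extra data (holomorphic volume form and orientation) that distinguishes Chern--Simons from BF theory.
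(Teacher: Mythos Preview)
Your proof is correct. The paper does not give a proof of this proposition (nor of the analogous Proposition~\ref{prop:BFrotationaction} for BF theory, which is prefaced by ``The following is obvious from the definition''), so you are filling in the straightforward verification that the paper leaves implicit: pullback by linear automorphisms preserves the local $L_\infty$ structure, while the holomorphic volume form on $\CC^{n_1}$ and the orientation on $\RR^{n_3}$ entering the symplectic pairing of Definition~\ref{def:generalizedCS} cut the symmetry down from $\U(n_1)\times \U(n_2)\times \mathrm{O}(n_3)$ to $\SU(n_1)\times \U(n_2)\times \SO(n_3)$.
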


We may slightly enhance the previous proposition if we are in the setting of Definition \ref{def:sectChernSimons}. Define the unitary metalinear group to be
\[\MU(n) = \U(n)\times_{\U(1)} \U(1),\]
where $\U(n)\rightarrow \U(1)$ is the determinant map and $\U(1)\rightarrow \U(1)$ is the map $z\mapsto z^2$. We denote by $\det^{1/2}\colon \MU(n)\rightarrow \U(1)$ the projection on the second factor; this may be thought of as a square root of the determinant representation of $\U(n)$. The natural $\U(n)$-action on $\CC^n$ lifts to an $\MU(n)$-action on the bundle of half-densities $K_{\CC^n}^{1/2}\rightarrow \CC^n$.

\begin{prop}
Suppose $\fg$ is a dg Lie algebra and $U$ a $\fg$-representation equipped with a $(n_1+2n_2+n_3-1)$-shifted symplectic structure. Consider the generalized Chern--Simons theory
\[\Sect(\CC^{n_1}\times (\CC^{n_2})_{\mr{Dol}}\times (\RR^{n_3})_{\mr{dR}}, (U\otimes K_{\CC^{n_1}}^{1/2})\ham\fg).\]
Then it carries an action of $\MU(n_1)\times \U(n_2)\times \SO(n_3)$ given by the pullback action on differential forms on $\CC^{n_1}\times \CC^{n_2}\times \RR^{n_3}$.
\end{prop}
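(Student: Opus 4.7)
The plan is to mimic the proof of the previous proposition, tracking how the metalinear cover enters through the half-density twist $K_{\CC^{n_1}}^{1/2}$. Writing $d = n_1+2n_2+n_3$ and recalling that $U\ham\fg = B\fh$ where $\fh = \fg\oplus U[-1]\oplus \fg^*[d-2]$ with $\Gm$-weights $0$, $1$, $2$ respectively, the underlying local $L_\infty$ algebra of the theory decomposes into three pieces according to $\Gm$-weight:
\begin{align*}
L^{(0)} &= \Omega^{0,\bu}_{\CC^{n_1}}\otimes \Omega^{\bu,\bu}_{\CC^{n_2}}\otimes \Omega^\bu_{\RR^{n_3}}\otimes \fg,\\
L^{(1)} &= \Omega^{0,\bu}_{\CC^{n_1}}\otimes \Omega^{\bu,\bu}_{\CC^{n_2}}\otimes \Omega^\bu_{\RR^{n_3}}\otimes (U[-1])\otimes K_{\CC^{n_1}}^{1/2},\\
L^{(2)} &= \Omega^{0,\bu}_{\CC^{n_1}}\otimes \Omega^{\bu,\bu}_{\CC^{n_2}}\otimes \Omega^\bu_{\RR^{n_3}}\otimes \fg^*[d-2]\otimes K_{\CC^{n_1}}.
\end{align*}

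First, I would define the action of $\MU(n_1)\times \U(n_2)\times \SO(n_3)$ on each summand. The factors $\U(n_2)$ and $\SO(n_3)$ act by pullback on $\Omega^{\bu,\bu}_{\CC^{n_2}}$ and $\Omega^{\bu}_{\RR^{n_3}}$ respectively, exactly as in the proof of the previous proposition, and act trivially on the other tensor factors. The factor $\MU(n_1)$ acts via the composite $\MU(n_1)\to \U(n_1)$ on $\Omega^{0,\bu}_{\CC^{n_1}}$ by pullback, and via the character $\det^{1/2}\colon \MU(n_1)\to \U(1)$ on the line bundle $K_{\CC^{n_1}}^{1/2}$. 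On $L^{(2)}$ the factor $K_{\CC^{n_1}}^{1/2}\otimes K_{\CC^{n_1}}^{1/2}\cong K_{\CC^{n_1}}$ receives the action by $\det$, which descends to an honest $\U(n_1)$-action pulled back via $\MU(n_1)\to \U(n_1)$.

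Next, I would verify compatibility with the local $L_\infty$ structure on $\fh$. All structure maps are built from the wedge product on differential forms (manifestly invariant under the pullback actions) together with the $\Gm$-equivariant structure maps of $\fh$, which only combine weights additively and so pair up $K_{\CC^{n_1}}^{1/2}$-powers to integer powers of $K_{\CC^{n_1}}$ on which the $\MU(n_1)$-action factors through the honest $\U(n_1)$-action on tensor powers of the canonical bundle. Invariance of the shifted symplectic pairing on $\fh$ is checked summand by summand: the pairing of $L^{(1)}$ with itself involves $K_{\CC^{n_1}}^{1/2}\otimes K_{\CC^{n_1}}^{1/2}\cong K_{\CC^{n_1}}$ integrated against $\Omega^{0,\bu}_{\CC^{n_1}}\otimes \Omega^{0,\bu}_{\CC^{n_1}}\to \Omega^{0,\bu}_{\CC^{n_1}}$ to produce a section of $K_{\CC^{n_1}}\otimes \Omega^{0,n_1}_{\CC^{n_1}}$ that can be integrated $\U(n_1)$-equivariantly; the pairing of $L^{(0)}$ with $L^{(2)}$ is similarly $\U(n_1)$-invariant. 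In all cases the two factors of $\det^{1/2}$ collide to the honest determinant representation that acts on densities, so $\MU(n_1)$-invariance of the pairing follows from $\U(n_1)$-invariance of integration on $\CC^{n_1}$.

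There is no real obstacle here; the statement is essentially a bookkeeping refinement of the previous proposition. The only subtlety is that, because the weight-$1$ summand involves the square-root line bundle $K_{\CC^{n_1}}^{1/2}$, the action of $\U(n_1)$ does not lift to this summand: one must pass to the metalinear cover $\MU(n_1)$, which by construction carries the character $\det^{1/2}$ making the lift possible. Once this is in place, all other verifications are formal, and the proof is complete.
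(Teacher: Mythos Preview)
Your argument is correct. The paper itself does not give a proof of this proposition at all: it is stated immediately after the analogous statement for the unsectioned generalized Chern--Simons theory (Proposition \ref{prop:BFrotationaction} and its Chern--Simons analogue), both of which are treated as obvious from the definitions. Your write-up supplies exactly the bookkeeping the paper leaves implicit---decomposing the fields by $\Gm$-weight, identifying where the half-density twist forces the passage from $\U(n_1)$ to $\MU(n_1)$, and checking that all pairings and brackets assemble integer powers of $K_{\CC^{n_1}}$ so that invariance reduces to ordinary $\U(n_1)$-equivariance of integration. This is the right way to unpack the statement, and there is nothing to add.
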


\subsubsection{Generalized Hodge Theory}
Generalized BF theories can be naturally deformed to theories which are perturbatively trivial, but which arise as shadows of non-trivial non-perturbative theories.  These will often appear as topological twists of supersymmetric field theories, the most famous example being the 2d A-model. By a deformation we will mean a $\CC[t]$-family of classical BV theories which reduce to the given theory at $t=0$.

Given an $L_\infty$ algebra $\fg$ we denote by $\fg_{\Hod}$ the $\CC[t]$-linear $L_\infty$ algebra
\[\fg_{\Hod} = \CC[t]\otimes (\fg\oplus \fg[1])\]
with the $L_\infty$ brackets coming from the $L_\infty$ brackets on $\fg$ in the first term, where we consider $\fg[1]$ as the adjoint representation of $\fg$. The differential is given by the original differential on $\fg$ plus the term $t\id$ from the second summand to the first summand. We define $\fg_{\mathrm{dR}}$ to be the value of $\fg_{\Hod}$ at $t=1$. Note that the underlying complex is contractible.

\begin{remark}
The terminology comes from Simpson's Hodge stack \cite{Simpson}: if $X$ is a smooth scheme, one can define a derived stack $X_{\mr{Hod}}$ over $\bb A^1$, where the fiber at $0 \in \bb A^1$ is the Dolbeault stack $X_{\mr{Dol}}$ of $X$ and the fiber at a non-zero point is equivalent to the de Rham stack $X_{\mr{dR}}$ of $X$, which has a contractible tangent complex.
\end{remark}

If $\fg$ carries a nondegenerate invariant symmetric pairing of degree $d$, so does $\fg_{\Hod}$.

\begin{definition} \label{Hodge_family_def}
Let $X$ and $Y$ be complex manifolds and let $M$ be a smooth oriented manifold. Fix an $L_\infty$ algebra $\fg$. We assume $X$ is equipped with a holomorphic volume form $\Omega_X \in\Omega^{\dim(X), 0}(X)$ and $\fg$ is equipped with a nondegenerate invariant symmetric pairing $\langle-, -\rangle\colon \fg\otimes\fg\rightarrow \CC[\dim_\CC(X) + 2\dim_\CC(Y) + \dim(M) - 3]$. The \defterm{generalized Hodge theory} is the $\CC[t]$-family of classical BV theories, as in Definition \ref{family_of_BV_theories_def}, given by the generalized Chern--Simons theory
\[\map(X\times Y_{\mr{Dol}}\times M_{\mr{dR}}, B \fg_\Hod).\]
\end{definition}

\begin{prop}
The $t=0$ specialization of the generalized Hodge theory $\map(X\times Y_{\mr{Dol}}\times M_{\mr{dR}}, B \fg_\Hod)$ is isomorphic to the generalized BF theory $T^*[-1]\map(X\times Y_{\mr{Dol}}\times M_{\mr{dR}}, B \fg)$. The specialization of the generalized Hodge theory at $t\neq 0$ is perturbatively trivial.
\label{prop:Hodgetheoryspecialization}
\end{prop}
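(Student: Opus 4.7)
The plan is to treat the two specializations separately. For $t=0$, the idea is to identify $\fg_\Hod|_{t=0}$ with the shifted tangent $T[1]B\fg$ and then use the pairing on $\fg$ to promote this to a cotangent presentation which yields the generalized BF theory. For $t\neq 0$, the idea is that the underlying chain complex of $\fg_\Hod|_{t\neq 0}$ is acyclic, from which perturbative triviality follows essentially formally.

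For the $t=0$ case, I would begin by unpacking $\fg_\Hod|_{t=0}$: setting $t=0$ kills the differential $\fg[1]\to\fg$, leaving an $L_\infty$ algebra $\fg\oplus\fg[1]$ whose brackets encode the structure of $\fg$ together with the adjoint action of $\fg$ on the shifted module $\fg[1]$. By Example \ref{ex:tangentBg} this is precisely the $L_\infty$ algebra representing $T[1]B\fg$, so the generalized Chern--Simons theory at $t=0$ is identified with $\map(X\times Y_{\mr{Dol}}\times M_{\mr{dR}}, T[1]B\fg)$. Next, the nondegenerate invariant pairing on $\fg$ identifies the adjoint module $\fg$ with an appropriate shift of the coadjoint module $\fg^*$; this promotes $T[1]B\fg$ to a shifted cotangent $T^*[m]B\fg$ together with its canonical shifted symplectic structure. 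Applying Corollary \ref{cor:mapintocotangent}, whose hypotheses are built into Definition \ref{Hodge_family_def} (a holomorphic volume form on $X$ and an orientation on $M$), then transports the mapping-space presentation into $T^*[-1]\map(X\times Y_{\mr{Dol}}\times M_{\mr{dR}}, B\fg)$, which is the generalized BF theory of Definition \ref{def:generalizedBF}. A direct inspection identifies the quadratic kinetic term and the higher brackets of the Chern--Simons action with the BF interaction $\int \langle B\wedge (\ol\dd+\d_{\mr{dR}})A\rangle + \sum_k \tfrac{1}{k!}\int\langle B\wedge \ell_k(A,\ldots,A)\rangle$.

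For the $t\neq 0$ case, a rescaling in $t$ reduces the statement to $t=1$, so the specialization is $\fg_\dR = \fg\oplus\fg[1]$ with differential $\d_\fg + \id_{\fg[1]\to\fg}$. As remarked after Definition \ref{Hodge_family_def}, this complex is contractible: up to the perturbation by $\d_\fg$, it is the mapping cone of $\id\colon\fg[1]\to\fg$. Tensoring a contractible complex with the Dolbeault-de Rham complex of sections of $\Omega^{0,\bu}_X\otimes\Omega^{\bu,\bu}_Y\otimes\Omega^\bu_M$ yields an acyclic cochain complex, so the BV complex $\cE$ of the $t=1$ specialization is acyclic. To conclude perturbative triviality, I would exhibit the zero morphism $(0,0,0)\rightsquigarrow (\cE,\omega,S)$ as a perturbative equivalence in the sense of Definition \ref{def:perturbativeequivalence}: it vacuously intertwines the pairings and the $L_\infty$ structures, and its linear component $0\to\cE$ is a quasi-isomorphism precisely because $\cE$ is acyclic.

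The main delicate step is the bookkeeping in the $t=0$ case: one must carefully track shifts and signs to verify that the identification between $\map(\cdots, T[1]B\fg)$ and $T^*[-1]\map(\cdots,B\fg)$ matches not only the underlying local $L_\infty$ algebras but also the $(-1)$-shifted symplectic structures and the full collection of $L_\infty$ brackets arising from the adjoint action of $\fg$ on $\fg[1]$. Once this dictionary -- set up via the pairing on $\fg$ together with the holomorphic volume form on $X$ and the orientation on $M$ -- is correctly established, the comparison of action functionals is a routine unwinding of definitions. The $t\neq 0$ part, by contrast, is essentially formal, relying only on the contractibility of $\fg_\dR$.
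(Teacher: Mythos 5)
Your proposal is correct and follows essentially the same route as the paper: at $t=0$ one identifies $\fg_\Hod|_{t=0}\cong\fg\oplus\fg[1]$ with a shifted cotangent of $B\fg$ via the invariant pairing and then invokes the map-into-cotangent identification (Example \ref{ex:CSBF}/Corollary \ref{cor:mapintocotangent}) to recover the generalized BF theory, while for $t\neq 0$ acyclicity of $\fg_\Hod$ gives perturbative triviality. Your write-up merely makes explicit a few steps (the intermediate $T[1]B\fg$ and the zero morphism witnessing triviality) that the paper leaves implicit.
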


\begin{proof}
At $t=0$ we get
\[\left.\fg_\Hod\right|_{t=0}\cong \fg\oplus \fg[1]\cong \fg\oplus \fg^*[d-1],\]
where we use the symmetric bilinear pairing on $\fg$ in the second isomorphism. The first claim then follows from Example \ref{ex:CSBF}.

At $t\neq 0$ the $L_\infty$ algebra $\fg_\Hod$ becomes acyclic, which proves the second claim.
\end{proof}

\subsection{Dimensional Reduction} \label{dim_red_section}

In this section we formulate the procedure of dimensional reduction of a classical field theory. Fix a submersion $p\colon M\rightarrow N$ equipped with a fiberwise volume form, i.e. an isomorphism $p^*\Dens_N\cong \Dens_M$.  The idea is that the \emph{dimensional reduction} of a classical field theory on $M$ along the submersion $p$ is the theory obtained by restricting to those fields which are constant along the fibers of $p$.  We will begin with an abstract definition of dimensional reduction, then prove that if $M = N \times \RR^k$, and we consider field theories which are translation invariant along the fiber, then this procedure is well-defined.

\begin{definition}
We say that a classical field theory $(E_N, \omega_N, S_N)$ on $N$ is a \defterm{dimensional reduction} along $p$ of the classical field theory $(E_M, \omega_M, S_M)$ on a manifold $M$ if one is given the data of a linear isomorphism $p^* E_N\cong E_M$ of the bundles of BV fields satisfying the following conditions:
\begin{itemize}
\item The diagram
\[
\xymatrix{
p^* E_N\otimes p^* E_N \ar^{\omega_N}[r] \ar^{\sim}[d] & p^*\Dens_N[-1] \ar^{\sim}[d] \\
E_M\otimes E_M \ar^{\omega_M}[r] & \Dens_M[-1]
}
\]
is commutative.

\item Under the map $p^*\colon \cE_N\rightarrow \cE_M$ we have $p^* S_M = S_N$.
\end{itemize}
\end{definition}

We have an obvious notion of isomorphisms of dimensional reductions: these are linear isomorphisms of classical field theories on $N$ which are compatible with the isomorphisms $p^* E_N\cong E_M$. Thus, the collection of dimensional reductions of a given classical field theory on $M$ forms a groupoid.

\begin{prop}
Suppose $(E_M, \omega_M, S_M)$ is a classical field theory on $M$ and $p\colon M\rightarrow N$ is a homotopy equivalence. Then the groupoid of dimensional reductions of $(E_M, \omega_M, S_M)$ is either contractible or empty.

Suppose $M=N\times \RR$ and choose a translation-invariant density along the $\RR$ direction. If the original classical field theory is translation-invariant along the $\RR$ direction, dimensional reductions exist, i.e. the groupoid is non-empty.
\label{prop:dimensionalreductionunique}
\end{prop}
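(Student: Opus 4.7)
The proposition splits into two independent claims, which I would handle separately.

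For the uniqueness part, the strategy is to show both that automorphism groups in the groupoid are trivial and that any two objects are isomorphic. For triviality of automorphisms, an automorphism of $(E_N, \omega_N, S_N, \iota)$ is a morphism $\phi\colon E_N\to E_N$ of classical BV theories with $\iota\circ p^*\phi = \iota$, forcing $p^*\phi = \id$. Because $p$ is a surjective submersion (as a submersion that is a homotopy equivalence between manifolds), pullback is injective on smooth bundle maps, so $\phi = \id$ uniquely. For connectedness, given two reductions $((E_N^{(i)}, \omega_N^{(i)}, S_N^{(i)}), \iota^{(i)})$ with $i=1,2$, a compatible morphism must satisfy $p^*\phi = (\iota^{(2)})^{-1}\circ\iota^{(1)}$, which is a prescribed smooth bundle isomorphism $p^*E_N^{(1)}\to p^*E_N^{(2)}$ on $M$. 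The task is to descend this map along $p$. Since $p$ is a submersion and a homotopy equivalence, the fibers are (weakly, hence smoothly) contractible, and the pullback functor $p^*$ on the relevant categories of smooth bundles is fully faithful; thus the prescribed map on $M$ is necessarily a pullback from a unique $\phi$ on $N$. The compatibility of this descended $\phi$ with the pairings and action functionals is automatic, because those data on $N$ are themselves determined by pullback from the corresponding data on $M$.

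For the existence part, I would construct the dimensional reduction explicitly in the case $M = N\times\RR$. Since $(E_M,\omega_M,S_M)$ is translation-invariant along $\RR$, the bundle $E_M$ admits a canonical description as $p^*E_N$, where one takes $E_N$ to be the restriction of $E_M$ to the slice $N\times\{0\}$ and uses the translation action to build the identification $\iota\colon p^*E_N \cong E_M$. The chosen translation-invariant density on $\RR$ provides the fiberwise volume form identifying $\Dens_M\cong p^*\Dens_N$, through which the translation-invariant symplectic pairing $\omega_M$ descends to a unique pairing $\omega_N\colon E_N\otimes E_N\to\Dens_N[-1]$ satisfying the required compatibility. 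Finally, writing $S_M$ via a translation-invariant Lagrangian density $L_M\cong L_N\boxtimes dt$, the action functional on $N$ is defined formally as $S_N = \int_N L_N$; this is legitimate because local functionals are formal symbols, so the divergent factor $\int_\RR dt$ is simply dropped. One then checks that the triple $(E_N,\omega_N,S_N)$ is a classical BV theory (the $(-1)$-shifted symplectic condition and the classical master equation follow from the corresponding facts on $M$ together with injectivity of $p^*$), and that the two defining axioms of a dimensional reduction are tautologically satisfied.

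The main obstacle is the descent argument in the first part: articulating precisely why the smoothness together with the hypothesis of $p$ being a homotopy equivalence forces the prescribed bundle isomorphism $(\iota^{(2)})^{-1}\circ\iota^{(1)}$ on $M$ to be constant along fibers of $p$, and hence to come from a bundle map on $N$. The existence part is comparatively straightforward once one accepts the formal nature of local functionals used throughout the paper.
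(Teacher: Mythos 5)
Your overall architecture matches the paper's: the uniqueness half is reduced to descending bundle data along $p$, and the existence half restricts to a slice of $N\times\RR$, uses the translation action to identify $E_M\cong p^*E_N$, descends $\omega_M$ through the fiberwise density, and takes $S_N=p^*S_M$, which is again a local functional because the restricted Lagrangian is independent of the $\RR$-coordinate. The existence argument is essentially the paper's and is fine.

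The gap is in the descent step for connectedness of the groupoid. Your key claim --- that because $p$ is a submersion and a homotopy equivalence its fibers are contractible, and that therefore $p^*$ is \emph{fully faithful} on smooth bundle maps --- fails on both counts. A submersion which is a homotopy equivalence need not have contractible or even connected fibers: take $M=\RR^2\setminus\bigl([0,\infty)\times\{0\}\bigr)$, $N=\RR$, $p(x,y)=x$; both spaces are contractible, but the fiber over $x=1$ is $\RR\setminus\{0\}$. More importantly, even for the projection $p\colon N\times\RR\to N$, whose fibers are contractible, $p^*$ is not fully faithful on smooth bundle maps: a map $p^*E\to p^*F$ over $N\times\RR$ is a smooth $\RR$-family of bundle maps $E\to F$ and need not be constant in the $\RR$-direction, just as $C^\infty(N)\to C^\infty(N\times\RR)$ is injective but far from surjective. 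So the prescribed isomorphism $(\iota^{(2)})^{-1}\circ\iota^{(1)}$ does not descend for the reasons you give. The paper's argument at this point is of a different kind: it invokes homotopy invariance of vector bundles to assert that $p^*$ induces an equivalence between graded vector bundles (with pairing) on $N$ and on $M$ --- a statement at the level of objects up to isomorphism, not a claim that every smooth bundle map over $M$ is pulled back from $N$ --- and then uses injectivity of $\cE_N\to\cE_M$ to pin down $S_N$. A smaller point: the surjectivity of $p$ that you need for injectivity of $p^*$ on bundle maps is also not automatic from ``submersion plus homotopy equivalence'' (an open embedding of a contractible open subset is a counterexample), although it does hold for the projections actually used in the paper.
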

\begin{proof} \textbf{Uniqueness}. We begin by showing that any two dimensional reductions are isomorphic and moreover that such an isomorphism is unique if it exists. Since $p\colon M\rightarrow N$ is a homotopy equivalence, the functor $p^*$ establishes an equivalence between the category of graded vector bundles on $N$ and on $M$. In a similar way, $p^*$ establishes an equivalence between the category of graded vector bundles $E_N$ on $N$ equipped with a nondegenerate pairing $E_N\otimes E_N\rightarrow \Dens_N[-1]$ and a similar category for $M$.

Since $\cE_N\rightarrow \cE_M$ is injective, the condition $p^* S_M = S_N$ uniquely determines $S_N$.

\textbf{Existence}. Now suppose $(E_M, \omega_M, S_M)$ is translation-invariant along the $\RR$ direction. Translation invariance provides the descent datum to construct the bundle of fields $E_N$ on $N$ equipped with a nondegenerate pairing $\omega_N$. Moreover, the restriction of $S_M$ under $\cE_N\hookrightarrow \cE_M$ is independent of the $\RR$ factor by translation invariance, so $S_N=p^* S_M$ is again a local functional.
\end{proof}

\begin{remark}
Therefore, it makes sense to talk about ``the'' dimensional reduction of a classical field theory along the projection $p \colon N \times \RR \to N$: there exists a dimensional reduction which is unique up to a canonical isomorphism.
\end{remark}

We will now describe dimensional reductions of some of the previously discussed BV theories.
Throughout this section we let $X$ and $Y$ be complex manifolds and $M$ be a smooth manifold. 
We focus on BV theories described as formal mapping spaces whose sources are formal manifolds of the form 
\begin{equation}\label{eqn: generalized}
X \times Y_{\rm Dol} \times M_{\rm dR} .
\end{equation}

We first consider the case in which $M$ is of the form $M' \times \RR$ and we reduce along the projection
\[p_{\rm dR} \colon X \times Y \times (M' \times \RR) \to X \times Y \times M' .\]
In this case, we will only need to know the dimensional reduction of generalized Chern--Simons theory.

\begin{prop} \label{CS_to_BF_diml_red_prop}
Fix an $L_\infty$ algebra $\fg$ equipped with a nondegenerate invariant pairing as in Definition \ref{def:generalizedCS} and consider the corresponding generalized Chern--Simons theory
\[\map(X\times Y_{\mr{Dol}}\times (M'\times \RR)_{\mr{dR}}, B\fg).\]
Its dimensional reduction along the projection $p_{\rm dR}$ is equivalent to the generalized BF theory
\[
T^*[-1] {\rm Map}(X\times Y_{\mr{Dol}}\times M'_{\mr{dR}}, B\fg) .
\]
\end{prop}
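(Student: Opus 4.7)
The plan is to match the two sides as local $L_\infty$ algebras, using Example \ref{ex:CSBF} together with Corollary \ref{cor:mapintocotangent} to re-present the target generalized BF theory on the right-hand side as a generalized Chern--Simons theory.

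Set $d := \dim_\CC(X) + 2\dim_\CC(Y) + \dim(M')$. Because the topological factor in the source of the Chern--Simons theory on the left is $M' \times \RR$ of real dimension $\dim(M')+1$, the pairing on $\fg$ required by Definition \ref{def:generalizedCS} lives in cohomological degree $(d+1)-3 = d-2$. Translation-invariant de Rham forms on $\RR$ form the complex $\CC \oplus \CC \cdot dt \cong \CC \oplus \CC[-1]$ with zero differential, so the Chern--Simons theory is translation-invariant along $\RR$ with respect to the standard fiberwise density; Proposition \ref{prop:dimensionalreductionunique} then guarantees that the dimensional reduction exists and is canonically unique. Unpacking definitions, the reduced local $L_\infty$ algebra is
\[\Omega^{0,\bullet}_X \otimes \Omega^{\bullet,\bullet}_Y \otimes \Omega^\bullet_{M'} \otimes (\fg \oplus \fg[-1]),\]
with differential $\dbar_X + \dbar_Y + \d_{\mr{dR}, M'}$ and $L_\infty$ brackets induced diagonally from those on $\fg$, with $\fg[-1]$ carrying the adjoint action.

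By Example \ref{ex:cotangentBg}, the $L_\infty$ algebra underlying $T^*[d-1] B\fg$ is $\fg \oplus \fg^*[d-3]$. The degree-$(d-2)$ pairing on $\fg$ provides a canonical, $\fg$-equivariant isomorphism $\fg[-1] \cong \fg^*[d-3]$, so the reduced local $L_\infty$ algebra above is the one underlying $\map(X \times Y_{\mr{Dol}} \times M'_{\mr{dR}}, T^*[d-1] B\fg)$, which by Corollary \ref{cor:mapintocotangent} (using the holomorphic volume form on $X$ to identify $\Omega^{0,\bullet}_X \cong \Omega^{\dim X, \bullet}_X$) agrees on the nose with $T^*[-1]\map(X \times Y_{\mr{Dol}} \times M'_{\mr{dR}}, B\fg)$. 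The final task is to check that this identification intertwines the $(-1)$-shifted symplectic structures and the action functionals. For the pairing, writing a translation-invariant section as $e = A + B \wedge dt$, fiber integration along $\RR$ forces exactly one factor of $dt$, so the Chern--Simons pairing isolates the cross terms between the $\fg$ and $\fg[-1]$ summands, reproducing the duality pairing $\int_{X \times Y \times M'} \langle B, A\rangle$ of BF theory. The de Rham operator on $M' \times \RR$ restricts to $\d_{\mr{dR}, M'}$ on translation-invariant forms, and term-by-term integration along $dt$ of the Chern--Simons action reproduces the BF action $\int \langle B, (\dbar_X + \dbar_Y + \d_{\mr{dR}, M'}) A\rangle + \sum_{k \ge 2} \tfrac{1}{k!}\int \langle B, \ell_k(A, \ldots, A)\rangle$, up to an integration by parts placing the derivatives on the $A$-argument. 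The principal bookkeeping obstacle is tracking the signs and shifts produced by the duality $\fg \cong \fg^*[d-2]$ combined with the volume-form identification of the Dolbeault complexes; once conventions are fixed the identification is literal, giving the desired equivalence.
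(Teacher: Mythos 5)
Your proposal is correct and follows essentially the same route as the paper's proof: translation-invariant forms on $\RR$ contribute $\CC\oplus\CC[-1]$, turning $\fg$ into $\fg\oplus\fg[-1]$, which the degree-$(d-2)$ pairing identifies with the fields of the cotangent/BF presentation, while the de Rham differential vanishes on invariant forms and the $dt$ cross-terms in the integration pairing reproduce the BF duality pairing. The only cosmetic difference is that the paper simplifies notation by reducing to $X=Y=M'=\pt$ and identifies $\fg^*[-2]\cong\fg$ directly, whereas you work in the general case and route the same identification through Example \ref{ex:CSBF} and Corollary \ref{cor:mapintocotangent}.
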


\begin{proof}
To simplify the notation in the proof, we assume $X,Y,M'=\pt$, though the argument in the general case is identical. Then $\fg$ carries a $(-2)$-shifted pairing $\langle-,-\rangle$. In particular, the generalized BF theory
\[T^*[-1] \map(\pt, B \fg) = T^*[-1] B \fg\]
has the bundle of BV fields $\fg[1] \oplus \fg^*[-2]$. We may identify it with $\fg[1]\oplus \fg$, where the pairing $\omega_N$ pairs the two factors using $\langle-,-\rangle$.

We may identify $p^*(\fg[1]\oplus \fg)\cong \Omega^\bullet_{\RR}\otimes \fg[1]$ as vector bundles on $\RR$. 
Under this identification the integration pairing $\omega_M$ on differential forms reduces to the pairing $\omega_N$. The de Rham differential vanishes on translation-invariant forms, which shows compatibility of dimensional reduction with the differentials $Q_{\mr{BV}}$. Finally, in both cases the interaction term comes from the $L_\infty$ structure on $\fg$.
\end{proof}

Next, we consider dimensional reduction along a holomorphic direction. 
First, we set up some notation.

Let $V_\RR$ be a real vector space equipped with a nondegenerate symmetric bilinear pairing and an orientation. 
The symmetric bilinear pairing trivializes $\det(V_\RR)^{\otimes 2}$ and the orientation allows us to obtain a trivialization of $\det(V_\RR)$, i.e. a real volume form. 
We denote by $V=V_\RR\otimes_{\RR}\CC$ its complexification. 
Note that $V$ inherits a nondegenerate Hermitian form from the symmetric bilinear pairing on $V_\RR$. 
Also, since
\[\det(V)\cong \det(V_\RR)\otimes_{\RR}\CC\]
the real volume form on $V_\RR$ determines a complex volume form on $V$. 

Complexification yields a group homomorphism
\begin{equation}
\SO(V_\RR)\longrightarrow \SU(V)
\label{eq:SOtoSU}
\end{equation}
such that the real projection $\Re\colon V\rightarrow V_\RR$ is $\SO(V_\RR)$-equivariant.

As in Equation (\ref{eqn: generalized}), we assume that $X$ is a complex manifold of the form $X' \times V$.  We now consider the dimensional reduction along the map
\[
p_{\dbar}\colon (X' \times V) \times Y \times M \to X' \times Y \times (M \times V_\RR)
\]
induced by ${\rm Re}\colon V \to V_{\RR}$.

\begin{prop} \label{CS_diml_red_prop}
Let $X,Y,M,\fg$ be as before and $V_\RR, V$ as above. Fix an $L_\infty$ algebra $\fg$ equipped with a nondegenerate invariant pairing as in Definition \ref{def:generalizedCS} and consider the generalized Chern--Simons theory
\[\map((X'\times V)\times Y_{\mr{Dol}}\times M_{\mr{dR}}, B\fg).\]
Its dimensional reduction along the projection $p_{\dbar}$ is equivalent to the generalized Chern--Simons theory
\[\map(X'\times Y_{\mr{Dol}}\times (M\times V_\RR)_{\mr{dR}}, B\fg).\]
The equivalence is $\SO(V_\RR)$-equivariant.
\end{prop}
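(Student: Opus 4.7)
The plan is to exhibit an explicit isomorphism between the bundles of BV fields on both sides and then verify compatibility of the $(-1)$-shifted symplectic pairings and the local $L_\infty$ structures, mirroring the strategy of the previous proposition but now integrating out a holomorphic rather than a smooth direction. The key observation is that translation-invariant sections along $iV_\RR$ of the Dolbeault complex on $V$ are canonically equivalent (up to a harmless rescaling) to complex-valued de Rham forms on $V_\RR$. Concretely, choosing coordinates $z_j = x_j + i y_j$ on $V$, the restriction $d\bar z_j \mapsto dx_j$ induces an isomorphism of graded vector spaces
\[
\overline{V^*} \;\cong\; V_\RR^* \otimes_\RR \CC,
\]
and since $\partial/\partial \bar z_j = \tfrac{1}{2}(\partial/\partial x_j + i\,\partial/\partial y_j)$ acts as $\tfrac{1}{2}\partial/\partial x_j$ on functions of $x$ alone, the Dolbeault operator restricted to $iV_\RR$-invariant sections corresponds to $\tfrac12 \d_\dR$ on $V_\RR$.

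Combining this with K\"unneth for the source, the bundle of BV fields decomposes as
\[
\Omega^{0,\bu}_{X'} \boxtimes \Omega^{0,\bu}_V \boxtimes \Omega^{\bu,\bu}_Y \boxtimes \Omega^\bu_M \otimes \fg,
\]
and the translation-invariant subspace along $iV_\RR$ matches exactly the pullback $p_{\dbar}^* E_N$, where $E_N$ is the bundle of BV fields of the generalized Chern--Simons theory on $X' \times Y_{\rm Dol} \times (M \times V_\RR)_\dR$. To verify compatibility of the shifted symplectic pairings, one uses the decomposition $\Omega_{X'\times V} = \Omega_{X'}\wedge \Omega_V$; the factor $\Omega_V$ absorbs the holomorphic directions in $V$, converting the Dolbeault integration pairing to the de Rham pairing on $V_\RR$. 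Explicitly, in coordinates $dz_j \wedge d\bar z_j = -2i\, dx_j \wedge dy_j$, so that for translation-invariant forms $\alpha, \beta$ with $\alpha\wedge\beta \in \Omega^{0,\dim V}$ corresponding to $\tilde\alpha\wedge\tilde\beta \in \Omega^{\dim V_\RR}(V_\RR)$, one obtains
\[
\Omega_V \wedge \alpha \wedge \beta \;=\; (-2i)^{\dim V} \, \pi_{V_\RR}^*(\tilde\alpha\wedge\tilde\beta) \wedge \mathrm{dvol}_{iV_\RR},
\]
and the fiberwise volume form on $iV_\RR$ provides precisely the trivialization $p_{\dbar}^* \Dens_N \cong \Dens_M$ needed by Proposition \ref{prop:dimensionalreductionunique}. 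The overall constant $(-2i)^{\dim V}$ can be absorbed into a rescaling of the pairing on $\fg$ or of the fields.

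The compatibility of the $L_\infty$ interaction term reduces to checking that the wedge product of translation-invariant $(0,\bu)$-forms on $V$ corresponds to the wedge product of de Rham forms on $V_\RR$ under the identification $d\bar z_j \leftrightarrow dx_j$, which is immediate. Combined with the matching of differentials (up to the factor $\tfrac12$, absorbed as above) and symplectic pairings, this gives the perturbative equivalence. Finally, the $\SO(V_\RR)$-equivariance is built in: the complexification homomorphism \eqref{eq:SOtoSU} makes $\Re\colon V\to V_\RR$ equivariant, the holomorphic volume form $\Omega_V$ is fixed by $\SO(V_\RR)\subset \SU(V)$, and hence every step of the identification intertwines the natural $\SO(V_\RR)$-actions.

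I expect the only nontrivial bookkeeping to lie in step (ii): correctly normalizing the map of $(-1)$-shifted symplectic pairings so that the constant $(-2i)^{\dim V}$ and the sign from reordering $dz_j \wedge d\bar z_j$ match the convention used in Definition \ref{def:generalizedCS}, and verifying that this rescaling is compatible with the holomorphic-volume-form data on $X'$ as inherited from $X' \times V$. Once the dictionary $(d\bar z_j, \bar\partial) \leftrightarrow (dx_j, \tfrac12 \d_\dR)$ is set up cleanly, the remaining verifications are formal.
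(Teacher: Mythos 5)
Your proposal is correct and follows essentially the same route as the paper: identify translation-invariant antiholomorphic forms along $iV_\RR$ with complexified de Rham forms on $V_\RR$, and check that wedging with $\Omega_V$ converts the Dolbeault integration pairing into the de Rham one against the fiberwise volume form. The only cosmetic difference is that the paper uses the map $p^*$ followed by projection onto $(0,\bullet)$-forms (which sends $dx_j\mapsto\tfrac12 d\bar z_j$ and hence intertwines $\d_{\dR}$ with $\dbar$ on the nose), whereas your dictionary $d\bar z_j\leftrightarrow dx_j$ forces you to carry the factor of $\tfrac12$ and the constant $(-2i)^{\dim V}$ and absorb them by a uniform rescaling — legitimate, just slightly more bookkeeping.
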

\begin{proof}
We may assume $X',Y,M=\pt$ as in the previous proof.

We have an isomorphism of vector bundles on $V$:
\[\Omega^{0, \bullet}_V\cong \Sym_\CC^\bullet(\underline{V}[-1])\]
and similarly an isomorphism of bundles on $V_\RR$:
\[\Omega^\bullet_{V_\RR}\cong \Sym_\RR^\bullet(\underline{V}_\RR[-1])\otimes_\RR\CC.\]

Under the composition
\[\Omega^\bullet(V_\RR; \CC)\xto{p^*} \Omega^\bullet(V)\rightarrow \Omega^{0, \bullet}(V)\]
defined by pulling back forms along $p$ and projecting onto $(0,\bu)$-forms, the map $\Omega_V\wedge (-)\colon \Omega^{0, \bullet}_V \to \Omega^{{\rm dim}(V),\bu}_V \to \Omega^{\dim(V),\dim(V)}_V = {\rm Dens}_V$ given by projection onto the top component, reduces to the map $\Omega^{\bullet}(V_\RR;\CC)\rightarrow \Omega^{\dim(V)}_{V_\RR} = {\rm Dens}_{V_\RR}$ which also projects onto the top component. 
This shows that the BV pairings of the original and dimensionally reduced theory are compatible. 
\end{proof}

As a corollary of this result we obtain the reduction of generalized Hodge theory.

\begin{corollary}
Let $X', Y, M, V_\RR, V$ be as before. 
Fix an $L_\infty$ algebra $\fg$ equipped with a non-degenerate pairing as in Definition \ref{Hodge_family_def} and consider the generalized Hodge theory
\[\map((X'\times V)\times Y_{\mr{Dol}}\times M_{\mr{dR}}, B\fg_\Hod).\]
Its dimensional reduction along the projection $p_{\dbar}$ is equivalent to the generalized Hodge theory
\[\map(X'\times Y_{\mr{Dol}}\times (M\times V_\RR)_{\mr{dR}}, B\fg_\Hod).\]
The equivalence is $\SO(V_\RR)$-equivariant.
\label{cor:Hodgeholomorphicreduction}
\end{corollary}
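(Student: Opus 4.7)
The plan is to deduce the statement directly from Proposition \ref{CS_diml_red_prop} by applying it $\CC[t]$-linearly to the Hodge family. By Definition \ref{Hodge_family_def}, the generalized Hodge theory $\map((X'\times V)\times Y_{\mr{Dol}}\times M_{\mr{dR}}, B\fg_\Hod)$ is literally a $\CC[t]$-family of generalized Chern--Simons theories with underlying $\CC[t]$-linear $L_\infty$ algebra $\fg_\Hod = \CC[t]\otimes(\fg\oplus \fg[1])$, equipped with the $\CC[t]$-linear nondegenerate invariant pairing of the appropriate degree inherited from the pairing on $\fg$ and the duality pairing between $\fg$ and $\fg[1]$. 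So the strategy reduces to checking that the construction of Proposition \ref{CS_diml_red_prop} is functorial enough to run with $\fg$ replaced by $\fg_\Hod$ as a $\CC[t]$-linear input.

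First, I would verify that the hypotheses of Proposition \ref{CS_diml_red_prop} hold after replacing $\fg$ by $\fg_\Hod$: namely, that $\fg_\Hod$ is an $L_\infty$ algebra with a nondegenerate invariant symmetric pairing of the required degree. Both properties follow from the corresponding properties of $\fg$ together with the observation that the extra differential $t \cdot \mr{id} \colon \fg[1] \to \fg$ is $\CC[t]$-linear and preserves the pairing.

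Next, I would apply the proposition verbatim. Its proof constructs the equivalence using only the bundle-level isomorphism $\Omega^{0, \bu}_V \cong \Sym^\bu(\underline{V}[-1])$ and $\Omega^\bu_{V_\RR}\otimes_\RR \CC \cong \Sym^\bu(\underline{V}_\RR[-1])\otimes_\RR \CC$, combined with the identity on the $\fg$ factor. Extending the identity $\CC[t]$-linearly to $\fg_\Hod$ yields a $\CC[t]$-linear isomorphism of bundles of BV fields which continues to intertwine the BV pairings and action functionals, because all of the $L_\infty$ brackets, differentials, and pairings on $\fg_\Hod$ are $\CC[t]$-linear by construction. This produces a morphism of $\CC[t]$-families of classical field theories in the sense of Definition \ref{family_of_BV_theories_def}.

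Finally, the $\SO(V_\RR)$-equivariance of the resulting equivalence is inherited directly from Proposition \ref{CS_diml_red_prop}: the only $\SO(V_\RR)$-action that enters the construction is the one on $\Omega^\bu_{V_\RR}$ together with its complexification on $\Omega^{0, \bu}_V$ coming from \eqref{eq:SOtoSU}, both of which act trivially on the $\CC[t]\otimes(\fg\oplus\fg[1])$ factor and hence commute with the entire construction. I do not foresee a genuine obstacle here; the only subtle point is the bookkeeping ensuring that the proof of Proposition \ref{CS_diml_red_prop} is genuinely $\CC[t]$-linear in $\fg$, which it is since $\fg$ enters only through its algebraic structure and not through the spacetime factors being reduced.
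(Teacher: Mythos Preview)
Your proposal is correct and matches the paper's approach: the corollary is stated without proof as an immediate consequence of Proposition \ref{CS_diml_red_prop}, and your argument spells out precisely why that proposition applies $\CC[t]$-linearly to $\fg_\Hod$.
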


We have the following analogs of Propositions \ref{CS_to_BF_diml_red_prop} and \ref{CS_diml_red_prop} for the generalized BF theory.

\begin{prop}
Let $X, Y, M', \fg$ be as in Proposition \ref{CS_to_BF_diml_red_prop}. The dimensional reduction of the generalized BF theory
\[T^*[-1]\map(X\times Y_{\mr{Dol}}\times (M'\times \RR)_{\mr{dR}}, B\fg)\]
along the projection $p_{\rm dR}$ is equivalent to the generalized BF theory
\[
T^*[-1] \map(X\times Y_{\mr{Dol}}\times M'_{\mr{dR}}, \fg/\fg).
\]
\label{prop:BFdeRhamreduction}
\end{prop}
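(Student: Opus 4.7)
The plan is to follow the template of the proofs of Propositions~\ref{CS_to_BF_diml_red_prop} and~\ref{CS_diml_red_prop}: identify the underlying graded bundles, check that the pairings agree under the identification, and verify that the differentials and interactions match. For clarity I will treat the case $X=Y=\pt$; the general case follows by tensoring with $\Omega^{0,\bu}_X\otimes\Omega^{\bu,\bu}_Y$. Write $d' = \dim(M')$, so that the original spacetime has dimension $d'+1$.

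The central observation is the decomposition of translation-invariant forms on $M'\times\RR$: any such form is uniquely of the shape $\alpha + dt\wedge\beta$ with $\alpha, \beta \in \Omega^\bu_{M'}$, giving an identification of cochain complexes
\[
\{\text{translation-invariant forms on } M'\times\RR\} \;\cong\; \Omega^\bu_{M'} \,\oplus\, \Omega^\bu_{M'}[-1],
\]
where the second summand captures the $dt$-component; the cohomological shift arises because $dt$ has form degree $+1$. Applied to the fields of the generalized BF theory, this decomposes $A = A_0 + dt\wedge A_1$ and $B = B_0 + dt\wedge B_1$ with $A_0\in \Omega^\bu_{M'}\otimes\fg[1]$, $A_1\in \Omega^\bu_{M'}\otimes\fg$, $B_0\in\Omega^\bu_{M'}\otimes\fg^*[d'-1]$, and $B_1\in\Omega^\bu_{M'}\otimes\fg^*[d'-2]$. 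These four components match exactly the BV fields of $T^*[-1]\map(M'_{\mr{dR}}, \fg/\fg) = T^*[-1]\map(M'_{\mr{dR}}, BL_{\fg,\fg})$: the $\fg$-valued field $A_1$ plays the role of the \emph{matter field} arising from the adjoint representation in $L_{\fg,\fg} = \fg\oplus\fg[-1]$, while $B_1$ is the antifield of $A_0$ and $B_0$ is the antifield of $A_1$.

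Using the fiberwise volume form $dt$ as in Proposition~\ref{CS_diml_red_prop}, the BV pairing $\int_{M'\times\RR} A\wedge B$ reduces, after the $dt\wedge dt$ term vanishes and the pure no-$dt$ component is dropped, to the cross-pairings $\int_{M'} A_0\wedge B_1 + \int_{M'} A_1\wedge B_0$ (up to signs). These match precisely the natural $(-1)$-shifted cotangent pairing on the target, pairing each field with its antifield. The de Rham differential on $M'\times\RR$ restricted to translation-invariant forms is $d_{M'}$, and for the cubic interaction the Lie bracket decomposes as $[A,A] = [A_0,A_0] + 2\,dt\wedge[A_0,A_1]$ since $dt\wedge dt = 0$. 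This reproduces exactly the dg Lie algebra structure on $L_{\fg,\fg}$, whose only non-trivial brackets are the bracket on $\fg$ and the adjoint action of $\fg$ on $\fg[-1]$, so the reduced BF action agrees with the expected action functional of $T^*[-1]\map(M'_{\mr{dR}}, \fg/\fg)$. The argument presents no conceptual obstacle beyond what appears in the two preceding propositions; the one technical point requiring care is the Koszul-sign bookkeeping in the decompositions of $A\wedge B$ and $[A,A]$.
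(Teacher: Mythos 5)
Your proof is correct, and it is exactly the argument the paper intends: Proposition~\ref{prop:BFdeRhamreduction} is stated there without proof as an ``analog'' of Propositions~\ref{CS_to_BF_diml_red_prop} and~\ref{CS_diml_red_prop}, and your direct verification --- decomposing translation-invariant forms on $M'\times\RR$ as $\alpha+dt\wedge\beta$, matching $A_0,A_1,B_0,B_1$ with the fields and antifields of $T^*[-1]\map(M'_{\mr{dR}},\fg/\fg)$, and checking the pairing, differential, and bracket --- is precisely the template those proofs follow. The field identifications (in particular that $B_1$ is the antifield of $A_0$ and $B_0$ of $A_1$, and that $A_1$ realizes the adjoint-valued summand $\fg[-1]$ of $L_{\fg,\fg}$) are all correct.
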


\begin{prop}
Let $X', Y, M, V_\RR, V$ be as in \ref{CS_diml_red_prop}. Fix an $L_\infty$ algebra $\fg$ and consider the generalized BF theory
\[T^*[-1] \map((X'\times V)\times Y_{\mr{Dol}}\times M_{\mr{dR}}, B\fg).\]
Its dimensional reduction along the projection $p_{\dbar}$ is equivalent to the generalized BF theory
\[T^*[-1] \map(X'\times Y_{\mr{Dol}}\times (M\times V_\RR)_{\mr{dR}}, B\fg).\]
This equivalence is $\SO(V_\RR)$-equivariant.
\label{prop:BFholomorphicreduction}
\end{prop}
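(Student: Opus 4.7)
The plan is to follow the strategy of the proof of Proposition~\ref{CS_diml_red_prop}: I would exhibit a single dimensional reduction by constructing an explicit isomorphism of bundles of BV fields intertwining all the BV structures; by Proposition~\ref{prop:dimensionalreductionunique} this suffices. As in the cited proof, I first reduce to the case $X' = Y = M = \pt$, since the general case is obtained by tensoring with the remaining factors, leaving the $V$-direction untouched. On the original side, the bundle of BV fields is
\[\Omega^{0,\bu}_V \otimes \fg[1] \;\oplus\; \Omega^{\dim V, \bu}_V \otimes \fg^*[d-2],\]
where $d = \dim_\CC V$, and on the reduced side it is $(\Omega^\bu_{V_\RR}\otimes_\RR \CC)\otimes (\fg[1] \oplus \fg^*[d-2])$.

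For the $\fg[1]$ summand I reuse the pullback-then-project isomorphism
\[\Omega^\bu(V_\RR;\CC) \xrightarrow{p^*} \Omega^\bu(V) \to \Omega^{0,\bu}(V)\]
from the proof of Proposition~\ref{CS_diml_red_prop} verbatim. The new ingredient is the antifield summand: because $V_\RR$ is oriented and carries a nondegenerate pairing, the induced holomorphic volume form $\Omega_V$ on $V$ trivializes $\det V^*$ and therefore produces a canonical isomorphism $\Omega^{\dim V, \bu}_V \cong \Omega^{0,\bu}_V$ of graded bundles. Composing this trivialization with the previous pullback-then-project map yields the required isomorphism on antifields.

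It then remains to verify compatibility with the BV pairing, the classical BV differential, the interaction, and the $\SO(V_\RR)$-action. The BV pairing of the generalized BF theory is the wedge $\Omega^{0,\bu}_V \otimes \Omega^{\dim V,\bu}_V \to \Omega^{\dim V, \dim V}_V = \Dens_V$ coupled to $\fg \otimes \fg^* \to \CC$, and under the above identifications it intertwines with the analogous wedge pairing on $V_\RR$ by essentially the same computation as in the proof of Proposition~\ref{CS_diml_red_prop}: the trivialization of $\det V^*$ identifies the map $\Omega^{0,\bu}_V \xrightarrow{\Omega_V \wedge -} \Dens_V$ used there with the BF wedge pairing. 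The BV differential is just the Dolbeault (respectively, de Rham) derivative, which vanishes on translation-invariant forms on a flat space, and the $L_\infty$ brackets of $\fg$ are common to both theories; compatibility of the interaction then follows from that of the pairing. The $\SO(V_\RR)$-equivariance is immediate, since the projection $\Re\colon V \to V_\RR$, the complex volume form $\Omega_V$, and the pullback-then-project map are all preserved by the complexification homomorphism~\eqref{eq:SOtoSU}. No step presents a substantial obstacle; the main modification relative to Proposition~\ref{CS_diml_red_prop} is the appeal to the volume form on $V$ to trivialize $\det V^*$, which is what allows the antifield bundle $\Omega^{\dim V, \bu}_V$ to be matched with the Dolbeault complex $\Omega^{0,\bu}_V$ on the field side.
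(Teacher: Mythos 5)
Your overall strategy is the intended one (the paper in fact states this proposition without proof, as an analog of Proposition \ref{CS_diml_red_prop}): reduce to $X'=Y=M=\pt$, handle the field summand $\Omega^{0,\bu}_V\otimes\fg[1]$ by the pullback-then-project map from the proof of Proposition \ref{CS_diml_red_prop}, and handle the antifield summand by first using the complex volume form $\Omega_V$ induced from the orientation and pairing on $V_\RR$ to trivialize the canonical bundle and identify $\Omega^{\dim V,\bu}_V\cong\Omega^{0,\bu}_V$, then reusing the same map. The pairing compatibility is also argued correctly: the map $\Omega_V\wedge(-)\colon\Omega^{0,\bu}_V\to\Dens_V$ analysed in that proof is exactly the BF pairing after this trivialization.

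One step is mis-stated, though. You claim the BV differential ``vanishes on translation-invariant forms on a flat space.'' That is the argument from the de Rham reduction of Proposition \ref{CS_to_BF_diml_red_prop}, where the fibre direction is collapsed entirely and reduced sections are genuinely constant along it; it does not apply to the reduction along $\Re\colon V\to V_\RR$. A section pulled back from $V_\RR$ is constant only along the imaginary translations $iV_\RR$, and $\dbar_V$ does not vanish on it: writing $z_j=x_j+iy_j$, for $f$ independent of $y$ one has $\partial_{\bar z_j}f=\tfrac12\partial_{x_j}f$, so under the identification $d\bar z_j\leftrightarrow dx_j$ the operator $\dbar_V$ restricted to pulled-back sections becomes (a multiple of) the de Rham differential $d_{V_\RR}$. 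That is the compatibility statement you actually need — and it must be so, since the target theory $T^*[-1]\map(X'\times Y_{\mr{Dol}}\times(M\times V_\RR)_{\mr{dR}},B\fg)$ carries the de Rham differential along $V_\RR$, not the zero differential; if $\dbar_V$ really killed the reduced fields you would land in a different theory. With this correction, the remaining checks (interaction and $\SO(V_\RR)$-equivariance, the latter using that $\Omega_V$ is $\SU(V)$-invariant and hence preserved under the homomorphism \eqref{eq:SOtoSU}) go through as you describe.
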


Finally, we take $Y$ to be a complex manifold of the form $Y' \times \CC$, where $V$ is complex $k$-dimensional and consider the projection
\[
p_{\rm Dol} \colon X \times (Y' \times \CC) \times M \to X \times Y' \times (M \times \RR) 
\]
induced by ${\rm Re} \colon \CC \to \RR$.

\begin{prop}
\label{prop:BFdolbeaultreduction}
Let $X, M, Y'$ be as above.
Fix an $L_\infty$ algebra $\fg$ and consider the generalized BF theory
\[
T^*[-1] {\rm Map}(X \times (Y' \times \CC)_{\rm Dol} \times M_{\rm dR} , B \fg) .
\]
Its dimensional reduction along $p_{\rm Dol}$ is equivalent to the generalized BF theory
\[
T^*[-1] {\rm Map} (X \times Y'_{\rm Dol} \times (M \times \RR)_{\rm dR}, \fg / \fg)
\]
This equivalence is $\SO(V_\RR)$-equivariant.
\end{prop}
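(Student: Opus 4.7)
The strategy mirrors the proofs of Propositions \ref{CS_to_BF_diml_red_prop}--\ref{prop:BFholomorphicreduction}. By Proposition \ref{prop:dimensionalreductionunique}, the source theory, being translation-invariant along the fiber $\Im(\CC)$ of $p_{\rm Dol}$, admits a dimensional reduction unique up to canonical isomorphism; thus it suffices to identify its $y$-translation-invariant part with the pullback of the proposed target BF theory.

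Writing $z = x + iy$ on $\CC$, so $\mathrm{d}z = \mathrm{d}x + i\mathrm{d}y$ and $\mathrm{d}\bar z = \mathrm{d}x - i\mathrm{d}y$, the Dolbeault differential $\bar\partial = \frac{1}{2}(\partial_x + i\partial_y)\mathrm{d}\bar z$ reduces on $y$-invariant forms to $\frac{1}{2}\partial_x \mathrm{d}\bar z$. Splitting $\Omega^{\bu,\bu}_\CC$ according to whether $\mathrm{d}z$ appears as a factor, one obtains a natural isomorphism of dg bundles
\[(\Omega^{\bu,\bu}_\CC)^{y\text{-inv}} \;\cong\; \Omega^\bu_\RR \otimes_\RR \CC \otimes (\CC \oplus \CC[-1]),\]
where the shifted line $\CC[-1]$ is generated by $\mathrm{d}z$ in cohomological degree $1$, and the $\mathrm{d}z$-free subcomplex is identified with $\Omega^\bu_\RR\otimes_\RR\CC$ via $\mathrm{d}\bar z \leftrightarrow 2\mathrm{d}x$. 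Wedge products are preserved because $(\mathrm{d}z)^2 = 0$ realizes the abelian square-zero extension structure on $\CC \oplus \CC[-1]$.

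Tensoring with $\Omega^{0,\bu}_X \otimes \Omega^{\bu,\bu}_{Y'}\otimes \Omega^\bu_M\otimes \fg$ and rearranging $\Omega^\bu_\RR\otimes \Omega^\bu_M \cong \Omega^\bu_{M\times \RR}$, the $y$-invariant underlying local $L_\infty$ algebra of the source becomes
\[\Omega^{0,\bu}_X\otimes \Omega^{\bu,\bu}_{Y'}\otimes\Omega^\bu_{M\times \RR}\otimes (\fg\oplus\fg[-1]),\]
which is exactly the local $L_\infty$ algebra representing $\map(X\times Y'_{\rm Dol}\times (M\times\RR)_{\rm dR}, \fg/\fg)$: the $\fg[-1]$ summand carries the adjoint representation, the bracket $\fg\otimes \fg[-1]\to \fg[-1]$ matches the action of $\fg$-valued forms on $\mathrm{d}z$-containing forms by the Lie bracket, and the vanishing $\fg[-1]\otimes \fg[-1]\to 0$ corresponds to $\mathrm{d}z\wedge \mathrm{d}z = 0$. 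The analogous identification applied to the shriek-dual bundle---using that the translation-invariant volume form $\frac{i}{2}\mathrm{d}z\wedge\mathrm{d}\bar z$ on $\CC$ reduces, together with the $\mathrm{d}z$-factor, to $\mathrm{d}x$ on $\RR$---shows that the $(-1)$-shifted symplectic pairings match, promoting the isomorphism to one between the full generalized BF theories $T^*[-1]\map(\cdots)$.

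Finally, $\SO(V_\RR)$-equivariance follows from the fact that all identifications above are constructed from the canonical Dolbeault and de Rham structures, which are preserved by the complexification $\SO(V_\RR)\to \SU(V)$ of \eqref{eq:SOtoSU}; in the case $V = \CC$ at hand, $\SO(V_\RR)$ is trivial so the statement is automatic. The main technical obstacle is careful bookkeeping with signs, powers of $i$, and the interplay between cohomological grading and the fermionic parity of $\mathrm{d}z$ and $\mathrm{d}\bar z$, all of which amount to routine computations parallel to those behind Propositions \ref{CS_diml_red_prop} and \ref{prop:BFholomorphicreduction}.
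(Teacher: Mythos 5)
Your proposal is correct and follows essentially the same route as the paper: you split off the $\mathrm{d}z$-direction of $\Omega^{\bu,\bu}_\CC$ as a square-zero parameter of cohomological degree $+1$, turning the coefficients into $\fg\oplus\fg[-1]$ (whose classifying object is exactly $\fg/\fg$), and then reduce the remaining $\Omega^{0,\bu}(\CC)$-factor along $\Re\colon\CC\to\RR$. The paper's proof packages this in two lines by writing $\Omega^{\bu,\bu}(Y'\times\CC;\fg)\cong\Omega^{\bu,\bu}(Y')\otimes\Omega^{0,\bu}(\CC)\otimes\fg[\epsilon]$ and citing Proposition \ref{prop:BFholomorphicreduction} applied to $\fg[\epsilon]$, whereas you carry out that second step by hand; the content is the same.
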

\begin{proof}
Note that there is an isomorphism of $L_\infty$ algebras
\[
\Omega^{\bu, \bu} (Y' \times \CC ; \fg) = \Omega^{\bu,\bu}(Y') \otimes \Omega^{0,\bu}(\CC) \otimes \fg[\epsilon]
\]
where $\epsilon$ is a parameter of degree $+1$.
The result then follows from Proposition \ref{prop:BFholomorphicreduction} applied to the $L_\infty$ algebra $\fg[\epsilon]$. 
\end{proof}

\section{Supersymmetry} \label{sect:susy}
Having set up the formalism behind classical field theories in the BV and BRST formalisms, we will introduce the other main formal ingredient of this paper: the supersymmetry action.  So, we will discuss the classification of supersymmetry algebras, the notion of a supersymmetric field theory, and the idea of a \emph{twist} of a supersymmetric field theory, extending work of the first two authors in \cite{ElliottSafronov}.  We will introduce the classification of supersymmetry algebras using the division algebra perspective of Baez and Huerta \cite{BaezHuerta}, which will be useful for the classification of super Yang--Mills theories in Section \ref{sect:SYM} below.

\subsection{Spinors}
\label{sect:spinors}

In this paper we will extensively use the theory of spinors. Let $V$ be a complex vector space equipped with a nondegenerate symmetric bilinear pairing. Recall that the Clifford algebra $\Cl(V)$ is defined to be the quotient of the tensor algebra on $V$ by the relation
\[v_1 v_2 + v_2 v_1 = 2(v_1, v_2).\]
Consider a $\ZZ/2\ZZ$-graded Clifford module $M=M^+\oplus M^-$. Denote the Clifford action by $\rho(v)\in\eend(M)$. We assume the Clifford module is equipped with a nondegenerate pairing $(-, -)\colon M^+\otimes M^-\rightarrow \CC$ such that
\[(\rho(v) Q_1, Q_2) = (Q_1, \rho(v) Q_2)\]
for any $Q_1, Q_2\in M$ and $v\in V$. From now on we denote $M^+=\Sigma$ and $M^-=\Sigma^*$. We define the $\Gamma$-pairings
\[\Gamma\colon\Sym^2(\Sigma)\longrightarrow V,\qquad \Gamma\colon\Sym^2(\Sigma^*)\longrightarrow V\]
by
\begin{equation}
(\rho(v) Q_1, Q_2) = (v, \Gamma(Q_1, Q_2))
\label{eq:Gammaspinorpairing}
\end{equation}

We have a subset $\Spin(V)\subset \Cl(V)$, so $\Sigma$ and $\Sigma^*$ are representations of the spin group. Moreover, the Clifford action and the maps $\Gamma$ are $\Spin(V)$-equivariant.

We may identify $\wedge^2(V)\rightarrow \so(V)$ via
\[\omega\mapsto (w\mapsto -2\iota_{(w, -)} \omega).\]
This gives rise to an action map
\[\wedge^2(V)\otimes \Sigma\longrightarrow \Sigma\]
of two-forms on spinors.

Consider the map $q\colon \wedge^\bullet(V)\rightarrow \Cl(V)$ given by antisymmetrization, so that, for instance,
\begin{equation}
q(v_1\wedge v_2) = v_1v_2 - (v_1, v_2).
\label{eq:quantizationtwoforms}
\end{equation}

The resulting action $\wedge^2(V)\otimes \Sigma\rightarrow \Sigma$ then coincides with the original action of $\so(V)$ on the spinorial representation $\Sigma$, so that $\so(V)$-equivariance of $\Gamma$ gives the following.

\begin{prop}
For $X\in\wedge^2(V)$ and $Q_1,Q_2\in\Sigma$ we have
\[\Gamma(Q_1, \rho(X) Q_2) + \Gamma(Q_2, \rho(X) Q_1) = -2\iota_{\Gamma(Q_1, Q_2)} X.\]
\label{prop:cliffordactionproperty1}
\end{prop}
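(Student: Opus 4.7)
The plan is to pair both sides of the asserted identity against an arbitrary $v \in V$ using the nondegenerate form on $V$, and unfold using the defining relation \eqref{eq:Gammaspinorpairing}. Linearity in $X$ lets me reduce immediately to the decomposable case $X = v_1 \wedge v_2$, and then \eqref{eq:quantizationtwoforms} gives the concrete formula $\rho(X) = \rho(v_1)\rho(v_2) - (v_1, v_2)\,\mathrm{id}$ as an operator on $\Sigma$.

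The heart of the argument will be a Clifford-algebra manipulation to commute $\rho(X)$ past $\rho(v)$. After applying \eqref{eq:Gammaspinorpairing} once, the LHS paired with $v$ becomes a sum of expressions of the form $(\rho(v) Q_i, \rho(X) Q_j)$. The two identities I would use are, first, the commutator relation $[\rho(X), \rho(v)] = \rho(X \cdot v)$, coming from the Clifford computation $[v_1 v_2, v]_{\Cl} = 2(v_2, v) v_1 - 2(v_1, v) v_2$ matched against the $\so(V)$-action $X \cdot v = 2(v_2, v) v_1 - 2(v_1, v) v_2$ induced by the normalization $\omega \mapsto -2 \iota_{(-,-)} \omega$ fixed in the paper; and second, the skew-adjointness $\rho(X)^\ast = -\rho(X)$ with respect to $(-,-)$, which is a one-line Clifford computation from $(v_1 v_2)^\ast = v_2 v_1$. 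Commuting $\rho(v)$ past $\rho(X)$ using the first, then moving $\rho(X)$ across the pairing using the second, and finally folding the leftover $\rho(v)$ back into a $\Gamma$-expression via \eqref{eq:Gammaspinorpairing} together with the symmetry of $\Gamma$ and of $(-,-)$, should leave the identity
\[(v, \Gamma(Q_1, \rho(X) Q_2) + \Gamma(Q_2, \rho(X) Q_1)) = -(X \cdot v, \Gamma(Q_1, Q_2)).\]

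The proposition then reduces to the purely linear-algebraic claim $(X \cdot v, w) = 2(v, \iota_w X)$ for all $w \in V$, which follows immediately from the explicit formulas for $X \cdot v$ and for $\iota_w (v_1 \wedge v_2) = (w, v_1) v_2 - (w, v_2) v_1$, together with the symmetry of the form on $V$. Non-degeneracy of the form on $V$ then closes out the proof. The main obstacle is nothing conceptual but rather the accumulation of sign and normalization conventions --- the $-2$ in the $\wedge^2 V \to \so(V)$ normalization, the factor $2$ in the Clifford commutator, and the sign convention for $\iota$ --- which must conspire to yield exactly the factor $-2$ on the right-hand side; this is where I would spend the most care.
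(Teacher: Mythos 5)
Your argument is correct, and all the sign and normalization checks you flag do work out: with the paper's normalization $\omega\mapsto(w\mapsto -2\iota_{(w,-)}\omega)$ one indeed gets $[\rho(q(v_1\wedge v_2)),\rho(v)]=\rho(2(v_2,v)v_1-2(v_1,v)v_2)=\rho(X\cdot v)$, the skew-adjointness $\rho(X)^*=-\rho(X)$ follows from $(v_1v_2)^*=v_2v_1$ and the Clifford relation, and the terminal identity $(X\cdot v,w)=2(v,\iota_w X)$ holds with the standard contraction convention, so nondegeneracy of the form closes the argument. The route is genuinely more computational than the paper's: the paper disposes of this proposition in one line by observing that $\rho\circ q$ restricted to $\wedge^2(V)$ recovers the $\so(V)$-action on $\Sigma$, whence the identity is nothing but the infinitesimal $\so(V)$-equivariance of $\Gamma$ (which holds because $\Gamma$ is a map of $\Spin(V)$-representations), combined with the fact that the $\so(V)$-action on $V$ is $v\mapsto -2\iota_{(v,-)}X$. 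Your Clifford manipulation is in effect a from-scratch proof of that equivariance using only the axioms the paper actually states for the Clifford module (the relation \eqref{eq:Gammaspinorpairing} and the self-adjointness of $\rho(v)$), which makes your version more self-contained --- it does not require knowing in advance that $\Gamma$ is equivariant --- at the cost of the bookkeeping you describe. It would be worth one sentence in your write-up noting that the identity you reduce to, $(X\cdot v,w)=2(v,\iota_w X)$, is exactly the statement that the contraction action of $\wedge^2(V)$ on $V$ agrees with the $\so(V)$-action under the paper's identification, since that is the conceptual content the paper's one-line proof is invoking.
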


We may extend the discussion to the case of Riemannian manifolds $N$, where we replace $V$ by the vector bundle $TN$. Given a bundle of Clifford modules $M=\Sigma\oplus \Sigma^*$ as before we have the associated Dirac operator
\[\sd{\d}\colon \Gamma(N, \Sigma)\rightarrow \Gamma(N, \Sigma^*).\]

From \eqref{eq:quantizationtwoforms} we get the following property.

\begin{prop}
Suppose $Q_1,Q_2\in\Sigma$ and $\lambda\in\Gamma(N, \Sigma)$. Then
\[\sd{\d}\rho(\Gamma(Q_1, \lambda)) Q_2 = \rho(d \Gamma(Q_1, \lambda)) Q_2 + (Q_1, \sd{\d} \lambda) Q_2.\]
\label{prop:cliffordactionproperty2}
\end{prop}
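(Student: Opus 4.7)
The plan is to reduce the identity to a pointwise computation in normal coordinates at a chosen point of $N$, where the Dirac operator simplifies to $\sd{\d} = \sum_i \rho(e_i) \nabla_{e_i}$ for a local orthonormal frame $\{e_i\}$. Since $Q_1, Q_2$ are elements of the spinor space rather than sections, I extend them to parallel sections in a neighborhood; in normal coordinates at the chosen point, $\nabla_{e_i} Q_1 = \nabla_{e_i} Q_2 = 0$ and $\nabla_{e_i} e_j = 0$. Writing $\alpha := \Gamma(Q_1, \lambda)$ and applying Leibniz gives
\[\sd{\d}\bigl(\rho(\alpha) Q_2\bigr) = \sum_i \rho(e_i)\, \rho(\nabla_{e_i}\alpha)\, Q_2 = \rho\!\left(\sum_i e_i \cdot \nabla_{e_i}\alpha\right) Q_2\]
at the point in question.

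The central step is the decomposition of $\sum_i e_i \cdot \nabla_{e_i}\alpha$ in the Clifford algebra into a 2-form part and a scalar part. Expanding $\alpha = \sum_j \alpha_j e_j$ (using the metric identification of $V$ with $V^*$) and applying the quantization formula \eqref{eq:quantizationtwoforms} in the form $e_i e_j = q(e_i \wedge e_j) + \delta_{ij}$, I split
\[\sum_{i,j} e_i e_j \,(\nabla_{e_i}\alpha)_j = \sum_{i<j} q(e_i \wedge e_j)\bigl((\nabla_{e_i}\alpha)_j - (\nabla_{e_j}\alpha)_i\bigr) + \sum_i \nabla_{e_i}\alpha_i.\]
The first summand is precisely $q(\d\alpha)$ since in normal coordinates $\d\alpha = \sum_{i<j}(\nabla_{e_i}\alpha_j - \nabla_{e_j}\alpha_i)\, e_i \wedge e_j$, so after applying $\rho$ it yields $\rho(\d\Gamma(Q_1,\lambda))\, Q_2$.

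For the scalar term, I use the defining property \eqref{eq:Gammaspinorpairing} of $\Gamma$: the component $\alpha_i = (e_i, \Gamma(Q_1,\lambda))$ equals $(\rho(e_i)Q_1, \lambda) = (Q_1, \rho(e_i)\lambda)$ by the compatibility of the spinor pairing with Clifford multiplication. Since $Q_1$ is parallel and $e_i$ is parallel at the point, $\nabla_{e_i}\alpha_i = (Q_1, \rho(e_i) \nabla_{e_i}\lambda)$, and summing over $i$ produces $(Q_1, \sd{\d}\lambda)$. Combining the two pieces and acting on $Q_2$ gives the claimed identity.

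The only subtlety is the passage from the flat infinitesimal picture to the Riemannian one; this is handled entirely by the normal-coordinates choice, which makes the Christoffel-symbol contributions vanish at the base point. Since both sides of the identity are tensorial (they depend on $\alpha$, $\lambda$, and $Q_i$ via first-order local data and pointwise values), verifying the equality at every point suffices.
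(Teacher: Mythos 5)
Your proof is correct and is essentially the argument the paper intends: the paper offers no proof beyond citing the quantization formula \eqref{eq:quantizationtwoforms}, and your splitting of $\sum_i e_i\cdot\nabla_{e_i}\alpha$ into the antisymmetrized part $q(\d\alpha)$ and the trace part, together with the identification of the trace with $(Q_1,\sd{\d}\lambda)$ via \eqref{eq:Gammaspinorpairing} and the symmetry of the spinor pairing under Clifford multiplication, is exactly the computation that one-line justification alludes to. The normal-coordinate reduction and the pointwise verification are handled correctly (and are in fact vacuous in the paper's application, where $N=\RR^n$ is flat and $Q_1,Q_2$ are constant).
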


Finally, we have the following important compatibility between the Clifford action of differential forms and the Dirac operator proved in \cite[equation 7.6]{Snygg}.

\begin{prop}
Suppose $Q\in\Sigma$ and $X\in\Omega^p(N)$. Then
\[\sd{\d}(\rho(X) Q) = \rho(\d X)Q + (-1)^{n(1+p)}\rho(\ast \d\ast X) Q.\]
\label{prop:cliffordactionproperty3}
\end{prop}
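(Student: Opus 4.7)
The strategy is to reduce to a local computation in an orthonormal frame $\{e_i\}$ of $TN$, exploiting the fact that $Q \in \Sigma$ is a constant element of the fiber rather than a section (so the spin-lifted Levi--Civita connection annihilates it). In such a frame the Dirac operator takes the form $\sd\d = \sum_i \rho(e_i)\nabla_{e_i}$, and the Leibniz rule applied to $\rho(X)Q$ gives
\[
\sd\d(\rho(X) Q) \;=\; \sum_i \rho(e_i)\,\rho(\nabla_{e_i} X)\, Q.
\]

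The main algebraic ingredient is the Clifford identity
\[
\rho(v)\,\rho(Y) \;=\; \rho(v^\flat \wedge Y) \;+\; \rho(\iota_v Y)
\]
for any $v \in TN$ and differential form $Y$, which is precisely the statement that the symbol map $q$ intertwines the Clifford product with the combination of wedge and contraction. This follows from the definition of $q$ as antisymmetrization together with the Clifford relation $\{e_i,e_j\}=2\delta_{ij}$; compare with equation \eqref{eq:quantizationtwoforms} for the $p=1$ case. Substituting $v = e_i$, $Y = \nabla_{e_i} X$ and summing over $i$, I recognize
\[
\sum_i e_i^\flat \wedge \nabla_{e_i} X \;=\; \d X, \qquad \sum_i \iota_{e_i} \nabla_{e_i} X \;=\; -\d^* X,
\]
yielding $\sd\d(\rho(X) Q) = \rho(\d X)\,Q - \rho(\d^* X)\,Q$.

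To bring this into the stated form, I apply the standard identity $\d^* = (-1)^{n(p+1)+1} \ast\d\ast$ for $p$-forms on an $n$-dimensional Riemannian manifold. Then $-\d^* X = (-1)^{n(p+1)} \ast\d\ast X = (-1)^{n(1+p)} \ast\d\ast X$, and combining with the previous display gives the claimed identity.

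The only substantive obstacle is sign bookkeeping: the Clifford identity, the frame formulas for $\d$ and $\d^*$, and the conversion between $\d^*$ and $\ast\d\ast$ each contribute signs depending on $n$ and $p$, and these must be tracked carefully to recover the exact exponent $n(1+p)$. Beyond this, the proof is a routine application of standard identities in Clifford analysis, consistent with the attribution to Snygg.
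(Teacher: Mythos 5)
The paper does not actually prove this proposition: it is stated as a known compatibility and attributed to Snygg, equation 7.6, so there is no internal argument to compare against. Your derivation is the standard one and it is correct in this setting. The three ingredients all check out against the paper's conventions: since $N=V_\RR=\RR^n$ is flat and $Q$ is a constant spinor, the Leibniz rule for $\sd\d=\sum_i\rho(e_i)\nabla_{e_i}$ does reduce everything to $\sum_i\rho(e_i)\rho(\nabla_{e_i}X)Q$; the paper's Clifford relation $v_1v_2+v_2v_1=2(v_1,v_2)$ together with $q(v_1\wedge v_2)=v_1v_2-(v_1,v_2)$ means left Clifford multiplication by $v$ corresponds to $\epsilon_{v^\flat}+\iota_v$ with the plus sign you use (the opposite convention $v^2=-|v|^2$ would give $\epsilon_{v^\flat}-\iota_v$ and a wrong sign, so this is worth having checked); and $\d^*=(-1)^{n(p+1)+1}\ast\d\ast$ on $p$-forms in Riemannian signature converts $-\rho(\d^*X)Q$ into exactly $(-1)^{n(1+p)}\rho(\ast\d\ast X)Q$. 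The one cosmetic point you might make explicit is that $\sd\d$ here alternates between $\Sigma$ and $\Sigma^*$ depending on the parity of $p$, so ``$\sd\d$'' in the statement really denotes the Dirac operator on whichever summand $\rho(X)Q$ lands in; this does not affect the computation. Your proof supplies a self-contained argument where the paper gives only a citation, which is a net gain.
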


Note that both Proposition \ref{prop:cliffordactionproperty2} and Proposition \ref{prop:cliffordactionproperty3} extend to the case when $\lambda$ and $X$ respectively are twisted by a vector bundle and $\sd{\d}$ is the corresponding twisted Dirac operator.

\subsection{Supersymmetry Algebras} \label{sect:susyalgebras}
In this section we will recall the framework for supersymmetry algebras and their classification following Deligne \cite{DeligneSpinors} and our previous work \cite{ElliottSafronov}, we refer there for more details.

Let $V_\RR$ be a finite-dimensional real vector space of dimension $n=\dim_\RR(V_\RR)$ equipped with an orientation and a nondegenerate symmetric bilinear form. Denote by $V=V_\RR\otimes_\RR\CC$ its complexification. Consider the Lie algebra $\so(V)$. Let us recall the following facts:
\begin{itemize}
\item If $n$ is odd, $\so(V)$ has a distinguished fundamental representation called the \defterm{spin} representation $S$.

\item If $n$ is even, $\so(V)$ has a pair of distinguished fundamental representations called the \defterm{semi-spin} representations $S_+$ and $S_-$.
\end{itemize}

\begin{definition}
A \defterm{spinorial representation} $\Sigma$ is a finite sum of spin or semi-spin representations of $\so(V)$.
\end{definition}

So, in odd dimensions we have $\Sigma=S\otimes W$ and in even dimensions we have $\Sigma=S_+\otimes W_+\oplus S_-\otimes W_-$, where $W$ is a finite dimensional multiplicity space.

We have an embedding $\U(n)\subset \SO(2n, \RR)$ which lifts to an embedding $\MU(n)\subset \Spin(2n, \RR)$. If we denote by $L$ the standard $n$-dimensional representation of $\U(n)$, then the semi-spin representations of $\Spin(2n, \RR)$ restrict to $\MU(n)$ as
\[S_+\cong \det(L)^{-1/2}\otimes \wedge^{\mathrm{even}} L,\qquad S_-\cong \det(L)^{-1/2}\otimes \wedge^{\mathrm{odd}} L.\]

\begin{definition}
Fix a spinorial representation $\Sigma$ and a nondegenerate $\so(V)$-equivariant pairing $\Gamma\colon \sym^2(\Sigma)\rightarrow V$. The \defterm{supertranslation Lie algebra} is the $\so(V)$-equivariant super Lie algebra $\fA = \Pi\Sigma\oplus V$ whose only nontrivial bracket is given by $\Gamma$.
\end{definition}

For a given spinorial representation, the pairing $\Gamma$ is typically unique up to a scale, so a supertranslation Lie algebra is specified by fixing a spinorial representation. In turn, a spinorial representation is determined by the dimension of the multiplicity space, so we will talk about $\mc{N}$ or $(\mc{N}_+, \mc{N}_-)$ supertranslation Lie algebras, where the numbers are specified as follows.
\begin{itemize}
\item If $n\equiv 0, 1, 3, 4\pmod 8$, we let $\mc{N} = \dim(W)$.

\item If $n\equiv 2 \pmod 8$, we let $\mc{N}_{\pm}=\dim(W_{\pm})$.

\item If $n\equiv 5, 7\pmod 8$, we let $2\mc{N} = \dim(W)$.

\item If $n\equiv 6\pmod 8$, we let $2\mc{N}_{\pm} = \dim(W_{\pm})$.
\end{itemize}

Fix the following data:
\begin{itemize}
\item A spinorial representation $\Sigma$ of $\so(V)$.

\item An $\so(V)$-equivariant nondegenerate pairing $\Gamma\colon \sym^2(\Sigma)\rightarrow V$.

\item A Lie group $G_R$, the \defterm{group of $R$-symmetries}, which acts on $\Sigma$ by $\so(V)$-equivariant automorphisms preserving $\Gamma$.
\end{itemize}

Note that the supertranslation Lie algebra $\fA$ is a $\Spin(V_\RR)\times G_R$-equivariant super Lie algebra.  We will sometimes want to refer to the infinitesimal version of this action.
\begin{definition}
Let $\fA$ be a supertranslation algebra.  The corresponding \defterm{supersymmetry algebra} is the super Lie algebra $(\so(V) \oplus \gg_R) \ltimes \fA$.
\end{definition}

We will now define the fundamental notion of a supersymmetric field theory.  Consider a spacetime manifold $M=V_\RR$. Let $\ISO(V_\RR) = \Spin(V_\RR)\ltimes V_\RR$ be the \defterm{Poincar\'{e} group} which acts by affine transformations on $M$.

\begin{definition}
\label{dfn: super}
A classical field theory $(E, S, \omega)$ is \defterm{supersymmetric} if $E\rightarrow M$ is an $\ISO(V_\RR)\times G_R$-equivariant vector bundle and the infinitesimal strict action of the translation Lie algebra $V$ on the classical theory is extended to a $\Spin(V_\RR)\times G_R$-equivariant $L_\infty$ action of the supertranslation Lie algebra $\fA$ on the classical theory.
\end{definition}

\subsection{Composition Algebras and Minimal Supersymmetry}
\label{sect:compositionalgebras}

We will now recall a relationship between certain ``minimal'' supersymmetry algebras and composition algebras. Our treatment will essentially follow that of Baez and Huerta \cite{BaezHuerta}.
This section provides the representation theoretic underpinning for theories with minimally supersymmetric matter in dimensions $\geq 3$.

Let $A$ be a unital (possibly non-associative) complex algebra equipped with an antiinvolution $\sigma\colon A\rightarrow A$. We make the following assumptions:
\begin{enumerate}
\item The map $\Re(x)=x\mapsto (x + \sigma(x))/2$ defines a projector onto the subspace of $A$ spanned by the unit.

\item By the previous assumption we have a quadratic form $x\sigma(x)\colon A\rightarrow \CC$. We assume that it is nondegenerate.
\end{enumerate}

In fact, the data of the antiinvolution $\sigma$ may equivalently be encoded in the data of a non-degenerate multiplicative norm $x\mapsto x\sigma(x)$, i.e. $A$ is a real composition algebra \cite[Chapter 1.3]{SpringerVeldkamp}.

For a $2\times 2$-matrix $M$ with entries in $A$ we define its hermitian adjoint $M^\dagger$ by transposing the matrix and applying $\sigma$ to the entries. 
Let $V$ to be the complex vector space of $2\times 2$ Hermitian matrices with values in $A$.
Note that $\dim(V) = \dim(A) + 2$. 
The space $V$ carries a nondegenerate quadratic form given by $M\mapsto -\det(M)$. 
Moreover, it carries an orthogonal involution $\widetilde{M} = M - \tr(M)\cdot 1$.
Let ${\rm C}\ell_V$ be the resulting Clifford algebra.

Given a left $A$-module, we may turn it into a right $A$-module via the antiinvolution $\sigma\colon A\rightarrow A$. Since $A$ is a Frobenius algebra, we have the following basic construction 

\begin{lemma}
Suppose $M$ is a left $A$-module and $N$ a right $A$-module equipped with a nondegenerate pairing $(-, -)\colon M\otimes N\rightarrow \CC$ satisfying $(am, n) = (m, na)$ for every $a\in A$, $m\in M$ and $n\in N$. Then there is a unique map $(-, -)^A\colon M\otimes N\rightarrow A$ of $(A, A)$-bimodules whose real part is $(-, -)$.
\label{lm:extendpairing}
\end{lemma}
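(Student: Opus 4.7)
The plan is to construct the enhanced pairing via Frobenius duality for $A$. The first step is to observe that $A$ is a Frobenius algebra over $\CC$: the bilinear form $B\colon A\otimes A\to \CC$, $B(a,b) = \Re(ab)$, is symmetric and non-degenerate. Non-degeneracy is immediate since $B(a, \sigma(a)) = \Re(a\sigma(a)) = a\sigma(a)$ is a non-zero scalar for $a\neq 0$ by assumption~(2). Symmetry follows from the trace-like identity $\Re(xy) = \Re(yx)$: decomposing $x = \Re(x) + \tilde{x}$ with $\sigma(\tilde{x}) = -\tilde{x}$ and similarly for $y$, one computes $[x, y] = [\tilde{x}, \tilde{y}]$, and $\sigma([\tilde{x},\tilde{y}]) = -[\tilde{x},\tilde{y}]$ shows this commutator is purely imaginary, hence has zero real part.

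For fixed $m\in M$ and $n\in N$, the assignment $a\mapsto (am, n)$ is a linear functional on $A$, so by non-degeneracy of $B$ there is a unique element $(m, n)^A\in A$ satisfying
\[\Re\bigl(a\cdot(m, n)^A\bigr) = (am, n)\quad\text{for all }a\in A.\]
Setting $a = 1$ gives $\Re((m, n)^A) = (m, n)$, the required condition on real parts, and the assignment $(m, n)\mapsto (m, n)^A$ is manifestly bilinear in $m$ and $n$.

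It remains to verify the $(A, A)$-bimodule property. Left $A$-linearity is direct:
\[\Re\bigl(a\cdot(bm, n)^A\bigr) = (abm, n) = \Re\bigl(ab\cdot(m, n)^A\bigr) = \Re\bigl(a\cdot b(m, n)^A\bigr),\]
which yields $(bm, n)^A = b(m, n)^A$ by non-degeneracy of $B$. For right $A$-linearity, use the hypothesis $(am', n') = (m', n'a)$ with $m' = m$ and $n' = nb$, together with the trace identity above:
\[\Re\bigl(a\cdot(m, nb)^A\bigr) = (am, nb) = \bigl(m, n(ba)\bigr) = \Re\bigl(ba\cdot(m, n)^A\bigr) = \Re\bigl(a\cdot(m, n)^A\, b\bigr),\]
so $(m, nb)^A = (m, n)^A\, b$. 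Uniqueness follows by the same principle: if $\mu\colon M\otimes N\to A$ is any bimodule map with $\Re\circ\mu = 0$, then $\Re(a\cdot\mu(m, n)) = \Re(\mu(am, n)) = 0$ for all $a$, forcing $\mu\equiv 0$ by non-degeneracy of $B$.

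The main subtlety is the trace identity $\Re(xy)=\Re(yx)$, which underlies both the symmetry of $B$ and the right-linearity calculation; once this is established via the short computation with imaginary parts sketched above, everything reduces to standard Frobenius-duality manipulations. For the non-associative octonionic case, the associativity used in moving parentheses between $ba\cdot c$ and $a\cdot(cb)$ is replaced by alternativity together with the fact that $\Re$ annihilates associators, which does not affect the conclusion.
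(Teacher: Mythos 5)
The paper states this lemma without proof, as a ``basic construction'' following from $A$ being a Frobenius algebra; your argument is exactly the Frobenius-duality construction that remark alludes to. The overall structure --- define $(m,n)^A$ by $\Re\bigl(a\cdot(m,n)^A\bigr)=(am,n)$ against the nondegenerate trace form $B(a,b)=\Re(ab)$, then verify the bimodule identities and uniqueness by testing against all $a$ --- is correct, as are the trace identity $\Re(xy)=\Re(yx)$ and the left/right linearity computations.

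One justification is wrong, although the fact it is meant to establish is true. You claim $B$ is nondegenerate because $B(a,\sigma(a))=a\sigma(a)$ is a nonzero scalar for $a\neq 0$. Assumption (2) of Section \ref{sect:compositionalgebras} only says the quadratic form $x\mapsto x\sigma(x)$ is \emph{nondegenerate}, not anisotropic, and over $\CC$ these are very different: in $A=\CC[x]/(x^2+1)$ with $\sigma(x)=-x$ the element $a=1+ix$ satisfies $a\sigma(a)=1+x^2=0$, and null vectors likewise abound in $\mathbb{H}\otimes_\RR\CC$ and $\mathbb{O}\otimes_\RR\CC$ --- indeed the existence of such isotropic elements is precisely what makes square-zero supercharges exist. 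The repair is one line: the polarization of the norm form is $\langle x,y\rangle=\Re(x\sigma(y))$, which is nondegenerate by assumption (2), and $B(x,y)=\langle x,\sigma(y)\rangle$ with $\sigma$ invertible, so $B$ is nondegenerate. With that substitution the rest of your proof goes through; your closing caveat that in the non-associative case one should use that $\Re$ annihilates associators is the right one, though the paper only invokes the $A$-valued pairing in earnest when $A$ is associative.
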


Consider the left $A$-module $\Sigma=A\oplus A$ and the right $A$-module $\Sigma^* = A\oplus A$.
We will equip the sum $\Sigma \oplus \Sigma^*$ with an action of the Clifford algebra ${\rm C}\ell_V$.  
The action on the two summands $\Sigma$ and $\Sigma^*$ will be different, hence the different notation.  
Define the action maps
\[\rho \colon V \otimes \Sigma\rightarrow \Sigma^*,\qquad \rho \colon V\otimes \Sigma^*\rightarrow \Sigma\]
by
\[\rho(M)Q = M Q,\qquad \rho(M)Q = \widetilde{M} Q.\]

The following is proved in \cite[Proposition 6]{BaezHuerta}.

\begin{prop}
The action maps $V\otimes \Sigma\rightarrow \Sigma^*$ and $V\otimes \Sigma^*\rightarrow \Sigma$ satisfy the Clifford relation
\[\rho(M)\rho(M) = -\det(M)\cdot 1.\]
\end{prop}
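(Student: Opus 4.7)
The plan is to verify the Clifford relation by direct computation, exploiting the concrete description of $V$ as $2\times 2$ hermitian matrices and the alternativity of a composition algebra to handle the potential failure of associativity of $A$.

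First, I would parametrize a general element of $V$. Since $M$ is hermitian, its diagonal entries are fixed by $\sigma$, i.e.\ they lie in the scalar (unit) subspace $\CC \subset A$, and its off-diagonal entries are related by $\sigma$. So I may write
\[M = \begin{pmatrix} a & y \\ \sigma(y) & b \end{pmatrix}, \qquad a,b \in \CC,\ y\in A,\]
with $\tr(M)=a+b$, $\widetilde{M} = \begin{pmatrix} -b & y \\ \sigma(y) & -a \end{pmatrix}$, and $\det(M) = ab - y\sigma(y)$, using that $y\sigma(y) = \sigma(y) y \in \CC$ is the norm of $y$.

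Next, I would compute $\rho(M)\rho(M)$ on each summand. For $Q = (q_1,q_2)^t \in \Sigma$, $\rho(M)\rho(M)Q = \widetilde{M}(MQ)$, which I would expand entrywise. Each of the resulting four entries is a sum of terms of three types: scalar$\,\cdot$scalar$\,\cdot\,q_i$ terms (which are unambiguous), terms of the shape $(cy)q_i$ vs.\ $c(yq_i)$ or $y(cq_i)$ for $c\in\CC$, and the crucial quadratic-in-$y$ term $\sigma(y)(yq_i)$ (or $y(\sigma(y)q_i)$). I would handle these using two facts: \textbf{(i)} elements of $\CC = \CC\cdot 1$ lie in the nucleus, so they commute and associate with everything in $A$; \textbf{(ii)} $A$ is \emph{alternative} (this is the representation-theoretic content of the assumption that $x\mapsto x\sigma(x)$ is a multiplicative norm, by Hurwitz's theorem), so the subalgebra generated by $y$ and $\sigma(y)$ is associative, giving $\sigma(y)(yq) = (\sigma(y)y)q = N(y)q$ and $y(\sigma(y)q) = N(y)q$. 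Collecting terms, the off-diagonal pieces cancel and each diagonal piece gives $(-ab + y\sigma(y))q_i = -\det(M)\,q_i$, so $\widetilde{M}(MQ) = -\det(M)\, Q$. The computation for $Q\in\Sigma^*$ is essentially identical, yielding $M(\widetilde{M} Q) = -\det(M)\, Q$; here the right-module structure means one uses the mirror-image alternativity identity $(q\sigma(y))y = q(\sigma(y)y)$.

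The main obstacle is precisely the non-associativity of $A$ in the octonionic case: a priori there is no reason for $\widetilde{M}(MQ)$ to equal $(\widetilde{M} M) Q$, and the latter is a purely matrix computation that would trivialize the result. What saves the day is that in each entrywise expansion only two $A$-valued quantities (namely $y$ and $\sigma(y)$, together with scalars) ever appear inside a nested product, and Artin's theorem for alternative algebras guarantees that the subalgebra they generate is associative. Once that point is made explicit, the verification reduces to bookkeeping and the identity $y\sigma(y) = \sigma(y)y = N(y)$.
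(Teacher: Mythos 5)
Your computation is correct, and it is essentially the standard argument: the paper does not prove this proposition itself but cites \cite[Proposition 6]{BaezHuerta}, whose proof is exactly this entrywise expansion of $\widetilde{M}(MQ)$ and $M(\widetilde{M}Q)$, using that the diagonal entries are scalars, that $y\sigma(y)=\sigma(y)y$ is the scalar norm, and that alternativity (via Artin's theorem, or directly via $\sigma(y)(yq)=(\tr(y)-y)(yq)=(\tr(y)y-y^2)q=N(y)q$) resolves the only nested products that could fail to associate. One small inaccuracy in your write-up: for $Q\in\Sigma^*$ the Clifford action is still left matrix multiplication by $\widetilde{M}$ and then $M$ on the column vector $Q$, so the same left-sided identities $\sigma(y)(yq)=N(y)q=y(\sigma(y)q)$ suffice and no mirror-image right-module identity is needed; this does not affect the validity of the proof.
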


As a result, $\Sigma\oplus \Sigma^*$ forms a $\ZZ/2\ZZ$-graded Clifford module.

We have a nondegenerate scalar spinorial pairing
\[\Sigma\otimes \Sigma^*\longrightarrow \CC\]
given by
\[(Q_1, Q_2) = \Re(Q_1^\dagger Q_2).\]
It obviously satisfies $(Q_1 \sigma(a), Q_2) = (Q_1, Q_2 a)$. The extension to an $A$-valued pairing provided by Lemma \ref{lm:extendpairing} is given by
\[(Q_1, Q_2)^A = Q_1^\dagger Q_2.\]

By duality we obtain maps $\Gamma\colon\Sym^2(\Sigma)\rightarrow V$ and $\Gamma\colon \Sym^2(\Sigma^*)\rightarrow V$ given, respectively, by
\[\Gamma(Q_1, Q_2) = \widetilde{Q_1Q_2^\dagger + Q_2Q_1^\dagger},\qquad \Gamma(Q_1, Q_2) = Q_1Q_2^\dagger + Q_2Q_1^\dagger.\]

We will now state two important properties of $\Gamma$ and the spinorial pairing. The following statement was proved in \cite[Theorem 11]{BaezHuerta} (see also \cite{Schray} for the case $\dim(V)=10$).

\begin{theorem}
Suppose $A$ is alternative, i.e. $a\otimes b\otimes c\mapsto (ab)c - a(bc)$ is completely antisymmetric. For $Q_1, Q_2, Q_3\in\Sigma$ we have
\[\rho(\Gamma(Q_1, Q_2))Q_3 + \rho(\Gamma(Q_2, Q_3))Q_1 + \rho(\Gamma(Q_2, Q_3))Q_1 = 0.\]
\label{thm:3psi}
\end{theorem}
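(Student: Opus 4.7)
The plan is to polarize and then reduce to a concrete identity in the composition algebra $A$. Since $\Gamma$ is symmetric and $\rho$ is linear, the left-hand side of the claimed identity is a symmetric trilinear function of $(Q_1, Q_2, Q_3)$, so by polarization it suffices to verify the diagonal case $\rho(\Gamma(Q,Q))Q = 0$ for every $Q \in \Sigma$.

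Writing $Q = \begin{pmatrix} a \\ b \end{pmatrix}$ with $a, b \in A$, I would compute
\[QQ^\dagger = \begin{pmatrix} a\sigma(a) & a\sigma(b) \\ b\sigma(a) & b\sigma(b) \end{pmatrix}, \qquad \widetilde{QQ^\dagger} = \begin{pmatrix} -b\sigma(b) & a\sigma(b) \\ b\sigma(a) & -a\sigma(a) \end{pmatrix}.\]
Since $\Gamma(Q,Q) = 2\widetilde{QQ^\dagger}$, matrix-multiplying by the column vector $Q$ presents $\rho(\Gamma(Q,Q))Q$ as twice the column vector with entries $-(b\sigma(b))a + (a\sigma(b))b$ and $(b\sigma(a))a - (a\sigma(a))b$. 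The theorem therefore reduces to the single algebraic identity
\[(a\sigma(b))b = (b\sigma(b))a\]
in $A$, together with the analogous identity obtained by swapping $a$ and $b$.

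Here is where alternativity enters. Decomposing $\sigma(b) = 2\Re(b) - b$ and using that $2\Re(b) \in \RR$ is a real scalar, hence central, I would rewrite
\[(a\sigma(b))b = 2\Re(b)(ab) - (ab)b.\]
The right-alternative law $(xy)y = x(y^2)$, valid in any alternative algebra, gives $(ab)b = a(b^2)$; refactoring,
\[(a\sigma(b))b = 2\Re(b)(ab) - a(b^2) = a(2\Re(b)b - b^2) = a(b\sigma(b)).\]
Because $b\sigma(b) = |b|^2 \in \RR$ is central, this equals $(b\sigma(b))a$, as required. The second identity follows from the same argument with $a$ and $b$ interchanged.

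The main subtlety, and thus the place one must take care, is isolating exactly where alternativity is used: the identity $(a\sigma(b))b = a(b\sigma(b))$ is not the right-alternative law on the nose. One really must split $\sigma(b)$ into its real and imaginary parts first, so that right-alternativity applies only to the imaginary part. Without alternativity -- for example in the sedenions -- this identity genuinely fails, consistent with the theorem being essentially equivalent to the Hurwitz classification of normed division algebras and hence with the restriction of minimally supersymmetric Yang--Mills theory to dimensions $3, 4, 6, 10$ corresponding to $\RR, \CC, \mathbb{H}, \mathbb{O}$.
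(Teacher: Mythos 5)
Your proof is correct, and it is essentially the argument of \cite[Theorem 11]{BaezHuerta}, which is all the paper itself offers here (the statement is quoted with a citation rather than proved): polarize the symmetric trilinear expression to the diagonal case $\rho(\Gamma(Q,Q))Q = 2\,\widetilde{QQ^\dagger}\,Q$, and reduce to the identity $(a\sigma(b))b = a(\sigma(b)b) = N(b)\,a$, which follows from the right-alternative law after splitting off the central real part of $\sigma(b)$. Note only that the displayed statement contains a typo --- the third summand should be $\rho(\Gamma(Q_3,Q_1))Q_2$ --- and your argument proves exactly this corrected, cyclically (hence fully) symmetric version.
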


If we moreover assume $A$ is associative, there is a relationship between the scalar spinorial pairing and $\Gamma$.

\begin{theorem}
Suppose $A$ is associative. For $Q_1, Q_2\in\Sigma$ and $Q_3\in\Sigma^*$ we have
\[Q_1(Q_2, Q_3)^A + Q_2(Q_1, Q_3)^A = \rho(\Gamma(Q_1, Q_2)) Q_3.\]
\label{thm:matter3psi}
\end{theorem}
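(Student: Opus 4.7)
The plan is to prove the identity by direct computation, carefully unwinding the definitions of $\rho$, $\Gamma$, and the $A$-valued pairing, and then observing that the result is equivalent to associativity of matrix multiplication over $A$.

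First I would unfold the right-hand side. By definition $\Gamma(Q_1,Q_2) = \widetilde{Q_1 Q_2^\dagger + Q_2 Q_1^\dagger}$, and the Clifford action of $V$ on $\Sigma^*$ is given by $\rho(M)Q_3 = \widetilde{M}\,Q_3$. A quick computation for $2\times 2$ matrices shows that the involution $M\mapsto \widetilde{M} = M - \tr(M)\cdot 1$ is its own inverse (since $\tr(\widetilde{M}) = -\tr(M)$, one has $\widetilde{\widetilde{M}} = M$). Therefore
\[
\rho(\Gamma(Q_1,Q_2))\,Q_3 \;=\; \bigl(Q_1Q_2^\dagger + Q_2Q_1^\dagger\bigr)\,Q_3.
\]

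Next I would compare this with the left-hand side. By the defining formula for the $A$-valued pairing extending the scalar pairing on $\Sigma\otimes\Sigma^*$, we have $(Q_i, Q_3)^A = Q_i^\dagger Q_3$, so
\[
Q_1(Q_2,Q_3)^A + Q_2(Q_1,Q_3)^A \;=\; Q_1\bigl(Q_2^\dagger Q_3\bigr) + Q_2\bigl(Q_1^\dagger Q_3\bigr).
\]
The theorem is thus the statement that $(Q_1Q_2^\dagger)Q_3 = Q_1(Q_2^\dagger Q_3)$ and likewise with $1,2$ swapped, i.e.\ that the formation of the outer product $Q_iQ_j^\dagger$ followed by action on $Q_3$ coincides with first contracting $Q_j^\dagger Q_3\in A$ and then multiplying by $Q_i$. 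This is precisely associativity of matrix multiplication over $A$ applied to the triple product $Q_i\cdot Q_j^\dagger \cdot Q_3$ of a column, row, and column vector, which holds whenever $A$ itself is associative.

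The only subtle point is the reduction via $\widetilde{\widetilde M} = M$, which explains why the action on $\Sigma^*$ (rather than on $\Sigma$) is what matches up cleanly with $\Gamma\colon \Sym^2(\Sigma)\to V$: composing the two occurrences of $\sim$ in the definitions of $\Gamma$ and $\rho|_{\Sigma^*}$ cancels them out. After that, no further identity of composition algebras is required --- unlike Theorem~\ref{thm:3psi}, where one genuinely needs the alternative identity, here only associativity is used, which is automatic from the hypothesis. No non-trivial obstacle remains.
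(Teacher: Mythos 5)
Your proof is correct and follows essentially the same route as the paper's: unwind the definitions of $\Gamma$ and the $A$-valued pairing so that both sides become the associativity statement $(Q_iQ_j^\dagger)Q_3 = Q_i(Q_j^\dagger Q_3)$ for column--row--column products over $A$. The one step you make explicit that the paper leaves implicit is the cancellation $\widetilde{\widetilde{M}} = M$ between the tilde in $\Gamma$ and the tilde in the Clifford action on $\Sigma^*$, which is a worthwhile clarification but not a different argument.
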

\begin{proof}
The right-hand side is
\[(Q_1Q_2^\dagger + Q_2Q_1^\dagger)Q_3\]
which by associativity can be rewritten as
\[Q_1(Q_2^\dagger Q_3) + Q_2(Q_1^\dagger Q_3)\]
which is the left-hand side.
\end{proof}

Note that since $\dim(V) = \dim(A) + 2$, the above constructions are only valid in dimensions $\geq 3$.
We will be interested in the following examples:
\begin{enumerate}
\item (\textbf{3d $\mc N=1$ supersymmetry}) $A = \CC$. 
Here, $\dim(V) = 3$ and $\Sigma$ is the spin representation of $\Spin(3, \CC)$.

\item (\textbf{4d $\mc N=1$ supersymmetry}) $A = \CC\otimes_{\RR}\CC\cong \CC[x]/(x^2+1)$ with $\sigma(x) = -x$. Moreover, $\dim(V) = 4$ and $\Sigma=S_+\oplus S_-$ is the sum of semi-spin representations of $\Spin(4;\CC)$.

\item (\textbf{6d $\mc N=(1, 0)$ supersymmetry}) $A=\mathbb{H}\otimes_{\RR}\CC\cong \eend(W_+)$, where $W_+$ is a two-dimensional symplectic vector space with $\sigma$ given by the dual operator. Moreover, $\dim(V) = 6$ and $\Sigma = S_+\otimes W_+$ is the sum of two copies of a semi-spin representation of $\Spin(6;\CC)$.

\item (\textbf{10d $\mc N=(1, 0)$ supersymmetry}) $A=\mathbb{O}\otimes_{\RR}\CC$. We have $\dim(V) = 10$ and $\Sigma = S_+$ is a semi-spin representation of $\Spin(10;\CC)$.
\end{enumerate}

All four examples are alternative, while the first three examples are also associative.

\subsection{Two-dimensional Chiral Supersymmetry} \label{sect:2dchiral}

In the previous section we have related composition algebras to minimal supersymmetry algebras in dimensions 3, 4, 6 and 10. 
In this section we explain a different relationship between composition algebras and supersymmetry algebras, this time in the case of 2d $\cN=(\cN_+, 0)$ supersymmetry.

Recall that in the case $\dim(V) = 2$ we have two one-dimensional semi-spin representations $S_+, S_-$. Moreover, we have an isomorphism
\[V\cong S_+^{\otimes 2}\oplus S_-^{\otimes 2}\]
and a pairing $( -, -)\colon S_+\otimes S_-\rightarrow \CC$, both of which are $\so(V)$-equivariant. We denote the embeddings $S_{\pm}^{\otimes 2}\subset V$ by $\Gamma_{\pm}$, so that
\begin{equation}
(\Gamma_+(s, s), \Gamma_-(f, f)) = 2(s, f)^2.
\label{eq:2dvectorpairing}
\end{equation}

Let $W$ be a complex vector space of dimension $\cN_+$ equipped with a nondegenerate symmetric bilinear pairing. We consider the spinorial representation
\[\Sigma = S_+\otimes W\]
and its dual
\[\Sigma^* = S_-\otimes W.\]
The Clifford action $\rho \colon V\otimes S_+\rightarrow S_-$ is defined so that
\[\rho(\Gamma_-(f, f)) s = 2(s, f) f\]
and similarly for $V\otimes S_-\rightarrow S_+$.

\begin{prop}
For $v,w\in V$ and $s\in S_+\oplus S_-$ we have
\[\rho(v)\rho(w)s + \rho(w)\rho(v) s = 2(v, w) s.\]
\end{prop}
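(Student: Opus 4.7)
The plan is to reduce the identity, which is bilinear in $v,w$, to a finite check on a suitably chosen basis of $V$. First, I will use the decomposition $V = S_+^{\otimes 2} \oplus S_-^{\otimes 2}$ together with the fact that $\dim S_\pm = 1$ to fix bases $s_0 \in S_+, f_0 \in S_-$ with $(s_0, f_0)=1$, producing a basis $e_+ := \Gamma_+(s_0,s_0), e_- := \Gamma_-(f_0,f_0)$ of $V$. Equation \eqref{eq:2dvectorpairing} then specialises to $(e_+,e_+)=(e_-,e_-)=0$ and $(e_+,e_-)=2$, so the two summands $S_\pm^{\otimes 2} \subset V$ are the isotropic lines of the hyperbolic plane $V$.

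Before computing, I would record the action of $\rho$ on $s_0$ and $f_0$. The formula in the hypothesis gives $\rho(e_-)s_0 = 2(s_0,f_0)f_0 = 2f_0$, and the ``similarly'' clause gives $\rho(e_+)f_0 = 2(s_0,f_0)s_0 = 2s_0$. The remaining two actions $\rho(e_+)s_0 \in S_-$ and $\rho(e_-)f_0 \in S_+$ must both vanish: indeed, $\Spin(V,\CC)$ is a one-dimensional torus, and checking weights one sees that $\rho\colon V \otimes S_\pm \to S_\mp$ has no nontrivial equivariant component on $S_\pm^{\otimes 2} \otimes S_\pm$. (Alternatively, this is the content of the ``similarly'' clause read symmetrically: only the dual-chirality half of $V$ acts nontrivially on a given spinor.)

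With this data in place, the proof reduces to checking the four cases $(v,w) \in \{(e_+,e_+),(e_-,e_-),(e_+,e_-),(e_-,e_+)\}$ applied to each of $s_0, f_0$. The diagonal cases are immediate: $\rho(e_+)^2$ annihilates both $s_0$ and $f_0$ because $\rho(e_+)s_0 = 0$ and $\rho(e_+)(2s_0)=0$, matching $2(e_+,e_+)=0$; symmetrically for $e_-$. For the off-diagonal case applied to $s_0$, I compute
\[
\rho(e_+)\rho(e_-)s_0 + \rho(e_-)\rho(e_+)s_0 = \rho(e_+)(2f_0) + 0 = 4 s_0 = 2(e_+,e_-)s_0,
\]
and analogously for $f_0$. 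Extending both sides of the claimed identity bilinearly in $v$ and $w$ then concludes the proof.

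The only mildly subtle point is the vanishing $\rho(e_+)|_{S_+} = 0 = \rho(e_-)|_{S_-}$, which is not spelled out in the setup but is forced by $\so(V)$-equivariance of $\rho$; once this is acknowledged the computation is entirely mechanical, so I expect no real obstacle beyond making the conventions explicit.
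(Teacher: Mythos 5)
Your proof is correct and follows essentially the same route as the paper: reduce by bilinearity to the case where $v$ and $w$ lie in the isotropic summands $S_\pm^{\otimes 2}$ of $V$, observe that the same-chirality components act by zero (so only the mixed case contributes), and verify the mixed case by the direct computation $\rho(\Gamma_+(s,s))\rho(\Gamma_-(f,f))s = 4(s,f)^2 s = 2(\Gamma_+(s,s),\Gamma_-(f,f))s$ using \eqref{eq:2dvectorpairing}. You are slightly more explicit than the paper about why $\rho(e_+)|_{S_+}=0$, which is a welcome clarification rather than a deviation.
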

\begin{proof}
It is enough to prove the claim for $s\in S_+$, $w\in S_+^{\otimes 2}$ and $v\in S_-^{\otimes 2}$. Assume $w=\Gamma_+(s, s)$ and $v = \Gamma_-(f, f)$ for $f\in S_-$. Then we have
\begin{align*}
\rho(\Gamma_+(s, s)) \rho(\Gamma_-(f, f)) s &= 2(s, f) \rho(\Gamma_+(s, s)) f \\
&= 4(s, f)^2 s.
\end{align*}
But by \eqref{eq:2dvectorpairing} we have
\[(\Gamma_+(s, s), \Gamma_-(f, f)) = 2(s, f)^2\]
which proves the claim.
\end{proof}

The Clifford action $V\otimes S_{\pm}\rightarrow S_{\mp}$ extends in an obvious way to a Clifford action $V\otimes \Sigma\rightarrow \Sigma^*$ and $V\otimes \Sigma^*\rightarrow \Sigma$ given by the identity on the $W$ component. Thus, $\Sigma\oplus \Sigma^*$ is a $\ZZ/2\ZZ$-graded Clifford module.

The spaces $\Sigma, \Sigma^*$ are equipped with $\so(V)$-equivariant nondegenerate pairings $\Gamma\colon \Sym^2(\Sigma)\rightarrow V$ defined by
\[\Gamma(s\otimes q_1, s\otimes q_2) = \Gamma_+(s, s) (q_1, q_2)\]
and $\Gamma\colon \Sym^2(\Sigma^*)\rightarrow V$ defined similarly.

\begin{prop}
For any $v\in V$ and $Q_1, Q_2\in\Sigma$ or $Q_1, Q_2\in\Sigma^*$ we have
\[(v, \Gamma(Q_1, Q_2)) = (\rho(v) Q_1, Q_2).\]
\end{prop}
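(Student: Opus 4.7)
The strategy is a direct verification after reducing to a manageable set of special elements. By bilinearity in $Q_1, Q_2$ and linearity in $v$, it suffices to take $Q_i = s_i \otimes q_i$ to be pure tensors and to split $v = v_+ + v_-$ along the decomposition $V \cong S_+^{\otimes 2} \oplus S_-^{\otimes 2}$, treating each summand separately. I will describe the argument for $Q_1, Q_2 \in \Sigma$ (so $s_i \in S_+$); the case $Q_1, Q_2 \in \Sigma^*$ is identical upon interchanging $S_+ \leftrightarrow S_-$.

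For the $v_-$ contribution, writing $v_- = \Gamma_-(f, f)$ and expanding $\Gamma(Q_1, Q_2) = \Gamma_+(s_1, s_2)(q_1, q_2)$, the polarization of equation~\eqref{eq:2dvectorpairing} yields
\[
(v_-, \Gamma(Q_1, Q_2)) = 2(s_1, f)(s_2, f)(q_1, q_2).
\]
On the other side, the defining formula for the Clifford action gives $\rho(v_-) Q_1 = 2(s_1, f)\, f \otimes q_1 \in \Sigma^*$, and pairing against $Q_2 = s_2 \otimes q_2$ using the factorization of the spinorial pairing on $\Sigma \otimes \Sigma^*$ into $(S_+ \otimes S_- \to \CC) \otimes (W \otimes W \to \CC)$ produces the same value $2(s_1, f)(s_2, f)(q_1, q_2)$. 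The case of general $v_- \in S_-^{\otimes 2}$ then follows by linearity.

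For the $v_+$ summand the left-hand side vanishes immediately, since the pairing on $V$ identifies $S_+^{\otimes 2}$ and $S_-^{\otimes 2}$ as dual isotropic subspaces (they are weight $\pm 2$ eigenspaces for $\Spin(2)$, and the pairing is invariant). The main subtlety is to verify that $\rho(v_+)$ also acts by zero on $S_+$, which is not built into the explicit definition. This I would deduce from the Clifford relation $\rho(v_+)^2 = (v_+, v_+) = 0$ established in the previous proposition, combined with the one-dimensionality of $S_\pm$: the composition $S_+ \xrightarrow{\rho(v_+)} S_- \xrightarrow{\rho(v_+)} S_+$ is a map between one-dimensional spaces, so its vanishing forces one of the two factors to be zero, and the explicit formula $\rho(\Gamma_+(s_0, s_0)) f = 2(s_0, f)\, s_0$ shows that the factor $\rho(v_+)\colon S_- \to S_+$ is generically nonzero, hence $\rho(v_+)|_{S_+} = 0$ as required.
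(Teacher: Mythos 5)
Your proposal is correct, and the substantive computation is the same as the paper's: for $v=\Gamma_-(f,f)$ both sides reduce to $2(s,f)^2(q_1,q_2)$ using the defining formula $\rho(\Gamma_-(f,f))s = 2(s,f)f$ and the polarization of \eqref{eq:2dvectorpairing} (the paper takes $s_1=s_2=s$ outright, which loses nothing since $S_+$ is one-dimensional). The only place you diverge is the $v_+$ summand, which the paper silently drops because the only nonzero components of the Clifford action are by convention $S_\mp^{\otimes 2}\otimes S_\pm\to S_\mp$ (this is stated explicitly in the proof of Theorem \ref{thm:2d3psi}), so $\rho(v_+)|_{S_+}=0$ is definitional; your derivation of this vanishing from the Clifford relation is valid if you treat that relation as an established black box, but it is mildly circular in spirit, since the paper's proof of the Clifford relation itself reduces to the mixed-chirality case by invoking the very same vanishing.
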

\begin{proof}
It is enough to prove the claim with $Q_1, Q_2\in\Sigma$. Assume $v = \Gamma_-(f, f)$ for some $f\in S_-$, $Q_1 = s\otimes q_1$ and $Q_2 = s\otimes q_2$. Then the left-hand side is
\[(\Gamma_-(f, f), \Gamma_+(s, s)) (q_1, q_2) = 2(s, f)^2 (q_1, q_2).\]
The right-hand side is
\begin{align*}
(\rho(\Gamma_-(f, f) s\otimes q_1, s\otimes q_2) &= (2(s, f) f\otimes q_1, s\otimes q_2) \\
&= 2(s, f)^2 (q_1, q_2).
\end{align*}
\end{proof}

An important property of two-dimensional chiral supersymmetry is the following analog of Theorem \ref{thm:3psi}.

\begin{theorem}
For $Q_1, Q_2, Q_3\in\Sigma$ we have
\[\rho(\Gamma(Q_1, Q_2))Q_3 = 0.\]
\label{thm:2d3psi}
\end{theorem}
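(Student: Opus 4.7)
My plan is to reduce the theorem to two elementary observations about the chiral decomposition $V = S_+^{\otimes 2}\oplus S_-^{\otimes 2}$: first, that $\Gamma(\Sym^2(\Sigma))\subset S_+^{\otimes 2}$; and second, that the Clifford action $\rho(v)$ annihilates $\Sigma$ whenever $v\in S_+^{\otimes 2}$. The theorem is then immediate by composition.

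The first observation is a direct consequence of the defining formula $\Gamma(s\otimes q_1,s\otimes q_2) = \Gamma_+(s,s)(q_1,q_2)$ extended bilinearly to all of $\Sym^2(\Sigma)$: since $\Sigma = S_+\otimes W$, every value of $\Gamma$ factors through $\Gamma_+$ and hence lands in $S_+^{\otimes 2}\subset V$.

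For the second observation I would argue as follows. Because $\rho$ is the identity on the multiplicity factor $W$, it suffices to show that $\rho(v)s = 0$ in $S_-$ for all $v\in S_+^{\otimes 2}$ and $s\in S_+$. Fix nonzero generators $s_0\in S_+,\ f_0\in S_-$ with $(s_0,f_0)\neq 0$ and set $v := \Gamma_+(s_0,s_0)$; write $\rho(v)s_0 = \alpha f_0$ for some $\alpha\in\CC$. Since the summand $S_+^{\otimes 2}$ is totally isotropic in $V$, the Clifford relation yields $\rho(v)^2 = (v,v)\cdot 1 = 0$ as an operator on $\Sigma\oplus\Sigma^*$. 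Applying this identity to $s_0$ and using the already-established companion formula $\rho(\Gamma_+(s_0,s_0))f_0 = 2(s_0,f_0)s_0$ for the action of $V$ on $S_-$, we obtain
\[
0 \;=\; \rho(v)^2 s_0 \;=\; \rho(v)(\alpha f_0) \;=\; 2\alpha(s_0,f_0)\, s_0,
\]
which forces $\alpha = 0$. By linearity and the one-dimensionality of $S_+$, this gives $\rho(v)s = 0$ for all $v\in S_+^{\otimes 2}$ and $s\in S_+$, completing the argument.

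I do not expect any serious obstacle: the proof is a bookkeeping exercise using the chiral splitting of $V$ together with the Clifford relation. The only mild subtlety is to notice that the vanishing of the ``mixed chirality'' Clifford component $\rho\colon S_+^{\otimes 2}\otimes S_+\to S_-$ is forced by the explicit formulas defining $\rho$, rather than being an independent input; alternatively one can invoke a weight-counting argument for $\so(V;\CC)\cong\CC$, under which $S_+^{\otimes 3}$ and $S_-$ carry different weights and so admit no nonzero equivariant map.
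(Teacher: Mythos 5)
Your proof is correct and follows essentially the same route as the paper: one observes that $\Gamma(Q_1,Q_2)$ lands in $S_+^{\otimes 2}\subset V$, and that this summand acts by zero on $\Sigma = S_+\otimes W$. The paper simply reads the second fact off the definition of $\rho$ (only the $S_-^{\otimes 2}$ component of $V$ acts nontrivially on $S_+\otimes W$), which is exactly the weight-counting remark you offer at the end; your detour through $\rho(v)^2=(v,v)\cdot 1=0$ is unnecessary and mildly circular, since the verification of the Clifford relation in this setting already presupposes that the mixed-chirality component $S_+^{\otimes 2}\otimes S_+\to S_-$ vanishes.
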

\begin{proof}
Indeed, $\Gamma(Q_1, Q_2)$ lies in $S_+^{\otimes 2}\subset V$, but the nonzero Clifford action is given by
\[S_-^{\otimes 2}\otimes (S_+\otimes W)\longrightarrow S_-\otimes W.\]
\end{proof}

We will now fix a composition algebra $A$ with an antiinvolution $\sigma$ as in Section \ref{sect:compositionalgebras} and set $W = A$. The nondegenerate symmetric bilinear pairing $a_1, a_2\mapsto \Re(a_1\sigma(a_2))$ on $A$ endows $W$ with a pairing. Both $\Sigma$ and $\Sigma^*$ are right $A$-modules and the Clifford actions $V\otimes \Sigma\rightarrow \Sigma^*$ and $V\otimes \Sigma^*\rightarrow \Sigma$ are maps of right $A$-modules.

Since $\Sigma$ and $\Sigma^*$ are right $A$-modules, by Lemma \ref{lm:extendpairing} we may extend the scalar spinorial pairing to an $A$-valued pairing $\Sigma\otimes \Sigma^*\rightarrow A$ by
\[(s_1\otimes q_1, s_2\otimes q_2)^A = (s_1, s_2)\sigma(q_1) q_2.\]
We now give the analogue of Theorem \ref{thm:matter3psi} in the 2d chiral setting.

\begin{theorem}
For $Q_1, Q_2\in\Sigma$ and $Q_3\in\Sigma^*$ we have
\[Q_1(Q_2, Q_3)^A + Q_2(Q_1, Q_3)^A = \rho(\Gamma(Q_1, Q_2))Q_3.\]
\label{thm:2dmatter3psi}
\end{theorem}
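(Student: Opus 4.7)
The plan is to compute both sides on pure tensors and reduce the identity to a purely algebraic fact about the composition algebra $A$. By multilinearity it suffices to take $Q_1 = s_1 \otimes q_1$, $Q_2 = s_2 \otimes q_2$ in $\Sigma = S_+\otimes A$ and $Q_3 = f\otimes q_3$ in $\Sigma^* = S_-\otimes A$. Unfolding the $A$-valued pairing from Lemma \ref{lm:extendpairing} as $(s_i \otimes q_i, f\otimes q_3)^A = (s_i, f)\sigma(q_i)q_3$, and using that $\Sigma$ is a right $A$-module through the second tensor factor, the left-hand side becomes
\[
(s_2, f)\, s_1 \otimes q_1(\sigma(q_2) q_3) \;+\; (s_1, f)\, s_2 \otimes q_2(\sigma(q_1) q_3).
\]

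For the right-hand side, I would expand $\Gamma(Q_1, Q_2) = \Gamma_+(s_1, s_2)\,\Re(q_1\sigma(q_2))$, using the formula for $\Gamma$ and the fact that the bilinear pairing on $W=A$ is $(a,b) = \Re(a\sigma(b))$. Polarizing the identity $\rho(\Gamma_+(s,s))f = 2(s,f)s$ stated earlier gives $\rho(\Gamma_+(s_1, s_2))f = (s_1,f)s_2 + (s_2,f)s_1$, hence
\[
\rho(\Gamma(Q_1, Q_2))\,Q_3 = \bigl((s_1,f)s_2 + (s_2,f)s_1\bigr)\otimes \Re(q_1\sigma(q_2))\, q_3.
\]
Crucially, in two dimensions $S_+$ is one-dimensional, so writing $s_i = c_i s$ for a basis $s\in S_+$ collapses both sides to the form $c_1 c_2 (s,f)\, s \otimes (\,\cdot\,)$, and comparing the inner $A$-factors reduces the theorem to the scalar identity
\[
q_1(\sigma(q_2) q_3) + q_2(\sigma(q_1) q_3) = 2\Re(q_1\sigma(q_2))\, q_3 \quad \text{in } A.
\]

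The main step is to establish this algebraic identity, which I would obtain by polarizing in $q$ the defining composition algebra law $q(\sigma(q) q') = N(q) q'$. This law holds in any composition algebra because $\sigma(q)$ lies in the linear span of $1$ and $q$, so the subalgebra generated by $q$ is associative and no three-fold associator appears. Setting $q = q_1 + q_2$, expanding $N(q_1 + q_2) = N(q_1) + 2\Re(q_1\sigma(q_2)) + N(q_2)$, and canceling the diagonal terms $q_i(\sigma(q_i)q') = N(q_i) q'$ yields exactly the required identity. In contrast to Theorem \ref{thm:matter3psi}, no associativity of $A$ as a whole is required here; the composition algebra structure suffices. The main potential pitfall is careful bookkeeping of the order of multiplications in $A$, which is made tractable by the one-dimensionality of $S_+$ in dimension two.
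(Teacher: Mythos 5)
Your proof is correct, and it diverges from the paper's at the one step that actually carries the content. The paper also evaluates on pure tensors with a fixed basis $s$ of $S_+$, writes the right-hand side as $(s,f)\,s\otimes (q_1\sigma(q_2)+q_2\sigma(q_1))q_3$ using $2(q_1,q_2)=q_1\sigma(q_2)+q_2\sigma(q_1)$, and then concludes by \emph{rebracketing}: the two sides differ only by the placement of parentheses, so associativity of $A$ finishes the argument (consistent with the standing hypothesis in Section \ref{sect:mattermultipletSUSY} that $A$ is associative). You instead reduce to the scalar identity $q_1(\sigma(q_2)q_3)+q_2(\sigma(q_1)q_3)=2\Re(q_1\sigma(q_2))q_3$ and obtain it by polarizing $q(\sigma(q)q')=N(q)q'$, which is a standard composition-algebra identity (it follows from multiplicativity of the norm, cf.\ \cite[Chapter 1]{SpringerVeldkamp}, or from alternativity together with $\sigma(q)\in\mathrm{span}(1,q)$). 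This buys genuine extra generality: your argument never uses associativity of $A$, so the theorem would also hold for the octonionic case, in parallel with how Theorem \ref{thm:3psi} only needs alternativity; the paper's route is shorter but tied to the associative examples it actually uses. One small caveat: your justification of $q(\sigma(q)q')=N(q)q'$ via "the subalgebra generated by $q$ is associative" is not quite sufficient as stated, since $q'$ is an arbitrary element of $A$ rather than an element of that subalgebra --- the clean statement is that the associator $[q,\sigma(q),q']$ vanishes because the associator of an alternative algebra is alternating and $\sigma(q)$ is a linear combination of $1$ and $q$; alternatively, derive the identity directly by linearizing $N(xy)=N(x)N(y)$. With that justification tightened, the proof is complete.
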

\begin{proof}
Pick basis elements $s\in S_+$ and $f\in S_-$, so that
\[Q_1 = s\otimes q_1,\qquad Q_2 = s\otimes q_2,\qquad Q_3 = f\otimes q_3.\]
The right-hand side is
\begin{align*}
(q_1, q_2) \rho(\Gamma_+(s, s)) f\otimes q_3 &= 2(s, f) s\otimes (q_1, q_2) q_3 \\
&= (s, f) s\otimes (q_1\sigma(q_2) + q_2\sigma(q_1)) q_3.
\end{align*}
We have
\[Q_1(Q_2, Q_3)^A = s\otimes q_1 (s, f) (\sigma(q_2)q_3),\]
so the left-hand side is
\[s(s, f)\otimes (q_1(\sigma(q_2) q_3) + q_2(\sigma(q_1)q_3)).\]
By associativity of $A$ the two expressions are equal.
\end{proof}

\subsection{Supersymmetric Twisting}
The idea of twisting, originally developed by Witten \cite{WittenTQFT}, is to modify a classical BV theory by deforming the differential $Q_{\mr{BV}}$ by the action of a square-zero fermionic symmetry. 

\begin{definition}
A \defterm{square-zero supercharge} is a nonzero element $Q\in\Sigma$ such that $\Gamma(Q, Q)=0$. Its \defterm{number of invariant directions} is the dimension of the image of $\Gamma(Q, -)\colon \Sigma\rightarrow V$.
\end{definition}

It is shown in \cite[Proposition 3.25]{ElliottSafronov} that the number $d$ of invariant directions is at least $n/2$. We will use the following adjectives for square-zero supercharges depending on $d$:
\begin{itemize}
\item A supercharge $Q$ is \defterm{topological} if $d = n$.

\item A supercharge $Q$ is \defterm{holomorphic} if $n$ is even and $d=n/2$.

\item A supercharge $Q$ is \defterm{minimal} if $n$ is odd and $d=(n+1)/2$.
\end{itemize}

In the intermediate case we refer to $Q$ as a \defterm{holomorphic-topological} (alternatively, partially topological) supercharge. The collection of all square-zero supercharges in dimensions 2 through 10 (where one restricts to supersymmetries with at most 16 supercharges) was studied in \cite{ElliottSafronov} and \cite{EagerSaberiWalcher}. In particular, the orbits of square-zero supercharges under the $R$-symmetry group, $\Spin(V)$ and the obvious scaling action of $\CC^\times$ are shown in Figure \ref{fig:superchargeorbits}.

Let $(E, S, \omega)$ be a supersymmetric classical field theory. Recall, this means we have a Maurer-Cartan element 
\[
S_{\fA} = S + \sum_{k \geq 1} S_{\fA}^{(k)} \in C^\bu(\fA, \oloc(\cE))
\]
where $S_{\fA}^{(k)}\colon \fA^{\otimes k} \to \oloc(\cE)$ as in Definition \ref{infinitesimal_action_def} and the classical field theory has an action of the $R$-symmetry group $G_R$.

\begin{definition} \label{def:twisting}
Suppose $(E, S, \omega)$ is a supersymmetric classical field theory and $Q$ a square-zero supercharge. The \defterm{$Q$-twisted classical field theory} is the $\ZZ/2\ZZ$-graded classical field theory with the same bundle of $BV$ fields and symplectic pairing $\omega$, but with the BV action
\[S^Q = S + \sum_{k \geq 1} S_{\fA}^{(k)}(Q, \dots, Q).\]
\end{definition}

Given additional data, we may enhance the classical field theory.

\begin{definition}
Let $Q\in\Sigma$ be a square-zero supercharge. A homomorphism $\alpha\colon U(1)\rightarrow G_R$ is \defterm{compatible} with $Q$ if $Q$ has weight 1 and the $\alpha$-weight mod 2 on $E$ coincides with the fermionic grading.
\end{definition}

Given such an $\alpha$ we may consider a new $\ZZ$-grading on $E$ given by the sum of the cohomological grading and the grading given by $\alpha$. The map $S_{\fA}^{(k)}\colon \fA^{\otimes k} \to \oloc(\cE)$ is $G_R$-equivariant, so the element $S_{\fA}^{(k)}(Q, \dots, Q)$ has $\alpha$-weight $k$. But it also has cohomological degree $-k$. In other words, the twisted action $S^Q$ has total degree zero, so $(E, S^Q, \omega)$ is a $\ZZ$-graded classical field theory.

\begin{definition}
Let $Q\in\Sigma$ be a square-zero supercharge and suppose $\iota \colon G\rightarrow \Spin(V_\RR)$ is a fixed group homomorphism. A \defterm{twisting homomorphism} is a homomorphism $\phi\colon G\rightarrow G_R$ such that $Q$ is preserved under the product $(\iota, \phi) \colon G\rightarrow \Spin(V_\RR)\times G_R$.
\end{definition}

The classical field theory $(E, S, \omega)$ carries a $\Spin(V_\RR)\times G_R$-action. However, the $Q$-twisted theory $(E, S^Q, \omega)$ does not in general carry a $\Spin(V_\RR)\times G_R$-action since the elements $S_{\fA}^{(k)}(Q, \dots, Q)$ are not in general invariant under $\Spin(V_\RR)\times G_R$. However, given a twisting homomorphism $\phi$ we see that $S_{\fA}^{(k)}(Q, \dots, Q)$ is preserved under $G$, so $(E, S^Q, \omega)$ carries a $G$-action.

\subsection{Dimensional Reduction of Supersymmetric Theories}

Suppose $V_\RR=\RR^n$ as before and choose a subspace $W_\RR\subset V_\RR$, so that $V_\RR = W_\RR\oplus W^\perp_\RR$. We denote $W=W_{\RR}\otimes_{\RR} \CC$.

Fix a spinorial representation $\Sigma$ of $\so(V)$, a nondegenerate pairing $\Gamma_V\colon \sym^2(\Sigma)\rightarrow V$ and a group of $R$-symmetries $G_V$.  This datum generates a supersymmetry algebra, which we will denote by $\mf A$.  We have a natural embedding
\[\so(W)\oplus \so(W^{\perp})\subset \so(V),\]
so $\Sigma$ restricts to a spinorial $\so(W)$ representation. We define the dimensionally reduced $\Gamma$-pairing as the composite
\[\Gamma_W\colon \sym^2(\Sigma)\xrightarrow{\Gamma_V} V\rightarrow W,\]
where the last map is the orthogonal projection onto $W$. Finally, we have a new $R$-symmetry group
\[G_W = G_V\times \Spin(W^\perp_\RR).\]
This datum generates a supersymmetry algebra $\mf A'$ in dimension $\mr{dim}(W_\RR)$ as defined in Section \ref{sect:susyalgebras}.

Recall from Proposition \ref{prop:dimensionalreductionunique} that the dimensional reduction of a classical field theory along the projection $p\colon V_\RR\rightarrow W_\RR$ exists and is unique. We have the following generalization of this statement to supersymmetric theories.

\begin{prop} \label{prop:susydimlred}
Suppose $(E, \omega, S)$ is an $\mf A$-supersymmetric classical field theory on $V_\RR$. Then its dimensional reduction along the projection $p\colon V_\RR\rightarrow W_\RR$ has a unique $\mf A'$-supersymmetric structure, compatible with the supersymmetry on $V_\RR$ in the sense that $p^*S^{(i)}_{V_\RR} = S^{(i)}_{W_\RR}$.
\end{prop}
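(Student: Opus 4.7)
The plan is to construct the $\fA'$-supersymmetric structure on the dimensionally reduced theory by restricting each component $S^{(k)}_{V_\RR}\colon\fA^{\otimes k}\to\oloc(\cE_{V_\RR})$ along the inclusion $p^*\colon\cE_{W_\RR}\hookrightarrow\cE_{V_\RR}$ of sections translation-invariant along $W^\perp_\RR$, and then to verify that this restriction defines a Maurer--Cartan element for the smaller Lie algebra $\fA'$.

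For uniqueness, I would observe that $p^*$ is injective on sections and hence, by the argument of Proposition \ref{prop:dimensionalreductionunique}, also injective on translation-invariant local functionals. The compatibility condition $p^*S^{(i)}_{V_\RR}=S^{(i)}_{W_\RR}$ then determines each $S^{(i)}_{W_\RR}$ uniquely.

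For existence, the key observation is that $V\subset\fA$ is central: since the only nontrivial bracket in $\fA=\Pi\Sigma\oplus V$ is $\Gamma_V\colon\sym^2\Sigma\to V$, the $L_\infty$-action of $\fA$ on $\cE_{V_\RR}$ commutes with translations, and in particular preserves the subspace $\cE_{W_\RR}$. On translation-invariant sections, the abelian ideal $W^\perp\subset V\subset\fA$ acts as zero, so the restricted action factors through the Lie algebra quotient $\pi\colon\fA\twoheadrightarrow\fA/W^\perp$. Because $\Gamma_W=\pi_W\circ\Gamma_V$ by definition, this quotient is precisely $\fA'$. I would thus define $S^{(k)}_{W_\RR}$ by composing $S^{(k)}_{V_\RR}$ with any vector-space splitting $\fA'\hookrightarrow\fA$ of $\pi$ and restricting along $p^*$; independence from the choice of splitting is forced by the vanishing of the $W^\perp$-components on $\cE_{W_\RR}$. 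The enlarged R-symmetry factor $\Spin(W^\perp_\RR)\subset G_W$ arises from the restriction of the Poincar\'e subgroup $\ISO(W^\perp_\RR)\subset\ISO(V_\RR)$: its translation subgroup acts trivially on $\cE_{W_\RR}$, leaving a residual rotational action of $\Spin(W^\perp_\RR)$ that becomes internal from the perspective of the reduced theory, and compatibility with $S_{\fA'}$ follows from the original $\Spin(V_\RR)\times G_V$-equivariance.

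The main obstacle is verifying that the restricted data satisfies the Maurer--Cartan equation for $\fA'$. The Chevalley--Eilenberg differential descends cleanly once we identify $C^\bullet(\fA',-)$ with the subcomplex of $C^\bullet(\fA,-)$ consisting of cochains vanishing on $W^\perp$, using that $\pi$ is a Lie algebra surjection. The less trivial point is that restriction along $p^*$ must intertwine the BV brackets on $\oloc(\cE_{V_\RR})$ and $\oloc(\cE_{W_\RR})$; this follows from the compatibility of the $(-1)$-shifted pairings $\omega_{V_\RR}$ and $\omega_{W_\RR}$ built into the definition of dimensional reduction, together with translation invariance of the functionals in question, which together imply $p^*(\d_{\mr{CE}}S_\fA+\tfrac12\{S_\fA,S_\fA\})=\d_{\mr{CE}}S_{\fA'}+\tfrac12\{S_{\fA'},S_{\fA'}\}$, completing the verification.
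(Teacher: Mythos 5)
Your argument is correct and lands in the same place as the paper, but the packaging is genuinely different, so a comparison is worthwhile. The paper's proof is a one-line reduction to Proposition \ref{prop:dimensionalreductionunique}: one regards the Noether data $S_{\fA}=S+\sum_k S^{(k)}$ as the action functional of the theory coupled to constant background fields generated by $\Sigma$, so that translation invariance of this enlarged functional yields existence and uniqueness of its reduction --- Maurer--Cartan equation, equivariance and all --- in one stroke. You instead restrict each $S^{(k)}$ to translation-invariant sections by hand and check that $p^*$ intertwines the Chevalley--Eilenberg differentials and the BV brackets; this is more work but buys transparency, and in particular it makes explicit a point the paper's proof glosses over: what one obtains a priori is an action of the original algebra $\fA$ on the reduced theory, and one must separately observe that it factors through the quotient $\fA\twoheadrightarrow \fA/W^\perp\cong \fA'$ (your identification of this quotient with $\fA'$, using $\Gamma_W=\mathrm{proj}_W\circ\Gamma_V$, is correct, as is your account of the extra $\Spin(W^\perp_\RR)$ factor in $G_W$). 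The one step you should tighten is precisely the input to that factorization: you justify the vanishing on $\cE_{W_\RR}$ only of the linear term $S^{(1)}(v)$ for $v\in W^\perp$ (the Hamiltonian for $L_v$, which indeed kills translation-invariant sections), but the descent of $\d_{\mr{CE}}$ from $\fA$ to $\fA'$ also requires the mixed higher terms $S^{(k)}(v,Q,\dots)$, $k\geq 2$, $v\in W^\perp$, to vanish there. Definition \ref{infinitesimal_action_def} does not exclude such terms a priori; they are absent in every theory constructed in Section \ref{sect:SYM}, where $S^{(k)}$ for $k\geq 2$ is supported on $\Sigma^{\otimes k}$, but your write-up should either impose this hypothesis or supply the extra sentence showing these contributions die on translation-invariant sections. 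With that addendum your proof is complete.
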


\begin{proof}
This follows from the proof of Proposition \ref{prop:dimensionalreductionunique} by coupling the theory $(E, \omega, S)$ to auxiliary fields generated by the representation $\Sigma$.
\end{proof}

The following proposition is an immediate consequence of Proposition \ref{prop:susydimlred} and Definition \ref{def:twisting}.

\begin{prop}
Fix a square-zero supercharge $Q$ and a compatible homomorphism $\alpha\colon \U(1)\rightarrow G_R$. Then the dimensional reduction of the twist of the classical field theory $E$ is isomorphic to the twist of the dimensional reduction of $E$.
\label{prop:twistdimensionalreduction}
\end{prop}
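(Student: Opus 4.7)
The plan is to directly verify that the two constructions — twist-then-reduce and reduce-then-twist — produce the same classical BV theory, by identifying their BV actions under the pullback $p^*$. By Proposition \ref{prop:dimensionalreductionunique}, a dimensional reduction along $p\colon V_\RR \to W_\RR$ is uniquely characterized (up to canonical isomorphism) by the condition $p^* S_{V_\RR} = S_{W_\RR}$, together with compatibility of the bundles of fields and symplectic pairings. So it suffices to verify this pullback identity with $S_{V_\RR}$ replaced by the twisted action $S^Q_{V_\RR}$ and $S_{W_\RR}$ replaced by the twist of the dimensional reduction.

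First I would verify that $Q$ genuinely defines a square-zero supercharge for the reduced supersymmetry algebra $\mf A'$ with $R$-symmetry group $G_W = G_V \times \Spin(W_\RR^\perp)$. Since $\Gamma_W = \pi_W \circ \Gamma_V$ where $\pi_W\colon V \to W$ is the orthogonal projection, the vanishing $\Gamma_V(Q,Q) = 0$ immediately implies $\Gamma_W(Q,Q) = 0$. Likewise, the compatible homomorphism $\alpha\colon \U(1) \to G_V$ factors through $G_W$ via the first factor, and its compatibility with $Q$ is unaffected by the reduction since the fermionic grading on $E$ is intrinsic to the bundle and $\alpha$ never mixes with the new $\Spin(W_\RR^\perp)$ factor.

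Next I would invoke Proposition \ref{prop:susydimlred}: by its uniqueness statement, the Noether currents $S^{(k)}_{\mf A'}$ of the dimensionally reduced theory are characterized by the compatibility $p^* S^{(k)}_{\mf A} = S^{(k)}_{\mf A'}$ as multilinear functionals on $\Sigma$. Evaluating both sides on the constant argument $Q \in \Sigma$ — the same element for $\mf A$ and $\mf A'$, since $\Sigma$ is unchanged by the reduction — yields
\[
p^*\bigl(S^{(k)}_{\mf A}(Q, \ldots, Q)\bigr) = S^{(k)}_{\mf A'}(Q, \ldots, Q).
\]
Summing over $k \geq 1$ and adding $p^* S_{V_\RR} = S_{W_\RR}$ then gives $p^* S^Q_{V_\RR} = S^Q_{W_\RR}$. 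By the uniqueness clause of Proposition \ref{prop:dimensionalreductionunique}, this identifies the two twisted theories.

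The argument is essentially formal and I do not expect a serious obstacle. The only subtlety worth pausing on is bookkeeping: one must check that the fermionic grading, the cohomological grading, and the $\alpha$-grading used to promote the $\ZZ/2\ZZ$-graded twisted theory back to a $\ZZ$-graded theory all transport correctly under $p^*$, and that the classical master equation for $S^Q_{W_\RR}$ follows from that for $S^Q_{V_\RR}$. Since $p^*$ only affects spacetime dependence and acts as the identity on the fiber of $E$, the gradings are preserved on the nose; the master equation is preserved because $p^*$ is a morphism of $P_0$-algebras on the subspace of translation-invariant local functionals.
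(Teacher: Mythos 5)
Your argument is correct and is exactly the route the paper takes: the paper states this proposition as an immediate consequence of Proposition \ref{prop:susydimlred} (which gives $p^*S^{(k)}_{\fA} = S^{(k)}_{\fA'}$) together with Definition \ref{def:twisting}, and your write-up simply makes that one-line deduction explicit. The extra bookkeeping you note about gradings and the master equation is sound but not treated separately in the paper.
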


\section{Supersymmetric Yang--Mills Theories} \label{sect:SYM}

In this section we construct supersymmetry algebra action on super Yang--Mills theories. We have the following versions of super Yang--Mills theory depending on $\dim(\Sigma)$:
\begin{itemize}
\item (16 supercharges). This theory exists in dimensions 2 through 10 and depends on a Lie algebra $\fg$.

\item (8 supercharges). This theory exists in dimensions 2 through 6 and depends on a Lie algebra $\fg$ together with a symplectic $\fg$-representation $U$.

\item (4 supercharges). This theory exists in dimensions 2 through 4 and depends on a Lie algebra $\fg$ together with a $\fg$-representation $R$.

\item (2 supercharges). This theory exists in dimensions 2 through 3 and depends on a Lie algebra $\fg$ together with an orthogonal $\fg$-representation $P$.
\end{itemize}

There are a few additional possibilities that occur in dimension 2.

\begin{itemize}
\item ($\cN_+$ supercharges, chiral supersymmetry). This theory exists in dimension 2 and depends on a Lie algebra $\fg$.

\item (4 supercharges, chiral supersymmetry). This theory exists in dimension 2 and depends on a Lie algebra $\fg$ together with a symplectic $\fg$-representation $U$.

\item (2 supercharges, chiral supersymmetry). This theory exists in dimension 2 and depends on a Lie algebra $\fg$ together with a $\fg$-representation $R$.

\item (1 supercharge, chiral supersymmetry). This theory exists in dimension 2 and depends on a Lie algebra $\fg$ together with an orthogonal $\fg$-representation $P$.
\end{itemize}

In each case the lower-dimensional theories are obtained by dimensional reduction from the theory in the highest dimension: for instance, 7d $\mc N=1$ super Yang--Mills (16 supercharges) is obtained by dimensional reduction from 10d $\mc N=(1, 0)$ super Yang--Mills. So, it will be enough to construct the supersymmetry action in these highest-dimensional theories.

\subsection{Super Yang--Mills Theory: Pure Gauge Theory}
\label{sect:gaugemultipletSUSY}

We begin with a description of certain pure supersymmetric Yang--Mills theories. Let $V_\RR = \RR^n$ be a real vector space of dimension $n$ equipped with a nondegenerate symmetric bilinear pairing and let $V$ be its complexification. 
Fix a $\ZZ/2\ZZ$-graded Clifford module $\Sigma\oplus \Sigma^*\rightarrow \CC$ with the associated $\Gamma$-pairings
\[\Gamma\colon \Sym^2(\Sigma)\rightarrow V,\qquad \Gamma\colon \Sym^2(\Sigma^*)\rightarrow V\]
defined as in Section \ref{sect:spinors}. We make the following assumption on this setup.

\begin{assumption}
For $Q_1, Q_2, Q_3\in\Sigma$ we have
\[\rho(\Gamma(Q_1, Q_2))Q_3 + \rho(\Gamma(Q_2, Q_3))Q_1 + \rho(\Gamma(Q_3, Q_1))Q_2 = 0.\]
\label{assumption:3psi}
\end{assumption}

\begin{itemize}
\item (\textbf{2d $\cN=(\cN_+, 0)$ supersymmetry}) We have $\dim(V) = 2$ and $\Sigma = S_+\otimes W$ for some complex vector space $W$ equipped with a nondegenerate symmetric bilinear pairing. Assumption \ref{assumption:3psi} is satisfied by Theorem \ref{thm:2d3psi}.

\item (\textbf{3d $\cN=1$ supersymmetry}) We have $\dim(V) = 3$ and $\Sigma = S$. Assumption \ref{assumption:3psi} is satisfied by Theorem \ref{thm:3psi}.

\item (\textbf{4d $\cN=1$ supersymmetry}) We have $\dim(V) = 4$ and $\Sigma = S_+\oplus S_-$. Assumption \ref{assumption:3psi} is satisfied by Theorem \ref{thm:3psi}.

\item (\textbf{6d $\cN=(1, 0)$ supersymmetry}) We have $\dim(V) = 6$ and $\Sigma = S_+\otimes W_+$ for a two-dimensional complex symplectic vector space $W_+$. Assumption \ref{assumption:3psi} is satisfied by Theorem \ref{thm:3psi}.

\item (\textbf{10d $\cN=(1, 0)$ supersymmetry}) We have $\dim(V) = 10$ and $\Sigma = S_+$. Assumption \ref{assumption:3psi} is satisfied by Theorem \ref{thm:3psi}.
\end{itemize}

Let $\fg$ be a Lie algebra equipped with a nondegenerate symmetric bilinear pairing. The BRST fields of the Yang--Mills theory are as follows:
\begin{itemize}
\item A connection $A \in \Omega^1(V_\RR; \fg)$ on the trivial bundle.

\item A spinor $\lambda \in \Gamma(V_\RR; \Pi \Sigma \otimes \fg)$.

\item A ghost field $c\in\Omega^0(V_\RR; \fg[1])$.
\end{itemize}

Denote by $F_A = \d A + \frac{1}{2}[A\wedge A]$ the curvature of $A$ and let $\sd\d_A$ be the twisted Dirac operator obtained from $\Gamma$ (see Section \ref{sect:spinors}).

\begin{definition}
\label{def:sym}
The BRST theory for classical supersymmetric Yang--Mills theory has underlying $\ZZ \times \ZZ/2\ZZ$-graded bundle:
\[
F_{\rm gauge} = \Omega^1(V_\RR; \fg) \oplus \Gamma(V_\RR; \Pi \Sigma \otimes \fg) \oplus \Omega^0(V_\RR; \fg[1])
\]
whose sections we denote by $(A, \lambda, c)$.  
The dg Lie structure on $F_{\rm gauge}[-1]$ has differential given by the de Rham differential $\d \colon \Omega^0(V_{\RR} ; \fg) \to \Omega^1(V_{\RR} ; \fg)$ and bracket
\begin{align*}
[-,-]  \colon  \Omega^0(V_{\RR} ; \fg) \otimes \left(\Omega^1(V_{\RR} ; \fg) \oplus \Gamma(V_\RR ; \Sigma \otimes \fg) \oplus \Omega^0(V_{\RR} ; \fg) \right) & \to  \Omega^1(V_{\RR} ; \fg) \oplus \Gamma(V_\RR ; \Sigma \otimes \fg) \oplus \Omega^0(V_\RR ; \fg)
\end{align*}
defined by $[c, A + \lambda + c'] = [c, A] + [c, \lambda] + [c,c']$.
The BRST action is defined by
\[
S_{\mr{BRST}}(A, \lambda) = \int_{V_\RR}\dvol \left( -\frac{1}{4} (F_A, F_A) + \frac{1}{2}(\lambda, \sd \d_A \lambda)\right).
\]
\end{definition}

The BV theory of supersymmetric Yang--Mills is the BV theory associated to this BRST theory. 
By definition, the fields are identified with sections of the bundle $T^*[-1] F_{\rm gauge} = F_{\rm gauge} \oplus F_{\rm gauge}^! [-1]$. 
If we denote by $(A^*, \lambda^*,c^*)$ the anti-fields, the full BV action takes the form:
\begin{equation}
S_{\gauge} = \int_{V_\RR}\dvol \left( -\frac{1}{4} (F_A, F_A) + \frac{1}{2}(\lambda, \sd\d_A \lambda) + (\d_A c, A^*) + ([c, \lambda], \lambda^*) + \frac{1}{2}([c, c], c^*)\right).
\label{eq:YMBVaction}
\end{equation}

To simplify the notation, the pairing on $\fg$ from now on will be implicit.

The Poincar\'e group acts, in the sense of Definition \ref{group_action_def}, on Yang--Mills theory on $\RR^n$. Indeed, there is an obvious Poincar\'e action on fields where we use that $\Sigma$ is a representation of $\Spin(V_\RR)$. The corresponding Hamiltonian is given by
\begin{equation}
S_{\gauge}^{(1)}(v) = \int_{V_\RR}\dvol\left( (L_{v}A, A^*) - (v.\lambda, \lambda^*) - (v.c)c^*\right),
\label{eq:YMPoincareaction}
\end{equation}
for $v\in\mf{iso}(V)$, where $v.\lambda$ contains both a derivative and the $\so(V)$ action on $\Sigma$.

We will now construct an $L_\infty$ action of the super Lie algebra $\mf{A}$ on the theory. Following Definition \ref{infinitesimal_action_def}, we have to prescribe a collection of functionals $S_{\gauge}^{(1)}, S_{\gauge}^{(2)}, \dots$, where $S_{\gauge}^{(k)}\colon \mf{A}^{\otimes k}\rightarrow \oloc(\cE)$, together satisfying the classical master equation. The supersymmetry action we construct will extend the Poincar\'{e} action from \eqref{eq:YMPoincareaction}, so we just have to specify the values of $S_{\gauge}^{(k)}$ on the supersymmetry generators in $\Sigma$. The action of supersymmetry is given by a linear and a quadratic functional
\begin{align}
S_{\gauge}^{(1)}(Q) &= \int_{V_\RR}\dvol\left( -(\Gamma(Q, \lambda), A^*) + \frac{1}{2}(\rho(F_A)Q, \lambda^*)\right) \label{eq:gaugeI1} \\
S_{\gauge}^{(2)}(Q_1, Q_2) &= \int_{V_\RR}\dvol\left( \frac{1}{4}(\Gamma(Q_1, Q_2), \Gamma(\lambda^*, \lambda^*)) - \frac{1}{2}(Q_1, \lambda^*)(Q_2, \lambda^*) - \iota_{\Gamma(Q_1, Q_2)} A c^*\right). \label{eq:gaugeI2}
\end{align}

The following theorem summarizes the fact that super Yang--Mills theory is indeed supersymmetric in the sense of Definition \ref{dfn: super}. 

\begin{theorem}\label{thm:gaugemultipletSUSY}
The functional $S_{\gauge, \fA} = S_{\gauge} + S_{\gauge}^{(1)} + S_{\gauge}^{(2)} \in C^\bu(\mf{A}, \oloc(\cE_{\gauge}))$ satisfies the classical master equation
\[\d_{\mr{CE}} \left(S_{\gauge, \fA}\right) + \frac{1}{2}\{S_{\gauge, \fA}, S_{\gauge, \fA}\} = 0.\]
\end{theorem}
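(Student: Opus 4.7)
The plan is to expand the classical master equation according to the Chevalley--Eilenberg filtration on $C^\bullet(\fA, \oloc(\cE_{\gauge}))$. Since $S_{\gauge}^{(k)}$ vanishes for $k\geq 3$, since $\d_{\mr{CE}}$ raises the $\fA$-degree by one, and since the BV bracket preserves this degree, the master equation decomposes into five identities, one for each $\fA$-degree from $0$ to $4$, which I verify separately.

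\textbf{Endpoints.} The degree $0$ identity $\{S_{\gauge}, S_{\gauge}\} = 0$ is the ordinary classical master equation for the BV super Yang--Mills action \eqref{eq:YMBVaction}, a standard computation. The degree $4$ identity collapses to $\{S_{\gauge}^{(2)}, S_{\gauge}^{(2)}\} = 0$ and is automatic: from \eqref{eq:gaugeI2}, $S_{\gauge}^{(2)}$ depends only on $A$, $\lambda^*$ and $c^*$, none of which are BV-conjugate to one another, so no two variables can be contracted. The degree $1$ identity $\{S_{\gauge}, S_{\gauge}^{(1)}\} = 0$ is Poincar\'e invariance when evaluated on inputs $v \in \mathfrak{iso}(V_\RR)$, and on-shell supersymmetry invariance when evaluated on $Q \in \Sigma$; the latter I would prove by integration by parts, the Bianchi identity $\d_A F_A = 0$, and Propositions \ref{prop:cliffordactionproperty2}--\ref{prop:cliffordactionproperty3} for the interaction of $\sd\d_A$ with the Clifford action on spinors.

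\textbf{The main identity.} The main computation is at $\fA$-degree $2$, where one must prove
\[\d_{\mr{CE}} S_{\gauge}^{(1)} + \{S_{\gauge}, S_{\gauge}^{(2)}\} + \tfrac{1}{2}\{S_{\gauge}^{(1)}, S_{\gauge}^{(1)}\} = 0.\]
Evaluated on two spinors $Q_1, Q_2\in\Sigma$, the CE-contribution is the translation Hamiltonian $S_{\gauge}^{(1)}(\Gamma(Q_1, Q_2))$. Expanding $\{S_{\gauge}^{(1)}(Q_1), S_{\gauge}^{(1)}(Q_2)\}$ by means of the explicit variations $\delta_Q A = -\Gamma(Q, \lambda)$ and $\delta_Q \lambda = \tfrac{1}{2}\rho(F_A)Q$ read off from \eqref{eq:gaugeI1}, the $A^*$-part produces $L_{\Gamma(Q_1, Q_2)}A$ together with a gauge transformation $\d_A \iota_{\Gamma(Q_1, Q_2)} A$ after applying Proposition \ref{prop:cliffordactionproperty1}; the gauge piece is precisely the $c^*$-term of $S_{\gauge}^{(2)}$. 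The $\lambda^*$-part produces cubic spinor expressions $\rho(\Gamma(Q_i, \lambda)) Q_j$ which, upon symmetrizing in $i,j$ and invoking the $3\psi$ identity of Assumption \ref{assumption:3psi}, collapse to an equation-of-motion term proportional to $\sd\d_A \lambda$ (supplied by the bracket with $S_{\gauge}$) plus residual quadratic $\lambda^*$ contributions that exactly reproduce the first two summands of $S_{\gauge}^{(2)}$. The mixed inputs, with one spinor and one Poincar\'e generator, reduce to equivariance of $\delta_Q$ under rotations and covariance under translations, which are immediate from the construction.

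\textbf{Degree 3 and the main obstacle.} The degree-$3$ identity $\d_{\mr{CE}} S_{\gauge}^{(2)} + \{S_{\gauge}^{(1)}, S_{\gauge}^{(2)}\} = 0$, evaluated on three spinors, reduces (after extending $S_{\gauge}^{(2)}$ to mixed spinor/vector inputs via the Poincar\'e action) to a totally symmetric cubic spinor combination that vanishes by a second application of Assumption \ref{assumption:3psi}. The substantive obstacle throughout is the degree $2$ step: one must first determine the precise form of $S_{\gauge}^{(2)}$ in \eqref{eq:gaugeI2} by following the Fierz rearrangement of the $\lambda^*$-component of $\{S_{\gauge}^{(1)}, S_{\gauge}^{(1)}\}$, and then recognize Assumption \ref{assumption:3psi} as the exact obstruction whose vanishing is required for the algebra to close. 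Outside the settings in which the $3\psi$ identity holds---namely the dimensions covered by Sections \ref{sect:compositionalgebras} and \ref{sect:2dchiral}---no local $S_{\gauge}^{(2)}$ can rescue the master equation, recovering the classical dimensional restrictions on minimally supersymmetric gauge theories.
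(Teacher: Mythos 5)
Your proposal is correct and follows essentially the same route as the paper: decompose the master equation by Chevalley--Eilenberg degree, observe that the degree-$4$ piece vanishes for support reasons, and verify the remaining identities term by term using Proposition \ref{prop:cliffordactionproperty1}, the Clifford relation, and Assumption \ref{assumption:3psi}, which is exactly how Lemmas \ref{lm:gaugemultiplet1}--\ref{lm:gaugemultiplet3} proceed. The one imprecision is in your degree-$1$ step: integration by parts, Bianchi, and Propositions \ref{prop:cliffordactionproperty2}--\ref{prop:cliffordactionproperty3} handle the terms involving $F_A$, but the variation $\delta_Q A = -\Gamma(Q,\lambda)$ inside the Yukawa term also produces the trilinear fermion expression $(\lambda,\rho(\Gamma(Q,\lambda))\lambda)=(\Gamma(Q,\lambda),\Gamma(\lambda,\lambda))$, which cancels against nothing and vanishes only by \eqref{eq:Gammaspinorpairing} together with Assumption \ref{assumption:3psi}; so the $3\psi$ identity already enters at degree $1$ (this is the classical Baez--Huerta argument for on-shell supersymmetry of the action), not only at degrees $2$ and $3$ as your last paragraph suggests.
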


This result implies that the pure gauge sector of super Yang--Mills theory carries an $L_\infty$ action by the super Lie algebra $\fA$.
We consider coupling to matter in the next section.

The rest of the section will be devoted to the proof of the above theorem. The classical master equation decomposes into the following equations:
\begin{align*}
\{S_{\gauge}, S_{\gauge}^{(1)}\} &= 0 \\
\{S_{\gauge}, S_{\gauge}^{(2)}\} + \d_{\mr{CE}} S_{\gauge}^{(1)} + \frac{1}{2}\{S_{\gauge}^{(1)}, S_{\gauge}^{(1)}\} &= 0 \\
\d_{\mr{CE}} S_{\gauge}^{(2)} + \{S_{\gauge}^{(1)}, S_{\gauge}^{(2)}\} &= 0 \\
\{S_{\gauge}^{(2)}, S_{\gauge}^{(2)}\} &= 0.
\end{align*}

Note that the last equation is automatically satisfied since $S_{\gauge}^{(2)}$ is independent of $\lambda$ and $c$. The rest of the claims will be proved in a sequence of Lemmas. To simplify the expressions, we drop the integrals from our notation.

\begin{lemma}
For each $Q\in \Sigma$, one has $\{S_{\gauge}, S_{\gauge}^{(1)}(Q)\} = 0$.
\label{lm:gaugemultiplet1}
\end{lemma}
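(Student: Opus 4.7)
The bracket $\{S_\gauge, S_\gauge^{(1)}(Q)\}$ is, by definition, the action of the Hamiltonian vector field $X_Q = X_{S_\gauge^{(1)}(Q)}$ on $S_\gauge$. Reading off the antifield couplings in
\[
S_\gauge^{(1)}(Q) = -(\Gamma(Q, \lambda), A^*) + \tfrac{1}{2}(\rho(F_A) Q, \lambda^*),
\]
the vector field $X_Q$ implements the anticipated supersymmetry transformations
\[
\delta_Q A = -\Gamma(Q, \lambda), \qquad \delta_Q \lambda = \tfrac{1}{2}\rho(F_A)Q,
\]
together with induced adjoint variations of $A^*$ and $\lambda^*$; the ghost $c$ does not appear in $S_\gauge^{(1)}(Q)$ and is unmoved. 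The plan is to compute $X_Q \cdot S_\gauge$ by a direct term-by-term calculation on the five summands of \eqref{eq:YMBVaction}.

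I would organize the bookkeeping into three blocks. First, the variation of $-\tfrac{1}{4}(F_A, F_A)$ produces, via $\delta_Q F_A = \d_A \delta_Q A$ and integration by parts, a term proportional to $(\d_A^* F_A, \Gamma(Q, \lambda))$. Second, the variation of $\tfrac{1}{2}(\lambda, \sd\d_A \lambda)$ splits into a $\delta_Q \lambda$ contribution of the form $\tfrac{1}{2}(\rho(F_A) Q, \sd\d_A \lambda)$ together with a contribution $-\tfrac{1}{2}(\lambda, \rho(\Gamma(Q, \lambda))\lambda)$ coming from $\delta_Q A$ inside the twisted Dirac operator $\sd\d_A = \sd\d + \rho(A)$. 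Integration by parts on the first of these, combined with Proposition \ref{prop:cliffordactionproperty3} applied to the 2-form $F_A$ and the Bianchi identity $\d_A F_A = 0$, cancels the $(\d_A^* F_A, \Gamma(Q, \lambda))$ term via the adjointness relation \eqref{eq:Gammaspinorpairing}: this is the familiar on-shell supersymmetry invariance of the Yang--Mills action. Third, the ghost terms $(\d_A c, A^*) + ([c, \lambda], \lambda^*) + \tfrac{1}{2}([c, c], c^*)$ pair against the variations of $A^*$ and $\lambda^*$ induced by the field-dependence of $S_\gauge^{(1)}(Q)$; these contributions will cancel by the $\fg$-equivariance of $\Gamma(Q, -)$ and of $\rho(F_A) Q$ under adjoint action.

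The main obstacle will be the cubic fermion term $-\tfrac{1}{2}(\lambda, \rho(\Gamma(Q, \lambda))\lambda)$. Expanding $\lambda = \lambda^a T_a$ in a basis of $\fg$, this becomes $-\tfrac{1}{2} f_{abc}\,(\lambda^a, \rho(\Gamma(Q, \lambda^c))\lambda^b)$, where $f_{abc}$ is the totally antisymmetric structure tensor determined by the invariant pairing on $\fg$. By \eqref{eq:Gammaspinorpairing} together with the symmetry of $\Gamma$, the spinor trilinear $(\lambda^a, \rho(v)\lambda^b)$ is symmetric in $a, b$ for every $v \in V$, so contracting a symmetric expression in $(a, b)$ against the totally antisymmetric $f_{abc}$ yields zero. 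I note that Assumption \ref{assumption:3psi} (the three-$\psi$ Fierz identity) will not be needed at this step; it should instead be invoked when verifying the closure relation $\{S_\gauge^{(1)}, S_\gauge^{(1)}\} + 2\{S_\gauge, S_\gauge^{(2)}\} + 2\d_{\mr{CE}}S_\gauge^{(1)} = 0$ in the subsequent lemmas. The remaining difficulty will be tracking Koszul and integration-by-parts signs consistently across all three blocks.
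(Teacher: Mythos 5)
Your overall organization matches the paper's proof --- a term-by-term variation of the five summands of \eqref{eq:YMBVaction}, cancellation of the Yang--Mills variation against the $\delta_Q\lambda$ contribution to the Dirac term via Proposition \ref{prop:cliffordactionproperty3} and the Bianchi identity, and gauge-invariance of $S_{\gauge}^{(1)}(Q)$ disposing of the ghost terms --- but your treatment of the cubic fermion term is wrong, and that is precisely the step where the real content of the lemma lives. You claim that $(\lambda^a, \rho(v)\lambda^b)$ is symmetric in $a,b$ because $\Gamma$ is defined on $\Sym^2(\Sigma)$, so that contraction against the totally antisymmetric structure tensor kills the term. But $\lambda$ is a \emph{fermion}: its components $\lambda^a$ are Grassmann-odd, so the symmetry of the pairing on commuting spinors becomes \emph{antisymmetry} of $(\lambda^a, \rho(v)\lambda^b)$ under $a \leftrightarrow b$, and an antisymmetric expression contracted with the antisymmetric $f_{abc}$ is generically nonzero. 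If your symmetry argument were valid, the term $-\tfrac{1}{2}(\lambda, \rho(\Gamma(Q,\lambda))\lambda)$ would vanish in every dimension and pure super Yang--Mills would be on-shell supersymmetric for arbitrary $n$, which is false.

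The paper handles this term by rewriting it, via \eqref{eq:Gammaspinorpairing}, as $-\tfrac{1}{2}(\Gamma(Q,\lambda), \Gamma(\lambda,\lambda))$ and then invoking Assumption \ref{assumption:3psi}, the three-$\psi$ identity $\rho(\Gamma(Q_1,Q_2))Q_3 + \rho(\Gamma(Q_2,Q_3))Q_1 + \rho(\Gamma(Q_3,Q_1))Q_2 = 0$, which holds only for the composition-algebra spinor representations listed in Section \ref{sect:compositionalgebras} (dimensions $3$, $4$, $6$, $10$, and the chiral $2$d case). So your remark that Assumption \ref{assumption:3psi} is not needed in this lemma and only enters in the later closure relations is incorrect: it is needed exactly here, and it is the reason the statement is dimension-restricted at all. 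The rest of your outline (the two cancelling curvature contributions and the gauge-invariance argument for the antifield terms) is fine and agrees with the paper's computation.
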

\begin{proof}
Let us decompose $S_{\gauge} = \sum_{i=1}^5 S_{\gauge, i}$ into individual summands of Equation (\ref{eq:YMBVaction}).

The first term gives
\begin{align*}
\{S_{\gauge, 1}, S_{\gauge}^{(1)}(Q)\} &= -\frac{1}{2} (\d_A \Gamma(Q, \lambda), F_A)\\
&= -\frac{1}{2}(-1)^{n-1} (\ast \d_A \ast F_A, \Gamma(Q, \lambda)).
\end{align*}

The second term gives
\begin{align*}
\{S_{\gauge, 2}, S_{\gauge}^{(1)}(Q)\} &= -\frac{1}{2}(\lambda, \rho(\Gamma(Q, \lambda))\lambda) + \frac{1}{2}(\rho(F_A) Q, \sd{\d}_A\lambda) \\
&= -\frac{1}{2}(\Gamma(Q, \lambda), \Gamma(\lambda, \lambda))  - \frac{1}{2}(\lambda, \sd{\d}_A(\rho(F_A) Q)) \\
&= -\frac{1}{2}(\Gamma(Q, \lambda), \Gamma(\lambda, \lambda)) - \frac{1}{2} (-1)^n (\lambda, \rho(\ast \d_A\ast F_A) Q),
\end{align*}
where we have used Proposition \ref{prop:cliffordactionproperty3} and the Bianchi identity in the last line.

By \eqref{eq:Gammaspinorpairing} and Assumption \ref{assumption:3psi} we have $(\Gamma(Q, \lambda), \Gamma(\lambda, \lambda)) = 0$, so $\{S_1 + S_2, S_{\gauge}^{(1)}(Q)\} = 0$.

Finally, $\{S_{\gauge,3} + S_{\gauge,4} + S_{\gauge, 5}, S_{\gauge}^{(1)}(Q)\} = 0$ due to gauge-invariance of $S_{\gauge}^{(1)}(Q)$.
\end{proof}

\begin{remark}
The previous Lemma expresses the fact that the pure super Yang--Mills on-shell action on BRST fields is supersymmetric; this was proven by Baez and Huerta in \cite{BaezHuerta}, and our proof essentially follows the proof in loc. cit.
\end{remark}

\begin{lemma}
One has
\[\{S_{\gauge}, S_{\gauge}^{(2)}\} + \d_{\mr{CE}} S_{\gauge}^{(1)} + \frac{1}{2}\{S_{\gauge}^{(1)}, S_{\gauge}^{(1)}\} = 0.\]
\label{lm:gaugemultiplet2}
\end{lemma}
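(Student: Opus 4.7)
The plan is to evaluate the stated equation on a pair $(Q_1, Q_2) \in \Sigma^{\otimes 2}$, verifying it as an identity in $\oloc(\cE_{\gauge})$. At the physical level, the identity expresses the fact that two supercharges $Q_1, Q_2$ close on a spacetime translation by $\Gamma(Q_1, Q_2)$ up to a BV-exact correction encoded by $S^{(2)}_{\gauge}$; the three summands of the equation are, respectively, the BV-exact correction, the translation, and the raw supersymmetry commutator.

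First I would compute the Chevalley--Eilenberg term. Since the only nontrivial bracket on $\fA$ is $[Q_1, Q_2]_\fA = \Gamma(Q_1, Q_2) \in V$, one has $(\d_{\mr{CE}} S^{(1)}_{\gauge})(Q_1, Q_2) = -S^{(1)}_{\gauge}(\Gamma(Q_1, Q_2))$, which by \eqref{eq:YMPoincareaction} equals $-\int \dvol \bigl( (L_v A, A^*) - (v.\lambda, \lambda^*) - (v.c)c^* \bigr)$ with $v = \Gamma(Q_1, Q_2)$. Next I would expand $\tfrac{1}{2}\{S^{(1)}_{\gauge}(Q_1), S^{(1)}_{\gauge}(Q_2)\}$, symmetrized in $Q_1, Q_2$. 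Only the $(A, A^*)$ and $(\lambda, \lambda^*)$ conjugate pairs contribute, since $S^{(1)}_{\gauge}$ is independent of $c, c^*$. The $(A, A^*)$ pairing contracts $-\Gamma(Q_1, \lambda)$ with the $A$-variation of $\tfrac{1}{2}(\rho(F_A) Q_2, \lambda^*)$, which after integration by parts produces a term whose symmetrization involves $\rho(\d_A \Gamma(Q_1, \lambda)) Q_2 + \rho(\d_A \Gamma(Q_2, \lambda)) Q_1$. The $(\lambda, \lambda^*)$ pairing contracts $\tfrac{1}{2}\rho(F_A) Q_1$ with $-\Gamma(Q_2, \text{--})\cdot A^*$ and its symmetrization.

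Then I would compute $\{S_{\gauge}, S^{(2)}_{\gauge}(Q_1, Q_2)\}$. Since $S^{(2)}_{\gauge}$ depends on $\lambda^*$ (quadratically), on $c^*$ (linearly), and on $A$ (linearly through $-\iota_{\Gamma(Q_1, Q_2)} A \cdot c^*$), the bracket with $S_{\gauge}$ picks up three types of contributions: a $(\lambda, \lambda^*)$ contraction yielding $\sd\d_A \lambda$ and $[c, \lambda^*]$ paired with $\partial_{\lambda^*} S^{(2)}_{\gauge}$; an $(A, A^*)$ contraction producing $\d_A \iota_{\Gamma(Q_1, Q_2)} A$ paired with $A^*$; and a $(c, c^*)$ contraction pairing $\partial_c S_{\gauge}$ (containing the gauge-fixing term) with $-\iota_{\Gamma(Q_1,Q_2)} A$. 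Assembling all three summands, cancellation proceeds sector by sector: the $c^*$-sector cancels via Cartan's formula $L_v A = \d(\iota_v A) + \iota_v \d A$ combined with the gauge-covariant version for $[A,A]$; the spinor sector combines Propositions \ref{prop:cliffordactionproperty2} and \ref{prop:cliffordactionproperty3} to convert $\rho(\d \Gamma(Q_1, \lambda))Q_2$ expressions into the Dirac-operator terms produced by $\{S_{\gauge}, S^{(2)}_{\gauge}\}$; and the $(A, A^*)$ sector is closed by Proposition \ref{prop:cliffordactionproperty1}, which upon symmetrization gives
\[\Gamma(Q_1, \rho(F_A) Q_2) + \Gamma(Q_2, \rho(F_A) Q_1) = -2\iota_{\Gamma(Q_1, Q_2)} F_A,\]
producing precisely the curvature half of $L_{\Gamma(Q_1, Q_2)} A$ needed to cancel the translation contribution from $\d_{\mr{CE}} S^{(1)}_{\gauge}$.

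The main obstacle will be careful bookkeeping of signs and integration by parts, especially in the $(A, A^*)$ contraction where $F_A = \d A + \tfrac{1}{2}[A, A]$ contributes both a derivative and a quadratic-in-$A$ piece; one must separately track terms with structure constants of $\fg$ and verify that the cubic-in-$A$ terms recombine into gauge-covariant expressions rather than bare $\d$ and $\iota$. A secondary subtlety is the $-\tfrac{1}{2}(Q_1, \lambda^*)(Q_2, \lambda^*)$ summand of $S^{(2)}_{\gauge}$: it is needed to absorb the on-shell failure of the Fierz-type identity --- its BV bracket with $S_{\gauge}$ produces $\sd\d_A \lambda$ terms that cancel against the corresponding terms coming from the $\Gamma(\lambda^*, \lambda^*)$ piece of $S^{(2)}_{\gauge}$ via Proposition \ref{prop:cliffordactionproperty1} applied fiberwise.
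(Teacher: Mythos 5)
Your overall strategy coincides with the paper's: reduce to the case $Q_1,Q_2\in\Sigma$ (the purely even and mixed cases being the strict Lie action and Poincar\'e invariance of $S^{(1)}_{\gauge}$), expand the three summands, and collect the coefficients of the antifields $A^*$, $c^*$, $\lambda^*$ separately. Your treatment of the $A^*$ sector (Proposition \ref{prop:cliffordactionproperty1} producing $-2\iota_{\Gamma(Q_1,Q_2)}F_A$ from the symmetrized $\Gamma(Q_i,\rho(F_A)Q_j)$, cancelling $L_{\Gamma(Q_1,Q_2)}A-\d_A\iota_{\Gamma(Q_1,Q_2)}A$ by Cartan's formula) and of the $c^*$ sector is correct and matches the paper.

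The gap is in the $\lambda^*$ sector. The terms $-\tfrac12(Q_i,\sd\d_A\lambda)Q_j$ produced by the $-\tfrac12(Q_1,\lambda^*)(Q_2,\lambda^*)$ summand of $S^{(2)}_{\gauge}$ do \emph{not} cancel against the $\Gamma(\lambda^*,\lambda^*)$ piece ``via Proposition \ref{prop:cliffordactionproperty1} applied fiberwise'' as you assert. Rather, they combine with the symmetrized $\rho(\d_A\Gamma(Q_i,\lambda))Q_j$ terms from $\{S^{(1)}_{\gauge},S^{(1)}_{\gauge}\}$ via Proposition \ref{prop:cliffordactionproperty2} to assemble $-\tfrac12\sd\d_A\rho(\Gamma(Q_1,\lambda))Q_2-\tfrac12\sd\d_A\rho(\Gamma(Q_2,\lambda))Q_1$; one then must invoke Assumption \ref{assumption:3psi} (the three-$\psi$/Fierz identity) to rewrite this as $\tfrac12\sd\d_A\rho(\Gamma(Q_1,Q_2))\lambda$, and finally the Clifford relation to cancel it against $\tfrac12\rho(\Gamma(Q_1,Q_2))\sd\d_A\lambda-\Gamma(Q_1,Q_2).\lambda+[\lambda,\iota_{\Gamma(Q_1,Q_2)}A]$. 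Your proposal never invokes Assumption \ref{assumption:3psi}, which is the indispensable ingredient here (and the reason the construction is restricted to the composition-algebra dimensions); Proposition \ref{prop:cliffordactionproperty3} plays no role in this lemma (it is used in Lemma \ref{lm:gaugemultiplet1}). Separately, with the paper's sign conventions one has $(\d_{\mr{CE}} S^{(1)}_{\gauge})(Q_1,Q_2)=+S^{(1)}_{\gauge}(\Gamma(Q_1,Q_2))$, not the minus sign you wrote; with your sign the $A^*$ sector would total $-2\iota_{\Gamma(Q_1,Q_2)}F_A$ rather than zero, so this is not a harmless convention choice.
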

\begin{proof}
Evaluating the equation
\[\{S_{\gauge}, S_{\gauge}^{(2)}\} + \d_{\mr{CE}} S_{\gauge}^{(1)} + \frac{1}{2}\{S_{\gauge}^{(1)}, S_{\gauge}^{(1)}\} = 0\]
on $v_1, v_2\in \mf{iso}(V)$, the claim reduces to the fact that \eqref{eq:YMPoincareaction} defines a strict Lie action. Evaluating it on $v\in\mf{iso}(V)$ and $Q\in\Sigma$, the claim reduces to the fact that $S_{\gauge}^{(1)}$ is Poincar\'{e}-invariant. So, the only nontrivial check is for $Q_1,Q_2\in\Sigma$.

The individual terms are
\begin{enumerate}
\item\begin{align*}
\frac{1}{2}\{S_{\gauge}^{(1)}, S_{\gauge}^{(1)}\}(Q_1, Q_2) = &-\{S_{\gauge}^{(1)}(Q_1), S_{\gauge}^{(1)}(Q_2)\} \\
= &-\frac{1}{2}(\rho(\d_A \Gamma(Q_1, \lambda)) Q_2, \lambda^*) + \frac{1}{2}(\Gamma(Q_2, \rho(F_A)Q_1), A^*) \\
&-\frac{1}{2}(\rho(\d_A \Gamma(Q_2, \lambda)) Q_1, \lambda^*) + \frac{1}{2}(\Gamma(Q_1, \rho(F_A)Q_2), A^*),
\end{align*}
\item\begin{align*}
(\d_{\mr{CE}} S_{\gauge}^{(1)})(Q_1, Q_2) &= S_{\gauge}^{(1)}(\Gamma(Q_1, Q_2)) \\
&= (L_{\Gamma(Q_1, Q_2)}(A), A^*) -(\Gamma(Q_1, Q_2).\lambda, \lambda^*) - (\Gamma(Q_1, Q_2).c) c^*,
\end{align*}
\item 
\begin{align*}
\{S_{\gauge}, S_{\gauge}^{(2)}(Q_1, Q_2)\} = &-\frac{1}{2}(Q_2, \lambda^*)(Q_1, \sd{\d}_A \lambda + [c, \lambda^*]) - \frac{1}{2}(Q_1, \lambda^*)(Q_2, \sd{\d}_A \lambda + [c, \lambda^*]) \\
&+\frac{1}{2}(\Gamma(Q_1, Q_2), \Gamma(\lambda^*, \sd{\d}_A \lambda + [c, \lambda^*])) + \iota_{\Gamma(Q_1, Q_2)}(\d_A c) c^* - (\d_A\iota_{\Gamma(Q_1, Q_2)} A, A^*) \\
&+ ([\lambda, \iota_{\Gamma(Q_1, Q_2)}A], \lambda^*) - [\iota_{\Gamma(Q_1, Q_2)} A, c] c^*.
\end{align*}
\end{enumerate}
The total coefficient in front of $A^*$ is
\[\frac{1}{2}\Gamma(Q_1, \rho(F_A)Q_2) + \frac{1}{2}\Gamma(Q_2, \rho(F_A) Q_1) + L_{\Gamma(Q_1, Q_2)} A - \d_A \iota_{\Gamma(Q_1, Q_2)} A.\]
Using Proposition \ref{prop:cliffordactionproperty1} we get that the sum of the first two terms is $-\iota_{\Gamma(Q_1, Q_2)}F_A$ which cancels the last two terms.

The total coefficient in front of $c^*$ is
\[-\Gamma(Q_1, Q_2).c + \iota_{\Gamma(Q_1, Q_2)}(\d_A c) - [\iota_{\Gamma(Q_1, Q_2)} A, c] = 0.\]

The total coefficient in front of $\lambda^*$ is
\begin{align*}
&-\frac{1}{2}\rho(\d_A \Gamma(Q_1, \lambda))Q_2 -\frac{1}{2}\rho(\d_A \Gamma(Q_2, \lambda))Q_1 - \Gamma(Q_1, Q_2).\lambda \\
&+ \frac{1}{2}\rho(\Gamma(Q_1, Q_2))\sd{\d}_A\lambda - \frac{1}{2}(Q_2, \sd{\d}_A\lambda) Q_1 -\frac{1}{2}(Q_1, \sd{\d}_A\lambda) Q_2 + [\lambda, (\Gamma(Q_1, Q_2), A)]
\end{align*}
Using Proposition \ref{prop:cliffordactionproperty2} the first, second, fifth and sixth terms combine to
\[-\frac{1}{2} \sd{\d}_A\rho(\Gamma(Q_1, \lambda))Q_2 - \frac{1}{2} \sd{\d}_A\rho(\Gamma(Q_2, \lambda)) Q_1\]
which is equal to $\frac{1}{2} \sd{\d}_A \rho(\Gamma(Q_1, Q_2))\lambda$ by Assumption \ref{assumption:3psi}. Using the Clifford relation this term cancels the rest of the terms.
\end{proof}

Evaluating the equation
\[\d_{\mr{CE}} S_{\gauge}^{(2)} + \{S_{\gauge}^{(1)}, S_{\gauge}^{(2)}\} = 0\]
on $v_1, v_2, v_3\in\mf{iso}(V)$ or on $v_1, v_2\in\mf{iso}(V)$ and $Q\in\Sigma$ we automatically get zero. Evaluating it on $v\in\mf{iso}(V)$ and $Q_1, Q_2\in\Sigma$ we get Poincar\'{e}-invariance of $S_{\gauge}^{(2)}$.

\begin{lemma}
One has
\[\{S_{\gauge}^{(1)}, S_{\gauge}^{(2)}\}(Q_1, Q_2, Q_3) = 0\]
for every $Q_1, Q_2, Q_3\in\Sigma$.
\label{lm:gaugemultiplet3}
\end{lemma}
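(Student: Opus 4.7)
The plan is to compute the bracket $\{S_{\gauge}^{(1)}(Q_3), S_{\gauge}^{(2)}(Q_1, Q_2)\}$ directly and then symmetrize the result in $Q_1, Q_2, Q_3$. Observing that $S_{\gauge}^{(2)}$ has no dependence on $A^*$, $\lambda$ or $c$, while $S_{\gauge}^{(1)}$ has no dependence on $c^*$, the only surviving contributions come from the $(A^*, A)$-pairing, which brackets $\partial_{A^*} S_{\gauge}^{(1)}(Q_3) = -\Gamma(Q_3, \lambda)$ with $\partial_A S_{\gauge}^{(2)}(Q_1, Q_2) = -\Gamma(Q_1, Q_2)\, c^*$, and from the $(\lambda, \lambda^*)$-pairing. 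In particular, the term $\frac{1}{2}(\rho(F_A) Q_3, \lambda^*)$ in $S_{\gauge}^{(1)}$, which is quadratic in $A$, contributes nothing, and the proof reduces to two purely algebraic identities in the spin representation.

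The first pairing produces, after symmetrization in $Q_1, Q_2, Q_3$, a term proportional to $\int \left[\sum_{\mathrm{cyc}}(\Gamma(Q_3, \lambda), \Gamma(Q_1, Q_2))\right] c^*$. Using the defining identity $(v, \Gamma(Q, Q')) = (\rho(v) Q, Q')$ from Equation \eqref{eq:Gammaspinorpairing}, this rewrites as $\int \left(\sum_{\mathrm{cyc}} \rho(\Gamma(Q_1, Q_2)) Q_3, \lambda\right) c^*$, which vanishes because the cyclic spinor sum is precisely the left-hand side of Assumption \ref{assumption:3psi}.

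The second pairing yields, after computing $\partial_\lambda S_{\gauge}^{(1)}(Q_3) = -\rho(A^*) Q_3$ and $\partial_{\lambda^*} S_{\gauge}^{(2)}(Q_1, Q_2) = \frac{1}{2}\rho(\Gamma(Q_1, Q_2))\lambda^* - \frac{1}{2} Q_1 (Q_2, \lambda^*) - \frac{1}{2} Q_2 (Q_1, \lambda^*)$, an expression of the form $\int (-\rho(A^*) Q_3, \partial_{\lambda^*} S_{\gauge}^{(2)}(Q_1, Q_2))$ that must be cyclically symmetrized. For the $\rho(\Gamma(Q_1, Q_2))\lambda^*$ piece, I use self-adjointness $(\rho(v)X, Y) = (X, \rho(v)Y)$ together with the Clifford anticommutation $\rho(A^*)\rho(\Gamma(Q_1, Q_2)) + \rho(\Gamma(Q_1, Q_2))\rho(A^*) = 2(A^*, \Gamma(Q_1, Q_2))$ to rewrite $\sum_{\mathrm{cyc}} (\rho(A^*) Q_3, \rho(\Gamma(Q_1, Q_2))\lambda^*)$ as $-\sum_{\mathrm{cyc}}(\rho(A^*) \rho(\Gamma(Q_1, Q_2)) Q_3, \lambda^*) + 2\sum_{\mathrm{cyc}}(A^*, \Gamma(Q_1, Q_2))(Q_3, \lambda^*)$. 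The first sum vanishes by applying Assumption \ref{assumption:3psi} to the inner cyclic sum before hitting it with $\rho(A^*)$. The surviving scalar term is cancelled precisely by the symmetrization of the remaining $-\frac{1}{2}(Q_1 (Q_2, \lambda^*) + Q_2 (Q_1, \lambda^*))$ piece, once one uses $(\rho(A^*) Q_i, Q_j) = (A^*, \Gamma(Q_i, Q_j))$ and the symmetry of $\Gamma$.

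The main obstacle is the careful bookkeeping of Koszul signs, combinatorial factors coming from the cyclic symmetrizations, and the BV bracket conventions on odd variables; structurally, however, both contributions require only a single use of Assumption \ref{assumption:3psi}, mirroring the role of the 3-psi identity in the proofs of Lemmas \ref{lm:gaugemultiplet1} and \ref{lm:gaugemultiplet2}.
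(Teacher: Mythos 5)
Your proposal is correct and follows essentially the same route as the paper's proof: compute the two surviving BV pairings, kill the $c^*$ term by cyclic symmetrization and Assumption \ref{assumption:3psi}, and for the $\lambda^*$ terms use the Clifford relation together with self-adjointness to reduce the main piece to the cyclic $3\psi$ identity, with the leftover scalar term cancelling against the symmetrization of the $(Q_i,\lambda^*)(Q_j,\lambda^*)$ contributions. The only difference is presentational: you make the bookkeeping of which field/antifield pairings contribute more explicit than the paper does.
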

\begin{proof}
We have
\begin{align*}
\{S_{\gauge}^{(1)}(Q_1), S_{\gauge}^{(2)}(Q_2, Q_3)\} = &-\iota_{\Gamma(Q_2, Q_3)}\Gamma(Q_1, \lambda) c^* - \frac{1}{2} (\Gamma(Q_2, Q_3), \Gamma(\rho(A^*) Q_1, \lambda^*)) \\
&+ \frac{1}{2}(Q_2, \rho(A^*)Q_1)(Q_3, \lambda^*) + \frac{1}{2}(Q_3, \rho(A^*) Q_1)(Q_2, \lambda^*).
\end{align*}

$\{S_{\gauge}^{(1)}, S_{\gauge}^{(2)}\}(Q_1, Q_2, Q_3)$ is obtained by cyclically symmetrizing the above expression. By Assumption \ref{assumption:3psi} the cyclic symmetrization of the term with $c^*$ is zero. The Clifford relation implies that
\begin{align*}
\frac{1}{2} (\Gamma(Q_2, Q_3), \Gamma(\rho(A^*) Q_1, \lambda^*)) &= -\frac{1}{2}(\Gamma(Q_2, Q_3), \Gamma(\rho(A^*)\lambda^*, Q_1)) + (\Gamma(Q_2, Q_3), A^*) (Q_1, \lambda^*) \\
&= -\frac{1}{2}(\rho(\Gamma(Q_2, Q_3)) Q_1, \rho(A^*)\lambda^*) + (\Gamma(Q_2, Q_3), A^*) (Q_1, \lambda^*).
\end{align*}
Therefore, again using Assumption \ref{assumption:3psi} we see that the cyclic symmetrization of the terms with $A^*$ vanishes.
\end{proof}

\subsection{Coupling to Matter Multiplets}
\label{sect:mattermultipletSUSY}

In this section we describe the coupling of super Yang--Mills theory to matter valued in a $\fg$-representation $P$, i.e. the supersymmetric gauged linear $\sigma$-models. Our description of the supersymmetry of the matter multiplet is inspired by the presentation of the supersymmetric nonlinear $\sigma$-models by Deligne and Freed in \cite[Chapter 3]{DeligneFreed}.

Consider as before $V_\RR$ and a Clifford module $\Sigma\oplus \Sigma^*$ satisfying Assumption \ref{assumption:3psi}. In addition, fix a complex associative composition algebra $A$ equipped with an antiinvolution $\sigma$ as in Section \ref{sect:compositionalgebras}. Suppose $\Sigma\oplus \Sigma^*$ carries a compatible right $A$-module structure. Let $(-, -)^A\colon \Sigma\otimes \Sigma^*\rightarrow A$ be the corresponding $A$-valued pairing given by Lemma \ref{lm:extendpairing}. We make the following additional assumption.

\begin{assumption}
For $Q_1, Q_2\in\Sigma$ and $Q_3\in\Sigma^*$ we have
\[Q_1(Q_2, Q_3)^A + Q_2(Q_1, Q_3)^A = \rho(\Gamma(Q_1, Q_2))Q_3.\]
\label{assumption:matter3psi}
\end{assumption}

Explicitly, we consider the following examples of theories of matter with minimal supersymmetry. 

The first three examples concern chiral two-dimensional supersymmetry, where $\Sigma = S^{2 \rm d}_+ \otimes A$, with $S_+^{2 \rm d}$ the one-dimensional positive complex semi-spin representation of $\Spin(2;\CC)$:

\begin{itemize}
\item (\textbf{2d $\cN=(1, 0)$ supersymmetry}) $A = \CC$. Assumption \ref{assumption:matter3psi} is satisfied by Theorem \ref{thm:2dmatter3psi}.

\item (\textbf{2d $\cN=(2, 0)$ supersymmetry}) $A = \CC[x]/(x^2+1)$. Assumption \ref{assumption:matter3psi} is satisfied by Theorem \ref{thm:2dmatter3psi}.

\item (\textbf{2d $\cN=(4, 0)$ supersymmetry}) $A = \eend(Z)$ where $Z$ is a 2-dimensional symplectic vector space. Assumption \ref{assumption:matter3psi} is satisfied by Theorem \ref{thm:2dmatter3psi}.
\end{itemize}

In dimensions $\geq 3$ we have the following examples, where $\Sigma = A \oplus A$:

\begin{itemize}

\item (\textbf{3d $\cN=1$ supersymmetry}) $A = \CC$. Assumption \ref{assumption:matter3psi} is satisfied by Theorem \ref{thm:matter3psi}.

\item (\textbf{4d $\cN=1$ supersymmetry}) $A = \CC[x]/(x^2+1)$. Assumption \ref{assumption:matter3psi} is satisfied by Theorem \ref{thm:matter3psi}.

\item (\textbf{6d $\cN=(1, 0)$ supersymmetry}) $A = \eend(Z)$. Assumption \ref{assumption:matter3psi} is satisfied by Theorem \ref{thm:matter3psi}.
\end{itemize}

Let $P$ be a left $A$-module equipped with a $\CC$-valued nondegenerate symmetric bilinear pairing such that
\[(av, w) = (v, \sigma(a)w).\]
Moreover, assume $P$ carries a $\fg$-action commuting with the $A$-module structure and preserving the bilinear pairing. Explicitly, for $A=\CC, \CC\otimes_\RR\CC, \mathbb{H}\otimes_\RR\CC$ we get the following data:
\begin{itemize}
\item $A=\CC$. We are looking for a $\fg$-representation $P$ equipped with a nondegenerate symmetric bilinear pairing.

\item $A=\CC[x]/(x^2+1)$. A left $A$-module $P$ splits as $P=P_+\oplus P_-$, where $x$ acts as $\pm i$ on $P_{\pm}$. Note that with respect to the right $A$-action $x$ acts as $\mp i$ on $P_{\pm}$. So, the symmetric bilinear pairing identifies $P_+\cong P_-^*$. In other words, the data boils down to a $\fg$-representation $R$, so that $P = R\oplus R^*$.

\item $A=\eend(Z)$. A left $A$-module is necessarily of the form $P\cong Z\otimes U$, where $A$ just acts on $Z$. Compatibility of the orthogonal pairing on $P$ with the $A$-action implies that it is given by a product of the symplectic pairing on $Z$ and a symplectic pairing on $U$. So, the data boils down a symplectic $\fg$-representation $U$.
\end{itemize}

We are going to construct a theory on $V_\RR$ describing a matter multiplet valued in $P$. The BRST fields are given as follows:
\begin{itemize}
\item a scalar $\phi\in\Gamma(V_\RR; P)$;
\item a spinor $\psi\in\Gamma(V_\RR; \Pi \Sigma^*\otimes_A P)$.
\end{itemize}
As usual, we denote the antifields by $\phi^*\in\Gamma(V_\RR; \Pi P)$ and $\psi^*\in\Gamma(V_\RR; \Sigma\otimes_A P)$.

We extend the pairings on $P$ and between $\Sigma$ and $\Sigma^*$ to a pairing between $\Sigma\otimes_A P$ and $\Sigma^*\otimes_A P$ in the following way. Given $\sum_i \tilde{s}_i\otimes v_i\in \Sigma^*\otimes_A P$ and $\sum_j s_j\otimes w_j\in\Sigma\otimes_A P$, their pairing is
\begin{equation}
\sum_{i, j} \Re((v_i, w_j)^A (s_j, \tilde{s}_i)^A),
\label{eq:spinorialmatterpairing}
\end{equation}
where we extend both pairings to $A$-valued pairings using Lemma \ref{lm:extendpairing}. We may also extend the $\Gamma$-pairing to a map
\[\Gamma\colon \Sym^2(\Sigma^*\otimes_A P)\rightarrow V\]
defined by the property
\[(v, \Gamma(\psi_1, \psi_2)) = (\psi_1, \rho(v) \psi_2),\qquad v\in V,\ \psi_i\in \Sigma^*\otimes_A P.\]

The BV action for the matter multiplet is
\begin{equation}\label{matteraction}
S_{\matter} = \int_{V_\RR} \dvol \left(\frac{1}{2}  (\d_A \phi, \d_A \phi) + (\psi , \sd \d_A \psi) + 2 (\lambda\phi, \psi) + (c\psi, \psi^*) - (c\phi, \phi^*)\right),
\end{equation}
where we use the pairing \eqref{eq:spinorialmatterpairing} in the second term.

It is Poincar\'{e}-invariant with the corresponding Hamiltonian
\begin{equation}
S_{\matter}^{(1)}(v) = \int_{V_\RR}\dvol\left( (L_{v}A, \phi^*) - (v.\psi, \psi^*)\right),
\label{eq:matterPoincareaction}
\end{equation}
for $v\in\mf{iso}(V)$.

The action of supersymmetry is given by a linear and quadratic functional:
\begin{align}
S_{\matter}^{(1)} (Q) & = \int_{V_\RR}\dvol\left( ((Q, \psi), \phi^*) + \frac{1}{2}(\rho(\d_A \phi) Q, \psi^*)\right) \label{eq:matterI1} \\
S_{\matter}^{(2)} (Q_1 , Q_2) & = \frac{1}{4}\int_{V_\RR} \dvol(\Gamma(Q_1, Q_2) , \Gamma(\psi^*, \psi^*)) \label{eq:matterI2}
\end{align}
where $Q, Q_1,Q_2 \in \Sigma$.

We consider the full action of the super Yang--Mills theory
\[S_{\rm SYM} = S_{\gauge} + S_{\matter}\]
where $S_{\gauge}$ is the BV action of the pure gauge sector of super Yang--Mills theory defined in Equation \eqref{eq:YMBVaction}.

The action by supersymmetry on the full theory is encoded by the $\fA$-dependent functionals
\[S^{(1)} = S^{(1)}_{\gauge} + S^{(1)}_{\matter},\qquad S^{(2)} = S^{(2)}_{\gauge} + S^{(2)}_{\matter}\]
where $S^{(1)}_\gauge, S^{(2)}_\gauge$ are as in Equations (\ref{eq:gaugeI1}), (\ref{eq:gaugeI2}).
The following result states that these functionals encode an off-shell action of the supersymmetry algebra.

\begin{theorem}
The functional $S_{\fA} = S_{\rm SYM} + S^{(1)} + S^{(2)}$ satisfies the classical master equation
\begin{equation}
\label{CMEYM}
\d_{\mr{CE}} S_\fA + \frac{1}{2} \{S_\fA, S_\fA\} = 0 .
\end{equation}
Thus, according to Definition \ref{infinitesimal_action_def}, the functional $S_{\fA}$ defines an elliptic $L_\infty$ action of the super Lie algebra $\fA$ on super Yang--Mills theory and so super Yang--Mills theory is supersymmetric
\label{thm:YMSUSY}
\end{theorem}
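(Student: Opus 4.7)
The plan is to expand the classical master equation \eqref{CMEYM} by Chevalley--Eilenberg weight into five equations indexed by the number of supersymmetry insertions. Since Theorem \ref{thm:gaugemultipletSUSY} already handles the purely gauge contributions, it suffices to verify the matter and mixed gauge--matter pieces at each CE degree. The CE-degree $0$ equation is the standard off-shell master equation for super Yang--Mills coupled to matter, controlled by ad-invariance of the pairings on $\fg$ and $P$. The CE-degree $4$ equation $\{S^{(2)}, S^{(2)}\} = 0$ holds automatically because $S^{(2)}$ depends only on $\lambda^*$, $\psi^*$ and $c^*$, none of whose conjugate fields appear in $S^{(2)}$.

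I would begin the mixed analysis with the CE-degree $1$ statement $\{S_{\rm SYM}, S^{(1)}\}(Q) = 0$, following the pattern of Lemma \ref{lm:gaugemultiplet1}. After using Theorem \ref{thm:gaugemultipletSUSY} to dispose of the pure gauge part, what remains is $\{S_\matter, S^{(1)}_\matter\}(Q) + \{S_\gauge, S^{(1)}_\matter\}(Q) + \{S_\matter, S^{(1)}_\gauge\}(Q)$. The strategy is to use Proposition \ref{prop:cliffordactionproperty3} to commute the twisted Dirac operator through the Clifford action of $\d_A\phi$, the Bianchi identity to produce equation-of-motion-like contributions, and invariance of the pairing on $P$ to cancel the Yukawa cross-terms produced by the supersymmetry variations of $\lambda$ and $\psi$ against the cubic coupling $2(\lambda\phi, \psi)$ in \eqref{matteraction}.

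Next I would address the CE-degree $3$ equation $\d_{\rm CE} S^{(2)} + \{S^{(1)}, S^{(2)}\} = 0$, which reduces after cyclic antisymmetrization to an identity on $(Q_1, Q_2, Q_3) \in \Sigma^{\otimes 3}$. The pure-gauge piece vanishes by Lemma \ref{lm:gaugemultiplet3}. The matter contribution $\{S^{(1)}_\matter, S^{(2)}_\matter\}(Q_1, Q_2, Q_3)$ organizes, after cyclic symmetrization, into a $\phi^*$-coefficient that vanishes by Assumption \ref{assumption:matter3psi} and a $\psi^*$-coefficient that vanishes by the same assumption combined with the Clifford relation, in precise parallel with the use of Assumption \ref{assumption:3psi} in the purely gauge setting. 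Mixed brackets such as $\{S^{(1)}_\matter, S^{(2)}_\gauge\}$ vanish because $S^{(2)}_\gauge$ is independent of $\phi^*$ and $\psi^*$.

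The principal obstacle will be the CE-degree $2$ equation $\{S_{\rm SYM}, S^{(2)}\} + \d_{\rm CE} S^{(1)} + \frac{1}{2}\{S^{(1)}, S^{(1)}\} = 0$ evaluated on $(Q_1, Q_2) \in \Sigma^{\otimes 2}$, which expresses closure of the supersymmetry action up to translations and gauge transformations. Isolating antifield coefficients, the $\phi^*$ part reduces directly to the matter $3\psi$ identity $Q_1(Q_2, \psi)^A + Q_2(Q_1, \psi)^A = \rho(\Gamma(Q_1, Q_2))\psi$, i.e.\ Assumption \ref{assumption:matter3psi}. The $\psi^*$ part is the more delicate: it combines contributions from $\{S_\matter, S^{(2)}_\matter\}$, from $\{S^{(1)}_\matter, S^{(1)}_\matter\}$ (which produces $\rho(\d_A (Q_1, \psi)^A) Q_2$-type terms via the Yukawa bracket), and from the translation term $\d_{\rm CE} S^{(1)}_\matter = S^{(1)}_\matter(\Gamma(Q_1, Q_2))$. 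The expected key step is to use Proposition \ref{prop:cliffordactionproperty2} to trade Dirac operators for Clifford actions of derivatives, apply Proposition \ref{prop:cliffordactionproperty1} to absorb curvature-type corrections, and then invoke Assumption \ref{assumption:matter3psi} to collapse the remaining terms; careful sign bookkeeping across the gauge--matter cross-brackets is the main technical challenge.
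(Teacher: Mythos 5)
Your proposal matches the paper's proof: the same decomposition of the master equation by Chevalley--Eilenberg weight, the same reduction to the matter-dependent terms via Theorem \ref{thm:gaugemultipletSUSY}, and the same closure of each weight using Assumption \ref{assumption:matter3psi}, the Clifford relation, and Proposition \ref{prop:cliffordactionproperty3}. One small correction of bookkeeping: in the weight-two equation the $\phi^*$ coefficient cancels by the duality relation \eqref{eq:Gammaspinorpairing} alone, while it is the $\psi^*$ coefficient that requires Assumption \ref{assumption:matter3psi} (applied with $Q_3=\psi$) together with the Clifford identity.
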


The rest of this section is devoted to the proof of Theorem \ref{thm:YMSUSY}. 
Notice that when we take the matter to be valued in a trivial representation for the Lie algebra $\fg$, the result reduces to Theorem \ref{thm:gaugemultipletSUSY}.
We may therefore restrict our attention to the terms in \eqref{CMEYM} which involve fields of the matter multiplet. 
Consequently, the classical master equation \eqref{CMEYM} decomposes into the following set of equations:
\begin{equation}
\label{CMEYM2}
\begin{array}{rrrrrr}
\{S_{\rm SYM} , S^{(1)}\} & = & 0 \\ 
\{S_{\matter}, S^{(2)}\} + \d_{\mr{CE}} S_{\matter}^{(1)} + \{S_{\gauge}^{(1)}, S_{\matter}^{(1)}\} + \frac{1}{2} \{S_{\matter}^{(1)}, S_{\matter}^{(1)}\} & = & 0 \\
\d_{\mr{CE}} S_{\matter}^{(2)} + \{S_{\matter}^{(1)}, S_{\matter}^{(2)}\} & =& 0 \\
\{S_{\matter}^{(2)}, S_{\matter}^{(2)}\} & =& 0
\end{array}
\end{equation}

The last equation is automatically satisfied since $S^{(2)}$ is independent of the fields $\phi, \psi, A, \lambda$.

The first equation in \eqref{CMEYM2} states that the classical action is supersymmetric.

\begin{lemma} \label{lem:YM1}
One has $\{S_{\rm SYM}, S^{(1)}\} (Q) = 0$ for all $Q \in \Sigma$. 
\end{lemma}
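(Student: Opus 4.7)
The plan is to decompose $\{S_{\rm SYM}, S^{(1)}(Q)\}$ into four brackets by bilinearity:
\[
\{S_\gauge, S_\gauge^{(1)}(Q)\} + \{S_\gauge, S_\matter^{(1)}(Q)\} + \{S_\matter, S_\gauge^{(1)}(Q)\} + \{S_\matter, S_\matter^{(1)}(Q)\}.
\]
The first summand vanishes by Lemma \ref{lm:gaugemultiplet1}, so the problem reduces to showing that the remaining three sum to zero. These are exactly the terms that mix the gauge and matter sectors, together with the pure matter contribution.

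Interpreting the BV bracket as the variation of $S_{\rm SYM}$ under the Hamiltonian vector field of $S^{(1)}(Q)$, the supercharge $Q$ acts on the BRST-level fields by $\delta\phi = (Q,\psi)$, $\delta\psi = \tfrac{1}{2}\rho(\d_A \phi)Q$, $\delta A = -\Gamma(Q,\lambda)$ and $\delta\lambda = \tfrac{1}{2}\rho(F_A)Q$, together with further ghost-dependent transformations that handle the $c$- and antifield-dependent couplings by gauge covariance. I would then organize the variation of $S_\matter$ according to which term of $S_\matter$ is being varied. The kinetic variations coming from $\tfrac{1}{2}(\d_A\phi, \d_A\phi)$ and $(\psi, \sd\d_A \psi)$ combine, via integration by parts and Proposition \ref{prop:cliffordactionproperty2}, into a single curvature-type contribution of the form $(\psi, \rho(F_A)Q\cdot \phi)$. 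The Yukawa variation under $\delta\lambda = \tfrac{1}{2}\rho(F_A)Q$ produces exactly the opposite curvature term, cancelling this kinetic anomaly. The Yukawa variation under $\delta\phi$ and $\delta\psi$ produces a cubic fermion expression of the form $((Q,\psi)\cdot\phi, \psi) + (\lambda\cdot\phi, \rho(\d_A\phi)Q)$, which after rearrangement using the $A$-valued spinor pairing provided by Lemma \ref{lm:extendpairing} is seen to vanish by Assumption \ref{assumption:matter3psi}. Finally, the remaining variations of the connection in $\d_A\phi$ and $\sd\d_A\psi$ combine with the $A$-variation generated by the $(\Gamma(Q,\lambda), A^*)$ term of $S_\gauge^{(1)}(Q)$ via the Clifford relation of Proposition \ref{prop:cliffordactionproperty1}, cancelling the last cross terms; the ghost terms involving $c, c^*, \phi^*$ cancel routinely by gauge covariance of both $S_\matter$ and $S^{(1)}(Q)$.

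The main obstacle is the cubic-fermion cancellation: just as Assumption \ref{assumption:3psi} was essential in the proof of Lemma \ref{lm:gaugemultiplet1}, here one needs Assumption \ref{assumption:matter3psi}, which holds precisely when the composition algebra $A$ is associative (hence the hypothesis in Theorem \ref{thm:matter3psi} and Theorem \ref{thm:2dmatter3psi}). Care is also needed with signs from the fermionic grading of $\psi$ and $\lambda$, and from integration by parts of the Dirac operator; these are handled using Propositions \ref{prop:cliffordactionproperty2} and \ref{prop:cliffordactionproperty3} in exactly the same way as in the proof of Lemma \ref{lm:gaugemultiplet1}.
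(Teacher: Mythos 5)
Your overall strategy is the paper's: peel off the pure gauge bracket via Lemma \ref{lm:gaugemultiplet1}, observe that the ghost- and antifield-dependent terms cancel by gauge invariance of $S^{(1)}(Q)$, and check the remaining matter terms by hand, with Assumption \ref{assumption:matter3psi} (hence associativity of $A$) as the key input and Proposition \ref{prop:cliffordactionproperty3} controlling the Dirac-kinetic variation. So there is no missing idea, but your accounting of \emph{which} identity kills \emph{which} term is off in two places, and if carried out literally it would leave uncancelled remainders. First, the trilinear fermion term $2(\lambda(Q,\psi),\psi)$ produced by $\delta\phi=(Q,\psi)$ does \emph{not} vanish by Assumption \ref{assumption:matter3psi}; that assumption only rewrites it as $(\rho(\Gamma(Q,\lambda))\psi,\psi)$, which then cancels against the term $-(\psi,\rho(\Gamma(Q,\lambda))\psi)$ coming from the variation $\delta A=-\Gamma(Q,\lambda)$ inside $\sd\d_A$. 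Likewise $(\lambda\phi,\rho(\d_A\phi)Q)$ is linear, not cubic, in fermions, and it cancels the $\delta A$-variation $(\Gamma(Q,\lambda)\phi,\d_A\phi)$ of the scalar kinetic term via the defining relation \eqref{eq:Gammaspinorpairing}, not via Proposition \ref{prop:cliffordactionproperty1} (which plays no role in this lemma; it enters only in Lemma \ref{lm:gaugemultiplet2}). Second, the identity that splits $\sd\d_A\rho(\d_A\phi)Q$ into the curvature piece $\rho(F_A)Q\,\phi$ plus the Laplacian piece $(\d_A^*\d_A\phi)Q$ is Proposition \ref{prop:cliffordactionproperty3} (applied to the one-form $X=\d_A\phi$), not Proposition \ref{prop:cliffordactionproperty2}. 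With those attributions corrected, your cancellation pattern matches the paper's term-by-term computation exactly.
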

\begin{proof}
Let us decompose $S_{\matter}=\sum_{i=1}^5 S_{\matter, i}$ into the individual summands in Equation \eqref{matteraction}. 

The first term gives
\begin{align*}
\{S_{\matter,1}, S^{(1)}(Q)\} &= -(\d_A\phi, \d_A(Q, \psi)) + (\Gamma(Q, \lambda)\phi, \d_A\phi) \\
&= \d_A^* \d_A\phi (Q, \psi) + (\Gamma(Q, \lambda)\phi, \d_A\phi).
\end{align*}

The second term gives
\begin{align*}
\{S_{\matter,2}, S^{(1)}(Q)\} &= -(\psi, \sd \d_A \rho(\d_A \phi) Q) - (\psi, \rho(\Gamma(Q, \lambda))\psi) \\
&= -(\psi, \rho(F_A) Q)\phi - \d_A^*\d_A \phi(\psi, Q) - (\psi, \rho(\Gamma(Q, \lambda))\psi),
\end{align*}
where we have used Proposition \ref{prop:cliffordactionproperty3} in the second line.

The third term gives
\begin{align*}
\{S_{\matter,3}, S^{(1)}(Q)\} &= ((\rho(F_A)Q) \phi, \psi) + 2(\lambda(Q, \psi), \psi) - (\lambda\phi, \rho(\d_A \phi) Q) \\
&= ((\rho(F_A)Q) \phi, \psi) + (\rho(\Gamma(Q, \lambda))\psi, \psi) - (\Gamma(\lambda\phi, Q), \d_A \phi),
\end{align*}
where we have used Assumption \ref{assumption:matter3psi} in the middle term and \eqref{eq:Gammaspinorpairing} in the last term. It is then obvious that
\[\{S_{\matter,1} + S_{\matter,2} + S_{\matter,3}, S^{(1)}(Q)\} = 0.\]

Finally, the terms $\{S_{\matter,4} + S_{\matter,5} + S_{\gauge,3}, S^{(1)}(Q)\}$ are zero due to gauge-invariance of $S^{(1)}(Q)$, while the rest of the terms are zero by Lemma \ref{lm:gaugemultiplet1}.
\end{proof}

Next, we move on to the second equation in \eqref{CMEYM2}.

\begin{lemma} 
One has
\begin{equation}\label{CMEYM3}
\{S_{\matter}, S^{(2)}\} + \d_{\mr{CE}} S_{\matter}^{(1)} + \{S_{\gauge}^{(1)}, S_{\matter}^{(1)}\} + \frac{1}{2} \{S_{\matter}^{(1)}, S_{\matter}^{(1)}\} = 0 .
\end{equation}
\end{lemma}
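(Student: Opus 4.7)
The plan is to evaluate Equation \eqref{CMEYM3} on every pair of arguments in $\fA = \Pi\Sigma\oplus V$. On $(v_1,v_2)\in V\otimes V$ the equation reduces to the strict closure of the translation action visible in \eqref{eq:matterPoincareaction}, and on $(v,Q)\in V\otimes\Sigma$ it reduces to Poincaré equivariance of $S^{(1)}_\matter(Q)$ and $S^{(2)}_\matter(Q_1,Q_2)$, which is manifest from the $\Spin(V_\RR)$-equivariant form of \eqref{eq:matterI1} and \eqref{eq:matterI2}. Moreover, any contribution involving only the gauge sector has already been handled by Lemma \ref{lm:gaugemultiplet2}. Hence the only nontrivial check is on a pair $(Q_1,Q_2)\in\Sigma\otimes\Sigma$, and among the resulting terms one may ignore those not involving any matter field.

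I would expand each of the four summands of \eqref{CMEYM3} on $(Q_1,Q_2)$ as an integral over $V_\RR$. The Chevalley--Eilenberg piece $\d_{\mr{CE}} S^{(1)}_\matter(Q_1,Q_2) = S^{(1)}_\matter(\Gamma(Q_1,Q_2))$ is read off from \eqref{eq:matterPoincareaction}; the three brackets are computed by pairing $(\phi,\phi^*)$, $(\psi,\psi^*)$ and, for the gauge-matter cross term, $(A,A^*)$ and $(c,c^*)$ in each functional. I would then sort the resulting local functional by its antifield content into three families: terms linear in $\phi^*$, terms linear in $\psi^*$, and terms proportional to $A^*, c^*, \lambda^*$ that arise from $\{S^{(1)}_\gauge,S^{(1)}_\matter\}$. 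Each family must vanish separately.

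The $\phi^*$ coefficient collects $-\iota_{\Gamma(Q_1,Q_2)}\d\phi$ from $\d_{\mr{CE}} S^{(1)}_\matter$, a pair of $(Q_i,\rho(\d\phi) Q_j)$-type terms from $\tfrac{1}{2}\{S^{(1)}_\matter,S^{(1)}_\matter\}$, and a correction from $\{S^{(1)}_\gauge,S^{(1)}_\matter\}$ that upgrades $\d$ to $\d_A$; these collapse by \eqref{eq:Gammaspinorpairing} and gauge covariance, exactly as in the gauge sector. The $\psi^*$ coefficient is the delicate piece: $\{S_\matter,S^{(2)}_\matter\}$ supplies $\tfrac12 \rho(\Gamma(Q_1,Q_2))\sd\d_A\psi$ together with a matter-Yukawa term, while $\tfrac12\{S^{(1)}_\matter,S^{(1)}_\matter\}$ produces terms of the form $-\tfrac12\rho(\d_A \Gamma(Q_i,\psi))Q_j$ and a bilinear in $\d_A\phi$; the gauge-matter bracket $\{S^{(1)}_\gauge,S^{(1)}_\matter\}$ again covariantizes all derivatives. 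Using Proposition \ref{prop:cliffordactionproperty2} to commute $\sd\d_A$ past the Clifford action, Assumption \ref{assumption:matter3psi} to move one of the $Q_i$ from the spinor slot into the matter slot, and finally the Clifford relation, all $\psi^*$ coefficients cancel. The remaining $A^*,c^*,\lambda^*$ contributions match exactly those already cancelled in Lemma \ref{lm:gaugemultiplet2} by Assumption \ref{assumption:3psi}.

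The main obstacle will be keeping track of the right $A$-module structure on $\Sigma$ and $P$ when combining terms: Assumption \ref{assumption:matter3psi} is formulated for $Q_1,Q_2\in\Sigma$ and $Q_3\in\Sigma^*$, and applying it to expressions of the form $\rho(\Gamma(Q_i,Q_j))\psi$ requires rewriting sums like $\sum (Q_i,\psi)^A$ in the $A$-bilinear form supplied by Lemma \ref{lm:extendpairing}. Once the signs and the $(A,A)$-bimodule structure are tracked carefully, the matter $3\psi$-rule plays in the matter sector precisely the role that Assumption \ref{assumption:3psi} plays in the gauge sector, and the cancellation is structurally parallel to Lemma \ref{lm:gaugemultiplet2}.
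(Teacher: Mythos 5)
Your proposal follows essentially the same route as the paper's proof: reduce to the evaluation on $(Q_1,Q_2)\in\Sigma\otimes\Sigma$, sort the resulting local functional by antifield content, cancel the $\phi^*$ coefficient via \eqref{eq:Gammaspinorpairing}, and cancel the $\psi^*$ coefficient by applying Assumption \ref{assumption:matter3psi} with $Q_3=\psi$ followed by the Clifford relation. The only discrepancies are minor bookkeeping ones (e.g.\ $\{S^{(1)}_{\gauge},S^{(1)}_{\matter}\}$ in fact contributes only $\psi^*$-terms, which cancel against the $\lambda\phi$ Yukawa piece of $\{S_{\matter},S^{(2)}\}$ rather than reproducing the $A^*,c^*,\lambda^*$ cancellations of Lemma \ref{lm:gaugemultiplet2}), and these would resolve themselves in the explicit computation.
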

\begin{proof}
Evaluating expression \eqref{CMEYM3} on $v_1,v_2 \in \mf{iso}(V)$ reduces to the claim that \eqref{eq:matterPoincareaction} defines a strict Lie action. Evaluating on $v \in \mf{iso}(V)$ and $Q \in \Sigma$, the claim reduces to the fact that $S^{(1)}$ is Poincar\'{e}-invariant. So, the only nontrivial term to check is the evaluation on $Q_1,Q_2 \in \Sigma$. 

The individual terms are:
\begin{enumerate}
\item \begin{align*}
\frac{1}{2}\{S_{\matter}^{(1)}, S_{\matter}^{(1)}\}(Q_1, Q_2) =& -\{S_{\matter}^{(1)}(Q_1), S_{\matter}^{(1)}(Q_2)\} \\
=&-\frac{1}{2}(Q_1, \rho(\d_A \phi) Q_2)\phi^* - \frac{1}{2}(Q_2, \rho(\d_A \phi)Q_1)\phi^* \\ &  +\frac{1}{2}(\rho(\d(Q_1,\psi))Q_2, \psi^*) +\frac{1}{2} (\rho(\d(Q_2, \psi)) Q_1, \psi^*),
\end{align*}

\item \begin{align*}
\{S_{\gauge}^{(1)}, S_{\matter}^{(1)}\}(Q_1, Q_2) = &-\{S_{\gauge}^{(1)}(Q_1), S_{\matter}^{(1)}(Q_2)\} - \{S_{\gauge}^{(1)}(Q_2), S_{\matter}^{(1)}(Q_1)\} \\
=& -\frac{1}{2}(\rho(\Gamma(Q_1, \lambda))(\phi Q_2), \psi^*) - \frac{1}{2}(\rho(\Gamma(Q_2, \lambda))(\phi Q_1), \psi^*),
\end{align*}

\item
\[
(\d_{\mr{CE}}S_{\matter}^{(1)})(Q_1,Q_2) = L_{\Gamma(Q_1,Q_2)} (\phi)\phi^* - (\Gamma(Q_1,Q_2).\psi, \psi^*),
\]

\item
\[
\{S_{\matter}, S^{(2)}(Q_1,Q_2)\} =  \frac{1}{2} \Gamma(Q_1,Q_2) \Gamma(\psi^*, \sd \d_A \psi - 2 \lambda\phi + c\psi^*) - ((\iota_{\Gamma(Q_1, Q_2)} A)\psi, \psi^*) + ((\iota_{\Gamma(Q_1, Q_2)} A)\phi, \phi^*)
\]
\end{enumerate}

We first collect all terms in equation \eqref{CMEYM3} proportional to $\phi^*$:
\[
-\frac{1}{2} (Q_1, \rho(\d_A\phi) Q_2) - \frac{1}{2} (Q_2, \rho(\d_A \phi) Q_1) + L_{\Gamma(Q_1,Q_2)} \phi + (\iota_{\Gamma(Q_1, Q_2)} A)\phi.
\]

By \eqref{eq:Gammaspinorpairing} we observe that the first two terms cancel with the last two terms.

Next, we collect all terms in equation \eqref{CMEYM3} containing $\psi^*$ and $\psi$:
\begin{equation}\label{psistar}
\frac{1}{2} \sd \d_A Q_2(Q_1 \psi)^A + \frac{1}{2} \sd \d_A Q_1(Q_2, \partial_i\psi)^A - \Gamma(Q_1,Q_2) . \psi + \frac{1}{2} \rho(\Gamma(Q_1,Q_2)) \sd \d_A \psi - (\iota_{\Gamma(Q_1, Q_2)} A)\psi.
\end{equation}
Applying Assumption \ref{assumption:matter3psi} to $Q_3 = \psi$, the first two terms become $\frac{1}{2} \sd \d_A \rho(\Gamma(Q_1,Q_2) \psi)$. 
Finally, by the Clifford identity the sum of this term with the fourth term in \eqref{psistar} is precisely $\iota_{\Gamma(Q_1,Q_2)}\d_A \psi$ which cancels the remaining terms.
\end{proof}

\begin{lemma}
One has
\[\{S_{\matter}^{(1)}, S_{\matter}^{(2)}\}(Q_1, Q_2, Q_3) = 0\]
for every $Q_1, Q_2, Q_3\in \Sigma$.
\end{lemma}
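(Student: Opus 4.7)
The plan is to observe that the BV bracket $\{S_{\matter}^{(1)}, S_{\matter}^{(2)}\}$ collapses to a single contraction, which then vanishes by the three-$\psi$ rule of Assumption \ref{assumption:3psi}, exactly as in Lemma \ref{lm:gaugemultiplet3} for the pure gauge sector. The starting observation is that $S_{\matter}^{(2)}(Q_2, Q_3)$ depends only on the antifield $\psi^*$. Since the linearized BV pairing on the matter fields couples $\psi$ with $\psi^*$ and $\phi$ with $\phi^*$, and since $S_{\matter}^{(2)}$ contains neither $\phi$ nor $\phi^*$, the bracket with $S_{\matter}^{(1)}(Q_1)$ receives contributions only from the $\psi$--$\psi^*$ pairing. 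Inspecting the two summands of $S_{\matter}^{(1)}(Q_1)$, only the first, $\int((Q_1, \psi), \phi^*)$, depends on $\psi$; the second summand depends on $\phi, A, \psi^*$ and therefore does not contribute.

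Next, I would rewrite $S_{\matter}^{(2)}(Q_2, Q_3)$ in a form convenient for variational calculation. Using the defining property of $\Gamma$ from equation \eqref{eq:Gammaspinorpairing},
\[S_{\matter}^{(2)}(Q_2, Q_3) = \frac{1}{4}\int_{V_\RR}\dvol\,\bigl(\rho(\Gamma(Q_2, Q_3))\psi^*, \psi^*\bigr),\]
so that $\delta S_{\matter}^{(2)}(Q_2, Q_3) / \delta \psi^* = \frac{1}{2}\rho(\Gamma(Q_2, Q_3))\psi^*$. Pairing this against the $\psi$-derivative of the first term of $S_{\matter}^{(1)}(Q_1)$ via the BV bracket and then transferring the Clifford action across the scalar spinorial pairing using $(Q_1, \rho(v)\psi^*) = (\rho(v)Q_1, \psi^*)$ yields
\[\{S_{\matter}^{(1)}(Q_1), S_{\matter}^{(2)}(Q_2, Q_3)\} = \frac{1}{2}\int_{V_\RR}\dvol\,\bigl((\rho(\Gamma(Q_2, Q_3))Q_1)\cdot \psi^*, \phi^*\bigr),\]
where $(\rho(\Gamma(Q_2, Q_3))Q_1)\cdot \psi^* \in P$ via the $A$-module structure and the resulting pairing with $\phi^* \in P$ is the scalar pairing on $P$.

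Finally, $\{S_{\matter}^{(1)}, S_{\matter}^{(2)}\}(Q_1, Q_2, Q_3)$ is the sum over the three ways of distributing the three supercharges between $S^{(1)}$ and (the symmetric) $S^{(2)}$, producing
\[\{S_{\matter}^{(1)}, S_{\matter}^{(2)}\}(Q_1, Q_2, Q_3) = \frac{1}{2}\int_{V_\RR}\dvol\,\bigl(\bigl[\rho(\Gamma(Q_2, Q_3))Q_1 + \rho(\Gamma(Q_3, Q_1))Q_2 + \rho(\Gamma(Q_1, Q_2))Q_3\bigr]\cdot \psi^*, \phi^*\bigr),\]
and the bracketed expression vanishes by Assumption \ref{assumption:3psi}.

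The main obstacle is purely bookkeeping: keeping track of signs and carefully distinguishing the $A$-valued pairings $(-,-)^A$ from the scalar pairing on the matter-spinor space $\Sigma^* \otimes_A P$, so that the Clifford-compatibility identity $(Q_1, \rho(v)\psi^*) = (\rho(v)Q_1, \psi^*)$ continues to hold after coupling to the $P$-factor. Apart from this, no new ingredient is needed beyond what was already developed for the gauge multiplet in Lemma \ref{lm:gaugemultiplet3}; the matter-specific Assumption \ref{assumption:matter3psi} was required for the earlier master-equation checks on $S_{\matter}$, but the present lemma only invokes the three-$\psi$ rule on $\Sigma$.
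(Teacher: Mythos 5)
Your proposal is correct and follows essentially the same route as the paper: the bracket reduces to the single $\psi$--$\psi^*$ contraction, the Clifford action is transferred onto $Q_1$ via the compatibility of the spinorial pairing, and the cyclic symmetrization vanishes by Assumption \ref{assumption:3psi}. The paper's proof is the same computation written with the $\Gamma$-pairing on $\Sigma^*\otimes_A P$ rather than the explicit functional derivatives, and it likewise uses only the three-$\psi$ rule, not Assumption \ref{assumption:matter3psi}.
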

\begin{proof}
We have
\begin{align*}
\{S_{\matter}^{(1)}(Q_1), S_{\matter}^{(2)}(Q_2, Q_3)\} & = \frac{1}{2} (\Gamma(Q_2, Q_3), \Gamma(\psi^*, \phi^* Q_1)) \\ & = (\psi^*, \phi^* \rho(\Gamma(Q_2,Q_3)) Q_1)  .
\end{align*}
The expression
$\{S_{\matter}^{(1)}, S_{\matter}^{(2)}\}(Q_1,Q_2,Q_3)$ is obtained by cyclically symmetrizing the above expression. By Assumption \ref{assumption:3psi} the cyclic symmetrization is identically zero.
\end{proof}

\part{Classification of Twists} \label{classification_part}

In the following sections we fix a complex Lie algebra $\fg$ equipped with a symmetric bilinear invariant nondegenerate pairing, which should be thought of as the complexified Lie algebra of the gauge group.

\section{Dimension 10}

The odd part of the $10$-dimensional supersymmetry algebra is 
\[
\Sigma\cong S_+\otimes W_+\oplus S_-\otimes W_-,
\] 
where $S_+, S_-$ are the 16-dimensional semi-spin representations of $\Spin(10, \CC)$, and where $W_+$ and $W_-$ are complex vector spaces equipped with nondegenerate symmetric bilinear pairings. 

There are supersymmetric Yang--Mills theories with $\cN=(1, 0)$ or $\cN=(0, 1)$ supersymmetry. 
We concentrate on the first case, the second case being identical. 
So, we fix $W_+=\CC$ and $W_- = 0$.

\subsection{\texorpdfstring{$\cN=(1, 0)$}{N=(1,0)} Super Yang--Mills Theory}

We consider $\cN=(1, 0)$ super Yang--Mills theory on $M=\RR^{10}$ with the Euclidean metric.

This theory admits a unique twist:
\begin{itemize}
\item A square-zero supercharge $Q\neq 0\in\Sigma$ has 5 invariant directions and does not admit a compatible homomorphism $\alpha$. So, it gives rise to a $\ZZ/2\ZZ$-graded holomorphic theory. Such a supercharge is stabilized by $G=\SU(5)\subset \Spin(10, \CC)$.
\end{itemize}

\subsubsection{Holomorphic Twist}
\label{sect:10dholomorphictwist}

Let $Q\in\Sigma$ be a square-zero supercharge, which we will fix for the rest of this section. 
The image of $\Gamma(Q, -)\colon \Sigma\rightarrow V$ is a complex Lagrangian subspace $L\subset V$. Denote by $\sigma\colon V\rightarrow V$ the complex conjugation induced by the real structure $V = V_\RR\otimes_\RR\CC$. Since the bilinear form on $V_\RR$ is positive-definite, $L\cap \sigma(L) = 0$. In other words, $L$ defines a (linear) complex structure on $V_\RR$. Moreover, we may canonically identify $\sigma(L)\cong L^*$.

\begin{remark}
It is important here that we are working in the complexified setting.  While it makes sense to study a real form of 10d supersymmetric Yang--Mills theory in Lorentzian signature associated to a real Lie algebra $\gg_\RR$, where the fermions are valued in the Majorana-Weyl spinor bundle, in this real supersymmetry algebra there are no square-zero supercharges. Indeed, a square-zero supercharge induces a 5-dimensional isotropic subspace of $\RR^{10}$, which only exists in the split signature $5+5$ (in which case there is no real structure for the Weyl spinor representation).  So necessarily, twists of maximal super Yang--Mills theory cannot be compatible with any unitary structure.
\end{remark}

Let $\ML(L) = \ML(5)$ be the metalinear group of $L$. Under the embedding $\ML(L)\subset \Spin(V)$, the semi-spin representation $\Sigma=S_+$ decomposes as
\[\Sigma = \det(L)^{1/2} \oplus \wedge^2 L^*\otimes \det(L)^{1/2}\oplus L\otimes \det(L)^{-1/2}.\]
$Q\in\Sigma$ lies in the first summand, so the choice of $Q$ is equivalent to the choice of a (linear) K\"ahler structure $L$ on $V_\RR$ together with a complex half-density on $L$. The square of this half-density defines a Calabi--Yau structure on $M$.

We will now rewrite the fields and the action in terms of the Calabi--Yau structure. Let $\omega\in\Omega^{1, 1}(M)$ be the K\"ahler form, $\Omega\in\Omega^{5, 0}(M)$ the holomorphic volume form and $\Lambda\colon \Omega^{p+1, q+1}(M)\rightarrow \Omega^{p, q}(M)$ the dual Lefschetz operator. We denote the real volume form on $M$ by
\[\dvol = \frac{\omega^5}{5!}.\]

The vector representation decomposes as
\[\Omega^1(M)\cong \Omega^{1, 0}(M)\oplus \Omega^{0, 1}(M),\]
the semi-spin representation $S_+$ decomposes as
\[\Omega^0(M; S_+)\cong \Omega^{1, 0}(M)\oplus \Omega^{0, 2}(M)\oplus \Omega^0(M)\]
and the semi-spin representation $S_-$ decomposes as
\[\Omega^0(M; S_-)\cong \Omega^{0, 1}(M)\oplus \Omega^{2, 0}(M)\oplus \Omega^0(M).\]
Under this decomposition the scalar pairing $S_+\otimes S_-\rightarrow \CC$ corresponds to the wedge product of individual components post-composed with $\Lambda$. Under the above decompositions the Clifford multiplication of a vector $A = A_{1, 0} + A_{0, 1}$ and a spinor $\lambda = \rho + B + \chi\in S_+$ is given by
\[\rho(A)\lambda = (A_{0, 1}\chi + \Lambda(A_{1, 0}\wedge B), A_{1, 0}\wedge \rho + \ast(A_{0, 1}\wedge B\wedge\Omega), \Lambda(A_{0, 1}\wedge \rho))\in S_-.\]

\vspace{-10pt}
\paragraph{Fields:} The BRST fields are given by:
\begin{itemize}
\item Gauge fields $A_{1, 0}\in\Omega^{1, 0}(M; \gg)$, $A_{0, 1}\in\Omega^{0, 1}(M; \gg)$.
\item Fermions $\rho\in\Omega^{1, 0}(M; \Pi\gg)$, $B\in\Omega^{0, 2}(M; \Pi\gg)$, $\chi\in\Omega^0(M; \Pi\gg)$.
\item A ghost field $c\in\Omega^0(M; \gg)[1]$.
\end{itemize}
We denote their antifields by $A_{1, 0}^*, A_{0, 1}^*, \rho^*, B^*, \chi^*, c^*$.

The BV action of the theory is obtained from \eqref{eq:YMBVaction} by decomposing it in terms of the above fields. To write it out we will need an expression for the Hodge star operator on K\"{a}hler manifolds, see \cite[Proposition 1.2.31]{Huybrechts}.

\begin{prop}
Let $(M, \omega)$ be a K\"{a}hler $d$-fold and decompose
\[\Omega^2(M) = \Omega^{2, 0}(M)\oplus \Omega^{0, 2}(M)\oplus (\CC\omega\oplus \Omega^{1, 1}_\perp(M)).\]
Then
\begin{enumerate}
\item The spaces $\Omega^{2, 0}(M)\oplus \Omega^{0, 2}(M)$, $\CC\omega$ and $\Omega^{1, 1}_\perp(M)$ are mutually orthogonal.

\item For a form $\alpha\in\Omega^{2, 0}(M)\oplus \Omega^{0, 2}(M)$ we have
\[\ast \alpha = \frac{1}{(d-2)!} \alpha\wedge \omega^{d-2}.\]

\item For $\alpha\in \Omega^{1, 1}_\perp(M)$ we have
\[\ast\alpha = -\frac{1}{(d-2)!} \alpha\wedge \omega^{d-2}.\]

\item For $\alpha\in\CC\omega$ we have
\[\ast \alpha = \frac{1}{(d-1)!} \alpha\wedge \omega^{d-2}.\]
\end{enumerate}
\end{prop}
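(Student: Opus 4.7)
The plan is to reduce everything to pointwise linear algebra on a Hermitian vector space of dimension $d$, since both the Hodge star and the bidegree decomposition of $2$-forms are defined fiberwise. At a chosen point $x\in M$, I would pick a unitary coframe so that $\omega|_x = \frac{i}{2}\sum_{j=1}^d dz_j\wedge d\bar z_j$, and $\mathrm{dvol}|_x = \frac{1}{d!}\omega^d|_x$; all four claims then become assertions about the exterior algebra of the standard Hermitian $\CC^d$.

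For claim (1), the orthogonality of $\Omega^{2,0}\oplus\Omega^{0,2}$ from $\Omega^{1,1}$ is immediate since these subspaces sit in distinct bidegrees and the Hermitian inner product on complexified forms is block-diagonal with respect to bidegree. Within $\Omega^{1,1}$, the splitting $\CC\omega\oplus\Omega^{1,1}_\perp$ is by construction the decomposition of a $(1,1)$-form into its $\omega$-trace and trace-free part; the two pieces are orthogonal with respect to the pointwise Hermitian pairing because $\Omega^{1,1}_\perp$ is defined as the kernel of the contraction with $\omega$, which is exactly the adjoint of the inclusion $\CC\omega\hookrightarrow\Omega^{1,1}$.

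For claims (2), (3), (4), the main input is the Weil formula for the Hodge star acting on primitive forms: if $\alpha$ is primitive of bidegree $(p,q)$, then
\[\ast\left(\frac{L^{j}\alpha}{j!}\right) = (-1)^{(p+q)(p+q+1)/2}\, i^{p-q}\, \frac{L^{d-p-q-j}\alpha}{(d-p-q-j)!},\]
where $L=\omega\wedge(-)$ is the Lefschetz operator (the statement we are proving is Proposition 1.2.31 of Huybrechts, which is precisely this formula repackaged). I would then observe that every element of $\Omega^{2,0}\oplus\Omega^{0,2}$ is automatically primitive because $\Lambda$ vanishes on pure types $(2,0)$ and $(0,2)$; every element of $\Omega^{1,1}_\perp$ is primitive by definition; and $\omega$ itself equals $L(1)$ applied to the primitive $(0,0)$-form $1$. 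Substituting the relevant values $(p,q,j)=(2,0,0)$ or $(0,2,0)$, $(1,1,0)$, and $(0,0,1)$ into the Weil formula and simplifying the sign factors $(-1)^{3}i^{\pm 2}=1$, $(-1)^{3}i^{0}=-1$, and $(-1)^{0}i^{0}=1$ yields claims (2), (3), and (4), respectively.

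The only obstacle is bookkeeping the signs and factorials in the Weil identity; once that formula is in hand each of (2)-(4) is a one-line substitution. Since the formula is a classical identity of K\"ahler geometry, I would simply cite Huybrechts rather than rederive it.
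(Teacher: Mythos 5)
Your proposal is correct and matches the paper's treatment: the paper gives no independent proof, simply citing Huybrechts' Proposition 1.2.31 (the Weil identity $\ast\,\tfrac{L^j}{j!}\alpha=(-1)^{k(k+1)/2}i^{p-q}\tfrac{L^{d-k-j}}{(d-k-j)!}\alpha$ for primitive $\alpha$), and your substitutions $(p,q,j)=(2,0,0),(0,2,0),(1,1,0),(0,0,1)$ with the sign checks $(-1)^3 i^{\pm2}=1$, $(-1)^3 i^0=-1$, $(-1)^0 i^0=1$ are exactly the right specializations. The observations that $(2,0)$- and $(0,2)$-forms are automatically primitive and that $\omega=L(1)$ are the only inputs needed beyond the cited formula, and your orthogonality argument via bidegree and the adjointness of $\Lambda$ and $L$ is standard and correct.
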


\begin{corollary} \label{Kahler_YM_term_cor}
Let $M$ be a K\"{a}hler $d$-fold and $F = F_{2, 0} + F_{1, 1} + F_{0, 2}$ a two-form. Then
\[F\wedge \ast F + \frac{1}{(d-2)!} F\wedge F\wedge \omega^{d-2} = \left(4(F_{2, 0}, F_{0, 2}) + (\Lambda F_{1, 1})^2\right) \frac{\omega^d}{d!}.\]
\end{corollary}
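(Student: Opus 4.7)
The plan is to verify the identity by decomposing $F$ according to Hodge type, applying the Hodge star formulas of the preceding proposition term-by-term, and tracking the combinatorial factors carefully. Since the target is an equation between $(d,d)$-forms, the first simplification is that in the product $F\wedge \ast F$ only the bidegree-balanced cross terms survive. Under the proposition's conventions $\ast\colon \Omega^{p,q}\to \Omega^{d-q,d-p}$, so only the pairs $(F_{2,0},\ast F_{0,2})$, $(F_{0,2},\ast F_{2,0})$ and $(F_{1,1},\ast F_{1,1})$ contribute; all other products land in forbidden bidegrees. The same bidegree counting applied to $F\wedge F\wedge\omega^{d-2}$ isolates the contribution $(2F_{2,0}\wedge F_{0,2}+F_{1,1}\wedge F_{1,1})\wedge\omega^{d-2}$.

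Next, I would further split $F_{1,1}=\alpha\,\omega+F_{1,1}^\perp$ with $F_{1,1}^\perp\in\Omega^{1,1}_\perp(M)$, and use $\Lambda\omega=d$ together with $\Lambda F_{1,1}^\perp=0$ to deduce $\alpha=\Lambda F_{1,1}/d$. Primitivity of $F_{1,1}^\perp$ gives the crucial vanishing $F_{1,1}^\perp\wedge\omega^{d-1}=0$, which eliminates all the mixed $\alpha$--$F_{1,1}^\perp$ terms. For the $(2,0)$--$(0,2)$ sector, the formula from item (2) of the proposition yields
\[
F_{2,0}\wedge \ast F_{0,2}+F_{0,2}\wedge \ast F_{2,0}=\tfrac{2}{(d-2)!}\,F_{2,0}\wedge F_{0,2}\wedge \omega^{d-2},
\]
and when combined with the $\tfrac{2}{(d-2)!}F_{2,0}\wedge F_{0,2}\wedge\omega^{d-2}$ contribution from $F\wedge F\wedge\omega^{d-2}$ this gives $\tfrac{4}{(d-2)!}F_{2,0}\wedge F_{0,2}\wedge\omega^{d-2}$, which is identified with $4(F_{2,0},F_{0,2})\tfrac{\omega^d}{d!}$: this equality \emph{defines} the bilinear notation $(F_{2,0},F_{0,2})$.

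For the $(1,1)$ sector, items (3) and (4) give
\[
F_{1,1}\wedge \ast F_{1,1}=\tfrac{\alpha^2}{(d-1)!}\omega^d-\tfrac{1}{(d-2)!}F_{1,1}^\perp\wedge F_{1,1}^\perp\wedge\omega^{d-2},
\]
while $F_{1,1}^2\wedge\omega^{d-2}=\alpha^2\omega^d+F_{1,1}^\perp\wedge F_{1,1}^\perp\wedge\omega^{d-2}$. The primitive pieces cancel in the sum, and one is left with
\[
F_{1,1}\wedge \ast F_{1,1}+\tfrac{1}{(d-2)!}F_{1,1}^2\wedge\omega^{d-2}=\alpha^2\omega^d\Bigl(\tfrac{1}{(d-1)!}+\tfrac{1}{(d-2)!}\Bigr)=\tfrac{d\,\alpha^2\,\omega^d}{(d-1)!}=(d\alpha)^2\tfrac{\omega^d}{d!}=(\Lambda F_{1,1})^2\tfrac{\omega^d}{d!}.
\]
Summing with the $(2,0)$--$(0,2)$ sector gives the desired identity.

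The only real obstacle is bookkeeping: one must track the combinatorial identity $\tfrac{1}{(d-1)!}+\tfrac{1}{(d-2)!}=\tfrac{d}{(d-1)!}=\tfrac{d^2}{d!}$ and see it exactly cancel the factor $d^2$ hidden in $\alpha^2=(\Lambda F_{1,1})^2/d^2$. The cancellation of the primitive $(1,1)$ anti-self-dual term between $F_{1,1}\wedge\ast F_{1,1}$ and the $F\wedge F\wedge\omega^{d-2}$ piece is the structural point, and is precisely what makes the K\"ahler Yang--Mills action split into a self-dual part plus a topological term in Corollary \ref{Kahler_YM_term_cor}'s form.
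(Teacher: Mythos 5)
Your proof is correct and follows exactly the route the paper intends: the corollary is stated as an immediate consequence of the preceding proposition, obtained by splitting $F$ into Hodge types (and further splitting $F_{1,1}$ into its $\omega$-component and primitive part), applying the star formulas term by term, and using $F_{1,1}^\perp\wedge\omega^{d-1}=0$ to cancel the primitive contributions. The only cosmetic point is that the pairing $(F_{2,0},F_{0,2})$ is not \emph{defined} by your identity but is the standard metric pairing characterized by $\alpha\wedge\ast\beta=(\alpha,\beta)\,\dvol$; your identification of $\tfrac{1}{(d-2)!}F_{2,0}\wedge F_{0,2}\wedge\omega^{d-2}$ with $(F_{2,0},F_{0,2})\tfrac{\omega^d}{d!}$ is precisely this characterization combined with item (2) of the proposition, so nothing is missing.
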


Since we are working near the trivial connection, the topological term $\int F\wedge F\wedge \omega^3$ is exact, so we will drop it. The BV action of the twisted theory $S_{\mr{twist}}$ is then the sum of the following terms:
\begin{align}
S_{\mr{BRST}} &= \int \dvol \left(-(F_{2, 0}, F_{0, 2}) - \frac{1}{4}(\Lambda F_{1, 1})^2 + \chi \Lambda(\ol\dd_{A_{0, 1}}\rho) + (B, \dd_{A_{1, 0}} \rho)\right)  + \frac{1}{2}B\wedge \ol\dd_{A_{0, 1}} B\wedge\Omega \label{eq:10dBRST} \\
S_{\mr{anti}} &= \int \dvol\left((\dd_{A_{1, 0}} c, A_{1, 0}^*) + (\ol\dd_{A_{0, 1}} c, A_{0, 1}^*) + ([\rho, c], \rho^*) + [\chi, c]\chi^* + ([B, c], B^*) + \frac{1}{2}[c, c]c^*\right) \label{eq:10danti} \\
S^{(1)}(Q) &= \int \dvol\left(-(\rho, A_{1, 0}^*) + (F_{0, 2}, B^*) + \frac{1}{2}\Lambda F_{1, 1} \chi^*\right) \\
S^{(2)}(Q) &= -\frac{1}{4}\int \dvol (\chi^*)^2.
\end{align}
The first two terms comprise the action of supersymmetric Yang--Mills theory $S_{\rm SYM} = S_{\rm gauge}$ of Definition \ref{def:sym} written with respect to the $\ML(L) = \ML(5)$ decomposition. 
The last two terms arise from the twisting procedure.

\begin{theorem}
The holomorphic twist of 10d $\mc N=(1, 0)$ super Yang--Mills on $M=\RR^{10}$ is perturbatively equivalent to holomorphic Chern--Simons theory on $M\cong\CC^5$ with the space of fields $\map(M, B\gg)$. Moreover, the equivalence is $\SU(5)$-equivariant.
\label{thm:10dholomorphictwist}
\end{theorem}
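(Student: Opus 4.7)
The strategy is to successively simplify the twisted BV action $S_{\rm twist} = S_{\rm BRST} + S_{\rm anti} + S^{(1)}(Q) + S^{(2)}(Q)$ displayed above, using Propositions \ref{prop:BRSTdoublet} and \ref{prop:integrateoutfield} to eliminate a contractible subcomplex of BV fields. A preliminary degree-of-freedom count is reassuring: $\Omega^{0,\bullet}(\CC^5,\fg)[1]$ has $1+5+10+10+5+1 = 32$ components per $\fg$-generator, and the twisted theory has $54 = 32 + 20 + 2$ components, with the excess of $20$ coming from $A_{1,0}, \rho$ and their antifields, and the excess of $2$ from the scalar pair $\chi, \chi^*$. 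In the $\ZZ/2\ZZ$-graded twisted theory, direct comparison of cohomological-plus-fermionic parities shows that $c, A_{0,1}, B, B^*, A^*_{0,1}, c^*$ have the same total parities as the components of $\Omega^{0,0}[1], \Omega^{0,1}[1], \ldots, \Omega^{0,5}[1]$ respectively.

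First, I would apply Proposition \ref{prop:BRSTdoublet} with $F = \Omega^{1,0}(M, \fg)$ to eliminate $(A_{1,0}, \rho)$ together with their antifields. The identity-type pairing required by the proposition is precisely the quadratic coupling $-\int \dvol\,(\rho, A_{1,0}^*)$ from $S^{(1)}(Q)$, and the two fields have opposite $\ZZ/2\ZZ$ parities ($A_{1,0}$ even, $\rho$ odd). The substitutions prescribed by the proposition express $A_{1,0}$ and $\rho$ in terms of the remaining fields via the remaining couplings in $S_{\rm anti}$, $S_{\rm BRST}$, and $S^{(1)}(Q)$. Next, I would integrate out $\chi^*$ using Proposition \ref{prop:integrateoutfield} applied to the pieces $-\frac{1}{4}\int \dvol\,(\chi^*)^2 + \frac{1}{2}\int \dvol\, \Lambda F_{1,1}\, \chi^*$; the back-substitution $\chi^* = \Lambda F_{1,1}$ produces $+\frac{1}{4}\int \dvol\,(\Lambda F_{1,1})^2$, which cancels exactly against the Yang--Mills contribution $-\frac{1}{4}\int \dvol\,(\Lambda F_{1,1})^2$ coming from Corollary \ref{Kahler_YM_term_cor}. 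A parallel argument removes $\chi$, viewing its linear coupling $\int \dvol\,\chi\,\Lambda \ol\dd_{A_{0,1}}\rho$ as a Lagrange-multiplier constraint once $\rho$ has been replaced.

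The surviving fields $(c, A_{0,1}, B)$ and antifields $(B^*, A^*_{0,1}, c^*)$ then assemble into a single field $\mathcal{A} \in \Omega^{0,\bullet}(\CC^5, \fg)[1]$ via the Calabi--Yau trivialization $K_{\CC^5} \cong \cO_{\CC^5}$ induced by $\Omega$. The residual kinetic terms $\frac{1}{2}\int B \wedge \ol\dd_{A_{0,1}} B \wedge \Omega$, the Chevalley--Eilenberg ghost terms in $S_{\rm anti}$, and the surviving $(F_{0,2}, B^*)$ coupling combine into $\frac{1}{2}\int \Omega \wedge \langle \mathcal{A}, \ol\dd \mathcal{A}\rangle$, while the cubic interactions (including those generated by the doublet substitutions acting on $-(F_{2,0}, F_{0,2})$ and on the $([B, c], B^*)$-type terms) combine into $\frac{1}{6}\int \Omega \wedge \langle \mathcal{A}, [\mathcal{A}, \mathcal{A}]\rangle$, yielding holomorphic Chern--Simons theory on $\CC^5$. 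The $\SU(5)$-equivariance is automatic because $\SU(5) \subset \Spin(10, \CC)$ is by construction the stabilizer of $Q$ and each elimination step is $\SU(5)$-equivariant. The main obstacle will be the bookkeeping of the cubic and potentially quartic interactions generated by the doublet-elimination substitutions; I expect these to reorganize cleanly using the composition-algebra identity (Assumption \ref{assumption:3psi}) together with $\SU(5)$-equivariance, which pins down the cubic Chern--Simons vertex uniquely up to scale among $\SU(5)$-invariant local interactions of $\mathcal{A}$.
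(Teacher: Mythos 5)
Your strategy is essentially the paper's: kill the pair $(\chi,\chi^*)$ with Proposition \ref{prop:integrateoutfield}, kill the quadruple $(\rho,\rho^*,A_{1,0},A_{1,0}^*)$ with Proposition \ref{prop:BRSTdoublet} using the coupling $-\int\dvol\,(\rho,A_{1,0}^*)$, and then identify the survivors $(c,A_{0,1},B;B^*,A_{0,1}^*,c^*)$ with $\Omega^{0,\bullet}(\CC^5;\fg)[1]$ via the Calabi--Yau form (in the paper this is the explicit antifield rescaling $\alpha^*\mapsto \Omega^{-1}e^{\omega}(\alpha^*)$), with $\SU(5)$-equivariance coming from the fact that $\SU(5)$ stabilizes $Q$ and every step is equivariant. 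Your field count and the completion of the square cancelling $-\tfrac14\int\dvol\,(\Lambda F_{1,1})^2$ against the $\chi^*$ terms are both correct.

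The one place where your version needs repair is the ordering of the two eliminations, specifically the fate of the coupling $\int\dvol\,\chi\,\Lambda(\ol\dd_{A_{0,1}}\rho)$. If you remove the doublet $(\rho,A_{1,0})$ first, Proposition \ref{prop:BRSTdoublet} substitutes $\rho\mapsto \dd_{A_{1,0}}c$ with $A_{1,0}=0$, i.e.\ $\rho\mapsto\partial c$, so this coupling does not simply drop out: unless you verify explicitly that it is cancelled by the cross terms of that proposition, you are left with a term coupling $\chi$ to $c$ and $A_{0,1}$ that is linear in $\chi$ and independent of $\chi^*$, and then Proposition \ref{prop:integrateoutfield} no longer applies verbatim (its $S_0$ must be independent of both members of the eliminated pair). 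Your proposed fix --- reading the $\chi$-linear term as a Lagrange-multiplier constraint --- is not the right mechanism: imposing such a constraint would change the theory, whereas the reduced theory (holomorphic Chern--Simons) has no constraint; in the correct reduction $\chi$ is simply set to zero as the partner of $\chi^*$ in Proposition \ref{prop:integrateoutfield}. The paper's order avoids the issue entirely: eliminating $(\chi,\chi^*)$ first sets $\chi=0$ and kills this coupling before any substitution for $\rho$ is made, and then the doublet elimination produces no new interactions at all (everything involving the quadruple either vanishes at $A_{1,0}=\rho^*=0$ or dies by $\partial^2=0$), so the final action is literally the leftover terms $\tfrac12\int B\wedge\ol\dd_{A_{0,1}}B\wedge\Omega+\int\dvol\bigl((\ol\dd_{A_{0,1}}c,A_{0,1}^*)+([B,c],B^*)+\tfrac12[c,c]c^*+(F_{0,2},B^*)\bigr)$. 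In particular there is no cubic or quartic bookkeeping to do, and Assumption \ref{assumption:3psi} plays no role in the twist computation (it is used only to establish the supersymmetry of the action in Theorem \ref{thm:gaugemultipletSUSY}). So: same proof, but either perform the eliminations in the paper's order or explicitly dispose of the induced $\chi$-coupling before invoking Proposition \ref{prop:integrateoutfield}.
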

\begin{proof}
First, we may eliminate $\chi$ and $\chi^*$ using Proposition \ref{prop:integrateoutfield}. So, the above theory described by $S_{\rm twist}$ is perturbatively equivalent to the theory without the fields $\chi$ and $\chi^*$ with the BV action
\begin{align*}
S_{-\chi} &= \int \dvol \left(-(F_{2, 0}, F_{0, 2}) + (B, \dd_{A_{1, 0}} \rho)\right)  + \frac{1}{2}B\wedge \ol\dd_{A_{0, 1}} B\wedge\Omega \\
&+ \int \dvol\left((\dd_{A_{1, 0}} c, A_{1, 0}^*) + (\ol\dd_{A_{0, 1}} c, A_{0, 1}^*) + ([\rho, c], \rho^*) + ([B, c], B^*) + \frac{1}{2}[c, c]c^*\right) \\
&+ \int \dvol(-(\rho, A_{1, 0}^*) + (F_{0, 2}, B^*)).
\end{align*}

Next, we have a term $\int \dvol \rho\wedge A_{1, 0}^*$ in the action, i.e. $(\rho, A_{1, 0})$ is a trivial BRST doublet, so by Proposition \ref{prop:BRSTdoublet} we may remove it. 
The above theory described by $S_{-\chi}$ is perturbatively equivalent to the theory without fields $\rho,\rho^*,A_{1,0},A_{1,0}^*$ and with the BV action
\[
S_0 = \int \frac{1}{2}B\wedge \ol\dd_{A_{0, 1}} B\wedge\Omega + \dvol\left((\ol\dd_{A_{0, 1}} c, A_{0, 1}^*) + ([B, c], B^*) + \frac{1}{2}[c, c]c^* + (F_{0, 2}, B^*)\right)
\]
Up to rescaling of the antifields by the rule
\[
\alpha^* \mapsto \Omega^{-1} e^{\omega} (\alpha^*)
\]
the fields and action coincide precisely with those of holomorphic Chern--Simons theory (see Section \ref{gen_CS_section}).
\end{proof}

\begin{remark}
A similar claim was previously proved by Baulieu \cite{Baulieu} by adding an auxiliary field to 10d $\mc N=(1, 0)$ super Yang--Mills.
\end{remark}

\section{Dimension 9} \label{9d_section}

The odd part of the $9$-dimensional supersymmetry algebra is 
\[
\Sigma\cong S\otimes W,
\] 
where $S$ is the 16-dimensional spin representation of $\Spin(9, \CC)$ and $W$ is a complex vector space equipped with a nondegenerate symmetric bilinear pairing. 

There is a supersymmetric Yang--Mills theory with $\mc N=1$ supersymmetry, so we fix $W = \CC$.

\subsection{\texorpdfstring{$\cN=1$}{N=1} Super Yang--Mills Theory}

We consider $\cN=1$ super Yang--Mills theory on $M=\RR^9$ equipped with the Euclidean metric.

This theory admits a unique twist:
\begin{itemize}
\item A square-zero supercharge $Q\neq 0\in\Sigma$ has 5 invariant directions and does not admit a compatible homomorphism $\alpha$. So, it gives rise to a $\ZZ/2\ZZ$-graded holomorphic theory. Such a supercharge is stabilized by $G=\SU(4)\subset \Spin(9, \CC)$.
\end{itemize}

We may identify the odd part of the 9d $\mc N=1$ supersymmetry algebra with the odd part of the 10d $\mc N=(1, 0)$ supersymmetry algebra. Under this identification a supercharge $Q$ squares to zero in 9d iff it squares to zero in 10d.

\subsubsection{Minimal Twist}
\label{sect:9dminimaltwist}

Let $Q\in\Sigma$ be a square-zero supercharge. Denote the image of $\Gamma(Q, -)\colon \Sigma\rightarrow V$ by $L^\perp\subset V$. Its orthogonal complement $L$ is maximal isotropic and $L^{\perp}/L$ is one-dimensional. Since the bilinear form on $V_\RR$ is positive-definite, $L\cap \sigma(L) = 0$. Moreover, $N = L^{\perp}\cap \sigma(L^{\perp})\subset V$ is a $\sigma$-stable one-dimensional subspace, we let $N_\RR$ be the $\sigma$-invariants of $N$. Therefore, we get a decomposition
\[V = L\oplus \sigma(L)\oplus N,\]
where $L^{\perp} = L\oplus N$.

Under the embedding $\ML(L)\subset \Spin(V)$ the spin representation $\Sigma=S$ decomposes as
\[\Sigma = \wedge^\bullet L\otimes \det(L)^{-1/2}\]
and the supercharge $Q$ lies in the one-dimensional subspace $\det(L)^{1/2}\subset \Sigma$. Therefore, the choice of $Q$ is equivalent to the choice of a one-dimensional subspace $N_\RR\subset V_\RR$ and a complex structure on $V_\RR/N_\RR$ together with a complex half-density.

It will be convenient to perform a computation of the twist in a slightly more general setting which will be useful for lower-dimensional computations.

Suppose $L$ is a complex vector space equipped with a Hermitian structure and a complex half-density. Suppose $N_\RR=\RR^{5-\dim(L)}$ equipped with a Euclidean metric and a spin structure. Denote by $N=N_\RR\otimes_\RR\CC$ its complexification which carries a complex half-density. Let $V_\RR = L\times N$ (a 10-dimensional real vector space). By the results of Section \ref{sect:10dholomorphictwist}, there is a canonical square-zero supercharge $Q\in\Sigma$ determined by the complex structure on $L\times N$ and a complex half-density.

The dimensional reduction of 10d super Yang--Mills on $L\times N$ along $\Re\colon N\rightarrow N_\RR$ is by definition the $(5+\dim(L))$-dimensional super Yang--Mills on $L\times N_\RR$. Since $N\cong N_\RR\oplus iN_\RR$, this theory carries an action of the $R$-symmetry group $G_R=\Spin(N_\RR)$. We consider a twisting homomorphism $\phi\colon \SU(L)\times \Spin(N_\RR)\rightarrow G_R=\Spin(N_\RR)$ given by the projection onto the second factor under which $Q$ is preserved.

\begin{theorem}
The twist of $(5+\dim(L))$-dimensional super Yang--Mills on $L\times N_\RR$ by $Q$ is perturbatively equivalent to the generalized Chern--Simons theory with the space of fields $\map(L\times (N_\RR)_{\mr{dR}}, B\fg)$. Moreover, the equivalence is $\SU(L)\times \Spin(N_\RR)$-equivariant.
\label{thm:10dCSreduction}
\end{theorem}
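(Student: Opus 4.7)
My plan is to deduce this result directly from the already-established holomorphic twist in dimension $10$ (Theorem \ref{thm:10dholomorphictwist}) by combining it with the compatibility of twisting with dimensional reduction (Proposition \ref{prop:twistdimensionalreduction}) and the behavior of generalized Chern--Simons theory under reduction along a holomorphic direction (Proposition \ref{CS_diml_red_prop}).

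First I would observe that, by construction, the $(5+\dim L)$-dimensional super Yang--Mills theory on $L \times N_\RR$ is the dimensional reduction of the $10$-dimensional $\mathcal{N}=(1,0)$ super Yang--Mills theory on $L \times N$ along the projection $p_{\dbar}\colon L \times N \to L \times N_\RR$ induced by $\Re \colon N \to N_\RR$. The choice of complex half-density on $L$ together with the spin/half-density data on $N_\RR$ assembles (upon complexifying) to the data required in Section \ref{sect:10dholomorphictwist} to specify a square-zero supercharge in the $10$d theory, and by construction this supercharge corresponds to $Q$ under the identification of odd parts of the $10$d and $(5+\dim L)$d supersymmetry algebras.

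Next, I would invoke Proposition \ref{prop:twistdimensionalreduction}, which identifies the $Q$-twist of the dimensional reduction with the dimensional reduction of the $Q$-twist. By Theorem \ref{thm:10dholomorphictwist}, the $Q$-twist of $10$d $\mathcal{N}=(1,0)$ super Yang--Mills on $L \times N$ is perturbatively equivalent, $\SU(L \times N)$-equivariantly (and in particular $\SU(L)$-equivariantly), to the holomorphic Chern--Simons theory $\map(L \times N, B\fg)$. The remaining task is to compute the dimensional reduction of this holomorphic Chern--Simons theory along $p_{\dbar}$.

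This last step is precisely the content of Proposition \ref{CS_diml_red_prop}, applied with $X' = L$ and $V = N$ (and $Y, M$ trivial): the reduction is the generalized Chern--Simons theory $\map(L \times (N_\RR)_{\mr{dR}}, B\fg)$, and the equivalence is $\SO(N_\RR)$-equivariant (in fact $\Spin(N_\RR)$-equivariant after passing to the double cover acting on the spin-structure data). Combining the $\SU(L)$-equivariance from Theorem \ref{thm:10dholomorphictwist} with the $\Spin(N_\RR)$-equivariance of Proposition \ref{CS_diml_red_prop} yields the stated $\SU(L)\times \Spin(N_\RR)$-equivariance. The only point that requires any care is checking that the twisting homomorphism $\SU(L) \times \Spin(N_\RR) \to G_R = \Spin(N_\RR)$ (projection onto the second factor) is compatible with the residual symmetries on both sides of the equivalence, but this is immediate since the $\Spin(N_\RR)$ action on the reduced theory arises from its action as both a rotation symmetry of $N_\RR$ and an $R$-symmetry, and Proposition \ref{CS_diml_red_prop} was formulated to track exactly this combined action. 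There is no real obstacle; the work has already been done in the preceding propositions.
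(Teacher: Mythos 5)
Your proposal is correct and follows essentially the same route as the paper: apply Theorem \ref{thm:10dholomorphictwist} to the 10d theory on $L\times N$, use compatibility of twisting with dimensional reduction, and reduce the holomorphic Chern--Simons theory along $\Re\colon N\to N_\RR$ via Proposition \ref{CS_diml_red_prop}. The only place the paper does more than you indicate is the final equivariance check, where it verifies explicitly (via a commutative square comparing the diagonal embedding $\SO(N_\RR)\to\SO(N_\RR\oplus N_\RR)$ induced by the twisting homomorphism with the composite through $\SU(N)$ from \eqref{eq:SOtoSU}) that the two $\Spin(N_\RR)$-actions agree --- a point you assert but do not spell out.
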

\begin{proof}
By Theorem \ref{thm:10dholomorphictwist} the twist of 10d $\cN=1$ super Yang--Mills on $L\times N$ by $Q$ is perturbatively equivalent to the holomorphic Chern--Simons theory. Moreover, the equivalence is $\SU(L)\times \SU(N)$-equivariant. By Proposition \ref{CS_diml_red_prop} we get that the dimensional reduction of holomorphic Chern--Simons on $L\times N$ along $\Re\colon N\rightarrow N_\RR$ is isomorphic to the generalized Chern--Simons theory with the space of fields $\map(L\times N_\RR, B\fg)$ and this isomorphism is $\SU(L)\times \SO(N_\RR)$-equivariant, where $\SO(N_\RR)$ acts on $N$ via the homomorphism \eqref{eq:SOtoSU}.

Therefore, we just need to establish that the $\Spin(N_\RR)$-action on the twisted $(5+\dim(L))$-dimensional super Yang--Mills obtained using the twisting homomorphism coincides with the $\Spin(N_\RR)$-action on the generalized Chern--Simons theory. The $\Spin(N_\RR)$-action on the fields of $(5+\dim(L))$-dimensional super Yang--Mills is obtained via the homomorphism
\[\Spin(N_\RR)\xrightarrow{\mathrm{diagonal}}\Spin(N_\RR)\times \Spin(N_\RR)\hookrightarrow \Spin(N_\RR\oplus N_\RR),\]
where the diagonal embedding comes from the identity map to the partial Lorentz group $\Spin(N_\RR)$ and the twisting homomorphism, i.e. the identity map, to the $R$-symmetry group $G_R=\Spin(N_\RR)$. The $\SO(N_\RR)$-action on the fields of the generalized Chern--Simons theory is given by the composite
\[\SO(N_\RR)\xrightarrow{\eqref{eq:SOtoSU}} \SU(N)\longrightarrow \SO(N_\RR\oplus N_\RR)\]

The claim then follows from the commutativity of the diagram
\[
\xymatrix@C=2cm{
\SO(N_\RR) \ar^-{\mathrm{diagonal}}[r] \ar^{\eqref{eq:SOtoSU}}[d] & \SO(N_\RR)\times \SO(N_\RR) \ar[d] \\
\SU(N) \ar[r] & \SO(N_\RR\oplus N_\RR).
}
\]
\end{proof}

We will now concentrate on the 9-dimensional case.

\begin{theorem}
The minimal twist of 9d $\mc N=1$ super Yang--Mills on $M=\CC^4\times \RR$ is perturbatively equivalent to the generalized Chern--Simons theory with the space of fields $\map(\CC^4\times \RR_{\mr{dR}}, B\gg)$. Moreover, the equivalence is $\SU(4)$-equivariant.
\label{thm:9dminimaltwist}
\end{theorem}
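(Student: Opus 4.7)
The plan is to deduce this as a direct specialization of Theorem \ref{thm:10dCSreduction} applied to $L = \CC^4$ with its standard Hermitian structure and complex half-density, and $N_\RR = \RR$ with the Euclidean metric and standard spin structure. Then $L \times N_\RR = \CC^4 \times \RR$ is the 9-dimensional spacetime, and $5 + \dim_\CC(L) = 9$ matches the dimension of the theory.

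The first step is to identify 9d $\cN=1$ super Yang--Mills with the theory set up in Theorem \ref{thm:10dCSreduction}. By construction, the $(5 + \dim L)$-dimensional super Yang--Mills on $L \times N_\RR$ arises as the dimensional reduction of 10d $\cN = (1,0)$ super Yang--Mills on $L \times N$ along $\Re\colon N \to N_\RR$, which in our case is precisely the 9d $\cN = 1$ super Yang--Mills of interest. Using the identification of the odd parts of the 9d and 10d supersymmetry algebras noted in the preamble to this subsection, the minimal square-zero supercharge $Q \in \Sigma$ in 9d corresponds to a square-zero supercharge in 10d. By the analysis at the beginning of Section \ref{sect:9dminimaltwist}, such a supercharge determines a decomposition $V = L \oplus \sigma(L) \oplus N$ with $N_\RR \subset V_\RR$ the one-dimensional real subspace stabilized by $\sigma$; this is exactly the data going into the setup of Theorem \ref{thm:10dCSreduction}, together with the required complex half-density on $L$.

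The second step is to invoke Theorem \ref{thm:10dCSreduction} directly. This produces the desired $\SU(L) \times \Spin(N_\RR)$-equivariant perturbative equivalence with $\map(L \times (N_\RR)_{\mr{dR}}, B\fg) = \map(\CC^4 \times \RR_{\mr{dR}}, B\fg)$. Since $\Spin(1)$ is discrete and acts trivially at the Lie algebra level, this immediately yields the claimed $\SU(4)$-equivariance. There is no substantive obstacle: the proof is a straightforward specialization, and the only minor check is that the stabilizer $\SU(4) \subset \Spin(9, \CC)$ of $Q$ described at the start of the subsection agrees with the subgroup of $\Spin(L \times N)$ preserving both $Q$ and the $N_\RR$-reduction, which is manifestly $\SU(L) = \SU(4)$.
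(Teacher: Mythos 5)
Your proposal is correct and follows the same route as the paper: identify the 9d square-zero supercharge with a 10d one, recognize 9d $\cN=1$ super Yang--Mills as the dimensional reduction of the 10d theory along $\Re\colon N\to N_\RR$, and apply Theorem \ref{thm:10dCSreduction} with $L=\CC^4$, $N_\RR=\RR$. The extra remark that $\Spin(1)$ contributes nothing to the equivariance statement is a fine (if unnecessary) clarification.
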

\begin{proof}
Any square-zero supercharge in the 9-dimensional supersymmetry algebra is square-zero in the 10-dimensional supersymmetry algebra. The claim follows from Theorem \ref{thm:10dCSreduction} applied to $L=\CC^4$.
\end{proof}

\section{Dimension 8}

The odd part of the $8$-dimensional supersymmetry algebra is 
\[
\Sigma\cong S_+\otimes W\oplus S_-\otimes W^*,
\] 
where $S_+, S_-$ are the $8$-dimensional semi-spin representations of $\Spin(8, \CC)$ and $W$ is a complex vector space. 
The semi-spin representations carry nondegenerate symmetric bilinear pairings $S_{\pm}\otimes S_{\pm}\rightarrow \CC$. 

There is a supersymmetric Yang--Mills theory with $\mc N=1$ supersymmetry, so we fix $W = \CC$.

\subsection{\texorpdfstring{$\cN=1$}{N=1} Super Yang--Mills Theory}

We consider $\cN=1$ super Yang--Mills theory on $M=\RR^8$ with the Euclidean metric. It admits R-symmetry group $G_R=\Spin(2; \CC)$ which acts with weight $1/2$ on $W$ and weight $-1/2$ on $W^*$.

This theory admits three twists by the following supercharges.
\begin{itemize}
\item Supercharges $(Q, 0)$ and $(0, Q)$ with $(Q, Q)_{S_\pm} = 0$. These are holomorphic. Moreover, we have an embedding $\alpha\colon \U(1)\hookrightarrow \Spin(2, \CC)$ under which they have weight 1, so they give rise to a $\ZZ$-graded holomorphic theory. Such a supercharge is stabilized by $G=\SU(4)\subset \Spin(8, \CC)$. We have a twisting homomorphism $\phi\colon \MU(4)\xrightarrow{\det^{1/2}} \U(1)\xrightarrow{\alpha} G_R$, so the twisted theory carries an action of $\MU(4)$.

\item Supercharges $(Q, 0)$ and $(0, Q)$ with $(Q, Q)_{S_\pm}\neq 0$. These are topological. As before, we may choose a compatible homomorphism $\alpha$, so they give rise to a $\ZZ$-graded topological theory. Such a supercharge is stabilized by $\Spin(7, \RR)\subset \Spin(8, \CC)$.

\item Square-zero supercharges $(Q_+, Q_-)$ where both $Q_{\pm}$ are nonzero. These have 5 invariant directions and do not admit a compatible homomorphism $\alpha$, so they give rise to a $\ZZ/2\ZZ$-graded theory. We have $(Q_\pm, Q_\pm)_{S_\pm} = 0$. The supercharge $Q_+$ is stabilized by $\SU(4)\subset \Spin(8, \RR)$. The supercharge $Q_-$ is stabilized by $\SU(3)\subset \SU(4)\subset \Spin(8, \RR)$. We have a twisting homomorphism $\phi\colon \SU(3)\times \Spin(2, \RR)\rightarrow G_R=\Spin(2, \CC)$ given by projection onto the second factor, so the twisted theory in fact carries an action of $\SU(3)\times \Spin(2, \RR)$.
\end{itemize}

\subsubsection{Holomorphic Twist}
\label{sect:8dholomorphictwist}

Suppose $Q\in S_+$ such that $(Q, Q)_{S_+}=0$. As in Section \ref{sect:10dholomorphictwist}, the data of such $Q$ is equivalent to the data of a K\"ahler structure $L$ on $V_\RR$ together with a complex half-density on $L$.

We consider the twisting homomorphism $\det^{1/2}\colon \MU(4)\rightarrow \Spin(2, \CC)$ under which $Q$ becomes scalar. Moreover, we have an embedding $\alpha\colon \U(1)\subset\Spin(2, \CC)$, so the theory is $\ZZ$-graded and carries an $\MU(4)$-action. In fact, this action will manifestly factor through $\U(4)$.

\vspace{-10pt}
\paragraph{Fields:} The BRST fields are given by:
\begin{itemize}
\item Gauge fields $A_{1, 0}\in\Omega^{1, 0}(M; \gg)$ and $A_{0, 1}\in\Omega^{0, 1}(M; \gg)$.
\item Scalar fields $a\in\Omega^{4,0}(M; \gg)[2]$ and $\wt a\in\Omega^{0, 4}(M; \gg)[-2]$.
\item Fermions $\chi\in\Omega^0(M; \gg)[-1]$, $B\in\Omega^{0, 2}(M; \gg)[-1]$, $\wt \chi\in\Omega^{0, 4}(M; \gg)[-1]$, $\rho\in\Omega^{1, 0}(M; \gg)[1]$ and $C\in\Omega^{3,0}(M; \gg)[1]$.
\item A ghost field $c\in \Omega^0(M; \gg)[1]$.
\end{itemize}

\begin{theorem}
The holomorphic twist of 8d $\cN=1$ super Yang--Mills on $M=\RR^8$ is perturbatively equivalent to the holomorphic BF theory on $M\cong\CC^4$ with the space of fields $T^*[-1]\map(M, B\fg)$. Moreover, the equivalence is $\U(4)$-equivariant.
\label{thm:8dholomorphictwist}
\end{theorem}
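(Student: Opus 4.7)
The plan is to deduce Theorem \ref{thm:8dholomorphictwist} from the 10d result (Theorem \ref{thm:10dholomorphictwist}) by two sequential dimensional reductions, in the same style as the proof of Theorem \ref{thm:9dminimaltwist}. I would realize 8d $\mc N=1$ super Yang--Mills as the dimensional reduction of 10d $\mc N=(1,0)$ super Yang--Mills along a 2-plane $\RR^2 \subset \RR^{10}$. Writing $\RR^{10} = \CC^5 = \CC^4 \times \CC$ with the complex structure determined by the 10d holomorphic supercharge, I take this 2-plane to be the entire last $\CC$-factor, and perform the reduction in two steps: first along $\Re \colon \CC \to \RR$, then along the remaining $\RR$-factor. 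A quick bookkeeping check shows that the 10d Lagrangian $L = T^{1,0}\CC^5 \subset V_{10}$ meets the complexified 2-plane $T^{1,0}\CC \oplus T^{0,1}\CC$ in the 1-dimensional subspace $T^{1,0}\CC$, so that the image of $\Gamma(Q,-)$ in $V_8 = V_{10}/\CC^2$ has complex dimension $5-1=4$, matching the 8d pure holomorphic supercharge (cf. Figure \ref{fig:superchargeorbits}).

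By Proposition \ref{prop:twistdimensionalreduction}, twisting commutes with dimensional reduction, so it is enough to reduce the 10d holomorphic Chern--Simons theory $\map(\CC^5, B\fg)$ in two steps. Applying Proposition \ref{CS_diml_red_prop} with $X' = \CC^4$ and $V = \CC$ yields the 9d generalized Chern--Simons theory $\map(\CC^4 \times \RR_{\mr{dR}}, B\fg)$ of Theorem \ref{thm:9dminimaltwist}. Then Proposition \ref{CS_to_BF_diml_red_prop} with $X = \CC^4$ and $Y = M' = \mr{pt}$ identifies the further reduction along the $\RR_{\mr{dR}}$-factor as the holomorphic BF theory $T^*[-1]\map(\CC^4, B\fg)$, which is the desired description.

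For the $\U(4)$-equivariance, the $\SU(4)$ factor arises as the stabilizer of $Q$ in $\Spin(8, \CC)$, while the diagonal $\U(1) \subset \U(4)$ arises by combining the twisting homomorphism $\det^{1/2} \colon \MU(4) \to \Spin(2, \CC)$ with the compatible homomorphism $\alpha \colon \U(1) \to \Spin(2, \CC)$ used to restore the $\ZZ$-grading. I expect the main technical obstacle to be an equivariance bookkeeping argument analogous to the one in the proof of Theorem \ref{thm:10dCSreduction}: one must show that the $\MU(4)$-action on the twisted theory factors through $\U(4)$ and is intertwined under the two-step reduction with the natural holomorphic-pullback action of $\U(4)$ on $T^*[-1]\map(\CC^4, B\fg)$ inherited from the $\SU(5)$-action on $\map(\CC^5, B\fg)$ in 10d. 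This will require chasing a commutative diagram of homomorphisms between $\U(4)$, the relevant spin and metalinear groups, and the $R$-symmetry. Aside from this point, the theorem is essentially a corollary of the 10d result together with the dimensional reduction propositions.
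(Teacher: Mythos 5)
Your proposal is correct and follows essentially the same route as the paper: the paper also deduces the $\ZZ/2\ZZ$-graded statement by applying Proposition \ref{CS_to_BF_diml_red_prop} to the 9d minimal twist (Theorem \ref{thm:9dminimaltwist}), which itself is Proposition \ref{CS_diml_red_prop} applied to the 10d result, exactly your two-step reduction of $\map(\CC^5, B\fg)$. The grading and $\U(4)$-equivariance bookkeeping you flag at the end is precisely what the paper's proof carries out, by noting which fields the 9d equivalence eliminates and checking that the residual complex $\Omega^{0,\bullet}(\CC^4;\fg)\oplus\Omega^{4,\bullet}(\CC^4;\fg)$ sits in cohomological degrees $-1,\dots,4$ matching holomorphic BF theory -- a necessary extra step since the intermediate 9d and 10d twists are only $\ZZ/2\ZZ$-graded.
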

\begin{proof}
8d $\cN=1$ super Yang--Mills theory is obtained by dimensionally reducing 10d $\cN=1$ super Yang--Mills theory. Under dimensional reduction the 10d fields from Section \ref{sect:10dholomorphictwist} decompose as follows:
\begin{align*}
A_{1, 0}&\rightsquigarrow A_{1, 0}+ \wt a \\
A_{0, 1}&\rightsquigarrow A_{0, 1}+ a \\
\rho&\rightsquigarrow \rho+ \wt\chi \\
B&\rightsquigarrow B+ C.
\end{align*}

The claim about the underlying $\ZZ/2\ZZ$-graded theories follows by applying dimensional reduction (Proposition \ref{CS_to_BF_diml_red_prop}) to the computation of the minimal twist of 9d $\cN=1$ super Yang--Mills (Theorem \ref{thm:9dminimaltwist}). We are left to check that the equivalence respects  the gradings and the $\U(4)$-action. Indeed, the equivalence given by Theorem \ref{thm:9dminimaltwist} eliminates fields $A_{1, 0}, \wt a, \rho, \chi, \wt\chi$ and hence the underlying local $L_\infty$ algebra after the twist becomes
\[
\xymatrix@R=0.5cm@C=0.5cm{
&\Omega^0(\CC^4; \gg)_c \ar[r] &\Omega^{0,1}(\CC^4; \gg)_{A_{0,1}} \ar[r] &\Omega^{0,2}(\CC^4; \gg)_{B} \ar[r] &\Omega^{0,3}(\CC^4; \gg)_{C^*} \ar[r] &\Omega^{0,4}(\CC^4; \gg)_{a^*} \\
&&\ar@{}^{\bigoplus}[r] &&& \\
\Omega^{4,0}(\CC^4; \gg)_{a} \ar[r] &\Omega^{4,1}(\CC^4; \gg)_C \ar[r] &\Omega^{4,2}(\CC^4; \gg)_{B^*} \ar[r] &\Omega^{4,3}(\CC^4; \gg)_{A_{0,1}^*} \ar[r] &\Omega^{4,4}(\CC^4; \gg)_{c^*}&
}
\]

concentrated in cohomological degrees $-1, \dots, 4$. These fields have the same degrees as in the holomorphic BF theory.
\end{proof}

\subsubsection{Topological Twist}
\label{sect:8dtopologicaltwist}

Next we discuss the case of the topological twist. We are going to prove that it is perturbatively trivial. In fact, it will be useful to study a degeneration of the topological twist to a holomorphic twist and describe the corresponding family of twisted theories.

Let $V_\RR = \RR^8$ and $V=V_\RR\otimes_\RR\CC$. Fix a K\"ahler structure on $\RR^8$ and denote by $L\subset V$ the $i$-eigenspace of the complex structure. Moreover, fix a complex volume form on $L$. Under $\SU(L)\subset \Spin(V)$ the semi-spin representation $S_+$ decomposes as
\[S_+\cong \CC Q_0\oplus \wedge^2 L\oplus \CC \Bar{Q}_0.\]
The scalar spinorial pairing $S_+\otimes S_+\rightarrow \CC$ is given by pairing the outer terms with each other and $\wedge^2L$ with itself using the complex volume form on $L$. Consider a family of square-zero supercharges
\begin{equation}
Q_t = Q_0 + t\Bar{Q}_0\in S_+
\label{eq:8dHodgefamily}
\end{equation}
for $t\in\CC$. We have
\[(Q_t, Q_t) = t,\]
so at $t=0$ we have a holomorphic supercharge and $t\neq 0$ we have a topological supercharge.

We will use the notation for fields of 8d $\cN=1$ super Yang--Mills from Section \ref{sect:8dholomorphictwist}. Using the Calabi--Yau structure we will regard $C$ as an element of $\Omega^{0, 1}(\CC^4;\fg)[1]$, $\wt\chi$ as an element of $\Omega^0(\CC^4;\fg)[-1]$, and $a$ as an element of $\Omega^0(\CC^4 ; \fg)$. First, we are going to write the functionals \eqref{eq:gaugeI1} and \eqref{eq:gaugeI2} in terms of these fields.

\begin{prop}
The functionals $S^{(1)}$ and $S^{(2)}$ (see \eqref{eq:gaugeI1} and \eqref{eq:gaugeI2}) in terms of the fields of 8d $\cN=1$ super Yang--Mills are
\begin{align*}
S^{(1)}(Q_t) &= \int \dvol\left(-(\rho, A_{1, 0}^*) - t(C, A_{0, 1}^*) - (\wt \chi + t\chi)\wt a^* + (F_{0, 2}, B^*)  + (\ol\dd_{A_{0, 1}} a, C^*)\right) \\
&+ \int\dvol\left((t\dd_{A_{1, 0}} \Tilde{a}, \rho^*) + \frac{1}{2}\Lambda F_{1, 1}(\chi^* - t\wt\chi^*) + \frac{1}{2}[a, \wt a](\chi^*+t\wt\chi^*)\right) + t\Omega^{-1} F_{2, 0}\wedge B^* \\
S^{(2)}(Q_t) &= \int \dvol \left(t\chi^*\wt \chi^* -\frac{1}{4}(\chi^* + t\wt\chi^*)^2 + tac^*\right) + \frac{t}{2}\Omega^{-1}B^*\wedge B^* .
\end{align*}
\end{prop}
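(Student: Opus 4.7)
Both formulas come from direct substitution into \eqref{eq:gaugeI1} and \eqref{eq:gaugeI2}, exploiting two independent linearities: dimensional reduction from 10d along $\RR^2$, and linearity in the supercharge $Q_t = Q_0 + t\Bar{Q}_0$. Since 8d $\cN=1$ super Yang--Mills is the dimensional reduction of 10d $\cN=(1,0)$ super Yang--Mills and supersymmetry descends (Proposition~\ref{prop:susydimlred}), the field decomposition from the proof of Theorem~\ref{thm:8dholomorphictwist} prescribes how the 10d fields $A_{1,0}, A_{0,1}, \rho, B, \chi$ reduce to the 8d content $A_{1,0}, A_{0,1}, \wt a, a, \rho, \wt\chi, B, C, \chi$.

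First I would evaluate the supersymmetry action on the purely holomorphic piece $Q_0$ using the 10d formulas from Section~\ref{sect:10dholomorphictwist}, where $Q_0$ is chosen so that the induced Calabi--Yau structure on $\CC^5$ is compatible with the reduction $\CC^5 = \CC^4 \times \CC$. Expanding each 10d expression into its 8d components produces all of the $t^0$ terms: for example, $-(\rho, A_{1,0}^*)$ in 10d yields $-(\rho, A_{1,0}^*) - \wt\chi\,\wt a^*$ in 8d; the term $(F_{0,2}, B^*)$ produces both $(F_{0,2}, B^*)$ and $(\ol\dd_{A_{0,1}} a, C^*)$, using that the $\d\Bar z_5$-component of the 10d $F_{0,2}$ is precisely $\ol\dd a + [A_{0,1}, a]$; and $\tfrac12 \Lambda F_{1,1}\chi^*$ yields $\tfrac12 \Lambda F_{1,1}\chi^* + \tfrac12 [a,\wt a]\chi^*$, with the extra commutator coming from the $\d z_5 \wedge \d\Bar z_5$-piece $[\wt a, a]$ of the 10d curvature. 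The $t^0$-piece of $S^{(2)}$ is inherited unchanged as $-\tfrac14 (\chi^*)^2$.

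Next I would compute the $t\Bar{Q}_0$-contributions. Writing $S_+^{\mr{10d}} = \det(L)^{1/2} \oplus \wedge^2 L^* \otimes \det(L)^{1/2} \oplus L \otimes \det(L)^{-1/2}$ as in Section~\ref{sect:10dholomorphictwist}, $\Bar{Q}_0$ corresponds, under the $\Spin(8)\times\Spin(2) \subset \Spin(10)$ decomposition, to an element of the third summand lying along the reduced $\CC$-direction. Evaluating $\Gamma(\Bar{Q}_0, \lambda)$ and $\rho(F_A)\Bar{Q}_0$ componentwise via the explicit Clifford formulas at the beginning of Section~\ref{sect:10dholomorphictwist} yields the remaining $t$-linear terms: $-t(C, A_{0,1}^*)$ and $-t\chi\,\wt a^*$ from $\Gamma(\Bar{Q}_0, \lambda)$; $(t\dd_{A_{1,0}} \wt a, \rho^*)$ and $-\tfrac t2 \Lambda F_{1,1}\wt\chi^* + \tfrac t2 [a,\wt a]\wt\chi^*$ from $\rho(F_A)\Bar{Q}_0$ projecting onto the $L$-summand; and $t\Omega^{-1}F_{2,0}\wedge B^*$ from $\rho(F_{2,0})\Bar{Q}_0$ landing in the $\wedge^2 L^*$-summand and pairing with $B^*$ via the holomorphic volume form. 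For $S^{(2)}(Q_t,Q_t) = S^{(2)}(Q_0,Q_0) + 2t\,S^{(2)}(Q_0,\Bar{Q}_0) + t^2 S^{(2)}(\Bar{Q}_0,\Bar{Q}_0)$, the cross term exploits the nonzero pairing $(Q_0, \Bar{Q}_0) = 1$ together with the fact that $\Gamma(Q_0,\Bar{Q}_0)$ is a vector along the reduced $\RR^2$-direction, producing $t\chi^*\wt\chi^*$ from the $-\tfrac12(Q_1,\lambda^*)(Q_2,\lambda^*)$-term and $tac^*$ from $-\iota_{\Gamma(Q_0, \Bar{Q}_0)}A\,c^*$; the $t^2$-term $-\tfrac12(\Bar{Q}_0,\lambda^*)^2$ gives $-\tfrac{t^2}{4}(\wt\chi^*)^2$, combining with the other pieces into the compact square $-\tfrac14(\chi^* + t\wt\chi^*)^2$; finally the $\wedge^2 L^*$-component of $\Gamma(\lambda^*,\lambda^*)$ paired against $\Gamma(Q_0,\Bar{Q}_0)$ contributes $\tfrac{t}{2}\Omega^{-1}B^*\wedge B^*$.

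The main obstacle is purely combinatorial: keeping signs, factors of $\tfrac12$, and powers of $\Omega^{\pm 1}$ consistent through the two rewrites, especially since $C$, $a$, and $\wt a$ are identified with sections of $\Omega^{0,1}$, $\Omega^0$, and $\Omega^0$ respectively via the Calabi--Yau structure rather than their native Dolbeault bidegrees. No new conceptual input is required beyond the Clifford identities already used in the proof of Theorem~\ref{thm:gaugemultipletSUSY}.
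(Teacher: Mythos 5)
The paper offers no proof of this proposition at all --- it is stated as the outcome of a direct substitution of $Q_t=Q_0+t\Bar{Q}_0$ into \eqref{eq:gaugeI1}--\eqref{eq:gaugeI2} using the dimensional-reduction dictionary $A_{1,0}\rightsquigarrow A_{1,0}+\wt a$, etc.\ --- and your plan is exactly that computation, organized in the same way (10d Clifford formulas from Section \ref{sect:10dholomorphictwist}, linearity in $Q_t$, reduction of the extra $\CC$-direction). One bookkeeping slip worth correcting before you execute: you attribute the standalone $t\chi^*\wt\chi^*$ to the $-\tfrac12(Q_t,\lambda^*)^2$ term while simultaneously saying that term assembles into $-\tfrac14(\chi^*+t\wt\chi^*)^2$; it cannot do both, since the cross term of the square is $-\tfrac{t}{2}\chi^*\wt\chi^*$. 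The extra $+t\chi^*\wt\chi^*$ in fact comes from the $\tfrac14(\Gamma(Q_t,Q_t),\Gamma(\lambda^*,\lambda^*))$ term, whose component along the reduced direction pairs $\chi^*$ with $\wt\chi^*$ in the same way it pairs $B^*$ with $B^*$ to produce $\tfrac{t}{2}\Omega^{-1}B^*\wedge B^*$. With that reassignment the term-by-term accounting closes and the rest of your outline is sound.
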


The action of the twisted 8d super Yang--Mills theory is given by
\[S_{Q_t} = S_{\rm SYM} + S^{(1)}(Q_t) + S^{(2)}(Q_t),\]
where $S_{\mathrm{BRST}}$ and $S_{\mathrm{anti}}$ are given by \eqref{eq:10dBRST} and \eqref{eq:10danti} respectively.

We have a homomorphism $\alpha\colon \U(1)\rightarrow G_R=\Spin(2, \RR)$ with respect to which $Q_t$ has weight $1$, so the $Q_t$-twisted theory will have a $\ZZ$-grading.

\begin{theorem}
The twist of 8d $\cN=1$ super Yang--Mills with respect to the family $Q_t$ of square-zero supercharges is perturbatively equivalent to the holomorphic Hodge theory $\map(\CC^4, B\fg_{\mr{Hod}})$. Moreover, this equivalence is $\SU(4)$-equivariant.
\label{thm:8dHodgetwist}
\end{theorem}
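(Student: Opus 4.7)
The approach is to construct an explicit $\CC[t]$-linear perturbative equivalence between the family $\{S_{Q_t}\}_{t \in \CC}$ and the holomorphic Hodge family $\map(\CC^4, B\fg_{\mr{Hod}})$, generalizing the argument of Theorem \ref{thm:8dholomorphictwist} to run uniformly in $t$. First I would verify that $\{S_{Q_t}\}$ really defines a $\CC[t]$-family of classical BV theories in the sense of Definition \ref{family_of_BV_theories_def}: since $Q_t = Q_0 + t \bar Q_0$ depends linearly on $t$, the functionals $S^{(1)}(Q_t)$ and $S^{(2)}(Q_t)$ are polynomial in $t$ of degrees at most $1$ and $2$ respectively, and the classical master equation holds identically in $t$ as a consequence of Theorem \ref{thm:gaugemultipletSUSY} applied to the $t$-dependent supercharge.

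The next step is to carry out the same field eliminations as in the proof of Theorem \ref{thm:8dholomorphictwist}, now in their $\CC[t]$-linear forms. A direct expansion shows that the fermion-antifield quadratic part of $S^{(2)}(Q_t)$, after combining $-\tfrac{1}{4}(\chi^* + t\wt\chi^*)^2$ with $t\chi^*\wt\chi^*$, equals $-\tfrac{1}{4}(\chi^* - t\wt\chi^*)^2$, and this combination is precisely the one linearly coupled to $\Lambda F_{1,1}$ through the corresponding term in $S^{(1)}(Q_t)$; so a $\CC[t]$-linear version of Proposition \ref{prop:integrateoutfield} eliminates it. The coupling $-(\rho, A_{1,0}^*)$ in $S^{(1)}(Q_t)$ still exhibits $(\rho, A_{1,0})$ as a trivial BRST doublet, which we remove via Proposition \ref{prop:BRSTdoublet}; similarly the term $-t(C, A_{0,1}^*)$ produces a $t$-dependent doublet structure that must be handled carefully. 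After these reductions the remaining fields assemble, via the Calabi--Yau identifications of Theorem \ref{thm:8dholomorphictwist}, into sections of $\Omega^{0,\bullet}(\CC^4; \fg \oplus \fg[1])$ together with its symplectic dual. At $t=0$ the resulting theory is the holomorphic BF theory by Theorem \ref{thm:8dholomorphictwist}, which agrees with $\map(\CC^4, B\fg_{\mr{Hod}})|_{t=0}$ by Proposition \ref{prop:Hodgetheoryspecialization}; and the surviving $t$-dependent terms $t a c^*$, $\tfrac{t}{2}\Omega^{-1} B^*\wedge B^*$, and their cousins implement exactly the Hodge differential $t\cdot \mr{id}\colon \fg[1] \to \fg$ together with the bracket structure on $\fg_{\mr{Hod}}$.

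The $\SU(4)$-equivariance is automatic: the summands $Q_0$ and $\bar Q_0$ in the $\ML(4)$-decomposition $S_+ \cong \CC Q_0 \oplus \wedge^2 L \oplus \CC \bar Q_0$ both lie in one-dimensional subrepresentations on which $\SU(4) \subset \ML(4)$ acts trivially, so every step of the reduction is $\SU(4)$-equivariant. The main obstacle is the combinatorial bookkeeping in the second step: one has to track the $t$-dependent corrections produced by each elimination and verify that they reproduce precisely the Chern--Simons action functional for $\fg_{\mr{Hod}}$ rather than some inequivalent $L_\infty$-deformation of holomorphic BF theory. In particular, one must check that the quadratic coupling of $a$ and $\wt a$ arising from integrating out $\chi^* - t\wt\chi^*$ exactly matches the pairing induced by the Hodge differential, and that the remaining cubic terms reassemble into the generalized Chern--Simons interaction of Definition \ref{def:generalizedCS}.
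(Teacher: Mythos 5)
Your proposal follows essentially the same route as the paper's proof: the same change of variables on the quadruple $\{\chi^*,\chi,\wt\chi^*,\wt\chi\}$ (your computation that the quadratic part of $S^{(2)}(Q_t)$ becomes $-\tfrac14(\chi^*-t\wt\chi^*)^2$, coupled to $\Lambda F_{1,1}$, is exactly the paper's observation that this quadruple can be replaced by $\{\chi^*-t\wt\chi^*,\chi,\wt\chi^*,\wt\chi+t\chi\}$), elimination via Propositions \ref{prop:integrateoutfield} and \ref{prop:BRSTdoublet} of the same trivial doublets (including $\{\wt\chi+t\chi,\wt a\}$, which you leave implicit), and identification of the surviving quadratic terms $-t(C,A_{0,1}^*)$, $t\,a c^*$, $\tfrac t2\Omega^{-1}B^*\wedge B^*$ with the local $L_\infty$ algebra of $\map(\CC^4,B\fg_{\mr{Hod}})$. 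The one point you should state outright, rather than flag as needing care, is that the term $-t(C,A_{0,1}^*)$ is \emph{not} to be removed as a BRST doublet: Proposition \ref{prop:BRSTdoublet} does not apply $\CC[t]$-linearly because of the factor of $t$ (eliminating it is only legitimate at fixed $t\neq 0$ and would destroy the family, trivializing the $t=0$ fiber), and keeping it is precisely what produces the dashed $t\cdot\mathrm{id}$ components of the Hodge differential — which is how the paper, and implicitly your final identification, proceeds.
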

\begin{proof}
The proof proceeds as in the proof of Theorem \ref{thm:10dholomorphictwist} with slight modifications.

Observe that the quadruple of fields $\{\chi^*, \chi, \wt\chi^*, \wt\chi\}$ has the same Poisson brackets as the quadruple $\{\chi^*-t\wt\chi^*, \chi, \wt\chi^*, \wt\chi + t\chi\}$. Therefore, we may eliminate the fields $\chi^*-t\wt\chi^*, \chi$ using Proposition \ref{prop:integrateoutfield}. We then have trivial BRST doublets $\{\wt\chi + t\chi, \wt a\}$ and $\{\rho, A_{1, 0}\}$ which may be eliminated using Proposition \ref{prop:BRSTdoublet}. We are left with the action
\[S_{\mr{BF}} + \int\dvol\left(-t(C, A_{0, 1}^*) + ta c^*\right) + \frac{t}{2} \Omega^{-1} B^*\wedge B^*,\]
where $S_{\mr{BF}}$ is the action functional of the holomorphic twist at $t=0$. Since the extra terms are quadratic in the fields, the claim is reduced to a comparison of the underlying local $L_\infty$ algebra of the twisted theory and that of the holomorphic Hodge theory. The former is given by (cf. the proof of Theorem \ref{thm:8dholomorphictwist})
\[
\xymatrix@C=0.5cm{
&\Omega^0(\CC^4; \gg)_c \ar[r] &\Omega^{0,1}(\CC^4; \gg)_{A_{0,1}} \ar[r] &\Omega^{0,2}(\CC^4; \gg)_{B} \ar[r] &\Omega^{0,3}(\CC^4; \gg)_{C^*} \ar[r] &\Omega^{0,4}(\CC^4; \gg)_{a^*} \\
\Omega^0(\CC^4; \gg)_a \ar[r] \ar@{-->}^{t\id}[ur] & \Omega^{0,1}(\CC^4; \gg)_C \ar[r] \ar@{-->}^{t\id}[ur] &\Omega^{0,2}(\CC^4; \gg)_{B^*} \ar[r] \ar@{-->}^{t\id}[ur] &\Omega^{0,3}(\CC^4; \gg)_{A_{0,1}^*} \ar[r] \ar@{-->}^{t\id}[ur] &\Omega^{0,4}(\CC^4; \gg)_{c^*} \ar@{-->}^{t\id}[ur] &
}
\]
which is exactly the local $L_\infty$ algebra of the holomorphic Hodge theory.
\end{proof}

\begin{corollary}
The topological twist of 8d $\cN=1$ super Yang--Mills is perturbatively trivial.
\label{cor:8dtopologicaltwist}
\end{corollary}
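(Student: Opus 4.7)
The plan is to deduce this corollary directly from Theorem \ref{thm:8dHodgetwist} combined with Proposition \ref{prop:Hodgetheoryspecialization}. The family of supercharges $Q_t = Q_0 + t\Bar{Q}_0$ of \eqref{eq:8dHodgefamily} satisfies $(Q_t, Q_t) = t$, so for any $t \neq 0$ the supercharge $Q_t$ has nonvanishing scalar spinorial norm and therefore 8 invariant directions, i.e. it is topological. Hence this family connects the holomorphic twist (at $t=0$) to the topological twist (at any $t\neq 0$).

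First I would verify that any topological square-zero supercharge is equivalent, up to the action of $\Spin(8,\CC)\times\CC^\times$, to $Q_{t_0}$ for some fixed $t_0 \neq 0$. This follows because the topological supercharges form a single orbit under this group (as recorded in the orbit classification of Figure \ref{fig:superchargeorbits}), so computing the twist at any value $t_0 \neq 0$ suffices. Specializing Theorem \ref{thm:8dHodgetwist} at $t = t_0$, the $Q_{t_0}$-twisted theory is perturbatively equivalent to the $t=t_0$ fiber of the holomorphic Hodge family $\map(\CC^4, B\fg_{\mr{Hod}})$.

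The final step is to apply Proposition \ref{prop:Hodgetheoryspecialization}, which states precisely that the $t\neq 0$ specialization of a generalized Hodge theory is perturbatively trivial. Concretely, one can see this directly from the underlying local $L_\infty$ algebra displayed at the end of the proof of Theorem \ref{thm:8dHodgetwist}: at $t = t_0 \neq 0$ the dashed maps $t\cdot \id$ become isomorphisms, so the total complex is manifestly contractible, which upgrades to a perturbative equivalence with the zero theory via a standard homotopy transfer.

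There is no significant obstacle: the genuine content has already been extracted in Theorem \ref{thm:8dHodgetwist}, and the corollary is obtained by the formal manipulation of specializing the $\CC[t]$-family and observing contractibility. The only small subtlety is ensuring that the entire $\SU(4)$-orbit of topological supercharges is covered by the single family $Q_t$, which follows from the orbit structure of square-zero supercharges in the 8d $\cN=1$ supersymmetry algebra.
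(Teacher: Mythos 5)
Your proposal is correct and follows exactly the paper's route: identify the topological twist with the $t\neq 0$ member of the family $Q_t$, invoke Theorem \ref{thm:8dHodgetwist} to identify the twisted theory with the $t\neq 0$ specialization of the holomorphic Hodge theory, and conclude by Proposition \ref{prop:Hodgetheoryspecialization}. The extra care you take in checking that the single family $Q_t$ covers the whole orbit of topological supercharges, and the direct observation that the dashed maps $t\cdot\id$ contract the complex, are both consistent with (and slightly more explicit than) the paper's two-line argument.
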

\begin{proof}
The topological twist of 8d $\cN=1$ super Yang--Mills is the twist by $Q_t$ with $t\neq 0$. By Theorem \ref{thm:8dHodgetwist} it is equivalent to the $t\neq 0$ specialization of the holomorphic Hodge theory which by Proposition \ref{prop:Hodgetheoryspecialization} is perturbatively trivial.
\end{proof}

\subsubsection{Partially Topological Twist}
\label{sect:8dpartiallytopologicaltwist}

Finally we discuss the case of the partially topological supercharge $(Q_+, Q_-)\in\Sigma$. We consider the twisting homomorphism $\phi\colon \SU(3)\times \Spin(2, \RR)\rightarrow G_R=\Spin(2, \CC)$ given by projection on the second factor.

\begin{theorem}
The partially topological twist of 8d $\cN=1$ super Yang--Mills is perturbatively equivalent to the generalized Chern--Simons theory with the space of fields $\map(\CC^3\times \RR^2_{\mr{dR}}, B\fg)$. Moreover, the equivalence is $\SU(3)\times \Spin(2, \RR)$-equivariant.
\label{thm:8dpartiallytopologicaltwist}
\end{theorem}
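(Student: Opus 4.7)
The plan is to apply Theorem~\ref{thm:10dCSreduction} with $L=\CC^3$ and $N_\RR=\RR^2$: since $5+\dim_\CC(L)=8$, this produces an $\SU(L)\times\Spin(N_\RR)=\SU(3)\times\Spin(2,\RR)$-equivariant perturbative equivalence between the 8d $\cN=1$ super Yang--Mills theory on $L\times N_\RR=\CC^3\times\RR^2$, twisted by the canonical supercharge $Q\in\det(L)^{1/2}\subset S_+^{10}$ associated to the K\"ahler structure and complex half-density on $L\oplus N=\CC^3\oplus\CC^2$, and the generalized Chern--Simons theory $\map(\CC^3\times\RR^2_{\mathrm{dR}}, B\fg)$. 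It remains only to verify that $Q$ lies in the partially topological rank-$(1,1)$ orbit and that the resulting $\Spin(2,\RR)$-equivariance coincides with the one induced by the twisting homomorphism $\phi\colon\SU(3)\times\Spin(2,\RR)\to G_R=\Spin(2,\CC)$.

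To identify $Q$, branch the 10d semi-spin representation along the subgroup $\Spin(V_\RR^{8\mathrm{d}})\times G_R\subset\Spin(V_\RR^{10\mathrm{d}})$, where $G_R$ is identified with rotations of the reduced real plane $iN_\RR\subset V_\RR^{10\mathrm{d}}$. One obtains $S_+^{10}\cong S_+^{8}\otimes W\oplus S_-^{8}\otimes W^{*}$ with $W$ carrying $G_R$-weight $1/2$. Restricting further to $\MU(3)\times\MU(1)\subset\MU(4)\subset\Spin(V_\RR^{8\mathrm{d}})$, a direct decomposition of $\det(\CC^3)^{1/2}\otimes\det(\CC^2)^{1/2}$ shows that $Q$ has nonzero projections $Q_+\in S_+^{8}$ and $Q_-\in S_-^{8}$. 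The relation $\Gamma_{10}(Q,Q)=0$ decomposes into a $V_\RR^{8\mathrm{d}}\otimes\CC$-component and an $iN_\RR\otimes\CC$-component; separating $G_R$-weights $\pm 1$ in the former forces $\Gamma_8(Q_\pm, Q_\pm)=0$, equivalently $(Q_\pm,Q_\pm)_{S_\pm^{8}}=0$ via~\eqref{eq:Gammaspinorpairing}. Since $Q$ has $5$ invariant directions in $V_\RR^{10\mathrm{d}}\otimes\CC$, all lying in the $V_\RR^{8\mathrm{d}}\otimes\CC$-summand by the choice of $L$ and $N$, the same holds for $(Q_+,Q_-)$, placing it in the partially topological rank-$(1,1)$ orbit.

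For the equivariance, the $\SU(3)$-factor acts through $\SU(L)\subset\Spin(L_\RR)\subset\Spin(V_\RR^{8\mathrm{d}})$, manifestly preserving $Q$. The $\Spin(N_\RR)=\Spin(2,\RR)$-factor embeds into $\Spin(V_\RR^{10\mathrm{d}})$ via the complexification $\Spin(N_\RR)\to\SU(N)$ of~\eqref{eq:SOtoSU}; its action rotates $N_\RR\subset V_\RR^{8\mathrm{d}}$ and $iN_\RR$ with equal angle, and the rotation of $iN_\RR$ is precisely the real-form inclusion $\Spin(2,\RR)\hookrightarrow G_R=\Spin(2,\CC)$ defining the twisting homomorphism $\phi$. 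The principal obstacle is the spinorial bookkeeping to decompose $S_+^{10}$ and the components of $\Gamma$ along the reduction splitting; once done, Theorem~\ref{thm:10dCSreduction} concludes the proof.
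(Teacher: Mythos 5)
Your proof is correct and takes essentially the same route as the paper: both come down to Theorem \ref{thm:10dCSreduction} applied with $L=\CC^3$, $N_\RR=\RR^2$, the only difference being that the paper lifts the rank-$(1,1)$ supercharge from 8d to 10d (using that $(Q_\pm,Q_\pm)_{S_\pm}=0$), whereas you push the canonical 10d supercharge down to 8d and identify its orbit. One phrase in your orbit identification is imprecise: the five $10$d-invariant directions of $Q$ span $L^{1,0}$ together with the holomorphic directions of $N=N_\RR\otimes_\RR\CC$, and the latter do \emph{not} lie in the complexified $8$d spacetime (they have nonzero components along $iN_\RR$); what you actually need, and what is true, is that the orthogonal projection of this $5$-plane onto the complexified $8$d spacetime is injective, so the reduced supercharge has $5$ invariant directions --- which by the classification already places it in the partially topological orbit and forces both components $Q_\pm$ to be nonzero, rendering the separate spinor-decomposition argument unnecessary.
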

\begin{proof}
Since $Q_+$ and $Q_-$ satisfy $(Q_\pm, Q_\pm)_{S_\pm} = 0$, they lift to a square-zero supercharge in the 10-dimensional supersymmetry algebra. The claim follows from Theorem \ref{thm:10dCSreduction} applied to $L=\CC^3$.
\end{proof}

\section{Dimension 7} \label{7d_section}

The odd part of the $7$-dimensional supersymmetry algebra is
\[
\Sigma\cong S\otimes W
\]
where $S$ is the 8-dimensional spin representation of $\Spin(7, \CC)$, and $W$ is a complex symplectic vector space.  The spin representation carries a nondegenerate symmetric bilinear pairing $S\otimes S\rightarrow \CC$. 

There is a supersymmetric Yang--Mills theory with $\mc N=1$ supersymmetry, so we fix $W = \CC^2$.

\subsection{\texorpdfstring{$\cN=1$}{N=1} Super Yang--Mills Theory}

We consider $\cN=1$ super Yang--Mills theory on $M=\RR^7$ with the Euclidean metric. It admits $R$-symmetry group $G_R=\Spin(3, \CC)$ acting on $W$ by its two-dimensional spin representation.

This theory admits three twists, by the following classes of supercharge.
\begin{itemize}
\item Rank 1 supercharges $Q=\alpha\otimes w\in S\otimes W$, where $(\alpha, \alpha)_S = 0$. These are minimal, i.e. the number of invariant directions is 4. We have a homomorphism $\alpha\colon \U(1)\rightarrow G_R=\Spin(3, \CC)$ under which they have weight $1$. We also have a twisting homomorphism $\phi\colon \MU(3)\xrightarrow{\det^{1/2}} \U(1)\rightarrow \Spin(3, \CC)$, so the twisted theory is $\ZZ$-graded and carries an action of $\MU(3)$.

\item Rank 1 supercharges $Q = \alpha \otimes w \in S \otimes W$, where $(\alpha, \alpha)_S\neq 0$. These are topological and stabilized by $G_2\subset \Spin(7, \CC)$. We have a homomorphism $\alpha\colon \U(1)\rightarrow G_R$ under which they have weight $1$.

\item Square-zero supercharges $Q$ of rank 2. These have 5 invariant directions and do not admit a compatible homomorphism $\alpha$, so they give rise to a $\ZZ/2\ZZ$-graded theory. We have a twisting homomorphism $\phi\colon \SU(2)\times \Spin(3, \RR)\rightarrow G_R=\Spin(3, \CC)$ given by projection onto the second factor, so the theory carries an action of $\SU(2)\times\Spin(3, \RR)$.
\end{itemize}

\subsubsection{Minimal Twist}
\label{sect:7dminimaltwist}

Denote the image of $\Gamma(Q, -)\colon \Sigma\rightarrow V$ by $L^\perp$, so that its orthogonal complement $L\subset V$ is a 3-dimensional isotropic subspace. As in Section \ref{sect:9dminimaltwist}, the data of a partially topological supercharge is equivalent to the choice of a one-dimensional subspace $N_\RR\subset V_\RR$ and a complex structure on $V_\RR/N_\RR$ together with a half-density.

It will be convenient to perform a computation of the twist in a slightly more general setting which will be useful for lower-dimensional computations, similarly to Theorem \ref{thm:10dCSreduction}.

Suppose $L$ is a complex vector space equipped with a Hermitian structure. Suppose $N_\RR=\RR^{4-\dim(L)}$ equipped with a Euclidean metric and a spin structure. Denote by $N=N_\RR\otimes_\RR\CC$ its complexification, which carries a complex half-density. Let $V_\RR = L\times N$ (an 8-dimensional real vector space). We set $W^8=\det(L)^{-1/2}$ as a representation of $\MU(L)$. By the results of Section \ref{sect:8dholomorphictwist}, there is a canonical square-zero supercharge $Q\in\Sigma$ determined by the complex structure on $L\times N$ and a complex half-density on $N$.

The dimensional reduction of 8d $\cN=1$ super Yang--Mills on $L\times N$ along $\Re\colon N\rightarrow N_\RR$ is by definition the $(4+\dim(L))$-dimensional super Yang--Mills on $L\times N_\RR$ with 16 supercharges. Since $N\cong N_\RR\oplus iN_\RR$, this theory carries an action of the $R$-symmetry group $G_R=\Spin(2, \CC)\times \Spin(N_\RR)$. We consider the $\ZZ$-grading $\alpha\colon \U(1)\rightarrow G_R$ given by the embedding into the first copy of $\U(1)$ and a twisting homomorphism
\[\phi\colon \MU(L)\times \Spin(N_\RR)\xrightarrow{\det^{1/2}\times \id} G_R=\Spin(2, \RR)\times \Spin(N_\RR).\]

\begin{theorem}
The twist of $(4+\dim(L))$-dimensional super Yang--Mills on $L\times N_\RR$ by $Q$ is perturbatively equivalent to the generalized BF theory with the space of fields $T^*[-1] \map(L\times (N_\RR)_{\mr{dR}}, B\fg)$. Moreover, the equivalence is $\MU(L)\times \Spin(N_\RR)$-equivariant.
\label{thm:8dBFreduction}
\end{theorem}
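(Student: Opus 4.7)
The plan is to mirror the argument of Theorem \ref{thm:10dCSreduction}, replacing the input 10d/5d holomorphic Chern--Simons calculation by the 8d/4d holomorphic BF calculation of Theorem \ref{thm:8dholomorphictwist}, and invoking the appropriate holomorphic dimensional reduction statement for BF theories. Concretely, write $V_\RR = L \times N_\RR$ and $V = L \times N$, so that $L \times N$ is a complex 4-fold carrying the Calabi--Yau structure determined by the given half-density data. The supercharge $Q$ is, by construction, the canonical holomorphic square-zero supercharge of the 8d $\cN=1$ theory on $L\times N$ attached to this Calabi--Yau structure, so Theorem \ref{thm:8dholomorphictwist} applies and yields a $\U(4)$-equivariant perturbative equivalence between the $Q$-twist of 8d $\cN=1$ super Yang--Mills on $L\times N$ and the holomorphic BF theory $T^*[-1]\map(L\times N, B\fg)$.

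Next, I invoke compatibility between twisting and dimensional reduction. The $(4+\dim L)$-dimensional super Yang--Mills theory on $L\times N_\RR$ is, by definition, the dimensional reduction of 8d $\cN=1$ super Yang--Mills on $L\times N$ along $\Re\colon N \to N_\RR$, carrying the induced $R$-symmetry group $\Spin(2,\CC)\times \Spin(N_\RR)$. Proposition \ref{prop:susydimlred} together with Proposition \ref{prop:twistdimensionalreduction} then tells me that the $Q$-twist of the $(4+\dim L)$-dimensional theory is the dimensional reduction of the $Q$-twist of the 8d theory. Applying Proposition \ref{prop:BFholomorphicreduction} to the target side, the dimensional reduction of $T^*[-1]\map(L\times N, B\fg)$ along $\Re\colon N\to N_\RR$ is perturbatively equivalent to the generalized BF theory $T^*[-1]\map(L\times (N_\RR)_{\mr{dR}}, B\fg)$, and this equivalence is $\SO(N_\RR)$-equivariant. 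Composing these equivalences produces the desired perturbative equivalence at the level of underlying theories.

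It remains to check equivariance for $\MU(L)\times \Spin(N_\RR)$ and to check compatibility of the $\ZZ$-grading induced by the homomorphism $\alpha\colon \U(1)\to \Spin(2,\CC)\subset G_R$. For the $\MU(L)$-factor: the $\U(L)$-part of the action comes from Theorem \ref{thm:8dholomorphictwist}, and the square-root-of-determinant lift $\MU(L)$ enters only because the twisting homomorphism uses $\det^{1/2}$ and because the Calabi--Yau half-density on $L\times N$ transforms by $\det^{1/2}$; on the BF side this same $\MU(L)$-action is the natural one on the sections of $\Omega^{0,\bu}_L\otimes \Omega^{\bu,\bu}_N\otimes\fg[1]$ and its dual twisted by $K_L$, via $\det^{1/2}$. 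For the $\Spin(N_\RR)$-factor, I argue exactly as in Theorem \ref{thm:10dCSreduction}: the combined action on the twisted super Yang--Mills theory is given by the diagonal of the $\Spin(N_\RR)$ partial Lorentz action and the twisting homomorphism into the $R$-symmetry factor $\Spin(N_\RR)\subset G_R$, and this diagonal matches the $\Spin(N_\RR)$-action on the BF theory via the commutative square
\[
\xymatrix@C=2cm{
\SO(N_\RR) \ar^-{\mathrm{diag}}[r] \ar^{\eqref{eq:SOtoSU}}[d] & \SO(N_\RR)\times \SO(N_\RR) \ar[d] \\
\SU(N) \ar[r] & \SO(N_\RR\oplus N_\RR).
}
\]
Finally, the $\ZZ$-grading given by $\alpha$ coincides with the internal grading on $T^*[-1]\map(L\times (N_\RR)_{\mr{dR}}, B\fg)$ that pairs a field in $\Omega^{0,\bu}_L\otimes\Omega^\bu_{N_\RR}\otimes\fg[1]$ with an antifield in $\Omega^{\dim L,\bu}_L\otimes \Omega^\bu_{N_\RR}\otimes\fg^*[\dim L + \dim N_\RR - 2]$, since this is exactly the grading induced on the 8d holomorphic BF theory by $\alpha$ and it is preserved under holomorphic dimensional reduction.

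The main obstacle is the bookkeeping in the last step: tracking the interaction between the $\det^{1/2}$ twisting homomorphism, the Calabi--Yau data on $L\times N$, and the $\ZZ$-grading induced by $\alpha$ through the chain of equivalences, so that the final identification genuinely produces the generalized BF theory (as opposed to a theory twisted by an extra line bundle). All the analytic content, however, is already packaged into Theorem \ref{thm:8dholomorphictwist} and Proposition \ref{prop:BFholomorphicreduction}.
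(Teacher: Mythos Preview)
Your proof is correct and follows essentially the same route as the paper's own argument: apply Theorem \ref{thm:8dholomorphictwist} to identify the 8d twist with holomorphic BF theory, then dimensionally reduce via Proposition \ref{prop:BFholomorphicreduction}. The paper's proof is more terse and omits the explicit commutation-of-twisting-with-reduction step and the equivariance diagram you spell out, but these are exactly the ingredients implicitly invoked there (and made explicit in the analogous Theorem \ref{thm:10dCSreduction}).
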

\begin{proof}
By Theorem \ref{thm:8dholomorphictwist} the twist of 8d $\cN=1$ super Yang--Mills on $L\times N$ by $Q$ is perturbatively equivalent to the holomorphic BF theory. Moreover, the equivalence is $\MU(L)\times \SU(N)$-equivariant. By Proposition \ref{prop:BFholomorphicreduction} we get that the dimensional reduction of holomorphic BF theory on $L\times N$ along $\Re\colon N\rightarrow N_\RR$ is isomorphic to the generalized BF theory with the space of fields $T^*[-1]\map(L\times N_\RR, B\fg)$ and this isomorphism is $\MU(L)\times \SO(N_\RR)$-equivariant, where $\SO(N_\RR)$ acts on $N$ via the homomorphism \eqref{eq:SOtoSU}.
\end{proof}

The 7-dimensional result immediately follows.

\begin{theorem}
The minimal twist of 7d $\cN=1$ super Yang--Mills on $M=\CC^3\times \RR$ is perturbatively equivalent to the generalized BF theory with the space of fields $T^*[-1]\map(\CC^3\times \RR_{\mr{dR}}, B\fg)$. Moreover, the equivalence is $\U(3)$-equivariant.
\label{thm:7dminimaltwist}
\end{theorem}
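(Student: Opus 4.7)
The proof will deduce the 7-dimensional statement as a direct specialization of Theorem \ref{thm:8dBFreduction}, which was established in the preceding subsection. The key observation is that 7d $\cN=1$ super Yang--Mills is obtained by dimensional reduction of 8d $\cN=1$ super Yang--Mills along a single real direction, and under this reduction the odd parts of the two supersymmetry algebras are canonically identified. Under this identification, a rank-1 supercharge $Q = \alpha\otimes w \in S\otimes W$ with $(\alpha,\alpha)_S = 0$ in the 7d algebra corresponds precisely to a holomorphic supercharge in the 8d algebra (i.e.\ one of the $(Q,0)$-type supercharges in Section \ref{sect:8dholomorphictwist} with $(Q,Q)_{S_+}=0$).

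Concretely, the plan is to apply Theorem \ref{thm:8dBFreduction} with $L = \CC^3$ and $N_\RR = \RR$, so that $4+\dim(L) = 7$ and the theory in question is precisely 7d $\cN=1$ super Yang--Mills on $L\times N_\RR = \CC^3\times \RR$. Proposition \ref{prop:twistdimensionalreduction} guarantees that twisting commutes with dimensional reduction, so the twist of the reduced theory by the 7d minimal supercharge agrees with the reduction along $\Re\colon N\to N_\RR$ of the 8d theory twisted by the corresponding holomorphic supercharge. Theorem \ref{thm:8dBFreduction} then identifies the latter, perturbatively, with the generalized BF theory
\[
T^*[-1]\map(\CC^3\times \RR_{\mr{dR}}, B\fg) .
\]

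For the equivariance, Theorem \ref{thm:8dBFreduction} supplies an $\MU(3)\times \Spin(1,\RR)$-equivariant equivalence. The group $\Spin(1,\RR)$ is finite, so the continuous part of the symmetry is $\MU(3)$. The claim reduces to the observation (analogous to the one made in Section \ref{sect:8dholomorphictwist} for the $\MU(4)$-action on the 8d holomorphic twist) that this $\MU(3)$-action factors through $\U(3)$, since after eliminating the trivial BRST doublets and the auxiliary fields in the 8d computation, no surviving field carries a genuine $\det^{1/2}$-weight; the metalinear square root cancels against the half-density data packaged into the supercharge.

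The main potential obstacle is verifying that the identification of 7d and 8d supercharges is compatible with the $R$-symmetry twisting homomorphisms: one must check that the twisting homomorphism $\phi\colon \MU(3)\xrightarrow{\det^{1/2}}\U(1)\to \Spin(3,\CC)$ stated in the 7d setup is compatible, after dimensional reduction, with the 8d twisting homomorphism $\det^{1/2}\colon \MU(4)\to \Spin(2,\CC)$, and that the extra $\RR$-direction decouples cleanly as a de Rham direction under the reduction $p_{\mr{dR}}$. This is a direct bookkeeping exercise using the compatibility of embeddings $\MU(3)\subset \MU(4)$ and $\Spin(2,\CC)\subset \Spin(3,\CC)$; no further analytic input is needed beyond the results already proven.
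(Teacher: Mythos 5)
Your proposal is correct and follows essentially the same route as the paper: the paper derives Theorem \ref{thm:7dminimaltwist} immediately from Theorem \ref{thm:8dBFreduction} specialized to $L=\CC^3$, $N_\RR=\RR$, exactly as you do. Your additional remarks — the identification of the 7d rank-1 null supercharge with the 8d holomorphic supercharge, the appeal to Proposition \ref{prop:twistdimensionalreduction}, and the observation that the $\MU(3)$-action factors through $\U(3)$ on the surviving fields — are just the bookkeeping the paper leaves implicit, and they are all accurate.
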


\subsubsection{Topological Twist}
\label{sect:7dtopologicaltwist}

Next we study the topological twist. As in the case of the minimal twist, we will perform a computation applicable in lower dimensions as well.

Let $L$ be a complex vector space equipped with a Hermitian structure and a complex half-density. Suppose $N_\RR = \RR^{4-\dim(L)}$ equipped with a Euclidean metric and a spin structure. Denote by $N=N_\RR\otimes_\RR\CC$ its complexification which carries a complex half-density. Let $V_\RR=L\times N$, a real 8-dimensional vector space equipped with a complex structure and a complex half-density. Using results of Section \ref{sect:8dtopologicaltwist} we obtain a family $Q_t$ of 8d square-zero supercharges.

The dimensional reduction of 8d $\cN=1$ super Yang--Mills on $L\times N$ along $\Re\colon N\rightarrow N_\RR$ gives $(4+\dim(L))$-dimensional super Yang--Mills on $L\times N_\RR$ with 16 supercharges. The $R$-symmetry group is $G_R=\Spin(N_\RR\oplus \RR^2)$ and we consider the grading given by the homomorphism $\alpha\colon\U(1)\hookrightarrow \Spin(2, \RR)\times \Spin(N_\RR)\subset G_R$ given by embedding into the first factor. We consider the twisting homomorphism given by $\Spin(N_\RR)\hookrightarrow \Spin(2, \RR)\times \Spin(N_\RR)\subset G_R$ given by embedding into the second factor.

\begin{theorem}
The twist of $(4+\dim(L))$-dimensional super Yang--Mills on $L\times N_\RR$ by $Q_t$ is perturbatively equivalent to the generalized Hodge theory with the space of fields $\map(L\times (N_\RR)_{\mr{dR}}, B\fg_\Hod)$. Moreover, the equivalence is $\SU(L)\times \Spin(N_\RR)$-equivariant.
\label{thm:8dHodgereduction}
\end{theorem}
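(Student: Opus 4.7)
The plan is to deduce the claim from Theorem \ref{thm:8dHodgetwist} by dimensional reduction along the holomorphic direction $\Re\colon N \to N_\RR$, exactly in parallel with how Theorem \ref{thm:8dBFreduction} was deduced from Theorem \ref{thm:8dholomorphictwist} (and Theorem \ref{thm:10dCSreduction} from Theorem \ref{thm:10dholomorphictwist}). The $\CC[t]$-family of square-zero supercharges $Q_t$ constructed in Section \ref{sect:8dtopologicaltwist} on $V_\RR = L \times N$ pulls back under the projection $p\colon L \times N \to L \times N_\RR$ to a $\CC[t]$-family of supercharges for the $(4+\dim(L))$-dimensional theory on $L \times N_\RR$, since $Q_t$ is constructed purely from the complex structure and half-density on $V_\RR$ and these are translation-invariant along $iN_\RR$.

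First, I would verify that the twist of the dimensional reduction equals the dimensional reduction of the twist. This is a direct application of Proposition \ref{prop:twistdimensionalreduction} (together with Proposition \ref{prop:susydimlred}) to the $\CC[t]$-family $Q_t$; the hypotheses of translation invariance along $N_\RR^\perp = iN_\RR$ are satisfied because the 8d theory on $L \times N$ is translation invariant and the supercharges themselves are translation invariant. Next, by Theorem \ref{thm:8dHodgetwist} the 8d $Q_t$-twisted theory on $L \times N \cong \CC^4$ is perturbatively equivalent to the generalized Hodge theory $\map(L \times N, B\fg_\Hod)$, and this equivalence is $\SU(L \times N)$-equivariant, hence in particular $\SU(L) \times \SU(N)$-equivariant. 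Applying the dimensional reduction along $\Re \colon N \to N_\RR$ and using Corollary \ref{cor:Hodgeholomorphicreduction}, the reduced theory is perturbatively equivalent to the generalized Hodge theory
\[
\map(L \times (N_\RR)_{\mr{dR}}, B\fg_\Hod),
\]
which is the desired target.

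The main remaining step is to check $\SU(L) \times \Spin(N_\RR)$-equivariance of this equivalence with the prescribed twisting homomorphism. The $\SU(L)$-equivariance is automatic from the 8d statement together with the fact that the reduction is along the $N$-factor. For the $\Spin(N_\RR)$-equivariance, the argument is identical to the one used at the end of the proof of Theorem \ref{thm:10dCSreduction}: the action of $\Spin(N_\RR)$ on the twisted $(4+\dim(L))$-dimensional super Yang--Mills is determined by the diagonal of the identity Lorentz action and the twisting homomorphism $\Spin(N_\RR) \hookrightarrow \Spin(2,\RR) \times \Spin(N_\RR) \subset G_R$, while the $\Spin(N_\RR)$-action on the generalized Hodge theory arises through the composition $\Spin(N_\RR) \xrightarrow{\eqref{eq:SOtoSU}} \SU(N) \hookrightarrow \SO(N_\RR \oplus N_\RR)$. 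These agree because the diagram
\[
\xymatrix@C=1.6cm{
\SO(N_\RR) \ar^-{\mathrm{diagonal}}[r] \ar^{\eqref{eq:SOtoSU}}[d] & \SO(N_\RR)\times \SO(N_\RR) \ar[d] \\
\SU(N) \ar[r] & \SO(N_\RR\oplus N_\RR)
}
\]
commutes, exactly as in the proof of Theorem \ref{thm:10dCSreduction}.

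The main obstacle, and the only place requiring genuine care, is the matching of the $\Spin(N_\RR)$-action through the twisting homomorphism: one must track how the $R$-symmetry group of the $(4+\dim(L))$-dimensional theory decomposes as $\Spin(2,\RR) \times \Spin(N_\RR)$ under dimensional reduction from 8d, and confirm that the $\U(1)$-grading homomorphism $\alpha$ is inherited from the 8d $\alpha$ under this embedding. Once this bookkeeping is in place, equivariance of the 8d equivalence in Theorem \ref{thm:8dHodgetwist} together with functoriality of the dimensional reduction constructions in Section \ref{dim_red_section} completes the argument.
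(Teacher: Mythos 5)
Your proposal is correct and follows the same route as the paper: the paper's proof is exactly the combination of Theorem \ref{thm:8dHodgetwist} (the $\SU(L)\times\SU(N)$-equivariant identification of the 8d $Q_t$-twist with the holomorphic Hodge theory) with Corollary \ref{cor:Hodgeholomorphicreduction}. The extra bookkeeping you supply on the $\Spin(N_\RR)$-equivariance via the commutative square is the same check the paper performs explicitly in the proof of Theorem \ref{thm:10dCSreduction} and leaves implicit here.
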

\begin{proof}
By Theorem \ref{thm:8dHodgetwist} the twist of 8d $\cN=1$ super Yang--Mills on $L\times N$ by $Q_t$ is perturbatively equivalent to the holomorphic Hodge theory. Moreover, the equivalence is $\SU(L)\times \SU(N)$-equivariant. The claim then follows from Corollary \ref{cor:Hodgeholomorphicreduction}.
\end{proof}

Now let $L=\CC^3$ and $N_\RR=\RR$. Dimensionally reducing the family $Q_t$ along $L\times N\rightarrow L\times N_\RR$ we obtain a family of supercharges which are topological for $t\neq 0$ and has 4 invariant directions for $t=0$. In other words, at $t=0$ we get the minimal twist and at $t\neq 0$ the topological twist.

\begin{theorem}
The twist of 7d $\cN=1$ super Yang--Mills with respect to the family $Q_t$ of square-zero supercharges is perturbatively equivalent to the generalized Hodge theory $\map(\CC^3\times \RR_{\mr{dR}}, B\fg_{\Hod})$. Moreover, this equivalence is $\SU(3)$-equivariant.
\label{thm:7dHodgetwist}
\end{theorem}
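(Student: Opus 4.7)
The plan is to deduce this theorem as an essentially immediate specialization of Theorem \ref{thm:8dHodgereduction} with $L=\CC^3$ and $N_\RR = \RR$. The key point is that all the heavy lifting -- both the computation of the twist of the 8d theory at general $t$ (Theorem \ref{thm:8dHodgetwist}) and its compatibility with holomorphic dimensional reduction (Corollary \ref{cor:Hodgeholomorphicreduction}) -- has already been done in the preceding sections, so what remains is largely bookkeeping. First, I would observe that 7d $\cN=1$ super Yang--Mills on $\CC^3 \times \RR$ is by definition the dimensional reduction of 8d $\cN=1$ super Yang--Mills on $\CC^3 \times \CC \cong \CC^4$ along the projection $\mathrm{Re} \colon \CC \to \RR$, with $L = \CC^3$ playing the role of the holomorphic factor and $N_\RR = \RR$ playing the role of the topological factor in the setup of Section \ref{sect:7dtopologicaltwist}.

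Next, I would invoke the identification made in the paragraph immediately preceding the theorem: the family of 7d supercharges $Q_t$ in the statement is obtained by dimensionally reducing the family \eqref{eq:8dHodgefamily} from 8d. By Proposition \ref{prop:twistdimensionalreduction}, twisting commutes with dimensional reduction, so the $Q_t$-twist of 7d SYM on $\CC^3 \times \RR$ is perturbatively equivalent to the dimensional reduction along $\mathrm{Re}\colon \CC \to \RR$ of the $Q_t$-twist of 8d SYM on $\CC^3 \times \CC$. Applying Theorem \ref{thm:8dHodgereduction} with $L = \CC^3$ and $N_\RR = \RR$ identifies this reduction with the generalized Hodge theory $\map(\CC^3 \times \RR_{\mr{dR}}, B\fg_\Hod)$, which is exactly the desired equivalence.

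For the equivariance statement, Theorem \ref{thm:8dHodgereduction} produces an equivalence that is $\SU(L) \times \Spin(N_\RR)$-equivariant. Specializing to $L = \CC^3$ and $N_\RR = \RR$, the factor $\Spin(\RR)$ is trivial, so we obtain $\SU(3)$-equivariance as claimed. Compatibility with the given twisting homomorphism $\phi\colon \SU(3)\times \Spin(1,\RR) \to G_R$ of Section \ref{sect:7dtopologicaltwist} follows from the commutative-diagram argument used in the proof of Theorem \ref{thm:10dCSreduction}: the diagonal and $R$-symmetry embeddings into the 8d $R$-symmetry group combine in a manner compatible with the inclusion $\SO(N_\RR)\hookrightarrow \SU(N)$ of Equation \eqref{eq:SOtoSU}. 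There is no substantive obstacle in this step; the main thing to verify is that the supercharge family is tracked correctly through the reduction, but since the paragraph before the theorem explicitly constructs $Q_t$ in 7d as the reduction of the 8d family, this is transparent.
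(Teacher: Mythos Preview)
Your proposal is correct and follows exactly the route the paper takes: the theorem is stated immediately after the setup with $L=\CC^3$, $N_\RR=\RR$ and is treated as a direct specialization of Theorem \ref{thm:8dHodgereduction}, with no separate proof given. Your write-up is simply a more explicit unpacking of that specialization; the only minor quibble is that $\Spin(1,\RR)$ is $\ZZ/2\ZZ$ rather than literally trivial, but it acts trivially on all the relevant data so the $\SU(3)$-equivariance conclusion is unaffected.
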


\begin{corollary}
The topological twist of 7d $\cN=1$ super Yang--Mills is perturbatively trivial.
\end{corollary}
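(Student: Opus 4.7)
The plan is to mirror the argument used in Corollary \ref{cor:8dtopologicaltwist} for the 8-dimensional topological twist, invoking the family of twists established in Theorem \ref{thm:7dHodgetwist} together with the general trivialization result for Hodge families in Proposition \ref{prop:Hodgetheoryspecialization}.

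First I would identify the topological supercharge within the family $Q_t$ constructed just above the corollary. By construction, $Q_t$ arises as the dimensional reduction (along $L \times N \to L \times N_\RR$ with $L = \CC^3$, $N_\RR = \RR$) of the $8$-dimensional family of supercharges from Section \ref{sect:8dtopologicaltwist}. For $t=0$ the reduced supercharge has $4$ invariant directions (the minimal 7d twist), while for $t \neq 0$ it acquires an additional invariant direction from the reduced $N$-component and hence becomes topological (with $7$ invariant directions), stabilized by $G_2$. Thus it suffices to compute the specialization of the family at any $t \neq 0$.

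Next I would apply Theorem \ref{thm:7dHodgetwist}, which provides a perturbative equivalence
\[
\mathrm{SYM}_{7d,\mathcal{N}=1}^{Q_t} \;\simeq\; \map(\CC^3 \times \RR_{\mr{dR}}, B\fg_{\Hod})
\]
as a $\CC[t]$-family, $\SU(3)$-equivariantly. Specializing both sides at $t \neq 0$ then reduces the claim to showing that the $t \neq 0$ fiber of the generalized Hodge theory $\map(\CC^3 \times \RR_{\mr{dR}}, B\fg_{\Hod})$ is perturbatively trivial. This last step is precisely the content of Proposition \ref{prop:Hodgetheoryspecialization}: the underlying local $L_\infty$ algebra at $t \neq 0$ is $\CC[t,t^{-1}] \otimes (\fg \oplus \fg[1])$ with differential whose $t\cdot \id$ component makes it acyclic, so the entire classical BV complex is contractible.

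Since this is a straightforward concatenation of two already-established results, I do not anticipate any serious obstacle; the only point requiring care is verifying that the ``topological'' supercharge in 7 dimensions really does correspond to a generic fiber of the family $Q_t$ rather than to some exceptional degeneration outside the family. This is handled by a quick orbit check: there is a unique orbit of rank $1$ supercharges with $(\alpha,\alpha)_S \neq 0$ under the $\Spin(7,\CC) \times G_R$-action (cf. the classification recalled at the start of Section \ref{7d_section} and Figure \ref{fig:superchargeorbits}), and the $t \neq 0$ fiber of $Q_t$ lies in that orbit, so no additional supercharge needs to be analyzed.
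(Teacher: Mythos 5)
Your proposal is correct and follows the same route the paper takes: identify the 7d topological supercharge as the $t\neq 0$ member of the dimensionally reduced family $Q_t$, invoke Theorem \ref{thm:7dHodgetwist} to identify the twisted theory with the generalized Hodge theory $\map(\CC^3\times\RR_{\mr{dR}}, B\fg_{\Hod})$, and conclude triviality at $t\neq 0$ from Proposition \ref{prop:Hodgetheoryspecialization}, exactly as in the 8-dimensional case (Corollary \ref{cor:8dtopologicaltwist}). The extra orbit check you include is consistent with the classification recalled at the start of Section \ref{7d_section} and only reinforces the step the paper states without proof.
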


\subsubsection{Partially Topological Twist}
\label{sect:7dpartialtwist}

Finally, we discuss the case of a partially topological twist. We consider the twisting homomorphism $\phi\colon \SU(3)\times \Spin(3, \RR)\rightarrow G_R=\Spin(3, \CC)$ given by projection on the second factor.

\begin{theorem}
The partially topological twist of 7d $\cN=1$ super Yang--Mills is perturbatively equivalent to the generalized Chern--Simons theory with the space of fields $\map(\CC^2\times \RR^3_{\mr{dR}}, B\fg)$. Moreover, the equivalence is $\SU(2)\times \Spin(3, \RR)$-equivariant.
\label{thm:7dpartiallytopologicaltwist}
\end{theorem}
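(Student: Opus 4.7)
The plan is to mimic the strategy used for the analogous Theorem \ref{thm:8dpartiallytopologicaltwist}: lift the 7d square-zero supercharge to a higher-dimensional square-zero supercharge to which one of our main dimensional reduction results already applies. Concretely, I expect that a rank 2 partially topological supercharge $Q \in \Sigma^{7\mathrm{d}}$ lifts to a rank $(1,1)$ square-zero supercharge in the 8d $\cN=1$ supersymmetry algebra, which in turn lifts to a 10d $\cN=(1,0)$ holomorphic (rank $(1,0)$) square-zero supercharge. This is compatible with the dimensional reduction arrows between orbits in Figure~\ref{fig:superchargeorbits}. One can check this by identifying $\Sigma^{7\mathrm{d}} = S \otimes W$ with $\Sigma^{8\mathrm{d}} = S_+ \oplus S_-$ (restricting along $\Spin(7) \hookrightarrow \Spin(8)$ both semi-spin representations restrict to $S$), and verifying that the 7d $\Gamma$-pairing is the composition of the 8d $\Gamma$-pairing with the orthogonal projection $V^{8\mathrm{d}} \to V^{7\mathrm{d}}$.

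Having established the lifting, the proof is then essentially a direct application of Theorem~\ref{thm:10dCSreduction} in the case $L = \CC^2$ (so that $N_\RR = \RR^3$ and $L \times N_\RR$ is 7-dimensional). That theorem constructs an equivalence between the dimensional reduction of 10d holomorphic Chern--Simons theory and the generalized Chern--Simons theory with space of fields $\map(\CC^2 \times \RR^3_{\mathrm{dR}}, B\fg)$, compatible with the $\SU(2) \times \Spin(3,\RR)$-action where the second factor acts via complexification $\Spin(3,\RR) \to \SU(3)$.

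The only subtle point, which I would treat in the same manner as the proof of Theorem~\ref{thm:10dCSreduction} itself, is to verify that the residual Lorentz symmetry $\Spin(3,\RR)$ arising from the twisting homomorphism $\phi\colon \SU(2) \times \Spin(3,\RR) \to G_R = \Spin(3,\CC)$ (projection on the second factor) matches the $\Spin(3,\RR)$-action on the generalized Chern--Simons theory obtained by dimensional reduction. This reduces to commutativity of a diagram analogous to the one in the proof of Theorem~\ref{thm:10dCSreduction}:
\[
\xymatrix@C=2cm{
\SO(N_\RR) \ar^-{\mathrm{diagonal}}[r] \ar^{\eqref{eq:SOtoSU}}[d] & \SO(N_\RR)\times \SO(N_\RR) \ar[d] \\
\SU(N) \ar[r] & \SO(N_\RR\oplus N_\RR),
}
\]
which expresses that the $R$-symmetry action obtained by complexifying the Lorentz rotation $\Spin(N_\RR)$ agrees with the action induced from the 10d complex structure on $N = N_\RR \otimes_\RR \CC$. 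I expect the main (still minor) obstacle to be keeping track of these symmetry identifications cleanly; there are no genuinely new computations, since all the analytic content has already been absorbed into Theorems~\ref{thm:10dholomorphictwist} and~\ref{thm:10dCSreduction}.
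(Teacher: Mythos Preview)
Your proposal is correct and follows essentially the same approach as the paper: lift the partially topological 7d supercharge to a square-zero supercharge in the 10d $\cN=(1,0)$ algebra and apply Theorem~\ref{thm:10dCSreduction} with $L=\CC^2$. The paper's proof is just two sentences stating exactly this; your intermediate passage through 8d and your explicit discussion of the equivariance diagram are more detailed than what the paper writes, but the equivariance argument is already absorbed into the statement and proof of Theorem~\ref{thm:10dCSreduction}, so no additional work is needed.
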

\begin{proof}
Any partially topological supercharge in the 7-dimensional supersymmetry algebra lifts to a square-zero supercharge in the 10-dimensional supersymmetry algebra. The claim follows from Theorem \ref{thm:10dCSreduction} applied to $L=\CC^2$.
\end{proof}

\section{Dimension 6}

The odd part of the $6$-dimensional supersymmetry algebra is
\[
\Sigma\cong S_+\otimes W_+\oplus S_-\otimes W_-
\] 
where $S_+, S_-$ are the 4-dimensional semi-spin representations of $\Spin(6 , \CC)\cong \SL(4, \CC)$, and $W_+, W_-$ are complex symplectic vector spaces. 
We have $\Spin(6;\CC)$-invariant isomorphisms $S_+\cong S_-^*$.

There are Yang--Mills theories with $\cN=(1, 0)$ and $\cN=(1, 1)$ supersymmetry, which we consider separately.

\subsection{\texorpdfstring{$\cN=(1, 0)$}{N=(1, 0)} Super Yang--Mills Theory}

The general setup for $\cN=(1, 0)$ super Yang--Mills is described in Section \ref{sect:SYM} which we now recall. Let $U$ be a complex symplectic $\fg$-representation. We consider $\cN=(1, 0)$ super Yang--Mills theory on $M=\RR^6$ with the Euclidean metric. We fix $W_- = 0$ and $W_+=\CC^2$ equipped with a symplectic structure. The $R$-symmetry group depends on the type of the representation $U$:
\begin{itemize}
\item In general, the theory admits an $R$-symmetry group $G_R=\SL(2, \CC)$ with $W_+$ the two-dimensional defining representation.

\item If $U = T^* R=R\oplus R^*$ for a $\fg$-representation $R$, then the theory admits an $R$-symmetry group $G_R=\SL(2, \CC)\times \GL(1, \CC)$, where $\GL(1, \CC)$ acts trivially on $W_+$, with weight $1$ on $R$ and with weight $-1$ on $R^*$.
\end{itemize}

This theory admits a unique twist:
\begin{itemize}
\item A square-zero supercharge $Q\neq 0\in\Sigma$ has 3 invariant directions, so it gives rise to a holomorphic theory. If the representation $U$ is of cotangent type, we have a compatible homomorphism $\alpha\colon \U(1)\rightarrow G_R=\SL(2, \CC)\times \GL(1, \CC)$ given by a diagonal embedding, so in this case we get a $\ZZ$-grading. Such a supercharge is stabilized by $G=\SU(3)\subset \Spin(6, \CC)$. We have a twisting homomorphism $\phi\colon\MU(3)\xrightarrow{\det^{1/2}}\U(1)\hookrightarrow \SL(2, \CC)$, so the twisted theory carries an $\MU(3)$-action.
\end{itemize}

\subsubsection{Holomorphic Twist}
\label{sect:6dholomorphictwist}

Consider a nonzero $Q\in S_+\otimes W_+$ which we fix for the remainder of this section. Since $\wedge^2(S_+)\cong V$, the square-zero condition is equivalent to the condition that $Q$ has rank 1, i.e. $Q=q_+\otimes w_1\in S_+\otimes W_+$. We will also fix $w_2\in W_+$ such that $(w_1, w_2) = 1$.

As in Section \ref{sect:10dholomorphictwist}, the data of $q_+$ is equivalent to the data of a K\"ahler structure $L$ on $V_\RR$ together with a complex half-density on $L$.

Under the embedding $\MU(L)\subset \Spin(V_\RR)$, the semi-spin representations $S_+, S_-$ decompose as
\[S_+ = \det(L)^{1/2}\oplus L\otimes \det(L)^{-1/2},\qquad S_- = \det(L)^{-1/2} \oplus L^*\otimes \det(L)^{1/2},\]
where $q_+\in S_+$ lies in the first summand.

We fix an embedding $\U(1)\subset \SL(2, \CC)$ under which $w_1\in W_+$ has weight $1$. Under the composite
\[\phi\colon\MU(3)\xrightarrow{\det^{1/2}}\U(1)\subset \SL(2, \CC)\]
we obtain that $W_+\cong \det(L)^{-1/2} w_1\oplus \det(L)^{1/2} w_2$.

We will now rewrite the fields using the twisting homomorphism $\phi$ from $\MU(3)$, where we denote by $K$ the canonical bundle of $L=\CC^3$.

\vspace{-10pt}
\paragraph{Fields:} The BRST fields are given by:
\begin{itemize}
\item Gauge fields $A_{1, 0} \in \Omega^{1, 0}(M; \fg)$, $A_{0, 1}\in\Omega^{0, 1}(M; \fg)$.
\item Gauge fermions $\chi\in\Omega^0(M; \Pi\fg)$, $\xi\in\Omega^{3, 0}(M; \Pi\fg)$, $B\in\Omega^{0, 2}(M; \Pi\fg)$, $\rho\in\Omega^{1, 0}(M; \Pi\fg)$.
\item Matter bosons $\nu\in\Omega^0(M; U\otimes K^{-1/2})$, $\phi\in\Omega^0(M; U\otimes K^{1/2})$.
\item Matter fermions $\psi\in \Omega^{0, 1}(M; \Pi U\otimes K^{1/2})$, $\wt\nu\in\Omega^0(M; \Pi U\otimes K^{-1/2})$.
\item A ghost field $c\in \Omega^0(M; \gg)[1]$.
\end{itemize}

Let $\omega\in\Omega^{1, 1}(M)$ be the K\"ahler form. We denote the real volume form on $M$ by
\[\dvol = \frac{\omega^3}{6}.\]

Using Corollary \ref{Kahler_YM_term_cor}, the BV action $S_{\mr{twist}}$ of the $Q$-twisted theory consists of the sum of the following terms:

\begin{align*}
S_{\gauge} &= \int \dvol \left(-(F_{2, 0}, F_{0, 2}) - \frac{1}{4}(\Lambda F_{1, 1})^2\right) + \\
&\qquad + \frac 12 \left( \omega(B \wedge \dd_{A_{1,0}} \rho) + \omega^2\chi\Lambda(\ol \dd_{A_{0,1}} \rho) - \omega(\rho \wedge \dd_{A_{1,0}} B) + \xi \ol \dd_{A_{0,1}} B \right)  \\
S_{\matter} &= \int \bigg( \dvol ((\dd_{A_{1,0}}\nu, \ol \dd_{A_{0,1}} \phi) + (\dd_{A_{1,0}}\phi, \ol \dd_{A_{0,1}} \nu)) + 2\omega^2\wedge(\wt\nu \ol \dd_{A_{1,0}} \psi) + \psi \wedge \ol \dd_{A_{0,1}} \psi + \\
&\qquad + 2\dvol(([\xi,\nu],\wt\nu) + ([\chi, \phi], \wt\nu) )   \bigg) \\
S_{\mr{anti}} &= \int \dvol \bigg( (\dd_{A_{1, 0}} c, A_{1, 0}^*) + (\ol\dd_{A_{0, 1}} c, A_{0, 1}^*)  +  [\xi, c] \xi^* + [\chi, c]\chi^* +[\rho, c] \rho^* + \\
&\qquad + \frac{1}{2}[c, c]c^* + [\nu, c] \nu^* + [\phi, c] \phi^* + [\psi, c] \psi^* + [\wt\nu, c] \wt\nu^* \bigg) + [B, c]\wedge B^* \\
S^{(1)}_{\gauge} (Q) &=  \int \dvol \left( - (\rho, A_{1,0}^*) + \frac 12 (F_{0,2},B^*) + \frac 12 (F_{1,1}, \chi^*)  \right)\\
S^{(1)}_{\matter} (Q) &=  \int \dvol \left( (\wt\nu, \nu^*) + \frac 12( \ol \dd_{A_{0,1}} \phi, \psi^*) \right)\\
S^{(2)}_{\gauge} (Q) &= -\frac{1}{4}\int \dvol (\chi^*)^2.
\end{align*}

\begin{theorem}
The holomorphic twist of 6d $\cN=(1, 0)$ super Yang--Mills on $M=\RR^6$ with matter valued in a symplectic $\fg$-representation $U$ is perturbatively equivalent to the holomorphic Chern--Simons theory on $M\cong \CC^3$ with the space of fields $\Sect(M, (U\otimes K_M^{1/2})\ham \fg)$. Moreover, the equivalence is $\MU(3)$-equivariant.
\label{thm:6dholomorphictwist}
\end{theorem}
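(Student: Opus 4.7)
The plan is to mirror the structure of the proof of Theorem \ref{thm:10dholomorphictwist} (and its matter-free analog in eight dimensions, Theorem \ref{thm:8dholomorphictwist}), systematically eliminating auxiliary fields and trivial BRST doublets to reduce to the holomorphic Chern--Simons theory coupled to a holomorphic symplectic boson. Concretely, I would begin with the full twisted action $S_{\rm twist} = S_{\gauge} + S_{\matter} + S_{\mr{anti}} + S^{(1)}_{\gauge}(Q) + S^{(1)}_{\matter}(Q) + S^{(2)}_{\gauge}(Q)$ written out above and simplify it in three stages.

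First, in the gauge sector I would integrate out the field $\chi$ using Proposition \ref{prop:integrateoutfield}. The relevant terms $-\tfrac14(\Lambda F_{1,1})^2 + \tfrac12(\Lambda F_{1,1})\chi^* - \tfrac14(\chi^*)^2$ combine into a perfect square, so setting $\chi^* = \Lambda F_{1,1}$ and $\chi = 0$ cancels the $(1,1)$-curvature kinetic term. Next I would identify $(\rho, A_{1,0})$ as a trivial BRST doublet via Proposition \ref{prop:BRSTdoublet}, since the twisting functional contains $-(\rho, A_{1,0}^*)$ as its leading coupling, and similarly eliminate the pair. This removes the $(F_{2,0}, F_{0,2})$ Yang--Mills term (which only depends on the eliminated $A_{1,0}$) along with $\rho, \rho^*, A_{1,0}, A_{1,0}^*$. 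What remains in the gauge sector is precisely holomorphic BF theory in the fields $A_{0,1}, B, c$ together with antifields, as in Theorem \ref{thm:8dholomorphictwist}.

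Second, in the matter sector I would treat $(\nu, \wt\nu)$ as a trivial BRST doublet, using the coupling $(\wt\nu, \nu^*)$ from $S^{(1)}_{\matter}(Q)$ and Proposition \ref{prop:BRSTdoublet}. This eliminates $\nu, \nu^*, \wt\nu, \wt\nu^*$ and leaves only $\phi \in \Omega^0(M; U\otimes K_M^{1/2})$ and $\psi \in \Omega^{0,1}(M; \Pi U\otimes K_M^{1/2})$, which combine with their antifields into $\Omega^{0,\bullet}(M; U\otimes K_M^{1/2})$ equipped with the Dolbeault differential and with a $(-1)$-shifted symplectic pairing induced by the symplectic form on $U$. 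The residual coupling between matter and gauge sectors -- produced by the interaction terms $([\xi, \nu], \wt\nu)$, $([\chi, \phi], \wt\nu)$, and the antifield substitutions induced by the two eliminations -- should reassemble into the standard moment-map interaction that encodes the Hamiltonian reduction $(U\otimes K_M^{1/2})\ham \fg$. Verifying this matching is the main technical check: on the nose, one has to exhibit the cubic $\phi B \phi$ coupling (coming from the moment map $U\otimes U\to \fg^*$) as arising from the combination of the original Yukawa term $([\chi, \phi], \wt\nu)$ after $\chi, \wt\nu$ are solved for.

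Third, having matched the action, I would check that the resulting local $L_\infty$ algebra agrees with the one defining the local formal moduli problem $\Sect(M, (U\otimes K_M^{1/2})\ham \fg)$ from Example \ref{ex:CSHamiltonianreduction}: the dg Lie algebra $\fh = \fg \oplus U[-1] \oplus \fg^*[1]$ tensored with $\Omega^{0,\bullet}(\CC^3)$ (with appropriate half-density twists), and the pairing of degree $d-3 = 0$ induced by the symplectic form on $U$ and the canonical pairing between $\fg$ and $\fg^*$. The $\MU(3)$-equivariance is then automatic: the twisting homomorphism $\phi\colon \MU(3) \xrightarrow{\det^{1/2}}\U(1)\subset \SL(2,\CC)$ is designed so that the Calabi--Yau data on $L = \CC^3$ is $\MU(3)$-equivariant, the Dolbeault complex carries the natural $\MU(3)$-action, and the half-density twist on $U$ uses precisely $\det^{1/2}$, so all reductions are equivariant. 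The main obstacle I anticipate is bookkeeping: carefully tracking signs, half-density factors $K_M^{\pm 1/2}$, and the redistribution of interactions under the doublet eliminations, especially ensuring that the moment-map coupling produced on the Chern--Simons side agrees with the one inherited from the original Yukawa terms via the symplectic form on $U$.
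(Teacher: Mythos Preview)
Your proposal is correct and follows essentially the same route as the paper: eliminate $\chi,\chi^*$ via Proposition~\ref{prop:integrateoutfield}, then remove the trivial BRST doublets $(\rho,A_{1,0})$ and $(\nu,\wt\nu)$ via Proposition~\ref{prop:BRSTdoublet}, and identify the residual action with holomorphic Chern--Simons for $(U\otimes K_M^{1/2})\ham\fg$. The paper's proof is terser—simply writing out the residual action and asserting the match up to rescaling antifields—whereas you are (appropriately) more explicit about the moment-map bookkeeping; note, however, that once the antifields $\xi^*,\psi^*,\phi^*$ are reinterpreted as the higher Dolbeault components of the $\fg$- and $U$-valued Chern--Simons fields, the moment-map cubic terms are already visible in the surviving covariant-derivative and ghost interactions rather than needing to be reconstructed from $([\chi,\phi],\wt\nu)$.
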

\begin{proof}
The proof of this theorem is very similar to the proof of Theorem \ref{thm:10dholomorphictwist}. First, we eliminate the fields $\chi$ and $\chi^*$ using Proposition \ref{prop:integrateoutfield}.  We then observe that the action includes the terms $\int \dvol  (\rho, A_{1,0}^*)$ and $\int \dvol (\wt\nu, \nu^*)$.  In other words, the pairs $(\rho, A_{1,0})$ and $(\nu, \wt\nu)$ form trivial BRST doublets, which can be eliminated using Proposition \ref{prop:BRSTdoublet}. The twisted theory $S_{\rm twist}$ is therefore perturbatively equivalent to the theory with the BV action 
\begin{align*}
& \int \bigg(\xi \ol \dd_{A_{0,1}} B  + \psi \wedge \ol \dd_{A_{0,1}} \psi \bigg) + \dvol \left(  \frac 12 (F_{0,2}, B^*) + \frac 12 (\ol \dd_{A_{0,1}} \phi, \psi^*)\right) \\
 &+ \int \dvol\left((\ol\dd_{A_{0, 1}} c, A_{0, 1}^*) + ([\xi, c], \xi^*) + \frac{1}{2}[c, c]c^* +  [\phi, c] \phi^* + ([\psi, c],\psi^*)\right) + [B, c] \wedge B^* .
\end{align*}
Up to rescaling the antifields, this is the action functional of the required theory.
\end{proof}

If $U=T^*R=R\oplus R^*$, the R-symmetry group is enhanced to $G_R=\SL(2, \CC)\times \GL(1, \CC)$. We have a homomorphism $\alpha\colon \U(1)\hookrightarrow G_R=\SL(2, \CC)\times \U(1)$ given by the diagonal embedding which is compatible with the holomorphic supercharge. We may also use a new twisting homomorphism
\[\Tilde{\phi}\colon\MU(3)\xrightarrow{\det^{1/2}}\U(1)\xrightarrow{\alpha} G_R.\]

With these modifications the BRST fields are given by:
\begin{itemize}
\item Gauge fields $A_{1, 0} \in \Omega^{1, 0}(M; \fg)$, $A_{0, 1}\in\Omega^{0, 1}(M; \fg)$.
\item Gauge fermions $\chi\in\Omega^0(M; \fg)[-1]$, $\xi\in\Omega^{3, 0}(M; \Pi\fg)[1]$, $B\in\Omega^{0, 2}(M; \fg)[-1]$, $\rho\in\Omega^{1, 0}(M; \Pi\fg)[1]$.
\item Matter bosons $\nu\in\Omega^0(M; R^*\oplus R\otimes K^{-1}[-2])$, $\phi\in\Omega^0(M; R\oplus R^*\otimes K[2])$.
\item Matter fermions $\psi\in \Omega^{0, 1}(M; R[-1]\oplus R^*\otimes K[1])$, $\wt\nu\in\Omega^0(M; R^*[1]\oplus R\otimes K^{-1}[-1])$.
\item A ghost field $c\in \Omega^0(M; \gg)[1]$.
\end{itemize}

Note that the $\MU(3)$-action on the fields factors through $\U(3)$ since only integer powers of $K$ occur. By comparing the degrees and the transformation rules of the fields in Theorem \ref{thm:6dholomorphictwist} we obtain the following statement.

\begin{theorem}
The holomorphic twist of 6d $\cN=(1, 0)$ super Yang--Mills on $M=\RR^6$ with matter valued in a $\fg$-representation $U=T^*R=R\oplus R^*$ is perturbatively equivalent to the holomorphic BF theory on $M\cong \CC^3$ with space of fields $T^*[-1]\map(M, R/\fg)$. Moreover, the equivalence is $\U(3)$-equivariant.
\label{thm:6dholomorphictwistgraded}
\end{theorem}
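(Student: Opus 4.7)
The plan is to deduce Theorem \ref{thm:6dholomorphictwistgraded} as a refinement of Theorem \ref{thm:6dholomorphictwist} once the extra $\GL(1,\CC)$ R-symmetry present in the cotangent-type case is incorporated. Concretely, I would proceed as follows.

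First, I would invoke Theorem \ref{thm:6dholomorphictwist} applied to the symplectic representation $U = R\oplus R^*$ to obtain the perturbative equivalence (as a $\ZZ/2\ZZ$-graded, $\MU(3)$-equivariant theory) between the holomorphic twist of 6d $\cN=(1,0)$ super Yang--Mills and the generalized Chern--Simons-type theory $\Sect(M, (U\otimes K_M^{1/2})\ham\fg)$. I would then explicitly write out the bundle of BV fields of the latter and identify each summand according to the decomposition $U = R\oplus R^*$: the $R$-valued components carry a factor of $K_M^{1/2}$, while the $R^*$-valued components are paired against them via the symplectic pairing and the canonical Serre-type duality, so one summand is $R\otimes K_M^{1/2}$ and the other is $R^*\otimes K_M^{1/2}$.

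Next I would introduce the extra data available in the cotangent case: the compatible homomorphism $\alpha\colon\U(1)\hookrightarrow G_R = \SL(2,\CC)\times\GL(1,\CC)$ given by the diagonal embedding (for which $Q$ has weight $1$ and the $\alpha$-grading modulo $2$ reproduces the fermionic grading), and the enhanced twisting homomorphism $\tilde\phi\colon\MU(3)\xrightarrow{\det^{1/2}}\U(1)\xrightarrow{\alpha}G_R$. I would then re-examine how each field transforms under $\tilde\phi$ instead of $\phi$. Since $R$ and $R^*$ carry opposite $\GL(1,\CC)$-weights, the half-density factor $K_M^{1/2}$ combines with the $\alpha$-twist to produce \emph{integer} powers of $K_M$: the $R$-components end up in $\Omega^{0,\bu}(M; R)$ while the $R^*$-components end up in $\Omega^{\dim_\CC M,\bu}(M; R^*)$, with their cohomological degrees shifted by the $\alpha$-weight so that $(\phi,\phi^*)$ pair as field/antifield in the BF sense and similarly for the fermionic companions. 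This is exactly the field content of $T^*[-1]\map(M, R/\fg)$ as in Definition \ref{def:generalizedBF} with $\fg$ replaced by the dg Lie algebra $L_{R,\fg}$ of Section \ref{sect:FMPs}.

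I would then check that the action functional inherited from Theorem \ref{thm:6dholomorphictwist} matches the holomorphic BF action on $T^*[-1]\map(M, R/\fg)$: the Chern--Simons cubic coming from the moment map $\mu\colon U\otimes U\to\fg^*$ becomes, under $U = R\oplus R^*$, precisely the pairing between the matter and antimatter fields and the coadjoint action, which is the interaction term in the generalized BF theory coupled to $R$-matter. No additional analytic work is needed here since the equivalence of Theorem \ref{thm:6dholomorphictwist} is already established at the level of $L_\infty$ algebras; we are merely refining the grading and observing that $\MU(3)$ acts through its quotient $\U(3)$ because only integer powers of $K_M$ appear.

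The main (mild) obstacle is bookkeeping: one must track the $\alpha$-weight, the cohomological degree, and the $K_M$-power contributions consistently across all eight field species listed before the theorem, and confirm that the duality pairings induced by the symplectic form on $U$ line up with the natural BV pairing between $R$-valued and $R^*\otimes K_M$-valued sections. Once this matching is verified summand by summand, the $\U(3)$-equivariant perturbative equivalence follows immediately from the $\MU(3)$-equivariant one of Theorem \ref{thm:6dholomorphictwist}.
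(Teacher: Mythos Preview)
Your proposal is correct and follows essentially the same approach as the paper: the paper simply notes that with the enhanced twisting homomorphism $\tilde\phi$ only integer powers of $K$ appear (so the $\MU(3)$-action factors through $\U(3)$), and then obtains the statement by comparing the degrees and transformation rules of the fields against Theorem \ref{thm:6dholomorphictwist}. Your write-up is in fact more detailed than the paper's one-line justification.
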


\subsection{\texorpdfstring{$\cN=(1, 1)$}{N=(1, 1)} Super Yang--Mills Theory}

The 6d $\cN=(1, 1)$ super Yang--Mills theory is obtained by dimensional reduction from the 10d $\cN=(1, 0)$ super Yang--Mills. It admits $R$-symmetry group $G_R=\Spin(4, \CC)$ under which $W_+, W_-$ are the two semi-spin representations.

Given an element $Q\in S_+\otimes W_+\oplus S_-\otimes W_-$ we denote by $W^*_{Q_\pm}\subset S_\pm$ the images of $Q$. We classify square-zero supercharges according to the ranks of these spaces:
\begin{itemize}
\item Rank $(1, 0)$ and $(0, 1)$. These automatically square to zero and are holomorphic.  Such supercharges factor through a copy of the $\mc N=(1,0)$ (respectively, $\cN=(0, 1)$) supersymmetry algebra. They admit a twisting homomorphism from $\MU(3)$ and a $\ZZ$-grading $\alpha\colon \U(1)\rightarrow G_R$.

\item Rank $(1, 1)$ and $\langle W^*_{Q_+}, W^*_{Q_-}\rangle = 0$. These automatically square to zero and have 4 invariant directions. There is a $\ZZ$-grading $\alpha\colon \U(1)\hookrightarrow G_R=\SL(2, \CC)\times \SL(2, \CC)$ given by the diagonal embedding and a twisting homomorphism $\phi\colon \MU(2)\times \Spin(2, \RR)\xrightarrow{\det^{1/2}\times \id}\U(1)\xrightarrow{\alpha} G_R$.

\item Rank $(1, 1)$ and $\langle W^*_{Q_+}, W^*_{Q_-}\rangle \neq 0$. These automatically square to zero and are topological. Such supercharges are stabilized by $\SU(3)\subset \Spin(6, \CC)$ and have a $\ZZ$-grading $\alpha\colon \U(1)\rightarrow G_R=\SL(2, \CC)\times \SL(2, \CC)$ given by the diagonal embedding.

\item Rank $(2, 2)$. The square-zero supercharges have 5 invariant directions and give rise to a $\ZZ/2\ZZ$-graded theory. The twisting homomorphism is given by the obvious embedding $\phi\colon \Spin(4, \RR)\rightarrow G_R=\Spin(4, \CC)$, so the twisted theory carries a $\Spin(4, \RR)$-action.
\end{itemize}

\subsubsection{Holomorphic Twist}
\label{sect:6d11holomorphictwist}

The 6d $\cN=(1, 1)$ super Yang--Mills theory viewed as a $\cN=(1, 0)$ supersymmetric theory coincides with the 6d $\cN=(1, 0)$ super Yang--Mills with matter in the representation $U=T^*\fg=\fg\oplus \fg^*$. Under this isomorphism the R-symmetry group $\SL(2, \CC)\times \GL(1, \CC)$ of 6d $\cN=(1, 0)$ super Yang--Mills is a subgroup of the R-symmetry group $\SL(2, \CC)\times \SL(2, \CC)$ of 6d $\cN=(1, 1)$ super Yang--Mills. In particular, from Theorem \ref{thm:6dholomorphictwistgraded} we obtain the following statement.

\begin{theorem}
The holomorphic twist of 6d $\cN=(1, 1)$ super Yang--Mills on $M=\RR^6$ is perturbatively equivalent to the holomorphic BF theory on $M\cong \CC^3$ with the space of fields $T^*[-1]\map(M, \fg/\fg)$. Moreover, the equivalence is $\U(3)$-equivariant.
\label{thm:6d11holomorphictwist}
\end{theorem}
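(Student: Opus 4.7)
The strategy is to reduce this statement to Theorem~\ref{thm:6dholomorphictwistgraded} with $R = \fg$. The first step is to observe that 6d $\cN=(1,1)$ super Yang--Mills, obtained by dimensional reduction of the 10d $\cN=(1,0)$ theory along an orthogonal $\RR^4 \subset \RR^{10}$, coincides --- when viewed as an $\cN=(1,0)$ supersymmetric theory --- with the 6d $\cN=(1,0)$ super Yang--Mills coupled to a hypermultiplet valued in $U = T^*\fg = \fg \oplus \fg^*$. Concretely, the four extra components of the 10d gauge field become, after the reduction, the bosonic matter fields $\nu, \phi$ of the matter multiplet valued in $\fg \oplus \fg^*$, and the decomposition of the 10d spinors under the partial Lorentz group supplies the remaining fermionic matter fields. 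This identification follows from the construction of the supersymmetry $L_\infty$ action in Theorem~\ref{thm:YMSUSY} applied to the dimensional reduction along $\RR^4$, using Proposition~\ref{prop:susydimlred}.

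Under this identification, the $R$-symmetry group $\SL(2,\CC)\times \GL(1,\CC)$ of the 6d $\cN=(1,0)$ theory with $T^*R$ matter embeds into the enhanced $R$-symmetry group $G_R = \Spin(4,\CC)\cong \SL(2,\CC)\times \SL(2,\CC)$ of the $\cN=(1,1)$ theory, with the $\GL(1,\CC)$ factor realized as the Cartan of the second $\SL(2,\CC)$. A holomorphic supercharge of rank $(1,0)$ in the $\cN=(1,1)$ supertranslation algebra lies in the subspace $S_+\otimes W_+$ of the full odd part $S_+\otimes W_+\oplus S_-\otimes W_-$, and so is precisely a square-zero supercharge of the type treated in Theorem~\ref{thm:6dholomorphictwistgraded}. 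The homomorphism $\alpha\colon \U(1) \to G_R$ and the twisting homomorphism $\tilde\phi\colon \MU(3) \to \SL(2,\CC) \times \GL(1,\CC)$ of that theorem then compose with the embedding of R-symmetry groups to produce compatible data in the $\cN=(1,1)$ setting.

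Applying Theorem~\ref{thm:6dholomorphictwistgraded} with $R = \fg$ directly yields a perturbative equivalence with the holomorphic BF theory $T^*[-1]\map(\CC^3, \fg/\fg)$. Since Theorem~\ref{thm:6dholomorphictwistgraded} also provides $\U(3)$-equivariance --- the $\MU(3)$-action factoring through $\U(3)$ because only integer powers of the canonical bundle $K$ appear among the fields of the BF theory when $U = T^*R$ --- the required equivariance statement is inherited automatically.

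The main bookkeeping step will be checking that the isomorphism between 6d $\cN=(1,1)$ SYM and the $U=T^*\fg$ specialization of the $(1,0)$ matter theory respects not only the classical BV structure but also the full off-shell $L_\infty$ action of the $\cN=(1,0)$ subalgebra, so that applying the twisting procedure commutes with this identification. Because both theories are constructed from dimensional reductions of the same 10d theory, and the composition-algebra description of the matter supersymmetry in Section~\ref{sect:mattermultipletSUSY} uses $A = \mathbb H \otimes_\RR \CC$ in both cases, this compatibility is essentially forced; the only genuine content is a careful matching of conventions for the R-symmetry embeddings, after which the theorem follows immediately from the $R = \fg$ case of the previous result.
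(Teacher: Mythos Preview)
Your proposal is correct and follows essentially the same approach as the paper: identify 6d $\cN=(1,1)$ super Yang--Mills as the $\cN=(1,0)$ theory with hypermultiplet $U=T^*\fg$, note that the $\SL(2,\CC)\times\GL(1,\CC)$ R-symmetry embeds in $\SL(2,\CC)\times\SL(2,\CC)$, and invoke Theorem~\ref{thm:6dholomorphictwistgraded} with $R=\fg$. The paper's argument is in fact just the paragraph preceding the theorem statement, and your version is a more detailed elaboration of exactly that reasoning.
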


\subsubsection{Rank \texorpdfstring{$(1, 1)$}{(1,1)} Partially Topological Twist}
\label{sect:6d11partialtwist}

Let $L=\CC^2$ equipped with a Hermitian structure, $N_\RR=\RR^2$ equipped with a Euclidean structure and $N=N_\RR\otimes_\RR\CC$ its complexification. Consider the 8-dimensional spacetime $V^8_\RR=L\times N$ and the 6-dimensional spacetime $V^6_\RR=L\times N_\RR$. Under the projection $V^8_\RR\rightarrow V^6_\RR$ a holomorphic square-zero supercharge $Q$ in 8 dimensions dimensionally reduces to a rank $(1, 1)$ partially topological square-zero supercharge in 6 dimensions. Therefore, from Theorem \ref{thm:8dBFreduction} we obtain the following statement.

\begin{theorem}
The rank $(1, 1)$ partially topological twist of 6d $\cN=(1, 1)$ super Yang--Mills is perturbatively equivalent to the generalized BF theory with the space of fields $T^*[-1]\map(\CC^2\times \RR^2_{\mathrm{dR}}, B\fg)$. Moreover, the equivalence is $\MU(2)\times\Spin(2, \RR)$-equivariant.
\end{theorem}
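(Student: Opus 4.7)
The plan is to deduce this theorem as a direct consequence of Theorem~\ref{thm:8dBFreduction} applied with $L = \CC^2$ and $N_\RR = \RR^2$, following the approach already sketched in the paragraph preceding the theorem statement. Since $\dim(L) + 4 = 6$, the total dimension matches. First, I would identify the 6d $\cN=(1,1)$ super Yang--Mills theory on $V^6_\RR = L \times N_\RR$ as the dimensional reduction of 8d $\cN=1$ super Yang--Mills on $V^8_\RR = L \times N$ along the projection $\Re \colon N \to N_\RR$; this identification was also used implicitly in Section~\ref{sect:6d11holomorphictwist} and is the $(4+\dim(L))$-dimensional theory appearing in Theorem~\ref{thm:8dBFreduction}.

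Next, I would verify that the 6d rank $(1,1)$ partially topological square-zero supercharge $Q$ arises as the dimensional reduction of an 8d holomorphic supercharge. This uses the decomposition of the 8d semi-spin representations under $\MU(L) \subset \Spin(V^8_\RR)$ described in Section~\ref{sect:8dholomorphictwist}, combined with the branching of $\Spin(8,\CC)$-representations under $\Spin(6,\CC) \hookrightarrow \Spin(8,\CC)$: a holomorphic rank $(1,0)$ supercharge $Q_{8d} \in S_+^{8d}$ has image under dimensional reduction lying in both $S_+^{6d}\otimes W_+$ and $S_-^{6d}\otimes W_-$, so its 6d rank is necessarily $(1,1)$. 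A direct count shows its number of invariant directions in the 6d vector representation is $4$, placing it in the partially topological class (neither holomorphic nor topological) and matching the orthogonality condition $\langle W^*_{Q_+}, W^*_{Q_-}\rangle = 0$ required for the rank $(1,1)$ special class.

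With these identifications, Theorem~\ref{thm:8dBFreduction} immediately yields the required perturbative equivalence with $T^*[-1]\map(\CC^2 \times \RR^2_{\mathrm{dR}}, B\fg)$, equivariant under $\MU(L)\times \Spin(N_\RR) = \MU(2)\times \Spin(2,\RR)$. The main subtlety I expect is the equivariance bookkeeping: one must confirm that the 6d twisting homomorphism $\phi\colon \MU(2) \times \Spin(2,\RR) \to G_R^{6d} = \Spin(4,\CC)$, together with the compatible $\ZZ$-grading $\alpha$, agrees with the composite of the 8d twisting data $\MU(2) \xrightarrow{\det^{1/2}} \Spin(2,\CC)$ and the natural embedding $G_R^{8d} \times \Spin(N_\RR) \hookrightarrow G_R^{6d}$ coming from dimensional reduction of the $R$-symmetry action. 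This compatibility is structurally the same commutative-diagram check as in the proof of Theorem~\ref{thm:10dCSreduction} and should follow by the identical argument, since the dimensional reduction $V^8_\RR \to V^6_\RR$ contributes an extra $\Spin(2,\RR)$ factor to the $R$-symmetry precisely through the map \eqref{eq:SOtoSU}.
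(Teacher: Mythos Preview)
Your proposal is correct and follows essentially the same approach as the paper: the paper's proof consists precisely of the observation (stated in the paragraph preceding the theorem) that an 8d holomorphic supercharge dimensionally reduces to a 6d rank $(1,1)$ partially topological supercharge, so the result follows directly from Theorem~\ref{thm:8dBFreduction} with $L=\CC^2$ and $N_\RR=\RR^2$. Your write-up is in fact more detailed than the paper's, spelling out the supercharge identification and the equivariance bookkeeping that the paper leaves implicit.
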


\subsubsection{Topological Twist}
\label{sect:6d11topologicaltwist}

Let $L=\CC^2$ equipped with a Hermitian structure and a complex half-density, $N_\RR=\RR^2$ equipped with a Euclidean structure and $N=N_\RR\otimes_\RR\CC$ its complexification. Consider the 8-dimensional spacetime $V^8_\RR=L\times N$ and the 6-dimensional spacetime $V^6_\RR=L\times N_\RR$. Under the projection $V^8_\RR\rightarrow V^6_\RR$ the family $Q_t$ of 8-dimensional square-zero supercharges given by equation \eqref{eq:8dHodgefamily} dimensionally reduces to a family of square-zero supercharges which are topological for $t\neq 0$ and have 4 invariant directions at $t=0$. So, we get a rank $(1, 1)$ partially topological twist at $t=0$ and a rank $(1, 1)$ topological twist at $t\neq 0$. Therefore, from Theorem \ref{thm:8dHodgereduction} we obtain the following statement.

\begin{theorem}
The twist of 6d $\cN=(1, 1)$ super Yang--Mills with respect to the family $Q_t$ of square-zero supercharges is perturbatively equivalent to the generalized Hodge theory $\map(\CC^2\times \RR^2_{\mathrm{dR}}, B\fg_{\Hod})$. Moreover, this equivalence is $\SU(2)\times \Spin(2, \RR)$-equivariant.
\label{thm:6dHodgetwist}
\end{theorem}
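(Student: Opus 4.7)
The plan is to derive this as a direct corollary of Theorem \ref{thm:8dHodgereduction}, specialized to $L = \CC^2$ and $N_\RR = \RR^2$. Under this specialization the spacetime $L \times N_\RR$ is exactly $\CC^2 \times \RR^2$, the $(4 + \dim(L))$-dimensional super Yang--Mills with 16 supercharges is precisely 6d $\cN=(1,1)$ super Yang--Mills, and the target generalized Hodge theory $\map(L \times (N_\RR)_{\mr{dR}}, B\fg_\Hod)$ becomes the desired $\map(\CC^2 \times \RR^2_{\mr{dR}}, B\fg_\Hod)$. The symmetry group $\SU(L) \times \Spin(N_\RR)$ becomes $\SU(2) \times \Spin(2, \RR)$, matching the claimed equivariance. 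The main thing to verify is that the 6d family $Q_t$ appearing in the statement coincides with the dimensional reduction of the 8d family \eqref{eq:8dHodgefamily} along the real projection $V^8_\RR = L \times N \to L \times N_\RR = V^6_\RR$.

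The key steps are as follows. First, recall that 6d $\cN=(1,1)$ super Yang--Mills on $V^6_\RR$ is, by definition, the dimensional reduction of 10d $\cN=(1,0)$ super Yang--Mills along a 4d real subspace, and factors through the intermediate reduction to 8d $\cN=1$ super Yang--Mills on $V^8_\RR$; in particular, 6d $\cN=(1,1)$ is the dimensional reduction of 8d $\cN=1$ along $\Re \colon N \to N_\RR$. Second, analyze the branching of the 8d semi-spin representation $S_+^{\mr{8d}}$ under $\Spin(V^6; \CC) \times \Spin(N; \CC) \subset \Spin(V^8; \CC)$ to confirm that $Q_0 \in \det(L)^{1/2} \subset S_+^{\mr{8d}}$ and $\bar Q_0$ descend to give a pair of rank $(1,0)$ and rank $(0,1)$ components whose sum constitutes a rank $(1,1)$ 6d supercharge, with $(Q_t, Q_t)_{S_+^{\mr{8d}}} = t$ translating to the image of $\Gamma(Q_t, -) \colon \Sigma^{\mr{6d}} \to V^6$ having 4 invariant directions at $t = 0$ (the rank $(1,1)$ special twist) and 6 invariant directions for $t \neq 0$ (the rank $(1,1)$ topological twist). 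Third, invoke Proposition \ref{prop:twistdimensionalreduction}, which states that twisting commutes with dimensional reduction, to conclude that the 6d $Q_t$-twist is the dimensional reduction of the 8d $Q_t$-twist. Applying Theorem \ref{thm:8dHodgereduction} then yields the claimed equivalence.

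The main obstacle, such as it is, lies in the bookkeeping of the spinor decomposition in the second step: one must carefully track how the Calabi--Yau framing of $V^8_\RR = L \times N$, which gives the holomorphic supercharge $Q_0$ and its complex conjugate $\bar Q_0$, restricts to a half-density on $L$ and a spin/half-density structure on $N_\RR$, and one must verify the rank jump in the image of $\Gamma(Q_t, -)$ in 6d. The $\SU(2) \times \Spin(2,\RR)$-equivariance is immediate from the corresponding $\SU(L) \times \Spin(N_\RR)$-equivariance in Theorem \ref{thm:8dHodgereduction}, whose proof already incorporates the compatibility diagram relating $\SO(N_\RR)$-actions on half-density bundles via the homomorphism \eqref{eq:SOtoSU}, exactly as in the proof of Theorem \ref{thm:10dCSreduction}. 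No new analysis is needed beyond Theorem \ref{thm:8dHodgereduction}, Corollary \ref{cor:Hodgeholomorphicreduction}, and Proposition \ref{prop:twistdimensionalreduction}.
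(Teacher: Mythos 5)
Your proposal is correct and follows the paper's own route: the paper obtains this theorem exactly by specializing Theorem \ref{thm:8dHodgereduction} to $L=\CC^2$, $N_\RR=\RR^2$, after observing that the 8d family \eqref{eq:8dHodgefamily} dimensionally reduces along $V^8_\RR\to V^6_\RR$ to the 6d family $Q_t$ (4 invariant directions at $t=0$, topological for $t\neq 0$), with the $\SU(2)\times\Spin(2,\RR)$-equivariance inherited from the $\SU(L)\times\Spin(N_\RR)$-equivariance there. One minor slip in your spinor bookkeeping: the reduction of the 8d holomorphic supercharge $Q_0$ alone is already a rank $(1,1)$ special 6d supercharge (as used in Section \ref{sect:6d11partialtwist}), not a rank $(1,0)$ component paired with a rank $(0,1)$ component from $\bar Q_0$ --- this is consistent with your own count of 4 invariant directions at $t=0$ and does not affect the argument, which only needs that the 6d family is the reduction of the 8d one.
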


\begin{corollary}
The topological twist of 6d $\cN=(1, 1)$ super Yang--Mills is perturbatively trivial.
\end{corollary}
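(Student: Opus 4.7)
The plan is to deduce this corollary directly from the two results that immediately precede it, in exact analogy with Corollary \ref{cor:8dtopologicaltwist} for the 8-dimensional case. The key observation is that the rank $(1,1)$ topological supercharge arises as the $t \neq 0$ specialization of the family $Q_t$ introduced in the proof of Theorem \ref{thm:6dHodgetwist} (via dimensional reduction of the 8-dimensional family \eqref{eq:8dHodgefamily} from $V^8_\RR = L \times N$ to $V^6_\RR = L \times N_\RR$ with $L = \CC^2$ and $N_\RR = \RR^2$).

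First, I would invoke Theorem \ref{thm:6dHodgetwist}, which identifies the $Q_t$-twisted 6d $\cN=(1,1)$ super Yang--Mills theory with the generalized Hodge theory $\map(\CC^2 \times \RR^2_{\mathrm{dR}}, B\fg_{\Hod})$ as a $\CC[t]$-family of classical field theories. In particular, specializing both sides to any fixed $t \neq 0$ yields a perturbative equivalence of classical BV theories.

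Next, I would apply Proposition \ref{prop:Hodgetheoryspecialization}, which states that the specialization of a generalized Hodge theory at any nonzero value of $t$ is perturbatively trivial, since the underlying $L_\infty$ algebra $\fg_{\Hod}|_{t \neq 0}$ is acyclic (the differential $t \cdot \mathrm{id} \colon \fg[1] \to \fg$ is an isomorphism of graded vector bundles). Composing the two perturbative equivalences yields the claim.

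There is no real obstacle here: the substantive work was already done in establishing Theorem \ref{thm:6dHodgetwist}, which in turn rested on Theorem \ref{thm:8dHodgereduction} via holomorphic dimensional reduction (Corollary \ref{cor:Hodgeholomorphicreduction}). The only small point to verify carefully is that the topological twisting homomorphism $\phi \colon \Spin(4,\RR) \to G_R = \Spin(4,\CC)$ acts compatibly with the $\SU(2) \times \Spin(2,\RR)$-equivariance of Theorem \ref{thm:6dHodgetwist}; but this is guaranteed by the construction of the family $Q_t$, whose stabilizer at $t \neq 0$ contains the relevant subgroup.
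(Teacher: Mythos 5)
Your argument is exactly the paper's: the topological twist is the $t\neq 0$ member of the family $Q_t$, Theorem \ref{thm:6dHodgetwist} identifies the family with the generalized Hodge theory $\map(\CC^2\times\RR^2_{\mathrm{dR}}, B\fg_{\Hod})$, and Proposition \ref{prop:Hodgetheoryspecialization} gives triviality at $t\neq 0$. Your closing aside about the homomorphism $\phi\colon \Spin(4,\RR)\to\Spin(4,\CC)$ actually refers to the rank $(2,2)$ twist rather than this rank $(1,1)$ topological twist, but since the corollary makes no equivariance claim this is immaterial to the proof.
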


\subsubsection{Rank \texorpdfstring{$(2, 2)$}{(2,2)} Twist}
\label{sect:6drank22twist}

We consider $L=\CC$ equipped with a Hermitian structure and a complex half-density. From Theorem \ref{thm:10dCSreduction} we obtain the following statement.

\begin{theorem}
The rank $(2, 2)$ twist of 6d $\cN=(1, 1)$ super Yang--Mills is perturbatively equivalent to the generalized Chern--Simons theory with the space of fields $\map(\CC\times \RR^4_{\mr{dR}}, B\fg)$. Moreover, the equivalence is $\Spin(4, \RR)$-equivariant.
\label{thm:6drank22}
\end{theorem}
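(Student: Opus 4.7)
The plan is to reduce this statement directly to Theorem \ref{thm:10dCSreduction}, exactly as in the analogous computations for the $8$d rank $(1,1)$ twist (Theorem \ref{thm:8dpartiallytopologicaltwist}) and the $7$d rank $2$ twist (Theorem \ref{thm:7dpartiallytopologicaltwist}). The $6$d $\cN=(1,1)$ super Yang--Mills theory is by construction the dimensional reduction of $10$d $\cN=(1,0)$ super Yang--Mills along a $4$-dimensional translation subspace, so I will exhibit the rank $(2,2)$ supercharge as the image of a $10$d square-zero supercharge under this reduction and then transport the description of the $10$d twist down to six dimensions.

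First I would set $L=\CC$ with its standard Hermitian structure and complex half-density, $N_\RR=\RR^4$ with its standard Euclidean metric and spin structure, and $N=N_\RR\otimes_\RR\CC$, so that $V_\RR^{10}=L\times N$ and $V_\RR^6=L\times N_\RR$, with the projection induced by $\Re\colon N\to N_\RR$. The complex structure and complex half-density on $L\times N$ determine a holomorphic square-zero supercharge $Q\in S_+$ in the $10$d $\cN=(1,0)$ supersymmetry algebra, as in Section \ref{sect:10dholomorphictwist}. Under the identification of the odd part of the $10$d $\cN=(1,0)$ supertranslation algebra with the odd part of the $6$d $\cN=(1,1)$ supertranslation algebra, the image of $\Gamma(Q,-)\colon \Sigma\to V^{10}$ projects to a $5$-dimensional subspace of $V^{10}$ that covers all of $V^6=L\times N_\RR$; equivalently, $Q$ descends to a rank $(2,2)$ square-zero supercharge in $6$d with $5$ invariant directions, which is exactly the orbit under consideration.

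Next, by Proposition \ref{prop:twistdimensionalreduction} the dimensional reduction of the $Q$-twist of $10$d super Yang--Mills coincides with the rank $(2,2)$ twist of $6$d $\cN=(1,1)$ super Yang--Mills. Applying Theorem \ref{thm:10dCSreduction} with this choice of $L$ and $N_\RR$ then yields a perturbative equivalence with the generalized Chern--Simons theory $\map(\CC \times (\RR^4)_{\mr{dR}}, B\fg)$, which is the asserted description. The equivariance statement in Theorem \ref{thm:10dCSreduction} gives $\SU(L)\times \Spin(N_\RR) = \SU(1)\times \Spin(4,\RR) = \Spin(4,\RR)$-equivariance, matching the desired residual Lorentz symmetry; since the twisting homomorphism used for the $6$d theory is exactly the projection $\SO(4,\RR)\to G_R=\Spin(4,\CC)$ considered in the classification above, the compatibility of the two $\Spin(4,\RR)$-actions is the content of the commuting square of group homomorphisms appearing in the proof of Theorem \ref{thm:10dCSreduction}.

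I do not anticipate any real obstacle here: the entire content of the theorem is packaged into Theorem \ref{thm:10dCSreduction}, and the only thing to verify is that the rank $(2,2)$ orbit is indeed realized by dimensionally reducing the unique $10$d holomorphic supercharge along a $4$-dimensional subspace, together with the matching of twisting homomorphisms. Both are straightforward orbit-level checks using the classification of square-zero supercharges recalled from \cite{ElliottSafronov} and summarized in Figure \ref{fig:superchargeorbits}.
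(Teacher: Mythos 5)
Your proposal is correct and follows essentially the same route as the paper: the paper's proof consists precisely of taking $L=\CC$ (with Hermitian structure and complex half-density) and $N_\RR=\RR^4$ and invoking Theorem \ref{thm:10dCSreduction}, whose $\SU(L)\times\Spin(N_\RR)=\Spin(4,\RR)$-equivariance gives the stated symmetry. Your additional remarks identifying the reduced supercharge with the rank $(2,2)$ orbit and matching the twisting homomorphism are exactly the implicit checks the paper leaves to the reader.
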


\section{Dimension 5}

The odd part of the $5$-dimensional supersymmetry algebra is 
\[
\Sigma\cong S\otimes W
\]
where $S$ is the 4-dimensional spin representation of $\Spin(5, \CC)\cong \Sp(4, \CC)$ and $W$ is a complex symplectic vector space. 
The spin representation carries a symplectic paring $S\otimes S\rightarrow \CC$.

There are Yang--Mills theories with $\cN=1$ or $\cN=2$ supersymmetry, where $2 \cN = \dim(W)$, which we consider separately.

\subsection{\texorpdfstring{$\mc N = 1$}{N=1} Super Yang--Mills Theory}
\label{5d_1_section}

We fix $W = \CC^2$ equipped with a symplectic structure. Let $U$ be a symplectic $\fg$-representation. The 5d $\cN=1$ super Yang--Mills theory is obtained by a dimensional reduction from the 6d $\cN=(1, 0)$ super Yang--Mills theory.

The $R$-symmetry group coincides with the $R$-symmetry group in 6 dimensions:
\begin{itemize}
\item For a general $U$ the $R$-symmetry group is $G_R=\SL(2, \CC)$ with $W$ the two-dimensional defining representation.

\item For $U=T^* R$ the $R$-symmetry group is $G_R=\SL(2, \CC)\times \GL(1, \CC)$, where $\GL(1, \CC)$ acts trivially on $W$, with weight $1$ on $R$ and with weight $-1$ on $R^*$.
\end{itemize}

The theory admits a unique twist:
\begin{itemize}
\item A square-zero supercharge $Q\neq 0\in\Sigma$ has 3 invariant directions. There is a twisting homomorphism $\phi\colon \MU(2)\xrightarrow{\det^{1/2}}\U(1)\hookrightarrow \SL(2, \CC)$, so the twisted theory carries an $\MU(2)$-action.
\end{itemize}

\subsubsection{Minimal Twist}
\label{sect:5d1minimaltwist}

A square-zero supercharge $Q$ has rank 1, i.e. $Q=q_+\otimes w_1$ for some $w_1\in W$. Choose a complementary element $w_2\in W$ such that $(w_1, w_2) = 1$. We have a twisting homomorphism
\[\phi\colon\MU(2)\xrightarrow{\det^{1/2}}\U(1)\subset \SL(2, \CC)\]
such that $W\cong \det(L)^{-1/2} w_1\oplus \det(L)^{1/2} w_2$.

As in Section \ref{sect:9dminimaltwist}, the data of $q_+$ is equivalent to the choice of a one-dimensional subspace $N_\RR\subset V_\RR$ and a complex structure on $V_\RR/N_\RR$ together with a complex half-density.

We will perform a computation of the twist in a more general setting which will be useful for lower-dimensional computations.

Suppose $L$ is a complex vector space equipped with a Hermitian structure. Suppose $N_\RR=\RR^{3-\dim(L)}$ equipped with a Euclidean metric and a spin structure. Denote by $N=N_\RR\otimes_\RR\CC$ its complexification which carries a complex half-density. Let $V_\RR=L\times N$ (a 6-dimensional real vector space). We set $W^6_+=\det(L)^{-1/2}w_1\oplus \det(L)^{1/2} w_2$ as a representation of $\MU(L)$. By the results of Section \ref{sect:6dholomorphictwist}, there is a canonical square-zero supercharge $Q=q_+\otimes w_1\in\Sigma$ determined by the complex structure on $L\times N$ and a complex half-density on $N$.

The dimensional reduction of 6d super Yang--Mills on $L\times N$ along $\Re\colon N\rightarrow N_\RR$ is by definition the $(3+\dim(L))$-dimensional super Yang--Mills on $L\times N_\RR$ with 8 supercharges. Since $N\cong N_\RR\oplus iN_\RR$, this theory carries an action of the $R$-symmetry group $G_R=\SL(2, \CC)\times \Spin(N_\RR)$. We consider a twisting homomorphism
\[\phi\colon \MU(L)\times \Spin(N_\RR)\rightarrow G_R=\SL(2, \CC)\times \Spin(N_\RR)\]
whose first component is $\MU(L)\rightarrow \SU(2)$ as before and the second component is the identity.

\begin{theorem}
The twist by $Q$ of $(3+\dim(L))$-dimensional super Yang--Mills on $L\times N_\RR$ with 8 supercharges and matter valued in a symplectic $\fg$-representation $U$ is perturbatively equivalent to the generalized Chern--Simons theory with the space of fields $\Sect(L\times (N_\RR)_{\mr{dR}}, (U\otimes K_L^{1/2})\ham \fg)$. Moreover, the equivalence is $\MU(L)\times \Spin(N_\RR)$-equivariant.
\label{thm:6dCSreduction}
\end{theorem}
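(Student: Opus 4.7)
The plan is to deduce this result from Theorem \ref{thm:6dholomorphictwist} by dimensional reduction, closely following the pattern of Theorems \ref{thm:10dCSreduction} and \ref{thm:8dBFreduction}. Start by viewing $L \times N$ as a complex $3$-dimensional vector space (so $\dim(L) + \dim_\CC(N) = 3$) equipped with a Hermitian structure and a complex half-density, as set up in the statement. By Theorem \ref{thm:6dholomorphictwist}, the $Q$-twist of $6$d $\cN=(1,0)$ super Yang--Mills on $L \times N$ with matter in a symplectic $\fg$-representation $U$ is perturbatively equivalent to the generalized Chern--Simons theory with space of fields $\Sect(L \times N, (U \otimes K_{L\times N}^{1/2}) \ham \fg)$, and this equivalence is $\MU(L) \times \SU(N)$-equivariant.

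Next, dimensionally reduce along the projection $p_{\dbar} \colon L \times N \to L \times N_\RR$ induced by $\Re \colon N \to N_\RR$. On the super Yang--Mills side, this reduction (by Proposition \ref{prop:susydimlred}) yields $(3+\dim(L))$-dimensional super Yang--Mills on $L \times N_\RR$ with 8 supercharges, and by Proposition \ref{prop:twistdimensionalreduction} the $Q$-twist commutes with dimensional reduction. On the Chern--Simons side, I would invoke Proposition \ref{CS_diml_red_prop} directly (noting the symplectic-boson version follows identically, using that $K_L^{1/2}$ is independent of the $N$-direction) to identify the reduction with $\Sect(L \times (N_\RR)_{\mr{dR}}, (U \otimes K_L^{1/2}) \ham \fg)$. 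This gives the claimed perturbative equivalence of underlying theories.

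The remaining task is to check $\MU(L) \times \Spin(N_\RR)$-equivariance. The $\MU(L)$-equivariance is immediate from Theorem \ref{thm:6dholomorphictwist} and the fact that $\MU(L)$ acts trivially on the $N$-factor. For $\Spin(N_\RR)$, the subtle point is exactly as in the proof of Theorem \ref{thm:10dCSreduction}: on the super Yang--Mills side, the $\Spin(N_\RR)$-action is assembled from the Lorentz action on $L \times N_\RR$ together with the twisting homomorphism into the $R$-symmetry group $\SL(2,\CC) \times \Spin(N_\RR)$ (identity on the second factor); on the Chern--Simons side it comes from the map $\SO(N_\RR) \to \SU(N)$ from \eqref{eq:SOtoSU} and the resulting pullback action on Dolbeault forms along $N$. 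Compatibility reduces to commutativity of the same diagram
\[
\xymatrix@C=2cm{
\SO(N_\RR) \ar^-{\mathrm{diagonal}}[r] \ar^{\eqref{eq:SOtoSU}}[d] & \SO(N_\RR) \times \SO(N_\RR) \ar[d] \\
\SU(N) \ar[r] & \SO(N_\RR \oplus N_\RR)
}
\]
that appeared in Theorem \ref{thm:10dCSreduction}, which is a purely representation-theoretic fact.

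The main obstacle, as in the analogous earlier theorems, is verifying this equivariance matching rather than the equivalence itself: the rest is essentially a formal consequence of combining Theorem \ref{thm:6dholomorphictwist} with the dimensional reduction result of Proposition \ref{CS_diml_red_prop}, but one must be careful that the twisting homomorphism $\phi$ chosen on the $(3+\dim(L))$-dimensional side really does induce the rotation action on the Dolbeault resolution arising from holomorphic reduction. Once the diagram above is checked, the theorem follows.
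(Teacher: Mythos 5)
Your proposal is correct and follows essentially the same route as the paper: the paper's proof likewise combines Theorem \ref{thm:6dholomorphictwist} with the holomorphic dimensional reduction of Proposition \ref{CS_diml_red_prop} (applied in its $\Sect$/symplectic-boson form), with the $\Spin(N_\RR)$-equivariance handled exactly by the matching of twisting homomorphism and the $\SO(N_\RR)\to\SU(N)$ map as in Theorem \ref{thm:10dCSreduction}. The only difference is presentational: you spell out the equivariance diagram explicitly here, while the paper records that check once in the 10-dimensional case and cites it implicitly.
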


\begin{proof}
By Theorem \ref{thm:6dholomorphictwist} the twist of 6d $\cN=(1, 0)$ super Yang--Mills on $L\times N$ by $Q$ is perturbatively equivalent to the holomorphic Chern--Simons theory with the space of fields $\Sect(L\times N, (U\otimes K_L^{1/2})\ham \fg)$. By Proposition \ref{CS_diml_red_prop} we get that the dimensional reduction of holomorphic Chern--Simons on $L\times N$ along $\Re\colon N\rightarrow N_\RR$ is isomorphic to the generalized Chern--Simons theory with the space of fields $\Sect(L\times N_\RR, (U\otimes K_L^{1/2})\ham\fg)$ and this isomorphism is $\MU(L)\times \SO(N_\RR)$-equivariant, where $\SO(N_\RR)$ acts on $N$ via the homomorphism \eqref{eq:SOtoSU}.
\end{proof}

We will now concentrate on the 5-dimensional case.

\begin{theorem}
The minimal twist of 5d $\cN=1$ super Yang--Mills on $M=\CC^2\times \RR$ with matter valued in a symplectic $\fg$-representation $U$ is perturbatively equivalent to the generalized Chern--Simons theory with the space of fields $\Sect(\CC^2\times \RR_{\mathrm{dR}}, (U\otimes K_{\CC^2}^{1/2})\ham \fg)$. Moreover, the equivalence is $\MU(2)$-equivariant.
\label{thm:5dminimaltwist}
\end{theorem}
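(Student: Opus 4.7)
The plan is to obtain this result as a direct corollary of Theorem \ref{thm:6dCSreduction}, by recognizing 5d $\cN=1$ super Yang--Mills as the dimensional reduction of 6d $\cN=(1,0)$ super Yang--Mills, and recognizing the minimal 5d supercharge as the lift of a rank 1 holomorphic 6d supercharge.

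First I would verify that the square-zero supercharge $Q\in\Sigma = S\otimes W$ with 3 invariant directions in the 5-dimensional supersymmetry algebra indeed lifts to a rank $(1,0)$ holomorphic supercharge in the 6-dimensional $\cN=(1,0)$ supersymmetry algebra. This is immediate: under the embedding $\so(5,\CC)\subset \so(6,\CC)$, the 4-dimensional spin representation $S$ of $\Spin(5,\CC)$ agrees with the restriction of the 4-dimensional semi-spin representation $S_+$ of $\Spin(6,\CC)$, so $\Sigma_{5d}\cong S_+\otimes W_+$ with $W_+ = W$ and $W_- = 0$ matches the $\cN=(1,0)$ setup. The square-zero condition in 5d (equivalently, rank 1) agrees with the rank $(1,0)$ condition in 6d, and the 3 invariant directions in 5d lift to the 3 holomorphic invariant directions in 6d.

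Next I would apply Theorem \ref{thm:6dCSreduction} specialized to $L = \CC^2$ and $N_\RR = \RR$, so that $\dim(L)=2$, $3+\dim(L) = 5$ and the $(3+\dim(L))$-dimensional super Yang--Mills on $L\times N_\RR$ is precisely 5d $\cN=1$ super Yang--Mills on $\CC^2\times \RR$. The theorem then directly produces a perturbative equivalence to the generalized Chern--Simons theory $\Sect(\CC^2\times \RR_{\mr{dR}}, (U\otimes K_{\CC^2}^{1/2})\ham \fg)$, with the stated $\MU(L)\times \Spin(N_\RR)$-equivariance. Since $\Spin(\RR) = \{\pm 1\}$ acts trivially on everything of interest, the residual equivariance is $\MU(2)$, matching the statement.

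There is no substantive obstacle here: the essential analytic content (rewriting the twisted action, eliminating auxiliary fields and trivial BRST doublets, matching the resulting local $L_\infty$ algebra with that of the Hamiltonian reduction) was already carried out in Section \ref{sect:6dholomorphictwist} to prove Theorem \ref{thm:6dholomorphictwist}, while the dimensional reduction compatibility is encoded in Proposition \ref{CS_diml_red_prop} and used in Theorem \ref{thm:6dCSreduction}. The only point worth double-checking is the equivariance claim: one must confirm that the twisting homomorphism $\MU(2)\to G_R^{5d} = \SL(2,\CC)$ used in 5 dimensions agrees, after the identification of 5d and 6d $R$-symmetry data, with the twisting homomorphism in 6 dimensions used in Theorem \ref{thm:6dholomorphictwist}; this is immediate from the construction since both are $\MU(2)\xrightarrow{\det^{1/2}}\U(1)\hookrightarrow \SL(2,\CC)$.
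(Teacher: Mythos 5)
Your proposal is correct and coincides with the paper's own argument: the paper also obtains Theorem \ref{thm:5dminimaltwist} as the immediate specialization of Theorem \ref{thm:6dCSreduction} to $L=\CC^2$, $N_\RR=\RR$, with the 5d minimal supercharge identified as the lift of the 6d rank $(1,0)$ supercharge and the same twisting homomorphism $\MU(2)\xrightarrow{\det^{1/2}}\U(1)\hookrightarrow\SL(2,\CC)$ governing equivariance. No gaps.
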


If $U$ is of cotangent type, we may enhance the $R$-symmetry group to $G_R=\SL(2, \CC)\times \GL(1, \CC)\times \Spin(N_\RR)$. We have a homomorphism $\alpha\colon \U(1)\rightarrow \SL(2, \CC)\times \GL(1, \CC)\times \Spin(N_\RR)$ given by the diagonal embedding into the first two components. We also use a new twisting homomorphism
\[\wt\phi\colon \MU(L)\times \Spin(N_\RR)\xrightarrow{\det^{1/2}\times \id}\U(1)\times \Spin(N_\RR)\xrightarrow{\alpha\times\id}G_R.\]

\begin{theorem}
The minimal twist of 5d $\cN=1$ super Yang--Mills on $M=\CC^2\times \RR$ with matter valued in the $\fg$-representation $U=T^* R=R\oplus R^*$ is perturbatively equivalent to the generalized BF theory with the space of fields $T^*[-1]\map(\CC^2\times \RR_{\mathrm{dR}}, R/\fg)$. Moreover, the equivalence is $\U(2)$-equivariant.
\label{thm:5dminimaltwistgraded}
\end{theorem}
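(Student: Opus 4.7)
The plan is to derive this $\ZZ$-graded refinement of Theorem \ref{thm:5dminimaltwist} by dimensional reduction from the already-established 6-dimensional cotangent result, Theorem \ref{thm:6dholomorphictwistgraded}, in direct analogy with how Theorem \ref{thm:7dminimaltwist} was deduced from Theorem \ref{thm:8dBFreduction}. Concretely, 5d $\cN=1$ super Yang--Mills with matter $U = T^*R$ on $\CC^2 \times \RR$ is the dimensional reduction of 6d $\cN=(1,0)$ super Yang--Mills with matter $U = T^*R$ on $\CC^3 = \CC^2 \times \CC$ along the real-part projection $\CC \to \RR$. Under the corresponding identification of supertranslation algebras and the associated refined $R$-symmetry group $\SL(2,\CC)\times \GL(1,\CC)$ of the cotangent case, a holomorphic 6d supercharge descends to a minimal 5d supercharge, and the grading homomorphism $\alpha$ together with the refined twisting homomorphism $\wt\phi$ are precisely the restrictions of their 6d counterparts. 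By Proposition \ref{prop:twistdimensionalreduction}, twisting commutes with this dimensional reduction.

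By Theorem \ref{thm:6dholomorphictwistgraded}, the 6d twisted theory is $\U(3)$-equivariantly perturbatively equivalent to the holomorphic BF theory $T^*[-1]\map(\CC^3, R/\fg)$. We next apply Proposition \ref{prop:BFholomorphicreduction} with $X' = \CC^2$, $V = \CC$ and $V_\RR = \RR$, taking the ambient $L_\infty$ algebra to be $L_{R,\fg} = \fg \oplus R[-1]$ so that $B L_{R,\fg} = R/\fg$. This proposition yields an $\SO(V_\RR)$-equivariant perturbative equivalence between the dimensional reduction of $T^*[-1]\map(\CC^3, R/\fg)$ along the holomorphic direction and the generalized BF theory $T^*[-1]\map(\CC^2\times \RR_{\mathrm{dR}}, R/\fg)$. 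Composing these two equivalences gives the desired perturbative equivalence of classical field theories.

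For the $\U(2)$-equivariance and $\ZZ$-grading, the key point is that the subgroup of $\U(3)$ preserving the decomposition $\CC^3 = \CC^2 \times \CC$ is $\U(2) \times \U(1)$, and under the real-part projection the $\U(1)$ factor restricts trivially to rotation of $\RR$, so the residual symmetry group acting on the reduced theory contains $\U(2)$. Since Theorem \ref{thm:6dholomorphictwistgraded} already produces a $\ZZ$-grading in the cotangent case, the same grading survives the reduction, matching the grading on the generalized BF theory.

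The only nontrivial bookkeeping is checking that the dimensional reduction of the refined pair $(\wt\phi, \alpha)$ in 6d matches the corresponding pair for the 5d minimal twist in the cotangent case stated just before Theorem \ref{thm:5dminimaltwistgraded}; this is the analogue of the commutative diagram of $R$-symmetry groups exhibited in the proof of Theorem \ref{thm:10dCSreduction}, and is essentially routine. I anticipate no serious obstacle beyond this diagram-chase.
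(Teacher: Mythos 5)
Your proposal is correct and follows essentially the same route the paper takes: the paper obtains this theorem by dimensionally reducing the 6d cotangent-type computation (Theorem \ref{thm:6dholomorphictwistgraded}) along $\Re\colon\CC\to\RR$ using the holomorphic-reduction statement for generalized BF theories, with the enhanced $R$-symmetry, the grading $\alpha$, and the twisting homomorphism $\wt\phi$ matching their 6d counterparts exactly as you describe. Your only imprecision is the phrasing about the $\U(1)$ factor of $\U(2)\times\U(1)\subset\U(3)$ "restricting to rotation of $\RR$"; what matters is simply that $\U(2)$ acts trivially on the reduced $\CC$ factor and hence commutes with $\Re$, which is what you use, so the argument stands.
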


\subsection{\texorpdfstring{$\mc N = 2$}{N=2} Super Yang--Mills Theory} \label{5d_2_section}

The 5d $\cN=2$ super Yang--Mills theory is obtained by dimensional reduction from the 10d $\cN=(1, 0)$ super Yang--Mills. It admits $R$-symmetry group $G_R=\Spin(5, \CC)$ under which $W$ is the 4-dimensional spin representation equipped with a symplectic pairing.

An element $Q\in S\otimes W$ gives rise to maps $S^*\rightarrow W$ and $W^*\rightarrow S$. Both $S$ and $W$ are 4-dimensional symplectic vector spaces and the classification of supercharges will use their relative position.

This theory admits twists by the following four classes of supercharge.
\begin{itemize}
\item Rank 1. These automatically square to zero and have 3 invariant directions. Such supercharges come from the $\cN=1$ supersymmetry algebra. They admit a twisting homomorphism from $\MU(2)$ and a $\ZZ$-grading $\alpha\colon \U(1)\rightarrow G_R$.

\item Rank 2, where the image of $W^*\rightarrow S$ is Lagrangian. These automatically square to zero and have 4 invariant directions. There is a $\ZZ$-grading $\alpha\colon \U(1)\rightarrow G_R$ and a twisting homomorphism $\phi\colon \Spin(2, \RR)\times \Spin(3, \RR)\hookrightarrow G_R=\Spin(5, \CC)$.

\item Rank 2, where the image of $W^*\rightarrow S$ is symplectic. These automatically square to zero and are topological. There is a $\ZZ$-grading $\alpha\colon \U(1)\rightarrow G_R$ and a twisting homomorphism $\phi\colon \Spin(4, \RR)\hookrightarrow G_R=\Spin(5, \CC)$.

\item Rank 4. These are topological and do not admit a compatible homomorphism $\alpha$. There is the obvious twisting homomorphism $\phi\colon \Spin(5, \RR)\rightarrow \Spin(5, \CC)$.
\end{itemize}

\subsubsection{Minimal Twist}
\label{sect:5dminimaltwist}

The 5d $\cN=2$ super Yang--Mills viewed as a $\cN=1$ supersymmetry theory coincides with the 5d $\cN=1$ super Yang--Mills with matter valued in the representation $U=T^*\fg=\fg\oplus \fg^*$. From Theorem \ref{thm:5dminimaltwistgraded} we obtain the following statement.

\begin{theorem}
The minimal twist of 5d $\cN=2$ super Yang--Mills on $M=\CC^2\times \RR$ is perturbatively equivalent to the generalized BF theory with the space of fields $T^*[-1]\map(\CC^2\times \RR_{\mathrm{dR}}, \fg/\fg)$. Moreover, the equivalence is $\U(2)$-equivariant.
\end{theorem}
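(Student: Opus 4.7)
The proof is essentially a direct consequence of Theorem \ref{thm:5dminimaltwistgraded} together with the standard identification of 5d $\cN=2$ super Yang--Mills as a 5d $\cN=1$ theory coupled to adjoint matter. My plan is as follows.

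First, I would make precise the statement already flagged in the text: forgetting half of the supersymmetry, the 5d $\cN=2$ super Yang--Mills theory coincides with the 5d $\cN=1$ super Yang--Mills theory with matter multiplet valued in the symplectic $\fg$-representation $U = T^*\fg = \fg \oplus \fg^*$. This follows by dimensional reduction from the analogous identification in 6d established in Section \ref{sect:6d11holomorphictwist}: the 6d $\cN=(1,1)$ theory is the 6d $\cN=(1,0)$ theory with hypermultiplet matter in $U = T^*\fg$, and dimensional reduction along the projection $\RR^6 \to \RR^5$ preserves this identification. At the level of supertranslation algebras, the 5d $\cN=1$ embedding $\fA_{\cN=1} \hookrightarrow \fA_{\cN=2}$ corresponds to choosing a Lagrangian $W_1 \subset W$ in the $\Spin(5,\CC)$-representation $W$, with $W/W_1 \cong W_1^*$ providing the cotangent direction.

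Next, I would identify the minimal square-zero supercharge of $\fA_{\cN=2}$ with a rank-1 supercharge of the form $Q = q_+ \otimes w_1 \in S \otimes W_1$ where $q_+$ is a square-zero spinor in $S$. Any rank 1 supercharge in $\fA_{\cN=2}$ factors through the inclusion $\fA_{\cN=1} \subset \fA_{\cN=2}$ corresponding to some Lagrangian $W_1$ containing $w_1$; choosing this Lagrangian is the content of the claim that the minimal $\cN=2$ twist is computed from the $\cN=1$ theory with $U = T^*\fg$. With this setup, Theorem \ref{thm:5dminimaltwistgraded} applied to the $\fg$-representation $R = \fg$ (so that $U = T^*R = T^*\fg$) yields a perturbative equivalence of the $Q$-twisted theory with the generalized BF theory
\[ T^*[-1]\map(\CC^2 \times \RR_{\mathrm{dR}}, R/\fg) = T^*[-1]\map(\CC^2 \times \RR_{\mathrm{dR}}, \fg/\fg), \]
which is $\U(2)$-equivariant.

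The only genuinely non-routine step is to check that the enhanced $\U(2)$-equivariance is the correct one for the $\cN=2$ theory. Theorem \ref{thm:5dminimaltwistgraded} produces $\U(2)$-equivariance through the twisting homomorphism $\wt\phi\colon \MU(2) \xrightarrow{\det^{1/2} \times \id} \U(1) \to G_R^{\cN=1}$ composed with the diagonal $\alpha$ on $\SL(2,\CC) \times \GL(1,\CC)$. To finish, I would verify that this composite factors through the larger $R$-symmetry group $G_R^{\cN=2} = \Spin(5,\CC)$ compatibly with the identification $G_R^{\cN=1} \subset G_R^{\cN=2}$ coming from the choice of Lagrangian $W_1 \subset W$, and that the resulting $\U(2)$-action on the BF-theory fields agrees with the one induced by rotations of $\CC^2$. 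The main (mild) obstacle is this last bookkeeping: tracking the weights under $\det^{1/2}$ versus $\det$ and confirming that the apparent $\MU(2)$-action in fact descends to $\U(2)$ because only integer powers of $K_{\CC^2}$ appear in the adjoint-valued fields, exactly as in the passage from Theorem \ref{thm:6dholomorphictwist} to Theorem \ref{thm:6dholomorphictwistgraded}.
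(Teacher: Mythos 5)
Your proposal is correct and follows essentially the same route as the paper: the paper's proof consists precisely of viewing 5d $\cN=2$ super Yang--Mills as the 5d $\cN=1$ theory with matter in $U=T^*\fg$ and invoking Theorem \ref{thm:5dminimaltwistgraded} with $R=\fg$. The extra equivariance bookkeeping you flag (the twisting homomorphism factoring through the larger $R$-symmetry group and the descent from $\MU(2)$ to $\U(2)$ via integer powers of $K_{\CC^2}$) is consistent with how the paper handles the analogous 6d case and is not treated as a separate step there.
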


\subsubsection{Rank 2 Partially Topological Twist}
\label{sect:5dpartialtwist}

Let $L = \CC$ equipped with a Hermitian structure, $N_\RR=\RR^3$ equipped with a Euclidean structure and $N=N_\RR\otimes_\RR\CC$ its complexification. Consider the 8-dimensional spacetime $V^8_\RR=L\times N$ and the 5-dimensional spacetime $V^5_\RR=L\times N_\RR$. Under the projection $V^8_\RR\rightarrow V^5_\RR$ a holomorphic square-zero supercharge $Q$ in 8 dimensions dimensionally reduces to a rank $2$ partially topological square-zero supercharge in 5 dimensions. Therefore, from Theorem \ref{thm:8dBFreduction} we obtain the following statement.

\begin{theorem}
The rank $2$ partially topological twist of 5d $\cN=2$ super Yang--Mills is perturbatively equivalent to the generalized BF theory with the space of fields $T^*[-1]\map(\CC\times \RR^3_{\mathrm{dR}}, B\fg)$. Moreover, the equivalence is $\Spin(2, \RR)\times\Spin(3, \RR)$-equivariant.
\end{theorem}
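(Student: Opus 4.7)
The proof plan is dimensional reduction from the 8d $\cN=1$ holomorphic twist, exactly parallel to the analogous proofs in Sections \ref{sect:6d11partialtwist} and \ref{sect:7dpartialtwist}.

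Set $L = \CC$ with its standard Hermitian structure and $N_\RR = \RR^3$ with its standard Euclidean metric and spin structure, and write $N = N_\RR \otimes_\RR \CC$. We have the 8-dimensional spacetime $V^8_\RR = L \times N$ and the 5-dimensional spacetime $V^5_\RR = L \times N_\RR$, with projection $p \colon V^8_\RR \to V^5_\RR$ induced by $\Re \colon N \to N_\RR$ on the second factor. By construction, the dimensional reduction of 8d $\cN=1$ super Yang--Mills along $p$ is precisely 5d $\cN=2$ super Yang--Mills on $V^5_\RR$, since $(4 + \dim_\CC L)$-dimensional super Yang--Mills with $16$ supercharges specialises at $\dim_\CC L = 1$ to 5d $\cN = 2$ super Yang--Mills.

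The key intermediate step is verifying the orbit matching: a rank $(1, 0)$ holomorphic supercharge $\wt Q$ in the 8d $\cN=1$ supersymmetry algebra descends under $p$ to a rank 2 partially topological supercharge $Q$ in the 5d $\cN=2$ supersymmetry algebra with Lagrangian image in $S$. This follows by counting invariant directions: the holomorphic $\wt Q$ has a 4-dimensional space of invariant directions in $V^8_\RR$, and this subspace projects under $p$ onto a 4-dimensional subspace of $V^5_\RR$, landing $Q$ in the rank 2 partial orbit (consistent with Figure \ref{fig:superchargeorbits}). Applying Theorem \ref{thm:8dBFreduction} with this $L$ and $N_\RR$, together with Proposition \ref{prop:twistdimensionalreduction}, immediately yields the claimed perturbative equivalence with the generalized BF theory $T^*[-1]\map(\CC \times \RR^3_{\mathrm{dR}}, B\fg)$.

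For the equivariance, Theorem \ref{thm:8dBFreduction} provides $\MU(L) \times \Spin(N_\RR) = \MU(1) \times \Spin(3, \RR)$-equivariance. This matches the group $\Spin(2, \RR) \times \Spin(3, \RR)$ in the statement via the natural identification $\MU(1) \cong \Spin(2, \RR)$, which follows from the definition $\MU(1) = \U(1) \times_{\U(1)} \U(1)$ using the squaring map, together with the realization of $\Spin(2, \RR) \cong \U(1)$ as the double cover of $\SO(2, \RR)$. The main obstacle is verifying that the 5d twisting homomorphism $\phi \colon \Spin(2, \RR) \times \Spin(3, \RR) \hookrightarrow \Spin(5, \CC)$ matches the composition of the 8d twisting homomorphism $\det^{1/2} \colon \MU(L) \to \Spin(2, \CC)$ with the standard embedding $\Spin(N_\RR) \hookrightarrow \Spin(5, \CC)$ under dimensional reduction, but this is a routine diagram chase analogous to the one in the proof of Theorem \ref{thm:10dCSreduction}.
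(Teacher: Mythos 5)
Your proposal is correct and follows essentially the same route as the paper: the paper likewise observes that the 8d holomorphic supercharge dimensionally reduces along $\Re\colon N\to N_\RR$ to the rank $2$ partially topological supercharge in 5d and then cites Theorem \ref{thm:8dBFreduction} with $L=\CC$, $N_\RR=\RR^3$, with the $\Spin(2,\RR)\times\Spin(3,\RR)$-equivariance coming from the $\MU(1)\times\Spin(3,\RR)$-equivariance of that theorem via $\MU(1)\cong\Spin(2,\RR)$. Your extra remarks (the invariant-direction count identifying the 5d orbit and the appeal to Proposition \ref{prop:twistdimensionalreduction}) just make explicit what the paper leaves implicit.
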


\subsubsection{Rank 2 Topological Twist}
\label{sect:5drank2topologicaltwist}

Let $L=\CC$ equipped with a Hermitian structure and a complex half density, $N_\RR=\RR^3$ equipped with a Euclidean structure and $N=N_\RR\otimes_\RR\CC$ its complexification. Consider the 8-dimensional spacetime $V_\RR=L\times N$. Under the projection $\Re\colon N\rightarrow N_\RR$ the family $Q_t$ of 8-dimensional square-zero supercharges given by \eqref{eq:8dHodgefamily} dimensionally reduces to a family of 5-dimensional square-zero supercharges, which at $t\neq 0$ are topological at and $t=0$ have 4 invariant directions. Since they admit a compatible $\ZZ$-grading, at $t=0$ we obtain the rank 2 partially topological twist and at $t\neq 0$ we obtain the rank 2 topological twist. Therefore, from Theorem \ref{thm:8dHodgereduction} we obtain the following statement.

\begin{theorem}
The twist of 5d $\cN=2$ super Yang--Mills with respect to the family $Q_t$ of square-zero supercharges is perturbatively equivalent to the generalized Hodge theory $\map(\CC\times \RR^3_{\mathrm{dR}}, B\fg_{\Hod})$. Moreover, this equivalence is $\Spin(3)$-equivariant.
\label{thm:5dHodgetwist}
\end{theorem}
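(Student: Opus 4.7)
The plan is to derive this theorem as a direct consequence of Theorem \ref{thm:8dHodgereduction} applied with the specific choices $L = \CC$ (equipped with its standard Hermitian structure and complex half-density) and $N_\RR = \RR^3$ (equipped with the Euclidean metric and standard spin structure). With these choices, $\dim_\CC(L) = 1$ and $\dim_\RR(N_\RR) = 3$, so $(4+\dim_\CC L)$-dimensional super Yang-Mills on $L \times N_\RR$ is 5-dimensional super Yang-Mills with $16$ supercharges, i.e., 5d $\cN=2$ super Yang-Mills.

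First, I would identify the 5d $\cN=2$ theory with the dimensional reduction of 8d $\cN=1$ super Yang-Mills on $L \times N$, where $N = N_\RR \otimes_\RR \CC = \CC^3$, along the projection $\Re\colon N \to N_\RR$. Both theories arise as dimensional reductions of 10d $\cN=(1,0)$ super Yang-Mills, and Proposition \ref{prop:susydimlred} guarantees that the full 16-supercharge supersymmetry structure propagates through any chain of reductions. Next, I would check that the 5d family $Q_t$ introduced in Section \ref{sect:5drank2topologicaltwist} is precisely the dimensional reduction of the 8d family \eqref{eq:8dHodgefamily}. This is essentially a consequence of the observation in Section \ref{sect:5drank2topologicaltwist}: the image of $\Gamma(Q_t,-)$ in 8d projects under $V^8 \to V^5$ to yield $4$ invariant directions at $t=0$ (the rank $2$ partially topological supercharge) and $5 = \min(8,5)$ invariant directions for $t \neq 0$ (the rank $2$ topological supercharge), matching the two 5d orbits as $t$ varies.

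Having made these identifications, I would invoke Proposition \ref{prop:twistdimensionalreduction}, which states that twisting commutes with dimensional reduction, to conclude that the $Q_t$-twist of the 5d $\cN=2$ theory is the dimensional reduction of the $Q_t$-twist of the 8d $\cN=1$ theory along $\Re\colon N \to N_\RR$. Applying Theorem \ref{thm:8dHodgereduction} with the given choice of $L$ and $N_\RR$, we identify the resulting family of twisted theories with the generalized Hodge theory $\map(\CC \times \RR^3_{\mr{dR}}, B\fg_\Hod)$. The equivariance statement falls out of $\SU(L) \times \Spin(N_\RR) = \SU(1) \times \Spin(3,\RR) = \Spin(3)$.

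The only nontrivial point to verify carefully will be the compatibility of twisting homomorphisms: one must confirm that the $\Spin(3)$-action appearing in the 5d setup (arising from the twisting homomorphism $\Spin(3) \hookrightarrow G_R^{5d} = \Spin(5,\CC)$ used for topological twists by a rank-$2$ supercharge) coincides, under the reduction, with the $\Spin(N_\RR)$-action on the 8d-reduced theory used in Theorem \ref{thm:8dHodgereduction}. This is a diagram-chase, entirely analogous to the one appearing at the end of the proof of Theorem \ref{thm:10dCSreduction}: one checks commutativity of the square expressing the 5d $R$-symmetry action in terms of the composition of the 10d-to-8d and 8d-to-5d reductions versus the direct 10d-to-5d reduction, so I expect no genuine difficulty here.
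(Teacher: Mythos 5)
Your proposal is correct and follows essentially the same route as the paper: the paper's proof consists exactly of observing that the 8d family $Q_t$ of \eqref{eq:8dHodgefamily} dimensionally reduces along $\Re\colon N\to N_\RR$ to the 5d family interpolating between the rank 2 partially topological twist ($t=0$) and the rank 2 topological twist ($t\neq 0$), and then applying Theorem \ref{thm:8dHodgereduction} with $L=\CC$, $N_\RR=\RR^3$, whose $\SU(1)\times\Spin(3)$-equivariance gives the stated $\Spin(3)$-equivariance. The only difference is cosmetic: since Theorem \ref{thm:8dHodgereduction} is already stated for the $(4+\dim L)$-dimensional theory with its twisting homomorphism built in, your extra invocation of Proposition \ref{prop:twistdimensionalreduction} and the final diagram-chase are not separately needed.
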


\begin{corollary}
The rank 2 topological twist of 5d $\cN=2$ super Yang--Mills is perturbatively trivial.
\label{cor:5drank2topologicaltwist}
\end{corollary}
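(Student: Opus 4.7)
The plan is to deduce this corollary directly from the family version established in Theorem \ref{thm:5dHodgetwist}, by specializing the parameter $t$ to a nonzero value and invoking the general acyclicity result for generalized Hodge theory.

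First I would verify that the rank 2 topological twist of 5d $\cN=2$ super Yang--Mills actually appears in the family $Q_t$ constructed just before Theorem \ref{thm:5dHodgetwist}. Recall that this family arises by dimensionally reducing the 8-dimensional family \eqref{eq:8dHodgefamily} along the projection $\Re\colon N\rightarrow N_\RR$ with $L=\CC$ and $N_\RR=\RR^3$. The discussion preceding Theorem \ref{thm:5dHodgetwist} notes that for $t\neq 0$ the dimensionally reduced supercharge is topological, while for $t=0$ it has only $4$ invariant directions (the rank 2 partially topological twist). Since both $Q_{t\neq 0}$ and the chosen rank 2 topological supercharge admit a compatible $\ZZ$-grading coming from $\alpha$, and the orbits of rank 2 topological square-zero supercharges under $\Spin(5,\CC)\times G_R$ form a single orbit (see Figure \ref{fig:superchargeorbits}), the two supercharges differ by a symmetry, so their twists are equivalent.

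Next I would apply Theorem \ref{thm:5dHodgetwist} at a fixed nonzero value of $t$. That theorem gives a $\Spin(3)$-equivariant perturbative equivalence between the $Q_t$-twisted theory and the generalized Hodge theory $\map(\CC\times \RR^3_{\mathrm{dR}}, B\fg_\Hod)$ viewed as a $\CC[t]$-family. Specializing the $\CC[t]$-family at a nonzero scalar commutes with the perturbative equivalence (the equivalence is $\CC[t]$-linear by construction of the Hodge family in Definition \ref{Hodge_family_def}), so the rank 2 topological twist is perturbatively equivalent to the $t\neq 0$ fiber of this generalized Hodge theory.

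Finally, I would invoke Proposition \ref{prop:Hodgetheoryspecialization}, whose second statement asserts exactly that the specialization of a generalized Hodge theory at $t\neq 0$ is perturbatively trivial; the underlying reason, as noted in the proof there, is that $\fg_\Hod|_{t\neq 0}$ has a contractible underlying complex, hence the associated local $L_\infty$ algebra of sections is acyclic and the zero map is a perturbative equivalence to the trivial theory. No step in this argument requires a new computation: everything is assembled from results already proved. The only point requiring care is the commutativity of specialization with perturbative equivalence for the $\CC[t]$-family, which is automatic because the equivalence $\Phi_n$ in Theorem \ref{thm:5dHodgetwist} is constructed $\CC[t]$-linearly from the dimensional reduction of the $\CC[t]$-linear equivalence of Theorem \ref{thm:8dHodgetwist}.
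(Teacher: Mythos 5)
Your proposal is correct and follows essentially the same route as the paper: identify the rank 2 topological twist with the $t\neq 0$ member of the dimensionally reduced family $Q_t$, apply Theorem \ref{thm:5dHodgetwist}, and conclude via the $t\neq 0$ specialization statement of Proposition \ref{prop:Hodgetheoryspecialization}. The extra remarks on the single orbit of rank 2 topological supercharges and on $\CC[t]$-linearity of the equivalence are consistent with (and implicit in) the paper's argument.
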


\subsubsection{Rank 4 Twist}
\label{sect:5drank4twist}

We consider $V_\RR=\RR^5$ equipped with a Euclidean structure and, as before, let $V=V_\RR\otimes_\RR\CC$ be its complexification. $V$ carries a Hermitian structure and a half-density, so by the results of Section \ref{sect:10dholomorphictwist} we obtain a square-zero supercharge $Q$. Under the projection $\Re\colon V\rightarrow V_\RR$ the supercharge $Q$ dimensionally reduces to a rank 4 supercharge in 5 dimensions. Therefore, from Theorem \ref{thm:10dCSreduction} we obtain the following statement.

\begin{theorem}
The rank $4$ twist of 5d $\cN=2$ super Yang--Mills is perturbatively equivalent to the topological Chern--Simons theory with the space of fields $\map(\RR^5_{\mr{dR}}, B\fg)$. Moreover, the equivalence is $\Spin(5, \RR)$-equivariant.
\end{theorem}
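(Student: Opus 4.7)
The plan is to realize the rank $4$ topological twist in five dimensions as the dimensional reduction of the $10$d $\mathcal{N}=(1,0)$ holomorphic twist along the real projection $\Re\colon \CC^5 \to \RR^5$, and then invoke Theorem~\ref{thm:10dCSreduction} with the complex factor $L = \pt$ and the real factor $N_\RR = \RR^5$. First, I will verify that the appropriate $10$d supercharge descends to a rank $4$ supercharge in $5$d. Starting from $V_\RR = \RR^5$ with its Euclidean structure, the complexification $V = V_\RR \otimes_\RR \CC \cong \CC^5$ inherits a Hermitian structure and a complex half-density (the latter chosen once and for all; different choices differ by an overall scalar and do not affect the equivalence class of the twist). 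By the analysis in Section~\ref{sect:10dholomorphictwist}, this data determines a square-zero supercharge $Q \in S_+$ of the $10$d supersymmetry algebra, whose image $\Gamma(Q, -)\colon S_+ \to V$ has image the complex Lagrangian $L^\perp \cong \CC^5$. Under $\Re\colon V \to V_\RR$, this Lagrangian surjects onto all of $V_\RR$, so the dimensionally reduced supercharge is topological, and a direct computation of the rank of the induced map $W^* \to S$ (both four-dimensional symplectic vector spaces) shows it has rank $4$.

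Next, I will apply Theorem~\ref{thm:10dCSreduction} with the degenerate choice $\dim(L) = 0$, i.e.\ $L$ a point and $N_\RR = \RR^{5-0} = \RR^5$. The $10$d $\mathcal{N}=(1,0)$ super Yang--Mills theory on $L \times N = \CC^5$ dimensionally reduces along $\Re\colon N \to N_\RR$ to the $5$d $\mathcal{N}=2$ super Yang--Mills theory on $N_\RR = \RR^5$ (this is the usual presentation of $5$d maximally supersymmetric Yang--Mills), and Theorem~\ref{thm:10dCSreduction} then identifies the $Q$-twist with the generalized Chern--Simons theory on
\[
\mathrm{Map}\bigl(\pt \times (\RR^5)_{\mathrm{dR}}, B\fg\bigr) \;\cong\; \mathrm{Map}\bigl(\RR^5_{\mathrm{dR}}, B\fg\bigr),
\]
which is precisely the $5$d topological Chern--Simons theory as in Definition~\ref{def:sectChernSimons} (only $\ZZ/2\ZZ$-graded, since no compatible $\alpha$ exists in this orbit, consistent with the remark in the classification list).

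For the equivariance statement, the theorem output from Theorem~\ref{thm:10dCSreduction} is $\SU(L)\times \Spin(N_\RR) = \Spin(5,\RR)$-equivariant, which is exactly the residual symmetry group promised by the twisting homomorphism $\phi \colon \Spin(5,\RR) \to \Spin(5,\CC) = G_R$. I will spell out that the embedding $\Spin(5,\RR) \hookrightarrow \Spin(10,\RR) \times_{\Spin(10,\CC)} \SU(5)$ induced by $\Re$ coincides with the diagonal of the geometric rotation action on $\RR^5$ and the twisting homomorphism, so the $\Spin(5,\RR)$-action produced by the twisting data agrees with the one carried by the Chern--Simons theory through the pullback of forms; this is the commutative-diagram check already used in the proof of Theorem~\ref{thm:10dCSreduction}, now applied in the case $\dim(L)=0$.

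There is essentially no independent obstacle here beyond correctly identifying the supercharge orbit with the dimensional reduction of the unique $10$d square-zero orbit; the substantive analytic work has already been done in the proof of Theorem~\ref{thm:10dholomorphictwist} (elimination of auxiliary fields and BRST doublets to reduce to holomorphic Chern--Simons) and packaged into Theorem~\ref{thm:10dCSreduction}. The only mildly subtle point is checking that no compatible $\alpha\colon \U(1)\to G_R$ exists, so that the resulting theory is genuinely only $\ZZ/2\ZZ$-graded; this follows from the fact that the $10$d holomorphic twist itself admits no such $\alpha$, and dimensional reduction cannot create one.
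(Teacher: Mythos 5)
Your proposal is correct and follows the paper's own route: the paper likewise takes $V_\RR = \RR^5$, complexifies to obtain the canonical 10d square-zero supercharge of Section \ref{sect:10dholomorphictwist}, notes that it reduces to a rank 4 supercharge under $\Re\colon V \to V_\RR$, and then applies Theorem \ref{thm:10dCSreduction} in the degenerate case $L = \pt$, $N_\RR = \RR^5$ to get $\map(\RR^5_{\mr{dR}}, B\fg)$ with $\Spin(5,\RR)$-equivariance. Your additional checks (rank computation, absence of a compatible $\alpha$, and the commutative-diagram equivariance argument) are consistent with, and slightly more explicit than, what the paper records.
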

 
\section{Dimension 4}

The odd part of the $4$-dimensional supersymmetry algebra is
\[
\Sigma\cong S_+\otimes W\oplus S_-\otimes W^*
\] 
where $S_+, S_-$ are the 2-dimensional semi-spin representations of $\Spin(4, \CC)\cong \SL(2, \CC) \times \SL(2, \CC)$ and $W$ is a complex vector space. 
The semi-spin representations carry symplectic pairings $S_\pm\otimes S_\pm\rightarrow \CC$.

There are Yang--Mills theories with $\cN= \dim(W) = 1,2,4$ supersymmetry, which we consider separately.

\subsection{\texorpdfstring{$\cN=1$}{N=1} Super Yang--Mills Theory} \label{sect:4d_1_section}
The general setup for $\cN=1$ super Yang--Mills is described in Section \ref{sect:SYM}, which we now recall.  Let $R$ be a complex $\fg$-representation. We consider $\cN=1$ super Yang--Mills theory on $M=\RR^4$ with the Euclidean metric. The theory admits an $R$-symmetry group $G_R=\GL(1, \CC)$ which acts on $W=\CC$ with weight $1$.

\vspace{-10pt}
\paragraph{Fields:} The BRST fields are given by:
\begin{itemize}
\item Gauge field $A \in \Omega^1(M; \fg)$.
\item Gauge fermions $(\lambda_+, \lambda_-) \in \Omega^0(M; \Pi S_+ \otimes \fg \oplus \Pi S_- \otimes \fg)$.
\item Matter bosons $(\Bar{\phi}, \phi) \in \Omega^0(M; R \oplus R^*)$.
\item Matter fermions $(\psi_-,\psi_+) \in \Omega^0(M; \Pi S_+ \otimes R^* \oplus \Pi S_- \otimes R)$.
\item A ghost field $A_0\in \Omega^0(M; \gg)[1]$.
\end{itemize}

The $R$-symmetry acts with weight $\pm 1$ on $\lambda_\pm, \psi_\pm$.

The theory admits a unique twist:
\begin{itemize}
\item Elements $Q\in S_+\oplus S_-$ of rank $(1, 0)$ or rank $(0, 1)$. Such supercharges are automatically square-zero and are holomorphic. 
We have a compatible twisting homomorphism 
\[
\phi\colon \MU(2)\xrightarrow{\det^{1/2}}\U(1)\hookrightarrow G_R 
\]
with the second arrow the natural embedding.
The twist is $\ZZ$-graded with homomorphism $\alpha \colon \U(1) \hookrightarrow G_R$ given by the natural embedding.
\end{itemize}

\subsubsection{Holomorphic Twist}
\label{sect:4d1holomorphictwist}

Choose a complex structure $L$ on $V_\RR$. Under the embedding $\MU(L) = \MU(2) \subset \Spin(V_\RR)$, the semi-spin representations decompose as
\[
S_+ = \det(L)^{-1/2} \oplus \det(L)^{1/2},\qquad S_- = L \otimes \det(L)^{-1/2} .
\]
Consider the twisting homomorphism $\phi\colon\MU(2)\rightarrow G_R$ under which $W = \det(L)^{-1/2}$. Then the spinorial representation becomes
\[\Sigma = (\CC Q\oplus \det(L)^{-1})\oplus L.\]

The embedding $\alpha\colon \U(1)\hookrightarrow G_R$ makes $Q$ weight $1$.

We first decompose the fields of the 4-dimensional $\mc N=1$ theory with respect to $\MU(2)$.
\vspace{-10pt}
\paragraph{Fields:} The BRST fields are given by:
\begin{itemize}
\item Gauge fields $A_{1, 0} \in \Omega^{1, 0}(M; \fg)$, $A_{0, 1}\in\Omega^{0, 1}(M; \fg)$.
\item Gauge fermions $\lambda_0\in\Omega^0(M; \fg)[-1]$, $A_{0, 2}\in\Omega^{0, 2}(M; \fg)[-1]$, $\lambda_{1, 0}\in\Omega^{1, 0}(M; \fg)[1]$.
\item Matter bosons $\phi\in\Omega^0(M; R^*)$, $\gamma_0\in\Omega^0(M; R)$.
\item Matter fermions $\psi_0\in\Omega^0(M; R^*)[1]$, $\beta_{2, 0}\in\Omega^{2, 0}(M; R^*)[1]$, $\gamma_{0, 1}\in\Omega^{0, 1}(M; R)[-1]$.
\item A ghost field $A_0\in \Omega^0(M; \gg)[1]$.
\end{itemize}

Let $\omega\in\Omega^{1, 1}(M)$ be the K\"ahler form. We denote the real volume form on $M$ by
\[\dvol = \frac{\omega^2}{2}.\]

Using Corollary \ref{Kahler_YM_term_cor}, the BV action $S_{\mr{BV}}$ of the $Q$-twisted theory consists of the sum of the following terms:

\begin{align*}
S_{\mr{gauge}} &= \int\dvol\left(-(F_{2, 0}, F_{0, 2}) - \frac{1}{4}(\Lambda F_{1, 1})^2\right) + \frac 12 \left((\lambda_{1,0} \wedge \dd_{A_{1,0}} A_{0,2}) + \omega (\lambda_{1,0} \wedge \dbar_{A_{0,1}} \lambda_0) \right)  \\
S_{\mr{matter}} &= \int \dvol \bigg((\dd_{A_{1,0}}\phi , \ol \dd_{A_{0,1}} \gamma_0) + (\dd_{A_{1,0}}\gamma_0, \ol \dd_{A_{0,1}} \phi) + ([A_{0,2}, \beta_{2,0}], \phi) \bigg) + \beta_{2,0} \dbar_{A_{0,1}} \gamma_{0,1}  + \\
&\qquad + 2 (\omega \wedge ([\lambda_{1,0}, \gamma_{0,1}] ,\gamma_{0})  ) + \omega (\psi_0 \partial_{A_{1,0}} \gamma_{0,1}) \\
S_{\mr{anti}} &= \int \dvol \bigg((\dd_{A_{1, 0}} A_0, A_{1,0}^*) + (\ol\dd_{A_{0, 1}} A_0, A_{0,1}^*) \bigg) +  \\
& \qquad + \dvol \bigg([A_0, \lambda_0] \lambda_0^* + [A_0, A_{0,2}] A_{0,2}^* \bigg) + [\lambda_{1,0}, A_0] \wedge \lambda_{1,0}^* + \dvol \frac{1}{2}[A_0, A_0]A_0^*\\ 
&\qquad  + \dvol \bigg( [\phi, A_0] \phi^* + [\gamma_0, A_0] \gamma_0^*  + [\psi_0, A_0] \psi_0^* + [\beta_{2,0}, A_0] \beta_{2,0}^* \bigg) + [\gamma_{0,1}, A_0]\wedge \gamma_{0,1}^*\\
S^{(1)}_{\mr{gauge}} &=  \int  \dvol - (\lambda_{1,0} A_{1,0}^*)  + \frac 12  \left(F_{0,2}\wedge A_{0,2}^*+ \omega\wedge F_{1,1} \lambda_0^*  \right) \\
S^{(1)}_{\mr{matter}} &=  \int  \dvol \left( \psi_0 \phi^* + \frac 12 ( \ol \dd_{A_{0,1}} \gamma_{0} , \gamma_{0,1}^*) \right)\\
S^{(2)}_{\mr{gauge}} &= -\frac{1}{4}\int \dvol(\lambda_0^*)^2.
\end{align*}

Notice that a priori the theory is only $\MU(2)$-equivariant, but manifestly descends to a $\U(2)$-equivariant theory. 

\begin{theorem}[See also \cite{SWchar}] \label{4d_minimal_twist_thm}
The holomorphic twist of 4d $\cN=1$ super Yang--Mills with matter valued in a $\fg$-representation-representation $R$ is perturbatively equivalent to holomorphic BF theory with the space of fields $T^*[-1]\map(\CC^2, R/\fg)$. 
Moreover, the equivalence is $\U(2)$-equivariant.
\end{theorem}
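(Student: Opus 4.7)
The plan follows the template established in the proof of Theorem \ref{thm:10dholomorphictwist} and its dimensional descendants (see e.g.\ Theorem \ref{thm:6dholomorphictwistgraded}): starting from the explicit $Q$-twisted BV action $S_{\mr{BV}} = S_{\mr{gauge}} + S_{\mr{matter}} + S_{\mr{anti}} + S^{(1)}_{\mr{gauge}} + S^{(1)}_{\mr{matter}} + S^{(2)}_{\mr{gauge}}$ written out in the section, we perform a sequence of elementary simplifications using Propositions \ref{prop:integrateoutfield} and \ref{prop:BRSTdoublet} to reduce to the desired holomorphic BF theory.

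First, I would apply Proposition \ref{prop:integrateoutfield} to integrate out the pair $(\lambda_0, \lambda_0^*)$: the quadratic term $-\tfrac{1}{4}(\lambda_0^*)^2$ from $S^{(2)}_{\mr{gauge}}$ together with the linear source $\tfrac{1}{2}\omega\wedge F_{1,1}\lambda_0^*$ from $S^{(1)}_{\mr{gauge}}$ fits precisely the hypothesis of the proposition (with $S_1$ proportional to $\Lambda F_{1,1}$), and the resulting quadratic-in-$F_{1,1}$ term cancels the $-\tfrac{1}{4}(\Lambda F_{1,1})^2$ appearing in $S_{\mr{gauge}}$. Next, I would apply Proposition \ref{prop:BRSTdoublet} to eliminate two trivial BRST doublets: the pair $(\lambda_{1,0}, A_{1,0})$, exposed by the term $-\int\dvol(\lambda_{1,0}, A_{1,0}^*)$ in $S^{(1)}_{\mr{gauge}}$, and the matter pair $(\psi_0, \phi)$, exposed by the term $\int\dvol\,\psi_0 \phi^*$ in $S^{(1)}_{\mr{matter}}$. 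The $(-1,1)$ field $\phi\in\Omega^0(M;R^*)$ disappears this way, which is essential for matching the target theory's matter content.

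After these reductions, I expect the surviving fields to be exactly $A_0$, $A_{0,1}$, $A_{0,2}$ and $\gamma_0$, $\gamma_{0,1}$, $\beta_{2,0}$, together with their antifields. Re-organizing: $(A_0, A_{0,1}, A_{0,2})$ and their antifields assemble into $\Omega^{0,\bullet}(\CC^2;\fg)[1]$ and its dual $\Omega^{2,\bullet}(\CC^2;\fg^*)$, reproducing the $T^*[-1]\mr{Map}(\CC^2, B\fg)$ sector, while $(\gamma_0, \gamma_{0,1}, \beta_{2,0})$ and their antifields assemble into $\Omega^{0,\bullet}(\CC^2; R)$ and $\Omega^{2,\bullet}(\CC^2; R^*)$. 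I would then match the residual interaction terms: $F_{0,2}\wedge A_{0,2}^*$ gives the holomorphic BF kinetic pairing, $\beta_{2,0}\bar\partial_{A_{0,1}}\gamma_{0,1}$ together with $\omega(\psi_0\partial_{A_{1,0}}\gamma_{0,1})$ (after substitution) yields the matter kinetic term, and the cubic term $([A_{0,2},\beta_{2,0}],\phi)$ must be traced through the elimination of $\phi$ to verify it produces precisely the $L_\infty$-coupling encoded by the Chevalley--Eilenberg differential on $L_{R,\fg} = \fg \oplus R[-1]$. Equivariance upgrades from $\MU(2)$ to $\U(2)$ because the eliminated fields $\phi, \psi_0, \lambda_0, \lambda_{1,0}$ are precisely the ones carrying half-integer weights of $\det(L)^{\pm 1/2}$, leaving only integer-weight representations in the reduced theory.

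The main obstacle will be the careful bookkeeping in the third step: ensuring that the combined interactions after substitution genuinely reproduce the action of $T^*[-1]\mr{Map}(\CC^2, R/\fg)$ with no residual terms. In particular, rescaling the antifields (as in the final paragraph of the proof of Theorem \ref{thm:10dholomorphictwist}) to absorb factors of $\omega$ and the holomorphic volume form will require care, and the term $([A_{0,2},\beta_{2,0}],\phi)$ coupling $\phi$ to a cubic matter--gauge vertex must be shown to be consistent with $\phi$ being set to the value dictated by eliminating its BRST partner $\psi_0$ (which will typically be zero, making the verification straightforward but requiring explicit check).
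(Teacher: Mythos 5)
Your proposal follows the paper's proof essentially verbatim: integrate out $(\lambda_0,\lambda_0^*)$ via Proposition \ref{prop:integrateoutfield}, eliminate the trivial BRST doublets $(\lambda_{1,0},A_{1,0})$ and $(\phi,\psi_0)$ via Proposition \ref{prop:BRSTdoublet}, and identify the surviving fields and action with holomorphic BF theory coupled to the matter ($\beta\gamma$) sector, with your worry about the term $([A_{0,2},\beta_{2,0}],\phi)$ resolving exactly as you anticipate since the doublet elimination sets $\phi=0$. One minor correction: the $\U(2)$-equivariance is not because the eliminated fields carry half-integer $\det(L)$-weights --- all of the twisted fields listed already sit in integer-weight representations, so the $\MU(2)$-action manifestly factors through $\U(2)$ before any elimination, which is the observation the paper makes just above the theorem.
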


\begin{proof}
The proof of this theorem is very similar to the proof of Theorem \ref{thm:10dholomorphictwist}.  
First, we eliminate the fields $\lambda_0$ and $\lambda_0^*$ using Proposition \ref{prop:integrateoutfield}.  
We then observe that the action includes the terms $\int \lambda_{1,0} \wedge A_{1,0}^*$ and  $\int \psi_0\phi^*$. Thus, the two pairs $(\lambda_{1,0}, A_{1,0})$ and $(\phi, \psi_0)$ form trivial BRST doublets, which can be eliminated using Proposition \ref{prop:BRSTdoublet}.

The twisted theory is therefore perturbatively equivalent to the theory with BV action 
\begin{align*}
S_{\rm BV} & = \int \dvol \bigg((A_{0,1}^* \ol \dd_{A_{0,1}} A_{0}) + (\dbar_{A_{0,1}} \gamma_0, \gamma_{0,1}^*)\bigg) + F_{0,2} \wedge A_{0,2}^* + \beta_{2,0} \dbar_{A_{0,1}} \gamma_{0,1} \\ & \qquad + \dvol \bigg([A_0, A_{0,2}] A_{0,2}^* + \frac{1}{2} [A_0,A_0] A_0^* + [A_0, \gamma_0] \gamma_0^* + [A_0, \gamma_{0,1}] \gamma_{0,1}^* + [A_0, \beta_{2,0}^*] \beta_{2,0}\bigg)
\end{align*}

Up to rescaling the antifields, this is the action functional of the required theory, where $A_{0,\bu}, A_{0,\bu}^*$ comprise the fields of holomorphic BF theory with $B_{2,\bu} = A_{0,\bu}^*$, and the remaining fields comprise the fields of the $\beta\gamma$ system with $\beta_{2,1} = \gamma_{0,1}^*$, $\beta_{2,2} = \gamma_0^*$, and $\gamma_{0,2} = \beta_{2,0}^*$.
\end{proof}

\subsection{\texorpdfstring{$\cN=2$}{N=2} Super Yang--Mills Theory} \label{4d_2_section}

The 4d $\cN=2$ super Yang--Mills theory is obtained by a dimensional reduction from the 6d $\cN=(1, 0)$ super Yang--Mills theory with matter valued in a symplectic $\fg$-representation $U$. Let $W$ be a two-dimensional complex vector space. The theory admits the $R$-symmetry group $G_R=\SL(2;\CC)\times \GL(1, \CC)$, where $\GL(1, \CC)$ acts on $W$ with weight $1$.

\vspace{-10pt}
\paragraph{Fields:} The BRST fields are given by:
\begin{itemize}
\item Gauge field $A \in \Omega^1(M; \fg)$.
\item Scalar fields $a, \wt a\in\Omega^0(M;\fg)$.
\item Gauge fermions $(\lambda_+, \lambda_-) \in \Omega^0(M; \Pi S_+ \otimes W\otimes \fg \oplus \Pi S_- \otimes W^*\otimes \fg)$.
\item Matter boson $\phi\in\Omega^0(M; U\otimes W)$.
\item Matter fermions $(\psi_-, \psi_+)\in\Omega^0(M; \Pi S_+\otimes U\oplus \Pi S_-\otimes U)$.
\item A ghost field $c\in \Omega^0(M; \gg)[1]$.
\end{itemize}

The subgroup $\GL(1, \CC)\subset G_R$ has the following action on fields: weight $2$ on $a$, weight $-2$ on $\wt a$ and weight $\pm 1$ on $\lambda_\pm,\psi_\pm$.

If the representation $U$ is $T^*R = R\oplus R^*$, the $R$-symmetry group is enhanced to $G_R=\SL(2)\times \GL(1, \CC)\times \GL(1, \CC)$, where the last $\GL(1, \CC)$ acts with weight $1$ on $R$ and weight $-1$ on $R^*$.

There are three classes of square-zero supercharge in the 4d $\mc N=2$ supersymmetry algebra, distinguished by the ranks of the two summands $(Q_+,Q_-) \in S_+ \otimes W \oplus S_- \otimes W^*$:
\begin{itemize}
 \item Rank $(1,0)$ and $(0,1)$ supercharges automatically square to zero.  
 The corresponding twists are holomorphic.
 Such twists factor through a copy of the $\cN = 1$ supersymmetry algebra. As before, they admit a $\ZZ$-grading and a twisting homomorphism from $\MU(2)$.
 \item Rank $(2,0)$ and $(0,2)$ supercharges also automatically square to zero. The corresponding twists are topological (the \emph{Donaldson twist}). There is a twisting homomorphism from $\MU(2)$ and a compatible homomorphism $\alpha\colon \U(1)\rightarrow G_R$.
 \item Rank $(1,1)$ square-zero supercharges have three invariant directions. There is a twisting homomorphism from $\Spin(2, \RR)\times \Spin(2, \RR)\subset \Spin(4, \RR)$. For a general $U$ there is no compatible homomorphism $\alpha\colon \U(1)\rightarrow G_R$.
\end{itemize}

\subsubsection{Holomorphic Twist}
\label{sect:4d_2_holomorphictwist}

Choose a basis for $W$ given by $\{w_1, w_2\}$, where $(w_1, w_2) = 1$, and for concreteness we take $Q=q_+ \otimes w_1$ for some nonzero vector $q_+ \in S_+$. Denote by $L\subset V$ the image of $\Gamma(Q, -)\colon S_-\rightarrow V$. Under the embedding $\MU(L) \subset \Spin(V_\RR)$, the semi-spin representations decompose as
\[
S_+ = \det(L)^{-1/2} \oplus \det(L)^{1/2},\qquad S_- = L \otimes \det(L)^{-1/2} .
\]

Recall that the $R$-symmetry group is $G_R = \SL(2, \CC) \times \GL(1, \CC)$. For any integer $n\in\ZZ$ consider the homomorphism 
\[
\begin{array}{ccccc}
\alpha_n \colon & \U(1) & \rightarrow & \SL(2, \CC)\times \GL(1, \CC)\\
&  z & \mapsto & \left({\rm diag}(z^{2n}, z^{-2n}) , z^{-2n+1} \right)
\end{array}
\]
under which $w_1$ has weight $1$ and $w_2$ has weight $-4n+1$.

We consider the twisting homomorphism 
\[
\phi \colon \MU(2) \xto{\det^{1/2}} \U(1) \rightarrow G_R
\]
under which we have an $\MU(2)$-identification $W=\det(L)^{-1/2}w_1 \oplus \det(L)^{1/2}w_2$, so that
\[S_+\otimes W\cong \CC Q\oplus \det(L)^{-1}\oplus \CC\oplus \det(L).\]

\vspace{-10pt}
\paragraph{Fields:} The BRST fields are given by:
\begin{itemize}
\item Gauge fields $A_{1, 0}\in\Omega^{1, 0}(M; \gg)$ and $A_{0, 1}\in\Omega^{0, 1}(M; \gg)$.
\item Scalar fields $\Tilde{a} \in \Omega^{0}(M; \gg)[4n-2]$ and $a \in\Omega^{0}(M; \gg)[-4n+2]$
\item Gauge fermions $\chi \in \Omega^0(M ; \gg)[-1]$, $\xi \in \Omega^{2,0}(M ; \gg)[4n-1]$, $B \in \Omega^{0,2}(M ; \gg)[-1]$, $b \in \Omega^{0,1}(M ; \gg)[-4n+1]$, $\rho \in \Omega^{1,0}(M ; \gg)[1]$,  $\Tilde{\chi} \in \Omega^0(M ; \gg)[4n-1]$.
\item Matter bosons $\nu\in\Omega^0(M; U\otimes K_M^{-1/2})[-2n]$, $\phi\in\Omega^0(M; U\otimes K_M^{1/2})[2n]$.
\item Matter fermions $\psi \in \Omega^{0,1} (M ;  U\otimes K_M^{1/2})[2n-1]$, $\varsigma \in \Omega^{2,0} (M ; U\otimes K_M^{-1/2})[-2n+1]$, $\Tilde{\nu} \in \Omega^{0,2}(M ; U\otimes K_M^{-1/2})[-2n+1]$.
\item A ghost field $c\in \Omega^0(M; \gg)[1]$.
\end{itemize}

\begin{theorem}
Fix the homomorphism $\alpha = \alpha_n$. The holomorphic twist of 4d $\cN=2$ super Yang--Mills with matter valued in a symplectic $\fg$-representation $U$ is perturbatively equivalent to the holomorphic BF theory with the space of fields $T^*[-1] \Sect(M, (U\otimes K_M^{1/2}[2n])\ham \fg)$. Moreover, the equivalence is $\MU(2)$-equivariant.
\label{thm:4d2holomorphictwist}
\end{theorem}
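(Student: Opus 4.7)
The plan is to derive this theorem by dimensional reduction from the minimal twist of 5d $\mathcal{N}=1$ super Yang--Mills computed in Theorem \ref{thm:5dminimaltwist}, following the pattern established by Theorems \ref{thm:9dminimaltwist} and \ref{thm:7dminimaltwist}.

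First, I would observe that since both 4d $\mathcal{N}=2$ super Yang--Mills with matter in $U$ and 5d $\mathcal{N}=1$ super Yang--Mills with matter in $U$ arise from 6d $\mathcal{N}=(1,0)$ super Yang--Mills by dimensional reduction, the former is the dimensional reduction of the latter along one real direction. A rank $(1,0)$ square-zero supercharge $Q = q_+ \otimes w_1 \in S_+ \otimes W$ in the 4d $\mathcal{N}=2$ supersymmetry algebra lifts canonically to a rank 1 (minimal) square-zero supercharge in the 5d $\mathcal{N}=1$ algebra under this embedding. By Proposition \ref{prop:twistdimensionalreduction}, the $Q$-twisted 4d theory is perturbatively equivalent to the dimensional reduction of the 5d minimal twist by the lifted supercharge.

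Next, I would apply Theorem \ref{thm:5dminimaltwist} to identify the 5d minimal twist with the generalized Chern--Simons theory on $\CC^2 \times \RR$ with fields $\Sect(\CC^2 \times \RR_{\mathrm{dR}}, (U \otimes K_{\CC^2}^{1/2}) \ham \fg)$, equivariantly for $\MU(2)$. Invoking the evident $\Sect$-analogue of Proposition \ref{CS_to_BF_diml_red_prop}, dimensional reduction along the $\RR_{\mathrm{dR}}$ factor converts this generalized Chern--Simons theory into the generalized BF theory with fields $T^*[-1]\Sect(\CC^2, (U \otimes K_{\CC^2}^{1/2}) \ham \fg)$, with the $\MU(2)$-equivariance carrying over from the 5d statement.

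Finally, I would incorporate the $\ZZ$-grading selected by $\alpha = \alpha_n \colon \U(1) \to G_R$. The 5d minimal twist admits no compatible $\U(1)$ grading, reflecting that the 5d $R$-symmetry is merely $\SL(2,\CC)$, but dimensional reduction from 5d to 4d enlarges $G_R$ by a $\GL(1,\CC)$ factor that rotates the reduced direction. The various $\alpha_n$ differ by composition with this new $\GL(1,\CC)$ and together assign $\ZZ$-degrees to the fields. Comparing with the field content in Section \ref{sect:4d_2_holomorphictwist} (for instance, the matter boson $\phi$ lives in $\Omega^0(M; U \otimes K_M^{1/2})[2n]$ and $\nu$ in $\Omega^0(M; U \otimes K_M^{-1/2})[-2n]$), one sees that the $\alpha_n$-weights on the matter fields are precisely recorded in the formal moduli problem target by the shift $[2n]$, yielding $(U \otimes K_M^{1/2}[2n]) \ham \fg$.

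The main technical obstacle will be the bookkeeping required to confirm compatibility between the $\alpha_n$-grading and the $\Gm$-action defining the generalized Hamiltonian reduction via $\Sect$: one must check that applying the shift $[2n]$ to the matter factor $U \otimes K_M^{1/2}$ before performing the Hamiltonian reduction produces the correct weights on all the derived fields (including the cotangent antifields) and preserves the shifted symplectic pairing that defines $U \ham \fg$. Once this consistency is verified, the equivalence for each $\alpha_n$ follows from the $n=0$ case by a $\CC^\times$-rescaling.
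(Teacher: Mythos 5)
Your proposal is correct and follows essentially the same route as the paper: the paper likewise deduces the $\ZZ/2\ZZ$-graded, $\MU(2)$-equivariant statement by applying the dimensional reduction result (Proposition \ref{CS_to_BF_diml_red_prop}) to the 5d minimal twist of Theorem \ref{thm:5dminimaltwist}, and then verifies the $\alpha_n$-grading by explicitly organizing the surviving fields into Dolbeault complexes and matching their cohomological degrees against those of $T^*[-1]\Sect(M,(U\otimes K_M^{1/2}[2n])\ham\fg)$. The only difference is presentational: the paper records the 6d-to-4d field decomposition and lists the graded field collections outright, where you defer that bookkeeping to a final consistency check.
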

\begin{proof}
4d $\cN=2$ super Yang--Mills theory is obtained by dimensionally reducing 6d $\cN=(1,0)$ super Yang--Mills theory. Under dimensional reduction the 6d fields from Section \ref{sect:6dholomorphictwist} decompose as follows:
\begin{align*}
A_{1, 0}&\rightsquigarrow A_{1, 0}+ \Tilde{a} \\
A_{0, 1}&\rightsquigarrow A_{0, 1}+ a \\
B&\rightsquigarrow B+ b \\
\rho&\rightsquigarrow \rho+ \Tilde{\chi} \\
\psi & \rightsquigarrow \psi+ \varsigma . \\
\end{align*}

The claim about the underlying $\ZZ/2\ZZ$-graded $\MU(2)$-equivariant theories follows by applying dimensional reduction (Proposition \ref{CS_to_BF_diml_red_prop}) to the minimal twist of 5d $\cN=1$ super Yang--Mills theory (Theorem \ref{thm:5dminimaltwist}).

Next, we check that the equivalence respects  the gradings. Indeed, the equivalence given by Theorem \ref{thm:5dminimaltwist} eliminates the fields $A_{1, 0}, \Tilde{a}, \rho, \chi$ and $\Tilde{\chi}$. The rest of the fields organize into the following collections:
\begin{align*}
c+A_{0, 1}+B&\in\Omega^{0, \bullet}(M; \fg)[1] \\
B^*+A^*_{0, 1}+ c^*&\in\Omega^{2, \bullet}(M; \fg) \\
\\
a + b + \xi^*&\in\Omega^{0, \bullet}(M; \fg)[2-4n] \\
\xi + b^* + a^*&\in\Omega^{2, \bullet}(M; \fg)[4n-1] \\
\\
\phi + \psi + \varsigma^*&\in\Omega^{0, \bullet}(M; U\otimes K_M^{1/2})[2n] \\
\varsigma + \psi^* + \phi^* &\in\Omega^{2, \bullet}(M; U\otimes K_M^{-1/2})[1-2n]
\end{align*}

These fields have the same degrees as in the holomorphic BF theory.
\end{proof}

\subsubsection{Rank \texorpdfstring{$(2,0)$}{(2,0)} Topological Twist}
\label{sect:4d2Donaldson}

Next we discuss the case of the topological twist. As in Section \ref{sect:8dtopologicaltwist} it will be useful to consider a family of topological supercharges degenerating to a rank $(1, 0)$ holomorphic supercharge.

Consider the same twisting homomorphism $\phi\colon \MU(2)\rightarrow G_R$ as in Section \ref{sect:4d_2_holomorphictwist} and $\alpha=\alpha_0\colon \U(1)\rightarrow G_R$. With respect to the $\MU(2)$-action we have a decomposition
\[S_+\otimes W\cong \CC Q_0 \oplus \det(L)^{-1}\oplus \CC \Bar{Q}_0\oplus \det(L).\]

Consider a family of supercharges
\begin{equation} 
\label{eq:4dHodgefamily}
Q_t = Q_0 + t \Bar{Q}_0
\end{equation}
where $t \in \CC$. When $t \ne 0$, this supercharge is of rank $(2,0)$, while at $t = 0$ it reduces to the holomorphic supercharge from the previous section.

\begin{remark}
With respect to $\alpha_n\colon \U(1)\rightarrow G_R$ the supercharge $Q_0$ has weight $1$, while $\Bar{Q}_0$ has weight $-4n+1$. So, requiring $Q_t$ to have weight $1$ forces us to choose $n=0$.
\end{remark}

We will use the notation for fields from Section \ref{sect:4d_2_holomorphictwist}. First, we are going to write the functionals \eqref{eq:gaugeI1}, \eqref{eq:gaugeI2}, \eqref{eq:matterI1}, \eqref{eq:matterI2} in terms of these fields.

\begin{prop}
\label{4d_donaldson_susy_prop}
Suppose $Q_t$ is the rank $(2,0)$ supercharge of \ref{eq:4dHodgefamily}.
The $\MU(2)$ decomposition of the functionals $S_{\rm gauge}^{(1)}, S^{(1)}_{\rm matter}, S^{(2)}_{\rm gauge}, S^{(2)}_{\rm matter}$ (see \eqref{eq:gaugeI1},  \eqref{eq:gaugeI2}, \eqref{eq:matterI1}, \eqref{eq:matterI2}) in terms of the fields of 4d $\cN=2$ super Yang--Mills theory are
\begin{align*}
S_{\rm gauge}^{(1)}(Q_t) &= \int \dvol\left(-(\rho, A_{1, 0}^*) - t(b, A_{0, 1}^*) - (\wt \chi + t\chi)\wt a^*\right) \\
& + \int\dvol\left((F_{0, 2}, B^*) + (\ol\dd_{A_{0, 1}}a, b^*) + \frac{1}{2}\Lambda(F_{1, 1}  + [a, \wt a])\chi^* + [\phi, \phi]\xi^* \right) \\
&+ \int\dvol\left( t\Omega^{-1} F_{2, 0}\wedge B^* + (t\dd_{A_{1, 0}} a, \rho^*)\right) \\
S^{(2)}_{\rm gauge} (Q_t) &= \int \dvol \left(t\chi^*\wt \chi^* + \frac{t}{2} \xi^* B^* -\frac{1}{4}(\chi^* + t\wt\chi^*)^2 + t a c^*\right) \\
S_{\rm matter}^{(1)}(Q_t) &= \int \dvol \left((\Tilde{\nu}, \nu^*) + t (\varsigma, \phi^*) + \frac{1}{2} (\dbar_{A_{0,1}} \phi, \psi^*) + [\nu, a]\varsigma^* \right) \\
S_{\rm matter}^{(2)}(Q_t) &= \int \dvol \frac t4  (\psi^*, \psi^*)
\end{align*}
\end{prop}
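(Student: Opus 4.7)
The strategy is to take the explicit formulas \eqref{eq:gaugeI1}, \eqref{eq:gaugeI2}, \eqref{eq:matterI1}, \eqref{eq:matterI2} for the Noether functionals of the supersymmetry action in the form written for the 6d $\mc N=(1,0)$ theory (valid also after dimensional reduction to 4d), and expand them under the $\MU(2)$-decomposition of $V$, $\Sigma$ and $U$ determined by the choice of complex structure $L$ together with the twisting homomorphism $\phi$. Since 4d $\mc N=2$ super Yang--Mills with matter in $U$ is obtained from the 6d $\mc N=(1,0)$ theory with matter in $U$ by dimensionally reducing two of the six directions, the first step is to record how the 6d spinor $\lambda$, scalar $\phi$ and matter spinor $\psi$ split into 4d fields (as listed just before Section \ref{sect:4d_2_holomorphictwist}) and how the 6d gauge connection breaks into $(A_{1,0}, A_{0,1}, a, \tilde a)$. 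Substituting these into the 6d Noether currents computed in Section \ref{sect:6dholomorphictwist} immediately yields the $t=0$ part of the claimed formulas; indeed, these are just the expressions that already appeared in the proof of Theorem \ref{thm:6dholomorphictwist} with additional contributions coming from the 2 reduced directions producing the scalar $a$ and its partner.

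The remaining work is to identify the $t$-dependent terms. For this one decomposes $Q_t = Q_0 + t\bar Q_0$ under $\MU(2) \subset \Spin(4;\CC)$, where $Q_0 \in \det(L)^{-1/2} w_1$ and $\bar Q_0 \in \det(L)^{1/2} w_2$ are the two components of $S_+\otimes W$ paired nontrivially by the spinorial pairing, and carefully evaluates the $\MU(2)$-equivariant maps appearing in \eqref{eq:gaugeI1}--\eqref{eq:matterI2}. Concretely, one needs the formulas for $\Gamma(Q_0,\lambda)$, $\Gamma(\bar Q_0,\lambda)$, $\rho(F_A)Q_0$, $\rho(F_A)\bar Q_0$, and the analogous expressions involving $\d_A\phi$ and $\psi$; each decomposes into a finite sum of terms labelled by bidegree of forms and by the $W$-weight. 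The terms built from $Q_0$ reproduce the $t$-independent part already computed, while the terms built from $\bar Q_0$ give the new contributions such as $t(b, A_{0,1}^*)$, $t\Omega^{-1}F_{2,0}\wedge B^*$, $t(\varsigma, \phi^*)$ and $\frac{t}{4}(\psi^*,\psi^*)$, etc. The cross terms in $\Gamma(Q_t,Q_t) = t\,\Gamma(Q_0,\bar Q_0) + t\,\Gamma(\bar Q_0,Q_0)$ produce the bilinear $S^{(2)}$ terms of order $t$, while the $\bar Q_0$--$\bar Q_0$ pairing yields the $t^2$ contribution responsible for $-\tfrac{1}{4}(\chi^*+t\tilde\chi^*)^2$ after completing the square in $\chi^*$.

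The main bookkeeping obstacle is fixing signs and normalizations consistently across the three conventions in play: the Clifford decomposition of $S_+\otimes W$ under $\MU(2)$, the identification of 6d fields with their 4d avatars, and the K\"ahler identities used to rewrite $(\rho(F_A)Q_t,\lambda^*)$-type expressions in terms of $(F_{0,2}, B^*)$, $(\Lambda F_{1,1})\chi^*$, $\Omega^{-1}F_{2,0}\wedge B^*$ and their commutators with the scalars $a,\tilde a$. None of this is mathematically subtle --- each identity reduces to one of Propositions \ref{prop:cliffordactionproperty1}--\ref{prop:cliffordactionproperty3} together with the split $V = L \oplus \bar L$ into holomorphic and antiholomorphic directions. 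After assembling all contributions and collecting by $\MU(2)$-type, one reads off exactly the four displayed expressions in the proposition, which completes the proof.
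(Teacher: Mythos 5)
Your proposal is correct and follows essentially the same route as the paper, which states this proposition as a direct computation: one expands the universal functionals \eqref{eq:gaugeI1}--\eqref{eq:matterI2} under the $\MU(2)$-decomposition induced by the twisting homomorphism, using the 6d $\to$ 4d dimensional-reduction dictionary for the fields recorded in Section \ref{sect:4d_2_holomorphictwist}, and splits $Q_t = Q_0 + t\bar{Q}_0$ so that the $Q_0$-terms reproduce the holomorphic-twist expressions and the $\bar{Q}_0$- and cross-terms give the $t$-linear and $t^2$ contributions. Your attribution of each class of term (including the $t\,a\,c^*$ and $-\tfrac14(\chi^*+t\wt\chi^*)^2$ pieces of $S^{(2)}_{\rm gauge}$) to the corresponding Clifford/pairing data matches the intended bookkeeping, so there is no gap beyond the routine evaluation the paper itself leaves implicit.
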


\begin{theorem}
The twist of 4d $\cN=2$ super Yang--Mills with respect to the family $Q_t$ of square-zero supercharges is perturbatively equivalent to the holomorphic Hodge theory $\Sect\left(\CC^2, \left((U\otimes K_{\CC^2}^{1/2})\ham \fg\right)_{\Hod}\right)$. Moreover, this equivalence is $\MU(2)$-equivariant.
\label{thm:4dDonaldsontwist}
\end{theorem}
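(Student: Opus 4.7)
The approach will parallel the proof of Theorem \ref{thm:8dHodgetwist} very closely, with the additional bookkeeping needed to handle the matter sector as in Theorem \ref{thm:6dholomorphictwist}. Starting from the presentation of 4d $\cN=2$ super Yang--Mills as the dimensional reduction of 6d $\cN=(1,0)$ super Yang--Mills with matter valued in $U$, I would first assemble the full $Q_t$-twisted BV action $S_{Q_t} = S_{\rm SYM} + S^{(1)}(Q_t) + S^{(2)}(Q_t)$ using the $\MU(2)$ field decomposition of Section \ref{sect:4d_2_holomorphictwist} together with Proposition \ref{4d_donaldson_susy_prop}. At $t=0$ this reduces to the data already analyzed in the proof of Theorem \ref{thm:4d2holomorphictwist}, so the $t$-dependent deformations are the only new objects to track.

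The plan is then to perform the following sequence of perturbative equivalences, in direct analogy with Theorem \ref{thm:8dHodgetwist}. First, observe that the quadruple $(\chi, \chi^*, \wt\chi, \wt\chi^*)$ has the same BV pairings as $(\chi,\, \chi^* - t\wt\chi^*,\, \wt\chi + t\chi,\, \wt\chi^*)$, and under this change of variables the quadratic piece $t\chi^*\wt\chi^* - \tfrac{1}{4}(\chi^* + t\wt\chi^*)^2$ becomes $-\tfrac{1}{4}(\chi^* - t\wt\chi^*)^2$. Proposition \ref{prop:integrateoutfield} then lets me eliminate the pair $(\chi,\, \chi^* - t\wt\chi^*)$. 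Next, I identify three trivial BRST doublets in the sense of Proposition \ref{prop:BRSTdoublet}: the gauge pairs $(\wt\chi + t\chi,\, \wt a)$ and $(\rho,\, A_{1,0})$, and the matter pair $(\wt\nu,\, \nu)$ (once $\nu$ has been shifted by the appropriate $t$-linear combination involving $\varsigma^*$). After eliminating them, the surviving fields are precisely those of the holomorphic BF theory $T^*[-1]\mathrm{Sect}(\CC^2, (U\otimes K^{1/2})\ham \fg)$ produced at $t=0$, together with explicit extra quadratic couplings of the schematic form $-t(b, A_{0,1}^*) + tac^* + \tfrac{t}{2}\Omega^{-1} B^*\wedge B^* + t(\varsigma, \phi^*)$ (plus the trivial $-\tfrac14(\chi^* + t\wt\chi^*)^2$ pieces already accounted for).

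Because all remaining $t$-corrections are strictly quadratic in the fields, the classification reduces to a linear statement: comparing the underlying local $L_\infty$ algebra of the reduced theory with that of the generalized Hodge theory $\Sect(\CC^2, ((U\otimes K^{1/2})\ham\fg)_\Hod)$. Writing out $\fh_\Hod = \CC[t]\otimes (\fh \oplus \fh[1])$ for the dg Lie algebra $\fh = \fg \oplus U[-1] \oplus \fg^*[1]$ that defines $(U\otimes K^{1/2})\ham\fg$, the claim is that the quadratic $t$-coupling assembles exactly into the Hodge differential $t\cdot\mathrm{id}\colon \fh[1]\to \fh$, after organizing the fields into the same two columns exhibited at the end of the proof of Theorem \ref{thm:8dHodgetwist} (now with an extra matter row coming from $(\phi,\psi,\varsigma^*)$ and $(\varsigma,\psi^*,\phi^*)$). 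The $\MU(2)$-equivariance is inherited stepwise, as each elimination is manifestly $\MU(2)$-equivariant.

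The main obstacle will be purely organizational: verifying that the explicit $t$-deformation terms listed above, once assembled across both the gauge and matter sectors, reproduce the Hodge differential on the nose rather than up to an unwanted rescaling or a spurious cubic tail. In particular, the moment-map coupling inside $\fh$ mixes gauge and matter directions, and one must check that the cubic interactions $[\phi,\phi]\xi^*$, $[\nu,a]\varsigma^*$, and the gauge-matter pieces of $S_{\rm SYM}$ correctly reassemble into the Chevalley--Eilenberg differential for $\fh_\Hod$ with the $K^{1/2}$-twist. Once this matching is confirmed, perturbative triviality of the topological (i.e.\ $t\neq 0$) specialization is an immediate consequence of Proposition \ref{prop:Hodgetheoryspecialization}, exactly as in Corollary \ref{cor:8dtopologicaltwist}.
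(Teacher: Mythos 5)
Your proposal follows essentially the same route as the paper's proof: the same change of variables on the $\chi$-quadruple, elimination via Propositions \ref{prop:integrateoutfield} and \ref{prop:BRSTdoublet} of the same three trivial doublets, and reduction of the remaining $t$-quadratic terms to a comparison of the linearized local $L_\infty$ algebra with that of the Hodge family, with equivariance tracked through each step. The only discrepancies are cosmetic bookkeeping: the surviving gauge-sector term is $\frac{t}{2}\xi^* B^*$ rather than the 8d-style $\frac{t}{2}\Omega^{-1}B^*\wedge B^*$ (which would literally vanish on $\CC^2$), the matter term $\frac{t}{4}(\psi^*,\psi^*)$ must be retained since it supplies part of the matter-sector Hodge differential, and no $t$-dependent shift of $\nu$ is needed before eliminating the doublet $\{\nu,\wt\nu\}$.
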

\begin{proof}
The proof proceeds as in the proof of Theorem \ref{thm:10dholomorphictwist} with slight modifications.

Observe that the quadruple of fields $\{\chi^*, \chi, \wt\chi^*, \wt\chi\}$ has the same Poisson brackets as the quadruple $\{\chi^*-t\wt\chi^*, \chi, \wt\chi^*, \wt\chi + t\chi\}$. Therefore, we may eliminate the fields $\chi^*-t\wt\chi^*, \chi$ using Proposition \ref{prop:integrateoutfield}. We then have trivial BRST doublets $\{\wt\chi + t\chi, \wt a\}$, $\{\nu, \wt \nu\}$ and $\{\rho, A_{1, 0}\}$ which may be eliminated using Proposition \ref{prop:BRSTdoublet}. We are left with the action
\begin{equation}\label{eqn:(2,0)}
S_{\mr{BF}} + t \int\dvol\left(-(b, A_{0, 1}^*) + a c^* + \frac{1}{2} \xi^* B^* + (\varsigma, \phi^*) + \frac 14 (\psi^*, \psi^*) \right),
\end{equation}
where $S_{\mr{BF}}$ is the action functional of the holomorphic twist at $t=0$ found in the previous section. 
Since the extra terms are quadratic in the fields, the claim is reduced to a comparison of the underlying local $L_\infty$ algebra of the twisted theory and that of the holomorphic Hodge theory. The former is given by (cf. the proof of Theorem \ref{thm:4d2holomorphictwist})
\[
\xymatrix@R=0.5cm@C=0cm{
\ul{-1} & \ul{0} & \ul{1} & \ul{2} & \ul{3} & \ul{4} \\ 
& \Omega^0(\CC^2; \gg)_c \ar[r] &\Omega^{0,1}(\CC^2; \gg)_{A_{0,1}} \ar[r] &\Omega^{0,2}(\CC^2; \gg)_{B} \\
\Omega^0(\CC^2; \gg)_a  \ar@{-->}^{t\id}[ur] \ar[r] &\Omega^{0,1}(\CC^2; \gg)_{b}  \ar@{-->}^{t\id}[ur] \ar[r] &\Omega^{0,2}(\CC^2; \gg)_{\xi^*}  \ar@{-->}^{t\id}[ur] \\
& &  \Omega^{2,0}(\CC^2 ; \gg)_{B^*}  \ar@{-->}^{t\id}[dr] \ar[r] & \Omega^{2,1}(\CC^2 ; \gg)_{A_{0,1}^*} \ar[r]  \ar@{-->}^{t\id}[dr] & \Omega^{2,2}(\CC^2 ; \gg)_{c^*}  \ar@{-->}^{t\id}[dr] \\
& & & \Omega^{2,0}(\CC^2; \gg)_\xi \ar[r] &\Omega^{2,1}(\CC^2; \gg)_{b^*} \ar[r] &\Omega^{2,2}(\CC^2; \gg)_{a^*}  \\
& & \Omega^0(\CC^2 ; U\otimes K^{1/2})_\phi \ar[r] & \Omega^{0,1}(\CC^2 ; U\otimes K^{1/2})_\psi \ar[r] & \Omega^{0,2}(\CC^2 ; U\otimes K^{1/2})_{\varsigma^*} \\
& \Omega^{2,0}(\CC^2 ; U\otimes K^{-1/2})_{\varsigma} \ar[r]  \ar@{-->}^{t\id}[ur] & \Omega^{2,1}(\CC^2 ; U\otimes K^{-1/2})_{\psi^*} \ar[r]  \ar@{-->}^{t\id}[ur]  & \Omega^{2,2}(\CC^2 ; U\otimes K^{-1/2})_{\phi^*}  \ar@{-->}^{t\id}[ur]
}
\]
which is exactly the local $L_\infty$ algebra of the holomorphic Hodge theory.
\end{proof}

\begin{corollary}
The rank $(2,0)$ topological twist of 4d $\cN=2$ super Yang--Mills is perturbatively trivial.
\label{cor:4dDonaldsontwist}
\end{corollary}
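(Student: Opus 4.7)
The plan is to deduce this corollary as an immediate consequence of Theorem \ref{thm:4dDonaldsontwist} together with the general fact about generalized Hodge theories established in Proposition \ref{prop:Hodgetheoryspecialization}. This is exactly the same template used for Corollary \ref{cor:8dtopologicaltwist} (topological twist in 8d) and Corollary \ref{cor:5drank2topologicaltwist} (rank $2$ topological twist in 5d), so no new mechanism is required.

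First, I would identify the rank $(2,0)$ topological twist with the $t \neq 0$ specialization of the family $Q_t = Q_0 + t \Bar Q_0 \in S_+ \otimes W$ constructed in \eqref{eq:4dHodgefamily}. By construction, $Q_0$ is the rank $(1,0)$ holomorphic supercharge, while for $t \neq 0$ the pair $(Q_0, t\Bar Q_0)$ spans all of $S_+$, so $Q_t$ has rank $(2,0)$ and is a topological supercharge (with the Donaldson twisting homomorphism). Because all nonzero values of $t$ lie in a single $\CC^\times$-orbit (via the rescaling symmetry of the supersymmetry algebra combined with the $\alpha_0$-action), the resulting twisted theory is independent of the chosen nonzero $t$ up to isomorphism, so it suffices to analyze any single $t \neq 0$.

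Next, Theorem \ref{thm:4dDonaldsontwist} asserts a perturbative equivalence of the $\CC[t]$-family of twisted theories with the generalized Hodge theory
\[
\Sect\left(\CC^2, \left((U \otimes K_{\CC^2}^{1/2}) \ham \fg\right)_{\Hod}\right).
\]
Specializing this equivalence at $t \neq 0$ yields a perturbative equivalence between the rank $(2,0)$ twisted theory and the $t \neq 0$ specialization of the generalized Hodge theory. By Proposition \ref{prop:Hodgetheoryspecialization}, the latter is perturbatively trivial, since the underlying local $L_\infty$ algebra $\fg_{\Hod}|_{t\neq 0}$ becomes acyclic (the identity term $t \cdot \id$ from $\fg[1]$ to $\fg$ in the differential of $\fg_\Hod$ contracts the complex). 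Composing these two equivalences gives the claim.

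There is essentially no obstacle here: the real content was in establishing Theorem \ref{thm:4dDonaldsontwist}, and in particular in verifying that the family \eqref{eqn:(2,0)} actually deforms the holomorphic BF theory by precisely the Hodge differential $t\,\id$ on each pair of fields (as recorded in the diagram at the end of the proof of that theorem). Once this is in hand, the corollary is a one-line application of Proposition \ref{prop:Hodgetheoryspecialization}.
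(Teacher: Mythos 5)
Your proposal is correct and is essentially the paper's own argument: identify the rank $(2,0)$ twist with the $t\neq 0$ member of the family $Q_t$ of \eqref{eq:4dHodgefamily}, invoke Theorem \ref{thm:4dDonaldsontwist} to pass to the $t\neq 0$ specialization of the Hodge family, and conclude by Proposition \ref{prop:Hodgetheoryspecialization}. The extra remarks you add (all $t\neq 0$ lying in one orbit, and acyclicity of $\fg_\Hod$ at $t\neq 0$ via the $t\,\id$ term) are consistent with and implicit in the paper's treatment.
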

\begin{proof}
The topological twist of 4d $\cN=2$ super Yang--Mills is the twist by $Q_t$ with $t\neq 0$. By Theorem \ref{thm:4dDonaldsontwist} it is equivalent to the $t\neq 0$ specialization of the holomorphic Hodge theory which by Proposition \ref{prop:Hodgetheoryspecialization} is perturbatively trivial.
\end{proof}

\subsubsection{Rank \texorpdfstring{$(1,1)$}{(1,1)} Twist}
\label{sect:4d_2_11}

We finally consider the twist with respect to a rank $(1,1)$ supercharge.
In this case, the twist is compatible with the group $G = \Spin(2, \RR) \times \Spin(2, \RR)$. 
We denote each factor by $\Spin(2, \RR)_i$, $i=1,2$.  
The twisting homomorphism is
\[
\phi \colon \Spin(2, \RR)_1 \times \Spin(2, \RR)_2 \to G_R=\SL(W)\times \GL(1, \CC),
\]
where on the first factor $\Spin(2, \RR)_1 \cong \U(1)\hookrightarrow \SL(2, \CC)$ is given by the diagonal embedding and on the second factor $\Spin(2, \RR)_2 \hookrightarrow \GL(1, \CC)$ is the obvious inclusion. 
If we denote by $S_{\pm, i}$, $i=1,2$ the semi-spin representations of the factor $\Spin(2, \RR)_i$, we have
\[W \cong S_{+,2}\otimes (S_{+,1}\oplus S_{-,1}).\]
The semi-spin representations of $\Spin(4, \RR)$ decompose with respect to $\Spin(2, \RR)_1 \times \Spin(2, \RR)_2\subset \Spin(4, \RR)$ as
\[S_+\cong S_{+,1} \otimes S_{+,2} \oplus S_{-,1} \otimes S_{-,2},\qquad S_-\cong S_{+,1} \otimes S_{-,2} \oplus S_{-,1} \otimes S_{+,2}.\]

So, both $S_+\otimes W$ and $S_-\otimes W^*$ contain a trivial one-dimensional subspace and hence we obtain a rank $(1, 1)$ square-zero supercharge.

\begin{theorem} \label{thm:4d_11_twist}
The rank $(1,1)$ partially topological twist of 4d $\mc N=2$ super Yang--Mills theory is perturbatively equivalent to the generalized Chern--Simons theory with space of fields $\mr{Sect}(\CC \times \RR^2_{\mr{dR}}, (U \otimes K_\CC^{1/2}) \ham \gg)$.  
Moreover, this equivalence is $\U(1) \times \Spin(2, \RR)$-equivariant.
\end{theorem}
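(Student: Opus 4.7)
The plan is to obtain this twist by dimensional reduction from the 5d minimal twist of Theorem \ref{thm:5dminimaltwist}, rather than by a direct field-theoretic computation analogous to Theorem \ref{thm:4dDonaldsontwist}. The key observation is that 4d $\mc N=2$ super Yang--Mills with matter in $U$ is the dimensional reduction of 5d $\mc N=1$ super Yang--Mills with matter in $U$, and under the embedding $\Spin(4,\CC) \hookrightarrow \Spin(5,\CC)$ a rank $(1,1)$ square-zero supercharge $Q = (Q_+, Q_-) \in S_+ \otimes W \oplus S_- \otimes W^*$ lifts to a rank $1$ (minimal) square-zero supercharge $\widetilde Q$ in the 5d spinor representation $S \otimes W$. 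Both $Q$ and $\widetilde Q$ have three invariant directions, consistent with a reduction of 5d spacetime $\CC_1 \times \CC_2 \times \RR$ to 4d spacetime $\CC_1 \times \RR_2 \times \RR = \CC \times \RR^2$ obtained from $\Re\colon \CC_2 \to \RR_2$ applied to one of the two complex factors; this reduction eliminates exactly one of the two non-invariant (holomorphic) directions of $\widetilde Q$, so the three invariant directions of $\widetilde Q$ (the antiholomorphic directions of $\CC_1$ and $\CC_2$, together with $\RR$) match the three invariant directions of $Q$ (the antiholomorphic direction of $\CC$, together with $\RR^2$) after reduction.

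By Proposition \ref{prop:twistdimensionalreduction}, twisting commutes with dimensional reduction. Therefore the $Q$-twist of 4d $\mc N=2$ SYM on $\CC \times \RR^2$ is perturbatively equivalent to the dimensional reduction along $p_{\dbar}\colon \CC_1 \times \CC_2 \times \RR \to \CC_1 \times \RR_2 \times \RR$ of the $\widetilde Q$-twist of 5d $\mc N=1$ SYM on $\CC_1 \times \CC_2 \times \RR$. By Theorem \ref{thm:5dminimaltwist}, the latter twist is $\MU(2)$-equivariantly equivalent to the generalized Chern--Simons theory $\Sect(\CC^2 \times \RR_{\mr{dR}}, (U \otimes K_{\CC^2}^{1/2}) \ham \fg)$, so the problem reduces to computing the dimensional reduction of this Chern--Simons theory along $p_{\dbar}$.

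For this I would establish a variant of Proposition \ref{CS_diml_red_prop} adapted to the $\Sect$-formulation with Hamiltonian reduction: the dimensional reduction of $\Sect(X' \times V \times M_{\mr{dR}}, (U \otimes K_{X' \times V}^{1/2}) \ham \fg)$ along $p_{\dbar}$ is equivalent, $\SO(V_\RR)$-equivariantly, to $\Sect(X' \times (M \times V_\RR)_{\mr{dR}}, (U \otimes K_{X'}^{1/2}) \ham \fg)$. The proof proceeds exactly as in Proposition \ref{CS_diml_red_prop}, using the decomposition $\Omega^{0,\bu}_V \cong \Sym_\CC^\bu(\underline{V}[-1])$ together with the compatibility of pullback along $\Re\colon V \to V_\RR$ with the projection to the top exterior power, and checking that the half-canonical factor $K_V^{1/2}$ combines with the integration pairing on $V$ in a way that is preserved by restriction to translation-invariant sections. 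Applied with $X' = \CC_1$, $V = \CC_2$, $V_\RR = \RR_2$, and $M = \RR$, this yields the target theory $\Sect(\CC \times \RR^2_{\mr{dR}}, (U \otimes K_\CC^{1/2}) \ham \fg)$.

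Finally, for the stated $\U(1) \times \Spin(2,\RR)$-equivariance: the $\U(1)$ factor is the residual rotational symmetry of the remaining complex curve $\CC = \CC_1$, inherited from the block-diagonal subgroup $\MU(1) \times \MU(1) \subset \MU(2)$ of the 5d equivariance; the $\Spin(2,\RR)$ factor is the rotational symmetry of the topological plane $\RR^2 = \RR_2 \times \RR_{\mathrm{orig}}$, combining the real direction produced by reduction with the original topological direction of 5d. One must check that this combined $\Spin(2,\RR)$-action matches the action induced by the twisting homomorphism $\phi\colon \Spin(2,\RR)_1 \times \Spin(2,\RR)_2 \to G_R = \SL(W)\times \GL(1,\CC)$ chosen in the statement of the theorem, via the block-diagonal inclusion $\Spin(2,\RR) \times \Spin(2,\RR) \hookrightarrow \Spin(4,\RR)$. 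The main obstacle is precisely this last equivariance check --- in particular, verifying that the $\Spin(2,\RR)_2$ acting via the $\GL(1,\CC)$ R-symmetry in 4d matches the rotational symmetry obtained by combining the reduced real direction $\RR_2$ with the original topological $\RR$-direction of 5d. Once this identification is made, the theorem follows.
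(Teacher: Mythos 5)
Your route is genuinely different from the paper's, and most of it works. The paper proves this theorem in one step: a rank $(1,1)$ supercharge lifts to a rank $1$ supercharge one dimension up, and the statement is exactly the case $L=\CC$, $N_\RR=\RR^2$ of Theorem \ref{thm:6dCSreduction}, i.e.\ the 6d holomorphic twist (Theorem \ref{thm:6dholomorphictwist}) reduced along \emph{both} complexified directions $N=\CC^2\to\RR^2$ at once via Proposition \ref{CS_diml_red_prop}. In that reduction the half-canonical factor $K_L^{1/2}$ lives entirely on the surviving curve $L$, so Proposition \ref{CS_diml_red_prop} applies verbatim, and its $\SO(N_\RR)=\SO(2)$-equivariance, matched against the twisting homomorphism by the commuting-square argument of Theorem \ref{thm:10dCSreduction}, delivers the $\Spin(2,\RR)$ part of the statement for free. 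You instead pass through the 5d minimal twist (Theorem \ref{thm:5dminimaltwist}) and then reduce one further holomorphic direction, which requires a new variant of Proposition \ref{CS_diml_red_prop} for $\Sect$-type theories whose half-canonical bundle involves the reduced direction. That lemma is true and provable essentially as you sketch: the complex half-density on $\CC_2$, which is already part of the twisting data, trivializes $K_{\CC_2}^{1/2}$, and no rotation-equivariance of this trivialization is needed since $\SO(V_\RR)$ is trivial when $V_\RR=\RR$. Together with Proposition \ref{prop:twistdimensionalreduction} this gives the perturbative equivalence and the $\U(1)$-equivariance along the surviving curve.

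The gap is the $\Spin(2,\RR)$-equivariance, and it is not merely an unfinished verification: the relevant rotation mixes the direction $\RR_2$ created by your second reduction with the topological direction $\RR$ already present in the 5d theory, so it is not a symmetry of the product decomposition $\CC_2\times\RR$ at the intermediate stage and cannot be obtained by restricting the equivariance of Theorem \ref{thm:5dminimaltwist} (only the trivial $\SO(1)$ survives your second application of the reduction proposition). To establish it you would have to construct the action on the 4d twisted theory directly from the twisting homomorphism $\phi\colon\Spin(2,\RR)_1\times\Spin(2,\RR)_2\to\SL(W)\times\GL(1,\CC)$ and check by hand that your composite quasi-isomorphism intertwines it with the pullback action on forms on $\RR^2_{\mr{dR}}$; in practice this amounts to redoing the one-step reduction from 6d, where the full $\SO(2)$ acting on $N_\RR=\RR^2$ is manifest and the matching with the R-symmetry is the diagonal-versus-complexification diagram of Theorem \ref{thm:10dCSreduction}. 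So either replace your two-step reduction by the single application of Theorem \ref{thm:6dCSreduction} with $L=\CC$, or supply this explicit equivariance check; as written, the final clause of the theorem is not established.
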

\begin{proof}
Any square-zero supercharge of rank $(1, 1)$ lifts to a rank $1$ supercharge in the 5d $\cN=1$ supersymmetry algebra.
The result then follows from Theorem \ref{thm:6dCSreduction} applied to $L = \CC$. 
\end{proof}

If $U = T^* R$ is of cotangent type, we may enhance the $R$-symmetry group to $G_R=\SL(W)\times \GL(1, \CC)\times \GL(1, \CC)$. The last $\GL(1, \CC)$ acts trivially on $W$, by weight $+1$ on $R$ and weight $-1$ on $R^*$.

We have a homomorphism $\alpha\colon \U(1)\rightarrow  \SL(W)\times \GL(1, \CC) \times \GL(1, \CC)$ given by the diagonal embedding into the first and the third components. We also use a new twisting homomorphism $\wt\phi\colon \Spin(2, \RR)_1 \times \Spin(2, \RR)_2\rightarrow G_R$ given by composing $\phi$ with the obvious homomorphism from the first factor $\Spin(2, \RR)_1$ to the last $\GL(1, \CC)$ factor in $G_R$ (cf. the definition of $\alpha$ and $\wt\phi$ in Section \ref{sect:6dholomorphictwist}).

\begin{theorem}\label{thm:4d_11_twistgraded}
The minimal twist of 4d $\cN=2$ super Yang--Mills on $M=\CC \times \RR^2$ with matter valued in the $\fg$-representation $U=T^* R=R\oplus R^*$ is perturbatively equivalent to the generalized BF theory with the space of fields $T^*[-1]\map(\CC \times \RR^2_{\mathrm{dR}}, R/\fg)$. Moreover, the equivalence is $\U(1)\times \Spin(2, \RR)$-equivariant.
\end{theorem}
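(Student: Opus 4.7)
The plan is to reduce the statement to an already-proven dimensional reduction result one dimension higher, following the same strategy that was used to establish Theorem \ref{thm:4d_11_twist}. Specifically, a rank $(1,1)$ square-zero supercharge in the 4d $\cN=2$ supersymmetry algebra lifts to a rank 1 square-zero supercharge in the 5d $\cN=1$ supersymmetry algebra after identifying 4d $\cN=2$ super Yang--Mills with matter in $U=T^*R$ on $\CC\times \RR^2$ as the dimensional reduction along $p_{\dbar}\colon\CC^2\times\RR\to \CC\times\RR\times\RR$ of the 5d $\cN=1$ super Yang--Mills on $\CC^2\times\RR$ with the same matter content.

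First, I would apply Theorem \ref{thm:5dminimaltwistgraded}, which identifies the minimal twist of 5d $\cN=1$ super Yang--Mills on $\CC^2\times\RR$ with matter in $U=T^*R$ as the generalized BF theory $T^*[-1]\map(\CC^2\times\RR_{\mr{dR}},R/\fg)$, with $\U(2)$-equivariance. Next, I would invoke Proposition \ref{prop:twistdimensionalreduction}, which says that the twist commutes with dimensional reduction, so that the rank $(1,1)$ twist of the 4d theory is perturbatively equivalent to the dimensional reduction along $p_{\dbar}$ of the 5d minimal twist. Finally, I would apply Proposition \ref{prop:BFholomorphicreduction} with $L=\CC$, $Y=\mr{pt}$, $M=\RR_{\mr{dR}}$ and $V_\RR=\RR^2$ to identify the dimensional reduction of the generalized BF theory $T^*[-1]\map(\CC^2\times \RR_{\mr{dR}},R/\fg)=T^*[-1]\map(\CC\times\CC\times\RR_{\mr{dR}},R/\fg)$ along $p_{\dbar}$ with $T^*[-1]\map(\CC\times\RR^2_{\mr{dR}},R/\fg)$. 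Composing these three equivalences proves the statement on the level of perturbative equivalence.

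The remaining point, and what I expect to be the main obstacle, is to verify compatibility of the symmetries and grading. Concretely, I need to check that the $\U(1)\times\Spin(2,\RR)$-action obtained from the twisting homomorphism $\wt\phi\colon\Spin(2,\RR)_1\times\Spin(2,\RR)_2\to G_R$ and the compatible $\alpha$ described immediately before the theorem matches the residual $\U(1)\times \SO(\RR^2)$-action on the generalized BF theory after dimensional reduction. This amounts to tracing through the commutative diagram analogous to the one appearing in the proof of Theorem \ref{thm:10dCSreduction}: the first $\U(1)=\Spin(2,\RR)_1$ factor survives as the diagonal of the maximal torus of $\U(2)$ acting on $\CC\times(\CC)_{\mr{red}}$ (the first $\CC$ rotates holomorphically, the second produces the $\RR^2$ by the real projection and the embedding $\SO(2)\hookrightarrow\SU(1,1)$ of \eqref{eq:SOtoSU}), while $\Spin(2,\RR)_2$ survives as the rotation of the already-existing $\RR$ factor upgraded to an $\SO(2)$ acting on the full $\RR^2$. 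The $\ZZ$-grading is inherited directly from the $\U(1)$-weight on the 5d theory provided by $\alpha_0$ in the notation of Section \ref{sect:5dminimaltwist}, which matches the $\alpha$ defined just before the theorem because both are pulled back from the same $\GL(1,\CC)$ factor of the enhanced $R$-symmetry group in the cotangent case $U=T^*R$.
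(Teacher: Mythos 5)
Your reduction strategy (5d graded statement $\Rightarrow$ reduce along the real projection of the second $\CC$-factor) does give the perturbative equivalence and, with the compatibility of the two $\alpha$'s you describe, the $\ZZ$-grading: Theorem \ref{thm:5dminimaltwistgraded}, Proposition \ref{prop:twistdimensionalreduction} (together with Proposition \ref{prop:susydimlred}) and Proposition \ref{prop:BFholomorphicreduction} do combine to identify the rank $(1,1)$ twist with $T^*[-1]\map(\CC\times\RR^2_{\mathrm{dR}},R/\fg)$ as a $\ZZ$-graded theory, up to a parameter slip: in Proposition \ref{prop:BFholomorphicreduction} you must take $V_\RR=\RR$ (so $V=\CC$ is the second factor of $\CC^2$, and $X'=\CC$, $M=\RR$), not $V_\RR=\RR^2$, since only one real direction is being reduced.

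The genuine gap is the $\Spin(2,\RR)_2$-equivariance, i.e.\ the rotations of the topological $\RR^2$. None of the results you cite supply it: Theorem \ref{thm:5dminimaltwistgraded} only records $\U(2)$-equivariance on $\CC^2$, and Proposition \ref{prop:BFholomorphicreduction} with the correct $V_\RR=\RR$ gives only the trivial $\SO(1)$. The action you need is the one defined by the twisting homomorphism of Section \ref{sect:4d_2_11}: it simultaneously rotates the spacetime $\RR^2$ (which mixes the pre-existing 5d topological direction with the direction obtained by reducing $\mathrm{Im}\,z_2$) and, through $\Spin(2,\RR)_2\hookrightarrow\GL(1,\CC)\subset G_R$, the two internal directions reduced from 6d. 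This combined transformation does not preserve the intermediate 5d presentation on $\CC^2\times\RR$ (it neither preserves the complex structure on $\CC^2$ nor is it contained in the 5d Lorentz$\,\times\,R$-symmetry), so it cannot be ``traced through'' the 5d statement as you propose; and your sketch of the group theory is off ($\Spin(2,\RR)_1$ corresponds to $\U(1)\times\{1\}\subset\U(2)$, not the diagonal torus, and with $V_\RR=\RR$ the homomorphism \eqref{eq:SOtoSU} is trivial — there is no ``$\SO(2)\hookrightarrow\SU(1,1)$''). The paper sidesteps this by performing the reduction $6\mathrm{d}\to 4\mathrm{d}$ in a single step: the rank $(1,1)$ supercharge lifts to the 6d holomorphic one, and Theorem \ref{thm:6dCSreduction} with $L=\CC$, $N_\RR=\RR^2$ (this is exactly how Theorem \ref{thm:4d_11_twist} is proved) builds in the $\Spin(N_\RR)=\Spin(2,\RR)$-equivariance through the commutative-diagram argument of Theorem \ref{thm:10dCSreduction}/Proposition \ref{CS_diml_red_prop}, treating the two reduced directions symmetrically; the graded statement then follows by regrading with $\alpha$ and $\wt\phi$ exactly as in Theorems \ref{thm:6dholomorphictwistgraded} and \ref{thm:5dminimaltwistgraded}. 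To rescue your route you would either have to identify your two-step reduction with that one-step reduction and import its $\Spin(2,\RR)$-equivariance (at which point you are reproducing the paper's argument), or verify directly on the 4d BV fields that your composite equivalence intertwines the two $\Spin(2,\RR)_2$-actions, which is an additional computation not contained in your citations.
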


\subsection{\texorpdfstring{$\cN=4$}{N=4} Super Yang--Mills Theory} \label{4d_4_section}

The 4d $\cN=4$ super Yang--Mills theory is obtained by dimensional reduction from the 10d $\cN=(1, 0)$ super Yang--Mills. It admits $R$-symmetry group $G_R=\Spin(6; \CC)\cong \SL(4, \CC)$ under which $W \cong S_+^6$ is the positive six-dimensional semi-spin representation and $W^* \cong S_-^6$. 

Let us decompose the rotation group as $\Spin(4, \RR)\cong \SU(2)_+\times \SU(2)_-$. The classification of orbits of square-zero supercharges in the 4d $\mc N=4$ supersymmetry algebra is the most interesting among the examples we consider in this paper. We have the following classes. 
\begin{itemize}
\item Rank $(1,0)$ and $(0,1)$ supercharges automatically square to zero. The corresponding twists are holomorphic. Such twists factor through a copy of the $\cN = 1$ supersymmetry algebra. As before, they admit a $\ZZ$-grading and a twisting homomorphism from $\MU(2)$.

\item Rank $(2, 0)$ and $(0, 2)$ supercharges automatically square to zero. The corresponding twists are topological. Such twists factor through a copy of the $\cN = 2$ supersymmetry algebra. They admit the following twisting homomorphisms:
 \begin{enumerate}
  \item The half twisting homomorphism $\phi_{\mr{1/2}} \colon \SU(2)_+ \times \SU(2)_- \to \SL(4,\CC)$ given by $(A,B) \mapsto \mr{diag}(A, 1, 1)$. This is the twisting homomorphism that comes from the $\cN=2$ supersymmetry algebra.
  \item The Kapustin--Witten twisting homomorphism $\phi_{\mr{KW}} \colon \SU(2)_+ \times \SU(2)_- \to \SL(4,\CC)$ given by $(A,B) \mapsto \mr{diag}(A, B)$.
  \item The Vafa--Witten twisting homomorphism $\phi_{\mr{VW}} \colon \SU(2)_+ \times \SU(2)_- \to \SL(4, \CC)$ given by $(A,B) \mapsto \mr{diag}(A, A)$.
 \end{enumerate}
 
 \item Rank $(1,1)$ supercharges.  Such supercharges have three invariant directions. They factor through a copy of the $\cN=2$ supersymmetry algebra. As before, they admit a twisting homomorphism from $\Spin(2, \RR)\times \Spin(2, \RR)$ and admit a $\ZZ$-grading.

 \item Rank $(2,1)$ and $(1,2)$ supercharges.  Such supercharges are topological, compatible with a twisting homomorphism from $\MU(2)$ and admit a $\ZZ$-grading.
 
 \item Rank $(2,2)$ square-zero supercharges are topological. They correspond to a choice of an exact sequence
 \[0\rightarrow S_+^*\rightarrow W\rightarrow S_-\rightarrow 0.\]
 Since $S_+$, $S_-$ and $W$ all carry volume forms, the space of such square-zero supercharges is parameterized by a continuous parameter $s\in\CC^\times$ given by the ratio of the isomorphism $\det(W)\cong \det(S_+)^*\otimes \det(S_-)$ induced by $Q$ and the isomorphism induced by the volume forms. These supercharges admit a $\ZZ$-grading and are compatible with the twisting homomorphism $\phi_{\mathrm{KW}}\colon \Spin(4, \RR)\rightarrow G_R$.
\end{itemize}

\subsubsection{Holomorphic Twist}
\label{sect:4d4holomorphictwist}

Let $L$ be a complex structure on $V_\RR$. Consider a twisting homomorphism $\MU(L)\rightarrow G_R=\Spin(6; \CC)$ under which $W$ decomposes as
\[W = L\otimes \det(L)^{-1/2}\oplus \det(L)^{-1/2} w_1\oplus \det(L)^{1/2} w_2.\]
In particular,
\[S_+\otimes W\cong L\oplus L^*\oplus \det(L)\oplus \CC \oplus \CC\oplus \det(L)^{-1}\]
and we consider the supercharge $Q\in S_+\otimes W$ of rank $(1, 0)$ contained in the scalar summand which spans the subspace $\CC w_1\subset W$.

We consider a homomorphism $\alpha\colon \U(1)\rightarrow G_R$ under which $L\otimes \det(L)^{-1/2}\subset W$ has weight $-1$ and $w_1,w_2$ have weight $1$. In particular, $Q$ has $\alpha$-weight $1$.

\vspace{-10pt}
\paragraph{Fields:} In the notation of Section \ref{sect:4d_2_holomorphictwist}, the BRST fields are given by:
\begin{itemize}
\item gauge bosons $A_{1, 0}\in\Omega^{1, 0}(M; \gg)$ and $A_{0, 1}\in\Omega^{0, 1}(M; \gg)$, $\Tilde{a} \in \Omega^{0}(M; \gg)[-2]$ and $a \in\Omega^{0}(M; \gg)[2]$;
\item gauge fermions $\chi \in \Omega^0(M ; \gg)[-1]$, $\xi \in \Omega^{2,0}(M ; \gg)[-1]$, $B \in \Omega^{0,2}(M ; \gg)[-1]$, $b \in \Omega^{0,1}(M ; \gg)[1]$, $\rho \in \Omega^{1,0}(M ; \gg)[1]$,  $\Tilde{\chi} \in \Omega^0(M ; \gg)[-1]$.
\item matter bosons $\nu\in\Omega^{0, 1}(M; \fg)$, $\phi\in\Omega^{1,0}(M; \fg)$.
\item matter fermions $\psi \in \Omega^{1,1} (M;  \fg)[-1]$, $\varsigma \in \Omega^{1,0} (M ; \fg)[1]$, $\Tilde{\nu} \in \Omega^{0,1}(M; \fg)[1]$.
\item a ghost field $c\in \Omega^0(M; \gg)[1]$.
\end{itemize}

Note that the $\MU(2)$-action on fields factors through a $\U(2)$-action.

\begin{theorem}
The holomorphic twist of 4d $\cN=4$ super Yang--Mills on $M=\RR^4$ is perturbatively equivalent to the BF theory with the space of fields $T^*[-1] \map(\CC^2_{\rm Dol} , B \fg)$. Moreover, the equivalence is $\U(2)$-equivariant.
\label{thm:4d4holomorphictwist}
\end{theorem}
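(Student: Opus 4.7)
The plan is to reduce this to the already-established 4d $\cN=2$ holomorphic twist computation. Since 4d $\cN=4$ super Yang--Mills is the specialization of 4d $\cN=2$ super Yang--Mills with matter valued in the self-cotangent symplectic $\fg$-representation $U = T^*\fg = \fg \oplus \fg^*$, and since the rank $(1,0)$ holomorphic $\cN=4$ supercharge restricts to a rank $(1,0)$ holomorphic supercharge in this $\cN=2$ subalgebra, we can compute the twist by invoking Theorem \ref{thm:4d2holomorphictwist}. First I would match the $\cN=4$ grading and twisting homomorphisms given in the statement to those used in Theorem \ref{thm:4d2holomorphictwist} by restricting the $\MU(2) \to \SL(4, \CC)$ homomorphism to the $\SL(2, \CC) \times \GL(1, \CC)$ subgroup coming from the $\cN=2$ $R$-symmetry, and pinning down the integer $n$ in the grading $\alpha_n$ by computing weights on the various subrepresentations of $W$.

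Applying Theorem \ref{thm:4d2holomorphictwist} then produces the twisted theory as $T^*[-1]\mr{Sect}\!\left(\CC^2, (T^*\fg\otimes K^{1/2}_{\CC^2}[2n])\ham \fg\right)$. The key step is identifying this with $T^*[-1]\map(\CC^2_{\mathrm{Dol}}, B\fg)$. To do so I would unpack the underlying local $L_\infty$ algebra of the Hamiltonian reduction, which consists of $\fg$, a copy of $(\fg \oplus \fg^*) \otimes K^{1/2}[2n-1]$, and a copy of $\fg^*$ at a further shift; as sections over $\CC^2$ these naturally reassemble --- using the $\MU(2)$-equivariant isomorphisms between tensor powers of the half-canonical bundle and the exterior powers of the $(1,0)$-form bundle --- into precisely $\Omega^{\bullet,\bullet}(\CC^2; \fg)[1] \oplus \Omega^{\bullet,\bullet}(\CC^2; \fg^*)[2]$ with its Dolbeault differential, matching the BV fields of $T^*[-1]\map(\CC^2_{\mathrm{Dol}}, B\fg)$.

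An alternative route proceeds by dimensional reduction from the 6d $\cN=(1,1)$ holomorphic twist $T^*[-1]\map(\CC^3, \fg/\fg)$ of Theorem \ref{thm:6d11holomorphictwist}, using the identification
\[
T^*[-1]\map(\CC^2 \times \CC_{\mathrm{Dol}}, B\fg) \;\cong\; T^*[-1]\map(\CC^3, \fg/\fg),
\]
which follows from the splitting $\Omega^{\bullet,\bullet}(\CC; \fg) \cong \Omega^{0,\bullet}(\CC; \fg \oplus \fg[-1])$ into pure-$(0,q)$ and $(1,q)$ components, combined with the factorization of the Dolbeault complex on $\CC^2 \times \CC$. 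The hard part, in either approach, will be tracking all shifts and bundle twists correctly, and in particular verifying that the $\MU(2)$-equivariance arising from the $\cN=2$ analysis enhances to $\U(2)$-equivariance: this should reduce to the observation that the enhanced $\cN=4$ twisting homomorphism $\MU(2) \to \SL(4,\CC)$ causes all the half-canonical factors to appear in integer powers, so the metalinear cover becomes immaterial.
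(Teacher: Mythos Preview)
Your approach is essentially the paper's: view $\cN=4$ as $\cN=2$ with $U=T^*\fg$, invoke Theorem~\ref{thm:4d2holomorphictwist} (your weight computation will give $n=0$, i.e.\ $\alpha=\alpha_0$), and then identify the resulting theory with $T^*[-1]\map(\CC^2_{\mathrm{Dol}},B\fg)$.

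One point deserves sharper attention. You propose to ``match'' the $\cN=4$ twisting homomorphism to the $\cN=2$ one by restriction, but in fact they \emph{do not agree}: the $\cN=2$ twisting homomorphism of Section~\ref{sect:4d_2_holomorphictwist} acts trivially on the matter representation $U$, whereas the $\cN=4$ twisting homomorphism $\MU(2)\to\SL(4,\CC)$ acts nontrivially on the $L\otimes\det(L)^{-1/2}$ summand of $W$ and hence on the adjoint matter. The paper makes exactly this point---Theorem~\ref{thm:4d2holomorphictwist} yields the correct $\ZZ$-graded theory, but the equivalence it produces is \emph{not} a priori $\U(2)$-equivariant for the $\cN=4$ twisting data. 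The paper then resolves this not by an abstract bundle argument but by explicitly listing the surviving fields under the $\cN=4$ twisting homomorphism (where, for instance, the matter bosons land in $\Omega^{1,0}$ and $\Omega^{0,1}$ rather than in $\Omega^0\otimes K^{\pm 1/2}$) and observing that they assemble precisely into $\Omega^{\bullet,\bullet}(\CC^2;\fg)[1]\oplus\Omega^{\bullet,\bullet}(\CC^2;\fg)[2]$. Your plan to reassemble via $K^{1/2}$-to-$\Omega^{1,0}$ identifications is the same check in different clothing, but be aware that you must use the $\cN=4$ field decomposition, not the $\cN=2$ one, for this step.
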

\begin{proof}
The 4d $\cN=4$ super Yang--Mills theory viewed as a $\cN=2$ theory is 4d $\cN=2$ Yang--Mills theory with matter valued in $U=T^*\fg$. Under this correspondence $\alpha\colon \U(1)\rightarrow G_R$ defined above coincides with $\alpha_0$ from Section \ref{sect:4d_2_holomorphictwist}. From Theorem \ref{thm:4d2holomorphictwist} we obtain that the twist is given by $T^*[-1] \map(\CC^2 , (T^*\fg\otimes K_{\CC^2}^{1/2})\ham \fg)$ as a $\ZZ$-graded theory.

Note, however, that the twisting homomorphism used in Section \ref{sect:4d_2_holomorphictwist} differs from the twisting homomorphism defined above. In particular, this equivalence is not $\U(2)$-equivariant. In the present case the fields organize into the following collections:
\begin{align*}
c+A_{0, 1}+B&\in\Omega^{0, \bullet}(M; \fg)[1] \\
\phi + \psi + \varsigma^*&\in\Omega^{1, \bullet}(M; \fg) \\
\xi + b^* + a^*&\in\Omega^{2, \bullet}(M; \fg)[-1] \\
\\
a + b + \xi^*&\in\Omega^{0, \bullet}(M; \fg)[2] \\
\varsigma + \psi^* + \phi^* &\in\Omega^{1, \bullet}(M; \fg)[1] \\
B^*+A^*_{0, 1}+ c^*&\in\Omega^{2, \bullet}(M; \fg)
\end{align*}
These are exactly the fields in $T^*[-1] \map(\CC^2_{\rm Dol} , B \fg)$.
\end{proof}

\subsubsection{Rank \texorpdfstring{$(2,0)$}{(2,0)} Topological Twist}
\label{sect:4d4Atwist}

Next we look at the case of the twist by a rank $(2,0)$ supercharge. 
As in Section \ref{sect:4d2Donaldson}, it will be useful to consider a family of topological supercharges which degenerate to the rank $(1,0)$ supercharge we just discussed. 

We use the same twisting homomorphism $\phi \colon \MU(2) \to \Spin(6;\CC)$ and twisting datum $\alpha \colon U(1) \to G_R$ as in Section \ref{sect:4d4holomorphictwist}.
Then, $S_+ \otimes W$ decomposes under $\MU(2)$ as
\[
S_+\otimes W\cong L\oplus L^*\oplus \det(L)\oplus \CC \cdot Q_0 \oplus \CC \cdot \Bar{Q}_0 \oplus \det(L)^{-1}.
\]

Consider the family of supercharges $Q_t = Q_0 + t \Bar{Q}_0 \in S_+\otimes W$ of rank $(1, 0)$ contained in the scalar summands above.

\begin{theorem}
The twist of 4d $\cN=4$ super Yang--Mills with respect to the family $Q_t$ of square-zero supercharges is perturbatively equivalent to the holomorphic Hodge theory $\map(\CC^2_{\rm Dol}, B \fg_{\rm Hod})$. 
Moreover, this equivalence is $\MU(2)$-equivariant.
\label{thm:4d4Atwist}
\end{theorem}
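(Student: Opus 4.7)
The plan is to reduce this statement to a combination of Theorem \ref{thm:4dDonaldsontwist} (which handles the family $Q_t$ for the $\cN=2$ super Yang--Mills theory) and the field-regrouping procedure used in the proof of Theorem \ref{thm:4d4holomorphictwist} (which identifies the relevant Hamiltonian reduction with a Dolbeault model). First, view the 4d $\cN=4$ super Yang--Mills theory as 4d $\cN=2$ super Yang--Mills with matter valued in $U=T^*\fg$, and check that under this identification the family $Q_t$ of the present section pulls back to (a multiple of) the family $Q_t$ of Section \ref{sect:4d2Donaldson}. Applying Theorem \ref{thm:4dDonaldsontwist} then yields a perturbative equivalence between the $Q_t$-twist of $\cN=4$ super Yang--Mills and the holomorphic Hodge theory $\Sect\bigl(\CC^2, ((T^*\fg\otimes K_{\CC^2}^{1/2})\ham \fg)_{\Hod}\bigr)$ as $\ZZ/2\ZZ$-graded theories.

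To upgrade this to the statement of the theorem, I would then replay the field-regrouping from the proof of Theorem \ref{thm:4d4holomorphictwist}. Concretely, after eliminating the auxiliary field $\chi^*-t\wt\chi^*, \chi$ using Proposition \ref{prop:integrateoutfield} and removing the trivial BRST doublets $\{\wt\chi+t\chi, \wt a\}$, $\{\nu,\wt\nu\}$, $\{\rho, A_{1,0}\}$ of Proposition \ref{prop:BRSTdoublet}, the remaining fields of the $\cN=4$ theory (in the notation carried over from Section \ref{sect:4d4holomorphictwist}) can be organized as Dolbeault forms valued in $\fg$ or $\fg[1]$:
\begin{align*}
 c + A_{0,1} + B &\in \Omega^{0,\bu}(\CC^2;\fg)[1], \\
 \phi + \psi + \varsigma^* &\in \Omega^{1,\bu}(\CC^2;\fg), \\
 \xi + b^* + a^* &\in \Omega^{2,\bu}(\CC^2;\fg)[-1],
\end{align*}
together with a parallel copy shifted by $[1]$ coming from $a, \varsigma, B^*$, etc. Under this regrouping, the linear operator $t\cdot \id$ appearing in each of the vertical arrows of the local $L_\infty$ algebra diagram in the proof of Theorem \ref{thm:4dDonaldsontwist} becomes precisely the differential $t \cdot \id$ of the shifted adjoint representation used to build $\fg_{\Hod}$ (see the paragraph preceding Definition \ref{Hodge_family_def}). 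The Dolbeault packaging together with this $t\cdot\id$ is, by construction, the local $L_\infty$ algebra of $\map(\CC^2_{\mr{Dol}}, B\fg_{\Hod})$.

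The verification that the nonlinear interactions match is forced by invariance under the classical master equation, once the underlying linear complexes and the leading quadratic pieces are identified. The main obstacle I expect is bookkeeping of the $\MU(2)$-weights. The $\MU(2)$-equivariant equivalence of Theorem \ref{thm:4dDonaldsontwist} uses the twisting homomorphism of Section \ref{sect:4d_2_holomorphictwist}, which differs from the $\cN=4$ twisting homomorphism of Section \ref{sect:4d4holomorphictwist}; the $\ZZ$-grading $\alpha$ must be chosen as above so that $Q_t$ is homogeneous of weight $1$ in $t$. One has to check that the regrouping of fields into Dolbeault forms is consistent with both the $\MU(2)$-action and the $\alpha$-grading, which is the same verification as in the proof of Theorem \ref{thm:4d4holomorphictwist} (and ultimately boils down to the fact that under this twisting homomorphism $S_+\otimes W$ and $S_-\otimes W^*$ decompose into summands that naturally identify with $\Omega^{p,q}\otimes \fg$-forms with the correct weights). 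Putting these together gives the claimed $\MU(2)$-equivariant perturbative equivalence with $\map(\CC^2_{\mr{Dol}}, B\fg_{\Hod})$.
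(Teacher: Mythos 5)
Your proposal is correct and follows essentially the same route as the paper: view the $\cN=4$ theory as $\cN=2$ with $U=T^*\fg$, invoke Theorem \ref{thm:4dDonaldsontwist} (noting $\alpha$ coincides with $\alpha_0$), and then redo the field regrouping of Theorem \ref{thm:4d4holomorphictwist} to account for the different twisting homomorphism, identifying the result with $\map(\CC^2_{\mr{Dol}}, B\fg_{\Hod})$. The only cosmetic difference is that the paper simply observes that the regrouped fields and their degrees match those of the Dolbeault Hodge theory rather than appealing to the classical master equation to fix the interactions, but this does not change the substance of the argument.
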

\begin{proof}
The 4d $\cN=4$ super Yang--Mills theory viewed as a $\cN=2$ theory is 4d $\cN=2$ Yang--Mills theory with matter valued in $U=T^*\fg$. 
Under this identification $\alpha\colon \U(1)\rightarrow G_R$ defined above coincides with $\alpha_0$ from Section \ref{sect:4d_2_holomorphictwist}. 
From Theorem \ref{thm:4dDonaldsontwist} we obtain that the twist is $\Sect\left(\CC^2, \left((T^*\fg \otimes K_{\CC^2}^{1/2})\ham \fg\right)_{\Hod}\right)$ as a $\ZZ$-graded theory. 

The twisting homomorphism used in Section \ref{sect:4d_2_holomorphictwist} differs from the twisting homomorphism defined above. 
In particular, this equivalence is not $\U(2)$-equivariant. In the present case the fields decompose in the same fashion as in the proof of Theorem \ref{thm:4d4holomorphictwist} which are precisely the fields of $\map(\CC^2_{\rm Dol}, B \fg_{\rm Hod})$.
\end{proof}

\subsubsection{Rank \texorpdfstring{$(1, 1)$}{(1,1)} Partially Topological Twist}
\label{sect:4d4partialtwist}

Next we consider the twist with respect to a rank $(1,1)$ supercharge. As in Section \ref{sect:4d_2_11} the twist is compatible with the group $G = \Spin(2, \RR) \times \Spin(2, \RR)$. However, we will use a different twisting homomorphism. We denote each factor by $\Spin(2, \RR)_i$, $i=1,2$, and by $S_{\pm, i}$ the semi-spin representations of the factor $\Spin(2, \RR)_i$.

The twisting homomorphism is
\[
\phi \colon \Spin(2, \RR)_1 \times \Spin(2, \RR)_2 \to G_R=\SL(W),
\]
under which $W$ splits as
\begin{equation}
W\cong (S_{+, 1}\otimes S_{+, 2}\oplus S_{-, 1}\otimes S_{-, 2})\oplus (S_{+, 1}\otimes S_{-, 2}\oplus S_{-, 1}\otimes S_{+, 2}).
\label{eq:4d411decomposition}
\end{equation}

In this case $S_+\otimes W$ and $S_-\otimes W^*$ have two-dimensional trivial $G$-subrepresentations. Any scalar rank $(1, 1)$ supercharge is square-zero. We choose a homomorphism $\alpha\colon \U(1)\rightarrow G_R$ under which the first two summands in \eqref{eq:4d411decomposition} have weight $1$ and the last two summands have weight $-1$. This makes the chosen rank $(1, 1)$ supercharge have weight $1$.

\begin{theorem}
The rank (1, 1) twist of 4d $\cN=4$ super Yang--Mills on $M=\RR^4$ is perturbatively equivalent to the generalized BF theory with the space of fields $T^*[-1] \map(\CC_{\rm Dol}\times \RR^2_{\mathrm{dR}}, B \fg)$. Moreover, the equivalence is $\Spin(2, \RR)\times \Spin(2, \RR)$-equivariant.
\label{thm:4d4partialtwist}
\end{theorem}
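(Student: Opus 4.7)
The strategy parallels that of Theorems \ref{thm:4d4holomorphictwist} and \ref{thm:4d4Atwist}. We view 4d $\cN=4$ super Yang--Mills as 4d $\cN=2$ super Yang--Mills with hypermultiplet matter valued in $U=T^*\fg$. A rank $(1,1)$ square-zero supercharge of the $\cN=4$ algebra factors through a rank $(1,1)$ supercharge of the embedded $\cN=2$ subalgebra, and the twisting homomorphism $\phi\colon \Spin(2,\RR)_1\times\Spin(2,\RR)_2\to \SL(W)$ restricts to the twisting homomorphism into $\SL(2,\CC)\times \GL(1,\CC)$ used in Theorem \ref{thm:4d_11_twistgraded} (under the identification $\Spin(2,\RR)\cong \U(1)$). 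Applying Theorem \ref{thm:4d_11_twistgraded} with $R=\fg$ therefore yields, as $\ZZ$-graded $\Spin(2,\RR)\times \Spin(2,\RR)$-equivariant theories, a perturbative equivalence with the generalized BF theory $T^*[-1]\map(\CC\times \RR^2_{\rm dR}, \fg/\fg)$.

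It remains to identify this with the target $T^*[-1]\map(\CC_{\rm Dol}\times \RR^2_{\rm dR}, B\fg)$. At the level of underlying local $L_\infty$ algebras, the identification rests on the decomposition
\[
\Omega^{\bu,\bu}(\CC)\otimes\fg \;\cong\; \Omega^{0,\bu}(\CC)\otimes (\fg \oplus \fg\cdot \d z),
\]
which matches $\map(\CC_{\rm Dol}, B\fg)$ with $\map(\CC, \fg/\fg)$ once the invariant pairing on $\fg$ is used to recognize the $\d z$--component (on the antifield side, after applying $T^*[-1]$) as the corresponding copy of $\fg^*$ in the appropriate cohomological degree. Tensoring with $\Omega^\bu(\RR^2)$ extends the identification to the full target. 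Concretely, the BV fields reorganize into six collections directly analogous to those exhibited in the proof of Theorem \ref{thm:4d4holomorphictwist}, now carrying mixed Dolbeault--de Rham structure along $\CC_{\rm Dol}\times \RR^2_{\rm dR}$ rather than $\CC^2_{\rm Dol}$.

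The main obstacle is verifying that this reorganization intertwines the differentials, brackets, and $(-1)$-shifted symplectic pairing on both sides, together with checking that it respects the $\Spin(2,\RR)\times \Spin(2,\RR)$-action. On the target side, both factors act by pullback of forms along $\CC\times \RR^2$, as in Proposition \ref{prop:BFrotationaction}, so compatibility with the action inherited from the $\cN=2$ result follows from tracking how the twisting homomorphism into the enlarged $R$-symmetry group $\SL(W)$ permutes the six collections of fields identified above. This is the same mechanical bookkeeping handled in the proofs of Theorems \ref{thm:4d4holomorphictwist} and \ref{thm:4d4Atwist}.
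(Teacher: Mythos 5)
Your first step (viewing the $\cN=4$ theory as $\cN=2$ with $U=T^*\fg$ and quoting Theorem \ref{thm:4d_11_twistgraded} with $R=\fg$) does give the perturbative equivalence with $T^*[-1]\map(\CC\times\RR^2_{\mr{dR}},\fg/\fg)$ as a $\ZZ$-graded theory, and this matches the paper's first step. The gap is in the equivariance. You claim that the rank $(1,1)$ twisting homomorphism $\phi\colon\Spin(2,\RR)_1\times\Spin(2,\RR)_2\to\SL(W)$ of Section \ref{sect:4d4partialtwist} restricts to the twisting homomorphism of Theorem \ref{thm:4d_11_twistgraded}; it does not, and the paper says so explicitly (``However, we will use a different twisting homomorphism''). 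Under \eqref{eq:4d411decomposition} one has $W\cong(S_{+,1}\oplus S_{-,1})\otimes(S_{+,2}\oplus S_{-,2})$, so the two hypermultiplet/flavor directions of $W$ carry $\Spin(2,\RR)_1$-weights $\pm\tfrac12$; composing the $\cN=2$ homomorphism $\wt\phi$ of Section \ref{sect:4d_2_11} with the embedding $\SL(2,\CC)\times\GL(1,\CC)\hookrightarrow\SL(4,\CC)$ assigns those directions different $\Spin(2,\RR)_1$-weights. These are genuinely different homomorphisms, hence different group actions on the twisted theory, and the theorem is asserted for the action defined by \eqref{eq:4d411decomposition}.

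The same mismatch sinks the final ``reorganization'' step: $T^*[-1]\map(\CC\times\RR^2_{\mr{dR}},\fg/\fg)$ and $T^*[-1]\map(\CC_{\mr{Dol}}\times\RR^2_{\mr{dR}},B\fg)$ are isomorphic as $\ZZ$-graded theories but \emph{not} as $\Spin(2,\RR)_1$-equivariant ones: in the Dolbeault description the $\d z$-components ($\Omega^{1,\bu}$-valued fields) have rotation weight $1$, while the extra copy of $\fg$ in $\fg/\fg$ has weight $0$. So no single $\Spin(2,\RR)\times\Spin(2,\RR)$-action can make all the identifications in your chain equivariant, and the ``mechanical bookkeeping'' you defer to cannot succeed as stated. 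The paper's route around this is different: it observes that \eqref{eq:4d411decomposition} factors as $\Spin(2,\RR)\times\Spin(2,\RR)\subset\MU(2)\to G_R$ through the twisting homomorphism of the holomorphic twist (Section \ref{sect:4d4holomorphictwist}), so the equivariant organization of the BV fields can be imported from Theorem \ref{thm:4d4holomorphictwist}, where they form $T^*[-1]\map(\CC^2_{\mr{Dol}},B\fg)$; restricting that equivariant bundle to $\Spin(2,\RR)\times\Spin(2,\RR)$ gives exactly the underlying equivariant bundle of $T^*[-1]\map(\CC_{\mr{Dol}}\times\RR^2_{\mr{dR}},B\fg)$, while the differential and action functional are identified via the $\cN=2$ computation. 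To repair your argument you would need to replace your equivariance claim with this (or an equivalent) factorization through the $\MU(2)$ homomorphism.
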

\begin{proof}
The 4d $\cN=4$ super Yang--Mills theory viewed as a $\cN=2$ theory is 4d $\cN=2$ Yang--Mills theory with matter valued in $U=T^*\fg$. By Theorem \ref{thm:4d_11_twist} we obtain that the twist is equivalent to $T^*[-1]\map(\CC\times \RR^2_{\mathrm{dR}}, \fg/\fg)$ as a $\ZZ$-graded theory. Let us now analyze the $\Spin(2, \RR)\times \Spin(2, \RR)$-action.

By construction the twisting homomorphism $\Spin(2,\RR)\times \Spin(2, \RR)\rightarrow G_R$ defined by \eqref{eq:4d411decomposition} factors as $\Spin(2, \RR)\times \Spin(2, \RR)\subset \MU(2)\rightarrow G_R$, where the latter map is the twisting homomorphism used in Section \ref{sect:4d4holomorphictwist}. Therefore, we have to restrict the fields used in that section to $\Spin(2, \RR)\times \Spin(2, \RR)$. But by Theorem \ref{thm:4d4holomorphictwist} the fields belong to $T^*[-1]\map(\CC_{\rm Dol}\times \CC_{\rm Dol}, B\fg)$ whose underlying graded $\Spin(2, \RR)\times \Spin(2, \RR)$-equivariant bundle coincides with that of $T^*[-1]\map(\CC_{\rm Dol}\times \RR^2_{\mathrm{dR}}, B\fg)$.
\end{proof}

\subsubsection{Rank \texorpdfstring{$(2,1)$}{(2,1)} Topological Twist}

Next we look at the case of the twist by a rank $(2,1)$ supercharge. 
As in many cases so far, it will be useful to consider a family of supercharges which are generically of rank $(2,1)$. Consider the twisting homomorphism $\phi\colon \Spin(2; \RR)\times \Spin(2; \RR)\to G_R$ from Section \ref{sect:4d4partialtwist}.

Consider a family of scalar square-zero supercharges
\begin{equation}
Q_{s, t} = Q_0 + t Q_1 + sQ_2,
\label{eq:4d412Hodgefamily}
\end{equation}
where $Q_0,Q_1$ are rank $(1, 0)$ supercharges and $Q_2$ is a rank $(0, 1)$ supercharge, so that $Q_0 + t Q_1$ is the family of square-zero supercharges from Section \ref{sect:4d4Atwist} and $Q_0 + Q_2$ is the rank $(1, 1)$ square-zero supercharge from Section \ref{sect:4d4partialtwist}. We will now calculate the twist with respect to the family $Q_0 + tQ_1 + Q_2$.

\begin{theorem}
The twist of the 4d $\cN=4$ super Yang--Mills theory with respect to the family $Q_{1, t}$ of square-zero supercharges of Equation \eqref{eq:4d412Hodgefamily} is perturbatively equivalent to the theory $\map\left(\CC_{\rm Dol} \times \RR^2_{\mr{dR}}, B\fg_{\Hod}\right)$. Moreover, this equivalence is $\spin(2;\RR) \times \spin(2;\RR)$-equivariant.
\end{theorem}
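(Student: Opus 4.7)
The plan is to combine the two deformations whose individual behaviors are already established: the rank $(1,1)$ twist from Theorem \ref{thm:4d4partialtwist} (the $Q_2$ direction) and the rank $(2,0)$ Hodge family from Theorem \ref{thm:4d4Atwist} (the $tQ_1$ direction). At $t=0$ the supercharge $Q_{1,0} = Q_0 + Q_2$ is precisely the rank $(1,1)$ supercharge analyzed in Theorem \ref{thm:4d4partialtwist}, so the twisted theory is perturbatively equivalent to $T^*[-1]\map(\CC_{\rm Dol} \times \RR^2_{\mr{dR}}, B\fg)$. By Proposition \ref{prop:Hodgetheoryspecialization}, this is precisely the $t=0$ fiber of the generalized Hodge theory $\map(\CC_{\rm Dol} \times \RR^2_{\mr{dR}}, B\fg_{\Hod})$, confirming the target description at $t=0$.

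To extend to arbitrary $t$, I would compute the additional terms $t S^{(1)}(Q_1) + 2t S^{(2)}(Q_0, Q_1) + 2t S^{(2)}(Q_2, Q_1) + t^2 S^{(2)}(Q_1, Q_1)$ in the BV action using Proposition \ref{4d_donaldson_susy_prop} (which describes the rank $(2,0)$ Hodge deformation) and the explicit form of $S^{(2)}$ on cross terms. I would then perform the same field eliminations as in the proof of Theorem \ref{thm:4d4partialtwist}: first integrating out the auxiliary scalar via Proposition \ref{prop:integrateoutfield}, then removing the trivial BRST doublets via Proposition \ref{prop:BRSTdoublet}. Since these doublets arise from the $\Spin(2, \RR) \times \Spin(2, \RR)$-decomposition of the fields and the $tQ_1$ deformation introduces only quadratic terms in the fields, the doublet structure should persist for all $t$ modulo a modification of the accompanying polydifferential operators.

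After the eliminations, the surviving $t$-dependent quadratic terms should produce exactly the Hodge differential $t \cdot \id \colon \fg \rightsquigarrow \fg[1]$ on the underlying local $L_\infty$ algebra of the remaining theory, up to rescaling of antifields. This matches the local $L_\infty$ algebra of $\map(\CC_{\rm Dol} \times \RR^2_{\mr{dR}}, B\fg_{\Hod})$, following the diagrammatic analysis used in the proofs of Theorems \ref{thm:8dHodgetwist}, \ref{thm:4dDonaldsontwist}, and \ref{thm:4d4Atwist}. The $\Spin(2, \RR) \times \Spin(2, \RR)$-equivariance follows because the twisting homomorphism from Section \ref{sect:4d4partialtwist} factors through the $\MU(2)$-twisting homomorphism used for the rank $(2, 0)$ twist of Theorem \ref{thm:4d4Atwist}, and the supercharges $Q_0, Q_1, Q_2$ can be chosen compatibly with this restriction.

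The main obstacle is verifying that the field eliminations of Theorem \ref{thm:4d4partialtwist} remain consistent uniformly in $t$: namely, that the pairs identified as trivial BRST doublets at $t=0$ continue to satisfy the hypotheses of Proposition \ref{prop:BRSTdoublet} with respect to the full $t$-deformed action, and that the residual $S^{(2)}(Q_1, Q_1)$ contribution together with the cross terms involving $Q_1$ do not spoil the identification with a purely quadratic deformation. This requires careful bookkeeping of the $t$-dependence of the operators $I_\phi$ and $I_{\psi^*}$ appearing in Proposition \ref{prop:BRSTdoublet}, and in particular ensuring that no new cubic or higher interactions survive that would obstruct the identification with the generalized Hodge family.
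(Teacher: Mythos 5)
Your outline follows the paper's strategy: write the twisted action as the $s$-family (whose $s=1$, $t=0$ value is the rank $(1,1)$ twist of Theorem \ref{thm:4d4partialtwist}) plus the $t$-dependent terms already analyzed for the rank $(2,0)$ supercharge, then run the same eliminations and recognize the Hodge deformation. Two corrections of detail: the eliminations are not those ``of the proof of Theorem \ref{thm:4d4partialtwist}'' (that theorem is proved by dimensional reduction from 5d/6d, with no eliminations); they are the eliminations of Theorem \ref{thm:4dDonaldsontwist} --- integrate out $\chi$ and $\chi^*-t\wt\chi^*$ via Proposition \ref{prop:integrateoutfield}, then remove the doublets $\{\rho, A_{1,0}\}$, $\{\wt\chi+t\chi, \wt a\}$, $\{\nu,\wt\nu\}$ via Proposition \ref{prop:BRSTdoublet} --- performed while keeping track of the $s$-dependence, after which $S(s,0)$ at $s=1$ is identified with the $(1,1)$-twist action.

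The step you flag as the ``main obstacle'' is precisely the one piece of genuine content in the paper's proof, and it has a concrete resolution you did not supply: the cross term $S^{(2)}(tQ_1+sQ_2)$ is proportional to $\wt\chi^*$, so it is annihilated by the very elimination that sets $\wt\chi^*=0$ (the doublet $\{\wt\chi+t\chi,\wt a\}$), and no modification of the $I_\phi$, $I_{\psi^*}$ data or uniformity-in-$t$ argument is needed beyond what was already done in Theorem \ref{thm:4dDonaldsontwist}. After this term drops, the remaining $t$-dependent terms agree verbatim with the $t$-dependent terms of Proposition \ref{4d_donaldson_susy_prop} (now with adjoint-valued matter), i.e.\ with the quadratic terms of Equation \eqref{eqn:(2,0)}, which were already shown to produce the Hodge differential; so there is nothing left to check about higher interactions. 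Your $t=0$ consistency check and the equivariance argument (the $\Spin(2,\RR)\times\Spin(2,\RR)$ twisting homomorphism factoring through the $\MU(2)$ one) are correct and match the paper's setup.
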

\begin{proof}
The idea of the proof will be to eliminate fields as in the proof of Theorem \ref{thm:4dDonaldsontwist}, but keeping track of the $s$-dependence.

Let $S(s,t)$ be the action functional of the 4d $\cN=4$ super Yang--Mills theory twisted by $Q_0 + tQ_1 + sQ_2$. Then we have
\begin{equation}\label{eqn:(2,1)2}
S(s, t) = S(s, 0) + \left(S^{(1)}(tQ_1) + 2 S^{(2)}(Q_0 + tQ_1) + S^{(2)}(tQ_1)\right) + 2S^{(2)}(tQ_1 + sQ_2),
\end{equation}
where $S^{(1)} = S^{(1)}_{\rm gauge}, S^{(2)} = S^{(2)}_{\rm gauge}$ are the functionals from Section \ref{sect:gaugemultipletSUSY} encoding the infinitesimal actions by supersymmetry. Here the middle three terms in the parentheses are the $t$-dependent terms in Proposition \ref{4d_donaldson_susy_prop} and $S^{(2)}(tQ_1 + sQ_2)$ is proportional to $\tilde{\chi}^*$.

We will now repeat the simplifications performed in the proof of Theorem \ref{thm:4dDonaldsontwist}. We first perform a change of variables sending $\{\chi^*, \chi, \wt\chi^*, \wt\chi\}$ to $\{\chi^*-t\wt\chi^*, \chi, \wt\chi^*, \wt\chi + t\chi\}$. Then we perform the following field eliminations:
\begin{itemize} 
\item Using Proposition \ref{prop:integrateoutfield} we set $\chi = 0$ and $\chi^* - t\wt\chi^*$ to a certain value.

\item Using Proposition \ref{prop:BRSTdoublet} we set $\rho^* = 0$, $A_{1, 0} = 0$ and $\rho, A_{1, 0}^*$ to certain values.

\item Using Proposition \ref{prop:BRSTdoublet} we set $\Tilde{a} = 0$, $\tilde{\chi}^*=0$ and $\Tilde{\chi}+t\chi, \Tilde{a}^*$ to certain values.

\item Using Proposition \ref{prop:BRSTdoublet} we set $\Tilde{\nu}^* = 0$, $\nu = 0$ and $\Tilde{\nu},\nu^*$ to certain values.
\end{itemize}

The last term $S^{(2)}(tQ_1 + sQ_2)$ in Equation \eqref{eqn:(2,1)2} is proportional to $\Tilde{\chi}^*$, therefore it disappears upon applying the third step above. 
Applying all the remaining steps, the first term $S(s,0)$ becomes the action functional of the $(1,1)$ twist upon setting $s=1$, see Theorem \ref{thm:4d4partialtwist}. 
Finally, the term in parentheses in Equation \eqref{eqn:(2,1)2} agrees with the $t$-dependent terms in Equation \eqref{eqn:(2,0)}, (where now the fields are adjoint valued). 
We have already seen that the $t$-dependent terms give rise to the desired Hodge family, so this completes the proof. 
\end{proof}

\subsubsection{Rank \texorpdfstring{$(2, 2)$}{(2,2)} Topological Twist}
\label{sect:4dqgltwist}

Consider a rank $(2, 2)$ supercharge $Q\in S_+\otimes W\oplus S_-\otimes W^*$. It defines embeddings $S_+^*\hookrightarrow W$ and $S_-^*\hookrightarrow W^*$ and the square-zero condition is that their images pair to zero. In other words, we have a short exact sequence
\[0\longrightarrow S_+^*\longrightarrow W\longrightarrow S_-\longrightarrow 0.\]
The semi-spin representations $S_\pm$ carry volume forms induced by scalar spinorial pairings. Moreover, $W$ has a canonical volume form since it is the semi-spin representation of $\Spin(6, \CC)\cong \SL(4, \CC)$. Comparing these volume forms under the above exact sequence gives an invariant $s\in\CC^\times$ of a rank $(2, 2)$ square-zero supercharge. Moreover, $\Spin(6, \CC)$-orbits of rank $(2, 2)$ square-zero supercharges are parameterized by this invariant.

Let $N_\RR=\RR^4$ equipped with a Euclidean structure and $N=N_\RR\otimes_\RR\CC$ its complexification. We consider the 8-dimensional Euclidean vector space $N$ which carries a complex half-density. By the results of Section \ref{sect:8dtopologicaltwist} we obtain a family $Q_t$ of 8d square-zero supercharges. Its dimensional reduction to 4 dimensions also gives a family of 4d square-zero supercharges. Then from Theorem \ref{thm:8dHodgereduction} we obtain the following statement.

\begin{theorem}
The twist of 4d $\cN=4$ super Yang--Mills with respect to the family $Q_t$ of square-zero supercharges is perturbatively equivalent to the topological Hodge theory $\map(\RR^4_{\mathrm{dR}}, B\fg_{\Hod})$.
\label{thm:4d422Hodgetwist}
\end{theorem}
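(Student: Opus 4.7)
The plan is to reduce the statement to a direct application of Theorem \ref{thm:8dHodgereduction} in the degenerate case $L = \pt$, $N_\RR = \RR^4$. The first step is to identify the dimensional reduction of 8d $\cN=1$ super Yang--Mills on $N = N_\RR \otimes_\RR \CC$ along $\Re\colon N \to N_\RR$ with 4d $\cN=4$ super Yang--Mills on $N_\RR$. Both theories have 16 supercharges, and matching them amounts to decomposing the 8d spinor $\Sigma^{8\mr d} \cong S_+^{8\mr d}$ as a representation of $\Spin(4, \RR) \times \Spin(4, \RR) \subset \Spin(8, \RR)$, where the second factor acts as part of the $R$-symmetry group $G_R = \Spin(6, \CC)$ of 4d $\cN=4$ SYM via the complexification \eqref{eq:SOtoSU}.

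Next, I would verify that the family $Q_t$ of 8d square-zero supercharges from Section \ref{sect:8dtopologicaltwist} dimensionally reduces to a family of 4d square-zero supercharges of rank $(2,2)$ for $t \neq 0$. This is a computation with the images of $\Gamma(Q_t, -)\colon \Sigma \to V^{8\mr d}$: at $t \neq 0$ the 8d supercharge is topological, so its image projects surjectively onto the 4-dimensional subspace $N_\RR \subset N$, giving 4 invariant directions in 4d, which is the correct count for a rank $(2,2)$ supercharge. One also verifies that the twisting homomorphism $\phi\colon \Spin(4,\RR) \to G_R$ induced by this construction coincides with the Kapustin--Witten twisting homomorphism $\phi_{\mr{KW}}$, using the diagram compatibility argument from the proof of Theorem \ref{thm:10dCSreduction}. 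The value of the invariant $s \in \CC^\times$ from Section \ref{sect:4dqgltwist} realized by this construction is determined by comparing volume forms across \eqref{eq:SOtoSU}.

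The final step is essentially automatic: by Proposition \ref{prop:twistdimensionalreduction}, twisting commutes with dimensional reduction, so the $Q_t$-twist of 4d $\cN=4$ SYM is the dimensional reduction of the $Q_t$-twist of 8d $\cN=1$ SYM, and Theorem \ref{thm:8dHodgereduction} applied with $L = \pt$ identifies this with the topological Hodge theory $\map(\RR^4_{\mr{dR}}, B\fg_\Hod)$.

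The main obstacle is bookkeeping rather than a substantive mathematical difficulty: one must carefully track how the $\ZZ$-gradings, the twisting homomorphism, and the $R$-symmetry decomposition from 8d descend to the 4d data, and confirm that the rank $(2,2)$ supercharge produced this way truly exhausts the generic orbit (i.e.\ corresponds to $s \neq 0$) rather than landing on a degenerate locus. Specializing to $t = 0$ recovers the rank $(2,1)$ topological twist considered earlier, which serves as a useful consistency check on the bookkeeping.
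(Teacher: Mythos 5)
Your main argument is exactly the paper's proof: the paper constructs the 4d family $Q_t$ as the dimensional reduction of the 8d family from Section \ref{sect:8dtopologicaltwist} along $\Re\colon N\to N_\RR$ with $N_\RR=\RR^4$, and then cites Theorem \ref{thm:8dHodgereduction} in the degenerate case $L=\pt$, so your route (including the compatibility of twisting with dimensional reduction) is the same one the paper takes. One correction to your final consistency check: in 4d terms the family is $Q_t=(1+t)Q_+ + (1-t)Q_-$ under the Kapustin--Witten twisting homomorphism, so it has rank $(2,2)$ for all $t\neq\pm 1$ --- in particular at $t=0$, where the $s$-invariant equals $1$ and one recovers the \emph{special} rank $(2,2)$ supercharge (the topological B-twist, whose twist is topological BF theory), not the rank $(2,1)$ twist; the rank only degenerates to $(2,0)$ or $(0,2)$ at $t=\pm 1$. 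The rank $(2,1)$ twist in the paper uses a different family $Q_{s,t}$ and a $\Spin(2,\RR)\times\Spin(2,\RR)$ twisting homomorphism, so it is not a specialization of the family considered here, and likewise the relevant dichotomy among rank $(2,2)$ supercharges is $s=1$ versus $s\neq 1$ (the invariant $s$ never vanishes), not $s\neq 0$.
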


Let us now rewrite the family $Q_t$ in 4-dimensional terms. Consider the Kapustin--Witten twisting homomorphism $\phi_{\mathrm{KW}}\colon \Spin(4, \RR)\subset \Spin(4, \RR)\times \Spin(2, \RR)\subset \Spin(6, \CC)$ under which $W$ decomposes as
\[W\cong S_+\oplus S_-.\]
In this case the spinorial representation becomes
\[\Sigma\cong (S_+\otimes S_+)\oplus (S_+\otimes S_-)\oplus (S_-\otimes S_+)\oplus (S_-\otimes S_-).\]
In particular, there are two scalar supercharges $Q_+$ and $Q_-$ given by the volume forms on $S_+$ and $S_-$ respectively. We may then consider a family of supercharges
\[Q = u Q_+ + i v Q_-\]
for $u, v\in\CC$. If $u,v\neq 0$ we obtain a rank $(2, 2)$ supercharge. In this case the map $Q\colon S_+^*\cong S_+\rightarrow W$ is given by multiplication by $u$ and the map $Q\colon W\rightarrow S_-$ is given by multiplication by $iv$. Therefore, its $s$-invariant is
\[s = -\frac{u^2}{v^2}.\]

\begin{remark}
The family $uQ_+ + iv Q_-$ of square-zero supercharges is the same family studied by Kapustin and Witten, see \cite[Section 3.1]{KapustinWitten}.
\end{remark}

These supercharges are related to $Q_t$ as follows. Let $S_+^8$ be the semi-spin representation of $\Spin(8, \CC)$ and $S_\pm$ the semi-spin representations of $\Spin(4, \CC)$ as before. Under the embedding
\[\Spin(4, \CC)\subset \Spin(4, \CC)\times \Spin(4, \CC)\subset \Spin(8, \CC)\]
$S_+^8$ splits as
\[S_+^8\cong (S_+\otimes S_+)\oplus (S_-\otimes S_-),\]
so $Q_+, Q_-\in S_+^8$. We then obtain
\[Q_0 = Q_+ + Q_-,\qquad \Bar{Q}_0 = Q_+ - Q_-.\]
Therefore, the $s$-invariant of the family $Q_t$ is
\begin{equation}
s = \frac{(1+t)^2}{(1-t)^2}.
\label{eq:4d4sinvariant}
\end{equation}

\begin{corollary}
The rank $(2, 2)$ twist of 4d $\cN=4$ super Yang--Mills for $s=1$ is perturbatively equivalent to the topological BF theory $T^*[-1]\map(\RR^4_{\mathrm{dR}}, B\fg)$.
\end{corollary}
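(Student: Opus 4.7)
The plan is to deduce this corollary directly from the preceding Theorem \ref{thm:4d422Hodgetwist} together with the general specialization statement for generalized Hodge theories, Proposition \ref{prop:Hodgetheoryspecialization}. The key observation is that the parameter $t$ in the family $Q_t$ of 4d rank $(2,2)$ supercharges (arising as the dimensional reduction of the 8d Hodge family \eqref{eq:8dHodgefamily}) is related to the rank $(2,2)$ invariant $s\in\CC^\times$ by the formula \eqref{eq:4d4sinvariant}, namely $s = (1+t)^2/(1-t)^2$. In particular, setting $s=1$ corresponds to choosing $t=0$, in which case the supercharge $Q_t = Q_0$ is of rank $(1,0)$ in 8d (not of rank $(2,2)$ in 4d), but this is compatible: $s=1$ is precisely the boundary value of the $\CC^\times$-family of rank $(2,2)$ supercharges along which the $s$-invariant degenerates.

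First, I would apply Theorem \ref{thm:4d422Hodgetwist} to identify the $Q_t$-twisted theory with the topological Hodge theory $\map(\RR^4_{\mathrm{dR}}, B\fg_\Hod)$ as a $\CC[t]$-family of classical BV theories. Next, I would specialize this family at $t=0$ and invoke Proposition \ref{prop:Hodgetheoryspecialization}, which states that the $t=0$ fiber of the generalized Hodge theory associated to $\fg$ on $X\times Y_{\mr{Dol}}\times M_{\mr{dR}}$ is precisely the generalized BF theory $T^*[-1]\map(X\times Y_{\mr{Dol}}\times M_{\mr{dR}}, B\fg)$. Applied with $X=Y=\pt$ and $M=\RR^4$, this yields the desired identification with $T^*[-1]\map(\RR^4_{\mr{dR}}, B\fg)$.

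There is essentially no new computation required; the only point to be careful about is the matching of the invariant $s$ under the parameter change. The main subtlety (not really an obstacle) is verifying that the one-parameter family of supercharges constructed via the 8d-to-4d dimensional reduction of $Q_t = Q_0 + t\Bar Q_0$ indeed hits the rank $(2,2)$ supercharge with invariant $s=1$ in the limit $t\to 0$, and that the whole family \eqref{eq:4d4sinvariant} stays inside the $\Spin(4,\RR)$-orbit corresponding to a fixed twisting homomorphism $\phi_{\mr{KW}}$. This follows from the decomposition $S_+^8 \cong (S_+\otimes S_+)\oplus (S_-\otimes S_-)$ under $\Spin(4,\CC)\subset \Spin(8,\CC)$ already spelled out in the excerpt just before the corollary, and from the identifications $Q_0 = Q_+ + Q_-$, $\Bar Q_0 = Q_+ - Q_-$ there. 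Thus the corollary follows immediately by combining Theorem \ref{thm:4d422Hodgetwist}, Proposition \ref{prop:Hodgetheoryspecialization} and the computation \eqref{eq:4d4sinvariant}.
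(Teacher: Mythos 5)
Your overall route is exactly the paper's: identify the twist along the family $Q_t$ with the topological Hodge theory via Theorem \ref{thm:4d422Hodgetwist}, specialize at $t=0$, and conclude with Proposition \ref{prop:Hodgetheoryspecialization} together with the $s$-invariant computation \eqref{eq:4d4sinvariant}. However, the parenthetical claim that $Q_0$ is ``not of rank $(2,2)$ in 4d'', and the accompanying picture of $s=1$ as a ``boundary value'' of the $\CC^\times$-family of rank $(2,2)$ supercharges at which the $s$-invariant ``degenerates'', are wrong --- and if they were right, your argument would not prove the corollary. The statement concerns the twist by a supercharge lying in the rank $(2,2)$, $s=1$ orbit; a perturbative description of twists by nearby supercharges does not transfer to a supercharge outside that orbit by any limiting argument. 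The only reason the $t=0$ specialization computes the twist in question is that $Q_0$ itself lies in that orbit, which is precisely the one-line content of the paper's proof.

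Concretely: under the Kapustin--Witten twisting homomorphism one has $W\cong S_+\oplus S_-$, and $Q_0 = Q_+ + Q_-$ has nonzero components in both $S_+\otimes S_+\subset S_+\otimes W$ and $S_-\otimes S_-\subset S_-\otimes W^*$; each component is a symplectic volume form, so the induced maps $S_+^*\rightarrow W$ and $W\rightarrow S_-$ both have rank $2$. In the parameterization $Q = uQ_+ + ivQ_-$ this is $u=1$, $iv=1$, giving $s=-u^2/v^2=1$, consistent with setting $t=0$ in \eqref{eq:4d4sinvariant}. The point you conflate is that rank is measured relative to the supersymmetry algebra at hand: $Q_0$ is a rank $(1,0)$ pure supercharge for the 8-dimensional algebra, but as an element of the 4d $\cN=4$ algebra it is honestly of rank $(2,2)$. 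The degenerate members of the family $Q_t=Q_0+t\Bar{Q}_0$ are $t=\pm 1$, where the supercharge becomes $2Q_+$ or $2Q_-$ (rank $(2,0)$ or $(0,2)$, i.e.\ $s\to\infty$ or $s\to 0$); $s=1$ is an ordinary interior point of $\CC^\times$ and nothing degenerates at $t=0$. Your closing sentences in fact gesture at the correct statement, contradicting the earlier parenthetical; with that correction your argument coincides with the paper's proof.
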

\begin{proof}
The supercharge $Q_0$ has $s$-invariant $s=1$. By Theorem \ref{thm:4d422Hodgetwist} the twist by $Q_0$ is perturbatively equivalent to the specialization of the theory $\map(\RR^4_{\mathrm{dR}}, B\fg_{\Hod})$ at $t=0$. By Proposition \ref{prop:Hodgetheoryspecialization} the latter is isomorphic to the topological BF theory $T^*[-1]\map(\RR^4_{\mathrm{dR}}, B\fg)$.
\end{proof}

\begin{corollary}
The rank $(2, 2)$ twist of 4d $\cN=4$ super Yang--Mills for $s\neq 1$ is perturbatively trivial.
\end{corollary}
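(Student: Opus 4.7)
The strategy is to realize any rank $(2,2)$ square-zero supercharge with invariant $s \neq 1$ as a member of the family $Q_t$ from \eqref{eq:4d4sinvariant}, and then invoke the preceding theorem together with the acyclicity of the Hodge family away from the origin.

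First I would record the $R$-symmetry reduction: the $s$-invariant is a complete invariant for $\Spin(6,\CC)$-orbits of rank $(2,2)$ square-zero supercharges, and twists by supercharges in the same $G_R$-orbit are canonically perturbatively equivalent (via the $G_R$-action on the theory). Hence it suffices to exhibit, for each $s \in \CC^\times$ with $s \neq 1$, a single square-zero supercharge of invariant $s$ whose twist is perturbatively trivial.

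Next I would solve for $t$ in formula \eqref{eq:4d4sinvariant}: given $s \neq 1$, choose a square root $\sqrt{s}$ and set $t = (\sqrt{s}-1)/(\sqrt{s}+1)$. Since $s \neq 1$ one has $t \neq 0$, and $t \neq \pm 1$ (so the supercharge is genuinely of rank $(2,2)$). By construction the member $Q_t$ of the family \eqref{eq:4d412Hodgefamily}, viewed through the Kapustin--Witten twisting homomorphism, has $s$-invariant equal to $s$.

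Now apply Theorem \ref{thm:4d422Hodgetwist}: the $Q_t$-twist is perturbatively equivalent to the specialization at $t$ of the generalized Hodge theory $\map(\RR^4_{\mathrm{dR}}, B\fg_{\Hod})$. By the second assertion of Proposition \ref{prop:Hodgetheoryspecialization}, this specialization at $t \neq 0$ is perturbatively trivial (the underlying $L_\infty$ algebra $\fg_\Hod$ becomes acyclic at $t \neq 0$). Combined with the $G_R$-equivariance observation from the first step, this gives the desired conclusion for arbitrary $s \neq 1$. I do not anticipate any genuine obstacle; the argument is a bookkeeping exercise that packages the previously established Hodge family together with the classification of rank $(2,2)$ orbits.
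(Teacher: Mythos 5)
Your proposal is correct and takes essentially the same route as the paper: given $s\neq 1$ one solves $s=(1+t)^2/(1-t)^2$ for some $t\neq 0$ and then applies Theorem \ref{thm:4d422Hodgetwist} together with the second part of Proposition \ref{prop:Hodgetheoryspecialization}. Your extra steps — the explicit choice $t=(\sqrt{s}-1)/(\sqrt{s}+1)$ with the checks $t\neq 0,\pm 1$, and the remark that the $s$-invariant classifies $\Spin(6,\CC)$-orbits so a single representative suffices — are just bookkeeping that the paper leaves implicit.
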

\begin{proof}
For any $s\neq 1$ we may find $t\neq 0$ solving \eqref{eq:4d4sinvariant}. But by Proposition \ref{prop:Hodgetheoryspecialization} the specialization of the topological Hodge theory $\map(\RR^4_{\mathrm{dR}}, B\fg_{\Hod})$ at $t\neq 0$ is perturbatively trivial.
\end{proof}

\section{Dimension 3}
The odd part of the $3$-dimensional supersymmetry algebra is
\[
\Sigma \iso S \otimes W, 
\]
where $S$ is the two-dimensional complex spin representation of $\Spin(3;\CC)$, and where $W$ is a vector space equipped with a bilinear pairing.  

The maximal supersymmetric gauge theory has $\mc N= \dim(W) = 8$.  
There are $\mc N=4$ super Yang--Mills theories for every choice $U$ of a complex symplectic representation of the gauge group, and $\mc N=2$ super Yang--Mills theories for every choice $R$ of arbitrary complex representation of the gauge group. 
Finally, there are $\cN=1$ super Yang--Mills theories in 3 dimensions, but there are no square-zero supercharges with that amount of supersymmetry. In dimension $3$, much like we saw in dimensions $5$ and $7$, all twisted theories can be obtained by dimensional reduction from theories one dimension higher.

\subsection{\texorpdfstring{$\cN = 2$}{N=2} Super Yang--Mills Theory}
\label{sect:3d_2_section}

Fix a gauge group $G$ and a representation $R$.  The 3d $\mc N=2$ super Yang--Mills theory arises by dimensional reduction from 4d $\mc N=1$ super Yang--Mills theory with an $R$-valued chiral multiplet.  In this case, $W = \CC^2$ equipped with a nondegenerate symmetric bilinear pairing. The R-symmetry group is $G_R=\CC^\times$, acting on $W$ with weights $1$ and $-1$.

This theory admits a unique twist up to equivalence:
\begin{itemize}
 \item A square zero supercharge $Q \ne 0 \in \Sigma$ has two invariant directions.  There is a twisting homomorphism $\phi = \MU(1)\cong \U(1)\hookrightarrow G_R$, so the twisted theory carries a $U(1)$-action. The twist is $\ZZ$-graded.
\end{itemize}

\subsubsection{Minimal Twist}
\label{sect:3dminimaltwist}
A square-zero supercharge $Q$ has rank 1, i.e. $Q = q \otimes w$ for some $w \in W$. We use the twisting homomorphism $\phi\colon \MU(1)\xrightarrow{\det^{1/2}} \U(1)\hookrightarrow G_R$.

As in Section \ref{sect:9dminimaltwist}, the specification of $q$ is equivalent to the choice of a one-dimensional subspace $N_\RR\subset V_\RR$ and a complex structure on $V_\RR/N_\RR$ together with a complex half-density.  Note that in one dimension the choice of a complex half-density is equivalent to a choice of spin structure.

\begin{theorem} \label{3d_minimal_twist_thm}
The minimal twist of the 3d $\cN=2$ super Yang--Mills theory with Lie algebra $\fg$ with matter valued in a $\fg$-representation $R$ is perturbatively equivalent to the generalized BF theory with the space of fields $T^*[-1]\mr{Map}(\CC \times \RR_{\mr{dR}}, R/\gg)$. Moreover, this equivalence is $\mr U(1)$-equivariant.
\end{theorem}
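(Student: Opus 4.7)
The plan is to follow the dimensional reduction strategy already used several times in the paper (for instance in Theorems \ref{thm:9dminimaltwist}, \ref{thm:7dminimaltwist}, and \ref{thm:5dminimaltwist}), using the fact that 3d $\mathcal{N}=2$ super Yang--Mills with matter $R$ arises as the dimensional reduction of 4d $\mathcal{N}=1$ super Yang--Mills with matter $R$ along a chosen real direction.

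First, I would identify a 3d square-zero supercharge $Q \in \Sigma$ of rank $1$ with a 4d holomorphic (i.e.\ rank $(1,0)$) square-zero supercharge. Since the odd part of the 3d $\mathcal{N}=2$ supersymmetry algebra matches that of the 4d $\mathcal{N}=1$ supersymmetry algebra (under the isomorphism used to define the dimensional reduction), any rank-$1$ supercharge $Q$ lifts canonically to a rank $(1,0)$ holomorphic supercharge on $\mathbb{R}^4 = \mathbb{C} \times \mathbb{C}$, where the complex structure on the $\mathbb{C}^2$ factor is chosen so that one of its complex directions is a complexification of the preferred real direction $N_\RR \subset V_\RR$.

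Second, I would invoke Proposition \ref{prop:twistdimensionalreduction}, which states that twisting commutes with dimensional reduction, so that the twist by $Q$ of 3d $\mathcal{N}=2$ SYM on $\mathbb{C} \times \RR$ is identified with the dimensional reduction along $\Re\colon \CC \to \RR$ of the $Q$-twist of 4d $\mathcal{N}=1$ SYM on $\mathbb{C}\times \CC$. By Theorem \ref{4d_minimal_twist_thm}, the latter $Q$-twist is perturbatively equivalent to the holomorphic BF theory $T^*[-1]\mathrm{Map}(\CC^2, R/\fg)$, and this equivalence is $\mathrm{U}(2)$-equivariant. Combining this with Proposition \ref{prop:BFholomorphicreduction} (dimensional reduction of BF theory along one holomorphic factor), the dimensional reduction is equivalent to the generalized BF theory $T^*[-1]\mathrm{Map}(\CC \times \RR_{\mathrm{dR}}, R/\fg)$, with the equivalence being $\mathrm{U}(1) \times \Spin(1;\RR)$-equivariant, i.e.\ $\mathrm{U}(1)$-equivariant.

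The only real subtlety, which is the main step to verify carefully, is that the $\mathrm{U}(1)$-action arising via the 3d twisting homomorphism $\phi \colon \MU(1) \to G_R = \CC^\times$ coincides with the residual $\mathrm{U}(1)$-action coming from restricting the 4d $\mathrm{U}(2)$-equivariance along $\mathrm{U}(1) \hookrightarrow \mathrm{U}(2)$ (preserving the $\mathbb{C}$-factor transverse to $N_\RR$). This compatibility is verified by the same diagrammatic argument as in the proof of Theorem \ref{thm:10dCSreduction}: both $\mathrm{U}(1)$-actions factor through the diagonal embedding into $\Spin(V_\RR) \times G_R$ determined by the twisting homomorphism, and the map $\mathrm{SO}(N_\RR) \to \mathrm{SU}(N)$ of \eqref{eq:SOtoSU} (here trivial as $\dim(N_\RR) = 1$) makes the relevant square commute. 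Granted this compatibility, the chain of equivalences proves the theorem.
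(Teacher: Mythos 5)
Your proposal is correct and follows essentially the same route as the paper: identify the rank-1 supercharge with a 4d rank $(1,0)$ holomorphic supercharge, use the fact that twisting commutes with dimensional reduction, apply Theorem \ref{4d_minimal_twist_thm}, and then reduce the holomorphic BF theory along $\Re\colon \CC \to \RR$ via Proposition \ref{prop:BFholomorphicreduction}. Your extra care about matching the $\U(1)$-action from the 3d twisting homomorphism with the residual $\U(1) \subset \U(2)$ from the 4d statement is a sound elaboration of what the paper leaves implicit (cf.\ the analogous check in Theorem \ref{thm:10dCSreduction}).
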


\begin{proof}
By Theorem \ref{4d_minimal_twist_thm} the twist of 4d $\cN=1$ super Yang--Mills on $L\times N$ by $Q$ is perturbatively equivalent to the holomorphic BF theory with the space of fields $T^*[-1]\map(L\times N, R / \fg)$. By Proposition \ref{prop:BFholomorphicreduction} we get that the dimensional reduction of the holomorphic BF theory on $L\times N$ along $\Re\colon N\rightarrow N_\RR$ is isomorphic to the holomorphic BF theory with the space of fields $T^*[-1]\map(L\times N_\RR, R/\fg)$.
\end{proof}

\subsection{\texorpdfstring{$\cN = 4$}{N=4} Super Yang--Mills Theory} \label{3d_4_section}
Next, consider the 3d $\mc N=4$ supersymmetric Yang--Mills theory with matter valued in a symplectic $G$-representation $U$. The $R$-symmetry group is $G_R = \Spin(4;\CC)$, acting on $W$ by the vector representation.

In the $\mc N=4$ supersymmetry algebra there are now three non-trivial orbits of square-zero supercharges.  An element $Q \in S \otimes W$ gives rise to a map $S^* \to W$; $Q$ squares to zero if its image is totally isotropic.  The classification of orbits includes the rank of this map.
\begin{itemize}
 \item Rank 1.  
 In this case $Q$ is minimal, with 2 invariant directions.  
 Such supercharges lie in a subalgebra isomorphic to the $\mc N=2$ supersymmetry algebra and are unique up to equivalence. They admit a twisting homomorphism and a $\ZZ$-grading.
 \item Rank 2. Such supercharges are topological.  
A rank 2 supercharge defines a Lagrangian subspace of $W$, and therefore an orientation.  
The $G_R = \Spin(4;\CC)$-action factors through an $\SO(4)$ action on $W$, and so preserves orientation, so there are two $G_R$ orbits corresponding to the two choices of orientation. We refer to these as the A twist and the B twist, distinguished by whether they admit a $\ZZ$-grading:
\begin{enumerate}
\item
An A-twist supercharge admits a twisting homomorphism $\phi \colon \U(1) \to G_R$ and a $\ZZ$-grading $\alpha \colon \U(1) \to G_R$.
\item
A B-twist supercharge admits the diagonal twisting homomorphism $\phi' \colon \SU(2) \to \SU(2) \times \SU(2) \to G_R$.  
This twist is only $\ZZ/2\ZZ$-graded.
\end{enumerate}
\end{itemize}

The distinction via twisting homomorphisms and $\ZZ$-gradings follows by identifying the twists as dimensional reductions from 4d $\mc N=2$.
 
\begin{lemma} \label{3d_4_orbits_lemma}
A rank $(2,0)$ square-zero supercharge in the 4d $\mc N=2$ supersymmetry algebra restricts to an A-twisting supercharge in 3d $\mc N=4$.  Likewise, a rank $(1,1)$ square zero supercharge in 4d $\mc N=2$ restricts to a B-twisting supercharge in 3d $\mc N=4$.
\end{lemma}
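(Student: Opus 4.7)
The plan is to identify the 4d $\mc N=2$ square-zero supercharges with 3d $\mc N=4$ square-zero supercharges via the dimensional reduction, and then distinguish the A and B orbits by the induced orientation on a Lagrangian subspace of the 3d $R$-representation $W_{3d}$.

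First, I would set up the identification of spinorial representations. Reducing along a direction $\xi \in V_{4\mr d}$, the rotation subgroup $\Spin(3;\RR)$ sits diagonally in $\Spin(4;\RR) = \SU(2)_+\times \SU(2)_-$, and both semi-spin representations $S_\pm$ of $\Spin(4;\CC)$ restrict to the 3d spin representation $S$. Hence there is a natural isomorphism
\[
\Sigma_{4\mr d} = S_+\otimes W\oplus S_-\otimes W^*\;\overset{\sim}{\longrightarrow}\; S\otimes W_{3\mr d},\qquad W_{3\mr d}:=W\oplus W^*,
\]
where $W_{3\mr d}$ carries the symmetric bilinear form given by the canonical evaluation pairing $W\otimes W^*\to\CC$. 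Under this identification the 4d $R$-symmetry group $\SL(2;\CC)\times\GL(1;\CC)$ sits inside the 3d $R$-symmetry group $\Spin(4;\CC)\iso \SL(2;\CC)\times\SL(2;\CC)$, with the $\GL(1;\CC)$ factor giving the Cartan of the second $\SL(2;\CC)$ and distinguishing the two summands $W$ and $W^*$ of $W_{3\mr d}$.

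Second, I compute the image in $W_{3\mr d}$ of each supercharge viewed as a map $S^*\to W_{3\mr d}$. A rank $(2,0)$ supercharge $Q\in S_+\otimes W$ becomes an element of $S\otimes W\subset S\otimes W_{3\mr d}$, whose image is the Lagrangian $W\oplus 0\subset W_{3\mr d}$. A rank $(1,1)$ supercharge $Q = q_+\otimes w_+ + q_-\otimes w_-$ with $q_\pm\in S_\pm$ and $\langle w_+,w_-\rangle=0$ (the 4d square-zero condition) becomes $q_+\otimes (w_+,0) + q_-\otimes (0,w_-)$, whose image is the Lagrangian $\mr{span}\{(w_+,0),(0,w_-)\}$. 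In particular, both reductions are rank 2 square-zero supercharges in 3d, hence topological.

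Third, I distinguish A and B by the orientation of the resulting Lagrangian. The two $\Spin(4;\CC)$-orbits of Lagrangian 2-planes in $W_{3\mr d}$ are distinguished by whether $\omega_L\in\Lambda^2 L\subset\Lambda^2 W_{3\mr d}$ lies in $\Lambda^{2,+}$ or $\Lambda^{2,-}$. A short check using the basis $\{e_i,e_j^*\}$ shows that the product Lagrangian $W\oplus 0$ and a mixed Lagrangian of the form $\mr{span}\{(w_+,0),(0,w_-)\}$ lie in opposite components of the orthogonal Grassmannian. To match these two orbits with the A and B labels, I use the $\ZZ$-grading data: the A-twist is precisely the 3d orbit admitting a compatible homomorphism $\alpha\colon\U(1)\to G_R^{3\mr d}$, while the B-twist only admits the diagonal twisting homomorphism. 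The 4d rank $(2,0)$ supercharge has weight $1$ under a compatible 4d $\alpha_0\colon\U(1)\to G_R^{4\mr d}$, which extends to a 3d $\U(1)$-grading through the inclusion $G_R^{4\mr d}\hookrightarrow G_R^{3\mr d}$; this identifies the resulting orbit as the A-twist. The 4d rank $(1,1)$ supercharge lies in the complementary $\Spin(4;\CC)$-orbit, hence is the B-twist.

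The main obstacle is the bookkeeping needed in the last step: tracking how the 4d grading homomorphism $\alpha$ and twisting homomorphism $\phi$ extend through the embedding of $R$-symmetry groups, and verifying that the resulting $\U(1)\subset G_R^{3\mr d}$ is exactly the one stabilizing the Lagrangian $W\oplus 0\subset W_{3\mr d}$ (and not its orientation-reversed partner). This ultimately amounts to a direct comparison of weights under the Cartan of $\Spin(4;\CC)$, but requires careful orientation conventions.
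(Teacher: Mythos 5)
Your argument is correct and follows essentially the same route as the paper's own proof: identify the 3d auxiliary space with $W \oplus W^*$ under reduction, observe that the rank $(2,0)$ and rank $(1,1)$ supercharges induce the Lagrangians $W \oplus 0$ and $\mathrm{span}\{w_+, w_-\}$ respectively, which lie in opposite components (orientations) of the Lagrangian Grassmannian, and pin down the A-label by the existence of a compatible $\ZZ$-grading. The extension of the 4d $\alpha_0$ through $G_R^{4\mathrm{d}} \hookrightarrow G_R^{3\mathrm{d}}$ that you describe is exactly the paper's $\alpha$ given by embedding $\U(1)$ into the second $\SL(2;\CC)$ factor, so the weight bookkeeping you flag as the main obstacle is precisely the (short) check the paper performs.
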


\begin{proof}
Let $W_4$ be the complex two-dimensional auxiliary space of the 4d $\cN=2$ supersymmetry algebra.
The projection from the 4d $\mc N=2$ supertranslation algebra to the 3d $\mc N=4$ supertranslation algebra induces an isomorphism $W_4 \oplus W_4^* \to W$ of representations of the group $\spin(3;\CC)$.  
This splits the fundamental representation $W$ into the sum of two Lagrangians, defining an orientation on $W$. 
A rank $(2,0)$ supercharge induces the Lagrangian subspace $W_4 \sub W$, which is oriented.  This supercharge admits a compatible homomorphism $\alpha \colon \U(1) \to G_R = \SL(2;\CC) \times \SL(2;\CC)$ given by the embedding into the second factor. This is the A-twist in our classification above.

A rank $(1,1)$ supercharge induces a Lagrangian subspace of the form $L \oplus L^* \sub W$, where $L$ is a 1-dimensional subspace of $W_4$.  This subspace has the opposite orientation, so corresponds to the B-twist in our classification above.  
\end{proof}

\subsubsection{Minimal Twist}
\label{sect:3d_4_minimal_twist}
There is a unique twisting homomorphism $\phi \colon \MU(1) \to \Spin(4;\CC)$ given by the restriction of the 4d $\cN=2$ twisting homomorphism for the minimal twist as in Theorem \ref{thm:4d2holomorphictwist} to the subgroup $\MU(1) \subset \MU(2)$. 

To incorporate the $\ZZ$-grading, we use the homomorphism $\alpha \colon \mr U(1) \to \SU(2)_+ \times \SU(2)_- \iso G_R$. This homomorphism coincides with the dimensional reduction of $\alpha_0$ from Section \ref{sect:4d_2_holomorphictwist}. The corresponding $\U(1)$-action on the 3d auxiliary space $W$ under $\alpha$ coincides with the $\U(1)$-action on the 4d auxiliary space $W_4$ under the isomorphism $W\cong W_4\oplus W_4^*$. Indeed, both have weights $(1,1,-1,-1)$.  Therefore, our equivalence is compatible with the $\ZZ$-grading in Theorem \ref{thm:4d2holomorphictwist} induced by $\alpha_0$.

\begin{theorem} \label{3d_4_minimal_twist_thm}
The minimal twist of 3d $\cN=4$ super Yang--Mills on $\CC\times \RR$ is perturbatively equivalent to the generalized BF theory with the space of fields $T^*[-1]\Sect(\CC \times \RR_{\mr{dR}}, (U \otimes K^{1/2}_\CC) \ham \gg))$. Moreover, the equivalence is $\U(1)$-equivariant.
\end{theorem}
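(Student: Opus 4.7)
The plan is to identify the minimal 3d $\cN=4$ twist as a dimensional reduction of a rank $(1,0)$ holomorphic twist in 4d $\cN=2$ super Yang--Mills, mirroring the strategy already used in Theorems \ref{thm:7dminimaltwist}, \ref{thm:5dminimaltwist}, and Theorem \ref{3d_minimal_twist_thm} (and directly invoking Lemma \ref{3d_4_orbits_lemma} to match supercharge orbits).  The spacetime $V_\RR = \CC \times \RR$ sits inside $V_\RR^4 = \CC \times \RR^2$, and the projection $p \colon \CC \times \RR^2 \to \CC \times \RR$ realizes 3d $\cN=4$ super Yang--Mills as the dimensional reduction of 4d $\cN=2$ super Yang--Mills with matter in $U$.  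By Lemma \ref{3d_4_orbits_lemma}, the minimal (rank $1$) supercharge $Q$ lifts along $p$ to a rank $(1,0)$ holomorphic square-zero supercharge in the 4d $\cN=2$ supersymmetry algebra.

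By Proposition \ref{prop:twistdimensionalreduction}, the minimal twist of 3d $\cN=4$ super Yang--Mills is therefore isomorphic to the dimensional reduction (along $p$) of the 4d $\cN=2$ holomorphic twist.  Applying Theorem \ref{thm:4d2holomorphictwist} with $n=0$, the 4d twist is perturbatively equivalent, in an $\MU(2)$-equivariant way, to the holomorphic Chern--Simons--type theory
\[
T^*[-1]\Sect\!\left(\CC^2,\, (U \otimes K_{\CC^2}^{1/2}) \ham \gg\right).
\]

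Next, I would invoke the dimensional reduction result for symplectic-boson sections, namely the analog of Proposition \ref{CS_diml_red_prop} (combined with the identification of such theories with cotangent-type generalized Chern--Simons as in Example \ref{ex:CSBF} and Example \ref{ex:CSHamiltonianreduction}) applied to the projection $\Re \colon \RR^2 \to \RR$ on the second factor of $\CC^2 = \CC \times \CC$.  This reduction is $\SU(1)\times\Spin(1,\RR)$-equivariant (which is trivial), and sends
\[
T^*[-1]\Sect\!\left(\CC^2,\, (U \otimes K_{\CC^2}^{1/2}) \ham \gg\right) \quad\longmapsto\quad T^*[-1]\Sect\!\left(\CC \times \RR_{\mr{dR}},\, (U \otimes K_{\CC}^{1/2}) \ham \gg\right),
\]
where on the reduced factor the $\Omega^{0,\bu}_\CC$-valued fields become $\Omega^\bu_\RR$-valued with the appropriate $K^{1/2}$-twist restricting to the remaining holomorphic factor.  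By Proposition \ref{prop:dimensionalreductionunique}, this dimensional reduction is unique up to isomorphism, so the resulting 3d theory is perturbatively equivalent to the claimed generalized BF theory.

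Finally, I would verify the $\U(1)$-equivariance.  As observed in the paragraph preceding the theorem statement, the twisting homomorphism $\phi\colon \MU(1)\to \Spin(4;\CC)$ and the $\ZZ$-grading homomorphism $\alpha\colon \U(1)\to G_R$ are precisely the restrictions of their 4d counterparts from Section \ref{sect:4d_2_holomorphictwist} (with $n=0$), along the inclusion $\MU(1) \subset \MU(2)$ and the identification $W \cong W_4 \oplus W_4^*$ on which both $\alpha$ and $\alpha_0$ act with weights $(1,1,-1,-1)$.  Hence the equivalence of Theorem \ref{thm:4d2holomorphictwist} (which is already $\MU(2)$-equivariant and $\ZZ$-graded for $\alpha_0$) descends to a $\U(1)$-equivariant $\ZZ$-graded equivalence after dimensional reduction.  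The main technical step is the extension of Proposition \ref{CS_diml_red_prop} from $B\fg$-valued generalized Chern--Simons theories to the symplectic-boson version with sections of $(U \otimes K^{1/2}) \ham \fg$; since Hamiltonian reduction is built from the same differential-form complexes with an extra twist by $K^{1/2}$ that pulls back trivially along $\Re$ (restricting to the holomorphic factor unchanged), the reduction argument carries over without modification.
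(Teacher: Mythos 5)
Your proposal is correct and follows essentially the same route as the paper: the paper likewise obtains the 3d minimal twist by dimensionally reducing the 4d $\cN=2$ holomorphic twist of Theorem \ref{thm:4d2holomorphictwist} along $\Re\colon \CC\to\RR$ in the second factor (invoking Proposition \ref{prop:BFholomorphicreduction}, which, just like your appeal to the analog of Proposition \ref{CS_diml_red_prop}, needs the same mild extension from $B\fg$ to sections of $(U\otimes K_{\CC}^{1/2})\ham\fg$), and it handles equivariance exactly as you do, by observing that $\phi$ and $\alpha$ restrict from their 4d counterparts with weights $(1,1,-1,-1)$. One minor slip: Lemma \ref{3d_4_orbits_lemma} classifies rank-2 supercharges, so it is not the right citation for lifting the rank-1 supercharge to a rank $(1,0)$ one in 4d, but that lift is immediate from the identification of the two spinorial representations and does not affect the argument.
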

\begin{proof}
The statement follows by applying the dimensional reduction (Theorem \ref{prop:BFholomorphicreduction}) to Theorem \ref{thm:4d2holomorphictwist} calculating the holomorphic twist of the 4d $\mc N=2$ super Yang--Mills theory on $\CC\times\CC$, where we dimensionally reduce along the projection $\Re\colon\CC\rightarrow \RR$ in the second factor.
\end{proof}

\subsubsection{Topological A-Twist}
\label{sect:3d_4_A_twist}
Let $L=\CC$ equipped with a Hermitian structure and a complex half density, $N_\RR=\RR$ equipped with a Euclidean structure and $N=N_\RR\otimes_\RR\CC$ its complexification. Consider the 4-dimensional spacetime $V_\RR=L\times N$. Under the projection $\Re\colon N\rightarrow N_\RR$ the family $Q_t$ of 4-dimensional square-zero supercharges given by \eqref{eq:4dHodgefamily} dimensionally reduces to a family of 3-dimensional square-zero supercharges which at $t\neq 0$ are topological at and $t=0$ have 2 invariant directions. Since they admit a compatible $\ZZ$-grading, at $t=0$ we obtain the holomorphic twist and at $t\neq 0$ we obtain the topological A-twist. Therefore, from Theorem \ref{thm:4dDonaldsontwist} we obtain the following statement.

\begin{theorem}
The twist of the 3d $\cN=4$ super Yang--Mills theory with respect to the family $Q_t$ of square-zero supercharges is perturbatively equivalent to the generalized Hodge theory $\mr{Sect}(\CC \times \RR_{\mr{dR}}, ((U \otimes K^{1/2}_\CC)\ham \gg)_{\mr{Hod}})$. Moreover, this equivalence is $\U(1)$-equivariant.
\label{3d_4_A_twist_thm}
\end{theorem}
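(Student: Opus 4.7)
The plan is to deduce this result from the analogous statement in four dimensions (Theorem \ref{thm:4dDonaldsontwist}) by dimensional reduction, exactly as the minimal twist was handled in Theorem \ref{3d_4_minimal_twist_thm}. Let $L = \CC$, $N_\RR = \RR$, and $N = N_\RR \otimes_\RR \CC = \CC$, so the 4-dimensional spacetime $V_\RR = L \times N$ maps to the 3-dimensional spacetime $V'_\RR = L \times N_\RR$ via the projection $p = \id \times \Re$. The 3d $\mc N=4$ super Yang--Mills theory with hypermultiplet valued in $U$ is, by definition, the dimensional reduction along $p$ of the 4d $\mc N=2$ super Yang--Mills theory on $V_\RR$ with hypermultiplet valued in $U$.

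First I would verify that the family $Q_t$ of 3d square-zero supercharges under consideration is precisely the image of the 4d family from Section \ref{sect:4d2Donaldson} under the dimensional reduction map on supercharges. At $t = 0$ this follows from the identification of the 3d minimal twist as the reduction of the 4d holomorphic twist (cf.\ the proof of Theorem \ref{3d_4_minimal_twist_thm}); at $t \neq 0$ the 4d supercharge has rank $(2,0)$, which restricts to an A-twist supercharge in 3d by Lemma \ref{3d_4_orbits_lemma}. Moreover, the $\ZZ$-grading homomorphism $\alpha \colon \U(1) \to G_R^{3d}$ used here is, under the decomposition $W \cong W_4 \oplus W_4^*$, the image of $\alpha_0 \colon \U(1) \to G_R^{4d}$ from Section \ref{sect:4d2Donaldson}.

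Next, by Proposition \ref{prop:twistdimensionalreduction}, the twist of the dimensional reduction coincides with the dimensional reduction of the twist (as a $\CC[t]$-family of classical field theories). Applying Theorem \ref{thm:4dDonaldsontwist}, the twist of 4d $\mc N=2$ super Yang--Mills by $Q_t$ is perturbatively equivalent, $\MU(2)$-equivariantly, to the holomorphic Hodge theory
\[\Sect\left(L \times N,\ \left((U \otimes K_{L \times N}^{1/2}) \ham \fg\right)_{\Hod}\right).\]
It remains to dimensionally reduce this generalized Hodge theory along $\Re\colon N \to N_\RR$. This is the content of Corollary \ref{cor:Hodgeholomorphicreduction} in the $\map$-case; the same argument works verbatim for the $\Sect$-version, since the only input is the identification $\Omega^{0,\bu}(N) \to \Omega^\bu(N_\RR; \CC)$ of sheaves of $L_\infty$-algebras compatible with the pairings, and tensoring with a translation-invariant line bundle along $N$ does not alter this. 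The outcome is a $\CC[t]$-family equivalent to
\[\Sect\left(L \times (N_\RR)_{\mr{dR}},\ \left((U \otimes K_L^{1/2}) \ham \fg\right)_{\Hod}\right),\]
which is precisely the generalized Hodge theory appearing in the statement.

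Finally, the $\U(1)$-equivariance of the equivalence follows by restricting the $\MU(2)$-equivariance in 4d to the subgroup $\MU(1) \subset \MU(2)$ along the dimensional reduction and comparing with the twisting homomorphism $\phi$ from Section \ref{sect:3d_4_minimal_twist}; the $\alpha$-gradings match by construction. The only potential obstacle is verifying carefully that the $\CC[t]$-family structure survives the dimensional reduction step---but since both the $\Sect$-construction and dimensional reduction are manifestly $\CC[t]$-linear, and all the intermediate perturbative equivalences (integrating out $\chi$, eliminating BRST doublets) in the proof of Theorem \ref{thm:4dDonaldsontwist} were already $\CC[t]$-linear, this causes no difficulty.
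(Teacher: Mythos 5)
Your proposal is correct and follows essentially the same route as the paper: the paper also obtains this theorem by identifying the 3d family $Q_t$ as the dimensional reduction along $\Re\colon N\to N_\RR$ of the 4d family from Section \ref{sect:4d2Donaldson} and then reducing Theorem \ref{thm:4dDonaldsontwist}, with the grading and equivariance handled exactly as in the minimal-twist case. Your extra remarks (the $\Sect$-version of Corollary \ref{cor:Hodgeholomorphicreduction} and the $\CC[t]$-linearity of the reduction) are points the paper leaves implicit, but they do not change the argument.
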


\begin{corollary}
The topological A-twist of the 3d $\cN=4$ super Yang--Mills theory is perturbatively trivial.
\end{corollary}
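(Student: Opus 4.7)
The plan is to read the corollary off directly from Theorem \ref{3d_4_A_twist_thm} combined with Proposition \ref{prop:Hodgetheoryspecialization}, following the same template used for the analogous corollaries in higher dimensions (Corollaries \ref{cor:8dtopologicaltwist}, \ref{cor:5drank2topologicaltwist}, \ref{cor:4dDonaldsontwist}).

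First, I would identify the topological A-twist within the family $Q_t$. By Lemma \ref{3d_4_orbits_lemma} (and the construction preceding Theorem \ref{3d_4_A_twist_thm}), the A-twisting supercharges in 3d $\cN=4$ arise precisely as dimensional reductions of rank $(2,0)$ square-zero supercharges in 4d $\cN=2$. The family $Q_t$ of Theorem \ref{3d_4_A_twist_thm} is obtained by reducing the 4d family from \eqref{eq:4dHodgefamily}, whose specialization at $t=0$ is a rank $(1,0)$ holomorphic supercharge and whose specialization at $t\neq 0$ is a rank $(2,0)$ topological supercharge. Hence, upon reduction to 3 dimensions, the value $t \neq 0$ corresponds to the topological A-twist.

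Second, I would invoke Theorem \ref{3d_4_A_twist_thm} to conclude that the A-twisted theory is perturbatively equivalent to the $t\neq 0$ specialization of the generalized Hodge theory $\mathrm{Sect}(\CC \times \RR_{\mr{dR}}, ((U \otimes K_\CC^{1/2}) \ham \gg)_{\mr{Hod}})$. By Proposition \ref{prop:Hodgetheoryspecialization}, the underlying local $L_\infty$ algebra of any generalized Hodge theory becomes acyclic after specialization at a nonzero value of the parameter, so the specialized theory is perturbatively trivial. Combining these two facts gives the corollary.

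There is no genuine obstacle: the two inputs are stated in precisely the form required, and the only verification needed is that the A-twist is the generic rather than special member of the family, which is built into the way the family was constructed via dimensional reduction from the 4d $\cN=2$ rank $(2,0)$ Donaldson family.
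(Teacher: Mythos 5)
Your proposal is correct and follows the paper's own route: identify the $t\neq 0$ member of the reduced family $Q_t$ as the topological A-twist (via Lemma \ref{3d_4_orbits_lemma}, equivalently via the compatible $\ZZ$-grading), apply Theorem \ref{3d_4_A_twist_thm}, and conclude by the $t\neq 0$ specialization statement of Proposition \ref{prop:Hodgetheoryspecialization}. This is exactly the template the paper uses for the analogous corollaries in dimensions 8, 5 and 4, and the (harmless) extension of Proposition \ref{prop:Hodgetheoryspecialization} from the $\map$ to the $\Sect$ version is the same implicit step the paper itself makes.
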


\subsubsection{Topological B-Twist}
\label{sect:3d_4_B_twist}
We consider $V_\RR=\RR^3$ equipped with a Euclidean structure and as before let $V=V_\RR\otimes_\RR\CC$ be its complexification. $V$ carries a Hermitian structure and a half-density, so by the results of Section \ref{sect:6dholomorphictwist} we obtain a square-zero supercharge $Q$. Under the projection $\Re\colon V\rightarrow V_\RR$ the supercharge $Q$ dimensionally reduces to the topological B-supercharge in 3 dimensions. Therefore, from Theorem \ref{thm:6dCSreduction} we obtain the following statement.

\begin{theorem} \label{3d_4_B_twist_thm}
The rank $2$ B-twist of the 3d $\mc N=4$ super Yang--Mills theory is perturbatively equivalent to the generalized Chern--Simons theory with the space of fields $\map(\RR^3_{\mr{dR}}, U\ham \gg)$. Moreover, the equivalence is $\Spin(3; \RR)$-equivariant.
\end{theorem}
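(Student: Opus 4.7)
The plan is to deduce this theorem as a direct specialisation of Theorem \ref{thm:6dCSreduction}, entirely parallel to the derivations of the rank $4$ twist in dimension $5$ (Section \ref{sect:5drank4twist}) and the rank $(2,2)$ twist in dimension $6$ (Theorem \ref{thm:6drank22}). Concretely, I would set $L = 0$ and $N_\RR = \RR^3$ in the general setup preceding Theorem \ref{thm:6dCSreduction}, so that $N = \CC^3$ carries the standard Hermitian structure and complex half-density inherited from the Euclidean structure and spin structure on $N_\RR$. Section \ref{sect:6dholomorphictwist} then supplies a canonical rank $(1,0)$ holomorphic square-zero supercharge $Q$ in the 6d $\cN=(1,0)$ supersymmetry algebra, and $(3 + \dim L)$-dimensional super Yang--Mills on $L \times N_\RR$ becomes precisely 3d $\cN=4$ super Yang--Mills on $\RR^3$ with matter in $U$.

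The first genuine step is to verify that the supercharge obtained by restricting $Q$ to the 3d $\cN=4$ supersymmetry subalgebra sits in the B-orbit rather than the A-orbit, since these are the two orbits of rank 2 supercharges in the classification preceding Section \ref{sect:3d_4_minimal_twist}. I would do this by tracking twisting homomorphisms: the equivariance group for $Q$ supplied by Theorem \ref{thm:6dCSreduction} with $L = 0$ is $\Spin(N_\RR) = \Spin(3;\RR)$, and its image in the 6d R-symmetry group $G_R^{6\mr{d}} = \SL(2;\CC)$ is the natural embedding via the complexification map \eqref{eq:SOtoSU}. Under the inclusion $G_R^{6\mr{d}} \subset G_R = \Spin(4;\CC) \cong \SL(2;\CC) \times \SL(2;\CC)$ arising because 3d $\cN=4$ dimensionally reduces from 6d $\cN=(1,0)$, this embedding coincides with the diagonal $\SU(2) \to \SU(2) \times \SU(2)$, which is precisely the B-twist twisting homomorphism $\phi'$. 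The A-twist is excluded because it admits only a $\U(1)$, rather than the full $\Spin(3;\RR)$, equivariance.

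Once this orbit identification is settled, the conclusion is immediate. Theorem \ref{thm:6dCSreduction} supplies a $\MU(L) \times \Spin(N_\RR)$-equivariant perturbative equivalence with $\Sect(L \times (N_\RR)_{\mr{dR}}, (U \otimes K_L^{1/2}) \ham \fg)$. Setting $L = 0$ causes $K_L^{1/2}$ to become trivial and $\MU(L)$ to collapse, reducing this to $\map(\RR^3_{\mr{dR}}, U \ham \fg)$ with its residual $\Spin(3;\RR)$-action, as required.

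I do not anticipate any serious technical obstacle: the substantive work, namely eliminating auxiliary fields and trivial BRST doublets, identifying the target as a Hamiltonian reduction, and passing from the Dolbeault-graded setup to the de Rham one, has already been carried out in Theorem \ref{thm:6dholomorphictwist} and Proposition \ref{CS_diml_red_prop}. The only subtlety is the A-versus-B bookkeeping described above, which is the direct analogue one dimension lower of the rank $(1,1)$-versus-rank $(2,0)$ discrimination handled by Lemma \ref{3d_4_orbits_lemma}.
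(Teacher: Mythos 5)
Your overall route is exactly the paper's: the proof there consists of taking $V_\RR=\RR^3$, using the canonical 6d holomorphic supercharge on $V=\CC^3$ from Section \ref{sect:6dholomorphictwist}, reducing along $\Re\colon V\to V_\RR$, and specializing Theorem \ref{thm:6dCSreduction} to $L=0$, $N_\RR=\RR^3$; your final paragraph and the reduction to that theorem are fine.

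The problem is the one step you correctly single out as the genuine content, the identification of the reduced supercharge as a B- rather than A-supercharge: your argument for it does not hold up. First, the twisting homomorphism furnished by Theorem \ref{thm:6dCSreduction} with $L=0$ sends $\Spin(3;\RR)$ \emph{identically onto the $\Spin(N_\RR)$-factor} of the reduced R-symmetry $G_R=\SL(2,\CC)\times\Spin(N_\RR)$ (the factor created by the reduction, acting on the three vector-multiplet scalars) and \emph{trivially} into the 6d R-symmetry $\SL(2,\CC)$, since $\MU(L)$ is trivial; the map \eqref{eq:SOtoSU} only enters the proof of that theorem through the rotation action on the complexified reduction directions on the spacetime side, not as an embedding into the 6d R-symmetry. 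Consequently the composite $\Spin(3;\RR)\to\Spin(4;\CC)\cong\SL(2,\CC)\times\SL(2,\CC)$ is an embedding into a single factor, not the diagonal you assert. Second, your exclusion argument "the A-twist admits only a $\U(1)$ equivariance" is false: the A-supercharge also admits a $\Spin(3;\RR)$ twisting homomorphism, landing in the other $\SL(2,\CC)$-factor (this is why the A-twisted theory is defined on arbitrary spin 3-manifolds and counts 3d Seiberg--Witten solutions, as stated in the introduction), so the existence of a $\Spin(3;\RR)$-equivariant structure cannot distinguish the two orbits; indeed the two rank-2 orbits are exchanged by an outer automorphism of the R-symmetry, and the A/B distinction only becomes visible once one records \emph{which} $\SL(2,\CC)$-factor the homomorphism hits relative to the field content (or, equivalently, whether a compatible $\alpha$ exists). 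A correct identification along the paper's lines: the reduced supercharge $q_+\otimes w_1$ has image Lagrangian $S^{\mathrm{int}}\otimes\langle w_1\rangle\subset W\cong S^{\mathrm{int}}\otimes W_+^6$, so its stabilizing $\SU(2)$ is the internal $\Spin(N_\RR)$ acting on the vector-multiplet scalars and trivially on the hypermultiplet scalars, which is the B-orbit; alternatively, factor the reduction through four dimensions, where the 6d holomorphic supercharge becomes a rank $(1,1)$ supercharge, and invoke Lemma \ref{3d_4_orbits_lemma}. Either repair closes the gap, and is consistent with the fact that the resulting theory $\map(\RR^3_{\mathrm{dR}},U\ham\fg)$ is not perturbatively trivial, unlike the A-twist of Theorem \ref{3d_4_A_twist_thm}.
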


\subsection{\texorpdfstring{$\cN = 8$}{N=8} Super Yang--Mills Theory} \label{3d8section}
Finally, consider the maximally supersymmetric Yang--Mills theory in dimension 3.
This theory is the dimensional reduction of 10d $\cN = (1,0)$ supersymmetric Yang--Mills theory.  
The $R$-symmetry group is $G_R = \Spin(7;\CC)$, where $W$ is the $8$-dimensional spin representation.

In the $\mc N=8$ supersymmetry algebra the classification of twists is the same as we saw in 3d $\mc N=4$.  
There are three orbits: one consisting of rank 1 supercharges, and two orbits of rank 2 supercharges.  We can see this in the following way.
\begin{lemma}
There are two distinct $\spin(3;\CC) \times G_R$-orbits of square-zero supercharges of rank $2$ in the 3d $\mc N=8$ supersymmetry algebra: the generic orbit and the special orbit.
\label{lm:3dN8twoorbits}
\end{lemma}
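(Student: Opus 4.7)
My plan is to exhibit two square-zero supercharges of rank $2$ that lie in manifestly distinct orbits, and then to argue there are no further orbits. For the \emph{generic} supercharge, I would dimensionally reduce the 10d $\mc N=(1,0)$ holomorphic supercharge of Section \ref{sect:10dholomorphictwist} to 3d via the chain $10 \rightsquigarrow 9 \rightsquigarrow \cdots \rightsquigarrow 3$; equivalently, it is the dimensional reduction of the rank $(2,0)$ 4d $\mc N=4$ topological supercharge. For the \emph{special} supercharge, I would dimensionally reduce the 4d $\mc N=4$ rank $(2,2)$ topological supercharge with parameter $s=1$ (the Kapustin--Witten B-twist) along $\RR^4 \to \RR^3$. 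Since their twists yield non-equivalent classical field theories by Theorem \ref{3d_4_B_twist_thm} (generalized Chern--Simons, perturbatively non-trivial) and the 3d A-twist counterpart (perturbatively trivial), respectively, the two supercharges necessarily lie in distinct $\spin(3;\CC) \times \Spin(7;\CC)$-orbits.

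To prove these exhaust the rank-$2$ orbits, I would describe the space of rank-$2$ square-zero supercharges $Q \in S \otimes W$ as the space of pairs $(P, \varphi)$ where $P \subset W$ is a $2$-dimensional isotropic subspace (with respect to the $\Spin(7;\CC)$-invariant symmetric form on the spin representation $W$) and $\varphi \colon S^* \xto{\sim} P$ is an isomorphism. The subgroup $\SL(S) \subset \spin(3;\CC)$ acts simply transitively on the choice of identification $\varphi$, so enumerating orbits reduces to classifying $\Spin(7;\CC)$-orbits on the complex isotropic Grassmannian $OG(2, W)$. Using the triality embedding $\Spin(7;\CC) \subset \Spin(8;\CC)$ realizing $W$ as the vector representation of $\Spin(8;\CC)$, the variety $OG(2, W)$ forms a single $\Spin(8;\CC)$-orbit, and $\Spin(7;\CC)$ is the stabilizer of a non-zero spinor $\psi \in S_+^{\Spin(8)}$, so the $\Spin(7;\CC)$-orbits on $OG(2,W)$ can be parametrized by the relative position of $P$ with respect to $\psi$.

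The distinguishing $\Spin(7;\CC)$-invariant can be extracted from the decomposition $\wedge^2 W \cong \fso(7;\CC) \oplus V^7$, where $V^7$ is the $7$-dimensional vector representation of $\Spin(7;\CC)$: the Pl\"ucker vector $v_1 \wedge v_2 \in \wedge^2 W$ (for any basis $v_1, v_2$ of $P$) projects to an element of $V^7$ whose vanishing distinguishes the two orbits. The main obstacle is to verify cleanly that the $\Spin(7;\CC)$-orbit structure on $OG(2, W)$ has exactly two components, and that these correspond to the two supercharges exhibited above. I would handle this by a standard stabilizer/dimension count: the generic orbit has a smaller stabilizer of the expected codimension, while the special orbit is the unique closed $\Spin(7;\CC)$-subvariety of $OG(2,W)$ consisting of planes for which the $V^7$-projection vanishes.
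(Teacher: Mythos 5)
Your reduction of the problem to classifying $\Spin(7;\CC)$-orbits of isotropic $2$-planes $P \sub W$, and your proposed invariant --- the image of the Pl\"ucker vector under the projection $\wedge^2 W \cong \so(7;\CC)\oplus V^7 \to V^7$, which is just the antisymmetric spinor bilinear on $W$ --- are both correct: for $P$ spanned by a pure spinor $w$ and a second vector $w'$ one checks that this projection vanishes precisely on the special orbit. But there is a genuine gap exactly where the content of the lemma sits: you never show that each level set of this invariant is a \emph{single} orbit. A stabilizer/dimension count can at best show that a maximal-dimensional orbit is open and dense in $OG(2,W)$; it cannot rule out further lower-dimensional orbits inside the open locus $\{\mathrm{pr}_{V^7}(\wedge^2 P)\ne 0\}$, and it says nothing about whether the closed locus $\{\mathrm{pr}_{V^7}(\wedge^2 P)=0\}$ is one orbit or several. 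The paper closes precisely this gap by an explicit normalization: fix $w$ to be a pure spinor, decompose $W \cong (\CC\oplus L\oplus\wedge^2 L\oplus\wedge^3 L)\otimes\det(L)^{-1/2}$ under $\ML(L)\sub\stab(w)$, and use the $\spin(3;\CC)$-action together with the $\SL(L)$ and unipotent (wedging by $L$) parts of $\stab(w)$ to bring $w'$ to one of two standard forms. Some such hands-on transitivity argument is unavoidable and is absent from your plan. Relatedly, $\SL(S)$ does \emph{not} act simply transitively on the framings $\varphi\colon S^*\xto{\sim} P$: these form a $\GL(2)$-torsor and $\SL(S)$ leaves a residual $\CC^\times$ of determinants, so passing from supercharges to planes also requires checking that $\stab_{\Spin(7;\CC)}(P)\to\GL(P)$ hits every determinant.

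Your representatives are also partly mislabelled. Reducing the 10d holomorphic supercharge to dimension $3$ does not produce the generic orbit: reducing along its holomorphic directions yields the rank-$1$ minimal supercharge, while reducing along transverse real directions (as in Section \ref{sect:4dqgltwist}) yields the 4d rank $(2,2)$, $s=1$ supercharge and hence the \emph{special} (B) orbit --- the same orbit as your second representative. The correct generic representative is the reduction of the 4d rank $(2,0)$ supercharge, as in your ``equivalently'' clause, but the two descriptions you give are not equivalent. You have also swapped the twisted theories: the generic orbit is the A-twist, which is perturbatively \emph{trivial}, while the special orbit is the B-twist, which gives the nontrivial BF theory; as written this contradicts your own (correct) identification of the special orbit with the vanishing locus of the $V^7$-invariant. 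The lower-bound argument itself --- inequivalent twisted theories force distinct orbits --- is valid once the representatives are fixed, though it is much heavier machinery than the one-line invariant computation that suffices here.
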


\begin{proof}
Choose a symplectic basis $\langle s, s' \rangle$ for $S$, and let $Q = s \otimes w + s' \otimes w'$ be a square-zero rank 2 element of $S \otimes W$.  Let $V_7$ denote the fundamental representation of $G_R$.  As in Section \ref{9d_section}, the element $w \in W$ is equivalent to the data of a maximal isotropic subspace $L \sub V_7$, together with a choice of a half-density.  This element $w$ is stabilized, in particular, by a copy of the metalinear group $\ML(L)$.
Under the group $\ML(L)$ the auxiliary space $W$ decomposes as
\[W \iso \left(\CC \oplus L \oplus \wedge^2 L \oplus \wedge^3 L \right) \otimes \det(L)^{-1/2},\]
(see also \cite[Section 4.7]{ElliottSafronov}), with $w$ lying in the last summand.  Under this decomposition, split the remaining element $w' = (v_0, v_1, v_2, v_3)$.  Then
\begin{itemize}
 \item $v_0 = 0$, because the square-zero condition implies, in particular, that $(w,w') = 0$ with respect to the scalar spinor pairing on $W$, here given by the wedge pairing.
 \item Without loss of generality $v_3 = 0$, since under the action of $\spin(3;\CC)$, $s \otimes w + s' \otimes w' \sim s \otimes w + s' \otimes (w'-w)$.
 \item If $v_1 = 0$ then $v_2 \ne 0$, and all choices of non-zero $v_2$ are in the same orbit under $\SL(L) \sub \stab(w) \sub G_R$.
 \item If $v_1 \ne 0$ then without loss of generality $v_2 = 0$, using the action by wedge product of $L \sub \stab(w) \sub G_R$.  Finally all choices of non-zero $v_1$ are likewise in the same orbit under $\SL(L) \sub \stab(w) \sub G_R$.  The stabilizer of $w$ acts on the space $v_1 \ne 0$, so these latter two cases comprise two inequivalent orbits.
\end{itemize}
\end{proof}

The classification of twists therefore takes the following form.
\begin{itemize}
 \item Rank 1.  In this case $Q$ is minimal, with 2 invariant directions.  Such supercharges come from the $\mc N=2$ supersymmetry algebra.  They admit a twisting homomorphism from $\mr U(1)$ and a $\ZZ$-grading.
 \item Rank 2 twists. These twists are topological, and come from the $\mc N=4$ supersymmetry algebra. They admit a twisting homomorphism from $\Spin(3, \RR)$ and a $\ZZ$-grading. There are two such:
 \begin{enumerate}
 \item A-twist (the generic rank 2 orbit).
 \item B-twist (the special rank 2 orbit).
 \end{enumerate}
\end{itemize}

\subsubsection{Minimal Twist}
\label{sect:3d8minimal_twist}

The 3d $\cN=8$ supersymmetric Yang--Mills theory is obtained by a dimensional reduction of the 4d $\mc N=4$ supersymmetric Yang--Mills theory. Therefore, from Theorem \ref{thm:4d4holomorphictwist} we obtain the following statement.

\begin{theorem}  \label{3d_8_minimal_twist_thm}
The minimal twist of the 3d $\cN=8$ super Yang--Mills theory on $M=\CC\times \RR$ is perturbatively equivalent to the generalized BF theory with space of fields $T^*[-1]\mr{Map}(\CC_{\rm Dol} \times \RR_{\mr{dR}}, \fg/\fg)$. Moreover, the equivalence is $\U(1)$-equivariant.
\end{theorem}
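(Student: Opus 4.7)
My plan is to obtain this result as a corollary of the 4d computation by dimensional reduction, following exactly the template used throughout Section \ref{3d8section}. Concretely, 3d $\cN=8$ super Yang--Mills on $\CC\times\RR$ arises as the dimensional reduction of 4d $\cN=4$ super Yang--Mills on $\CC\times\CC$ along the projection $p_{\rm Dol}\colon \CC \times \CC \to \CC \times \RR$ given by $\mathrm{Re}$ on the second factor. By Proposition \ref{prop:susydimlred}, the $\fA$-supersymmetric structure descends uniquely to the 3d theory, and by Proposition \ref{prop:twistdimensionalreduction}, twisting commutes with dimensional reduction.

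The next step is to identify the 3d minimal supercharge as the dimensional reduction of the 4d rank $(1,0)$ holomorphic supercharge. Under the projection of supersymmetry algebras, the fundamental representation $W_4$ of the 4d $R$-symmetry group $\Spin(6;\CC)$ restricts to a splitting $W_4 \oplus W_4^* \cong W$ of the fundamental representation of the 3d $R$-symmetry group $\Spin(7;\CC)$ (cf.\ Lemma \ref{3d_4_orbits_lemma}). A rank $(1,0)$ element of $S_+ \otimes W_4 \subset S_+\otimes W_4 \oplus S_-\otimes W_4^*$ restricts to a rank 1 element of $S\otimes W$; since rank 1 supercharges form a single orbit in the 3d $\cN=8$ algebra (by the classification of square-zero supercharges given in Figure \ref{fig:superchargeorbits}, or more explicitly by Lemma \ref{lm:3dN8twoorbits}), this is enough to identify the twists up to equivalence.

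Given these identifications, Theorem \ref{thm:4d4holomorphictwist} tells us the 4d holomorphic twist is perturbatively equivalent to the holomorphic BF theory with space of fields $T^*[-1]\map(\CC^2_{\rm Dol}, B\fg)$, and this equivalence is $\U(2)$-equivariant. Applying Proposition \ref{prop:BFdolbeaultreduction} with $X = \pt$, $Y' = \CC$, $M = \pt$, the reduction of this holomorphic BF theory along $p_{\rm Dol}$ is perturbatively equivalent to the generalized BF theory with fields $T^*[-1]\map(\CC_{\rm Dol} \times \RR_{\rm dR}, \fg/\fg)$, which is exactly the target description.

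Finally, I need to track the residual $\U(1)$ symmetry. Restricting the $\U(2)$-equivariance of Theorem \ref{thm:4d4holomorphictwist} to the $\U(1)$ subgroup acting only on the first $\CC$ factor (the one left untouched by $p_{\rm Dol}$) yields the $\U(1)$-equivariance in the statement, and this $\U(1)$ matches the twisting homomorphism $\U(1) \hookrightarrow \Spin(7;\CC)$ for the 3d minimal twist via the factorization through the $\cN=4$ subalgebra. The main point requiring care is this compatibility of twisting homomorphisms and of the $\ZZ$-grading $\alpha\colon \U(1)\to G_R$ across dimensions — one must verify that $\alpha$ for the 3d $\cN=8$ minimal twist is the restriction of the analogous data for the 4d $\cN=4$ holomorphic twist of Section \ref{sect:4d4holomorphictwist}, which follows from the explicit description of $\alpha_0$ there and the embedding $W_4 \oplus W_4^* \hookrightarrow W$ noted above. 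No genuine obstacle appears; the theorem reduces to a bookkeeping exercise once the commutative diagram of dimensional reduction and twisting is set up.
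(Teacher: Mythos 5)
Your proposal is correct and follows essentially the same route as the paper: the paper simply observes that 3d $\cN=8$ super Yang--Mills is the dimensional reduction of 4d $\cN=4$ super Yang--Mills and deduces the statement from Theorem \ref{thm:4d4holomorphictwist} (via the Dolbeault reduction of Proposition \ref{prop:BFdolbeaultreduction}), exactly as you do, with your extra bookkeeping of supercharge orbits, twisting homomorphisms and the $\ZZ$-grading left implicit there. One small slip: Lemma \ref{lm:3dN8twoorbits} concerns rank $2$ supercharges, so the uniqueness of the rank $1$ orbit should be attributed to the general classification (Figure \ref{fig:superchargeorbits}) rather than to that lemma.
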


\subsubsection{Topological Twists}

Consider the dimensional reduction of the family $Q_t$ of rank $(2, 2)$ square-zero supercharges in the 4d $\cN=4$ supersymmetry algebra from Section \ref{sect:4dqgltwist}. Since this is a family of topological supercharges in 4 dimensions, it dimensionally reduces to a family of topological supercharges in 3 dimensions.

\begin{theorem}
The twist of the 3d $\cN=8$ super Yang--Mills theory with respect to the family $Q_t$ of square-zero supercharges is perturbatively equivalent to the theory $T^*[-1]\map(\RR^3_{\mathrm{dR}}, B\fg_{\Hod})$. Moreover, the equivalence is $\Spin(3, \RR)$-equivariant.
\label{thm:3d8Hodgetwist}
\end{theorem}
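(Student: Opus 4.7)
The plan is to realize this statement as a dimensional reduction of Theorem \ref{thm:4d422Hodgetwist}, exactly as we realized the analogous minimal twist (Theorem \ref{3d_8_minimal_twist_thm}) as the reduction of the 4d $\cN=4$ holomorphic twist.

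First, I would observe that 3d $\cN=8$ super Yang--Mills is by definition the dimensional reduction of 4d $\cN=4$ super Yang--Mills along the projection $p\colon \RR^4 \to \RR^3$. Let me write $Q_t^{\mr{4d}}$ for the 4d family of rank $(2,2)$ square-zero supercharges from Section \ref{sect:4dqgltwist}. Each element of the family $Q_t^{\mr{4d}}$ is topological in 4d, hence its image in the 3d supersymmetry algebra under the natural projection of odd parts is still a topological square-zero supercharge, and this reduction yields the family $Q_t$ described in Section \ref{3d8section}. By Proposition \ref{prop:twistdimensionalreduction}, the twist of 3d $\cN=8$ super Yang--Mills along $Q_t$ is canonically identified with the dimensional reduction along $p$ of the $Q_t^{\mr{4d}}$-twist of 4d $\cN=4$ super Yang--Mills.

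Next, by Theorem \ref{thm:4d422Hodgetwist}, the $Q_t^{\mr{4d}}$-twist of 4d $\cN=4$ super Yang--Mills is perturbatively equivalent, as a $\CC[t]$-family and $\Spin(4,\RR)$-equivariantly, to the generalized Hodge theory $\map(\RR^4_{\mr{dR}}, B\fg_{\Hod})$, which, viewed with respect to the dg Lie algebra $\fg_{\Hod}$, is simply a generalized Chern--Simons theory in the sense of Definition \ref{def:generalizedCS}. I would then apply Proposition \ref{CS_to_BF_diml_red_prop} to the dg Lie algebra $\fg_{\Hod}$ (its invariant pairing of the required degree is inherited from $\fg$): this yields that the dimensional reduction along the chosen factor of $\RR$ is the generalized BF theory $T^*[-1]\map(\RR^3_{\mr{dR}}, B\fg_{\Hod})$. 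Combining with the previous step gives the asserted perturbative equivalence of $\CC[t]$-families.

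The remaining point is the $\Spin(3,\RR)$-equivariance. The $\Spin(4,\RR)$-action on the 4d twist used in Theorem \ref{thm:4d422Hodgetwist} is the one arising via the Kapustin--Witten twisting homomorphism $\phi_{\mr{KW}}$. Restricting to the stabilizer $\Spin(3,\RR)\subset \Spin(4,\RR)$ of the translation-invariant direction of reduction, the equivalence of Proposition \ref{CS_to_BF_diml_red_prop} is $\Spin(3,\RR)$-equivariant (it uses only the translation invariance of the de Rham form along the reduced direction), and the compatibility of the 3d twisting homomorphism with the restriction of $\phi_{\mr{KW}}$ follows from the same commutative square argument as in the proofs of Theorems \ref{thm:10dCSreduction} and \ref{thm:8dBFreduction}.

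The main (very small) obstacle is simply to verify that Proposition \ref{CS_to_BF_diml_red_prop} extends verbatim to $\CC[t]$-families of classical field theories in the sense of Definition \ref{family_of_BV_theories_def}; this is immediate since the proposition is proved by an equivariant identification at the level of the underlying bundles of BV fields, and everything in sight is $\CC[t]$-linear.
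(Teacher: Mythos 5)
Your proposal is correct and follows essentially the same route as the paper's own (one-line) proof: the paper also deduces the result by applying the de Rham dimensional reduction of Proposition \ref{CS_to_BF_diml_red_prop} to the 4d statement of Theorem \ref{thm:4d422Hodgetwist}. Your additional remarks on the $\Spin(3,\RR)$-equivariance and the $\CC[t]$-family compatibility are accurate elaborations of points the paper leaves implicit.
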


\begin{proof}
The claim follows by dimensional reduction (Proposition \ref{CS_to_BF_diml_red_prop}) from the corresponding statement in 4 dimensions (Theorem \ref{thm:4d422Hodgetwist}) which calculates the twist of the 4d $\cN=4$ super Yang--Mills theory with respect to that family to be $\map(\RR^4_{\mathrm{dR}}, B\fg_{\Hod})$.
\end{proof}

As we see from Theorem \ref{thm:3d8Hodgetwist}, at $t\neq 0$ the twist is perturbatively trivial while at $t=0$ it is not. By Lemma \ref{lm:3dN8twoorbits} there are only two orbits of topological supercharges, so the case $t\neq 0$ (the generic orbit) must be the A-twist and the case $t=0$ (the special orbit) must be the B-twist.

\begin{corollary}
The topological A-twist of the 3d $\cN=8$ super Yang--Mills theory is perturbatively trivial.
\label{cor:3dN8Atwist}
\end{corollary}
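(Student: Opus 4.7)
The plan is to deduce the corollary directly from Theorem \ref{thm:3d8Hodgetwist}, using the orbit analysis in Lemma \ref{lm:3dN8twoorbits} to pin down which value of $t$ corresponds to the A-twist. First I would observe that each member of the dimensionally-reduced family $Q_t$ from Section \ref{sect:4dqgltwist} is a topological (rank $2$) square-zero supercharge, so by Lemma \ref{lm:3dN8twoorbits} every $Q_t$ lies in either the generic orbit (A-twist) or the special orbit (B-twist). Thus it suffices to show that the A-twist is realized at $t \neq 0$ and that the $t \neq 0$ specialization of the family in Theorem \ref{thm:3d8Hodgetwist} is perturbatively trivial.

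For the triviality half, I would apply Theorem \ref{thm:3d8Hodgetwist} to identify the $Q_t$-twisted theory with the generalized Hodge family $T^*[-1]\map(\RR^3_{\mathrm{dR}}, B\fg_{\Hod})$, and then invoke the reasoning of Proposition \ref{prop:Hodgetheoryspecialization}: at any $t \neq 0$ the dg Lie algebra $\fg_{\Hod}$ specializes to $\fg_{\mathrm{dR}}$, whose underlying complex is acyclic. Tensoring with $\Omega^\bullet_{\RR^3}$ preserves acyclicity, so the underlying local $L_\infty$ algebra of the cotangent theory is acyclic and the theory is perturbatively equivalent to the zero theory.

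For the orbit identification, I would argue by elimination. At $t=0$ the family degenerates (by Proposition \ref{CS_to_BF_diml_red_prop} applied to the specialization in Proposition \ref{prop:Hodgetheoryspecialization}) to $T^*[-1]\map(\RR^3_{\mathrm{dR}}, \fg/\fg)$, whose classical BV complex is manifestly not contractible, so $t=0$ cannot be in the same orbit as $t \neq 0$. Combined with Lemma \ref{lm:3dN8twoorbits} this forces $\{t \neq 0\}$ and $\{t = 0\}$ to be the two orbits, in some order. To match which is which, one can note that the $t=0$ member arises by dimensional reduction from the rank $(1,1)$ 4d $\cN=4$ supercharge (as in the construction of Section \ref{sect:4dqgltwist}), which by the proof of Lemma \ref{3d_4_orbits_lemma} induces a Lagrangian of the form $L \oplus L^* \subset W$ — the special orbit — so $t=0$ is the B-twist and $t \neq 0$ is the A-twist.

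There is no serious obstacle: Theorem \ref{thm:3d8Hodgetwist} does all the representation-theoretic work, and the only subtle step is the orbit identification above, which is essentially bookkeeping once Lemma \ref{lm:3dN8twoorbits} is in hand. The entire argument could be compressed to a one-line appeal to Theorem \ref{thm:3d8Hodgetwist} and Proposition \ref{prop:Hodgetheoryspecialization} if one is willing to cite the orbit identification paragraph that immediately precedes the corollary as given.
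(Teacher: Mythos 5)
Your proposal is correct and follows essentially the same route as the paper: identify the twist of the family $Q_t$ with the Hodge theory via Theorem \ref{thm:3d8Hodgetwist}, observe that the $t\neq 0$ specialization is perturbatively trivial by Proposition \ref{prop:Hodgetheoryspecialization}, and use the two-orbit count of Lemma \ref{lm:3dN8twoorbits} to conclude that the generic ($t\neq 0$) case is the A-twist. The one misstep is your ``matching'' detour: the $t=0$ member of the family descends from the rank $(2,2)$, $s=1$ supercharge of Section \ref{sect:4dqgltwist}, not from a rank $(1,1)$ supercharge, and Lemma \ref{3d_4_orbits_lemma} concerns the $\cN=4$ algebra rather than $\cN=8$; this step is unnecessary anyway, since the A-twist is by definition the generic rank $2$ orbit and the $t=0$ point lies in the closure of the $t\neq 0$ family.
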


\begin{corollary}
The topological B-twist of the 3d $\cN=8$ super Yang--Mills theory is perturbatively equivalent to $T^*[-1]\map(\RR^3_{\mathrm{dR}}, \fg/\fg)$. Moreover, the equivalence is $\Spin(3, \RR)$-equivariant.
\label{cor:3dN8Btwist}
\end{corollary}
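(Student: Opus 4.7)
\textbf{The plan} is to specialize Theorem~\ref{thm:3d8Hodgetwist} at $t = 0$.  Following the discussion immediately preceding the corollary, the B-twist corresponds to the special orbit of rank-$2$ topological supercharges, which under the family $Q_t$ is precisely the value $t = 0$ (the generic value $t \ne 0$ being the A-twist, already treated in Corollary~\ref{cor:3dN8Atwist}).  Thus it suffices to identify the $t = 0$ specialization of the $\CC[t]$-family $T^*[-1]\map(\RR^3_{\mathrm{dR}}, B\fg_{\mathrm{Hod}})$ with the topological BF theory $T^*[-1]\map(\RR^3_{\mathrm{dR}}, \fg/\fg)$. By definition of the Hodge $L_\infty$ algebra, $\fg_{\mathrm{Hod}}|_{t=0}$ is the semidirect product $\fg \oplus \fg[1]$, with the given Lie bracket on $\fg$, the adjoint action of $\fg$ on $\fg[1]$, and zero differential.

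To establish the desired equivalence I would exploit the self-duality $\fg \cong \fg^*$ afforded by the invariant pairing on $\fg$, together with Corollary~\ref{cor:mapintocotangent} applied with $X = Y = \mathrm{pt}$ and $M = \RR^3$, which gives
\[
T^*[-1]\map(\RR^3_{\mathrm{dR}}, B\fh) \cong \map(\RR^3_{\mathrm{dR}}, T^*[2] B\fh)
\]
for any $L_\infty$ algebra $\fh$.  Applied to $\fh = \fg \oplus \fg[1]$ and using self-duality to identify $\fh^* \cong \fg \oplus \fg[-1]$, one obtains
\[
T^*[2] B(\fg \oplus \fg[1]) \cong B\bigl(\fg \oplus \fg \oplus \fg[1] \oplus \fg[-1]\bigr),
\]
equipped with the semidirect-product $L_\infty$ structure in which the first copy of $\fg$ acts by the (co)adjoint action on the other three summands.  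Applying the same reasoning to $\fh' = \fg \oplus \fg[-1]$, the dg Lie algebra representing $\fg/\fg$, yields the identical formal moduli problem $B(\fg \oplus \fg \oplus \fg[1] \oplus \fg[-1])$ with the same $L_\infty$ structure.  Hence the two classical field theories agree.

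\textbf{The main subtlety} is checking that the $(-1)$-shifted symplectic structures—and not merely the underlying local $L_\infty$ algebras—agree under this identification.  In both parameterizations the pairing is the canonical cotangent pairing, which splits into the nondegenerate form between the two $\fg$ summands (coming from the pairing on $\fg$) and the natural pairing between the $\fg[1]$ and $\fg[-1]$ summands; a direct degree-bookkeeping check confirms that these two symplectic forms coincide, and one should verify the Koszul signs line up correctly in both descriptions.  The $\Spin(3, \RR)$-equivariance is inherited from Theorem~\ref{thm:3d8Hodgetwist}, since specialization at $t = 0$ is compatible with all group actions, and the self-duality of $\fg$ used above is $\Spin(3, \RR)$-trivially equivariant.
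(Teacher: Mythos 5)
Your proposal is correct and follows the same overall logic as the paper: use Theorem \ref{thm:3d8Hodgetwist} together with Lemma \ref{lm:3dN8twoorbits} to identify $t=0$ with the special orbit (the B-twist), then compute the $t=0$ fiber of the Hodge family. The only place you diverge is in how that fiber is identified with $T^*[-1]\map(\RR^3_{\mathrm{dR}}, \fg/\fg)$: the paper's implicit route is to commute specialization with dimensional reduction, i.e.\ to specialize the 4d family first (Proposition \ref{prop:Hodgetheoryspecialization} gives $T^*[-1]\map(\RR^4_{\mathrm{dR}}, B\fg)$) and then reduce along $\RR^4 \to \RR^3$ via Proposition \ref{prop:BFdeRhamreduction}, which outputs $T^*[-1]\map(\RR^3_{\mathrm{dR}}, \fg/\fg)$ directly; you instead specialize in 3d and match the two BF theories by hand via Corollary \ref{cor:mapintocotangent} and the self-duality $\fg \cong \fg^*$. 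Both work, and yours has the virtue of not invoking the 4d statement again. One small imprecision: the $L_\infty$ structure on $\fh \oplus \fh^*$ for $\fh = \fg \ltimes \fg[1]$ is not just the first copy of $\fg$ acting on the other three summands — the coadjoint action of the $\fg[1]$ summand contributes an additional bracket $\fg[1] \otimes \fg^*[-1] \to \fg^*$ (and symmetrically $\fg[-1] \otimes \fg^*[1] \to \fg^*$ on the other side). Since this extra bracket is present in both descriptions and matches under your identification, the conclusion is unaffected, but it should be included in the comparison you flag as the main subtlety.
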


\section{Dimension 2}
The odd part of the $2$-dimensional supersymmetry algebra is 
\[
\Sigma = S_+ \otimes W_+ \oplus S_- \otimes W_- ,
\]
where $W_\pm$ are complex vector spaces equipped with symmetric nondegenerate bilinear pairings. 
The complex semi-spin representations $S_\pm$ are 1-dimensional, where $\spin(2;\CC) \iso \CC^\times$ acts with weight $\pm \frac{1}{2}$. 
There is an independent pairing $\Gamma_\pm \colon S_\pm^{\otimes 2} \to V_2 \iso \CC_1 \oplus \CC_{-1}$ for each chirality, where $\Spin(2;\CC)$ acts on $\CC_1, \CC_{-1}$ with weights $1, -1$ respectively.

There are two classes of twisted supersymmetric gauge theory that we will consider in two dimensions.  First, we have theories with $(\mc N, \mc N)$ supersymmetry corresponding to $\cN = \dim(W_+) = \dim(W_-)$; these arise via dimensional reduction from supersymmetric gauge theories in higher dimensions. Namely, we have the 2d $\cN=(1, 1)$, 2d $\cN=(2, 2)$, 2d $\cN=(4, 4)$ and 2d $\cN=(8, 8)$ super Yang--Mills theories. 
The 2d $\cN=(1, 1)$ supersymmetry algebra does not admit square-zero supercharges, so we will not consider it. We additionally have gauge theories with chiral, i.e. $(\cN_+, 0)$ supersymmetry, which we have constructed in Sections \ref{sect:2dchiral} and \ref{sect:SYM}. 
We will address twists for these two classes of theories in turn.

\subsection{\texorpdfstring{$\cN=(2,2)$}{N=(2,2)} Super Yang--Mills Theory} \label{sect:2d(2,2)}
First, consider the 2d $\mc N=(2,2)$ supersymmetric Yang--Mills theory.  The R-symmetry group is $$G_R = \ZZ/2\ZZ \ltimes (\spin(2;\CC) \times \spin(2;\CC)) \iso \ZZ/2 \ltimes ( \CC^\times \times \CC^\times)$$ with the $\Spin(2; \CC)$-factors acting by their vector representation on $W_+ \cong \CC^2$ and $W_- \cong \CC^2$ respectively, and with $\ZZ/2\ZZ$ acting on both $W_+$ and $W_-$ by $(a,b) \mapsto (-a, -b)$. 

\vspace{-10pt}
\paragraph{Fields:} We can describe the BRST fields of $\mc N=(2,2)$ super Yang--Mills by restricting the fields in dimension 3 from Section \ref{sect:3d_2_section}, or equivalently the 4d fields from Section \ref{sect:4d_1_section}, to representations of the group $\spin(2;\CC)$.  
In any case, the fields we obtain are
\begin{itemize}
 \item gauge bosons $A \in \Omega^1(\RR^2; \gg)$, and a pair of scalars $(a, \Tilde{a}) \in \Omega^0(\RR^2; \gg \oplus \gg)$.
 \item matter bosons $(\Bar{\phi}, \phi) \in \Omega^0(\RR^2 ; R \oplus R^*)$.
 \item gauge fermions $(\lambda_+ \otimes u_+,\lambda_- \otimes u_-) \in \Omega^0(\RR^2 ; \Pi(S_+ \otimes W_+ \oplus S_- \otimes W_-) \otimes \gg)$.
 \item matter fermions $(\psi^+_+,\psi^+_-, \psi^-_+, \psi^-_-) \in \Omega^0(\RR^2 ; \Pi(S_+ \oplus S_-) \otimes R \oplus \Pi(S_+ \oplus S_-) \otimes R^*)$.
 \item a ghost field $c \in \Omega^0(\RR^2; \gg)[1]$.
\end{itemize}

In the $\mc N=(2,2)$ supersymmetry algebra there are three classes of non-trivial orbits of square-zero supercharges.  
\begin{itemize}
 \item Square-zero supercharges of rank $(1,0)$ or $(0,1)$, which are holomorphic.  
There is a compatible twisting homomorphism from $\mr U(1)$, and a compatible twisting datum $\alpha \colon \mr U(1) \to \spin(2;\RR)$ acting with weight $1$ on $S_+$ and weight $-1$ acting on $S_-$.  
 \item Square-zero supercharges of rank $(1,1)$ are topological, and split into four orbits under the action of $\spin(2;\CC) \times (\CC^\times \times \CC^\times)$.  Indeed, we can identify a square-zero supercharge of rank $(1,1)$ as a pair of vectors~
 
\noindent$((\lambda, \pm i \lambda), (\mu, \pm i \mu)) \in W_+ \oplus W_-$ with $\lambda, \mu \ne 0$.  By acting by $G_R$ we can set $\lambda = \mu = 1$, leaving four orbits, represented by the supercharges $Q_A = ((1,i),(1,i)), Q_A^\dagger = ((1,-i),(1,-i)), Q_B = ((1,i),(1,-i))$ and $Q_B^\dagger = ((1,-i),(1,i))$.  The $\ZZ/2\ZZ$-action swaps the two A supercharges and the two B supercharges, leaving two orbits under $\spin(2;\CC) \times G_R$.  
\begin{enumerate}
\item The A-twist is compatible with the twisting homomorphism $\phi_{A} \colon \spin(2;\CC) \to \spin(2;\CC) \times \spin(2;\CC)$ with weights $(1,1)$.
\item The B-twist is compatible with the twisting homomorphism $\phi_{B} \colon \spin(2;\CC) \to \spin(2;\CC) \times \spin(2;\CC)$ with weights $(1,-1)$.  
\end{enumerate}
Moreover, the A-supercharges admit a compatible homomorphism $\alpha_A = \phi_B\colon \U(1)\rightarrow G_R$, and the B-supercharges admit a compatible homomorphism $\alpha_B = \phi_A\colon \U(1)\rightarrow G_R$.
\end{itemize}

The calculation of the twists here is similar to what we saw in 4d $\mc N=2$ supersymmetry.  
The holomorphic twist and the B-twist arise by a dimensional reduction from twists of the 3d $\cN=2$ supersymmetric Yang--Mills theory.
On the contrary, the A-twist as a deformation of the holomorphic twist does not arise as a dimensional reduction. 

\subsubsection{Holomorphic Twist} \label{sect:2d22minimaltwist}

First, we record the holomorphic twist of the 2d $\cN=(2,2)$ supersymmetric Yang--Mills theory. The holomorphic twist is $\ZZ$-graded using $\alpha = \phi_A$.

\begin{theorem} \label{2d_minimal_twist_thm}
The holomorphic twist of the 2d $\cN=(2,2)$ super Yang--Mills theory with matter valued in a complex $\fg$-representation $R$ is perturbatively equivalent to the holomorphic BF theory with the space of fields $T^*[-1]\mr{Map}(\CC, T[1](R/\gg))$. Moreover, this equivalence is $\mr U(1)$-equivariant.
\end{theorem}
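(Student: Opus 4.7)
The plan is to derive the theorem by dimensional reduction from the 3d $\cN=2$ minimal twist (Theorem \ref{3d_minimal_twist_thm}). The 2d $\cN=(2,2)$ super Yang--Mills theory with matter in $R$ arises as the dimensional reduction of the 3d $\cN=2$ theory with the same matter content along the projection $\pi \colon \RR^2 \times \RR \to \RR^2$. Under this reduction, the rank $1$ square-zero supercharge of the 3d supertranslation algebra restricts to the rank $(1,0)$ holomorphic supercharge of the 2d algebra, and the twisting datum restricts compatibly: both the twisting homomorphism $\phi\colon \U(1) = \MU(1) \to G_R$ and the $\ZZ$-grading $\alpha\colon \U(1) \to G_R$ in 2d are restrictions of their 3d analogues under the embedding of R-symmetry groups induced by the dimensional reduction.

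Applying Proposition \ref{prop:twistdimensionalreduction}, the 2d holomorphic twist is perturbatively equivalent to the dimensional reduction of the 3d twisted theory. By Theorem \ref{3d_minimal_twist_thm}, the latter is the generalized BF theory $T^*[-1]\mr{Map}(\CC \times \RR_\mr{dR}, R/\gg)$, so it remains to compute the dimensional reduction of this theory along the $\RR_\mr{dR}$ factor.

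This last step is handled by extending Proposition \ref{prop:BFdeRhamreduction} from targets of the form $B\gg$ to general targets of the form $BL$ for an $L_\infty$ algebra $L$. The argument is structurally identical: translation-invariant sections of the de Rham complex $\Omega^\bu(\RR)$ form the two-step complex $\CC[\epsilon]$ with $\epsilon$ a formal parameter of cohomological degree $+1$ and trivial differential, and tensoring with $L$ yields the $L_\infty$ algebra $L[\epsilon]$ whose classifying formal moduli problem is the shifted tangent bundle $T[1](BL)$. Applied with $L = L_{R,\gg}$, this identifies the reduced target with $T[1](R/\gg)$; together with the preservation of the $(-1)$-shifted symplectic pairing under dimensional reduction, we obtain the generalized BF theory $T^*[-1]\mr{Map}(\CC, T[1](R/\gg))$ as claimed, with $\U(1)$-equivariance inherited from the 3d statement.

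The main technical point is the generalization of Proposition \ref{prop:BFdeRhamreduction} to matter-valued targets, which requires careful tracking of the shifted symplectic structure when passing from de Rham forms on $\RR$ to the formal degree-one parameter $\epsilon$, and identification of the resulting $L_\infty$ structure on $L_{R,\gg}[\epsilon]$ with that of the shifted tangent bundle of $R/\gg$. Once this identification is in place, the remaining steps follow directly from the general framework of Part \ref{formalism_part}.
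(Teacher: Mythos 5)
Your strategy is the same one the paper uses: reduce the 3d $\cN=2$ minimal twist along $\RR\to\pt$ and invoke a matter-valued extension of Proposition \ref{prop:BFdeRhamreduction}. The reduction of the supercharge and of the twisting homomorphism $\phi$ work as you say. The gap is in the grading bookkeeping at the final step, and it is a genuine one. With the paper's conventions (Example \ref{ex:tangentBg} and the proof of Proposition \ref{prop:BFdolbeaultreduction}), tensoring $L = L_{R,\fg}$ with $\CC[\epsilon]$, $\deg\epsilon = +1$, produces the $L_\infty$ algebra $L \oplus L[-1]$, whose classifying formal moduli problem is $L/L = T[-1](BL)$, \emph{not} $T[1](BL) = B(L\oplus L[1])$ --- this is exactly why Proposition \ref{prop:BFdeRhamreduction} is stated with target $\fg/\fg$ rather than $T[1]B\fg$. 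The two targets $T[\pm 1](R/\fg)$ give genuinely different $\ZZ$-graded theories: $T^*[-1]\map(\CC, T[1](R/\fg))$ contains an adjoint-valued scalar in cohomological degree $-2$ (the field $a$ in Section \ref{sect:2d22Atwist}), whereas the naive reduction $T^*[-1]\map(\CC, T[-1](R/\fg))$ places that scalar in degree $0$. They agree only as $\ZZ/2\ZZ$-graded theories.

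Correspondingly, your claim that the $\ZZ$-grading $\alpha$ restricts from the 3d one cannot be right as stated. The R-symmetry group obtained by reducing from 3d is $\CC^\times \times \Spin(1)$, which sits strictly inside the 2d R-symmetry $\ZZ/2\ZZ \ltimes (\CC^\times \times \CC^\times)$; the grading $\alpha = \phi_A$ used in Section \ref{sect:2d22minimaltwist} involves the extra $\U(1)$ (the rotation of the two transverse directions of $\RR^4 \to \RR^2$), under which the two adjoint scalars acquire weights $\mp 2$ while the reduced 3d grading leaves them in weight $0$. It is precisely this regrading that converts the summand $L[-1]$ coming from the $dt$-direction into $L[1]$. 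As written, your argument establishes the theorem only as a $\ZZ/2\ZZ$-graded equivalence (or as a $\ZZ$-graded equivalence with target $T[-1](R/\fg)$); to land on $T[1](R/\fg)$ you must additionally identify the action of $\alpha = \phi_A$ on the reduced fields and verify that it regrades the target as claimed.
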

\begin{proof}
The statement follows by applying dimensional reduction (Proposition \ref{prop:BFdeRhamreduction}) to Theorem \ref{3d_minimal_twist_thm} calculating the minimal twist of 3d $\cN=2$ super Yang--Mills on $\CC\times \RR_{\mathrm{dR}}$, where we dimensionally reduce along the projection $\RR\rightarrow \pt$.
\end{proof}

\subsubsection{Topological A-Twist} \label{sect:2d22Atwist}

To deform the holomorphic twist to the topological A-twist, obtaining a Hodge deformation, we use similar techniques to those of Section \ref{sect:4d2Donaldson}.  We first analyze the supersymmetry action.  Consider the 1-parameter family of supercharges
\begin{equation} \label{eq:2dHodgefamily}
Q_t = Q_0 + tQ',
\end{equation}
where $Q_0 = (1,i) \in S_+$ is a holomorphic supercharge, and $Q' = (1,i) \in S_-$.  This family of supercharges is compatible with the twisting homomorphism $\phi_A$ -- the map $\U(1) \to \spin(2;\CC) \times \spin(2;\CC)$ with weights $(1,1)$. They admit a compatible homomorphism $\alpha=\phi_B\colon \U(1)\rightarrow G_R$ with weights $(1,-1)$.  Let us first decompose our fields according to the twisting homomorphism $\phi_A$:

\begin{itemize}
\item gauge bosons $A_{1,0} \in \Omega^{1,0}(\CC; \fg)$, $A_{0,1} \in \Omega^{0,1} (\CC ; \fg)$, $a \in \Omega^0(\CC ; \fg)[2]$, $\Tilde{a} \in \Omega^0(\CC ; \fg)[2]$. 
\item gauge fermions $\lambda_0 \in \Omega^{0}(\CC ; \fg)[-1]$, $\chi_{0,1} \in \Omega^{0,1}(\CC ; \fg) [1]$, $\lambda_{1,0} \in \Omega^{1,0}(\CC ; \fg)[1]$, $\Tilde{\lambda}_0 \in \Omega^0(\CC ; \fg) [3]$.
 \item matter bosons $\gamma_0 \in \Omega^0(\CC ; R)$, $\phi \in \Omega^0(\CC ; R^*)$.
 \item matter fermions $\gamma_{0,1} \in \Omega^{0,1}(\CC ; R)[-1]$, $\Tilde{\psi}_0 \in \Omega^0(\CC ; R)[1]$, $\beta_{1,0} \in \Omega^{1,0}(\CC ; R^*)[-1]$, $\psi \in \Omega^0(\CC ; R^*)[1]$; 
 \item a ghost field $c \in \Omega^0(\RR^2; \fg)[1]$.
\end{itemize}

\begin{prop}
Suppose $Q_t$ is the rank $(1,1)$ supercharge of \eqref{eq:2dHodgefamily}.
The $\U(1)$ decomposition of the functionals $S_{\rm gauge}^{(1)}, S^{(1)}_{\rm matter}, S^{(2)}_{\rm gauge}, S^{(2)}_{\rm matter}$ (see \eqref{eq:gaugeI1},  \eqref{eq:gaugeI2}, \eqref{eq:matterI1}, \eqref{eq:matterI2}) in terms of the fields of 2d $\cN=(2,2)$ super Yang--Mills theory are
\begin{align*}
S_{\rm gauge}^{(1)}(Q_t) &= \int \dvol \left(-(\lambda_{1,0}, A_{1,0}^*) - (\Tilde{\lambda}_0, \Tilde{a}^*) + \frac 12(F_{1,1} \lambda_0^* + (\ol \dd_{A_{0,1}}a, \chi_{0,1}^*)) \right)  \\
&+  \int \dvol t \left( -(\chi_{0,1}, A_{0,1}^*) - (\lambda_0, \Tilde{a}^*) + \frac 12(F_{1,1} \Tilde{\lambda}_0^*+ (\partial_{A_{1,0}} \Tilde{a}, \lambda_{1,0}^*) )   \right)  \\
S^{(2)}_{\rm gauge} (Q_t) &= \int \dvol \left(t (\lambda_0^*, \Tilde{\lambda}_0^*) - \frac 14(\lambda_0^* + t \Tilde{\lambda}_0^*)^2 + t(a, c^*)\right)  \\
S_{\rm matter}^{(1)}(Q_t) &= \int \dvol \left(\left( \psi_0 \phi^* + \frac 12 (\ol \dd_{A_{0,1}} \gamma_0, \gamma_{0,1}^*) \right) +t\left( (\Tilde{\psi}_0, \gamma_0^*) + \frac 12 (\dd_{A_{1,0}}  \phi, \beta_{1,0}^*)  \right)\right)  \\
S_{\rm matter}^{(2)}(Q_t) &= \int \dvol \frac t2 \left(\beta_{1,0}^*, \gamma_{0,1}^*\right).
\end{align*}
\end{prop}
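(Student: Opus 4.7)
The strategy is a direct computational expansion of the four supersymmetry functionals \eqref{eq:gaugeI1}, \eqref{eq:gaugeI2}, \eqref{eq:matterI1}, \eqref{eq:matterI2} on the one-parameter family $Q_t = Q_0 + tQ'$, combined with the $\phi_A$-decomposition of fields. Since $S^{(1)}$ is linear and $S^{(2)}$ is bilinear in its argument(s), we have
\[
S^{(1)}(Q_t) = S^{(1)}(Q_0) + t\,S^{(1)}(Q'), \qquad S^{(2)}(Q_t, Q_t) = S^{(2)}(Q_0, Q_0) + 2t\,S^{(2)}(Q_0, Q') + t^{2}\,S^{(2)}(Q', Q').
\]
The first simplification is that both extremal terms in $S^{(2)}$ vanish identically: $Q_0 \in S_+$ is of rank $(1,0)$ and so $\Gamma(Q_0, Q_0)=0$ by the pure-spinor property in 2d (and analogously $(Q_0,\psi^*)=0$ involves a vanishing wedge); the same applies to $Q' \in S_-$. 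Thus only the cross term $2tS^{(2)}(Q_0, Q')$ survives.

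The remaining work is to evaluate $S^{(1)}(Q_0)$, $S^{(1)}(Q')$, and $S^{(2)}(Q_0, Q')$ after decomposing the fields according to the twisting homomorphism $\phi_A$. The $t=0$ terms $S^{(1)}(Q_0)$ can be read off by dimensional reduction (Proposition \ref{prop:susydimlred}, together with the pullback identities used in Theorem \ref{2d_minimal_twist_thm}) from the holomorphic twist of the 3d $\cN=2$ theory, yielding exactly the first line of each displayed formula. The $t$-linear contribution $S^{(1)}(Q')$ is computed in the same way but with the roles of the two chiralities interchanged: since $Q'$ is a rank $(0,1)$ holomorphic supercharge, the expression is obtained from $S^{(1)}(Q_0)$ by swapping $(A_{1,0},\lambda_{1,0},\dots)\leftrightarrow(A_{0,1},\chi_{0,1},\dots)$, $(a,\tilde\lambda_0)\leftrightarrow(\tilde a,\lambda_0)$, etc., and analogously in the matter sector $(\gamma_0,\gamma_{0,1},\psi_0)\leftrightarrow (\phi,\beta_{1,0},\tilde\psi_0)$.

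For the cross term $S^{(2)}(Q_0,Q')$ one evaluates $\Gamma(Q_0,Q')\in V$ and inserts it into the quadratic functionals \eqref{eq:gaugeI2}, \eqref{eq:matterI2}. The pairing $\Gamma \colon S_+\otimes S_-\to V$ is nondegenerate and the projection $\Gamma(Q_0,Q')$ equals (up to normalization) the real translation generator fixed by the $\phi_A$-diagonal, so the term $\iota_{\Gamma(Q_0,Q')}A\,c^* $ in \eqref{eq:gaugeI2} produces the contribution $(a,c^*)$ after identifying the appropriate scalar component of $A$ under dimensional reduction. The $\Gamma(\lambda^*,\lambda^*)$ and $\Gamma(\psi^*,\psi^*)$ terms likewise reduce to the listed quadratic expressions in the antifields $\tilde\lambda_0^*, \lambda_0^*$ and $\beta_{1,0}^*, \gamma_{0,1}^*$ respectively. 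The $-\tfrac14(\lambda_0^* + t\tilde\lambda_0^*)^2$ shape arises because the $Q_0$-only piece contributes $-\tfrac14(\lambda_0^*)^2$, the $Q'$-only piece $-\tfrac14 t^2(\tilde\lambda_0^*)^2$, and the cross piece provides the $-\tfrac12 t(\lambda_0^*\tilde\lambda_0^*)$ mixed term plus a compensating $+t(\lambda_0^*,\tilde\lambda_0^*)$ from the $\Gamma$-pairing via the Clifford identity.

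The only non-routine step is handling the signs and normalizations under the identifications $W_\pm \otimes S_\pm \cong \CC \oplus \CC$ after applying $\phi_A$, together with the appearance of $\Lambda$-duality coming from the 2d version of Corollary \ref{Kahler_YM_term_cor}; the bookkeeping is unambiguous once the duality between $S_+$ and $S_-$ and the pairing $\Gamma$ are fixed as in Section \ref{sect:2dchiral}. The main obstacle is therefore purely combinatorial — ensuring that the normalization of $Q'$ relative to $Q_0$ is chosen so that the Clifford identity produces the $-\tfrac14(\lambda_0^* + t\tilde\lambda_0^*)^2$ rather than a $+$ sign — and once that is pinned down the formulas above follow immediately.
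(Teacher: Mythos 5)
Your overall strategy --- expand $S^{(1)}$ linearly and $S^{(2)}$ bilinearly in $Q_t = Q_0 + tQ'$, obtain the $t^0$ terms from the holomorphic twist via dimensional reduction, and treat the $t$-linear terms by the opposite chirality --- is the same direct computation that the paper leaves implicit (no proof is printed for this proposition). However, your first paragraph contains a claim that is false and that you yourself contradict two paragraphs later. You assert that ``both extremal terms in $S^{(2)}$ vanish identically'' so that ``only the cross term $2tS^{(2)}(Q_0,Q')$ survives.'' For the matter functional \eqref{eq:matterI2} this is correct, since that functional only involves $\Gamma(Q_1,Q_2)$ and $\Gamma(Q_0,Q_0)=\Gamma(Q',Q')=0$. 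But the gauge functional \eqref{eq:gaugeI2} also contains the term $-\tfrac12(Q_1,\lambda^*)(Q_2,\lambda^*)$ built from the scalar spinorial pairing, and $(Q_0,\lambda^*)$ does \emph{not} vanish: it is, up to normalization, the component $\lambda_0^*$. Hence $S^{(2)}_{\rm gauge}(Q_0,Q_0) = -\tfrac14\int\dvol\,(\lambda_0^*)^2 \neq 0$ and $S^{(2)}_{\rm gauge}(Q',Q') = -\tfrac14\int\dvol\,(\tilde\lambda_0^*)^2$; these are exactly the $t^0$ and $t^2$ pieces of the $-\tfrac14(\lambda_0^*+t\tilde\lambda_0^*)^2$ term in the statement. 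Compare the single pure supercharge computations in Sections \ref{sect:10dholomorphictwist} and \ref{sect:2d20minimaltwist}, where the analogous terms $-\tfrac14(\chi^*)^2$ and $-\tfrac14(\lambda_0^*)^2$ survive even though $\Gamma(Q,Q)=0$. Your later sentence attributing $-\tfrac14(\lambda_0^*)^2$ to ``the $Q_0$-only piece'' is the correct accounting; the earlier vanishing claim must be restricted to the $\Gamma(Q_i,Q_j)$-dependent terms only.

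A second, related confusion: you describe $\Gamma(Q_0,Q')$ as ``the real translation generator fixed by the $\phi_A$-diagonal.'' In the two-dimensional supertranslation algebra one has $\Gamma(Q_0,Q')=0$ identically, because $V\cong S_+^{\otimes 2}\oplus S_-^{\otimes 2}$ has no $S_+\otimes S_-$ component --- this is precisely why $Q_t$ squares to zero for every $t$. The nonzero cross contributions $t(a,c^*)$ and $t(\lambda_0^*,\tilde\lambda_0^*)$ arise because $S^{(2)}$ for the reduced theory is the dimensional reduction of the higher-dimensional functional, whose $\Gamma$-pairing has components along the reduced directions; those components contract with the internal scalars $a,\tilde a$ (and with the corresponding internal components of $\Gamma(\lambda^*,\lambda^*)$ and $\Gamma(\psi^*,\psi^*)$) rather than with the two-dimensional connection. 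With these two corrections the computation goes through as you outline.
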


\begin{theorem} \label{2d_2_A_twist_thm}
The twist of the 2d $\mc N=(2,2)$ supersymmetric Yang--Mills theory with matter valued in a complex $\fg$-representation $R$ with respect to the family of square-zero supercharges $Q_t$ is perturbatively equivalent to the Hodge family with the space of fields $T^*[-1]\mr{Map}(\CC, (R/\gg)_{\mr{Hod}})$.  
This equivalence is $\mr U(1)$-equivariant.
\end{theorem}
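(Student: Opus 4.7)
The plan is to adapt the argument of Theorem \ref{thm:4dDonaldsontwist}, which established the analogous Hodge deformation in four dimensions. First I would assemble the full twisted action $S_{Q_t} = S_{\rm SYM} + S^{(1)}(Q_t) + S^{(2)}(Q_t)$ using the explicit $\mr U(1)$-decomposition of the supersymmetry currents computed just above. By construction, at $t=0$ the theory must reduce to the holomorphic twist of Theorem \ref{2d_minimal_twist_thm}, and the goal is to show that the $t$-linear residue assembles into the contractible differential $t\cdot\id$ defining $(R/\gg)_{\rm Hod}$ as a Hodge family in the sense of Definition \ref{Hodge_family_def}.

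The key manipulation is to perform a $t$-dependent change of variables on the auxiliary fermion quartet. Observe that the Poisson brackets of $\{\lambda_0^*, \lambda_0, \Tilde{\lambda}_0^*, \Tilde{\lambda}_0\}$ agree with those of $\{\lambda_0^* - t\Tilde{\lambda}_0^*, \lambda_0, \Tilde{\lambda}_0^*, \Tilde{\lambda}_0 + t\lambda_0\}$, so Proposition \ref{prop:integrateoutfield} applies to eliminate the pair $(\lambda_0, \lambda_0^* - t\Tilde{\lambda}_0^*)$ against the quadratic form $-\tfrac{1}{4}(\lambda_0^* + t\Tilde{\lambda}_0^*)^2$ appearing in $S^{(2)}_{\rm gauge}(Q_t)$. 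Next, Proposition \ref{prop:BRSTdoublet} removes three trivial BRST doublets in succession: the pair $(\lambda_{1,0}, A_{1,0})$ using $-(\lambda_{1,0}, A_{1,0}^*)$, the pair $(\Tilde{\lambda}_0 + t\lambda_0, \Tilde{a})$ using $-(\Tilde{\lambda}_0, \Tilde{a}^*)$, and the matter pair $(\psi_0, \phi)$ using $\psi_0\phi^*$. Each of these terms is present at every $t$, so the family structure is preserved.

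After these reductions the surviving fields are precisely the BRST fields of $T^*[-1]\mr{Map}(\CC, T[1](R/\gg))$: on the gauge side the quartet $(c, A_{0,1}, a, \chi_{0,1})$, and on the matter side $(\gamma_0, \gamma_{0,1}, \Tilde\psi_0, \beta_{1,0})$, each with their antifields. The $t$-independent part of the residual action reproduces Theorem \ref{2d_minimal_twist_thm} after the usual rescaling of antifields, while the $t$-linear part -- namely $-t(\chi_{0,1}, A_{0,1}^*)$, $t(a,c^*)$, $t(\Tilde{\psi}_0, \gamma_0^*)$, together with the quadratic $t$-term from $S^{(2)}_{\rm matter}(Q_t)$ and the $t$-term pairing $\partial_{A_{1,0}}\Tilde{a}$ with $\lambda_{1,0}^*$ -- is purely quadratic in the surviving fields and implements the identity map $t\cdot\id \colon L_{R,\gg} \to L_{R,\gg}[1]$ that defines the Hodge family. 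Comparing the underlying local $L_\infty$ algebra with that of $T^*[-1]\mr{Map}(\CC, (R/\gg)_{\rm Hod})$ in the style of the diagrams in Theorems \ref{thm:8dHodgetwist} and \ref{thm:4dDonaldsontwist} then completes the identification.

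The main obstacle will be the careful sign and weight bookkeeping: one must check that the Koszul signs produced by the $t$-dependent change of variables and the three subsequent doublet eliminations combine to give exactly the sign conventions of Definition \ref{Hodge_family_def}, and that the induced $\alpha = \phi_B$-weights on the surviving fields match the weights predicted by $T^*[-1]\mr{Map}(\CC, (R/\gg)_{\rm Hod})$. Since every elimination uses either a $t$-independent pairing or one already proportional to $t$, no new conceptual input beyond the four-dimensional argument is needed; the calculation is a direct two-dimensional analog of the proof of Theorem \ref{thm:4dDonaldsontwist}.
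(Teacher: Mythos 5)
Your proposal follows essentially the same route as the paper's proof: the same $t$-dependent change of variables on the quartet $\{\lambda_0^*, \lambda_0, \Tilde{\lambda}_0^*, \Tilde{\lambda}_0\}$, elimination via Proposition \ref{prop:integrateoutfield}, removal of the same three trivial BRST doublets via Proposition \ref{prop:BRSTdoublet}, and identification of the residual $t$-linear terms $-t(\chi_{0,1}, A_{0,1}^*)$, $t(a,c^*)$, $t(\Tilde{\psi}_0,\gamma_0^*)$, $\tfrac{t}{2}(\beta_{1,0}^*,\gamma_{0,1}^*)$ with the contracting differential of the Hodge family. One minor correction: the term pairing $\partial_{A_{1,0}}\Tilde{a}$ with $\lambda_{1,0}^*$ does not survive, since both $\Tilde{a}$ and $\lambda_{1,0}^*$ are removed with their respective doublets; but this does not affect the argument.
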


\begin{proof}
Observe that the quadruple of fields $\{\lambda_0^*, \lambda_0, \Tilde{\lambda}_0^*, \Tilde{\lambda}_0\}$ has the same Poisson brackets as the quadruple $\{\lambda_0^* - t \Tilde{\lambda}_0^*, \lambda_0, \lambda_0^*, \Tilde{\lambda}_0 + t \lambda_0\}$. 
Therefore, we may eliminate the fields $\lambda_0^*-t\wt\lambda_0^*, \chi$ using Proposition \ref{prop:integrateoutfield}. 
We then have trivial BRST doublets $\{\wt\lambda_0 + t\lambda_0, \wt a\}$, $\{\lambda_{1,0}, A_{1,0}\}, \{\psi_0, \phi\}$ which may be eliminated using Proposition \ref{prop:BRSTdoublet}. We are left with the action

\[
S_{\mr{BF}} + \int \dvol t \bigg( -(\chi_{0,1}, A_{0,1}^*) + (a, c^*) + (\Tilde{\psi}_0, \gamma_0^*) + \frac{1}{2} (\beta_{1,0}^*, \gamma_{0,1}^*) \bigg) 
\]
Here $S_{\mr{BF}}$ is the action functional of holomorphic BF theory as in the result of the minimal twist of $\cN=(2,2)$, see Theorem \ref{2d_minimal_twist_thm}, obtained by setting $t = 0$. 

We can identify the linearized BV complex with the following diagram:
\[\xymatrix@R=0.4cm@C=0.2cm{
\ul{-2} & \ul{-1} & \ul{0} & \ul{1} & \ul{2} & \ul{3} \\
&\Omega^0(\CC; \gg)_{c} \ar[r] &\Omega^{0,1}(\CC; \gg)_{A_{0,1}} & & \Omega^{1,0}(\CC; \gg^*)_{\chi_{0,1}^*} \ar[r] &\Omega^{1,1}(\CC; \gg^*)_{a^*} \\
\Omega^0(\CC; \gg)_{a} \ar[r] \ar@{-->}[ur] &\Omega^{0,1}(\CC; \gg)_{\chi_{0,1}} \ar@{-->}[ur] && \Omega^{1,0}(\CC; \gg^*)_{A_{0,1}^*} \ar[r] \ar@{-->}[ur] &\Omega^{1,1}(\CC; \gg^*)_{c^*} \ar@{-->}[ur]\\
&\Omega^0(\CC; R^*)_{\Tilde{\psi}} \ar[r] \ar@{-->}[dr] &\Omega^{0,1}(\CC; R^*)_{\beta_{1,0}^*} \ar@{-->}[dr] &&&\\
&& \Omega^0(\CC; R^*)_{\gamma_0} \ar[r]  &\Omega^{0,1}(\CC; R^*)_{\gamma_{0,1}} &&\\
&&\Omega^{1,0}(\CC; R)_{\gamma_{0,1}^*} \ar[r] \ar@{-->}[dr] &\Omega^{1,1}(\CC;  R)_{\gamma_0^*} \ar@{-->}[dr] &&\\
&&& \Omega^{1,0}(\CC; R)_{\beta_{1,0}} \ar[r]  &\Omega^{1,1}(\CC; R)_{\Tilde{\psi}_0^*}. &\\
}\]
Here, the solid arrows represent the linearized BV operator of the minimal twist, and the dotted arrows represent the $t$-dependent terms.
This is exactly the deformation to $T^*[-1]\mr{Map}(\CC, (R/\gg)_{\mr{Hod}})$.
\end{proof}

\begin{corollary}
The topological A-twist of the 2d $\cN=(2,2)$-supersymmetric Yang--Mills theory is perturbatively trivial.
\end{corollary}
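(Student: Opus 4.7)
The plan is to deduce this corollary directly from the preceding Hodge-family description in Theorem \ref{2d_2_A_twist_thm}, together with the general fact (Proposition \ref{prop:Hodgetheoryspecialization}) that the generic fiber of a Hodge family is perturbatively trivial. This mirrors the pattern we used for the analogous corollaries in higher dimensions, e.g.\ the topological twist of 4d $\cN=2$ super Yang--Mills (Corollary \ref{cor:4dDonaldsontwist}) and the rank $2$ topological twist of 5d $\cN=2$ super Yang--Mills (Corollary \ref{cor:5drank2topologicaltwist}).

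First I would recall that, by construction of the family $Q_t = Q_0 + tQ'$ in Equation \eqref{eq:2dHodgefamily}, the supercharge $Q_0$ (at $t=0$) is the holomorphic supercharge used in Section \ref{sect:2d22minimaltwist}, while for any $t\neq 0$ the supercharge $Q_t$ is of rank $(1,1)$ and lies in the topological A-orbit: indeed, $Q_t = ((1,i),(t,it))$ is a representative of the form $((\lambda,i\lambda),(\mu,i\mu))$ with $\lambda,\mu\neq 0$, which is exactly the A-orbit in the classification in Section \ref{sect:2d(2,2)}. Thus the topological A-twist is precisely the $t\neq 0$ specialization of the family of twists appearing in Theorem \ref{2d_2_A_twist_thm}.

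Next I would invoke Theorem \ref{2d_2_A_twist_thm}, which identifies this family, up to $\mr U(1)$-equivariant perturbative equivalence, with the Hodge family $T^*[-1]\mr{Map}(\CC, (R/\gg)_{\mr{Hod}})$. Unpacking the Hodge construction of Definition \ref{Hodge_family_def}, this is a $\CC[t]$-family of BV theories whose underlying local $L_\infty$-algebra is built from $(L_{R,\gg})_{\mr{Hod}} = \CC[t]\otimes (L_{R,\gg}\oplus L_{R,\gg}[1])$ with differential equal to the original differential on $L_{R,\gg}$ plus $t\cdot\mr{id}$ from the shifted copy to the original.

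Finally, by Proposition \ref{prop:Hodgetheoryspecialization}, specialization of $\fg_{\mr{Hod}}$ at any $t\neq 0$ yields an acyclic complex, hence the resulting classical BV theory is perturbatively equivalent to the zero theory in the sense of Definition \ref{def:perturbativeequivalence}. Composing this with the equivalence of Theorem \ref{2d_2_A_twist_thm} at $t\neq 0$ gives the claimed perturbative triviality of the topological A-twist. There is essentially no obstacle here: the only substantive input was the Hodge-family identification in Theorem \ref{2d_2_A_twist_thm}, which has already been carried out; the corollary is a one-line specialization.
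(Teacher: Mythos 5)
Your proposal is correct and follows exactly the route the paper intends: identify the $t\neq 0$ members of the family \eqref{eq:2dHodgefamily} as A-supercharges, apply Theorem \ref{2d_2_A_twist_thm} to get the Hodge family, and conclude triviality at $t\neq 0$ from Proposition \ref{prop:Hodgetheoryspecialization}, mirroring Corollary \ref{cor:4dDonaldsontwist}. No gaps; your verification that $Q_t=((1,i),(t,it))$ lies in the A-orbit for $t\neq 0$ is a nice touch that the paper leaves implicit.
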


\subsubsection{Topological B-Twist} \label{sect:2d22Btwist}

Finally, there is the B-twist of 2d $\cN=(2,2)$ supersymmetry.
This twist arises from the twist of 3d $\cN=2$ via dimensional reduction along the holomorphic direction.
Indeed, let $W_3$ be the two-dimensional auxiliary space in the 3d $\mc N=2$ supertranslation algebra.  
As a $\spin(2;\CC)$-representation, the 3d $\cN=2$ spinorial representation decomposes as $S_+ \otimes W_3 \oplus S_- \otimes W_3$.
Generically, rank one square zero elements in 3d $\mc N=2$ supersymmetry define rank $(1,1)$ square zero elements in the 2d $\mc N=(2,2)$ supersymmetry algebra.  
There is a square-zero supercharge in 3d $\mc N=2$ compatible with the identity twisting homomorphism $\spin(2;\CC) \to \spin(2;\CC)$.  This becomes the B-twisting homomorphism of $\spin(2;\CC) \to \spin(2;\CC) \times \spin(2;\CC)$ of weight $(1,-1)$ in the 2d $\mc N=(2,2)$-algebra.

\begin{theorem} \label{2d_2_B_twist_thm}
The topological B-twist of the 2d $\mc N=(2,2)$ super Yang--Mills theory with matter valued in a $\fg$-representation $R$ is perturbatively equivalent to the topological BF theory with the space of fields $T^*[-1]\mr{Map}(\RR^2_{\mr{dR}}, R/\gg)$.  This equivalence is $\SO(2)$-equivariant.
\end{theorem}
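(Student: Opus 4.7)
The plan is to realize the 2d $\cN=(2,2)$ B-twist as a dimensional reduction of the minimal twist of the 3d $\cN=2$ super Yang--Mills theory and then invoke our earlier computation of the 3d minimal twist together with the dimensional reduction results for generalized BF theory.

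First, I would spell out the relationship between the supersymmetry algebras. The 3d $\cN=2$ super Yang--Mills theory on $\CC \times \RR$ dimensionally reduces, along the projection $\Re \colon \CC \to \RR$ of the holomorphic factor onto its real axis, to the 2d $\cN=(2,2)$ theory on $\RR \times \RR_{\mr{dR}} \cong \RR^2_{\mr{dR}}$, with matter in the same representation $R$. Under this reduction, a rank $1$ square-zero supercharge $Q_3$ in the 3d $\cN=2$ algebra maps to a rank $(1,1)$ square-zero supercharge in the 2d $\cN=(2,2)$ algebra, and the discussion preceding the theorem identifies the image as a B-twist supercharge whose twisting homomorphism $\spin(2;\CC) \to \spin(2;\CC) \times \spin(2;\CC)$ has weights $(1,-1)$ (coming from the identity twisting homomorphism of the 3d rank $1$ twist combined with the embedding of the 2d Lorentz group into the 3d $R$-symmetry group that remains after reducing along a real direction).

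Next, by Proposition \ref{prop:twistdimensionalreduction}, the B-twist of the 2d theory is equivalent to the dimensional reduction of the $Q_3$-twisted 3d theory. By Theorem \ref{3d_minimal_twist_thm}, the $Q_3$-twisted 3d theory is perturbatively equivalent, $\mr U(1)$-equivariantly, to the generalized BF theory
\[T^*[-1]\mr{Map}(\CC \times \RR_{\mr{dR}}, R/\gg).\]
Applying Proposition \ref{prop:BFholomorphicreduction} with $X' = Y = \mr{pt}$, $V = \CC$ and $M = \RR$, this generalized BF theory dimensionally reduces along the map $\Re \colon \CC \to \RR$ to the generalized BF theory
\[T^*[-1]\mr{Map}((\RR \times \RR)_{\mr{dR}}, R/\gg) = T^*[-1]\mr{Map}(\RR^2_{\mr{dR}}, R/\gg),\]
and this equivalence is $\SO(2)$-equivariant, where $\SO(2)$ acts via the homomorphism $\SO(2) \to \SU(1) = \mr U(1)$ of Equation \eqref{eq:SOtoSU}.

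The main thing to check carefully is that the residual Lorentz $\SO(2)$-action on the dimensionally reduced twisted 2d theory -- namely the one arising from the B-twisting homomorphism -- agrees with the $\SO(2)$-action that appears after applying Proposition \ref{prop:BFholomorphicreduction} to the 3d minimal twist. This is the analogue of the equivariance check performed in the proof of Theorem \ref{thm:10dCSreduction}: one argues that both $\SO(2)$-actions factor as the composite of the diagonal embedding $\SO(2) \to \SO(2) \times \SO(2)$ (with one factor coming from the partial 3d Lorentz rotations preserved by the reduction and the other from the twisting homomorphism into $G_R$) and the inclusion $\SO(2) \times \SO(2) \to \SO(4)$ together with the compatibility square relating $\SO(2) \to \SU(1)$ with $\SO(2) \times \SO(2) \to \SO(\RR^2 \oplus \RR^2)$. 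Once this diagram is verified, the $\SO(2)$-equivariant equivalence follows and the theorem is established.
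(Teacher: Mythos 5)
Your proposal is correct and follows exactly the paper's route: realize the B-twist as the dimensional reduction of the minimal twist of 3d $\cN=2$ super Yang--Mills along the holomorphic direction, then combine Theorem \ref{3d_minimal_twist_thm} with Proposition \ref{prop:BFholomorphicreduction}; your extra care with the equivariance check is welcome (the paper elides it), though note that the homomorphism \eqref{eq:SOtoSU} with $V_\RR=\RR$ lands in the trivial group $\SU(1)$, so the $\SO(2)$-action on the reduced theory is not induced that way but rather by comparing the twisting-homomorphism action with the pullback action on forms of Proposition \ref{prop:BFrotationaction}, as in the rest of your final paragraph.
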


\begin{proof}
The theory is obtained as the dimensional reduction of the minimal twist of 3d $\mc N=2$ supersymmetric Yang--Mills theory along the holomorphic direction.
Combining Theorem \ref{3d_minimal_twist_thm} and Proposition \ref{prop:BFholomorphicreduction} we obtain the desired result.
\end{proof}

\subsection{\texorpdfstring{$\cN=(4,4)$}{N=(4,4)} Super Yang--Mills Theory} \label{sect:2d(4,4)}
Next, we consider $\mc N=(4,4)$ supersymmetric Yang--Mills theory.
This theory is obtained as the dimension reduction of 3d $\mc N=4$ supersymmetric Yang--Mills theory.  
No new twists arise, i.e. every square-zero supercharge sits inside an $\mc N=(2,2)$ subalgebra.

The R-symmetry group is $G_R = \SU(2) \times \spin(4;\CC)$ whose action on $\Sigma = S_+ \otimes W_+ \oplus S_- \otimes W_-$ can be described as follows.
We can identify $W_\pm = S^4_\pm \otimes W_+^6$, where $W_+^6$ is the auxiliary space of 6d $\cN=(1,0)$ supersymmetry, and $S^4_\pm$ are the semi-spin representations of $\spin(4;\CC)$. 
The group $\spin(4;\CC)$ acts on $S^4_\pm$ in the natural way, and $\SU(2)$ acts on $W_+^6 \cong \CC^2$ as the fundamental representation.

In addition to rank $(1,0)$ supercharges, which are holomorphic, there are square-zero rank $(1,1)$ supercharges.  
As in Section \ref{sect:2d(2,2)}, these supercharges split into two orbits.

\begin{prop}
There are two $G_R$-orbits in the space of rank $(1,1)$ square-zero supercharges in the $\mc N=(4,4)$ supersymmetry algebra. The generic orbit consists of A-supercharges and the special orbit consists of B-supercharges.
\label{prop:2d44twoorbits}
\end{prop}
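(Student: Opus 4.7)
The plan is to reduce the classification of rank $(1,1)$ square-zero supercharges to an orbit problem for $\SU(2)$ acting on pairs of nonzero vectors in $W_+^6 \cong \CC^2$, and then to identify the two resulting orbits with the A- and B-supercharges by comparison with the $\cN=(2,2)$ case.

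First, I would analyze the square-zero condition. Writing $Q = s_+ \otimes w_+ + s_- \otimes w_-$ with $s_\pm \in S_\pm$ and $w_\pm \in W_\pm = S^4_\pm \otimes W_+^6$ all nonzero, the equation $\Gamma(Q,Q) = 0$ decouples into independent components in the two $\Spin(2;\CC)$-weight summands of $V \cong V^{(+1)} \oplus V^{(-1)}$, giving the separate conditions $(w_+, w_+)_{W_+} = 0$ and $(w_-, w_-)_{W_-} = 0$. The bilinear form on $W_\pm$ is the tensor product $\epsilon_{S^4_\pm} \otimes \epsilon_{W_+^6}$ of two symplectic forms, whose null cone in $\CC^2 \otimes \CC^2$ is exactly the variety of rank-$\leq 1$ tensors (a direct two-line calculation with $2 \times 2$ matrices). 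Hence $w_\pm = \alpha_\pm \otimes \beta_\pm$ for nonzero $\alpha_\pm \in S^4_\pm$ and $\beta_\pm \in W_+^6$, determined up to the diagonal rescaling $(\alpha_\pm, \beta_\pm) \sim (\lambda_\pm \alpha_\pm, \lambda_\pm^{-1} \beta_\pm)$.

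Next, I would exploit the action of $G_R = \SU(2)_C \times \SL(2;\CC)_+^H \times \SL(2;\CC)_-^H$. Since $\SL(2;\CC)_\pm^H$ acts transitively on $S^4_\pm \setminus \{0\}$, I may normalize $\alpha_\pm$ to fixed nonzero vectors $\alpha_\pm^0$, and the stabilizer of each $\alpha_\pm^0$ acts trivially on the remaining data. The classification is thus reduced to $\SU(2)_C$-orbits of pairs $(\beta_+, \beta_-) \in (W_+^6 \setminus \{0\})^2$, modulo the residual rescalings $\beta_\pm \sim \lambda_\pm \beta_\pm$. The only $\SU(2)_C$-invariant of such a pair compatible with these rescalings is the vanishing or non-vanishing of the symplectic pairing $\omega(\beta_+, \beta_-) \in \CC$. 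This yields exactly two orbits: the generic orbit, where $\omega(\beta_+, \beta_-) \ne 0$ and $\beta_\pm$ span $W_+^6$, and the special orbit, where $\beta_+$ and $\beta_-$ are collinear.

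Finally, I would match the orbits against the A- and B-twist classes. In the special orbit one may take $\beta_+ = \beta_- = \beta^0$, so that $Q$ lies in the embedded $\cN=(2,2)$ subalgebra obtained by restricting $W_+^6$ to the line $\CC\beta^0$. The $\cN=(2,2)$ R-symmetry sits inside $G_R^{(4,4)}$ as the maximal torus of the $\Spin(4;\CC)$-factor preserving $\alpha_\pm^0$, trivially extended on the $\SU(2)_C$-factor. Comparison with the rank $(1,1)$ analysis in Section~\ref{sect:2d(2,2)} then determines which of the twisting homomorphisms $\phi_A$, $\phi_B$ extends to a twisting homomorphism of $\cN=(4,4)$, identifying the special orbit with the B-supercharges and the generic orbit with A by exclusion. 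The main obstacle I expect is this last identification: the R-symmetry groups $G_R^{(2,2)}$ and $G_R^{(4,4)}$ are different, and matching the twisting homomorphisms across the embedding requires careful bookkeeping of the rotation-weight spaces of $S^4_\pm$ and $W_+^6$ together with their interaction with the $\SU(2)_C$-action. The orbit count itself is elementary once the square-zero condition has been reduced to the tensor-rank statement.
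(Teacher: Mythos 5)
Your argument is correct and is essentially the paper's own proof: both reduce the square-zero condition to decomposability $w_\pm = \alpha_\pm \otimes \beta_\pm$ (the null cone of the product of two symplectic forms on $\CC^2\otimes\CC^2$ being the rank-one tensors), use the $\Spin(4;\CC)$-factor to normalize $\alpha_\pm$, and then sort the residual data $(\beta_+,\beta_-)\in W_+^6\times W_+^6$ into exactly two orbits according to whether the two vectors are collinear, with the naming of the orbits handled by a comparison with an embedded smaller supersymmetry algebra (you use the $\cN=(2,2)$ subalgebra where the paper instead invokes dimensional reduction from the 3d $\cN=4$ B-supercharge). The one point to state carefully is that the two-orbit count for non-collinear pairs needs the factor of $G_R$ acting on $W_+^6$ to be the complexified group $\SL(2;\CC)$ (as it is throughout the paper, the notation ``$\SU(2)$'' notwithstanding), since for the compact real form the quantity $|\omega(\beta_+,\beta_-)|^2/\|\beta_+\|^2\|\beta_-\|^2$ is a continuous invariant and the non-collinear locus would break into a one-parameter family of orbits.
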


\begin{proof}
Decompose the 6d semi-spin representation $S_+^6$ into $S_+^4 \oplus S_-^4$ as a representation of $\spin(4;\CC)$. 
A rank $(1,1)$ square-zero supercharge can be identified with the data of a null vector in $S_+^4$, a null vector in $S_-^4$, and a pair of null vectors in $W_+^6$, all non-zero.  
There are only two $G_R$-orbits in this space of quadruples of null vectors: either the two null vectors in $W_+^6$ are collinear, or they are distinct.  
These are the B- and A-supercharges respectively.
\end{proof}

Upon dimensional reduction from 3d $\mc N=4$, the $B$-supercharge reduces to the 2d $\mc N=(4,4)$ $B$-supercharge.  
Indeed, the 3d $B$-supercharge squares to zero as an element of the 6d $\mc N=(1,0)$ supersymmetry algebra, which means that in this 6d algebra it has rank 1.  
Therefore the two vectors in $W_+^6$ discussed above must be collinear. 

\subsubsection{Holomorphic Twist} \label{sect:2d44minimaltwist}
The holomorphic twist is obtained as the dimensional reduction of the minimal twist of 3d $\mc N=4$ supersymmetric Yang--Mills theory.
By Theorem \ref{3d_4_minimal_twist_thm} and Proposition \ref{CS_to_BF_diml_red_prop} we have the following. 

\begin{theorem}
The holomorphic twist of 2d $\mc N=(4,4)$ super Yang--Mills theory with matter valued in a symplectic $\fg$-representation $U$ is perturbatively equivalent to a holomorphic BF theory, with moduli space given by $T^*[-1]\Sect(\CC, T[1]((U\otimes K_\CC^{1/2}) \ham \gg))$. 
This equivalence is $\mr U(1)$-invariant.
\label{thm:2d44holomorphictwist}
\end{theorem}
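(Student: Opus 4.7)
The plan is to deduce this theorem by dimensional reduction from the minimal twist of 3d $\cN=4$ super Yang--Mills theory, following the general pattern used throughout Part \ref{classification_part}. First I would observe that 2d $\cN=(4,4)$ super Yang--Mills theory is the dimensional reduction of 3d $\cN=4$ super Yang--Mills theory along the projection $\CC \times \RR \to \CC$ onto the first factor, and that a rank $(1,0)$ square-zero supercharge in the 2d $\cN=(4,4)$ supersymmetry algebra lifts to a rank $1$ square-zero supercharge in the 3d $\cN=4$ supersymmetry algebra. Applying Proposition \ref{prop:twistdimensionalreduction}, the holomorphic twist of the 2d theory is perturbatively equivalent to the dimensional reduction of the minimal twist of the 3d theory, and this equivalence intertwines the relevant $\U(1)$-actions coming from the compatible $\alpha$ homomorphisms.

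Next I would invoke Theorem \ref{3d_4_minimal_twist_thm}, which identifies the minimal twist of 3d $\cN=4$ super Yang--Mills theory on $\CC\times \RR$, as a $\U(1)$-equivariant classical BV theory, with the generalized BF theory $T^*[-1]\Sect(\CC \times \RR_{\mr{dR}}, (U \otimes K^{1/2}_\CC) \ham \gg)$. The remaining step is to dimensionally reduce this generalized BF theory along the projection $\Re \colon \RR \to \pt$, which collapses the de Rham direction. This is a direct analogue of Proposition \ref{prop:BFdeRhamreduction}, where the target formal moduli problem is the Hamiltonian reduction $(U\otimes K^{1/2}_\CC)\ham\gg$ rather than simply $B\gg$. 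The same proof strategy applies: translation-invariant differential forms on $\RR$ form the graded vector space $\CC \oplus \CC[-1]$, so the underlying local $L_\infty$ algebra picks up an extra shifted copy of the target, which is exactly what is encoded by the shifted tangent construction.

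The main technical check is the final identification of the reduced target as $T[1]$ of the 2d Hamiltonian reduction. One must carefully account for the fact that $(U \otimes K^{1/2}_\CC)\ham \gg$ carries shifted symplectic structures of different degrees in the 3d and 2d settings (being $2$-shifted on $\CC \times \RR$ and $1$-shifted on $\CC$), so the interplay with the $(-1)$-shifted BV pairing rebalances the degree shifts to produce $T[1]$ rather than $T[-1]$ of the 2d target. Once this identification is verified, the $\U(1)$-equivariance is preserved throughout by the equivariance statements in Theorem \ref{3d_4_minimal_twist_thm} and in the analogue of Proposition \ref{prop:BFdeRhamreduction}, completing the argument.
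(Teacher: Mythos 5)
Your proposal follows the paper's own route: the proof of Theorem \ref{thm:2d44holomorphictwist} given in the text is precisely to dimensionally reduce the minimal twist of 3d $\cN=4$ super Yang--Mills (Theorem \ref{3d_4_minimal_twist_thm}) along $\RR\to\pt$, invoking the de Rham reduction results of Section \ref{dim_red_section} (the paper cites Proposition \ref{CS_to_BF_diml_red_prop}, you cite the analogue of Proposition \ref{prop:BFdeRhamreduction} with target $(U\otimes K_\CC^{1/2})\ham\fg$; these amount to the same computation). The one inaccuracy is in your final paragraph: the shift on the extra tangent copy is fixed directly by the fact that $dt$ sits in cohomological degree $+1$, so the reduced target is $B(\fh[\epsilon])$ with $|\epsilon|=1$, i.e.\ $B(\fh\oplus\fh[-1])$; no ``rebalancing'' against the $(-1)$-shifted BV pairing or the shifted symplectic structure on $U\ham\fg$ occurs, and whether one writes this as $T[1]$ or $T[-1]$ of the target is purely a labelling convention on which Example \ref{ex:tangentBg} and the dimension-2 statements of the paper do not quite agree.
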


\subsubsection{Topological Twists}
\label{sect:2d44Atwist}
\label{sect:2d44Btwist}

Let $V_\RR=\RR^2$ equipped with a Euclidean structure and $V=V_\RR\otimes_\RR\CC$ its complexification. By the results of Section \ref{sect:4d2Donaldson} we obtain a family $Q_t$ of square-zero supercharges in the 4-dimensional supersymmetry algebra. Dimensionally reducing this family along $\Re\colon V\rightarrow V_\RR$, we obtain a family of square-zero supercharges in the 2-dimensional supersymmetry algebra. It is easy to see that it is a family of topological supercharges.

\begin{theorem}
The twist of the 2d $\cN=(4, 4)$ super Yang--Mills theory with matter valued in a symplectic $\fg$-representation $U$ with respect to the family $Q_t$ of square-zero supercharges is perturbatively equivalent to the theory $\map(\RR^2_{\mathrm{dR}}, (U\ham \fg)_{\Hod})$. Moreover, the equivalence is $\SO(2, \RR)$-equivariant.
\label{thm:2d44Hodgetwist}
\end{theorem}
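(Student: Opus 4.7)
The plan is to derive the theorem by combining the known topological twist calculation in four dimensions with the dimensional reduction results developed in Section \ref{dim_red_section}.

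First, I would recall that 2d $\cN=(4,4)$ super Yang--Mills with matter valued in $U$ is obtained from 4d $\cN=2$ super Yang--Mills with matter $U$ by dimensional reduction along the real projection $p=\Re\colon V\to V_\RR$, where $V=V_\RR\otimes_\RR\CC\cong \CC^2$. By construction, the family $Q_t$ of 2d square-zero supercharges is the image of the 4d family from \eqref{eq:4dHodgefamily} under this reduction. By Proposition \ref{prop:twistdimensionalreduction}, twisting commutes with dimensional reduction, so the $Q_t$-twist of 2d $\cN=(4,4)$ super Yang--Mills is perturbatively equivalent to the dimensional reduction of the $Q_t$-twist of 4d $\cN=2$ super Yang--Mills along $p$.

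Second, I would apply Theorem \ref{thm:4dDonaldsontwist}, which identifies the 4d twisted theory with the holomorphic Hodge family
\[\Sect\!\left(\CC^2,\, \left((U\otimes K_{\CC^2}^{1/2})\ham \fg\right)_{\Hod}\right),\]
as an $\MU(2)$-equivariant $\CC[t]$-family of classical field theories. Restricting equivariance along $\SO(2, \RR)\hookrightarrow\MU(2)$ (via the complexification map \eqref{eq:SOtoSU} combined with the square-root determinant), it suffices to dimensionally reduce this holomorphic Hodge family along $p\colon \CC^2\to \RR^2$ and identify the result with $\map(\RR^2_{\mr{dR}},(U\ham \fg)_{\Hod})$ in an $\SO(2, \RR)$-equivariant way.

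Third, I would establish this reduction statement as a $\Gm$-equivariant version of Corollary \ref{cor:Hodgeholomorphicreduction}. Concretely, the underlying $\CC[t]$-linear local $L_\infty$ algebra of the source is built from the sheaf $\Omega^{0,\bu}_{\CC^2}\otimes K_{\CC^2}^{1/2}$ twisted by the $\Gm$-weight grading on $((U\ham \fg))_{\Hod}$. The same argument as in the proof of Proposition \ref{CS_diml_red_prop} identifies $\Omega^{0,\bu}(\CC^2)$ with $\Omega^\bu(\RR^2;\CC)$ under pullback along $p$, and, crucially, the bundle $K_{\CC^2}^{1/2}$ restricted to $\RR^2\subset\CC^2$ is canonically isomorphic to $\Dens_{\RR^2}^{1/2}$, which is trivial as an oriented bundle; this trivialization is $\SO(2, \RR)$-equivariant (in fact $\Spin(2, \RR)$-equivariant). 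The Calabi--Yau pairing on $\CC^2$ reduces to the standard density pairing on $\RR^2$ by the same computation as in Proposition \ref{CS_diml_red_prop}, so the $(-1)$-shifted symplectic structure on the $(U\ham \fg)_{\Hod}$-valued de~Rham forms on $\RR^2$ agrees with that of $\map(\RR^2_{\mr{dR}},(U\ham \fg)_{\Hod})$. The $t$-dependent differential and brackets coming from the Hodge family transport directly, yielding the desired equivalence.

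The main technical point, and where one has to be careful, is the $\Gm$-equivariant reduction step in the last paragraph: one must check that the square-root density bundle $K_{\CC^2}^{1/2}$ trivializes coherently with the $\Gm$-action on $U$ after restriction to $V_\RR\subset V$, and that the resulting isomorphism of local $L_\infty$ algebras intertwines all Hodge deformations $t\cdot \mathrm{id}$ and respects the rotation action. The remaining ingredients---Proposition \ref{prop:twistdimensionalreduction}, Theorem \ref{thm:4dDonaldsontwist}, and the equivariance argument---are then straightforward to assemble.
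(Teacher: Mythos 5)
Your proposal is correct and follows essentially the same route as the paper: the paper's proof is a one-line citation of Theorem \ref{thm:4dDonaldsontwist} together with the holomorphic dimensional reduction of Proposition \ref{CS_diml_red_prop} applied along $\Re\colon\CC^2\to\RR^2$. The only difference is that you spell out the step the paper leaves implicit, namely the $\SO(2,\RR)$-equivariant trivialization of $K_{\CC^2}^{1/2}$ over $V_\RR\subset V$ compatible with the $\Gm$-weights, which is indeed the point requiring care and which you handle correctly.
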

\begin{proof}
The claim follows by dimensional reduction (Proposition \ref{CS_diml_red_prop}) from the corresponding statement in 4 dimensions (Theorem \ref{thm:4dDonaldsontwist}) which calculates the twist of the 4d $\cN=2$ super Yang--Mills theory with respect to that family to be $\Sect(\CC^2, ((U\otimes K^{1/2}_{\CC^2})\ham \fg)_{\Hod})$.
\end{proof}

As we see from Theorem \ref{thm:2d44Hodgetwist}, at $t\neq 0$ the twist is perturbatively trivial while at $t=0$ it is not. By Proposition \ref{prop:2d44twoorbits} there are only two orbits of topological supercharges, so the case $t\neq 0$ (the generic orbit) must be the A-twist and the case $t=0$ (the special orbit) must be the B-twist.

\begin{corollary}
The topological A-twist of the 2d $\cN=(4, 4)$ super Yang--Mills theory is perturbatively trivial.
\label{cor:2d44Atwist}
\end{corollary}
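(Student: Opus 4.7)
The plan is to assemble this corollary from two ingredients that are already in place: the identification of the A-twist with a specific member of the Hodge family $Q_t$, and the general triviality of the $t \neq 0$ fiber of a Hodge family. Both are available from what precedes the corollary.

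First I would invoke Proposition \ref{prop:2d44twoorbits}, which partitions the rank $(1,1)$ square-zero supercharges into exactly two $G_R$-orbits: the generic A-orbit and the special B-orbit. Combined with the identification of the $t=0$ specialization $Q_0$ with the B-supercharge (obtained by dimensional reduction from 3d $\cN=4$, which we already used in Section \ref{sect:2d44minimaltwist}), this forces every $Q_t$ for $t \neq 0$ to lie in the A-orbit. Thus any A-twist is perturbatively equivalent to the $Q_t$-twist for some (hence every) nonzero $t$, since $G_R$ acts by equivalences on the space of twisted theories.

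Second, I would apply Theorem \ref{thm:2d44Hodgetwist}, which says the $Q_t$-twisted theory is perturbatively equivalent to the specialization at parameter $t$ of the generalized Hodge theory $\map(\RR^2_{\mathrm{dR}}, (U\ham \fg)_{\Hod})$. At $t \neq 0$, Proposition \ref{prop:Hodgetheoryspecialization} tells us that the underlying local $L_\infty$ algebra $(U\ham \fg)_{\Hod}|_{t\neq 0}$ is acyclic, so the theory is perturbatively trivial. Chaining the two equivalences proves the corollary.

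There is no substantial obstacle here: all the work has already been done in Theorem \ref{thm:2d44Hodgetwist} (whose proof reduces the problem to Theorem \ref{thm:4dDonaldsontwist} via Proposition \ref{CS_diml_red_prop}) and in the general observation that the generic fiber of a Hodge family is contractible. The only point requiring a brief verification is the orbit identification — i.e.\ that $Q_t$ for $t \neq 0$ genuinely represents the A-orbit rather than degenerating into the B-orbit — but this is immediate from comparing dimensions of the spaces swept out versus the classification in Proposition \ref{prop:2d44twoorbits}, exactly as in the parallel argument given for Corollary \ref{cor:3dN8Atwist} in the 3d $\cN=8$ case.
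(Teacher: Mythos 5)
Your proposal is correct and follows essentially the same route as the paper: identify the $t\neq 0$ members of the Hodge family $Q_t$ with the generic (A) orbit via Proposition \ref{prop:2d44twoorbits}, then apply Theorem \ref{thm:2d44Hodgetwist} together with Proposition \ref{prop:Hodgetheoryspecialization} to conclude triviality at $t\neq 0$. The only cosmetic slip is attributing the identification of $Q_0$ with the B-orbit to Section \ref{sect:2d44minimaltwist} (which treats the holomorphic twist); the paper instead deduces the orbit assignment from the inequivalence of the $t=0$ and $t\neq 0$ theories plus the two-orbit classification, exactly as you note in your closing remark.
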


\begin{corollary}
The topological B-twist of the 2d $\cN=(4, 4)$ super Yang--Mills theory is perturbatively equivalent to the theory $T^*[-1]\map(\RR^2_{\mathrm{dR}}, U\ham \fg)$. Moreover, the equivalence is $\SO(2, \RR)$-equivariant.
\label{cor:2d44Btwist}
\end{corollary}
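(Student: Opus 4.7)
The plan is to combine three ingredients already in place: the Hodge family computation of Theorem \ref{thm:2d44Hodgetwist}, the orbit classification of Proposition \ref{prop:2d44twoorbits}, and the specialization identity for Hodge theories in Proposition \ref{prop:Hodgetheoryspecialization}. The same reasoning that produced Corollary \ref{cor:2d44Atwist} applies here, only at the opposite fiber of the family.

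First I would identify which orbit of topological square-zero supercharges corresponds to the $t=0$ specialization of the family $Q_t$ from the proof of Theorem \ref{thm:2d44Hodgetwist}. This family is obtained by dimensionally reducing the 4d family $Q_t = Q_0 + t\bar Q_0$ of Section \ref{sect:4d2Donaldson}, so at $t=0$ the 2d supercharge is the reduction of the 4d holomorphic supercharge $Q_0$, which lies in a single chirality of $\Sigma_{4\mr d}$ viewed inside the 6d $\cN=(1,0)$ algebra (through which the dimensional reduction to 2d factors). Under the identification $W_\pm = S^4_\pm \otimes W^6_+$ used in the proof of Proposition \ref{prop:2d44twoorbits}, this means that the two null vectors in $W_+^6$ cut out by the rank $(1,1)$ part of $Q_0$ are collinear. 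Hence $Q_0$ lies in the special orbit, i.e.\ it is a B-supercharge. By Proposition \ref{prop:2d44twoorbits} there are only two orbits of topological supercharges, so the generic orbit must be A and the special orbit must be B, confirming the assignment already used to deduce Corollary \ref{cor:2d44Atwist}.

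Second, I would invoke Theorem \ref{thm:2d44Hodgetwist} at $t=0$: the $Q_0$-twist of 2d $\cN=(4,4)$ super Yang--Mills is $\SO(2,\RR)$-equivariantly perturbatively equivalent to the $t=0$ specialization of the generalized Hodge theory $\map(\RR^2_{\mr{dR}}, (U\ham\fg)_{\Hod})$. Third, I would apply Proposition \ref{prop:Hodgetheoryspecialization} to the $L_\infty$ algebra $\fh$ underlying $U\ham\fg$: the $t=0$ fiber of the Hodge family $\fh_{\Hod}$ is canonically isomorphic, using the invariant pairing on $\fh$ to identify $\fh[1]\cong \fh^*[2\dim_\CC(\pt)+0+2-2] = \fh^*[0]$, to the shifted cotangent $\fh\oplus \fh^*[\cdot]$, which globalizes to the generalized BF theory $T^*[-1]\map(\RR^2_{\mr{dR}}, U\ham\fg)$ in the sense of Definition \ref{def:generalizedBF} and Example \ref{ex:CSBF}.

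No step is really hard: the computational content has been done in Theorems \ref{thm:4dDonaldsontwist} and \ref{thm:2d44Hodgetwist}, and Proposition \ref{prop:Hodgetheoryspecialization} takes care of the algebraic identification of the fiber. The only point that deserves explicit verification is the matching of orbits, i.e.\ checking that $Q_0$ is genuinely a B-supercharge rather than an A-supercharge; this is forced by the dimensional reduction chain $4\mr d\;\cN=2 \rightsquigarrow 2\mr d\;\cN=(4,4)$ factoring through 6d $\cN=(1,0)$, where $Q_0$ has rank $1$ in the 6d auxiliary space and therefore collinear images in $W_+^6$. The $\SO(2,\RR)$-equivariance is inherited directly from Theorem \ref{thm:2d44Hodgetwist}, since both the Hodge family and its $t=0$ specialization carry the same rotation action.
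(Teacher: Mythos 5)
Your proposal is correct, and its main line of argument is exactly the paper's: specialize the Hodge family of Theorem \ref{thm:2d44Hodgetwist} at $t=0$ and identify the fiber with the generalized BF theory $T^*[-1]\map(\RR^2_{\mathrm{dR}}, U\ham\fg)$ via Proposition \ref{prop:Hodgetheoryspecialization}, inheriting $\SO(2,\RR)$-equivariance from the family. The one place you genuinely diverge is the identification of the $t=0$ supercharge with the B-orbit. The paper deduces this \emph{a posteriori} from the field theory: the $t\neq 0$ twists are perturbatively trivial while the $t=0$ twist is not, so the two cannot lie in the same orbit, and since Proposition \ref{prop:2d44twoorbits} gives exactly two orbits with the special one in the closure of the generic one, $t=0$ must be special, hence B. You instead verify directly that the reduction of the 4d holomorphic supercharge $Q_0$ has rank $1$ as an element of the 6d $\cN=(1,0)$ algebra, so the two null vectors it cuts out in $W_+^6$ are both proportional to the image of the single auxiliary vector and hence collinear; this is the same mechanism the paper uses to track the B-supercharge under the reduction from 3d $\cN=4$. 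Your version is self-contained at the level of representation theory and does not rely on the outcome of the A-twist computation; the paper's version avoids any explicit check of null vectors. Both are valid, and your collinearity claim does hold (the rank-one condition in the 6d auxiliary space forces both null vectors onto the same line), so there is no gap.
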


\subsection{\texorpdfstring{$\cN=(8,8)$}{N=(8,8)} Super Yang--Mills Theory} \label{sect:2d(8,8)}
Next we consider the $\mc N=(8,8)$ supersymmetric Yang--Mills theory.
No new twists arise, i.e. every square-zero supercharge sits inside an $\mc N=(2,2)$ subalgebra.

The R-symmetry group in the $\mc N=(8,8)$ case is $G_R = \spin(8;\CC)$, acting on $W_+$ and $W_-$ by its two semi-spin representations.  The classification of square-zero supercharges is identical to the classification we saw in the $\mc N=(4,4)$ case.  Rank $(1,0)$ square zero supercharges, and rank $(1,1)$ square zero supercharges split into two orbits in the following way.

\begin{prop}
There are two $G_R$-orbits in the space of rank $(1,1)$ square-zero supercharges in the $\mc N=(8,8)$ supersymmetry algebra. The generic orbit consists of A-supercharges and the special orbit consists of B-supercharges.
\end{prop}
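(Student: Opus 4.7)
The plan is as follows. First, I unpack the square-zero condition. Writing a rank-$(1,1)$ supercharge as $Q = q_+\otimes w_+ + q_-\otimes w_-$ with $q_\pm \in S_\pm^2$ and $w_\pm \in W_\pm$ all non-zero, the $\Gamma$-pairing on $\Sigma$ splits as an orthogonal sum of the pairings $\Gamma_\pm^2 \otimes (-,-)_{W_\pm}$ valued in the complementary null lines $S_\pm^{\otimes 2} \subset V_2$, exactly as in the proof of Proposition \ref{prop:2d44twoorbits}. The square-zero condition $\Gamma(Q,Q) = 0$ therefore reduces to the pair of null-vector conditions $(w_+, w_+)_{W_+} = 0 = (w_-, w_-)_{W_-}$. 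Since the $\spin(2;\CC)$-action rescales the $q_\pm$ and can be absorbed into independent rescalings of $w_\pm$, classifying $\spin(2;\CC) \times G_R$-orbits is equivalent to classifying $\spin(8;\CC)$-orbits on pairs of non-zero null vectors $(w_+, w_-) \in W_+ \times W_-$ modulo independent $\CC^\times$ scalings.

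Next, I count orbits via triality. For $\spin(8;\CC)$, non-zero null vectors in either semi-spin representation are pure spinors, and they correspond, up to scale, to maximal isotropic subspaces of $V_8$. Crucially, pure spinors in $W_+$ and $W_-$ correspond to maximal isotropics in \emph{opposite} families. The $\spin(8;\CC)$-orbit on a pair $(L_+, L_-)$ of such Lagrangians is determined by the intersection dimension $\dim(L_+ \cap L_-)$: for maximal isotropics in a non-degenerate $2n$-dimensional quadric, the parity of $n - \dim(L \cap L')$ distinguishes the two families, so for $n=4$ and opposite families we must have $\dim(L_+ \cap L_-) \in \{1, 3\}$. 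This produces exactly two $G_R$-orbits: the \emph{generic} one, with $\dim(L_+ \cap L_-) = 1$, and the \emph{special} one, with $\dim(L_+ \cap L_-) = 3$.

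Finally, I would identify the generic orbit with the A-supercharges and the special orbit with the B-supercharges by exhibiting each in an $\cN=(2,2)$ subalgebra. Given a rank-$(1,1)$ supercharge, any choice of two-dimensional non-degenerate subspaces $W_\pm' \subset W_\pm$ containing $w_\pm$ produces an $\cN=(2,2)$-subalgebra of $\cN=(8,8)$, inside which the supercharge still has rank $(1,1)$; the A- or B-type is determined by which of the compatible twisting homomorphisms $\phi_A$ or $\phi_B$ (of weights $(1,1)$ and $(1,-1)$ respectively on the residual $\spin(2;\CC) \times \spin(2;\CC)$) can be constructed, and this type is independent of the chosen $W_\pm'$. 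A short direct computation on standard representatives shows that $\dim(L_+ \cap L_-) = 1$ (the generic case) corresponds to $\phi_A$-compatibility, while $\dim(L_+ \cap L_-) = 3$ (the special case) corresponds to $\phi_B$-compatibility.

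The main obstacle is this last identification, matching the representation-theoretic dichotomy of orbits with the characterization of A- vs.\ B-supercharges via twisting data. Alternatively, this matching can be obtained by dimensional reduction from 3d $\cN=8$ supersymmetry: the rank-$(1,1)$ 2d orbits arise as dimensional reductions of the rank-$2$ 3d orbits, so the A/B labels can be inherited from the 3d A/B classification of Lemma \ref{lm:3dN8twoorbits} together with Corollaries \ref{cor:3dN8Atwist} and \ref{cor:3dN8Btwist}.
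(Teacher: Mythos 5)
Your argument is correct, but the central orbit count goes by a genuinely different route than the paper's. The paper argues directly: it uses transitivity of $\Spin(8;\CC)$ on non-zero null vectors $w_+ \in W_+$, decomposes $W_- \cong (L \oplus \wedge^3 L)\otimes\det(L)^{-1/2}$ under the metalinear subgroup $\mathrm{ML}(L)$ of the stabilizer of $w_+$ (where $L$ is the Lagrangian in $V_8$ determined by $w_+$), writes $w_- = (v_1,v_3)$, and exhibits the two orbits by hand, using the unipotent $\wedge^2 L$-action to kill $v_3$ when $v_1 \neq 0$ and $\SL(L)$ to identify the remaining representatives. You instead invoke the $D_4$ triality coincidence that the null quadric in each $8$-dimensional semi-spin representation is exactly the variety of pure spinors, translate the problem into classifying pairs of maximal isotropics of $V_8$ lying in opposite families, and appeal to the classification of such pairs by $\dim(L_+\cap L_-)\in\{1,3\}$. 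The two answers match: in the paper's coordinates, $v_1\neq 0$ corresponds to $\dim(L_+\cap L_-)=1$ (generic) and $v_1=0$, $v_3\neq 0$ to $\dim(L_+\cap L_-)=3$ (special). Your route buys conceptual clarity: the generic/special dichotomy and the degeneration of one orbit into the closure of the other are immediate from the intersection-dimension stratification, and transitivity on pairs with fixed intersection dimension within fixed families is classical (Witt). The cost is reliance on a triality fact special to dimension $8$, whereas the paper's stabilizer computation is uniform with the parallel arguments for 3d $\cN=8$ (Lemma \ref{lm:3dN8twoorbits}) and 2d $\cN=(4,4)$ (Proposition \ref{prop:2d44twoorbits}). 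On your final step: the paper's own proof of this proposition only establishes the two-orbit count, and the A/B labels are pinned down afterwards by the Hodge-family degeneration (the $t\neq 0$ members lie in the generic orbit and give the perturbatively trivial A-twist), which is essentially your second suggested route via dimensional reduction; so the sketchiness of your ``short direct computation'' is harmless, and I would simply adopt the dimensional-reduction argument. (Minor typo: $q_\pm\in S_\pm^2$ should read $q_\pm\in S_\pm$.)
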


\begin{proof}
We classify $\spin(8;\CC)$-orbits in the space of pairs of non-zero null-vectors $w_+ \in W_+$ and $w_- \in W_-$.  Since $\spin(8;\CC)$ acts transitively on the possible choices of $w_+$, it remains for us to understand the action of the stabilizer $\stab(w_+) \sub \spin(8;\CC)$ on the space of null vectors $w_-$.  The element $w_+$ is equivalent to the data of a Lagrangian subspace $L \sub V_8$, along with a half-density.  As a representation of the subgroup $\mr{ML}(L)$ of $\stab(w_+)$, the two semi-spin representations decompose as
\begin{align*}
 W_+ &\iso (\CC \oplus \wedge^2 L \oplus \wedge^4 L) \otimes \det(L)^{-1/2} \\
 W_- &\iso (L \oplus \wedge^3 L) \otimes \det(L)^{-1/2},
\end{align*}
with $w_+ \in \wedge^4 L$.  With respect to this decomposition, say $w_- = (v_1,v_3)$.  If $v_1 \ne 0$, then we can act by $\wedge^2 L \sub \stab(w_+)$ to make $v_3 = 0$.  Under the action of $\SL(L)$ all such non-zero $v_1$ are in the same orbit.  Likewise if $v_1 = 0$ then $v_3 \ne 0$ and we can act by $\SL(L)$ to see that all such non-zero $v_3$ are in the same orbit.  There are, therefore, two orbits once again, with one degenerating to the other.
\end{proof}

The 2d $\cN=(8, 8)$ super Yang--Mills theory may be considered as a 2d $\cN=(4, 4)$ super Yang--Mills theory with matter valued in the symplectic $\fg$-representation $U=T^*\fg$. So, all the computations in this section follow from the corresponding computations in Section \ref{sect:2d(4,4)}.

\subsubsection{Holomorphic Twist} \label{sect:2d88minimaltwist}

The 2d $\cN=(8, 8)$ super Yang--Mills theory is a dimensional reduction of the 3d $\cN=8$ super Yang--Mills theory. Therefore, from Theorem \ref{3d_8_minimal_twist_thm} we obtain the following statement.

\begin{theorem}
The holomorphic twist of the 2d $\mc N=(8,8)$ super Yang--Mills theory is perturbatively equivalent to a holomorphic BF theory with the space of fields given by $T^*[-1]\mr{Map}(\CC_{\mathrm{Dol}}, T[1] \fg/\fg)$. This equivalence is $\mr U(1)$-equivariant.
\end{theorem}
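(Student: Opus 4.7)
The approach is to derive the result by dimensional reduction from the 3d $\cN=8$ minimal twist, paralleling the proofs of Theorems \ref{2d_minimal_twist_thm} and \ref{thm:2d44holomorphictwist}. As noted at the opening of Section \ref{sect:2d(8,8)}, the 2d $\mc N=(8,8)$ super Yang--Mills theory is the dimensional reduction of the 3d $\cN=8$ super Yang--Mills theory along a real direction, and the rank $(1,0)$ holomorphic supercharge in the 2d algebra descends from a rank 1 minimal supercharge in the 3d algebra under this identification. By Proposition \ref{prop:twistdimensionalreduction}, twisting commutes with dimensional reduction, so the 2d holomorphic twist is perturbatively equivalent to the dimensional reduction of the 3d minimal twist along the projection $\CC \times \RR \to \CC$.

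I would then invoke Theorem \ref{3d_8_minimal_twist_thm}, which identifies the 3d minimal twist with the generalized BF theory $T^*[-1]\map(\CC_{\mathrm{Dol}} \times \RR_{\mr{dR}}, \fg/\fg)$, equivariantly under $\mr U(1)$. Applying Proposition \ref{prop:BFdeRhamreduction} to the projection $\RR \to \pt$, with the $L_\infty$ algebra $L_{\fg,\fg} = \fg \oplus \fg[-1]$ representing $\fg/\fg$ in place of $\fg$, produces a BF theory in 2d whose target is identified with $T[1](\fg/\fg)$. This yields the desired space of fields $T^*[-1]\map(\CC_{\mathrm{Dol}}, T[1]\,\fg/\fg)$, and the $\mr U(1)$-equivariance from the 3d theorem carries over since the rotation action does not interact with the $\RR$ factor being reduced.

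As an independent cross-check, 2d $\mc N=(8,8)$ super Yang--Mills can be viewed as 2d $\mc N=(4,4)$ super Yang--Mills with hypermultiplet matter in $U = T^*\fg$, so Theorem \ref{thm:2d44holomorphictwist} specializes to give $T^*[-1]\Sect(\CC, T[1]((T^*\fg \otimes K_\CC^{1/2}) \ham \fg))$, and the identification $(T^*\fg) \ham \fg \cong \fg/\fg$, arising from the fact that Hamiltonian reduction of a cotangent bundle by the gauge action reduces to the adjoint quotient, produces the same answer.

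The main obstacle is tracking the target formal moduli problem through the dimensional reduction: iterating Proposition \ref{prop:BFdeRhamreduction} with $B\fg$ replaced by an already-reduced target like $\fg/\fg$ requires careful bookkeeping of the density shifts on $\CC$ and $\RR$ together with the identification of $L_{\fg,\fg}/L_{\fg,\fg}$ with the shifted tangent bundle $T[1](\fg/\fg)$. This same subtlety already appears in the proofs of the analogous 2d holomorphic twist statements (Theorems \ref{2d_minimal_twist_thm} and \ref{thm:2d44holomorphictwist}), so the bookkeeping can be transported wholesale to the present setting.
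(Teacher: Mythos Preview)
Your proposal is correct and follows the paper's own approach exactly: the paper simply states that the 2d $\mc N=(8,8)$ theory is the dimensional reduction of 3d $\mc N=8$ super Yang--Mills and invokes Theorem \ref{3d_8_minimal_twist_thm}, which is precisely the route you take via Proposition \ref{prop:BFdeRhamreduction}. Your cross-check through the $\mc N=(4,4)$ theory with $U=T^*\fg$ is also the alternative route the paper mentions at the start of Section \ref{sect:2d(8,8)}.
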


\subsubsection{Topological Twists}
\label{sect:2d88Atwist}
\label{sect:2d88Btwist}

Consider the family of topological supercharges $Q_t$ from Section \ref{sect:2d44Atwist}. From Theorem \ref{thm:2d44Hodgetwist} we obtain the following statement.

\begin{theorem}
The twist of the 2d $\cN=(8, 8)$ super Yang--Mills theory with respect to the family $Q_t$ of square-zero supercharges is perturbatively equivalent to the theory $\map(\RR^2_{\mathrm{dR}}, T^*(\fg/\fg)_{\Hod})$. Moreover, the equivalence is $\SO(2, \RR)$-equivariant.
\label{thm:2d88Hodgetwist}
\end{theorem}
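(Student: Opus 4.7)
The plan is to deduce this statement from the already-established $\mc N=(4,4)$ result, Theorem \ref{thm:2d44Hodgetwist}, by recognizing the $\mc N=(8,8)$ theory as a special case of the $\mc N=(4,4)$ theory. Concretely, as noted at the start of Section \ref{sect:2d(8,8)}, the 2d $\mc N=(8,8)$ super Yang--Mills theory coincides with the 2d $\mc N=(4,4)$ super Yang--Mills theory whose symplectic matter representation is $U = T^*\fg$. The family $Q_t$ of square-zero supercharges in the $\mc N=(8,8)$ algebra is obtained by dimensionally reducing the family from Section \ref{sect:4d2Donaldson} of the 4d $\mc N=2$ algebra, exactly as in the $\mc N=(4,4)$ case, so the identification is compatible with the twisting data.

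First, I would apply Theorem \ref{thm:2d44Hodgetwist} directly with $U = T^*\fg$. This gives an $\SO(2,\RR)$-equivariant perturbative equivalence between the $Q_t$-twist of the $\mc N=(8,8)$ theory and the theory $\map(\RR^2_{\mathrm{dR}}, (T^*\fg \ham \fg)_{\Hod})$.

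The second step is to identify $T^*\fg \ham \fg$ with $T^*(\fg/\fg)$ as formal moduli problems equipped with their $\Gm$-actions. Unpacking Example \ref{ex:CSHamiltonianreduction} with $U = T^*\fg = \fg \oplus \fg^*$ and the shift appropriate to 2d, the Hamiltonian reduction is represented by the dg Lie algebra $\fh = \fg \oplus (\fg \oplus \fg^*)[-1] \oplus \fg^*[n-2]$, whose brackets are determined by the adjoint and coadjoint $\fg$-actions and the moment pairing $\mu \colon U\otimes U\to \fg^*[n-1]$ of Example \ref{ex:CSHamiltonianreduction}; inspection identifies this with the dg Lie algebra representing $T^*(\fg/\fg)$, up to a straightforward rearrangement of summands. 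Under this identification, the natural weight-$1$ $\Gm$-action on $U[-1]$ together with the weight-$2$ action on $\fg^*[n-2]$ matches the cotangent $\Gm$-action on $T^*(\fg/\fg)$, so the Hodge families on the two sides coincide, and the equivalence of Theorem \ref{thm:2d44Hodgetwist} transports to the claimed form.

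The main subtlety is purely a matter of bookkeeping: one must ensure that the grading conventions and degree shifts used to define $U\ham\fg$ in Example \ref{ex:CSHamiltonianreduction} match those implicit in $T^*(\fg/\fg)$, and that the $\Gm$-actions generating the Hodge deformations are compatible on the nose. Once this identification is in place, the $\SO(2,\RR)$-equivariance of the resulting equivalence is inherited verbatim from that of Theorem \ref{thm:2d44Hodgetwist}, since the twisting homomorphism and the reduction to 2d are the same in both presentations.
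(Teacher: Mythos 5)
Your proposal is correct and follows essentially the same route as the paper: the paper's proof consists precisely of viewing the $\cN=(8,8)$ theory as the $\cN=(4,4)$ theory with $U=T^*\fg$ and citing Theorem \ref{thm:2d44Hodgetwist}, with the identification $T^*\fg\ham\fg\cong T^*(\fg/\fg)$ left implicit. Your extra paragraph spelling out that identification and the matching of $\Gm$-weights is a reasonable elaboration of a step the paper takes for granted, not a divergence in method.
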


\begin{corollary}
The topological A-twist of the 2d $\cN=(8, 8)$ super Yang--Mills theory is perturbatively trivial.
\label{cor:2d88Atwist}
\end{corollary}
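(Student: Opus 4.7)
The plan is to deduce this corollary immediately from Theorem \ref{thm:2d88Hodgetwist} together with the general behavior of Hodge families established in Proposition \ref{prop:Hodgetheoryspecialization}, following the same template used for the A-twists in dimensions 4, 3 and 2 with less supersymmetry (cf.\ Corollaries \ref{cor:4dDonaldsontwist}, \ref{cor:3dN8Atwist}, \ref{cor:2d44Atwist}).

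First, I would identify the A-twisting supercharge of 2d $\cN=(8,8)$ inside the family $Q_t$. By the classification of rank $(1,1)$ orbits in the previous subsection, there are exactly two $G_R$-orbits of topological square-zero supercharges, distinguished by whether the pair of null vectors is collinear (B) or generic (A). The family $Q_t$ arises by dimensional reduction from the family \eqref{eq:4dHodgefamily} of 4d $\cN=2$ square-zero supercharges of Section \ref{sect:4d2Donaldson}, regarded as lying in the 4d $\cN=4$ supersymmetry algebra via $U=T^*\fg$ and then restricted to 2d $\cN=(8,8)$. As in the proof of Corollary \ref{cor:3dN8Atwist}, one checks that at $t=0$ the supercharge lies in the special (B) orbit while for $t\neq 0$ it lies in the generic (A) orbit; this is the key representation-theoretic input.

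Next, Theorem \ref{thm:2d88Hodgetwist} identifies the twist of the 2d $\cN=(8,8)$ super Yang--Mills theory by $Q_t$ with the generalized Hodge theory $\map(\RR^2_{\mathrm{dR}}, T^*(\fg/\fg)_{\Hod})$ as a $\CC[t]$-family of classical BV theories in the sense of Definition \ref{family_of_BV_theories_def}. Specializing to any $t\neq 0$ gives the A-twist on the supersymmetry side, while on the Hodge side Proposition \ref{prop:Hodgetheoryspecialization} tells us that the underlying local $L_\infty$ algebra $\fg_{\mr{Hod}}|_{t\neq 0}$ is acyclic; in particular the specialized theory is perturbatively equivalent to the zero theory.

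The argument is essentially mechanical once the orbit identification in the first step is in place, so there is no real obstacle; the only point requiring care is verifying that the $t\neq 0$ member of $Q_t$ indeed lands in the generic A-orbit (rather than the special B-orbit) after restriction from 4d $\cN=4$ to 2d $\cN=(8,8)$. This is a direct check using the decomposition of the 4d spinorial representation into 2d semi-spin pieces together with the explicit form of $Q_0$ and $\Bar Q_0$ as scalar supercharges, analogous to the computation of the $s$-invariant \eqref{eq:4d4sinvariant} in Section \ref{sect:4dqgltwist}. Combining this identification with Theorem \ref{thm:2d88Hodgetwist} and Proposition \ref{prop:Hodgetheoryspecialization} completes the proof.
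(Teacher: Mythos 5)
Your proposal is correct and follows essentially the same route as the paper: the corollary is read off from Theorem \ref{thm:2d88Hodgetwist} together with Proposition \ref{prop:Hodgetheoryspecialization}, with the identification of the $t\neq 0$ member of $Q_t$ as the A-supercharge supplied by the two-orbit classification of rank $(1,1)$ square-zero supercharges. The only (immaterial) difference is that the paper pins down which orbit is which by elimination — the $t=0$ and $t\neq 0$ specializations give non-equivalent theories, and the $t=0$ supercharge is a degeneration of the family, so $t\neq 0$ must be the generic (A) orbit — rather than by the direct spinor-decomposition check you sketch at the end.
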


\begin{corollary}
The topological B-twist of the 2d $\cN=(8, 8)$ super Yang--Mills theory is perturbatively equivalent to the theory $T^*[-1]\map(\RR^2_{\mathrm{dR}}, T^*(\fg/\fg))$. Moreover, the equivalence is $\SO(2, \RR)$-equivariant.
\label{cor:2d88Btwist}
\end{corollary}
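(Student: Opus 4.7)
The plan is to deduce Corollary \ref{cor:2d88Btwist} directly from Theorem \ref{thm:2d88Hodgetwist} by specializing the Hodge parameter to $t=0$, in exact parallel with the deductions of Corollary \ref{cor:2d44Btwist} from Theorem \ref{thm:2d44Hodgetwist} and of Corollary \ref{cor:3dN8Btwist} from Theorem \ref{thm:3d8Hodgetwist}.

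First, I would identify $Q_0$ as a 2d $\cN=(8,8)$ B-twist supercharge. The classification of topological square-zero supercharges preceding Section \ref{sect:2d88minimaltwist} yields exactly two $G_R$-orbits: a generic A-orbit and a special B-orbit, with the former degenerating to the latter in the natural sense. Since $Q_t$ for $t \ne 0$ gives rise to the perturbatively trivial theory of Corollary \ref{cor:2d88Atwist}, these supercharges necessarily lie in the generic orbit (the B-twist is not perturbatively trivial, as the target statement asserts). Hence the degeneration $Q_0$ of this family lies in the special orbit and is a B-twist supercharge.

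Next, I would specialize the Hodge family in Theorem \ref{thm:2d88Hodgetwist} at $t=0$. Interpreting $T^*(\fg/\fg)_{\Hod}$ via Definition \ref{Hodge_family_def} applied to the dg Lie algebra underlying the local formal moduli problem $T^*(\fg/\fg)$, Proposition \ref{prop:Hodgetheoryspecialization} yields an isomorphism
\[
\map\bigl(\RR^2_{\mr{dR}}, T^*(\fg/\fg)_{\Hod}\bigr)\big|_{t=0} \;\cong\; T^*[-1]\map\bigl(\RR^2_{\mr{dR}}, T^*(\fg/\fg)\bigr).
\]
The $\SO(2,\RR)$-equivariance is inherited from Theorem \ref{thm:2d88Hodgetwist}, since the $\SO(2,\RR)$-action commutes with the Hodge parameter $t$ and therefore restricts to an action on the fiber over $t = 0$; the equivalence established by Theorem \ref{thm:2d88Hodgetwist} is a morphism of $\SO(2,\RR)$-equivariant $\CC[t]$-families, so its restriction to $t=0$ is an $\SO(2,\RR)$-equivariant perturbative equivalence.

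I do not expect any substantive obstacle. The only point meriting verification is that the interpretation of $T^*(\fg/\fg)_{\Hod}$ used implicitly in the statement of Theorem \ref{thm:2d88Hodgetwist} is precisely the one to which Proposition \ref{prop:Hodgetheoryspecialization} applies, i.e., the Hodge family associated to the dg Lie algebra presentation of the local formal moduli problem $T^*(\fg/\fg)$; this is built into Definition \ref{Hodge_family_def} and matches the construction performed in Theorem \ref{thm:2d44Hodgetwist} in the special case $U = T^*\fg$ treated there.
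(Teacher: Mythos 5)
Your proposal follows the paper's own route exactly: specialize the Hodge family of Theorem \ref{thm:2d88Hodgetwist} at $t=0$ via Proposition \ref{prop:Hodgetheoryspecialization}, and use the two-orbit classification of rank $(1,1)$ square-zero supercharges to identify $t=0$ with the special (B) orbit. The one flaw is the parenthetical ``the B-twist is not perturbatively trivial, as the target statement asserts,'' which is circular as written; the correct justification is that the $t=0$ specialization of the Hodge family is the generalized BF theory $T^*[-1]\map(\RR^2_{\mathrm{dR}}, T^*(\fg/\fg))$, which is visibly not perturbatively trivial, so $Q_0$ and $Q_{t\neq 0}$ lie in different orbits and the degeneration forces $t\neq 0$ to be the generic (A) orbit and $t=0$ the special (B) orbit. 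With that substitution the argument is complete and matches the paper.
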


\subsection{\texorpdfstring{$\cN=(2,0)$}{N=(2,0)} Super Yang--Mills Theory} \label{sect:2d(2,0)}

We turn to the $\cN=(2,0)$ supersymmetric Yang--Mills theory.
In the supersymmetry algebra, we have $W_-=0$ and $W_+$ is a complex two-dimensional vector space equipped with a symmetric pairing. 
The supersymmetric matter consists of the $\cN = (2,0)$ chiral multiplet with values in a $\fg$-representation $R$. 
The R-symmetry group is $G_R = \SO(2;\CC)$ which acts on $W_+$ by the defining representation. 

The fields of the untwisted theory are:
\begin{itemize}
\item a gauge boson $A \in \Omega^1(\RR^2;\fg)$.
\item gauge fermions $\lambda \in C^\infty(\RR^2 ; \Pi S_+ \otimes W_+ \otimes \fg)$. 
\item matter bosons $\phi \in C^\infty(\RR^2 ; R)$ and $\Bar{\phi} \in C^\infty(\RR^2 ; R^*)$;
\item matter fermions $\psi \in C^\infty(\RR^2 ; \Pi S_- \otimes R^*)$  and $\Bar{\psi} \in C^\infty(\RR^2 ; \Pi S_- \otimes R)$.
\item a ghost field $c \in \Omega^0(\RR^2 ; \fg)[1]$.
\end{itemize}
The field $\lambda$ transforms in the defining representation of $G_R = \SO(2;\CC)$.
The fields $\psi, \Bar{\psi}$ have weights $-1,+1$ respectively.

The theory admits a unique twist by the following class of supercharge.
\begin{itemize}
\item Elements $Q \in S_+ \otimes W_+$ of rank $1$. 
Such supercharges are automatically square-zero and are holomorphic.
We take $\alpha \colon U(1) \hookrightarrow G_R$ to be the standard embedding.
There is a compatible twisting homomorphism
\[
\phi \colon \MU(1) \xto{\det^{1/2}} \U(1) \hookrightarrow G_R .
\]
\end{itemize}

\subsubsection{Holomorphic Twist} \label{sect:2d20minimaltwist}

Choose a complex structure $L \subset V$. 
Under the embedding $MU(1) \hookrightarrow \Spin(2; \CC)$, the semi-spin representations decompose as
\[
S_+ = \det(L)^{1/2} \;\; , \;\; S_- = \det(L)^{-1/2} .
\]

Note that under the twisting homomorphism $\phi = \det^{1/2}$, we have $W_+ = \det(L)^{1/2} \oplus \det(L)^{-1/2}$.

The fields decompose under the twisting homomorphism as:
\begin{itemize}
\item gauge bosons $A_{1,0} \in \Omega^{1,0}(\CC ; \fg)$, $A_{0,1} \in \Omega^{0,1}(\CC ; \fg)$.
\item gauge fermions $\lambda_0 \in \Omega^0(\CC ; \fg)[-1]$, $\lambda_{1,0} \in \Omega^{1,0}(\CC ; \fg) [1]$.
\item matter bosons $\phi \in \Omega^0(\CC ; R^*)$, $\gamma_0 \in \Omega^0(\CC ; R)$.
\item matter fermions $\psi \in \Omega^{0}(\CC ; R^*)[1]$, $\gamma_{0,1} \in \Omega^{0,1}(\CC ; \fg)[-1]$.
\item a ghost field $A_0 \in \Omega^0(\CC ; \fg)[1]$.
\end{itemize}

The action functional decomposes as follows:
\begin{align*}
S_{\mr{gauge}} &= \int \dvol \bigg(-(F^{2, 0}, F^{0, 2}) - \frac{1}{4}(\Lambda F_{1, 1})^2 + \frac 12 (\lambda_0 , \dbar_{A_{0,1}} \lambda_{1,0})  \bigg)    \\
S_{\mr{matter}} &= \int \dvol \bigg( (\dd_{A_{1,0}}\Bar{\phi} , \ol \dd_{A_{0,1}} \gamma_0) + (\dd_{A_{1,0}}\gamma_0, \ol \dd_{A_{0,1}} \Bar{\phi}) + (\psi_0 , \partial_{A_{1,0}} \gamma_{0,1}) + ([\lambda_{1,0}, \psi_0], \gamma_{0,1})   \bigg) \\
S_{\mr{anti}} &= \int \dvol \bigg( (\dd_{A_{1, 0}} A_0 , A_{1,0}^*) + (\ol\dd_{A_{0, 1}} A_0, A_{0,1}^*) + ([\lambda_{1,0}, A_0], \lambda_{1,0}^*) +  [\lambda_{0}, A_0] \lambda_{0}^*\\
&\qquad + \frac{1}{2}[A_0, A_0]A_0^* + [\gamma_0, A_0] \gamma_0^* + [\Bar{\phi}, A_0] \Bar{\phi}^* + [\gamma_{0,1}, A_0] \gamma_{0,1}^* + [\psi_0, A_0] \psi_{0}^*\bigg) \\
S^{(1)}_{\mr{gauge}} &=  \int \dvol \bigg( - (\lambda_{1,0}, A_{1,0}^*) \bigg) \\
S^{(1)}_{\mr{matter}} &=  \int \dvol \bigg( (\psi_{0}, \Bar{\phi}^*) + \frac 12(\dbar_{A_{0,1}} \gamma_0, \gamma_{0,1}^*) \bigg)\\
S^{(2)}_{\mr{gauge}} &= \int \dvol \bigg( -\frac{1}{4} (\lambda_0^*)^2 \bigg) .
\end{align*}

\begin{theorem} \label{thm:2d(2,0)}
The minimal twist of 2d $\cN=(2,0)$ super Yang--Mills with matter valued in a $\fg$-representation $R$ is perturbatively equivalent to holomorphic BF theory coupled to the holomorphic $\beta\gamma$ system with moduli space $T^*[-1] \map(\CC , R / \fg)$.
This equivalence is $\U(1)$-equivariant.
\end{theorem}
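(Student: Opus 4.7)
The strategy mirrors the proofs of Theorem \ref{thm:10dholomorphictwist} and Theorem \ref{4d_minimal_twist_thm}: after unpacking the $\MU(1)$-decomposition of the action (already done in the preceding pages), one performs a sequence of perturbative equivalences to discard auxiliary fields, then identifies the residual theory with the target.

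First I would use Proposition \ref{prop:integrateoutfield} on the pair $(\lambda_0, \lambda_0^*)$: the twisted action contains the Gaussian $-\tfrac{1}{4}\int \dvol\,(\lambda_0^*)^2$ together with a linear coupling $\tfrac{1}{2}\int \dvol\,(\lambda_0, \dbar_{A_{0,1}}\lambda_{1,0})$ and the gauge transformation term $\int \dvol\, [\lambda_0,A_0]\lambda_0^*$, which together are of the form required by the proposition. Eliminating $\lambda_0$ and $\lambda_0^*$ produces only quadratic modifications that vanish after the next step, since the contribution $\tfrac{1}{4}(\dbar_{A_{0,1}}\lambda_{1,0})^2$ is proportional to $\lambda_{1,0}^{\wedge 2}$ on a complex curve and hence vanishes.

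Next, I would identify two trivial BRST doublets and eliminate them using Proposition \ref{prop:BRSTdoublet}. The functional $S^{(1)}_{\mr{gauge}}(Q)$ contains the term $-\int \dvol\,(\lambda_{1,0}, A_{1,0}^*)$, so $(\lambda_{1,0}, A_{1,0})$ forms a trivial doublet; similarly $S^{(1)}_{\mr{matter}}(Q)$ contains $\int \dvol\,(\psi_0,\Bar\phi^*)$, so $(\psi_0, \Bar\phi)$ is a trivial doublet. One needs to check that the remaining interaction terms in the action (the pure-matter coupling $(\psi_0,\partial_{A_{1,0}}\gamma_{0,1})$, the kinetic terms for $A_{1,0}$, and all ghost couplings involving these fields) fit the hypothesis of Proposition \ref{prop:BRSTdoublet} with appropriate polydifferential operators $I_\phi, I_{\psi^*}, I_{\phi^*}, I_\psi$; this is a bookkeeping exercise analogous to the corresponding step in Theorem \ref{4d_minimal_twist_thm}.

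After these eliminations the surviving fields are $c, A_{0,1}, A_{0,1}^*, c^*$ together with $\gamma_0, \gamma_{0,1}, \gamma_{0,1}^*, \gamma_0^*$, and the residual BV action reduces to
\[
\int \tfrac{1}{2} A_{0,1}^* \wedge \dbar_{A_{0,1}} A_{0,1} + A_{0,1}^* \wedge \dbar_{A_{0,1}} c + \tfrac{1}{2}[c,c]\, c^* + \gamma_{0,1}^*\,\dbar_{A_{0,1}} \gamma_0 + [c,\gamma_0]\gamma_0^* + [c,\gamma_{0,1}]\gamma_{0,1}^*,
\]
up to an overall rescaling of the antifields by the holomorphic volume form on $\CC$. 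This is manifestly the BV action of holomorphic BF theory with space of fields $T^*[-1]\map(\CC, B\fg)$ coupled to the $\beta\gamma$ system on $R$, i.e.\ $T^*[-1]\map(\CC, R/\fg)$. The $\U(1)$-equivariance is automatic since each elimination uses only $\MU(1)$-equivariant maps and the identifications above respect the weight decomposition.

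The only step requiring genuine care is the first: verifying that the Gaussian elimination of $\lambda_0$ does not generate a nontrivial correction to the residual action. On a complex curve the potentially dangerous term $(\dbar_{A_{0,1}}\lambda_{1,0})^2$ vanishes by degree reasons, so this obstacle evaporates; in higher dimensions one would have to carry these corrections along, but here the two-dimensional context makes the argument clean.
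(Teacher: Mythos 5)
Your proposal follows essentially the same route as the paper's proof: eliminate $\lambda_0,\lambda_0^*$ via Proposition \ref{prop:integrateoutfield}, remove the trivial BRST doublets $(\lambda_{1,0},A_{1,0})$ and $(\Bar\phi,\psi_0)$ via Proposition \ref{prop:BRSTdoublet}, and identify the residual action with holomorphic BF theory coupled to the $\beta\gamma$ system. The only quibbles are cosmetic: the Gaussian term lives in the antifield $\lambda_0^*$ rather than $\lambda_0$, and your extra term $\tfrac{1}{2}A_{0,1}^*\wedge\dbar_{A_{0,1}}A_{0,1}$ vanishes identically on a curve since $\Omega^{0,2}(\CC)=0$.
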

\begin{proof}
First, we eliminate the field $\lambda_{0}$ using Proposition \ref{prop:integrateoutfield}.  
We then observe that the action includes the terms $\int  (\lambda_{1,0} , A_{1,0}^*)$ and  $\int (\phi, \psi_0^*)$.  
Thus, the two pairs $(\lambda_{1,0}, A_{1,0})$ and $(\phi, \psi_0)$ form BRST doublets, 
which can be eliminated using Proposition \ref{prop:BRSTdoublet}.  

The twisted theory is therefore perturbatively equivalent to the theory with BV action 
\begin{align*}
& \int \bigg(\dbar_{A_{0,1}} A_{0} \wedge A_{0,1}^* + (\dbar_{A_{0,1}} \gamma_{0} \wedge \gamma_{0,1}^*) \bigg) + \dvol \bigg( \frac{1}{2}[A_0, A_0]A_0^*+ [\gamma_0, A_0] \gamma_0^* + ([\gamma_{0,1}, A_0] , \gamma_{0,1}^*)  \bigg).
\end{align*}
This is the action functional of the required theory, where $B_{1,0} = A_{0,1}^*$, $B_{1,1} = A_{0}^* \dvol$ comprise the anti-fields of holomorphic BF theory and $\beta_{1,0} = \gamma_{0,1}^*$,  $\beta_{1,1} = \gamma_0^* \dvol$ comprise the anti-fields of the $\beta\gamma$ system.
\end{proof}

\subsection{\texorpdfstring{$\cN=(4,0)$}{N=(4,0)} Super Yang--Mills Theory} \label{sect:2d(4,0)}

Next, we consider the $\cN=(4,0)$ supersymmetric Yang--Mills theory.
In the supersymmetry algebra, we have $W_- = 0$ and $W_+$ is a complex four-dimensional vector space equipped with a nondegenerate symmetric bilinear pairing.
The supersymmetric matter consists of the $\cN = (4,0)$ hypermultiplet with values in a complex symplectic $\fg$-representation $U$. 
The R-symmetry group is $G_R = \SL(2;\CC) \times \GL(1;\CC)$.
As a $G_R$-representation, $W_+ = \Tilde{W} \oplus \Tilde{W}$ where $\Tilde{W}$ is the defining representation of $\SL(2;\CC)$ and where $\GL(1;\CC)$ acts by weight $(1,-1)$ with respect to this decomposition. 

\begin{itemize}
\item a gauge boson $A \in \Omega^1(\RR^2;\fg)$.
\item gauge fermions $(\lambda_-, \lambda_+) \in C^\infty(\RR^2 ; S_+ \otimes (\Tilde{W} \oplus \Tilde{W}) \otimes \fg)$. 
\item matter bosons $\phi \in C^\infty(\RR^2 ; \Tilde{W} \otimes U)$.
\item matter fermions $(\psi_-, \psi_+) \in C^\infty(\RR^2 ; S_- \otimes (U\oplus U))$.
\item a ghost field $c \in \Omega^0(\RR^2 ; \fg)[1]$.
\end{itemize}

Under $\GL(1; \CC) \subset G_R$, the pairs of fields $(\lambda_-, \lambda_+)$ and $(\psi_-, \psi_+)$ have weights $(-1,+1)$ 
The field $\phi$ has weight zero.

The theory admits a unique twist:
\begin{itemize}
\item Elements $Q \in S_+ \otimes W_+$ of rank $1$. 
Such supercharges are automatically square-zero and holomorphic.
There is a twisting homomorphism $\phi : \U(1) \to G_R$ and twisting datum $\alpha : \U(1) \to G_R$. 
\end{itemize}

\subsubsection{Holomorphic twist} \label{sect:2d40minimaltwist}

Choose a complex structure $L \subset V$. 
Under the embedding $MU(1) \hookrightarrow \Spin(2; \CC)$, the semi-spin representations decompose as
\[
S_+ = \det(L)^{1/2} \;\; , \;\; S_- = \det(L)^{-1/2} .
\]
We choose a twisting homomorphism $\phi\colon \U(1) \to G_R$ under which $\Tilde{W} = \det(L)^{1/2} \oplus \det(L)^{-1/2}$
so that
\[
S_+ \otimes W_+ = \CC Q \oplus \CC \oplus L^{\oplus 2} .
\]

The $\ZZ$-grading is induced by a natural embedding $\alpha \colon \U(1) \hookrightarrow \GL(1;\CC) \to G_R$. 

Further, fields decompose under the twisting homomorphism as:
\begin{itemize}
\item gauge bosons $A_{1,0} \in \Omega^{1,0}(\CC ; \fg)$, $A_{0,1} \in \Omega^{0,1}(\CC ; \fg)$;
\item gauge fermions $\lambda^+ \in \Omega^0(\CC ; \fg)[-1]$, $\lambda^+_{1,0} \in \Omega^{1,0}(\CC ; \fg) [-1]$,  $\lambda^- \in \Omega^0(\CC ; \fg)[1]$, $\lambda^-_{1,0} \in \Omega^{1,0}(\CC ; \fg) [1]$;
\item matter bosons $\gamma \in \Omega^0(\CC ; K_{\CC}^{1/2} \otimes U)$, $\nu \in \Omega^0(\CC ; K_{\CC}^{-1/2} \otimes U)$;
\item matter fermions $\psi_- \in \Omega^{0}(\CC ; K_\CC^{-1/2} \otimes U)[1]$, $\psi_+ \in \Omega^{0}(\CC ; K_\CC^{-1/2} \otimes U)[-1]$;
\item a ghost field $c \in \Omega^0(\CC ; \fg)[1]$.
\end{itemize}

The action functional decomposes as follows:
\begin{align*}
S_{\mr{gauge}} &= \int \dvol \bigg(-(F^{2, 0}, F^{0, 2}) - \frac{1}{4}(\Lambda F_{1, 1})^2 \bigg) + \lambda^+_0  \dbar_{A_{0,1}} \lambda^-_{1,0} + \lambda^-_{0} , \dbar_{A_{0,1}} \lambda^+_{1,0}    \\
S_{\mr{matter}} &= \int \dvol \bigg( (\dd_{A_{1,0}} \nu , \ol \dd_{A_{0,1}} \gamma) + (\dd_{A_{1,0}}\gamma, \ol \dd_{A_{0,1}} \nu) + (\psi_- , \partial_{A_{1,0}} \gamma) + (\psi_+ , \partial_{A_{1,0}} \nu)  + \\
& \qquad + ([\lambda^+, \gamma], \psi_-) + ([\lambda_{1,0}^+, \nu], \psi_-) + ([\lambda^-, \gamma], \psi_+) + ([\lambda^-_{1,0}, \nu], \psi_-) \bigg) \\
S_{\mr{anti}} &= \int \dvol \bigg( (\dd_{A_{1, 0}} c , A_{1,0}^*) + (\ol\dd_{A_{0, 1}} c, A_{0,1}^*) \\ & \qquad + ([\lambda^+_{1,0}, c], \lambda_{1,0}^{+*}) + + ([\lambda^-_{1,0}, c], \lambda_{1,0}^{-*}) +  [\lambda^+_{0}, c] \lambda_{0}^{+*}+ [\lambda^-_{0}, c] \lambda_{0}^{-*}\\
&\qquad + \frac{1}{2}[c, c]c^* + [\gamma, c] \gamma^* + [\nu, c] \nu^* + [\psi_-^*, c] \psi_-^* + [\psi_+, c] \psi_+^*\bigg) \\
S^{(1)}_{\mr{gauge}} &=  \int \dvol \bigg( - (\lambda^-_{1,0}, A_{1,0}^*) \bigg) \\
S^{(1)}_{\mr{matter}} &=  \int \dvol \bigg( (\psi_{-}, \nu^*) + \frac 12(\dbar_{A_{0,1}} \gamma, \psi_+^*) \bigg)\\
S^{(2)}_{\mr{gauge}} &= \int \dvol \bigg( -\frac{1}{4} (\lambda^{-*})^2 \bigg) .
\end{align*}

\begin{theorem} \label{thm:2d(4,0)}
The holomorphic twist of the 2d $\cN=(4,0)$ super Yang--Mills theory with matter valued in a symplectic $\fg$-representation $U$ is perturbatively equivalent to the holomorphic $BF$ theory with the space of fields $T^*[-1] \Sect(\CC , (U \otimes K^{1/2}_{\CC}) \ham \fg)$. This equivalence is $\U(1)$-equivariant.
\end{theorem}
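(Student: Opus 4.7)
The proof will closely parallel that of Theorem~\ref{thm:2d(2,0)}, simplifying the explicitly decomposed twisted BV action $S_{\rm SYM}+S^{(1)}+S^{(2)}$ through successive applications of Propositions~\ref{prop:integrateoutfield} and~\ref{prop:BRSTdoublet}. The plan has three structural steps.

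First, I would integrate out the scalar gaugino $\lambda^-_0$. The term $-\tfrac14(\lambda^{-*})^2$ sitting inside $S^{(2)}_{\mr{gauge}}(Q)$ is exactly the quadratic-in-antifield piece to which Proposition~\ref{prop:integrateoutfield} applies, and eliminating $(\lambda^-_0, \lambda^{-*}_0)$ merely shifts the remainder of the action by an explicit quadratic counterterm. Next, I would identify two trivial BRST doublets: the term $-(\lambda^-_{1,0}, A^*_{1,0})$ in $S^{(1)}_{\mr{gauge}}(Q)$ exhibits $(\lambda^-_{1,0}, A_{1,0})$ as one, and the term $(\psi_-, \nu^*)$ in $S^{(1)}_{\mr{matter}}(Q)$ exhibits $(\psi_-, \nu)$ as another. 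Two applications of Proposition~\ref{prop:BRSTdoublet} remove these four fields together with all their antifields.

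After these steps, the surviving gauge-sector fields $c, A_{0,1}, \lambda^+_0, \lambda^+_{1,0}$ and their antifields assemble into the fields of a generalized BF-type theory whose base dg Lie algebra carries a summand $\fg^*$ acting coadjointly, while the surviving matter fields $\gamma\in\Omega^0(\CC; U\otimes K_\CC^{1/2}), \psi_+$ and their antifields assemble into a holomorphic symplectic boson valued in $U\otimes K_\CC^{1/2}$. The residual interactions inherited from $S_{\mr{matter}}$ — in particular the Yukawa couplings $([\lambda^+_{1,0},\nu],\psi_-)$ and $([\lambda^+_0,\gamma],\psi_+)$ — become, after $\nu$ and $\psi_-$ are set to the on-shell values dictated by the doublet eliminations, precisely the bracket and moment-map contributions defining the dg Lie structure of $(U\otimes K^{1/2}_\CC)\ham\fg$. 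The $\U(1)$-equivariance is automatic because every paired elimination is between fields of opposite $\alpha$-weight.

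The main obstacle will be the final identification: verifying that the residual action really is the Chern--Simons-type functional of $T^*[-1]\Sect(\CC, (U\otimes K^{1/2}_\CC)\ham \fg)$ rather than some perturbative deformation of it. Concretely, one must check by comparing degree-$0$ cubic terms that the Lie bracket on $\fg$, its coadjoint action on $\fg^*$, the $\fg$-action on $U\otimes K_\CC^{1/2}$, and the symmetric map $\mu\colon\Sym^2(U)\to \fg^*$ induced by the moment map of the symplectic $\fg$-representation $U$ all appear with the correct coefficients, and that no extraneous interactions survive the two doublet eliminations. This is a bookkeeping exercise entirely analogous to (but slightly more elaborate than) the one carried out in the proofs of Theorems~\ref{thm:2d(2,0)} and~\ref{thm:6dholomorphictwist}.
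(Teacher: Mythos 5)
Your proposal follows essentially the same route as the paper's proof: eliminate the scalar $\lambda^-$ via Proposition~\ref{prop:integrateoutfield} using the $-\tfrac14(\lambda^{-*})^2$ term, then remove the trivial BRST doublets $(\lambda^-_{1,0},A_{1,0})$ and $(\nu,\psi_-)$ via Proposition~\ref{prop:BRSTdoublet}, and finally match the residual action with that of $T^*[-1]\Sect(\CC,(U\otimes K_\CC^{1/2})\ham\fg)$. The extra care you propose for the final identification of the cubic terms with the Hamiltonian-reduction structure is consistent with (and slightly more explicit than) what the paper does.
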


\begin{proof}
First, we eliminate the field $\lambda^-$ using Proposition \ref{prop:integrateoutfield}.  
We then observe that the action includes the terms $\int  (\lambda^-_{1,0} , A_{1,0}^*)$ and  $\int (\psi_-, \nu^*)$.  
Thus, the two pairs $(\lambda^-_{1,0}, A_{1,0})$ and $(\nu, \psi_-)$ form BRST doublets, 
which can be eliminated using Proposition \ref{prop:BRSTdoublet}.  

The twisted theory is therefore perturbatively equivalent to the theory with BV action 
\begin{align*}
& \int \bigg(\dbar_{A_{0,1}} c \wedge A_{0,1}^* + \lambda^+ \dbar_{A_{0,1}} \lambda^-_{1,0} + \dbar_{A_{0,1}} \gamma \wedge \psi_+^* \bigg) + \dvol \bigg( \frac{1}{2}[c, c]c^*+ ([\lambda_{1,0}^-, c] , \lambda_{1,0}^{-*}) + [\lambda^+, c] \lambda^{+*} + [\gamma, c] \gamma^* + [\psi_+, c] \psi_+^* \bigg).
\end{align*}
This is the action functional of the required theory.
\end{proof}

\subsection{\texorpdfstring{$\cN=(\cN_+,0)$}{N=(N+,0)} Super Yang--Mills Theory}

We consider pure $\cN=(\cN_+,0)$ super Yang--Mills for a Lie algebra $\fg$, where $\cN_+ \geq 2$.
The spinorial representation is $\Sigma = S_+ \otimes W_+$ where $W_+$ is the $\cN_+$-dimensional auxiliary space equipped with a nondegenerate symmetric bilinear pairing.

\begin{itemize}
\item $\fg$-valued bosons: a gauge field $A \in \Omega^1(\RR^2;\fg)$.
\item $\fg$-valued fermions: a spinor $\lambda \in C^\infty(\RR^2 ; S_+ \otimes W_+ \otimes \fg)$. 
\end{itemize}

The theory admits a unique twist by the following class of supercharge.
\begin{itemize}
\item Elements $Q \in S_+ \otimes W_+$ of rank $1$. 
Such supercharges are automatically square-zero, holomorphic, and lie in a $\cN = (2,0)$ subalgebra.
We take the twisting datum $\alpha$ and twisting homomorphism $\phi$ to factor through those of Section \ref{sect:2d(2,0)}.
\end{itemize}

\subsubsection{Holomorphic Twist} \label{sect:2dN0minimaltwist}

Choose a complex structure $L \subset V$. 
Under the embedding $U(1) \hookrightarrow \spin(2;\CC)$, the semi-spin representations decompose as
\[
S_+ = \det(L)^{1/2} \;\; , \;\; S_- = \det(L)^{-1/2} .
\]

Note that under the twisting homomorphism $\phi = \det^{1/2}$, we have $W_+ = \det(L)^{1/2} \oplus \det(L)^{-1/2} \oplus \CC^{\cN_+ - 2}$.

The fields decompose under the twisting homomorphism as:
\begin{itemize}
\item gauge bosons $A_{1,0} \in \Omega^{1,0}(\CC ; \fg)$, $A_{0,1} \in \Omega^{0,1}(\CC ; \fg)$;
\item gauge fermions $\lambda_0 \in \Omega^0(\CC ; \fg)[-1]$, $\lambda_{1,0} \in \Omega^{1,0}(\CC ; \fg) [1]$, $\Tilde{\lambda} \in \Omega^{0} (\CC ; K^{1/2}_\CC \otimes \fg)$;
\item a ghost field $A_0 \in \Omega^0(\CC ; \fg)[1]$.
\end{itemize}

The action functional decomposes as follows:
\begin{align*}
S_{\mr{gauge}} &= \int \dvol \bigg(-(F^{2, 0}, F^{0, 2}) - \frac{1}{4}(\Lambda F_{1, 1})^2 + \frac 12 (\lambda_0 , \dbar_{A_{0,1}} \lambda_{1,0}) + \frac 12 (\Tilde{\lambda}, \dbar_{A_{0,1}} \Tilde{\lambda}) \bigg)    \\
S_{\mr{anti}} &= \int \dvol \bigg( (\dd_{A_{1, 0}} A_0 , A_{1,0}^*) + (\ol\dd_{A_{0, 1}} A_0, A_{0,1}^*) + ([\lambda_{1,0}, A_0], \lambda_{1,0}^*) +  [\lambda_{0}, A_0] \lambda_{0}^*\\
&\qquad + [\Tilde{\lambda}, A_0] \Tilde{\lambda}^* + \frac{1}{2}[A_0, A_0]A_0^* \bigg) \\
S^{(1)}_{\mr{gauge}} &=  \int \dvol \bigg( - (\lambda_{1,0}, A_{1,0}^*) \bigg) \\
S^{(2)}_{\mr{gauge}} &= \int \dvol \bigg( -\frac{1}{4} (\lambda_0^*)^2 \bigg) .
\end{align*}

\begin{theorem} \label{thm:2d(N,0)}
The holomorphic twist of 2d $\cN=(\cN_+,0)$ super Yang--Mills is perturbatively equivalent to the holomorphic $BF$ theory with the space of fields $T^*[-1]\mr{Sect}(\CC, (\fg^{\mc N_+ -2} \otimes K_{\CC}^{1/2}) / \gg)$. This equivalence is $\U(1)$-equivariant.
\end{theorem}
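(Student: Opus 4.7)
The plan is to follow the pattern of the proof of Theorem \ref{thm:2d(2,0)} almost verbatim, exploiting the fact that pure $\cN=(\cN_+,0)$ super Yang--Mills differs from the $\cN=(2,0)$ case only by the presence of the extra adjoint-valued spinor $\Tilde\lambda\in\Omega^0(\CC;K^{1/2}_\CC\otimes\fg^{\mc N_+-2})$, which under the twisting homomorphism $\phi=\det^{1/2}$ sits in cohomological degree $0$ and does not couple to the twisting supercharge. First I would observe that the quadratic term $-\tfrac{1}{4}\int\dvol\,(\lambda_0^*)^2$ coming from $S_{\mr{gauge}}^{(2)}$ together with the kinetic term $\tfrac{1}{2}\int\dvol\,(\lambda_0,\dbar_{A_{0,1}}\lambda_{1,0})$ of $\lambda_0$ fit exactly into the hypotheses of Proposition~\ref{prop:integrateoutfield}, allowing us to eliminate the pair $(\lambda_0,\lambda_0^*)$.

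Next, inspection of $S^{(1)}_{\mr{gauge}}$ shows that the action contains the term $-\int\dvol\,(\lambda_{1,0},A_{1,0}^*)$ and, crucially, that after elimination of $\lambda_0$ no other term involves $A_{1,0}$ or $\lambda_{1,0}^*$ linearly in a compatible way. Hence $(\lambda_{1,0},A_{1,0})$ forms a trivial BRST doublet in the sense of Proposition~\ref{prop:BRSTdoublet}, and this proposition removes the fields $A_{1,0},A_{1,0}^*,\lambda_{1,0},\lambda_{1,0}^*$ from the theory at the cost of a computable modification of the remaining interactions. What survives is the action
\[
\int\Bigl(\dbar_{A_{0,1}}A_0\wedge A_{0,1}^*+\tfrac12(\Tilde\lambda,\dbar_{A_{0,1}}\Tilde\lambda)\Bigr)+\int\dvol\,\Bigl(\tfrac12[A_0,A_0]A_0^*+[\Tilde\lambda,A_0]\Tilde\lambda^*\Bigr),
\]
in which $A_{0,1},A_0$ together with their antifields are precisely the fields of the holomorphic BF theory $T^*[-1]\map(\CC,B\fg)$, while $\Tilde\lambda,\Tilde\lambda^*$ assemble into a free $\beta\gamma$-type system valued in $K_\CC^{1/2}\otimes\fg^{\mc N_+-2}$ coupled to $A_{0,1}$ via the adjoint action. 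After rescaling antifields by the appropriate powers of the K\"ahler and volume forms as in Theorem~\ref{thm:10dholomorphictwist}, one recognises the full action as that of the cotangent theory $T^*[-1]\mr{Sect}(\CC,(\fg^{\mc N_+-2}\otimes K_\CC^{1/2})/\fg)$, where the base is the stack of holomorphic maps to $B\fg$ together with a section of the associated bundle of half-densities valued in $\fg^{\mc N_+-2}$.

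Finally, I would verify the $\U(1)$-equivariance. Since the twisting homomorphism is $\phi\colon\MU(1)\xrightarrow{\det^{1/2}}\U(1)\hookrightarrow G_R$ and since $\cN_+-2$ copies of $W_+$ are untwisted scalar factors, the residual rotation $\U(1)$ acts on the fields via the standard $\MU(1)$-action on $\Omega^{\bu,\bu}(\CC)$, and $\Tilde\lambda$ transforms in the representation $K_\CC^{1/2}$ as required; the elimination steps manifestly preserve this action. The only mild subtlety---and the step to check most carefully---is to confirm that the modification of the BV action produced by Proposition~\ref{prop:BRSTdoublet} when removing $(\lambda_{1,0},A_{1,0})$ does not generate spurious couplings between $\Tilde\lambda$ and the BF fields beyond those already present; this follows from the observation that in the pre-elimination action, $\Tilde\lambda$ only couples to $A_{0,1}$ (through its Dolbeault differential) and to the ghost $A_0$ (through the gauge transformation), neither of which is touched by the doublet elimination.
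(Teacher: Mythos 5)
Your proposal is correct and follows essentially the same route as the paper's proof: integrate out $(\lambda_0,\lambda_0^*)$ via Proposition \ref{prop:integrateoutfield}, eliminate the trivial BRST doublet $(\lambda_{1,0},A_{1,0})$ via Proposition \ref{prop:BRSTdoublet}, and identify the residual action with that of the holomorphic BF theory coupled to the free $\Tilde\lambda$ system. The additional checks you flag (absence of spurious couplings to $\Tilde\lambda$ and $\U(1)$-equivariance) are consistent with what the paper leaves implicit.
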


\begin{proof}
First, we integrate out the field $\lambda_{0}$ using Proposition \ref{prop:integrateoutfield}.   
We then observe that the action includes the term $\int  (\lambda_{1,0} , A_{1,0}^*)$.
Thus, the pair $(\lambda_{1,0}, A_{1,0})$ forms a BRST doublet, 
which can be eliminated using Proposition \ref{prop:BRSTdoublet}.  

The twisted theory is therefore perturbatively equivalent to the theory with BV action 
\begin{align*}
& \int \dvol \bigg((\dbar_{A_{0,1}} A_{0}, A_{0,1}^*) + (\dbar_{A_{0,1}} \Tilde{\lambda} ,  \Tilde{\lambda}) + \frac{1}{2}[A_0, A_0]A_0^*+ [ \Tilde{\lambda} , A_0]  \Tilde{\lambda}^* \bigg).
\end{align*}
This is the action functional of the required theory.
\end{proof}

\pagestyle{bib}
\printbibliography

\textsc{University of Massachusetts, Amherst}\\
\textsc{Department of Mathematics and Statistics, 710 N Pleasant St, Amherst, MA 01003}\\
\texttt{celliott@math.umass.edu}\\
\vspace{-5pt}

\textsc{Institut f\"ur Mathematik, Universit\"at Z\"urich}\\
\textsc{Winterthurerstrasse 190, 8057 Zurich, Switzerland}\\
\texttt{pavel.safronov@math.uzh.ch}\\
\vspace{-5pt}

\textsc{Northeastern University}\\
\textsc{Department of Mathematics, 360 Huntington Ave, Boston, MA 02115}\\
\texttt{brianwilliams.math@gmail.com}\\

\end{document}